\def\sectionautorefname{Sec.}
\def\figureautorefname{Fig.}
\tiny\color{gray},
\DeclareSymbolFont{largesymbolsA}{U}{txexa}{m}{n}
\newcommand{\textred}[1]{\color{red}#1\color{black}}
\newif\ifshort
\newcommand{\bigfrac}[2]{\frac{\raisebox{1ex}{$#1$}}{\raisebox{-1.5ex}{$#2$}}}
\newcommand{\forget}[1]{}
\newcommand{\crdtlin}{RA-linearizability}
\newcommand{\crdtlinearization}{RA-linearization}
\newcommand{\crdtlinearizable}{RA-linearizable}
\newcommand{\CRDTLin}{Replication-Aware Linearizability}
\newcommand{\CRDTLinshort}{RA\text{-}linearizability}
\newcommand{\aobj}{\ensuremath{\mathtt{o}}}
\newcommand{\ahist}{\ensuremath{{h}}}
\newcommand{\objs}{\ensuremath{\mathbb{O}}}
\newcommand{\acts}{\ensuremath{\mathbb{A}}}
\newcommand{\aact}{\ensuremath{a}}
\newcommand{\arep}{\ensuremath{\mathtt{r}}}
\newcommand{\reps}{\ensuremath{\mathbb{R}}}
\newcommand{\amethod}{\ensuremath{\mathsf{m}}}
\newcommand{\tsof}{\ensuremath{\mathsf{ts}}}
\newcommand{\methods}{\ensuremath{\mathbb{M}}}
\newcommand{\histories}{\ensuremath{\mathbb{H}\mathsf{ist}}}
\newcommand{\traces}{\ensuremath{\mathbb{T}\mathsf{r}}}
\newcommand{\datadomain}{\ensuremath{\mathbb{D}}}
\newcommand{\timestampdomain}{\ensuremath{\mathbb{T}}}
\newcommand{\powerset}[1]{\ensuremath{\mathcal{P}(#1)}}
\newcommand{\astate}{\ensuremath{\sigma}}
\newcommand{\abstate}{\ensuremath{\phi}}
\newcommand{\abstates}{\ensuremath{\Phi}}
\newcommand{\states}{\ensuremath{\Sigma}}
\newcommand{\aop}{\ensuremath{\mathsf{op}}}
\newcommand{\argv}{\ensuremath{a}}
\newcommand{\retv}{\ensuremath{b}}
\newcommand{\alabellong}[3][\amethod]{\ensuremath{#1(#2) \Rightarrow #3}}
\newcommand{\alabelshort}[2][\amethod]{\ensuremath{#1(#2)}}
\newcommand{\alabellongind}[4][\amethod]{\ensuremath{#1(#2) \overset{#4}{\Rightarrow}
    #3}}
\newcommand{\alabelobjind}[4][{\aobj.\amethod}]{\ensuremath{{#1}(#2) \overset{#4}{\Rightarrow}
    #3}}
\newcommand{\src}[2]{\ensuremath{\mathsf{gen}_{#1}(#2)}}
\newcommand{\dwn}[2]{\ensuremath{\mathsf{eff}_{#1}(#2)}}
\newcommand{\alabel}{\ensuremath{\ell}}
\newcommand{\refmap}{\ensuremath{\mathsf{abs}}}
\newcommand{\firstrep}{\ensuremath{\mathsf{qry}}}
\newcommand{\secondrep}{\ensuremath{\mathsf{upd}}}
\newcommand{\specOrSet}{\ensuremath{\Spec(\text{OR\text{-}Set})}}
\newcommand{\specTwoPSet}{\ensuremath{\Spec(\text{2P\text{-}Set})}}
\newcommand{\specLWWSet}{\ensuremath{\Spec(\text{Set})}}
\newcommand{\specReg}{\ensuremath{\Spec(\text{Reg})}}
\newcommand{\specMVReg}{\ensuremath{\Spec(\text{MV\text{-}Reg})}}
\newcommand{\specCounter}{\ensuremath{\Spec(\text{Counter})}}
\newcommand{\specRGA}{\ensuremath{\Spec(\text{RGA})}}
\newcommand{\specWooki}{\ensuremath{\Spec(\text{Wooki})}}
\newcommand{\specAddatOne}{\ensuremath{\Spec(\text{addAt1})}}
\newcommand{\specAddatTwo}{\ensuremath{\Spec(\text{addAt2})}}
\newcommand{\specAddatThree}{\ensuremath{\Spec(\text{addAt3})}}
\newcommand{\labels}{\ensuremath{\mathbb{L}}}
\newcommand{\alabelset}{\ensuremath{\mathsf{L}}}
\newcommand{\avisord}{\ensuremath{\mathsf{vis}}}
\newcommand{\atsord}[1]{\ensuremath{\prec_{#1}}}
\newcommand{\aseqord}{\ensuremath{\mathsf{seq}}}
\newcommand{\alinord}{\ensuremath{\mathsf{lin}}}
\newcommand{\apre}{\ensuremath{\mathsf{pre}}}
\newcommand{\comp}{\ensuremath{\otimes}}
\newcommand{\atrace}{\ensuremath{\mathit{tr}}}
\newcommand{\hist}[1]{\ensuremath{\mathit{h}({#1)}}}
\newcommand{\Spec}{\ensuremath{\mathsf{Spec}}}
\newcommand{\updates}{\ensuremath{\mathsf{Updates}}}
\newcommand{\queries}{\ensuremath{\mathsf{Queries}}}
\newcommand{\queryupdates}{\ensuremath{\mathsf{Query\text{-}Updates}}}
\newcommand{\effector}{\ensuremath{\delta}}
\newcommand{\semop}[2][\aop]{\ensuremath{\llbracket #1
    \rrbracket}\ifthenelse{\isempty{#2}}{}{(#2)}}
\newcommand{\localstates}{\ensuremath{\mathsf{LC}}}
\newcommand{\globalstates}{\ensuremath{\mathsf{GC}}}
\newcommand{\gstates}{\ensuremath{\mathsf{G}}}
\newcommand{\labeldom}[1]{\ensuremath{\mathsf{labels}(#1)}}
\newcommand{\downstreams}{\ensuremath{\mathsf{DS}}}
\newcommand{\msgs}{\ensuremath{\mathsf{Ms}}}
\newcommand{\specarrow}[1]{\xhookrightarrow{#1}}
\newcommand{\ats}{\ensuremath{ts}} 
\newcommand{\atsource}{\ensuremath{\theta}} 
\newcommand{\aglobalstate}{\ensuremath{\mathtt{gc}}}
\newcommand{\crdtimp}{crdt-implementation}
\newcommand{\gconfres}{\ensuremath{\mathsf{gcr}}} 
\newcommand{\igconfres}{\ensuremath{\mathsf{igcr}}}
\begin{document}

\title[]{\CRDTLin{}}

\author{Constantin Enea}
\affiliation{
  \institution{University Paris Diderot}            
  \country{France}                    
}
\email{cenea@irif.fr}          

\author{Suha Orhun Mutluergil}
\affiliation{
  \institution{University Paris Diderot}            
  \country{France}                    
}
\email{mutluergil@irif.fr}

\author{Gustavo Petri}
\affiliation{
  \institution{ARM Research}           
  \country{United Kingdom}                   
}
\email{gustavo.petri@arm.com}         

\author{Chao Wang}
\affiliation{
  \institution{University Paris Diderot}           
  \country{France}                   
}
\email{wangch@irif.fr}         

\begin{abstract}

Geo-distributed systems often replicate data at multiple locations to achieve availability and performance despite network partitions. These systems must accept updates at any replica and propagate these updates asynchronously to every other replica. Conflict-Free Replicated Data Types (CRDTs) provide a principled approach to the problem of ensuring that replicas are eventually consistent despite the asynchronous delivery of updates.

We address the problem of specifying and verifying CRDTs, introducing a new correctness criterion called Replication-Aware Linearizability. This criterion is inspired by linearizability, the de-facto correctness criterion for (shared-memory) concurrent data structures. We argue that this criterion is both simple to understand, and it fits most known implementations of CRDTs. We provide a proof methodology to show that a CRDT satisfies replication-aware linearizability which we apply on a wide range of implementations. Finally, we show that our criterion can be leveraged to reason modularly about the composition of CRDTs.
\end{abstract}

\maketitle

\section{Introduction}
\label{sec:introduction}

Conflict-Free Replicated Data Types (CRDTs)~\cite{ShapiroPBZ11} have
recently been proposed to address the problem of availability of a
distributed application under network partitions.
CRDTs represent a methodological attempt to alleviate the problem of
retaining some data-Consistency and Availability under network
Partitions (CAP), famously known to be an impossible combination of
requirements by the CAP theorem of~\citet{GilbertL02}.
CRDTs are data types designed to favor availability over consistency
by replicating the type instances across multiple nodes of a
network, and allowing them to temporarily have different
views.
However, CRDTs guarantee that the different states of the 
nodes will \emph{eventually} converge to a state common to all
nodes~\cite{ShapiroPBZ11,Burckhardt14}.
This \emph{convergence property} is intrinsic to the data
type design and in general no synchronization is needed, hence
achieving availability. 

\noindent
{\bf Availability vs. Consistency.}
To illustrate the problem we consider the implementation of a
list-like CRDT object, the Replicated Growable Array (RGA) -- due
to~\citet{RohJKL11}\footnote{We use a variation of code extracted
  from~\cite{AttiyaBGMYZ16}.} --, used for text-editing applications. 
RGA supports three operations:
\begin{inparaenum}
\item \lstinline|addAfter(a,b)| which adds the character
  \lstinline|b| -- the concrete type is inconsequential here --
  immediately after the occurrence of the character \lstinline|a|
  assumed to be present in the list,\footnote{We assume elements are unique, implemented with timestamps.}
\item \lstinline|remove(a)| which removes \lstinline|a|
  assumed to be present in the list, and
\item \lstinline|read()| which returns the list contents.
\end{inparaenum}

To make the system available under partitions, RGA allows each of
the nodes to have a copy of the list instance.
We will call each of the nodes holding a copy a \emph{replica}.
\ifshort
\else
Then the question is, how can we maintain the consistency of the
different copies of the list given that the data could be at any point
in time be modified or read by any of the replicas?
A naive approach would synchronize all the replicas on each
operation, hence maintaining coherence, but rendering the system
unavailable if any one replica goes off-line.

Instead, 
\fi
RGA allows any of the replicas to modify the \emph{local}
copy of the list immediately -- and hence return control to the client
-- and lazily propagate the updates to the other replicas.
For instance, assuming that we have an initial list containing the
sequence $\mathtt{a \cdot b \cdot e \cdot f}$~\footnote{We use
  $s_0 \cdot s_1$ to denote the composition of sequences $s_0$ and
  $s_1$.}
and two replicas, $\arep_1$ and $\arep_2$, if $\arep_1$ inserts the
letter \lstinline|c| after \lstinline|b| (calling
\lstinline|addAfter(b,c)|), while $\arep_2$ concurrently inserts the
letter \lstinline|d| after \lstinline|b| (\lstinline|addAfter(b,d)|)
the replicas will have the states $\mathtt{a \cdot b \cdot c \cdot e
  \cdot f}$ and $\mathtt{a \cdot b \cdot d \cdot e \cdot f}$
respectively.
We have solved the availability problem, but we have introduced
inconsistent states.
This problem is only exacerbated by adding more replicas.

\noindent
{\bf Convergence.}
To restore the replicas to a consistent state, CRDTs
guarantee that under conflicting operations -- that is, operations
that could lead to different states -- there is a systematic way to
\emph{detect conflicts}, and there is a strategy followed by all
replicas to \emph{deterministically resolve conflicts}.

In the case of RGA, the implementation adds metadata to each
item of the list identifying the originating replica as well as
timestamp of the operation in that replica.\footnote{We ignore here conflicts due to \lstinline|remove|. They
  are discussed in~\autoref{sec:overview}.}
This metadata is enough to detect when conflicts have occurred.
Generally there are a number of assumptions that are necessary for the
metadata to detect conflicts (for instance that timestamps increase
monotonically with time)
which we shall discuss in the following sections.
Then, for RGA it is enough to know whether two \lstinline|addAfter|
operations have conflicted by simply comparing the replica identifiers
and their timestamps.
In fact, this is a sound over-approximation of conflict since two
concurrent \lstinline|addAfter| operations have a real conflict only
if their first arguments are the same (e.g. the element \lstinline|b|
in the example aforementioned).
In such case, the strategy to resolve the conflict will always choose
to order first the character added with the highest timestamp in the
resulting list, and in the particular case where the timestamps should
be the same, an arbitrary order among replicas will be used.
In the example above, and assuming that the character
\lstinline|c| was added with timestamp $t_1$ and the character
\lstinline|d| was added with timestamp $t_2$, if $t_2 < t_1$ (for some
order $\leq$ between timestamps), the list will converge to $\mathtt{a
  \cdot b \cdot c \cdot d \cdot e\cdot f}$. 
We obtain the same result if $t_1 = t_2$ and assume that we have a replica
order $<_r$, we have $\arep_2 <_r \arep_1$.
\ifshort
\else
In any other case we obtain $\mathtt{a \cdot b \cdot d \cdot c \cdot e\cdot f}$.
\fi
Using an arbitrary order among replica identifiers is common in
CRDT implementations to break ties among elements with equal
timestamps.
We will generally assume that metadata provides a strict ordering and
ignore the details.

If the effects of all operations are
delivered to all replicas eventually, the replicas will converge to
the same state -- assuming a quiescent period of time where no new
operations are performed.
This allows to eventually recover the consistency of the data type without
giving away availability.

\noindent
{\bf Specifications.}
The simplicity of the list data type
\ifshort
\else
with the API that we have described above
\fi
allows for a somewhat simple conflict resolution
strategy.
\ifshort
\else
Any strategy ordering conflicting concurrent insertions in a
deterministic way will work.
\fi
However, this is not true for many other CRDT implementations.
It is therefore critical to provide the programmer with a clear, and
precise, specification of the allowed behaviors of the data
type under conflicts.
Unfortunately this is not an easy task.
Many times the programmer has no option but to read the implementation
to understand how the metadata is used to resolve conflicts, for
instance by reading the algorithms by~\citet{ShapiroPBZ11} (a case
where the algorithms are particularly well documented).
Recently~\citet{BurckhardtGYZ14, Burckhardt14} have
developed a formal framework where CRDTs and other weakly
consistent systems can be specified.
However, we consider that reading these specifications is far from
trivial for the average programmer, let alone writing new
specifications.
Evidently, having a formal specification is a necessary step towards
the verification of the implementations of CRDTs.

\noindent
{\bf Simpler specifications, \emph{not simplistic} specifications.}
It is important to remark at this point that while it is our goal to
make the specification of CRDTs simpler, we believe that it is impossible to make
them coincide with their sequential data type counterparts.
Most CRDTs will exhibit, due to concurrency and consistency
relaxations, behaviors that are not possible in the sequential version
of the type they represent.
A notable instance is the Multi-Valued-Register (MVR), which resolves
conflicts arising from concurrent updates to the register by storing
multiple values.
Hence, a subsequent read operation to the register might return a set
of values rather than a single value.
This is certainly a behavior that is not possible for a
``traditional'' register, and in fact, one that the programmer must be
aware of.
Our goal is to accurately specify the behaviors of the CRDT, meaning
that often times, different implementations of the same underlying
data type (say a register) will have different specifications if their
conflict resolution allows for different behaviours, for instance the
Last-Writer-Wins (LWW) and the MVR registers which will be mentioned later.

\noindent
{\bf Paper Contributions.}
Inspired by linearizability~\cite{HerlihyW90} 
we propose a \emph{new consistency criterion
  for CRDTs}, which we call \emph{\CRDTLin{}} (\CRDTLinshort{}).
\CRDTLinshort{} both simplifies CRDT specifications, and allows us to
give correctness proof strategies for these specifications.
To satisfy \CRDTLinshort{} a data type must be so that
any execution of a client interacting with an instance of the data
type
\begin{inparaenum}
\item should result in a state that can be obtained as a sequence (or
  linearization) of its updates -- where we assume that all updates
  are executed sequentially-- and
\item any operation reading the state of the data type instance should
  be justified by executing a \emph{sub-sequence} of the above
  mentioned sequence of updates.
\end{inparaenum}
For instance, for the RGA example, the state of the final
list (when all updates are delivered) should be reachable by considering a sequence where all
\lstinline|addAfter|  operations are executed sequentially.
\ifshort
\else
\footnote{We
  will come back to RGA to add \lstinline|remove|
  in \sectionautorefname~\ref{sec:overview}.}
 \fi
\ifshort
\else
This definition shares some similarities with that of~\citet{PerrinMJ14}. 
We address the main differences in \sectionautorefname~\ref{sec:rel-work}.
\fi

Equipped with this criterion we show that many existing CRDTs are
\crdtlinearizable{}.
We provide both, their 
specification, and proofs
showing that implementations respect the specification.
We provide two different proof methodologies based on the structure of
the conflict-resolution mechanism implemented by the CRDT.
We categorize CRDT implementations into classes according to their
conflict-resolution strategy.
Encouragingly, most of the CRDTs by~\citet{ShapiroPBZ11} can be proved
RA-linearizable.

Given that our criterion is inspired by linearizability, we consider
if it also preserves the same compositionality properties, i.e.
whether the composition of a set of RA-linearizable objects is also
RA-linearizable.
While we show that this is not true in general, we show that 
compositionality can be achieved when we concentrate to specific
classes of conflict resolution as described above.

Finally, we have mechanized our methodologies to prove \CRDTLinshort{}.
We use the verification tool Boogie~\cite{BarnettCDJL05}  to
encode our specifications, CRDTs, and prove the correctness of the
implementations (proof scripts are available at~\cite{boogie-proofs}).
\ifshort
\else
To the best of our knowledge, the only other works that mechanize 
correctness proofs of CRDT implementations are~\cite{GomesKMB17,ZellerBP14}, which are 
frameworks directly carrying the proofs at a semantical level in
Isabelle/HOL, and concentrating on proving Strong Eventual
Consistency (SEC).

Finally, notice that the CISE tool~\cite{NajafzadehGYFS16} does not
actually prove the data types, but rather invariants on top of them.
\fi

\ifshort
Complete proofs of the results in this paper and more details can be found in~\cite{arxiv}.
\fi

\vspace{-1mm}
\section{Overview}
\label{sec:overview}

\begin{figure}[t]
\begin{lstlisting}[basicstyle=\ttfamily\scriptsize,caption={Replicated Growable Array (RGA) pseudo-code.},captionpos=b,label={lst:rga}]
  payload Ti-Tree N, Set Tomb
  initial N = @|$\emptyset$|@, Tomb = @|$\emptyset$|@
  addAfter(a,b) :
    generator :
      precondition : a = @|$\circ$|@ or (a != @|$\circ$|@ and (_,_,a) @|$\in$|@ N and a @|$\not\in$|@ Tomb)
      let t@|$_{\mathtt{b}}$|@ = getTimestamp()
    effector(a, t@|$_{\mathtt{b}}$|@, b) :
      N = N @|$\cup$|@ {(a, t@|$_{\mathtt{b}}$|@, b)}
  remove(a) :
    generator :
      precondition : (_,_,a) @|$\in$|@ N and a @|$\notin$|@ Tomb and a @|$\neq \circ$|@
    effector(a) :
      Tomb = Tomb @|$\cup$|@ {a}
  read() :
    let ret-list = traverse(N, Tomb)
    return ret-list
\end{lstlisting}
\end{figure}

\begin{wrapfigure}{r}{0.20\textwidth}
\vspace{-5mm}
  \hspace{-7mm}
  \includegraphics[scale=.67]{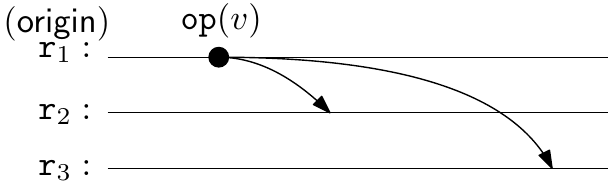}
   \vspace{-5mm}
  \caption{System Model.}
  \label{fig:sys-mod}
\vspace{-4mm}
\end{wrapfigure}
We give an informal description of our system model, and illustrate
our contribution with two compelling CRDT implementations
from~\cite{AttiyaBGMYZ16,ShapiroPBZ11}.

\ifshort
\else
We assume that the system is comprised of multiple nodes in a network.
\fi
We consider the implementation of CRDTs,
and we focus on the behaviors of an \emph{an instance} of the
data type, generically called an \emph{object}.
We assume that objects are replicated among several \emph{replicas}. 
\figureautorefname~\ref{fig:sys-mod} shows the execution of an
operation \lstinline|op(v)| evolving as follows:
\begin{inparaenum}[(i)]
\ifshort
\item a client submits an operation to some replica called \emph{origin},
\else
\item Firstly, a client issues a call to the object, connects
  to any one replica and performs the operation in that replica. We
  call that replica the \emph{origin}.
\fi
\item If the operation reads and updates the
  object state, the reading action is only performed at the origin.
  This part of the operation is called the \emph{generator}
  (cf.~\cite{ShapiroPBZ11}).
  Then, if the operation modifies the state --
  e.g. \lstinline|addAfter| for RGA -- an update is generated to be
  executed in every replica.
  This part of the operation shall be called the \emph{effector}.
  We assume that effectors are executed immediately at the origin.
  This is represented by the dot at the origin replica in
  \figureautorefname~\ref{fig:sys-mod}.
\item Finally, the effector is delivered to each replica, and their
  states are updated consequently, represented by the target of the
  arrows. 
\end{inparaenum}
This model corresponds to \emph{operation-based} CRDTs.
Our results also apply to \emph{state-based} CRDTs, where replicas
exchange states instead of operations (Sec.~\ref{sec:mec}).

\subsection{RGA CRDT Implementation}
\label{sec:rga-crdt-impl}

Listing~\ref{lst:rga} presents the code of RGA in a style
following that of \citet{ShapiroPBZ11} (a version of the RGA
introduced in~\cite{AttiyaBGMYZ16}). 

The keyword \lstinline|payload| declares the state 
  used to represent the object:
  \ifshort
  \else
  (akin to fields of a class
  in an object oriented language).
  \fi
  a variable \lstinline|N| of type \lstinline|Ti-Tree|, and a variable \lstinline|Tomb| of type
  \lstinline|Set|.
  \ifshort
  \else
  We then find the definitions of the operations: \lstinline|addAfter|,
  \lstinline|remove| and \lstinline|read|.
  \fi
  The effectful operations \lstinline|addAfter| and
  \lstinline|remove| have two labels marked in red:
  \lstinline|generator| and \lstinline|effector|, corresponding to the
  reading and updating part of the operations as described above.
  Notice that the effector can use as arguments values produced by the
  generator. 
  The \lstinline|precondition| annotation indicates facts
  that are assumed about the state prior to the execution.
  
Reconsidering \figureautorefname~\ref{fig:sys-mod} the
source of the arrows represents the execution of a
\lstinline|generator| jointly with the \lstinline|effector| at
replica $\arep_1$, and the target of the arrows represents the
delivery and execution of the effector at replicas $\arep_2$ and
$\arep_3$. 

  \ifshort
  \else
As it is common to many CRDT implementations, RGA replicas will use
timestamps to keep track of causality between updates, effectively
capturing when two updates are concurrent.
Moreover, they will keep the information relating the causal order
in which elements are added to the list.
Provided with this causality information -- or lack thereof--, the
timestamps will be used to resolve conflicts in a deterministic way.
\fi
Each replica maintains a \emph{Timestamp Tree} (\lstinline|Ti-Tree|) containing in every tree
node a pair with: the element added to the list (for instance the
character \lstinline|b|), and a timestamp associated to it
(\lstinline|t|$_{\mathtt{b}}$) used to resolve conflicts.
We will encode the tree as a set of triples (corresponding to
nodes) of the form \mbox{(\lstinline|a|, \lstinline|t|$_{\mathtt{b}}$, \lstinline|b|)}
representing an element \lstinline|b| in the
tree with timestamp \lstinline|t|$_{\mathtt{b}}$ and whose parent is
item \lstinline|a| also present in the tree.
The tree-ness property is ensured by construction.

The \lstinline|generator| portion of \lstinline|addAfter(a,b)| has
a precondition requiring \lstinline|a| to exist in the tree before the
insertion of \lstinline|b| 
(the data structure is initialized with a
preexisting element $\circ$).
The generator then samples a timestamp \lstinline|t|$_{\mathtt{b}}$
for \lstinline|b| which is assumed to be larger than any
timestamp presently in the \lstinline|Ti-Tree|
\lstinline|N| of the origin replica.\footnote{Also, \lstinline|t|$_{\mathtt{b}}$ cannot
be sampled by another replica (as we discussed
in \sectionautorefname~\ref{sec:introduction} this can be ensured by tagging the timestamps with replica identifiers).}
The \lstinline|effector| portion of \lstinline|addAfter(a,b)| adds the
triple \lstinline|(a,t|$_{\mathtt{b}}$\lstinline|,b)| in the
replica's own copy of \lstinline|N|.
This ensures that the tree structure is consistent with the causality
of insertions in the data structure.
A client of the object will only ever attempt to add an
element after another element which it has already seen as mandated by
the \lstinline|addAfter| API.
Hence, the parent node of any node was inserted before it, and is
causally related to it.
Similarly, nodes that are not related to each other on any path of
the tree (eg. siblings) are not causally related.
An example of such a tree is shown in the left most box
of \figureautorefname~\ref{fig:rga-trace}: 
elements \lstinline|c| and \lstinline|b| were concurrently added after
\lstinline|a|, and \lstinline|a| was added first after the
initial element $\circ$.

From a \lstinline|Ti-tree|, we can
obtain a list by traversing the tree in pre-order, with the
proviso that siblings are ordered according to their timestamps with
the \emph{highest timestamp visited first}.
The leftmost box in \figureautorefname~\ref{fig:rga-trace} shows a tree that
results in the list $\mathtt{a \cdot b \cdot c}$ assuming the
timestamp order $\mathtt{t_a < t_c < t_b}$.

\begin{figure*}[t]
  \centering
  \vspace{-2mm}
  \includegraphics[width=.68\textwidth]{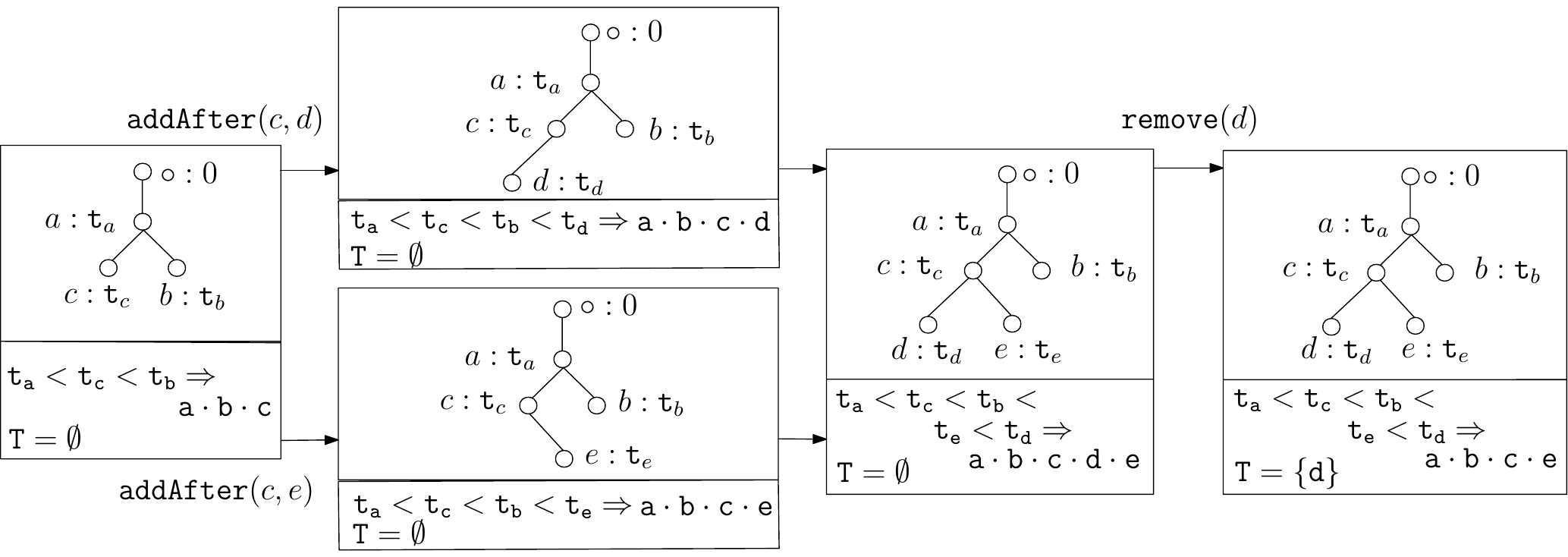}
  \vspace{-2mm}
  \caption{Example of RGA conflict resolution.}
  \label{fig:rga-trace}
  \vspace{-3mm}
\end{figure*}

\ifshort
\figureautorefname~\ref{fig:rga-trace} shows two concurrent operations
\lstinline|addAfter(c,d)| and
\lstinline|addAfter(c,e)| executing in two different replicas starting
both with the state depicted on the left. 
Then, the two trees result in different lists in each
replica before the operations are mutually propagated.
\else
Consider now two concurrent operations: \lstinline|addAfter(c,d)| and
\lstinline|addAfter(c,e)| executing in two different replicas starting
both with the state depicted on the left of
\figureautorefname~\ref{fig:rga-trace}.
Following Listing~\ref{lst:rga} we obtain the trees in the second
column of \figureautorefname~\ref{fig:rga-trace} where we assume that
the top tree is the result of \lstinline|addAfter(c,d)| in one of the
replicas, and the bottom tree is the result of executing
\lstinline|addAfter(c,e)| in the other.
Then, the two trees result in different lists in each
replica before the operations are mutually propagated.
In the third column of \figureautorefname~\ref{fig:rga-trace} we
obtain the result of propagating the operations between the replicas
-- indeed the propagation of any of the operations to any of the
replicas yields the same result, ensured by the commutativity of
CRDTs.
It is clear now that the result of the list is $\mathtt{a \cdot b
  \cdot c \cdot d \cdot e}$.
\fi

We have so far ignored \lstinline|remove|.
Consider the case where a replica executes \lstinline|addAfter(a,b)|
on a replica while another one executes \lstinline|remove(a)|.
\ifshort
If the \lstinline|addAfter(a,b)| effector reaches some replica
after the effector of \lstinline|remove(a)| there is a problem since
the precondition of the effector of \lstinline|addAfter(a,b)| requires
that the element \lstinline|a| be present in the \lstinline|Ti-tree|
of the replica.
\else
If the effector of \lstinline|remove(a)| reaches every replica after
the effector of \lstinline|addAfter(a,b)| there is no problem since
the semantics is clear: the element \lstinline|a| is removed after
the element \lstinline|b| has been added.
However, if the operations reach some replica in opposite order
(recall that they are concurrent) there is a problem since
the precondition of the effector of \lstinline|addAfter(a,b)|
requires that the element \lstinline|a| be present in the
\lstinline|Ti-tree| of the replica.
\fi
To avoid this kind of conflict, rendering the operations
commutative, RGA does not really remove
elements from the \lstinline|Ti-tree|.
Instead, an additional data structure called a \emph{tombstone} is used to
keep track of elements that have been conceptually erased and should
not be considered when reading the list.
Here, the marking of tombstones is a set
\lstinline|Tomb| of elements.
The last column of \figureautorefname~\ref{fig:rga-trace} shows the result of
a \lstinline|remove| operation.

The method \lstinline|read| performs the pre-order
traversal explained before, where all elements in the tombstone
\lstinline|Tomb| are omitted.
In each of the boxes of \figureautorefname~\ref{fig:rga-trace} the list shown
represents the result of a \lstinline|read| operation in the state
depicted.

\begin{figure}
  \vspace{-1mm}
  \includegraphics[scale=.4]{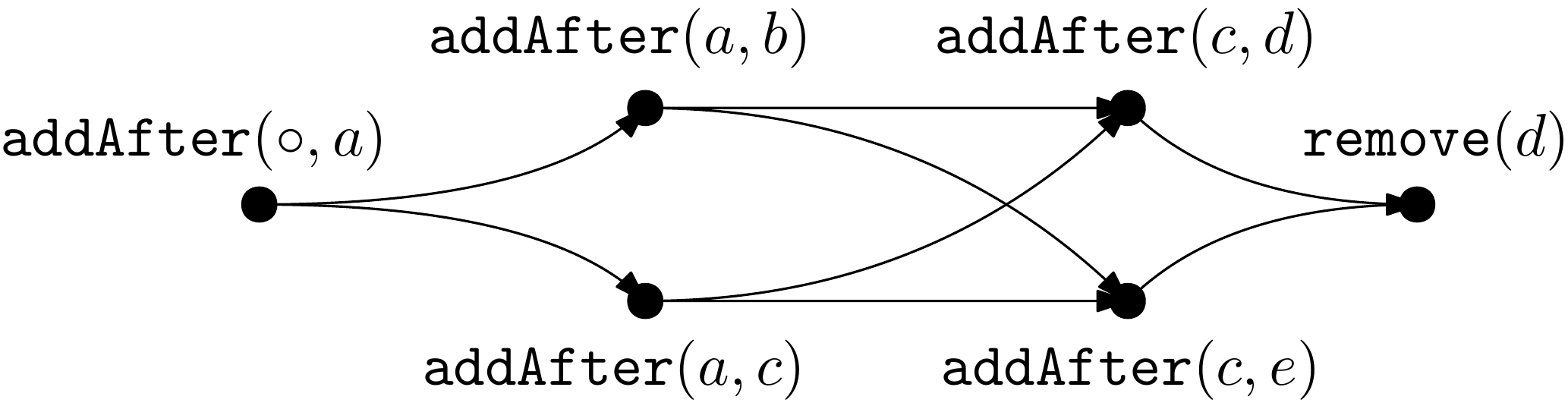}
    \vspace{-3mm}
  \caption{A history for the RGA object.}
  \label{fig:rga-history}
    \vspace{-5mm}
\end{figure}

\smallskip
\noindent
{\bf Operations, histories and linearizability.}
We consider an abstract view of executions of a
CRDT object called a \emph{history}.
Informally a history is a set of operations with a partial order
representing the ordering constraints imposed on the execution of each
operation.
We represent the execution of an operation with a label of the
form $\alabellong{\argv}{\retv}$ representing a call to method
$\amethod$ with arguments $\argv$ and returning the value $\retv$.
When the values are unimportant we shall use the meta-variable
$\alabel$ to denote a label.
The partial order mentioned above represents the \emph{visibility}
relation among operations.
We say that an operation with label
$\alabel_1$ is visible to an operation with label $\alabel_2$ if at
the time when $\alabel_2$ was executed at the origin replica, the
effects of $\alabel_1$ had been applied in the state of the replica
executing $\alabel_2$.
A history is a pair $(\alabelset, \prec)$ containing a
set of labels $\alabelset$ and a visibility relation $\prec$ between labels.
The history 
of the execution in \figureautorefname~\ref{fig:rga-trace}
is presented in \figureautorefname~\ref{fig:rga-history}.
Each node represents a label 
and arrows represent that the operation at the source of the arrow
is visible to the operation at the target.
Since we assume that visibility is transitive we ignore
redundant arrows.

A similar notion of history is used in the context of \emph{linearizability}~\cite{HerlihyW90}.
The only difference is that the order $\prec$ relates two operations 
the first of which returns before the
other one started. A history $(\alabelset, \prec)$ is called linearizable
if there exists a \emph{sequential} history $(\alabelset, \prec_{\mathsf{seq}})$ ($\prec_{\mathsf{seq}}$ is a total order), 
called \emph{linearization},
s.t. 
$(\alabelset, \prec_{\mathsf{seq}})$ is a valid execution, and
$ \prec\ \subseteq\ \prec_{\mathsf{seq}}$.

CRDTs are not linearizable since operations are propagated lazily, so two replicas can see
non-coinciding sets of operations.
We relax linearizability to adapt it to CRDTs as follows:
\begin{inparaenum}
\item we require that the sequential history be consistent with the
  visibility relation among operations instead of the returns-before
  order, and
\item operations that only read the state of the object are allowed
to see a \emph{sub-sequence} of the linearization, instead of
the whole prefix as in the case of linearizability.
\end{inparaenum}
(We will discuss an additional relaxation in
\sectionautorefname~\ref{sec:or-set-crdt}).

\noindent
{\bf Intuition of RGA \CRDTLinshort{}.}
To simplify, consider the linearization of two concurrent
operations adding after a common element: \lstinline|addAfter(a,b)|
and \lstinline|addAfter(a,c)|.
This example corresponds to the history shown in the first three nodes
of \figureautorefname~\ref{fig:rga-history} from left to right.
Because these operations are concurrent they are not related by
visibility so our criterion allows for any ordering
among them.
Let us show that these operations can always be ordered in a way
that the result of future reads will match this ordering.
From the previous explanation we know that the order between
\lstinline|b| and \lstinline|c| in the resulting list will be
determined by their corresponding timestamps
(\lstinline|t|$_{\mathtt{b}}$ and \lstinline|t|$_{\mathtt{c}}$).
Assuming that the ordering is that given in the tree of the first
column of \figureautorefname~\ref{fig:rga-trace}, we know that we can
order the operations as \lstinline|addAfter(a,c)| followed by
\lstinline|addAfter(a,b)| which when executed sequentially obviously
results in $\mathtt{a \cdot b \cdot c}$ as shown.
The timestamp metadata of RGA gives us a strategy to build the
operation sequence that corresponds to a sequential specification.
A concrete linearization of these operations is:\\[2pt]
\centerline{
  \(
  \begin{array}{l}
     \alabelshort[\mathtt{addAfter}]{\circ,a}\ \cdot\
     \alabelshort[\mathtt{addAfter}]{a,c}\ \cdot\
     \alabelshort[\mathtt{addAfter}]{a,b}
  \end{array}
  \)
}\\

\vspace{-4mm}
Unfortunately this simple linearization strategy is not always applicable.
Consider now a similar case where after issuing the
$\mathtt{addAfter}$ operations the replicas attempt to immediately
read the state.
As explained in \figureautorefname~\ref{fig:rga-trace}, a possible behavior is
that the first replica returns $\mathtt{\circ \cdot a \cdot
  b}$ while the second returns $\mathtt{\circ \cdot a \cdot
  c}$.
If we consider the linearization given above, the result $\mathtt{\circ \cdot a \cdot b}$ is not
possible, since $\mathtt{c}$ was added before $\mathtt{b}$ was added.
\ifshort
This is because the reading replica has not yet seen
$\mathtt{addAfter(a, c)}$.
\else
The problem here is that the replica executing this read has not yet
seen the effect of $\mathtt{addAfter(a, c)}$.
\fi
To overcome this problem we allow methods that read the state to see a
\emph{sub-sequence} of the global linearization.
Thus, we can consider the sequence\\[3pt]
\centerline{
  \(
  \begin{array}{l}
       \alabelshort[\mathtt{addAfter}]{\circ,a}\ \cdot\
     {\color{red} \alabelshort[\mathtt{addAfter}]{a,c}}\ \cdot\
     \alabellong[\mathtt{read}]{}{(\circ \cdot\ a \cdot c)}\ \cdot\ \\
     \hspace{3cm}\alabelshort[\mathtt{addAfter}]{a,b}\ \cdot\ 
     \alabellong[\mathtt{read}]{}{(\circ \cdot\ a \cdot b)}
  \end{array}
  \)
}\\[2pt]
where the last $\mathtt{read}$ ignores the red label
{\color{red} $\mathtt{addAfter}(a, c)$}.
These are only two cases of conflicting concurrent
operations, in \sectionautorefname~\ref{sec:proofs} we show that all operations
can be ordered such that they correspond to a sequential
execution thereof.

\vspace{-1.5mm}
\subsection{OR-Set CRDT Implementation}
\label{sec:or-set-crdt}

\begin{figure}[!t]
  \centering
\begin{lstlisting}[caption={
Pseudo-code of the OR-Set CRDT.},basicstyle=\ttfamily\scriptsize,captionpos=b,label={lst:or-set}]
  payload Set S
  initial S = @|$\emptyset$|@
  add(a) :
    generator :
      let k = getUniqueIdentifier()
      return k
    effector(a, k) :
      S = S @|$\cup$|@ {(a, k)}
  remove(a) :
    generator :
      let R = {(a,k) | (a,k) @|$\in$|@ S}
      return R
    effector(R) :
      S = S @|$\setminus$|@ R
  read() :
    let A = {a : @|$\exists$|@ k. (a,k) @|$\in$|@ S}
    return A
\end{lstlisting}
\end{figure}

The Observed-Remove Set (OR-Set)~\cite{ShapiroPBZ11} 
implements a set with operations: \lstinline|add(a)|,
\lstinline|remove(a)|, \lstinline|read()|. The code of OR-Set is shown in Listing~\ref{lst:or-set}
(we assume return values for \lstinline|add(a)| and \lstinline|remove(a)| for technical reasons).

\begin{wrapfigure}{r}{.26\textwidth}
  \vspace{-1mm}
  \hspace{-7mm}
  \includegraphics[width=0.27 \textwidth]{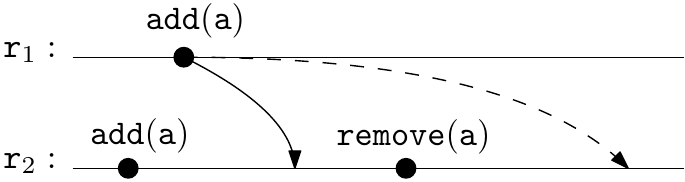}
  \vspace{-2mm}
  \caption{Interleaving-based Set.}
  \label{fig:or-set-simple}
  \vspace{-3mm}
\end{wrapfigure}
Although the meaning of these methods is self-evident from their names,
 the results of conflicting concurrent operations is not evident.
Consider for example the case where two replicas add a certain element
\lstinline|a| and then one of them removes that element.
If we consider an interleaving based execution of these operations
there are two options depending on the interleaving:
\begin{inparaenum}[i)]
\item If \lstinline|remove(a)| is the last operation
  then the expected set is empty, since the two consecutive
  \lstinline|add(a)| are idempotent, and the
  \lstinline|remove| would remove the only occurrence of
  \lstinline|a|. This interleaving is the one depicted with solid
  arrows in \figureautorefname~\ref{fig:or-set-simple}.
\item\label{or-set-ex2} On the other hand, if the operation \lstinline|add(a)| of the
  non-removing process comes last, as depicted with the dashed arrows
  in \figureautorefname~\ref{fig:or-set-simple}, the final set could contain the
  element \lstinline|a|.
\end{inparaenum}
As we have explained before, the operations can
arrive in different orders to different replicas.
To guarantee convergence, OR-Set must ensure that regardless of the
ordering, the resulting set will be the same.
To that end, OR-Set \lstinline|add| operations will tag each added
element with a unique identifier.
Then, a remove operation will only remove the element-identifier pairs
which has already seen.
For instance, in the case (\ref{or-set-ex2}) above, the remove of \lstinline|a| will
only remove the element that has been previously added by the same
replica, since this item has been observed by the \lstinline|remove|
operation -- and thus its identifier is known to it. The
concurrent \lstinline|add(a)| operation will have an identifier that
has not been observed by the \lstinline|remove| 
Therefore the item will not be removed, even if the
effectors of the two adds are performed in a replica before the effector
of the remove.

\noindent
{\bf Intuition of OR-Set Linearizability.}
It is easy to find examples where the implementation of OR-Set can
produce executions that cannot be justified by the standard definition of
linearizability (even with the relaxations discussed in \sectionautorefname~\ref{sec:rga-crdt-impl})
assuming a standard Set specification.
\figureautorefname~\ref{fig:or-set-not-lin} shows one such example.
Clearly any linearization of the visibility relation in this execution should order
the \lstinline|add| and \lstinline|remove| updates before the \lstinline|read| queries,
and the linearization of the updates should end with a
\lstinline|remove|.
Therefore, the final set returned by each of the two \lstinline|read| queries should have
at most one element (the \lstinline|read| queries see all the updates
in the execution), contrary to their return value in this execution.

This execution shows that the \lstinline|remove| operation behaves as
both a query (observing a certain number of adds of the element to be
removed) and an update (by removing said observed elements).
To cope with such cases, we will consider in our definition that
query-update operations can be split into a query part corresponding
to the generator, which only reads the state -- and hence is allowed
to see a sub-sequence of the linearization of updates -- and an
update part corresponding to the effector which will use the results
of the prior query.
For instance, \lstinline|remove| will be split into a query part \lstinline|readIds| where
only the elements visible at the time of the remove are
selected, and an update part \lstinline|remove| where only those elements selected are
erased. Any identifier not in the set returned by \lstinline|readIds| will remain in the set after 
the update part of \lstinline|remove|.
Evidently, this requires some mechanism for ``marking'' the adds that
are concerned.
We will consider that each add has a unique identifier.
\figureautorefname~\ref{fig:or-set-lk-rem} shows this rewriting.
The result of the rewriting admits a linearization consistent with the
specification of Set, as explained above.

\begin{figure*}[t]
  \begin{subfigure}{.8\linewidth}
    \centering
    \includegraphics[width=0.65 \textwidth]{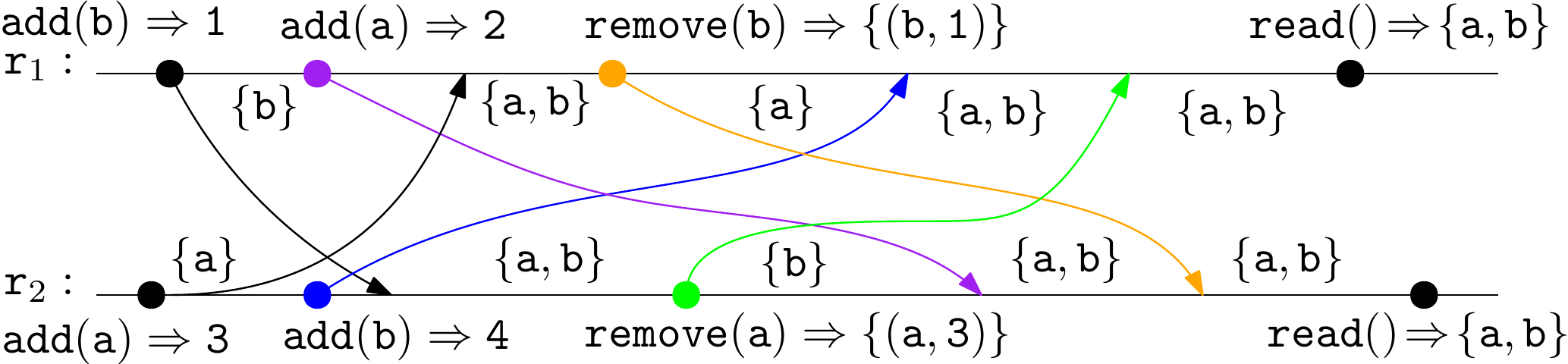}
    \caption{OR-Set non-linearizable execution. Each line represents operations issued to the same replica.}
    \label{fig:or-set-not-lin}
  \end{subfigure}

  \vspace{2mm}
  \begin{subfigure}{.8\linewidth}
    \centering
    \includegraphics[width=0.85 \textwidth]{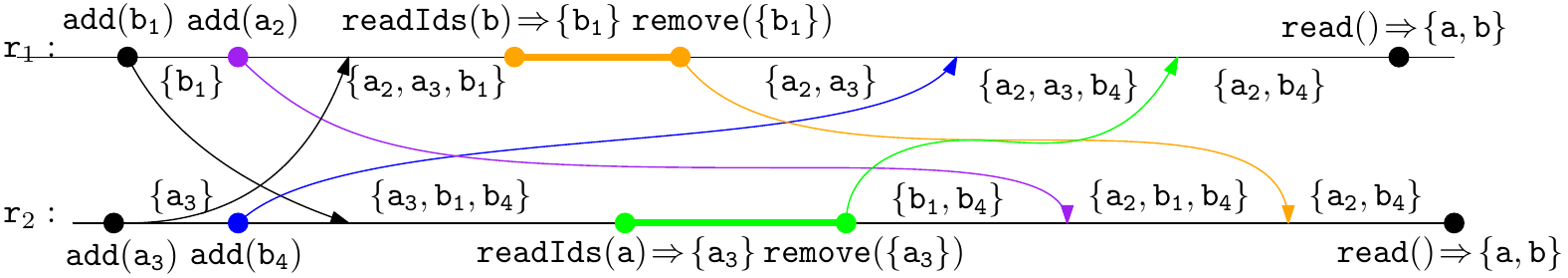}
    \caption{Label rewriting of an OR-Set execution. Pairs $(a,k)$ of an element $a$ and identifier $k$ are written as $a_k$.}
    \label{fig:or-set-lk-rem}
  \end{subfigure}
  \vspace{-2mm}
  \caption{OR-Set Linearizability vs. \CRDTLinshort{}.}
\end{figure*}

\section{\CRDTLin{}}
\label{sec:distributed-lin}

In this section we formalize the intuitions developed in \sectionautorefname~\ref{sec:overview}. We define the semantics of CRDT objects (\S~\ref{ssec:semantics}), specifications (\S~\ref{subsec:sequential specification}), and our notion of RA-linearizability (\S~\ref{subsec:definition of distributed linearizability}). For lack of space, our formalization focuses only on operation-based CRDTs. However, the notion of RA-linearizability applies to state-based CRDTs as well (see Section~\ref{sec:mec}).

\subsection{The Semantics of CRDT objects}\label{ssec:semantics}

\ifshort
To formalize the semantics of CRDT objects and our correctness
criterion we use several semantic domains defined
in \figureautorefname~\ref{fig:sem-dom}.
\else
To formalize the semantics of CRDT objects and our correctness
criterion we use several semantic domains summarized
in \figureautorefname~\ref{fig:sem-dom}.
We let $\aobj \in \objs$ be a CRDT object in the set of objects
$\objs$.
Similarly, $\arep \in \reps$ is a replica in the set of replicas
$\reps$.
We consider a set of method names $\amethod \in \methods$, and that
each method has a number of arguments and a return value sampled from
a data domain $\datadomain$.
We assume that the domain contains a special value $\bot \in
\datadomain$ used to represent the absence of a value (for instance
the return type of procedures). 
Furthermore, we ignore typing issues which should be
addressed by an underlying programming language.
Also, some methods, e.g., the method {\tt addAfter} of the RGA object, generate timestamps from a
totally-ordered domain $\timestampdomain$. 
\fi
We will use operation labels of the form
$\alabelobjind{\argv}{\retv}{i,\ats}$ to represent the call of a
method $\amethod \in \methods$ of object $\aobj\in \objs$, with
argument $\argv \in \datadomain$, resulting in the value $\retv \in
\datadomain$, and generating the timestamp $\ats$. 
Since there might be multiple calls to the same method with the same
arguments and result, labels are tagged with a unique identifier $i$.
We may omit the object $\aobj$, the identifier $i$, the timestamp $\ats$, or the return
value $\retv$ when they are not important.
The order relation on $\timestampdomain$ is denoted by $<$.
Abusing notations, we assume that the set $\timestampdomain$
contains a distinguished minimal element $\bot$ which we shall use for
operations that do not generate a timestamp such as the method {\tt remove}
of RGA.
The timestamp $\ats$ of a label $\alabel=\alabelobjind{\argv}{\retv}{i,\ats}$
is denoted $\tsof(\alabel)$.
\ifshort
\else
We will ignore identifiers when unambiguous.
\fi
The set of all operation labels is denoted by $\labels$.

Given a CRDT object $\aobj$, its semantics is defined as a labeled transition
system (LTS) $\llbracket \aobj \rrbracket =
(\globalstates,\acts,\aglobalstate_0,\rightarrow)$, where $\globalstates$ is a set of
global configurations, $\acts$ is the set of transition labels called \emph{actions},
$\aglobalstate_0$ is the initial configuration, and
$\rightarrow\subseteq \globalstates \times \acts\times \globalstates$ is the
transition relation.
\ifshort
\else
We use the action $\src{\arep}{\alabel}$ to label the generator of $\alabel$ when executed at replica $\arep$,
and $\dwn{\arep}{\alabel}$ to label the effector of $\alabel$ when executed at $\arep$.
For readability, we use $\aglobalstate \xrightarrow{\aact} \aglobalstate'$
to denote a transition $(\aglobalstate,\aact,\aglobalstate')\in\,\rightarrow$.
\fi

Our semantics assumes the following two
properties of the propagation of effectors:
\begin{inparaenum}[(i)]
\item the effector of each operation is applied exactly once at
  each replica, and
\item if the effector of operation $\alabel_1$ is applied at the
  origin replica of $\alabel_2$ before $\alabel_2$ happens, then for every
  replica $\arep$, the effector of $\alabel_2$ will be applied only
  after the effector of $\alabel_1$ has already been applied.
\end{inparaenum}
These are commonly referred to as \emph{causal delivery}.
We assume causal delivery because our formalization focuses on operation-based CRDTs.
However, the notion of RA-linearizability and the compositionality results in Section~\ref{sec:compositionality}
apply to state-based CRDTs as well, even if the network infrastructure doesn't satisfy causal delivery.

A global configuration $(\gstates, \avisord, \downstreams)$ is a
``snapshot'' of the system that records all the operations that have
been executed.
$\gstates \in [\reps \rightarrow \localstates]$~\footnote{We use $[A\rightarrow B]$ to denote the set of total functions from
  $A$ to $B$.} stores the local
configuration of each replica ($\localstates$ denotes the set of local configurations).
A local configuration $(\alabelset, \astate)$ contains the state
$\astate$ of a replica and the set $\alabelset$ of labels of
operations that originate at this replica, or whose
effectors have been executed (or applied) at this replica.
When $\alabel\in \alabelset$, we say that $\alabel$ is \emph{visible}
to the replica or that the replica \emph{sees} $\alabel$.
The set of replica states $\astate$ is denoted by $\Sigma$.
The relation $\avisord\subseteq \powerset{\labels \times \labels}$ is
the \emph{visibility} relation between operations, i.e.,
$(\alabel_1,\alabel_2)\in \avisord$, where $\alabel_2$ is an operation
originated at a replica $\arep$, if the effector of $\alabel_1$ was
executed at $\arep$ before $\alabel_2$ was executed.
When $(\alabel_1,\alabel_2)\in \avisord$, we say that $\alabel_1$ is
\emph{visible} to $\alabel_2$, or that $\alabel_2$ \emph{sees}
$\alabel_1$.
As it will be clear from the definition of the transition relation,
$\avisord$ is a \emph{strict partial order}. 
Finally, $\downstreams\in [\labels \rightarrow \Delta]$ associates to
each operation label $\alabel \in \alabelset$ an effector
$\effector\in [\states \rightarrow \states]$, which is the replica
state transformer generated when the operation was executed at the
origin replica ($\Delta$ denotes the set of effectors).
For some fixed initial replica state $\astate_0$, the initial global configuration is defined by $\aglobalstate_0 = (\gstates_0, \emptyset, \emptyset) \in \globalstates$, where $\gstates_0$ maps each replica $\arep$ into $(\emptyset, \astate_0)$.

\begin{figure}
  \centering
  \(
  \small
  \begin{array}{cc}
  \begin{array}[t]{rcll}
    \aobj & \in  & \objs & \text{CRDT Objects} \\
    \arep & \in & \reps & \text{Replicas} \\
    \amethod & \in & \methods & \text{Methods}\\
  \end{array}
    &
  \begin{array}[t]{rcll}
    \argv, \retv & \in & \datadomain & \text{Data} \\
    \ats & \in & \timestampdomain & \text{Timestamps} \\
    \alabelset & \subseteq & \labels & \text{Label Set}\\
  \end{array}
  \end{array}
  \)
  \centerline{
    \(
  \small
    \begin{array}[t]{cccc}
    \alabel \equiv \alabelobjind{\argv}{\retv}{i,\ats} & \in & \labels & \text{Operation Label}
    \end{array}
      \)
    } 
\vspace{-6mm}
  \caption{Semantic Domains.}
  \label{fig:sem-dom}
\vspace{-4mm}
\end{figure}

\begin{figure}
  \footnotesize
\[
  \inferrule
  {\text{\sc Operation}\ \hspace{30pt} \gstates(\arep) = (\alabelset,
    \astate) \\ \atsource(\sigma,\amethod,\argv) =
    (\retv,\effector,\ats) \\  \effector(\astate) = \astate' \\
    \alabel = \alabelobjind{\argv}{\retv}{(i,\ats)} \\
    \mathit{unique}(i) \hspace{30pt} \phantom{ }\\
  \ats\neq\bot\implies (\,\forall \alabel'\in\alabelset.\ \tsof(\alabel') < \ats\,) \\
  \forall \alabel'\in \labeldom{\avisord}.\ \tsof(\alabel') \neq \ats}
  {(\gstates, \avisord, \downstreams) \xrightarrow{\src{\arep}{\alabel}} (\gstates[\arep \leftarrow (\alabelset \cup \{\alabel\}, \astate')], 
    \avisord \cup (\alabelset \times \{\alabel\}),
    \downstreams[\alabel \leftarrow \effector])}
\]
\[
  \inferrule
  {
    \text{\sc Effector}\ \hspace{60pt}
    \gstates(\arep) = (\alabelset, \astate)  \hspace{30pt} \phantom{ }\\ \alabel \in \mathsf{min}_{\avisord}(\labeldom{\avisord} \setminus \alabelset) \\
    \downstreams(\alabel)= \delta \\ \delta(\astate) = \astate'}
  {(\gstates, \avisord, \downstreams) \xrightarrow{\dwn{\arep}{\alabel}} (\gstates[\arep \leftarrow (\alabelset \cup \{\alabel\}, \astate')], \avisord, \downstreams)}
\]
  \vspace{-3mm}
\caption{
  Operational Semantics of CRDTs.
  $C[a \leftarrow b]$ denotes the in-place update of
  element $a$ of the domain of $C$ with value $b$;
  $\mathit{unique}(i)$ to ensure that $i$ is a unique identifier;
  and $\labeldom{\avisord}=\{\alabel: \exists \alabel'.\
  (\alabel,\alabel')\in \avisord \lor (\alabel',\alabel)\in \avisord\}$.
}
\label{fig:crdt-opsem}
  \vspace{-3mm}
\end{figure}

The transition relation between global configurations is defined in
\figureautorefname~\ref{fig:crdt-opsem}.
The first rule describes a replica $\arep$ in state $\astate$
executing an invocation of method $\amethod$ with argument $\argv$.
We use a function $\atsource$ to represent the behavior of the generators of all methods collectively (the code under the \lstinline|generator| labels),
i.e., $\atsource(\sigma,\amethod,\argv)$ stands for applying the generator of $\amethod$ with argument $\argv$ on the
replica state $\sigma$. 
\ifshort
\else
Therefore, this transition applies the suitable generator,
which results in a
return value $\retv$, an effector state transformer $\effector$
to be applied on all replicas, and possibly, a timestamp $\ats$.
We have $\ats=\bot$ for methods that don't generate
timestamps. 
\fi
We assume that timestamps are consistent with the visibility relation
$\avisord$, i.e., the timestamp $\ats$ generated by $\atsource$ is
strictly larger than
all the timestamps of operations visible to $\arep$, and that each timestamp can be generated only once.
\ifshort
\else
The association between the label $\alabel$ corresponding to this
invocation and the effector $\effector$ is recorded in the
$\downstreams$ component of the new global configuration.
We say that the effector $\effector$ is \emph{produced} by the operation $\alabel$.
The local configuration $(\alabelset,\sigma)$ of $\arep$ is changed by
applying the effector $\effector$ on the state $\sigma$, resulting
in a new state $\sigma'$, and adding $\alabel$ to the set of labels
$\alabelset$.
Finally, the visibility relation $\avisord$ is changed to record the
fact that the effectors of all operations in $\alabelset$ have been
applied before $\alabel$. 
\fi
This transition is labeled by $\src{\arep}{\alabel}$ where
$\alabel$ is the label of this invocation. We may ignore the index $\arep$ when it is not important.

The second rule describes a replica $\arep$ in state $\astate$
executing the effector $\effector$ that corresponds to an operation
$\alabel$ originated in a different replica.
\ifshort
\else
\footnote{This rule
  implies that we could simplify the first rule by not performing the
  effector immediately, but in general we assume no interleavings of
  operations within a single replica.}
The rule requires that $\effector$ is an effector of a label that has
not yet been applied at $\arep$ (i.e., its corresponding label is not
in the $\alabelset$ component of $\arep$'s configuration) and
moreover, that it is a minimal one with respect to the order
$\avisord$ among such effectors, i.e., there exists no
$\alabel'\not\in \alabelset$ such that $(\alabel',\alabel)\in
\avisord$.
This transition results in
modifying the state of $\arep$ to $\effector(\sigma)$ and adding
$\alabel$ to the set of operations whose effectors have been
executed by $\arep$.
Note that these transition rules preserve the fact that $\avisord$
is a strict partial order.
\fi
This transition is labeled by $\dwn{\arep}{\alabel}$.

\ifshort
\else
In what follows it will be useful to distinguish query (or pure) methods
which do not modify the state from state-modifying methods, which we
shall call updates (or effectful).
\fi
We say that a method $\amethod\in\methods$ 
is a \emph{query} if it always results (by applying the generator) in an identity effector $\effector$ (i.e.
$\delta(\sigma)=\sigma$ for all replica states
$\sigma$).
We shall call an \emph{update} any method $\amethod$ which is not a
query -- that is, whose effectors are not the identity function -- and
whose resulting effector and return value do not depend on the
initial state $\astate$ of the origin replica.
\ifshort
\else
That is, its behavior is fully determined by its arguments.
\fi
More formally, assuming a functional equivalence relation $\equiv$
between effectors that relates any two effectors that have the same
effect (modulo the values of timestamps or unique identifiers)
$\amethod$ is called an update when
$\atsource(\sigma,\amethod,\argv)|_2 \equiv
\atsource(\sigma',\amethod,\argv)|_2$, for every $\argv\in\datadomain$
and two states $\sigma,\sigma'\in\Sigma$ (for a tuple $x$, $x|_k$
denotes the projection of $x$ on the $k$-th component).
A method $\amethod$ which is not a query nor an update is called a
\emph{query-update}.
\ifshort
\else
(It generates an effector which is not the
identity function, and whose effect depends on the local state of the
replica at which the invocation of $\amethod$ originated.)
\fi
For instance, the methods \lstinline|addAfter| and
\lstinline|remove| of RGA, and \lstinline|add| of OR-Set, are updates, 
the method \lstinline|remove| of OR-Set is a query-update, and the \lstinline|read| methods of both the RGA and the OR-Set
are queries.
We denote by $\queries$, $\updates$, and $\queryupdates$, the sets of
operation labels $\alabelobjind{\argv}{\retv}{i,\ats}$ where $\amethod$
is a query, an update, or query-update respectively.
\ifshort
\else
We shall call them query, update, and query-update labels,respectively. 
\fi

An execution of the object $\aobj$ is a sequence of transitions $\aglobalstate_0\xrightarrow{\aact_0}\aglobalstate_1\xrightarrow{\aact_1}\ldots$.
A \emph{trace} $\atrace$ is the sequence of actions $\aact_0 \cdot \aact_1\ldots$ labeling the transitions of an execution.
The set of traces of an object $\aobj$ is denoted by $\traces(\aobj)$.
A \emph{history} is a pair $(\alabelset,\avisord)$ where
$\avisord\subseteq \alabelset\times\alabelset$ is an acyclic relation over the set of labels $\alabelset$.
Given an execution $e$ ending in a global configuration $(\gstates,
\avisord, \downstreams)$, the \emph{history} of $e$, denoted by $\hist{e}$, is the pair
$(\labeldom{\avisord}, \avisord)$. Note that the relation $\avisord$ is a strict partial order in this case.
\ifshort
\else
We will later allow a more general notion of history in order to deal
with object compositions (see \sectionautorefname~\ref{sec:compositionality}).
\fi
Also, the history of a trace $\atrace$, denoted by $\hist{\atrace}$,
is the history of the execution that corresponds to $\atrace$.
The set of histories $\histories(\aobj)$ of an object $\aobj$ is the
set of histories $h$ of an execution $e$ of $\aobj$. A pictorial representation
of an execution (trace) can be found in \figureautorefname~\ref{fig:or-set-not-lin} while an example of a history
can be found in \figureautorefname~\ref{fig:rga-history}.

\subsection{Sequential Specifications}
\label{subsec:sequential specification}

RA-linearizability provides an explanation for concurrent executions of CRDT objects in the form of linearizations, which can be constrained using standard sequential specifications.

\begin{definition}[Sequential Specification]
  \label{definition:sequential specification} A \emph{sequential
  specification} (specification, for short) $\Spec$ is a set of tuples $(\alabelset, \aseqord)$, where
  $\alabelset$ is a set of labels and
  $\aseqord$ is a sequence including all the labels in $\alabelset$.
\end{definition}

To describe sequential specifications in a succinct way we will
provide an operational description.
To that end, we will associate to specifications a notion of abstract
state, which we shall generally denote by $\abstate$ and its domain
shall be denoted by $\abstates$.
Then, to each valid label $\alabel$ we will associate a transition
relation $\abstate \specarrow{\alabel}  \abstate'$ which, given an
abstract state $\abstate$ and
provided that the label $\alabel$ can be applied in $\abstate$, produces a new
abstract state $\abstate'$.
In the specific case where the label $\alabel$ assumes a certain
precondition $\apre$ over the initial abstract state $\abstate$ we
will use Hoare-style preconditions and write
\(\big(\abstate\ |\ \apre(\abstate)\big) \specarrow{\alabel}
  \abstate'\).
In this way, a sequential specification is the set
of labels that are accepted by the successive application of the
transition relation starting from some given initial state
$\abstate_{0}$.

\ifshort
\else
To illustrate the definition we provide the sequential specification
of a very simple counter object, as well as the RGA and OR-Set objects
described before.
\begin{example}[Sequential Specification of a Counter]
  \label{definition:sequential specification of counter} In this case
  the state domain is $\abstates = \mathbb{Z}$, that is the state will
  be an integer, and the transitions are given as follows:\\[3pt]
\centerline{
\(
\small
  k \specarrow{\alabellong[\mathsf{inc}]{}{}} k+1\qquad\qquad
  k \specarrow{\alabellong[\mathsf{dec}]{}{}} k-1\qquad\qquad
  k \specarrow{\alabellong[\mathsf{read}]{}{k}} k
\)
}
\end{example}
\fi

\begin{example}[Sequential Specification of RGA]
  \label{definition:sequential specification of rga}
  Each abstract state $\abstate = (l,T)$ contains a sequence $l$ of
  elements of a given type and a set $T$ of elements appearing in the
  list.
  The element $l$ is the list of all input values, whether already
  removed or not; while $T$ stores the removed values and is used as
  \emph{tombstone set}.
  The sequential specification $\specRGA$ of list with add-after interface is
  defined by:
  \vspace{-1mm}
  \[\small
    \begin{array}{rcl}
      \big(\ (l_1 \cdot b \cdot l_2,T\big)\ |\ a\text{ fresh}\ \big)
      & \specarrow{\alabelshort[\mathtt{addAfter}]{b,a}}
      & (l_1 \cdot b \cdot a \cdot l_2,T)\\
      \big(\ (l,T)\ |\ b \in l\ \text{and}\ b \neq \circ\big)
      & \specarrow{\alabelshort[\mathtt{remove}]{b}}
      & (l,T \cup \{b\})\\
      (l,T)
      & \specarrow{\alabellong[\mathtt{read}]{}{(l/T)}{}}
      & (l,T)\\
 \end{array}
  \vspace{-1mm}
\]
where we denote by $l/T$ the list resulting from removing all elements
of $T$ from $l$.
The method $\alabelshort[\mathtt{addAfter}]{b,a}$ puts $a$ immediately
  after $b$ in $l$, assuming that each value is put into list at
  most once.
  \ifshort
  Method $\alabelshort[\mathtt{remove}]{b}$ adds $b$ into $T$.
  \else
  Method $\alabelshort[\mathtt{remove}]{b}$ adds $b$ into $T$,
  hence removing $b$ from the list for subsequent calls to the
  $\mathtt{read}$ method.
  \fi
  Finally $\alabellong[\mathtt{read}]{}{s}{}$ returns the list content
  excluding any element appearing in $T$.
  Assume that the initial value of list is $(\circ,\emptyset)$, and
  $\circ$ is never removed.
  We will sometimes ignore the value $\circ$ from the return of
  $\ensuremath{\tt read}$.
\end{example}

\begin{example}[Sequential Specification of OR-Set]
\label{definition:sequential specification of or-set}
As explained in \figureautorefname~\ref{fig:or-set-lk-rem}, the fact
that the OR-Set $\mathtt{remove}$ method is a query-update induces a rewriting of the operation labels in a
history. This rewriting introduces update operations $\alabelshort[{\tt add}]{\mathit{a,id}}$, for some identifier $\mathit{id}$, instead of simply $\alabelshort[{\tt add}]{\mathit{a}}$,
and $\alabelshort[\mathtt{remove}]{S}$, for some set $S$ of pairs element-identifier, instead of $\alabelshort[{\tt remove}]{\mathit{a}}$,
and a new query operation $\alabelshort[{\tt readIds}]{\mathit{a}}$ that returns a set of pairs element-identifier. These operations are specified as follows.
The abstract state $\abstate$ is a set of tuples $(a,id)$, where $a$
is a data and $id$ is a identifier. The sequential specification
$\specOrSet$ of OR-Set is given by the transitions:\\[2pt]
\(
\small
\begin{array}{rcll}
  \abstate
  & \kern-5pt \specarrow{\alabellong[\mathtt{readIds}]{a}{S}{}} \kern-5pt
  & \abstate & 
  \kern-40pt [S = \{ (a,id)\ \vert\ (a,id) \in \abstate\}]\\
  \abstate &
             \specarrow{\alabelshort[\mathtt{remove}]{S}}
  & \abstate \setminus S \\[3pt]
  (\ \abstate\ |\ (a, \mathit{id})\ \not\in \abstate\ )
  & \specarrow{ \alabelshort[{\tt
    add}]{\mathit{a,id}} } \kern-5pt
  & \abstate \cup \{ (a,\mathit{id}) \}\\[5pt]
  \abstate
  & \specarrow{\alabellong[\mathtt{read}]{a}{ A }{}}
  & \abstate
             & 
               \kern-40pt [A = \{ a\ \vert\ \exists\ \mathit{id}, (a,\mathit{id}) \in \abstate \}]
\end{array}
\)\\[3pt]
Here $\alabellong[{\tt readIds}]{\mathit{a}}{\mathit{S}}{}$ returns the set of pairs
with data $a$, $\alabelshort[{\tt remove}]{\mathit{S}}$ removes $S$ from the
abstract state, $\alabelshort[{\tt add}]{\mathit{a,id}}$ puts $\{ (a,id) \}$
into the abstract state, and $\alabellong[{\tt read}]{}{\mathit{A}}{}$ returns
the value of the OR-Set.
\end{example}

\ifshort
\else
This definition of specification of an object will be extended to a
set of objects in \sectionautorefname~\ref{sec:compositionality}.
Another important aspect of specifications is whether they are
deterministic or not.
For instance the Wooki CRDT~\cite{DBLP:conf/wise/WeissUM07} is a
list-like object that provides a method $\mathtt{addBetween(a, b, c)}$
which inserts $\mathtt{b}$ between $\mathtt{a}$ and $\mathtt{c}$.
In contrast with RGA, where the method $\mathtt{addAfter(a, b)}$
adds the element $\mathtt{b}$ immediately after $\mathtt{a}$, in Wooki
there are many possible positions where to insert $\mathtt{b}$ if $\mathtt{a}$ and $\mathtt{c}$ are not
adjacent.
In this case, to allow for any deterministic resolution mechanism our
specifications shall be non-deterministic.
This non-determinism in the specification has to be deterministically
resolved by the implementations to ensure convergence.
\fi

\subsection{Definition of \CRDTLin{}}
\label{subsec:definition of distributed linearizability}

We now provide the definition of \crdtlin{} which characterizes histories of CRDT objects.
To simplify the presentation, we consider first the case where all the labels in the history are
either queries or updates (query-updates are considered later).
The intuition of \crdtlin{} is that there is a \emph{global} sequence
(or linearization) of the update operations in an execution which can
produce the state of \emph{each} replica when \emph{all} the updates are visible to them.
\ifshort
Each
\else
In intermediate steps, any replica state should be the result of applying a sub-sequence of updates
of this global sequence. This is because replicas may see a subset of the updates
performed up to some moment.
Therefore, each
\fi
 query should be justified by considering the
sub-sequence of the global sequence restricted to the updates that are
visible to that query.
To be precise:
\begin{definition}
  \label{definition:ralinearizability1} A history $h =
  (\alabelset,\avisord)$ with $\alabelset\subseteq \queries\uplus\updates$ is \crdtlinearizable{} w.r.t. a
   sequential specification
  \Spec{}, if there exists a sequence
  $(\alabelset, \aseqord)$ 
\ifshort
\else
  -- where we remark that the set of labels
  are identical -- 
  \fi
  such that: 
  \begin{enumerate}[(i)]
  \item \aseqord{} is consistent with  \avisord{}, that is: $\avisord
    \cup \aseqord$ is acyclic,
  \item the projection of $\aseqord$ to \emph{updates} is
    admitted by $\Spec$, i.e.
    $\aseqord\!\downarrow_{\updates} \in \Spec$, where we denote by
    $\aseqord\downarrow_{S}$ the restriction of the order $\aseqord$ to
    the set $S$, and
  \item\label{it:query} for each query $\alabel_{\mathsf{qr}}\in \alabelset$, the sub-sequence of updates visible to $\alabel_{\mathsf{qr}}$ together with $\alabel_{\mathsf{qr}}$ is itself admitted by $\Spec$, i.e., $\aseqord\!\downarrow_{\avisord^{-1}(\alabel_{\mathsf{qr}})\cap \updates}\!\cdot\
    \alabel_{\mathsf{qr}} \in \Spec$.
\end{enumerate}
We say that $(\alabelset, \aseqord)$ is an \emph{\crdtlinearization{}} of $h$ w.r.t. $\Spec{}$.
\end{definition}

\ifshort
\else
In a nutshell, this definition requires that for a given
history, there exists a specification sequence such
that
\begin{inparaenum}[(i)]
\item the set of labels are the same and the order in the sequence is
  consistent with the visibility order of the history, that
\item when restricted to update operations -- that is all the updates --, the sequence belongs to
  the specification, and that
\item every query operation can be justified by the specification based only
  on the updates that precede it in the sequence and that are visible
  to it.
\end{inparaenum}
With this definition in mind it is not hard to check that the
\fi
\ifshort
The
\else
\fi
sequences of operations provided in
\sectionautorefname~\ref{sec:rga-crdt-impl} and~\ref{sec:or-set-crdt}
are \crdtlinearization{}s. 
\ifshort
\else
Another example constructing a linearization step by step is shown
under the label ``timestamp-order linearizations''
of~\figureautorefname~\ref{fig:a history of RGA and its
  RA-linearization}.
\fi

We now consider the case where histories include query-updates.
In such case, we apply
Definition~\ref{definition:ralinearizability1} on a rewriting of the 
original history where each query-update is decomposed into a label
representing the generator and another label representing the effector.
\ifshort
\else
As shown in \figureautorefname~\ref{fig:or-set-lk-rem} this rewriting
may introduce new labels of operations that have been added to the
specification of the data type to provide specifications with no
query-update operations.
\fi
A mapping $\gamma:\labels\rightarrow \labels^{\leq 2}$, where $\labels^{\leq 2}$ is the set of labels and pairs of labels in $\labels$, is called a \emph{query-update rewriting}.
We assume that every query or update label is mapped by $\gamma$ to a
singleton and that the $\gamma$ image of such a label preserves its
status, i.e., $\gamma(\alabel)$ is a query, resp., update, whenever
$\alabel$ is a query, resp., update. Also, query-updates labels
$\alabel$ are mapped to pairs $\gamma(\alabel)=(\alabel_1,\alabel_2)$
where $\alabel_1$ is a query and $\alabel_2$ is an update. These
assumptions are important when applying Definition~\ref{definition:ralinearizability1} on the rewriting of a history, since this definition relies on a partitioning of the labels into queries and updates.
For a history $h=(\alabelset,\avisord)$, its $\gamma$-rewriting is a
history $\gamma(h)=(\alabelset',\avisord')$ where
\begin{itemize}
\item $\alabelset'$ is obtained by replacing each label $\alabel$ in
  $\alabelset$ with $\gamma(\alabel)$ (a label may be replaced by two
  labels),
\item whenever a (query-update) label $\alabel$ is mapped by $\gamma$
  to a pair $(\alabel_1,\alabel_2)$, we have that the query is
  ordered before the update, formally $(\alabel_1,\alabel_2)\in \avisord'$,
\item $\avisord'$ preserves the order between labels which are
  mapped to singletons, and
  for any query-update label $\alabel$ mapped to a pair
 $(\alabel_1,\alabel_2)$, the query $\alabel_1$ sees exactly the same
 set of operations as $\alabel$ and any operation which saw $\alabel$
 must see $\alabel_2$.
 Formally, whenever $(\alabel,\alabel')\in\avisord$ we have that
 $(\secondrep(\gamma(\alabel)),
 \firstrep(\gamma(\alabel')))\in\avisord'$, where for a label $\alabel$,
 $\firstrep(\gamma(\alabel))$ (resp., $\secondrep(\gamma(\alabel))$), is
 $\gamma(\alabel)$ when $\gamma(\alabel)$ is a singleton, or its first (resp.,
 second) component when $\gamma(\alabel)$ is a pair.
\end{itemize}

\begin{example}[Query-Update Rewriting of OR-Set]
\label{ex:qur-orset}
As shown in \figureautorefname~\ref{fig:or-set-lk-rem}, the query-update rewriting for OR-Set 
is defined by:  $\gamma( \alabellong[\mathtt{add}]{a}{k}) = \alabelshort[\mathtt{add}]{a,k}$, 
$\gamma( \alabellongind[{\tt read}]{}{A}{} ) = \alabellongind[{\tt read}]{}{A}{}$, and
$\gamma( \alabellong[\mathtt{remove}]{a}{R} ) = (\alabellong[\mathtt{readIds}]{a}{R}{}, \alabelshort[\mathtt{remove}]{R})$.
\end{example}

The following
extends Definition~\ref{definition:ralinearizability1} to arbitrary histories
using the rewriting defined above.

\begin{definition}[\CRDTLin{}]
  \label{definition:distributed linearizability} A history $h =
  (\alabelset,\avisord)$ is \crdtlinearizable{} w.r.t.
  \Spec{}, if there exists a query-update rewriting $\gamma$ s.t. $\gamma(h)$ is \crdtlinearizable{} w.r.t. \Spec{}.
  An RA-linearization w.r.t. $\Spec{}$ of $\gamma(\ahist)$ is called an RA-linearization w.r.t. $\Spec{}$ and $\gamma$ of $\ahist$.
\end{definition}

A set $H$ of histories is called \crdtlinearizable{} w.r.t. 
$\Spec$ when each $h\in H$ is
\crdtlinearizable{} w.r.t.
$\Spec$.
A data type implementation is \crdtlinearizable{} w.r.t.
$\Spec$ if for any object $\aobj$ of the data type, 
$\histories(\aobj)$ is linearizable w.r.t. $\Spec$.

\paragraph{Reasoning with specifications.}

To illustrate the benefit of using \CRDTLinshort{} let us consider a
simple system where two replicas execute a sequence of operations
on a shared OR-Set object:
\\[2pt]
\centerline{
  \footnotesize
\(\mathtt{add(a); rem(a); X = read()\ \|\ add(a); Y = read()}\)
}\\[2pt]
We are interested in checking that the following post-condition holds after the execution
of these operations:\\[2pt]
\centerline{
  \footnotesize
  \(\mathtt{a} \in \mathtt{X} \Rightarrow \mathtt{a} \in \mathtt{Y}\)
}\\[2pt]
Rewriting the program according to the specification of OR-Set discussed
before, we obtain the following, where the variable $\mathsf{R}$
represents the set of value timestamp pairs observed by the
$\mathsf{readIds}$ operation as defined by the rewriting:\\[2pt]
\centerline{
  \footnotesize
  \(\begin{array}{lcl}
      \left[\begin{array}{l}
              \mathtt{add(a,i_1);}\\
              \mathtt{readIds(a)\Rightarrow R;}\\
              \mathtt{rem(R);}\\
              \mathtt{X = read();}\\
              {\color{blue} \{\mathtt{a} \in \mathtt{X} \Rightarrow (a,i_2) \notin \mathtt{R} \}}
            \end{array}\right]
      & \Big\| & 
                 \left[
                 \begin{array}{l}
                   \mathtt{add(a,i_2);} \\
                   \mathtt{Y = read();} \\
                   {\color{blue} \{(a,i_2) \notin \mathtt{R} \Rightarrow \mathtt{a} \in \mathtt{Y}\}}
                 \end{array}\right]\\
      \multicolumn{3}{c}{\mathsf{Post\text{-}condition}:\color{blue} \{\mathtt{a} \in
      \mathtt{X} \Rightarrow \mathtt{a} \in \mathtt{Y}\}}\\[1mm]
    \end{array}
    \)
}
\noindent 
Since OR-Set is RA-linearizable w.r.t. the specification in
Example~\ref{definition:sequential specification of or-set} (proved in
Section~\ref{subsec:time order of execution as linearization}), the
possible values of ${\tt X}$ and ${\tt Y}$ can be computed by
enumerating their RA-linearizations.
The post-condition follows from the conjunction of the assertions in
each replica.
Let us consider the validation of the assertion of right hand side
with the following RA-linearization:
\begin{center}
  \includegraphics[width=0.45\textwidth]{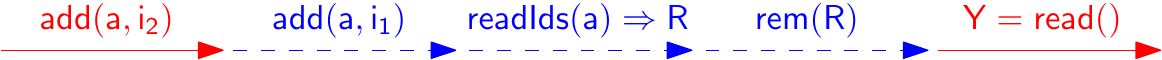}
\end{center}
We have in red color and with solid arrows the operations of the right
hand side replica, and in blue with dashed arrows the left ones.
Let us consider the sub-sequence of the linearization that is visible
to the last operation ($\ensuremath{\mathsf{Y = read()}}$).
Since the first operation ($\ensuremath{\mathsf{add(a, i_2)}}$)
is issued on the same replica, it must be visible to it.
Let us now consider different cases for the operations of the other
replica that are visible to the read:
\begin{inparaenum}[(a)]
\item if the remove operation $\ensuremath{\mathsf{rem(R)}}$ is not
  visible to it, then the assertion is trivially true, because
  $\ensuremath{\mathsf{(a, i_2)}}$ is in the resulting set according
  to the specification, and therefore the consequent of the assertion
  is valid. Assume from now on that $\ensuremath{\mathsf{rem(R)}}$ is
  visible to it, there are two cases
\item if $\ensuremath{\mathsf{(a, i_2)}}$ does not belong
  to $\ensuremath{\mathsf{R}}$ the consequent of the assertion is
  valid, since the addition of $\ensuremath{\mathsf{(a, i_2)}}$ is
  necessarily visible to the read operation, and we conclude as before, 
\item on the other hand, if $\ensuremath{(a, i_2) \in \mathsf{R}}$ we
  have that the antecedent of the implication is falsified, and
  therefore the assertion is also valid. 
\end{inparaenum}

Here we have considered only one RA-linearization, but it is not hard to
see that commuting the operations of the different replicas renders
the same argument.  
Importantly, this reasoning was done entirely at the level of the
RA-linearizations (i.e. the specification) of the data type.

For the assertion on the left hand side replica, since
visibility includes the order between operations issued on the same replica, we get that $\mathsf{add(a,i_1)}$ is ordered before $\mathsf{readIds(a)\Rightarrow R}$ in 
every RA-linearization. Since $\mathsf{add(a,i_1)}$ is also visible to $\mathsf{readIds(a)\Rightarrow R}$, we get that $(a,i_1) \in \mathsf{R}$.
Similarly, every RA-linearization will order $\mathsf{rem(R)}$ before the $\mathsf{read()}$ on the left replica, which implies that if $\mathsf{a} \in \mathsf{X}$, then
$\mathsf{(a,i_2) \notin \mathsf{R}}$. Assuming the contrary, i.e., $\mathsf{(a,i_2) \in \mathsf{R}}$, implies that $\mathsf{R=\{(a,i_1),(a,i_2)}\}$ and since $\mathsf{rem(R)}$  
is visible and linearized before $\mathsf{X = read()}$, we get that $\mathsf{a} \not\in \mathsf{X}$.

\forget{
\fxfatal[nomargin, inline]{Complete the others!}
\begin{example}[sequential specification of multi-value register]
\label{def:spec-MVR}
The sequential specification $\mathit{MVReg}_s$ of multi-value
register is given as follows: Let $\mathit{state}$ be a set and each
its element $(a,\mathit{id},f)$ is a tuple of a data $a$, an
identifier $\mathit{id} \in \mathbb{O}$, and a flag $f \in \{
\mathit{true},\mathit{false} \}$.
\begin{itemize}
\setlength{\itemsep}{0.5pt}
\item[-] $\{ \mathit{state} = S \}$ $(write(a),\mathit{id},S_1)$ $\{
  \mathit{state} = S[(b,\mathit{id}_1) \in S_2 : \mathit{false}] \cup
  \{ (a,id,\mathit{true}) \} \}$. Here $S_2 = \{ (b,\mathit{id}_1)
  \vert (b,\mathit{id}_1,\mathit{true}) \in S \wedge id \in S_1 \}$.
\item[-] $\{ \mathit{state} = S \wedge S_1 = \{ a \vert
  (a,\_,\mathit{true}) \in S \} \}$ $read() \Rightarrow S_1$ $\{
  \mathit{state} = S \}$.
\end{itemize}
\end{example}

\begin{example}[set and its sequential specification]
\label{definition:sequential specification of set}
A set has three methods: $\mathit{add}(a)$ inserts item $a$ into set;
$\mathit{rem}(a)$ removes $a$ from set; and $\mathit{read}$ returns
the set content.
Here we assume that $\mathit{rem}$ will also be checked for
visibility.
The sequential specification $\mathit{set}_s$ of set is given as
follows:  Let $\mathit{state}$ be a set and each its element $(a,f)$
is a tuple of a data $a$ and a flag $f \in \{ \mathit{true},\mathit{false} \}$.
\begin{itemize}
\setlength{\itemsep}{0.5pt}
\item[-] $\{ \mathit{state} = S \wedge (a,\_) \notin S \}$
  $(\mathit{add}(a),\mathit{id})$ $\{ \mathit{state} = S \cup \{
  (a,\mathit{true}) \} \}$.
\item[-] $\{ \mathit{state} = S \wedge (a,\_) \in S \}$
  $(\mathit{add}(a),\mathit{id})$ $\{ \mathit{state} = S \}$.
\item[-] $\{ \mathit{state} = S \wedge (a,\_) \in S \}$
  $(\mathit{rem}(a))$ $\{ \mathit{state} = S[a: \mathit{false} \}$.
\item[-] $\{ \mathit{state} = S \wedge S_1 = \{a \vert
  (a,\mathit{true}) \in S \} \}$ $(\mathit{read}() \Rightarrow S_1)$
  $\{ \mathit{state} = S \}$.
\end{itemize}
\end{example}

\begin{example}[OR-set and its sequential specification]
\label{def:specification-ORS}
OR-set is essential a multi-set: $\mathit{add}(a)$ inserts an item $a$
into multi-set; $\mathit{rem}(a)$ cancels only items $a$ that are
inserted by $\mathit{add}(a)$ operations visible to this remove
operation; $\mathit{read}$ returns the set of items in multi-set. A
value can be inserted multiple times.

The sequential specification $\mathit{OR}$-$\mathit{set}_s$ of OR-set
is given as follows: Let $\mathit{state}$ be a set and each its
element $(a,\mathit{id},f)$ is a tuple of a data $a$, an identifier
$\mathit{id}$, and a flag $f \in \{ \mathit{true},\mathit{false} \}$.
\begin{itemize}
\setlength{\itemsep}{0.5pt}
\item[-] $\{ \mathit{state} = S  \wedge (\_,\mathit{id},\_) \notin S
  \}$ $(\mathit{add}(a),\mathit{id})$ $\{ \mathit{state} = S \cup \{
  (a,\mathit{id},\mathit{true}) \} \}$.
\item[-] $\{ \mathit{state} = S \wedge S_1 = \{ a \vert
  (a,\_,\mathit{true}) \in S \} \}$ $(\mathit{read}() \Rightarrow
  S_1)$ $\{ \mathit{state} = S \}$.
\item[-] $\{ \mathit{state} = S \wedge (a,\_,\_) \in S\}$
  $(rem(a),S_1)$ $\{ \mathit{state} = S[(a,\mathit{id}_1) \in S_2 :
  \mathit{false}] \}$. Here $S_2 = \{ (a,\mathit{id}_1) \vert
  (a,\mathit{id}_1,\mathit{true}) \in S \wedge id \in S_1 \}$.
\end{itemize}
\end{example}

\begin{example}[register and its sequential specification]
\label{def:spec-register}
A register has two methods: $\mathit{write}(a)$ writes $a$ into
register; $\mathit{read}$ returns the value of register. The
sequential specification $\mathit{reg}_s$ of register is given as
follows: Let $\mathit{state} \in \mathbb{D}$ be a value.
\begin{itemize}
\setlength{\itemsep}{0.5pt}
\item[-] $\{ \mathit{state} = a  \}$ $\mathit{write}(b)$ $\{
  \mathit{state} = b \}$.
\item[-] $\{ \mathit{state} = a \}$ $(\mathit{read}() \Rightarrow a)$
  $\{ \mathit{state} = a \}$.
\end{itemize}
\end{example}

\begin{example}[List with add-after interface]
\label{definition:spec-list-add-after}
Assume each item of the list is unique. A list has three methods:
$\mathit{add}(b,a)$ inserts item $b$ into the list at the position
immediately after that of item $a$; $\mathit{rem}(a)$ removes item $a$
from the list; and $\mathit{read}$ returns the list content. We assume
that the initial value of list is $(\circ,\mathit{true})$ and this
node can not be removed. We use the word ``add-after'' to emphasize
the method $\mathit{add}(b,a)$, which is different from the other list
interface that uses method $\mathit{add}(b,a,c)$.

The sequential specification $\mathit{list}_s^{\mathit{af}}$ of list
is given as follows: Let $\mathit{state}$ be a sequence, where each
item is a tuple $(a,\mathit{id},f)$ with data $a$, identifier
$\mathit{id}$ and flag $f \in \{ \mathit{true},\mathit{false} \}$.
Here $\mathit{af}$ represents add-after, and we use $l \uparrow_{S}$
to represent the projection of sequence $l$ into set $S$.
\begin{itemize}
\setlength{\itemsep}{0.5pt}
\item[-] $\{ \mathit{state} = (a_1,\mathit{id}_1,f_1) \cdot \ldots
  \cdot (a_n,\mathit{id}_n,f_n) \wedge k \leq n \wedge b \notin \{
  a_1, \ldots, a_n \} \wedge \mathit{id}_k \in S_1 \}$
  $(add(b,a_k),\mathit{id},S_1)$ $\{ \mathit{state} =
  (a_1,\mathit{id}_1,f_1) \cdot \ldots \cdot (a_k,\mathit{id}_k,f_k)
  \cdot (b,\mathit{id},\mathit{true}) \cdot
  (a_{k+1},\mathit{id}_{k+1},f_{k+1}) \cdot \ldots \cdot
  (a_n,\mathit{id}_n,f_n) \}$.
\item[-] $\{ \mathit{state} = (a_1,\mathit{id}_1,f_1) \cdot \ldots
  \cdot (a_n,\mathit{id}_n,f_n) \wedge k \leq n \wedge \mathit{id}_k
  \in S_1 \}$ $(rem(a_k),S_1)$ $\{ \mathit{state} =
  (a_1,\mathit{id}_1,f_1) \cdot \ldots \cdot
  (a_k,\mathit{id}_k,\mathit{false}) \cdot \ldots \cdot
  (a_n,\mathit{id}_n,f_n) \}$.
\item[-] $\{ \mathit{state} = (a_1,\mathit{id}_1,f_1) \cdot \ldots
  \cdot (a_n,\mathit{id}_n,f_n) \wedge S = \{ a \vert
  (a,\_,\mathit{true}) \in \mathit{state} \} \wedge l = a_1 \cdot
  \ldots \cdot a_n \uparrow_{S} \}$ $(read() \Rightarrow l)$ $\{
  \mathit{state} = (a_1,\mathit{id}_1,f_1) \cdot \ldots \cdot
  (a_n,\mathit{id}_n,f_n) \}$.
\end{itemize}
When the context is clear, in $\mathit{read}$ operation, we will omit
$\circ$.
\end{example}
}

\section{Proving \CRDTLin{}}
\label{sec:proofs}

We describe a methodology for proving that CRDT objects are
\crdtlinearizable{} which relies on two properties: 
(1) the effectors of any two concurrent operations (i.e., not visible to each other) commute, 
which is inherent to CRDT objects, and 
(2) the existence of a refinement mapping~\cite{AbadiL91,DBLP:journals/iandc/LynchV95} 
showing that each effector produced by an operation
  $\alabel$, respectively each query $\alabel$, is simulated by the
  execution of $\alabel$ (or its counterpart through a query-update rewriting $\gamma$) 
  in the specification $\Spec$. This methodology is used in two forms depending on how the linearization
  is defined along an execution, which may affect the precise definition of the refinement mapping. 
  \ifshort
  \else
  We illustrate these two variations using OR-Set and RGA as examples.
  \fi

\subsection{Execution-Order Linearizations}
\label{subsec:time order of execution as linearization}

We first consider the case of CRDT objects, e.g., OR-Set, for which the order in which operations are executed at the origin replica 
defines a valid RA-linearization. We say that such objects admit \emph{execution-order linearizations}. We start by formalizing the two properties we use to prove RA-linearizability.

Given a history $\ahist{}=(\alabelset,\avisord)$, we say that two operations $\alabel_1$ and $\alabel_2$ are \emph{concurrent}, denoted $\alabel_1 \bowtie_{\avisord}\alabel_2$, when $(\alabel_1,\alabel_2)\not\in\avisord$ and $(\alabel_2,\alabel_1)\not\in\avisord$. In general, CRDTs implicitly require that the effectors of concurrent operations commute:
\begin{itemize}
\item[$\mathsf{Commutativity}$:] for every trace $\atrace$ with $\hist{\atrace}=(\alabelset,\avisord)$, and every two operations $\alabel_1,\alabel_2\in \alabelset$, if $\alabel_1 \bowtie_{\avisord}\alabel_2$, then
\begin{align*}
\forall \sigma\in\Sigma.\ \delta_{\alabel_1}(\delta_{\alabel_2}(\sigma))=\delta_{\alabel_2}(\delta_{\alabel_1}(\sigma))
\end{align*}
where $\delta_{\alabel_1}$ and $\delta_{\alabel_2}$ are the effectors of $\alabel_1$ and resp., $\alabel_2$.
\end{itemize}

\begin{example}
For OR-Set, two {\tt add}, resp., {\tt remove}, effectors commute because they both add, resp., remove, element-id pairs, while an {\tt add} and a {\tt remove} effector commute when they are concurrent because the element-id pairs removed by the {\tt remove} effector are different from the pair added by the {\tt add} effector (since the {\tt add} is not visible to {\tt remove}).
\end{example}

$\mathsf{Commutativity}$ implies that for every linearization $\alinord$ of the operations in an execution, which is consistent with the visibility relation, every replica state $\sigma$ in that execution can be obtained by applying the delivered effectors in the order defined by $\alinord$ (between the operations corresponding to those effectors). Indeed, by the causal delivery assumption, the order in which effectors are applied at a given replica is also consistent with visibility. Therefore, the only differences between the order in which effectors were applied to obtain $\sigma$ in that execution and the linearization order $\alinord$ involve effectors of concurrent operations, which commute.

\begin{lemma}\label{lem:replica_states}
Let $\rho$ be an execution of an object $\aobj$ satisfying $\mathsf{Commutativity}$, $\ahist=(\alabelset,\avisord)$ the history of $\rho$, and $(\alabelset, \aseqord)$ a linearization of the operations in $\alabelset$ (possibly, rewritten using a query-update rewriting $\gamma$), consistent with $\avisord$. 
For each local configuration $(\alabelset_\arep,\astate_\arep)$ in $\rho$, 
\begin{align*}
\sigma_\arep = \delta_{\alabel_n}(\ldots(\delta_{\alabel_1}(\sigma_0))\ldots)
\end{align*}
where $\delta_\alabel$ denotes the effector of operation $\alabel$, $\sigma_0$ is the initial replica state, and $\aseqord\downarrow_{\alabelset_\arep}=\alabel_1\ldots \alabel_n$.
\end{lemma}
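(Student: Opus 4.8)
The plan is to prove the identity by relating the order in which effectors are actually applied at replica $\arep$ during $\rho$ to the global linearization order $\aseqord$, using the fact (already hinted at just before the lemma) that two linear extensions of a finite partial order are connected by a sequence of transpositions of adjacent incomparable elements. First I would dispose of query-updates: if $\ahist$ contains query-update labels, replace $\rho$ by its $\gamma$-rewriting, in which every query-update label $\alabel$ is split into a query part with the identity effector and an update part carrying exactly the effector $\delta_\alabel$ that $\alabel$ produced in the original execution; since inserting identity effectors into a composition does not change its value, the claim for the rewritten execution yields the claim as stated. Hence I may assume $\alabelset \subseteq \queries \uplus \updates$, and --- queries also having identity effectors --- it suffices to prove the equality with $\aseqord$ and $\alabelset_\arep$ restricted to updates.

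The core then rests on two ingredients. (1) Let $\pi_\arep$ be the order in which the effectors of the operations in $\alabelset_\arep$ were applied at $\arep$ along $\rho$ (via an Operation transition if $\arep$ is the origin, via an Effector transition otherwise). Directly from the operational semantics, $\sigma_\arep$ is obtained by applying these effectors to $\sigma_0$ in the order $\pi_\arep$, since no other transition alters $\arep$'s state. Moreover $\pi_\arep$ is a linear extension of $\avisord\downarrow_{\alabelset_\arep}$: an operation generated at $\arep$ is added to $\arep$'s label set together with the visibility edges from everything present there (Operation rule), and the side-condition $\alabel \in \mathsf{min}_{\avisord}(\labeldom{\avisord}\setminus\alabelset)$ of the Effector rule --- which encodes causal delivery --- guarantees that no delivered effector is applied at $\arep$ before the effectors of all of its $\avisord$-predecessors. (2) By hypothesis $\aseqord\downarrow_{\alabelset_\arep}$ is also a linear extension of $\avisord\downarrow_{\alabelset_\arep}$. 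Any two linear extensions of a finite poset differ by a finite sequence of swaps of adjacent, poset-incomparable elements; incomparability in $\avisord$ is exactly concurrency ($\bowtie_{\avisord}$), so $\mathsf{Commutativity}$ --- stated pointwise over all states --- permits each swap to be carried out inside the effector composition without changing the resulting state. An induction on the number of swaps transforms the $\pi_\arep$-ordered composition applied to $\sigma_0$ into $\delta_{\alabel_n}(\ldots(\delta_{\alabel_1}(\sigma_0))\ldots)$ with $\aseqord\downarrow_{\alabelset_\arep}=\alabel_1\cdots\alabel_n$, which is the statement.

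The hard part will be ingredient (1), and in particular showing that $\pi_\arep$ never precedes an operation before one of its $\avisord$-predecessors: for operations originating at $\arep$ this is immediate from the Operation rule (and from the fact that $\avisord$ never acquires new predecessors for an already-present label), but for effectors delivered from elsewhere one must unfold the minimality side-condition of the Effector rule and invoke the causal-delivery assumptions to conclude that all $\avisord$-predecessors of the delivered $\alabel$ already lie in $\arep$'s label set when $\alabel$ is applied. One must also take care that the query-update rewriting is applied uniformly, so that $\alabelset_\arep$ and $\aseqord$ refer to the same (rewritten) label set; once this is in place, the transposition argument together with $\mathsf{Commutativity}$ is routine. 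Equivalently, the whole proof can be run as a direct induction on the length of $\rho$, the only non-trivial case being an Effector transition delivering $\alabel$ to $\arep$, where one shows $\alabel$ is concurrent with every operation of $\alabelset_\arep$ that $\aseqord$ orders after it and then commutes $\delta_\alabel$ into its $\aseqord$-position.
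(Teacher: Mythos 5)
Your proof is correct and follows essentially the same route as the paper's own justification: the paper argues (in the paragraph preceding the lemma) that by causal delivery the order in which effectors are applied at a replica is itself consistent with visibility, so it differs from $\aseqord\!\downarrow_{\alabelset_\arep}$ only by transpositions of adjacent concurrent effectors, which $\mathsf{Commutativity}$ allows. Your two ingredients (the applied-order being a linear extension of $\avisord\!\downarrow_{\alabelset_\arep}$ via the minimality side-condition of the \textsc{Effector} rule, and the adjacent-swap connectivity of linear extensions) are exactly the paper's argument made explicit, so no further comment is needed.
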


In order to relate the CRDT object with its specification we use refinement mappings, which are ``local'' in the sense that they characterize the evolution of a single replica in isolation. A \emph{refinement mapping} $\refmap$ associates replica states with states of the
specification, such that any update or query applied on a replica state $\sigma$ can be mimicked by the corresponding operation in the specification
starting from $\refmap(\sigma)$. Moreover, the resulting states in the two steps must be again related by $\refmap$. Formally, given a query-update rewriting $\gamma$, we define $\mathsf{Refinement}$ as the existence of a mapping $\refmap$ such that:
\begin{itemize}
\item[Simulating effectors:] For every effector $\delta$ corresponding
  to a (query-)update operation $\alabel$, and every state $\sigma \in \Sigma$, 
  \[
    \sigma'=\delta(\sigma)\implies \refmap(\sigma)\specarrow{\secondrep(\gamma(\alabel))}\refmap(\sigma')
  \]
\noindent where $\specarrow{}$ is the transition function of $\Spec$. 
\item[Simulating generators:] For every query $\amethod$, and every $\sigma \in \Sigma$, 
  \[
    \atsource(\sigma,\amethod,\argv)= (\retv,\_,\_) \implies \refmap(\sigma)\specarrow{\alabel}\refmap(\sigma)
  \]
\noindent where $\alabel=\alabellong{\argv}{\retv}$. Recall that
$\atsource(\sigma,\amethod,\argv)$ stands for applying the generator
of $\amethod$ with argument $\argv$ on the state $\sigma$.
Also, for every query-update $\amethod$, and $\sigma\in\Sigma$,
\[
  \atsource(\sigma,\amethod,\argv)= (\retv,\_,\_) \implies \refmap(\sigma)\specarrow{\firstrep(\gamma(\alabel))}\refmap(\sigma).
\]
\end{itemize}

\begin{example}
Consider the OR-Set object, its specification in Example~\ref{definition:sequential specification of or-set}, and the query-update rewriting in Example~\ref{ex:qur-orset}. Also, let $\refmap$ be a refinement mapping defined as the identity function. The effector of an $\alabellong[{\tt add}]{a}{k}$ operation, rewritten by $\gamma$ to $\alabelshort[{\tt add}]{a,k}$, and the $\alabelshort[{\tt add}]{a,k}$ operation of the specification have the same effect. Similarly, the effector of a query-update $\alabellong[{\tt remove}]{a}{R}$ operation, rewritten by $\gamma$ to $ (\alabellong[{\tt readIds}]{a}{R}{}, \alabelshort[{\tt remove}]{a,R})$, and the $\alabelshort[{\tt remove}]{a,R}$ operation of the specification have the same effect. Applying the query operation $\alabelshort[{\tt read}]{}$ on a state $\sigma$ results in the same return value ${\tt A}$ as applying the same query in the context of the specification on the state $\refmap(\sigma)=\sigma$. Finally, for the query-update $\alabellong[{\tt remove}]{a}{R}$, executing its generator in a state $\sigma$ results in the same return value ${\tt R}$ as executing the query $\alabellong[{\tt readIds}]{a}{R}{}$ introduced by the query-update rewriting in the specification state $\refmap(\sigma)=\sigma$.
\end{example}

Next, we show that any object $\aobj$ satisfying $\mathsf{Commutativity}$ and $\mathsf{Refinement}$ is RA-linearizable. Given a history $\ahist=(\alabelset,\avisord)$ of a trace $\atrace$, the \emph{execution-order linearization} of $\ahist$ is the sequence $(\gamma(\alabelset),\aseqord)$ such that $\gamma(\alabel_1)$ occurs before $\gamma(\alabel_2)$ in $\alinord$ iff $\src{}{\alabel_1}$ occurs before $\src{}{\alabel_2}$ in $\atrace$, for every two labels $\alabel_1,\alabel_2\in\alabelset$. An object $\aobj$ \emph{admits} execution-order linearizations if for any history $\ahist=(\alabelset,\avisord)$ of a trace $\atrace$, the execution-order linearization is an RA-linearization of $\ahist$ w.r.t. $\Spec$ and $\gamma$.

\begin{theorem}\label{th:exec_order_lin}
Any object that satisfies $\mathsf{Commutativity}$ and $\mathsf{Refinement}$ admits execution-order linearizations. 
\end{theorem}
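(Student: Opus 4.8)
The plan is to fix an arbitrary trace $\atrace$ of the object $\aobj$ with history $(\alabelset,\avisord)$, take the query-update rewriting $\gamma$ for which $\mathsf{Refinement}$ holds, set $(\alabelset',\avisord')=\gamma((\alabelset,\avisord))$, and verify that the execution-order linearization $(\alabelset',\aseqord)$ satisfies, for this rewritten history, the three clauses of Definition~\ref{definition:ralinearizability1} w.r.t.\ $\Spec$. Here $\aseqord$ is the strict total order on $\alabelset'$ that places, for each original operation $\mu$, the labels $\firstrep(\gamma(\mu))$ and then $\secondrep(\gamma(\mu))$ contiguously at the position of the action $\src{}{\mu}$, with these positions following the order of the $\src{}{\cdot}$ actions along $\atrace$. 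For clause~(i), since $\aseqord$ is already a strict total order it suffices to show $\avisord'\subseteq\aseqord$. If $(\mu_1,\mu_2)\in\avisord$, then by the operational semantics the effector of $\mu_1$ had been applied at the origin of $\mu_2$ before $\src{}{\mu_2}$ fired, and applying that effector presupposes that $\downstreams(\mu_1)$ had already been recorded, i.e.\ that $\src{}{\mu_1}$ had fired; hence $\src{}{\mu_1}$ precedes $\src{}{\mu_2}$ in $\atrace$. Since the whole block of $\mu_1$ then precedes the whole block of $\mu_2$ in $\aseqord$, and $\firstrep(\gamma(\mu))$ precedes $\secondrep(\gamma(\mu))$ inside a block, every edge $\gamma$ introduces into $\avisord'$ (the intra-block edges and the $(\secondrep(\gamma(\mu_1)),\firstrep(\gamma(\mu_2)))$ edges) lies in $\aseqord$, so $\avisord'\cup\aseqord=\aseqord$ is acyclic.

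For clause~(ii) I would build a matching run of $\Spec$ tracking the effectors in $\aseqord$-order. Writing $\aseqord\!\downarrow_{\updates}=\alabel_1\cdots\alabel_n$ and $\sigma_i=\delta_{\alabel_i}(\sigma_{i-1})$, where $\delta_{\alabel_i}$ is the fixed total effector produced along $\atrace$ by the (query-)update underlying $\alabel_i$ and $\sigma_0$ is the initial replica state, the \emph{simulating effectors} clause of $\mathsf{Refinement}$ gives $\refmap(\sigma_{i-1})\specarrow{\alabel_i}\refmap(\sigma_i)$ for every $i$; since $\refmap(\sigma_0)=\abstate_0$, this exhibits $\aseqord\!\downarrow_{\updates}$ as a run of $\Spec$, so $\aseqord\!\downarrow_{\updates}\in\Spec$. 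For clause~(iii), fix a query $\alabel_{\mathsf{qr}}\in\alabelset'$; it comes from an original operation $\mu$ — a query with $\gamma(\mu)=\alabel_{\mathsf{qr}}$, or a query-update with $\firstrep(\gamma(\mu))=\alabel_{\mathsf{qr}}$ — whose generator executes at some replica $\arep$ in state $\sigma$. Since $\aobj$ satisfies $\mathsf{Commutativity}$, Lemma~\ref{lem:replica_states} applies to the local configuration of $\arep$ just before $\src{}{\mu}$, whose label set is exactly the set of operations visible to $\mu$; as queries have identity effectors, this yields that $\sigma$ is the composition, in $\aseqord$-order, of the update effectors indexed by $(\avisord')^{-1}(\alabel_{\mathsf{qr}})\cap\updates$ applied to $\sigma_0$. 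Replaying the chain of clause~(ii) on this prefix gives $\aseqord\!\downarrow_{(\avisord')^{-1}(\alabel_{\mathsf{qr}})\cap\updates}\in\Spec$, ending in state $\refmap(\sigma)$, and then the \emph{simulating generators} clause of $\mathsf{Refinement}$ gives $\refmap(\sigma)\specarrow{\alabel_{\mathsf{qr}}}\refmap(\sigma)$ (using the $\firstrep$-component when $\mu$ is a query-update), whence $\aseqord\!\downarrow_{(\avisord')^{-1}(\alabel_{\mathsf{qr}})\cap\updates}\cdot\,\alabel_{\mathsf{qr}}\in\Spec$, which is clause~(iii).

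I expect the main obstacle to be the bookkeeping around $\gamma$ rather than the simulation itself. One has to confirm that placing $\firstrep(\gamma(\mu))$ before $\secondrep(\gamma(\mu))$ at $\src{}{\mu}$ respects every edge $\gamma$ introduces into $\avisord'$, and — crucially for clause~(iii) — that the query part $\firstrep(\gamma(\mu))$ of a query-update $\mu$ sees exactly the operations that $\mu$ sees, so that $(\avisord')^{-1}(\firstrep(\gamma(\mu)))\cap\updates$ coincides with the set of update effectors that had been applied at the origin of $\mu$ at the moment its generator ran; this is precisely what makes Lemma~\ref{lem:replica_states} directly applicable and lets the effectors-simulation chain line up with the visible-updates projection required by Definition~\ref{definition:ralinearizability1}. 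The only assumption I use beyond the stated hypotheses is $\refmap(\sigma_0)=\abstate_0$, which I would either fold into the definition of $\mathsf{Refinement}$ or obtain without loss of generality by taking $\abstate_0:=\refmap(\sigma_0)$ as the initial state of $\Spec$.
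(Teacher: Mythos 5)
Your proposal is correct and follows the paper's proof in its essentials: clause~(i) from the operational fact that a visible operation's generator must have fired earlier, and clause~(iii) by combining Lemma~\ref{lem:replica_states} (enabled by $\mathsf{Commutativity}$) with the two simulation clauses of $\mathsf{Refinement}$. The one place you genuinely diverge is clause~(ii): you establish $\aseqord\!\downarrow_{\updates}\in\Spec$ directly, by running the simulating-effectors clause along the formal chain $\sigma_i=\delta_{\alabel_i}(\sigma_{i-1})$ starting from $\sigma_0$ (legitimate because $\mathsf{Refinement}$ quantifies over \emph{all} states, so the intermediate $\sigma_i$ need not be reachable replica states), whereas the paper derives clause~(ii) from clause~(iii) by extending the trace with a query that sees every update and observing that the validity of that query forces the update sequence to be admitted. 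Your direct route is arguably cleaner, since it does not presuppose that the object has a query method or that such an all-seeing extension exists; the paper's route reuses the clause-(iii) machinery at the cost of that extra assumption. Your explicit flagging of $\refmap(\sigma_0)=\abstate_0$ and of the bookkeeping ensuring that $\firstrep(\gamma(\mu))$ sees exactly what $\mu$ sees are both points the paper leaves implicit, and your treatment of them is sound; the only nit is that the visible-updates restriction in clause~(iii) is a subsequence of $\aseqord\!\downarrow_{\updates}$ rather than a prefix, which does not affect your argument since $\mathsf{Refinement}$ applies state by state regardless.
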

\ifshort
\else
\begin{proof}(Sketch) 
Clearly, any execution-order linearization is consistent with visibility.
Then, we have to argue that queries can be explained by applying the updates visible to them in linearization order (item~(\ref{it:query}) in  Definition~\ref{definition:ralinearizability1}).
More precisely, we have to show that for each query $\alabel_{\mathsf{qr}}$, the sequence $\aseqord'\cdot \alabel_{\mathsf{qr}}$ where $\aseqord'$ is the projection of $\aseqord$ on the set of updates
visible to $\alabel_{\mathsf{qr}}$ is admitted by the specification. First, by Lemma~\ref{lem:replica_states}, the state $\sigma$ of the replica where $\alabel_{\mathsf{qr}}$ is applied is obtained by applying the effectors of the operations visible to $\alabel_{\mathsf{qr}}$ in the linearization order. Then, by $\mathsf{Refinement}$, every effector is simulated by the corresponding operation in the context of the specification. This implies that $\refmap(\sigma_0)\specarrow{\aseqord'}\refmap(\sigma)$, where $\sigma_0$ is the initial replica state. The query $\alabel_{\mathsf{qr}}$ is also simulated by the same operation in the context of the specification, which implies that $\refmap(\sigma)\specarrow{\alabel_{\mathsf{qr}}}\refmap(\sigma)$. These two facts imply that $\refmap(\sigma_0)\specarrow{\aseqord'\cdot \alabel_{\mathsf{qr}}}\refmap(\sigma)$ which means that $\aseqord'\cdot \alabel_{\mathsf{qr}}$ is admitted by the specification.

Finally, the projection of $\aseqord$ on the updates is admitted by the specification since (1) any trace $\atrace$ can be extended with a query operation $\alabel$ that sees all the (query-)updates in $\atrace$, and (2) the validity of $\alabel$ w.r.t. the specification (shown above) implies in particular that the sequence of updates is admitted by the specification.
\end{proof}

We now discuss the issue of the non-deterministic specifications of Wooki, where an ${\tt addBetween(a,b,c)}$ operation inserts the element ${\tt b}$ at a random position between ${\tt a}$ and ${\tt c}$ (when ${\tt a}$ and ${\tt c}$ are not adjacent). The specification of the query ${\tt read}$ is however deterministic returning the whole list stored in the state (excluding tombstone elements). Although the specification of these objects is non-deterministic, the proof of Lemma~\ref{lem:replica_states} and $\mathsf{Refinement}$ imply that the objects are \emph{convergent}, in the sense that any two queries seeing the same set of updates return the same value. Indeed, by Lemma~\ref{lem:replica_states}, the replica states where such queries are applied are the same (assuming that effectors are deterministic, which is the case in all the CRDTs we are aware of) and the existence of a refinement mapping implies that the specification states corresponding to these replica states are also the same. The fact that the queries are deterministic concludes the proof of convergence.
\fi

\subsection{Timestamp-Order Linearizations}
\label{subsec:time-stamp order as linearizabtion}

\begin{figure}[t]
  \centering
  \includegraphics[width=6.5cm]{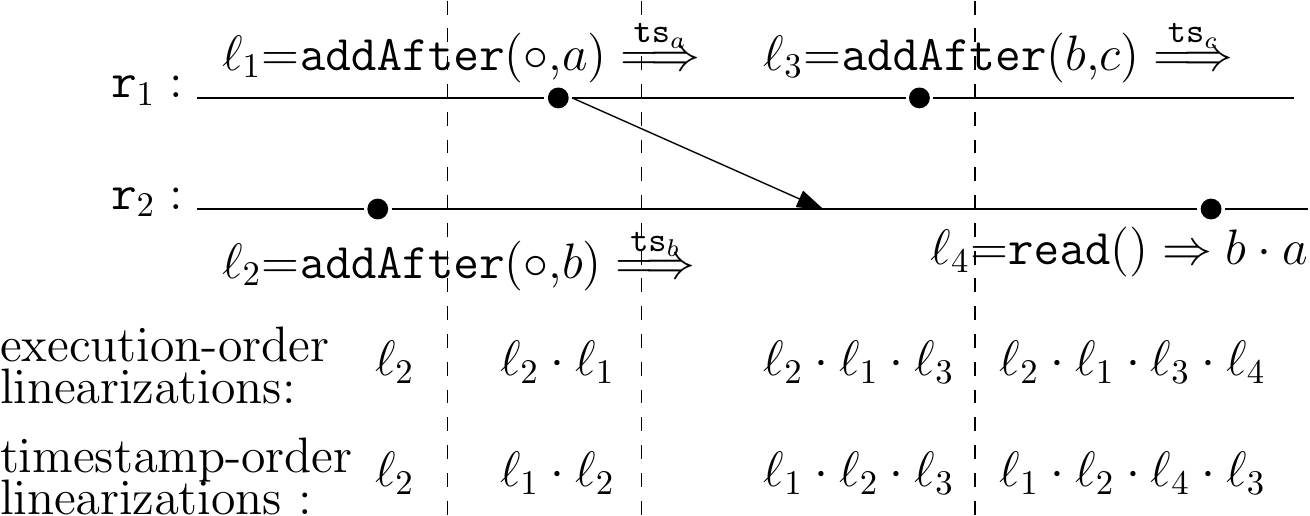}
\vspace{-2mm}
  \caption{Execution-order and timestamp-order linearizations for RGA. Here $\ats_a<\ats_b<\ats_c$.}
  \label{fig:a history of RGA and its RA-linearization}
  \vspace{-3mm}
\end{figure}

CRDT objects such as RGA in Listing~\ref{lst:rga}, that use timestamps for conflict resolution, may not admit execution-order linearizations. 
For instance, \figureautorefname~\ref{fig:a history of RGA and its RA-linearization} shows an execution of RGA where two replicas $\arep_1$ and $\arep_2$ execute two ${\tt addAfter}$ invocations, and an ${\tt addAfter}$ invocation followed by a ${\tt read}$ invocation, respectively. An execution-order linearization which by definition, is consistent with the order in which the operations are applied at the origin replica, will order $\alabelshort[{\tt addAfter}]{\circ,b}$ before $\alabelshort[{\tt addAfter}]{\circ,a}$. The result of applying these two operations in this order in the specification $\specRGA$ (defined in Example~\ref{definition:sequential specification of rga}) is the list $a\cdot b$. However, if the timestamp $\ats_a$ of $a$ is smaller than the timestamp $\ats_b$ of $b$, a ${\tt read}$ that sees these two operations will return the list $b\cdot a$, which is different than the one obtained 
\ifshort
in the context of $\specRGA$.
\else
by applying the same operations in the context of $\specRGA$ in linearization order. Such a sequence is not a valid RA-linearization (w.r.t. $\specRGA$).
\fi
Therefore, we consider a variation of the proof methodology described in \sectionautorefname~\ref{subsec:time order of execution as linearization} where the linearizations are additionally \emph{consistent with the order of timestamps} generated by the operations.
\ifshort
\else
We describe this instantiation using the RGA as an example (showing that it is RA-linearizable w.r.t. $\specRGA$).
More precisely, we consider linearizations where the operations that
generate a timestamp, i.e., ${\tt addAfter}$, are ordered in the
same order as their timestamps. 
\fi
For instance, in the execution of \figureautorefname~\ref{fig:a history of RGA and its RA-linearization}, $\alabelshort[{\tt addAfter}]{\circ,a}$ will be ordered before $\alabelshort[{\tt addAfter}]{\circ,b}$ because $\ats_a$ is smaller than $\ats_b$ (irrespective of the order between the generators). 
Moreover, to extend the notion of timestamp ordering to operations
$\alabel$ that don't generate timestamps, i.e., invocations of ${\tt
  remove}$ and ${\tt read}$, we consider a ``virtual'' timestamp which
is defined as the \emph{maximal} timestamp of any operation visible to
$\alabel$ (or $\bot$ if no operation is visible to $\alabel$), and
require that the linearization is consistent with the order between
both ``real''
\ifshort
\else
\footnote{That is, timestamps generated by the operation
  itself.} 
\fi
  and ``virtual'' timestamps. For instance, the ``virtual'' timestamp of the {\tt read}
 in \figureautorefname~\ref{fig:a history of RGA and its
  RA-linearization} is $\ats_b$ because it sees $\alabelshort[{\tt addAfter}]{\circ,a}$ and
$\alabelshort[{\tt addAfter}]{\circ,b}$. Then, a valid RA-linearization will order the
{\tt read} operation before the other $\alabelshort[{\tt
  addAfter}]{b,c}$ operation, since the timestamp $\ats_c$ of the
latter is bigger than the ``virtual'' timestamp $\ats_b$ of the {\tt
  read}. The operations that have the same timestamp (which is possible due to ``virtual'' timestamps)
\ifshort
\else
  \footnote{Among
  operations that have the same timestamp $\ats$, there is exactly one
  operation generating $\ats$, the rest of the operations have $\ats$ as a ``virtual'' timestamp (i.e., they don't generate timestamps and the maximal timestamp they see is $\ats$).} 
\fi
  are ordered as they execute at the origin replica. For instance, the {\tt read} with ``virtual'' timestamp $\ats_b$ is ordered after $\alabelshort[{\tt addAfter}]{\circ,b}$ that has the same timestamp $\ats_b$ since it executes later at the origin replica.
  
Formally, for a history $\ahist=(\alabelset,\avisord)$, we define the timestamp $\tsof_\ahist(\alabel)$ of a label $\alabel$ in the context of the history $h$ to be $\tsof_\ahist(\alabel)=\tsof(\alabel)$ if $\tsof(\alabel)\neq\bot$ and $\tsof_\ahist(\alabel)=\mathsf{max}\, \{\tsof(\alabel'):(\alabel',\alabel)\in \avisord \}$, otherwise.
Given a history $\ahist=(\alabelset,\avisord)$ of a trace $\atrace$, the \emph{timestamp-order linearization} of $\ahist$ is the sequence $(\alabelset,\aseqord)$ such that $\gamma(\alabel_1)$ occurs before $\alabel_2$ in $\alinord$ iff
$\tsof_\ahist(\alabel_1) < \tsof_\ahist(\alabel_2)$ or $\src{}{\alabel_1}$ occurs before $\src{}{\alabel_2}$ in $\atrace$, for every two labels $\alabel_1,\alabel_2\in\alabelset$.
An object $\aobj$ \emph{admits} timestamp-order linearizations if for any history $\ahist=(\alabelset,\avisord)$ of a trace $\atrace$, the timestamp-order linearization is an RA-linearization of $\ahist$ w.r.t. $\Spec$.~\footnote{For simplicity, we ignore query-update rewritings. The CRDTs with timestamp-order linearizations we investigated don't require such rewritings.}

Proving admittance of timestamp-order linearizations relies on $\mathsf{Commutativity}$ and a slight variation of $\mathsf{Refinement}$ where intuitively, an effector generating a timestamp $\ats$ has to be simulated by a specification operation only when it is applied on a state $\sigma$ that doesn't ``store'' a greater timestamp than $\ats$ (other effectors are treated as before). Formally, the set $\tsof(\sigma)$ of timestamps in a state $\sigma$ contains all the timestamps $\ats$ generated by effectors applied to obtain $\sigma$. For RGA, the set of timestamps in a state $\sigma$ is the set of all timestamps stored in its timestamp tree. We define $\mathsf{Refinement}_{\tsof{}}$ by modifying the ``Simulating effectors'' part of $\mathsf{Refinement}$ as follows:
\begin{itemize}
\item[Simulating effectors:] For every effector $\delta$ of an operation $\alabel$, 
\begin{align*}
\forall \sigma\in\Sigma.\ \tsof(\alabel) \not< \tsof(\sigma)\land \sigma'=\delta(\sigma)\implies \refmap(\sigma)\specarrow{\alabel}\refmap(\sigma')
\end{align*}
\ifshort
\else
where $\tsof(\alabel) \not< \tsof(\sigma)$ means that $\tsof(\alabel)$ is not smaller than any timestamp in $\tsof(\sigma)$.
\fi
\end{itemize}

\begin{example}
Let us consider the RGA object, its specification in Example~\ref{definition:sequential specification of rga}, and 
a refinement mapping $\refmap$ which relates a replica state ${\tt (N,Tomb)}$ with a specification state $(l,T)$ where the sequence $l$ is given by the function ${\tt traverse}$ in ${\tt read}$ queries when ignoring tombstones, i.e., $l={\tt traverse(N, \emptyset)}$, and $T={\tt Tomb}$. It is obvious that ${\tt remove}$ effectors and ${\tt read}$ queries are simulated by the corresponding specification operations. Effectors of $\alabellongind[{\tt addAfter}]{a,b}{}{\ats_b}{}$ operations are simulated by the specification operation $\alabelshort[{\tt addAfter}]{a,b}$ only when $\ats_{\tt b}$ is greater than all the timestamps stored in the replica state where it applies.  Thus, let $({\tt N},{\tt Tomb})$ be a replica state such that $\ats < \ats_{\tt b}$ for every $\ats$ with $(\_,\ats,\_)\in {\tt N}$. The result of applying the effector $\effector$ of $\alabellongind[{\tt addAfter}]{a,b}{}{\ats_b}{}$ is to add ${\tt b}$ as a child of ${\tt a}$. Then, applying ${\tt traverse}$ on the new state will result in a sequence where ${\tt b}$ is placed just after ${\tt a}$ because it has the highest timestamp among the children of ${\tt a}$. 
\ifshort
\else
(and all the nodes in the tree ${\tt N}$). 
\fi
This corresponds exactly to the sequence obtained by applying the operation $\alabelshort[{\tt addAfter}]{a,b}$ in the context of the specification.
\end{example}

The proof of an object $\aobj$ admitting timestamp-order linearizations if it satisfies $\mathsf{Commutativity}$ and $\mathsf{Refinement}_{\tsof{}}$  is similar to the one of Theorem~\ref{th:exec_order_lin}. 
\ifshort
\else
Intuitively, although $\mathsf{Refinement}_{\tsof{}}$ is weaker than $\mathsf{Refinement}$, the fact that the timestamp-order linearizations are consistent with the order between the timestamps generated by the operations, allows to show that any sequence of effectors consistent with such a linearization can be simulated by a sequence of specification operations.
\fi

\begin{theorem}\label{th:timestamp_order_lin}
Any object that satisfies $\mathsf{Commutativity}$ and $\mathsf{Refinement}_{\tsof{}}$ admits timestamp-order linearizations. 
\end{theorem}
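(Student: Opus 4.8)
The proof follows the same skeleton as that of Theorem~\ref{th:exec_order_lin}, so the plan is to mirror that argument while tracking where the timestamp condition enters. Fix a trace $\atrace$ with history $\ahist=(\alabelset,\avisord)$ and let $(\alabelset,\aseqord)$ be the timestamp-order linearization. First I would check consistency with visibility: if $(\alabel_1,\alabel_2)\in\avisord$, then by the assumption that timestamps are consistent with visibility we have $\tsof_\ahist(\alabel_1)\le\tsof_\ahist(\alabel_2)$, and in the tie case $\src{}{\alabel_1}$ precedes $\src{}{\alabel_2}$ in $\atrace$ by causal delivery (the effector of $\alabel_1$ has been applied at the origin of $\alabel_2$ before $\alabel_2$ is generated); in both cases $\alabel_1$ precedes $\alabel_2$ in $\aseqord$, so $\avisord\cup\aseqord$ is acyclic. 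One should also verify that $\aseqord$ is a genuine total order, which follows because ties (equal $\tsof_\ahist$) are broken by origin-generation order in $\atrace$ and exactly one operation with a given timestamp value generates it.

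The heart of the proof is item~(\ref{it:query}) of Definition~\ref{definition:ralinearizability1}: for each query $\alabel_{\mathsf{qr}}$, with $\sigma$ the replica state at which its generator runs, I must show $\refmap(\sigma_0)\specarrow{\aseqord'}\refmap(\sigma)$ where $\aseqord'=\aseqord\!\downarrow_{\avisord^{-1}(\alabel_{\mathsf{qr}})\cap\updates}$, and then $\refmap(\sigma)\specarrow{\alabel_{\mathsf{qr}}}\refmap(\sigma)$ from ``Simulating generators.'' By $\mathsf{Commutativity}$ and Lemma~\ref{lem:replica_states}, $\sigma$ equals $\delta_{\alabel_n}(\cdots\delta_{\alabel_1}(\sigma_0)\cdots)$ where $\alabel_1\cdots\alabel_n=\aseqord'$ (restricting the statement of the lemma to updates, since query effectors are identities). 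Now I would prove by induction on $k$ that $\refmap(\sigma_0)\specarrow{\alabel_1\cdots\alabel_k}\refmap(\delta_{\alabel_k}(\cdots\delta_{\alabel_1}(\sigma_0)\cdots))$. The step uses $\mathsf{Refinement}_{\tsof{}}$, and here is where the timestamp ordering is essential: to apply ``Simulating effectors'' to $\delta_{\alabel_{k+1}}$ on the state $\sigma_k=\delta_{\alabel_k}(\cdots)$, I need $\tsof(\alabel_{k+1})\not<\tsof(\sigma_k)$. But $\sigma_k$ was obtained by applying exactly the effectors of $\alabel_1,\dots,\alabel_k$, so $\tsof(\sigma_k)=\{\tsof(\alabel_j):j\le k,\ \tsof(\alabel_j)\neq\bot\}$, and since $\aseqord'$ is ordered by $\tsof_\ahist$ (which for updates is the real timestamp), every such $\tsof(\alabel_j)$ is $\le\tsof(\alabel_{k+1})$ — indeed strictly less, as timestamps of distinct update operations are distinct — so the side condition holds. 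Thus the induction goes through and gives $\refmap(\sigma_0)\specarrow{\aseqord'}\refmap(\sigma)$; composing with the generator simulation yields $\aseqord'\cdot\alabel_{\mathsf{qr}}\in\Spec$.

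Finally, item~(ii) — that $\aseqord\!\downarrow_{\updates}\in\Spec$ — I would obtain exactly as in Theorem~\ref{th:exec_order_lin}: extend $\atrace$ by a fresh $\tt read$-style query $\alabel$ at a replica that has seen every update, observe that the timestamp-order linearization of the extended history restricted to the updates visible to $\alabel$ is all of $\aseqord\!\downarrow_{\updates}$, and invoke the argument just given for that query. The main obstacle, and the only place the proof genuinely departs from Theorem~\ref{th:exec_order_lin}, is precisely the bookkeeping in the inductive step: one must be careful that the set $\tsof(\sigma_k)$ of timestamps ``stored'' in the intermediate state is controlled by the prefix of the linearization already processed (this is why Lemma~\ref{lem:replica_states} must be invoked with the \emph{linearization} order rather than the execution order), and that the timestamp-order linearization is monotone enough — on updates — to guarantee the side condition $\tsof(\alabel_{k+1})\not<\tsof(\sigma_k)$ at every step. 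Interleaved queries do not disturb this because their effectors are identities and contribute no timestamps; ``virtual'' timestamps only affect the \emph{position} of queries in $\aseqord$, not the update subsequence $\aseqord'$ that drives the simulation.
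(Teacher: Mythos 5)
Your proposal is correct and follows exactly the route the paper intends: the paper only remarks that the proof is ``similar to the one of Theorem~\ref{th:exec_order_lin}'', and your argument is precisely that proof with the one genuinely new ingredient made explicit, namely that because the linearization is ordered by timestamps, the prefix $\alabel_1\cdots\alabel_k$ already processed controls $\tsof(\sigma_k)$ and so the side condition $\tsof(\alabel_{k+1})\not<\tsof(\sigma_k)$ of $\mathsf{Refinement}_{\tsof{}}$ holds at every inductive step. The only caveat is a cosmetic one: for updates that generate no timestamp (e.g.\ {\tt remove} in RGA, with $\tsof=\bot$) the guarded form of ``Simulating effectors'' must be read as the paper intends --- such effectors are simulated unconditionally --- rather than literally, since $\bot<\ats$ would otherwise vacuate the hypothesis.
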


We remark that the API of a CRDT can impact on whether it is
RA-linearizable. For instance, a slight variation of the RGA in
Listing~\ref{lst:rga} with the same state, but with an interface with
a method \lstinline|addAt(a,k)| to insert an element \lstinline|a| at
an index \lstinline|k|, introduced in~\cite{AttiyaBGMYZ16}, would not
be RA-linearizable w.r.t. an appropriate sequential specification (see~\cite{arxiv}).

\section{Compositionality of RA-Linearizability}
\label{sec:compositionality}

\begin{figure}[t]
  \centering
  \includegraphics[scale=.7]{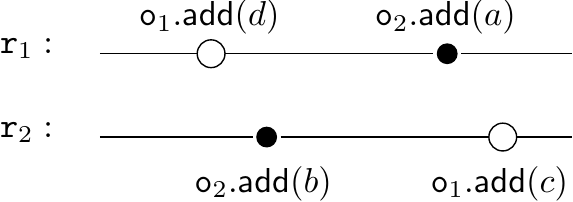}
  \vspace{-2mm}
\caption{A history of two OR-Sets. 
Each operation is visible only at the origin, so visibility is given by the horizontal lines.}
  \label{fig:negative_composition}
\vspace{-3mm}
\end{figure}

We investigate the issue of whether the composition of a set of objects satisfying RA-linearizability is also \crdtlinearizable{}. While this is not true in general, we show that the composition of objects that admit execution-order or timestamp-order linearizations is RA-linearizable under the assumption that they share the same timestamp generator.

\subsection{Object Compositions and RA-Linearizability}\label{ssec:comp_intro}

Given two objects $\aobj_1$ and $\aobj_2$, the semantics of their composition $\aobj_1\comp \aobj_2$ is the standard product of the LTSs corresponding to $\aobj_1$ and $\aobj_2$, respectively. 
\ifshort
\else
Formally, given $\llbracket \aobj_1 \rrbracket =(\globalstates_1,\acts_1,\aglobalstate_0^1,\rightarrow_1)$ and $\llbracket \aobj_2 \rrbracket =(\globalstates_2,\acts_2,\aglobalstate_0^2,\rightarrow_2)$, we define $\llbracket \aobj_1\comp \aobj_2 \rrbracket =(\globalstates_1\times \globalstates_2,\acts_1\cup \acts_2,(\aglobalstate_0^1,\aglobalstate_0^2,\emptyset),\rightarrow_{1,2})$ where\\[2pt]
\centerline{
\(
\begin{array}{rcl}
  \rightarrow_{1,2} & = & \{ ((\aglobalstate_1,\aglobalstate_2),\aact_1,(\aglobalstate_1',\aglobalstate_2)): (\aglobalstate_1,\aact_1,\aglobalstate_1')\in\rightarrow_1\}\\
                    & \cup & \{ ((\aglobalstate_1,\aglobalstate_2),\aact_2,(\aglobalstate_1,\aglobalstate_2')): (\aglobalstate_2,\aact_2,\aglobalstate_2')\in\rightarrow_2\}
\end{array}
\)
}\\[2pt]
\fi
 The history of a trace $\atrace$ of $\aobj_1\comp \aobj_2$ records a ``global'' visibility relation between the operations in the trace, i.e., which operations of $\aobj_1$ or $\aobj_2$ are visible when issuing an operation of $\aobj_1$, and similarly, for operations of $\aobj_2$. Formally, $\hist{\atrace}=(\alabelset, \avisord)$ where $\alabelset$ is the set of labels occurring in $\atrace$, and $(\alabel_1,\alabel_2)\in\avisord$ if there exists a replica $\arep$ such that $\dwn{\arep}{\alabel_1}$ occurs before $\src{\arep}{\alabel_2}$ in the trace $\atrace$. In general, $\avisord$ may not be a partial order since the causal delivery assumption holds only among operations of the same object. The set of histories $\histories(\aobj_1\comp \aobj_2)$ of the composition $\aobj_1\comp \aobj_2$ is the set of histories $h$ of a trace $\atrace$ of $\aobj_1\comp \aobj_2$.

For two specifications $\Spec_1$ and $\Spec_2$ of two objects $\aobj_1$ and $\aobj_2$, respectively, the composition $\Spec_1\comp \Spec_2$ is the set of interleavings of sequences in $\Spec_1$ and $\Spec_2$, respectively. 
\ifshort
\else
More precisely, $\Spec_1\comp \Spec_2$ is the set of sequences $(\alabelset,\aseqord)$ such that their projection on labels of $\aobj_1$, resp., $\aobj_2$, is admitted by $\Spec_1$, resp., $\Spec_2$. 
\fi
We say that the composition $\aobj_1\comp \aobj_2$ is \emph{\crdtlinearizable{}} if every history of $\aobj_1\comp \aobj_2$ is \crdtlinearizable{} w.r.t. $\Spec_1\comp \Spec_2$. The extension 
to a set of objects is defined as usual.

Linearizability~\cite{HerlihyW90} ensures that 
\ifshort
\else
the composition of a set of linearizable objects is also linearizable. More precisely, it ensures that 
\fi
for every history, any per-object linearizations, concerning the operations of a single object, can be combined into a global linearization, concerning all the operations in the history. 
\ifshort
\else
By combining linearizations, we mean constructing a global linearization whose projections on the operations of a single object are exactly the per-object linearizations considered in the beginning.
\fi
However, this is not true for our notion of RA-linearizability. A counterexample is given in \figureautorefname~\ref{fig:negative_composition}. 
\ifshort
\else
The operations of $\aobj_1$ are represented using blank circles and the operations of $\aobj_2$ using filled circles. 
\fi
The operations of $\aobj_1$ can be linearized to $\aobj_1.\alabelshort[{\tt add}]{c}\cdot \aobj_1.\alabelshort[{\tt add}]{d}$ 
\ifshort
\else
(this is a valid \crdtlinearization{} since any sequence of ${\tt add}$ operations is admitted by $\specOrSet$) 
\fi
while the operations of $\aobj_2$ can be linearized to $\aobj_2.\alabelshort[{\tt add}]{a}\cdot \aobj_2.\alabelshort[{\tt add}]{b}$. There is no \crdtlinearization{} of this history whose projections on each of the two objects correspond to these per-object linearizations. 
\ifshort
\else
Trying to construct a linearization where $\aobj_2.\alabelshort[{\tt add}]{a}$ occurs before $\aobj_2.\alabelshort[{\tt add}]{b}$ will imply that $\aobj_1.\alabelshort[{\tt add}]{d}$ must occur before $\aobj_1.\alabelshort[{\tt add}]{c}$ (since it must be consistent with the visibility relation), which contradicts the linearization of $\aobj_1$, and similarly for the other case, when trying to construct a global linearization consistent with the linearization of $\aobj_2$'s operations. 
A reader knowledgeable of the literature on linearizability may notice that this discrepancy between standard linearizability and RA-linearizability comes from the fact that the partial order defining a history in the case of standard linearizability is actually an interval order\footnote{A partial order $R$ is an interval-order if $\{(a,b), (c,d)\} \subseteq R$ implies that $(a,d) \in R$ or $(c,b) \in R$, for every $a,b,c,d$.}, while in the case of RA-linearizability it is an arbitrary partial order.
\fi

\subsection{Composition: Execution-Order Linearizability  }

Although not all per-object \crdtlinearization{s} can be combined into global \crdtlinearization{s}, this may still be true in some cases. For the history in \autoref{fig:negative_composition}, the operations of $\aobj_1$ can also be linearized to $\aobj_1.\alabelshort[{\tt add}]{d}\cdot \aobj_1.\alabelshort[{\tt add}]{c}$ which enables a global linearization $\aobj_1.\alabelshort[{\tt add}]{d}\cdot \aobj_2.\alabelshort[{\tt add}]{a}\cdot \aobj_2.\alabelshort[{\tt add}]{b}\cdot \aobj_1.\alabelshort[{\tt add}]{c}$ whose projection on each object is consistent with the per-object linearization (we take the same linearization 
\ifshort
\else
$\aobj_2.\alabelshort[{\tt add}]{a}\cdot \aobj_2.\alabelshort[{\tt add}]{b}$ 
\fi
for $\aobj_2$).

\ifshort
We show 
\else
The following theorem shows
\fi
that in the case of \crdtlinearizable{} objects that admit execution-order linearizations, there always exist per-object \crdtlinearization{s} that can be combined into global \crdtlinearization{s}, hence their composition is \crdtlinearizable{} and moreover, it also admits execution-order linearizations. 
\ifshort
\else
A first preliminary result states that the order in which concurrent operations are executed at the origin replica can be permuted arbitrarily while still leading to a valid trace. More precisely, for every linearization $\alinord$ of the visibility in the history of a trace, there exists another valid trace where operations are executed at the origin replica in the order defined by $\alinord$.

\begin{lemma}\label{lem:trace_closure}
Let $\atrace$ be a trace of an object $\aobj$ and $\hist{\atrace}=(\alabelset,\avisord)$. Then, for every sequence $(\alabelset,\alinord)$ which is consistent with $\avisord$ (i.e., $\avisord
    \cup \alinord$ is acyclic), there exists a trace $\atrace'$ of $\aobj$ such that $\src{}{\alabel_1}$ occurs before $\src{}{\alabel_2}$ in $\atrace'$ iff $\alabel_1$ occurs before $\alabel_2$ in $\alinord$. Moreover, $\atrace'$ has the same history as $\atrace$.
\end{lemma}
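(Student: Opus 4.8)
The plan is to construct $\atrace'$ explicitly, as a greedy ``catch-up then generate'' schedule that follows $\alinord$. Enumerate $\alabelset=\{\alabel_1,\ldots,\alabel_n\}$ so that $\alabel_i$ precedes $\alabel_j$ in $\alinord$ exactly when $i<j$, and for each $i$ let $\arep_i$ be the origin replica of $\alabel_i$ in $\atrace$ and let $\effector_i$, $\ats_i$, $\retv_i$ be the effector, timestamp, and return value recorded for $\alabel_i$ by $\atrace$. I would take $\atrace'$ to be the concatenation of blocks $B_1\cdot B_2\cdots B_n$, where block $B_i$ first performs the actions $\dwn{\arep_i}{\alabel_j}$ for every $\alabel_j\in\avisord^{-1}(\alabel_i)$ whose effector has not yet been applied at $\arep_i$ (in any order consistent with $\avisord$), and then performs $\src{\arep_i}{\alabel_i}$. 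There is no need to deliver the remaining effectors afterwards, since the history of a trace is determined by its generator actions alone. With this definition, the generators occur in $\atrace'$ in the order $\alinord$, which immediately gives the ``iff'' required by the statement.

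What remains is to check that $B_1\cdots B_n$ is a legal run of $\llbracket\aobj\rrbracket$ and that its history is $\hist{\atrace}$. Recall that the rules of \figureautorefname~\ref{fig:crdt-opsem} keep $\avisord$ a strict partial order, keep the label set seen by any replica $\avisord$-downward-closed (this is the role of the minimality condition in the Effector rule), and, by the Operation rule, make every operation see all operations previously generated at its origin replica, so the visibility relation of $\atrace$ contains its per-replica program order; since $\alinord\supseteq\avisord$, the enumeration $\alabel_1,\ldots,\alabel_n$ refines that program order. Using these facts together with transitivity of $\avisord$, a routine induction on $i$ shows that just before the delivery phase of $B_i$ the set of labels already applied at $\arep_i$ is a subset of $\avisord^{-1}(\alabel_i)$ (each such label was either generated earlier at $\arep_i$, hence precedes $\alabel_i$ in program order, or delivered while catching up some earlier operation of $\arep_i$ that precedes $\alabel_i$, and transitivity places it in $\avisord^{-1}(\alabel_i)$). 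Consequently the delivery phase can deliver precisely the missing effectors of $\avisord^{-1}(\alabel_i)$, each delivery meeting the minimality condition because the set of labels of $\avisord^{-1}(\alabel_i)$ not yet applied at $\arep_i$ is upward-closed in $\avisord$; so right before the step $\src{\arep_i}{\alabel_i}$ of $\atrace'$, the replica $\arep_i$ has seen exactly $\avisord^{-1}(\alabel_i)$, the same label set as in $\atrace$ right before $\src{\arep_i}{\alabel_i}$.

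The crux is showing that the step $\src{\arep_i}{\alabel_i}$ of $\atrace'$ produces the same triple $(\retv_i,\effector_i,\ats_i)$ as in $\atrace$. Granting this, the remaining side conditions of the Operation rule carry over from $\atrace$ because they depend only on the set of labels seen (matched above to $\avisord^{-1}(\alabel_i)$) and on the unchanged collection of timestamps. To prove it I would induct on $i$, maintaining that the effector of $\alabel_j$ in $\atrace'$ equals $\effector_j$ for every $j<i$. Then the state $\sigma$ of $\arep_i$ entering the generator of $B_i$ is the composition of the effectors $\{\effector_j : \alabel_j\in\avisord^{-1}(\alabel_i)\}$ in some $\avisord$-consistent order. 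By $\mathsf{Commutativity}$ — equivalently, by Lemma~\ref{lem:replica_states}, which states that a replica's state is the composition of the delivered effectors taken in \emph{any} linearization consistent with visibility — this state depends only on the set $\avisord^{-1}(\alabel_i)$ and not on the chosen order, so it coincides with the state of $\arep_i$ just before $\src{\arep_i}{\alabel_i}$ in $\atrace$; hence $\atsource$ returns the same $(\retv_i,\effector_i,\ats_i)$.

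Finally, since each Operation step for $\alabel_i$ in $\atrace'$ augments the visibility component by exactly $\avisord^{-1}(\alabel_i)\times\{\alabel_i\}$ while Effector steps leave visibility untouched, the visibility relation reached at the end of $\atrace'$ equals $\bigcup_{i}\avisord^{-1}(\alabel_i)\times\{\alabel_i\}=\avisord$; since $\atrace'$ involves the same operations as $\atrace$, we conclude $\hist{\atrace'}=\hist{\atrace}$. The main obstacle is the state-agreement argument of the previous paragraph: this is the one place where commutativity of concurrent effectors — the defining property of CRDTs — is essential, and it is also what forces the bookkeeping (the use of program order and of $\avisord$-downward-closure) needed to be certain that the effectors applied before the generator of $\alabel_i$ are exactly those of $\avisord^{-1}(\alabel_i)$, neither more nor fewer.
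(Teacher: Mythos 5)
Your construction is sound in outline, but as written it proves a weaker statement than the lemma. The one genuinely load-bearing step — that the generator of $\alabel_i$ in $\atrace'$ reproduces the same $(\retv_i,\effector_i,\ats_i)$ as in $\atrace$ — is discharged by appealing to $\mathsf{Commutativity}$ (via Lemma~\ref{lem:replica_states}). But $\mathsf{Commutativity}$ is \emph{not} a hypothesis of Lemma~\ref{lem:trace_closure}: the lemma is stated for an arbitrary object $\aobj$, and it is invoked later (Lemmas~\ref{lem:t0_lin} and the timestamp-order analogue) for objects whose only stated property is admitting certain linearizations. Because your catch-up phases deliver the missing effectors of $\avisord^{-1}(\alabel_i)$ ``in any order consistent with $\avisord$'', the per-replica order of effector application in $\atrace'$ can differ from that in $\atrace$, and without commutativity the replica state entering the generator — hence the label itself — can differ. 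The paper's proof avoids this entirely: it works by a Mazurkiewicz-style sequence of swaps of adjacent $\circledcirc$-independent actions, and its dependency relation declares any two effector applications at the same replica dependent, so the per-replica effector sequences of $\atrace$ are never disturbed and no algebraic property of the effectors is needed.

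The good news is that your argument is repairable without commutativity, and the repair essentially recovers the paper's invariant. In the catch-up phase of $B_i$, deliver the missing effectors of $\avisord^{-1}(\alabel_i)$ in the \emph{same relative order} in which they were applied at $\arep_i$ in $\atrace$. Your own bookkeeping shows that the set delivered in block $B_i$ is exactly the set delivered at $\arep_i$ in $\atrace$ between $\src{\arep_i}{\alabel_k}$ (the program-order predecessor of $\alabel_i$ at $\arep_i$) and $\src{\arep_i}{\alabel_i}$; since $\alinord$ refines per-replica program order, the full sequence of effector applications at each replica in $\atrace'$ then coincides literally with that in $\atrace$, so the pre-generator states agree by construction and $\atsource$ returns the same triple. (This order is still $\avisord$-consistent, so your argument that the $\mathsf{min}_{\avisord}$ side condition of the Effector rule is met goes through unchanged, noting that minimality is required over all of $\labeldom{\avisord}\setminus\alabelset$, not just over $\avisord^{-1}(\alabel_i)$ — which your downward-closure observation does cover.) The remaining parts of your proof — dropping undelivered effectors, the timestamp side conditions, and the computation of the final visibility relation — are fine.
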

\vspace{-10pt}
\begin{proof}(Sketch)
We define a \emph{dependency} relation $\circledcirc$ between actions as follows: $\src{\arep_1}{\alabel_1}\circledcirc \dwn{\arep_2}{\alabel_2}$ iff $\arep_1=\arep_2$ or $\alabel_1=\alabel_2$, $\src{\arep_1}{\alabel_1}\circledcirc \src{\arep_2}{\alabel_2}$ iff $\arep_1=\arep_2$, $\dwn{\arep_1}{\alabel_1}\circledcirc \src{\arep_2}{\alabel_2}$ iff $\arep_1=\arep_2$, and $\dwn{\arep_1}{\alabel_1}\circledcirc \dwn{\arep_2}{\alabel_2}$ iff $\arep_1=\arep_2$. Given a trace $\atrace=\atrace_1\cdot \aact_1\cdot\aact_2\cdot\atrace_2$, we say that a trace $\atrace'=\atrace_1\cdot \aact_2\cdot\aact_1\cdot\atrace_2$ is derived from $\atrace$ by a \emph{$\circledcirc$-valid swap} iff $\aact_1$ and $\aact_2$ are \emph{not} related by $\circledcirc$. Using standard reasoning about traces of a concurrent system, it can be shown that if $\atrace$ is a trace of $\aobj$, then any trace $\atrace'$ derived through a sequence of $\circledcirc$-valid swaps is also a trace of $\aobj$. Moreover, $\atrace'$ has the same history as $\atrace$. Then, by the definition of the visibility relation $\avisord$ in the history of a trace and the causal delivery assumption, it can be shown that given a trace $\atrace$ of $\aobj$ containing two actions $\src{}{\alabel_1}$ and $\src{}{\alabel_2}$ such that $\alabel_1$ and $\alabel_2$ are concurrent and $\src{}{\alabel_1}$ occurs before $\src{}{\alabel_2}$, there exists another trace $\atrace'$ of $\aobj$ that is defined through a sequence of $\circledcirc$-valid swaps, where $\src{}{\alabel_2}$ occurs before $\src{}{\alabel_1}$. Thus, the order of concurrent generator actions can be permuted arbitrarily, concluding the lemma.
\end{proof}

Lemma~\ref{lem:trace_closure} implies that given an \crdtlinearizable{} object $\aobj$, if it admits execution-order linearizations, then every linearization of a history of $\aobj$ consistent with visibility is a valid \crdtlinearization{}.

\begin{lemma}\label{lem:t0_lin}
Let $\aobj$ be an object which is \crdtlinearizable{} w.r.t. a specification $\Spec$ and admits execution-order linearizations. Then, for every history $\ahist=(\alabelset,\avisord)$ of $\aobj$ and every sequence $(\alabelset,\alinord)$ which is consistent with $\avisord$, we have that $(\alabelset,\alinord)$ is an \crdtlinearization{} of $\ahist$ w.r.t. $\Spec$.
\end{lemma}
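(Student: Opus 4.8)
The plan is to reduce the statement directly to the hypothesis that $\aobj$ admits execution-order linearizations, using Lemma~\ref{lem:trace_closure} as the bridge. The key observation is that ``admits execution-order linearizations'' quantifies over \emph{all} traces of $\aobj$: for every trace $\atrace'$ of $\aobj$, the linearization induced by the order of generator actions in $\atrace'$ is an \crdtlinearization{} of $\hist{\atrace'}$ w.r.t.\ $\Spec$ (and the associated query-update rewriting). So it suffices to exhibit, for the given sequence $\alinord$, a trace of $\aobj$ whose generators execute exactly in the order $\alinord$ and whose history is still $\ahist$ --- and this is precisely what Lemma~\ref{lem:trace_closure} provides.

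Concretely, first I would fix a trace $\atrace$ of $\aobj$ with $\hist{\atrace}=\ahist$, which exists by definition of $\histories(\aobj)$. Since $\alinord$ is a sequence (a total order) consistent with $\avisord$ --- so that $\avisord\cup\alinord$ is acyclic, i.e.\ $\alinord$ is a linear extension of $\avisord$ --- Lemma~\ref{lem:trace_closure} yields a trace $\atrace'$ of $\aobj$ with $\hist{\atrace'}=\hist{\atrace}=\ahist$ and such that $\src{}{\alabel_1}$ precedes $\src{}{\alabel_2}$ in $\atrace'$ iff $\alabel_1$ precedes $\alabel_2$ in $\alinord$, for all $\alabel_1,\alabel_2\in\alabelset$. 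Now I would apply the hypothesis that $\aobj$ admits execution-order linearizations to the trace $\atrace'$: the execution-order linearization induced by $\atrace'$ is an \crdtlinearization{} of $\ahist$ w.r.t.\ $\Spec$. By the definition of execution-order linearization and the choice of $\atrace'$, this linearization orders the (rewritten) labels exactly as $\alinord$ does; hence it coincides with $(\alabelset,\alinord)$ up to the query-update rewriting, and therefore $(\alabelset,\alinord)$ is an \crdtlinearization{} of $\ahist$ w.r.t.\ $\Spec$ in the sense of Definition~\ref{definition:distributed linearizability}.

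I expect the only delicate point to be the bookkeeping around the query-update rewriting $\gamma$: one must check that the execution-order linearization of $\atrace'$ is really the $\gamma$-image of $(\alabelset,\alinord)$, which amounts to observing that $\gamma$ and the placement of the two components of $\gamma(\alabel)$ for a query-update $\alabel$ are determined identically from $\atrace$ and $\atrace'$ since these traces have the same history (and, for the objects of interest such as OR-Set, $\gamma$ is in fact a fixed rewriting). Apart from this, the argument carries no computational content: all the work is already done by Lemma~\ref{lem:trace_closure}, and the present lemma is essentially a repackaging of ``admits execution-order linearizations'' so that it applies to \emph{arbitrary} linear extensions of the visibility order, rather than only to the specific one arising from the witnessing trace.
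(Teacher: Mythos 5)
Your proof is correct and follows exactly the paper's own argument: apply Lemma~\ref{lem:trace_closure} to obtain a trace of $\aobj$ whose generators execute in the order $\alinord$ while preserving the history $\ahist$, and then invoke the definition of admitting execution-order linearizations on that trace. The extra care you take with the query-update rewriting $\gamma$ is a reasonable elaboration of a detail the paper's sketch leaves implicit.
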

\vspace{-10pt}
\begin{proof}(Skech)
Let $(\alabelset,\alinord)$ be a sequence consistent with the visibility relation of $\ahist$. By Lemma~\ref{lem:trace_closure}, there exists a trace $\atrace$ of $\aobj$ such that  $\src{}{\alabel_1}$ occurs before $\src{}{\alabel_2}$ in $\atrace$ iff $\alabel_1$ occurs before $\alabel_2$ in $\alinord$. The definition of execution-order linearizations implies that $(\alabelset,\alinord)$ is an RA-linearization of $\hist{\atrace}=\ahist$.
\end{proof}

Lemma~\ref{lem:t0_lin} is the essential ingredient for proving that the composition of a set of \crdtlinearizable{} objects that admit execution-order linearizations is also \crdtlinearizable{}. Intuitively, given a history $\ahist$ with multiple such objects, any linearization of $\ahist$ is a valid \crdtlinearization{} since each of its projections on the set of operations of a single object is a linearization of the per-object visibility relation~\footnote{By definition, the visibility relation in the history of an object composition $\aobj_1\comp\aobj_2$ projected on operations of $\aobj_1$, resp., $\aobj_2$, is exactly the visibility relation stored in the global configurations of $\aobj_1$, resp., $\aobj_2$ (at the end of the execution).}  and thus, a valid \crdtlinearization{} of that object.
\fi

\begin{theorem}\label{th:comp_execution_order}
The composition of a set of \crdtlinearizable{} objects that admit execution-order linearizations is \crdtlinearizable{} and admits execution-order linearizations.
\end{theorem}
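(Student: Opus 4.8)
The plan is to establish the stronger claim that, for every trace $\atrace$ of the composition $\aobj_1\comp\cdots\comp\aobj_n$, the execution-order linearization $(\gamma(\alabelset),\aseqord)$ of $\hist{\atrace}=(\alabelset,\avisord)$ is an \crdtlinearization{} w.r.t.\ $\Spec_1\comp\cdots\comp\Spec_n$ and $\gamma=\gamma_1\cup\cdots\cup\gamma_n$, where $\gamma_k$ is the query-update rewriting witnessing that $\aobj_k$ admits execution-order linearizations. This claim is exactly ``the composition admits execution-order linearizations'', and \crdtlin{} follows at once since every history of the composition arises from some trace. The backbone is a projection argument: because the semantics of the composition is the product of the component LTSs, the projection $\atrace_k$ of $\atrace$ onto the actions of $\aobj_k$ is a trace of $\aobj_k$; moreover $\hist{\atrace_k}=(\alabelset_k,\avisord\!\downarrow_{\alabelset_k})$, i.e.\ restricting the global visibility to $\aobj_k$'s labels recovers $\aobj_k$'s own visibility (the footnote observation on object compositions), and the execution-order linearization of $\hist{\atrace_k}$ is precisely $\aseqord\!\downarrow_{\gamma(\alabelset_k)}$. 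Since each $\aobj_k$ is \crdtlinearizable{} and admits execution-order linearizations, Lemma~\ref{lem:t0_lin} gives that $\aseqord\!\downarrow_{\gamma(\alabelset_k)}$ is an \crdtlinearization{} of $\hist{\atrace_k}$ w.r.t.\ $\Spec_k$, hence satisfies all three clauses of Definition~\ref{definition:ralinearizability1}.

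It then remains to check the three clauses of Definition~\ref{definition:ralinearizability1} for $(\gamma(\alabelset),\aseqord)$ against $\gamma(\hist{\atrace})$ and $\Spec_1\comp\cdots\comp\Spec_n$. Clause (i) holds because a generator action precedes every application of the corresponding effector, so $\avisord\subseteq\aseqord$ by the definition of $\aseqord$ and, $\aseqord$ being total, $\avisord\cup\aseqord$ is acyclic. For clause (ii), by the definition of specification composition as the set of interleavings it suffices to show that the projection of $\aseqord\!\downarrow_{\updates}$ onto the labels of each $\aobj_k$ lies in $\Spec_k$; that projection is $(\aseqord\!\downarrow_{\gamma(\alabelset_k)})\!\downarrow_{\updates}$, which is in $\Spec_k$ by clause (ii) for the per-object \crdtlinearization{}.

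Clause (iii) is where the real work lies and, I expect, the main obstacle. Fix a query label $\alabel_{\mathsf{qr}}$ of $\gamma(\hist{\atrace})$, owned by $\aobj_i$ and issued at a replica $\arep$ (it may be the query half split off by $\gamma$ from a query-update). We must place the sequence $\aseqord\!\downarrow_{\avisord^{-1}(\alabel_{\mathsf{qr}})\cap\updates}\cdot\alabel_{\mathsf{qr}}$ in $\Spec_1\comp\cdots\comp\Spec_n$, i.e.\ project it onto each $\aobj_k$. Its projection onto $\aobj_i$ is $(\aseqord\!\downarrow_{\gamma(\alabelset_i)})\!\downarrow_{\avisord^{-1}(\alabel_{\mathsf{qr}})\cap\updates\cap\alabelset_i}\cdot\alabel_{\mathsf{qr}}$, which is in $\Spec_i$ by clause (iii) for $\aobj_i$. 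The delicate case is the projection onto $\aobj_k$ for $k\neq i$: it equals $\aseqord\!\downarrow_{V_k\cap\updates}$, where $V_k$ is the set of $\aobj_k$-operations visible to $\alabel_{\mathsf{qr}}$; although $V_k$ is downward-closed w.r.t.\ $\avisord\!\downarrow_{\alabelset_k}$ (by causal delivery within $\aobj_k$), it need not be a $\aseqord$-prefix, so membership in $\Spec_k$ does not follow from clause (ii) of $\aobj_k$. To get around this I would reuse the device from the proof of Theorem~\ref{th:exec_order_lin}: extend $\atrace$ by inserting, immediately before $\src{\arep}{\alabel_{\mathsf{qr}}}$, a fresh query $q_k$ of $\aobj_k$ executed at $\arep$ (adding a trivial read to $\aobj_k$'s interface if it has none). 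This is again a trace of the composition, $q_k$ sees exactly $V_k$ among $\aobj_k$'s operations, and its projection onto $\aobj_k$ is a trace of $\aobj_k$; applying Lemma~\ref{lem:t0_lin} to that trace and then clause (iii) to $q_k$ yields $\aseqord\!\downarrow_{V_k\cap\updates}\cdot q_k\in\Spec_k$, hence $\aseqord\!\downarrow_{V_k\cap\updates}\in\Spec_k$ because a query leaves the specification state unchanged. Collecting the per-object memberships gives the whole sequence in $\Spec_1\comp\cdots\comp\Spec_n$, establishing clause (iii).

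Two remarks finish the argument. Query-updates require no new idea: their update halves are handled by clause (ii) exactly like plain updates, and all the projections above commute with $\gamma$ since $\gamma$ acts objectwise. And the passage from two objects to an arbitrary finite family is a routine induction using associativity of $\comp$ on objects and on specifications. The only genuinely subtle step is the surrogate-query construction inside clause (iii); the rest is bookkeeping about projections of traces, histories, and linearizations.
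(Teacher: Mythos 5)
Your proposal is correct and follows the paper's route for this theorem: project the global execution-order linearization onto each object, observe (via the footnoted fact about object compositions) that each projection is consistent with that object's own visibility, and invoke Lemma~\ref{lem:t0_lin} to conclude that each projection is a valid per-object RA-linearization, whence the global sequence is an RA-linearization w.r.t.\ the interleaving composition of the specifications. The only place you go beyond the paper's (deliberately terse) argument is clause~(iii) for the updates of \emph{other} objects visible to a query; your surrogate-query insertion handles this correctly, and an equivalent shortcut is to note that this set of updates is downward-closed w.r.t.\ the per-object visibility, apply Lemma~\ref{lem:t0_lin} to a per-object linearization that places it first, and use prefix-closure of the (operationally defined) specification.
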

\vspace{-\topsep}

\subsection{Composition: Timestamp-Order Linearizability}

\begin{figure}[t]
  \centering
  \includegraphics[scale=.6]{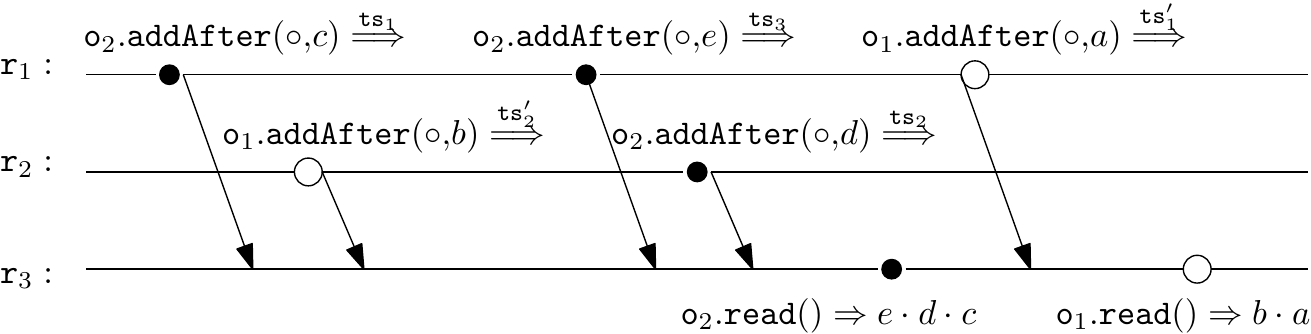}
\vspace{-2mm}
  \caption{A history in the composition $\otimes$ of two RGAs.}
  \label{fig:negative_ts_composition}
  \vspace{-2mm}
\end{figure}

Theorem~\ref{th:comp_execution_order} does not apply to objects that admit timestamp-order linearizations. The ``unrestricted'' object composition $\otimes$ allows different objects to generate timestamps independently, and in ``conflicting'' orders along some execution. For instance, \figureautorefname~\ref{fig:negative_ts_composition} shows a history with two RGA objects $\aobj_1$ and $\aobj_2$. We assume that $\ats_1 < \ats_2 < \ats_3$ and $\ats'_1 < \ats'_2$ (the order between other timestamps is not important). The operations of $\aobj_1$, resp., $\aobj_2$, can be linearized to\\[3pt]
\centerline{
\(
\begin{array}{l}
\bullet\ \ \aobj_1.\alabelshort[{\tt addAfter}]{\circ,a}\cdot \aobj_1.\alabelshort[{\tt addAfter}]{\circ,b}\cdot \aobj_1.\alabellongind[{\tt read}]{}{b\cdot a}{} \\
 \bullet\ \  \aobj_2.\alabelshort[{\tt addAfter}]{\circ,c}\ \cdot\ \aobj_2.\alabelshort[{\tt addAfter}]{\circ,d}\ \cdot\kern20pt\\
  \hspace{\fill}\aobj_2.\alabelshort[{\tt addAfter}]{\circ,e}\ \cdot\ \aobj_2.\alabellongind[{\tt read}]{}{e\cdot d\cdot c}{}
\end{array}
\)
}\\[3pt]
These are the only \crdtlinearization{s} possible. There is no ``global'' linearization consistent with these per-object linearizations: ordering $\alabelshort[{\tt addAfter}]{\circ,a}$ before $\alabelshort[{\tt addAfter}]{\circ,b}$ implies that $\alabelshort[{\tt addAfter}]{\circ,e}$ occurs before $\alabelshort[{\tt addAfter}]{\circ,d}$ which contradicts the second linearization above.
We solve this problem by constraining the composition operator $\comp$ such that intuitively, all objects share a common timestamp generator. 
This ensures that each new timestamp is bigger than the  timestamps used by operations delivered to a replica, independently of the object to which they pertain. For instance, the history of \figureautorefname~\ref{fig:negative_ts_composition} would not be admitted because $\ats_1'$ should be bigger than $\ats_3$ (since the operation that received $\ats_3$ from the timestamp generator originates from the same replica as the operation receiving $\ats_1'$ at a later time) and $\ats_2$ should be bigger than $\ats_2'$. These two constraints together with $\ats_1'<\ats_2'$ contradict $\ats_2 < \ats_3$.
While this requires a modification of the algorithms, where the timestamp generator is a parameter, this has no algorithmic or run-time cost, and in fact a similar idea have been suggested in the systems literature (e.g.~\cite{EnesPB17}).

\begin{figure}[t]
  \centering
  \footnotesize
\[
  \inferrule
  {\text{\sc Operation} \hspace{40pt}
    \alabel = \aobj_k.\alabellongind{\argv}{\retv}{(i,\ats)}\mbox{ with } k\in \{1,2\} \\ (\gstates_k, \avisord_k, \downstreams_k) \xrightarrow{\src{\arep}{\alabel}}_{k} (\gstates_k', \avisord_k', \downstreams_k') \hspace{30pt} \phantom{ }\\
  (\gstates'_{k'}, \avisord'_{k'}, \downstreams'_{k'}) = (\gstates_{k'}, \avisord_{k'}, \downstreams_{k'})\mbox{ for $k'\neq k$} \\
  \gstates_1(\arep) = (\alabelset_1, \astate_1) \\ \gstates_2(\arep) = (\alabelset_2, \astate_2) \\
  \ats\neq\bot\implies (\,\forall \alabel'\in\alabelset_1\cup\alabelset_2.\ \tsof(\alabel') < \ats\,) \\
  \forall \alabel'\in \labeldom{\avisord_1\cup\avisord_2}.\ \tsof(\alabel') \neq \ats
  }
  {((\gstates_1, \avisord_1, \downstreams_1),(\gstates_2, \avisord_2, \downstreams_2)) \xrightarrow{\src{\arep}{\alabel}} (\gstates_1', \avisord_1', \downstreams_1'),(\gstates_2', \avisord_2', \downstreams_2')}
\]
\vspace{-3mm}
\caption{The transition rule for generators in the object composition operator $\otimes_{\tsof}$.}
  \label{fig:comp-ts}
\vspace{-2mm}
\end{figure}

\ifshort
\else
Using again Lemma~\ref{lem:trace_closure}, which ensures that the generators of ``concurrent'' operations can be executed in any order, we get that for any \crdtlinearizable{} object $\aobj$ which admits timestamp-order linearizations, every linearization of a history of $\aobj$ which is also consistent with the order between timestamps is a valid \crdtlinearization{}.
For a history $\ahist=(\alabelset,\avisord)$, let $\atsord{\ahist}$ be an order between the labels in $\alabelset$ such that $\alabel_1\atsord{\ahist}\alabel_2$ iff $\tsof_\ahist(\alabel_1) < \tsof_\ahist(\alabel_2)$.
\vspace{-5pt}

\begin{lemma}
Let $\aobj$ be an object which is \crdtlinearizable{} w.r.t. a specification $\Spec$ and admits timestamp-order linearizations. Then, for every history $\ahist=(\alabelset,\avisord)$ of $\aobj$ and every sequence $(\alabelset,\alinord)$ which is consistent with $\avisord$ and $\atsord{\ahist}$, we have that $(\alabelset,\alinord)$ is an \crdtlinearization{} of $\ahist$ w.r.t. $\Spec$.
\end{lemma}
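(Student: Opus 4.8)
The plan is to follow the template of the proof of Lemma~\ref{lem:t0_lin}: exhibit an actual trace of $\aobj$ whose generator order is $\alinord$, and then invoke the hypothesis that $\aobj$ admits timestamp-order linearizations on that trace. The new ingredient, relative to the execution-order case, is that the timestamp-order linearization of a trace is obtained from its generator order by first sorting on the history-relative timestamps; so the assumption that $\alinord$ is consistent with $\atsord{\ahist}$ is exactly what guarantees that $\alinord$ is reproduced verbatim.

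First I would pick any trace $\atrace$ of $\aobj$ with $\hist{\atrace}=\ahist$, which exists because $\ahist$ is a history of $\aobj$. Since $\alinord$ is consistent with $\avisord$, Lemma~\ref{lem:trace_closure} applied to $\atrace$ and $\alinord$ produces a trace $\atrace'$ of $\aobj$ with $\hist{\atrace'}=\ahist$ such that, for all $\alabel_1,\alabel_2\in\alabelset$, $\src{}{\alabel_1}$ precedes $\src{}{\alabel_2}$ in $\atrace'$ iff $\alabel_1$ precedes $\alabel_2$ in $\alinord$. Because $\atrace'$ and $\atrace$ have the same history, the history-relative timestamp $\tsof_\ahist(\alabel)$ of each label $\alabel$ is unchanged.

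Next I would check that $(\alabelset,\alinord)$ is precisely the timestamp-order linearization of $\ahist$ induced by $\atrace'$. By definition that linearization puts $\alabel_1$ before $\alabel_2$ iff $\tsof_\ahist(\alabel_1)<\tsof_\ahist(\alabel_2)$, or $\tsof_\ahist(\alabel_1)=\tsof_\ahist(\alabel_2)$ and $\src{}{\alabel_1}$ precedes $\src{}{\alabel_2}$ in $\atrace'$; this is a strict total order on $\alabelset$, since the generator order is total on $\alabelset$ and ``having the same timestamp'' is an equivalence. If $\tsof_\ahist(\alabel_1)<\tsof_\ahist(\alabel_2)$ then $\alabel_1\atsord{\ahist}\alabel_2$, so $\alabel_1$ precedes $\alabel_2$ in $\alinord$ by consistency with $\atsord{\ahist}$; if the timestamps are equal and $\src{}{\alabel_1}$ precedes $\src{}{\alabel_2}$ in $\atrace'$ then $\alabel_1$ precedes $\alabel_2$ in $\alinord$ by the choice of $\atrace'$. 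Conversely, if $\alabel_1$ precedes $\alabel_2$ in $\alinord$, then $\tsof_\ahist(\alabel_2)<\tsof_\ahist(\alabel_1)$ is impossible (it would force $\alabel_2$ before $\alabel_1$ by $\atsord{\ahist}$-consistency), so either $\tsof_\ahist(\alabel_1)<\tsof_\ahist(\alabel_2)$, or the timestamps coincide and the choice of $\atrace'$ gives $\src{}{\alabel_1}$ before $\src{}{\alabel_2}$ in $\atrace'$. Hence the two orders agree. Since $\aobj$ admits timestamp-order linearizations, the timestamp-order linearization of $\ahist$ induced by $\atrace'$ is an \crdtlinearization{} of $\ahist$ w.r.t. $\Spec$; by the previous paragraph it equals $(\alabelset,\alinord)$, which finishes the argument. (As throughout this subsection, query-update rewritings are ignored.)

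I expect the only delicate point to be the bookkeeping around ties, i.e., operations sharing a timestamp because one of them has it only as a ``virtual'' timestamp: one must be sure that Lemma~\ref{lem:trace_closure} may legitimately order their generators as $\alinord$ demands. This is harmless because visibility already constrains them correctly: an operation whose virtual timestamp equals $\ats$ sees the unique operation that generated $\ats$ and hence follows it in every visibility-consistent sequence, while among the remaining operations with timestamp $\ats$ causal delivery imposes no order, so their generators can be permuted freely. The other point worth stating explicitly is that the relation defining the timestamp-order linearization really is a total order on $\alabelset$, which is what makes ``the'' timestamp-order linearization of $\atrace'$ well defined.
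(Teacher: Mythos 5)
Your proof is correct and follows exactly the route the paper intends: the paper justifies this lemma only by the one-sentence remark that Lemma~\ref{lem:trace_closure} lets the generators of concurrent operations be reordered, after which admittance of timestamp-order linearizations applies, and your argument fills in precisely that outline (including the correct reading of the timestamp-order linearization as ``sort by $\tsof_\ahist$, break ties by generator order''). The extra care you take with ties and with verifying that the reconstructed trace's timestamp-order linearization coincides with $\alinord$ is exactly the bookkeeping the paper leaves implicit.
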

\vspace{-5pt}
\fi

We define a restriction $\otimes_{\tsof}$ of the object composition $\otimes$ such that the set of histories $h=(\alabelset,\avisord)$ in the composition $\aobj_1\otimes_{\tsof}\aobj_2$ satisfy the property that the order between timestamps (of all objects) is consistent with the visibility relation $\avisord$ (i.e., $\avisord\ \cup \atsord{\ahist}$ is acyclic). With respect to the ``unrestricted'' composition $\otimes$ defined in \sectionautorefname~\ref{ssec:comp_intro}, we only modify the transition rule corresponding to generators, as shown in \figureautorefname~\ref{fig:comp-ts}. This ensures that a new generated timestamp is bigger than all the timestamps ``visible'' to the replica executing that generator (irrespectively of the object). 
\ifshort
\else
Its extension to a set of objects is defined as usual. 
\fi
The composition operator  $\otimes_{\tsof}$ is called \emph{shared timestamp generator composition}. Practically, if we were to consider the standard timestamp mechanism used in CRDTs, i.e., each replica maintains a counter which is increased monotonically with every new operation (originating at the replica or delivered from another replica) and timestamps are defined as pairs of replica identifiers and counter values, then $\otimes_{\tsof}$ can be implemented using a ``shared'' counter which increases monotonically with every new operation, independently of the object on which it is applied.

The following theorem shows that the composition of \crdtlinearizable{} objects that admit execution-order or timestamp-order linearizations
is \crdtlinearizable{}, provided that all the objects share the same timestamp generator.

\begin{theorem}\label{th:comp_all}
The shared timestamp generator composition of a set of \crdtlinearizable{} objects that admit execution-order or timestamp-order linearizations is \crdtlinearizable{}.
\end{theorem}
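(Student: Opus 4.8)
The plan is to show that every history $h=(\alabelset,\avisord)$ of a shared timestamp generator composition $\aobj_1\otimes_{\tsof}\cdots\otimes_{\tsof}\aobj_n$ admits a single global \crdtlinearization{} with respect to the composed specification $\Spec=\Spec_1\comp\cdots\comp\Spec_n$. First I would combine the per-object query-update rewritings $\gamma_j$ into a global rewriting $\gamma$ acting as $\gamma_j$ on labels of $\aobj_j$ (timestamp-order objects need no rewriting, so this is only relevant for the execution-order components), and work with $\gamma(h)=(\alabelset',\avisord')$. The crucial consequence of the definition of $\otimes_{\tsof}$ is that $\avisord'\cup\atsord{\gamma(h)}$ is acyclic, where $\atsord{\gamma(h)}$ compares labels by their (possibly virtual) timestamps computed over \emph{all} objects; in particular $\avisord'\cup R$ is acyclic, where $R$ compares any two timestamp-generating operations by the actual timestamps they generate, since $R\subseteq\atsord{\gamma(h)}$. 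Let $\alinord$ be any linear extension of $\avisord'\cup R$ (equivalently: linearize $h$ consistently with visibility and real timestamps, then splice each query-update into its query immediately followed by its update; preservation of acyclicity under $\gamma$ is routine).

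The second step is to verify that for each object $\aobj_j$ the projection $\alinord\!\downarrow_{\aobj_j}$ is an \crdtlinearization{} of the per-object history $h_j$ — the restriction of $h$ to labels of $\aobj_j$, whose visibility relation is exactly $\avisord$ restricted to those labels — with respect to $\Spec_j$ and $\gamma_j$. When $\aobj_j$ admits execution-order linearizations this follows directly from Lemma~\ref{lem:t0_lin}, which only asks for consistency with the per-object visibility. When $\aobj_j$ admits timestamp-order linearizations I would invoke a mild strengthening of Theorem~\ref{th:timestamp_order_lin}: any linearization of a history of $\aobj_j$ that is consistent with visibility and with the order of the \emph{real} timestamps generated by its operations is an \crdtlinearization{}. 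Its proof is the one of Theorem~\ref{th:timestamp_order_lin} observed more carefully: by $\mathsf{Commutativity}$ and Lemma~\ref{lem:replica_states} each replica state is the composition of the visible effectors in \emph{any} visibility-consistent linearization order, and the timestamp guard of $\mathsf{Refinement}_{\tsof{}}$ is discharged precisely because timestamp-generating effectors are applied in increasing order of their real timestamps, the placement of the remaining operations being immaterial since queries are constrained only through their visible sub-sequence. Since $\alinord\!\downarrow_{\aobj_j}$ is consistent with $\avisord$ restricted to $\aobj_j$ and with $R$ restricted to $\aobj_j$, one of these two arguments applies.

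Third, I would lift per-object validity to the composition. Clause (i) of Definition~\ref{definition:ralinearizability1} holds by construction of $\alinord$. For clause (ii), $(\alinord\!\downarrow_{\updates})\!\downarrow_{\aobj_j}=(\alinord\!\downarrow_{\aobj_j})\!\downarrow_{\updates}\in\Spec_j$ for every $j$ by the per-object clause (ii), so the update projection of $\alinord$ lies in $\Spec$. For clause (iii), fix a query $\alabel_{\mathsf{qr}}$, say of $\aobj_i$. Its $\aobj_i$-projection is exactly the per-object clause (iii) instance for $\alabel_{\mathsf{qr}}$, valid because the $\aobj_i$-updates visible to $\alabel_{\mathsf{qr}}$ in $h$ and in $h_i$ coincide. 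For $j\ne i$ the $\aobj_j$-projection is the sequence $U_j$ of $\aobj_j$-updates visible to $\alabel_{\mathsf{qr}}$, in $\alinord$ order; $U_j$ is downward-closed under $\avisord$ (causal delivery within $\aobj_j$), hence the $\aobj_j$-component of the origin replica of $\alabel_{\mathsf{qr}}$, at the point where $\alabel_{\mathsf{qr}}$ executes, has applied exactly the effectors of $U_j$. By Lemma~\ref{lem:replica_states} that replica state is the composition of those effectors in $\alinord$ order, and by $\mathsf{Refinement}$ (resp.\ $\mathsf{Refinement}_{\tsof{}}$, whose guard is again discharged by $R$) its image under the refinement mapping is reached in $\Spec_j$ along $U_j$; thus $U_j\in\Spec_j$. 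Hence $\alinord$ is an \crdtlinearization{} of $\gamma(h)$ w.r.t. $\Spec$, i.e.\ $h$ is \crdtlinearizable{} w.r.t. $\Spec$ and $\gamma$, and the execution-order case follows as the special case where $R$ is empty.

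The main obstacle is the second step, reconciling the per-object timestamp orders with the global one. Taking a linear extension of $\avisord$ together with the \emph{per-object} virtual-timestamp orders fails because that union need not be acyclic, and a linear extension of the global virtual-timestamp order $\atsord{\gamma(h)}$ is not consistent with any single $\atsord{h_j}$ — global virtual timestamps inflate the per-object ones on the non-timestamp-generating operations. The key realization is that only the order of the \emph{real} generated timestamps is actually needed, so that $\otimes_{\tsof}$'s acyclicity of $\avisord'\cup\atsord{\gamma(h)}$ can be weakened to acyclicity of $\avisord'\cup R$ and still suffice; the remaining work — the cross-object instances of clause (iii), preservation of acyclicity under $\gamma$, and the bookkeeping relating $h_j$ to the projection of $h$ — is routine given the single-object machinery of Sections~\ref{subsec:time order of execution as linearization} and~\ref{subsec:time-stamp order as linearizabtion}.
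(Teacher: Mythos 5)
Your proposal is correct, and it follows the same overall architecture that the paper's supporting lemmas suggest (build one global linearization consistent with visibility and timestamps, project it onto each object, invoke a per-object sufficiency lemma, and recombine via the interleaving definition of $\Spec_1\comp\Spec_2$), but it swaps out the key per-object lemma for the timestamp-order case. The paper's stated lemma requires the per-object sequence to be consistent with $\avisord$ \emph{and} with $\atsord{\ahist}$, i.e.\ with the order on virtual timestamps computed within that object; your observation that this cannot be used directly is accurate on both counts: the union of $\avisord$ with the per-object orders $\atsord{h_j}$ can indeed be cyclic under $\otimes_{\tsof}$ (a non-generating operation of $\aobj_j$ can have a small $\aobj_j$-virtual timestamp while seeing a large timestamp of $\aobj_k$, which lets visibility edges close a cycle through the other object), and a linear extension of the \emph{global} order $\atsord{\ahist}$ need not project to sequences consistent with any $\atsord{h_j}$, precisely because global virtual timestamps inflate the per-object ones on queries and removes. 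Your fix --- weakening the required consistency to the order $R$ on \emph{real} generated timestamps, and re-proving the analogue of Theorem~\ref{th:timestamp_order_lin} under that weaker hypothesis --- is exactly what is needed, and it is sound because the guard $\tsof(\alabel)\not<\tsof(\sigma)$ in $\mathsf{Refinement}_{\tsof{}}$ only ever compares real timestamps stored by previously applied generating effectors, so the placement of non-generating operations is constrained only through clause~(\ref{it:query}) of Definition~\ref{definition:ralinearizability1}, which depends on visibility alone. The cross-object instances of clause~(\ref{it:query}) (showing $U_j\in\Spec_j$ for $j\neq i$ via Lemma~\ref{lem:replica_states} applied to the $\aobj_j$-component of the origin replica) and the treatment of query-update rewritings are handled correctly. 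In short, your route buys a proof that actually closes over the composition, at the cost of stating and proving the strengthened single-object lemma, whereas the paper's per-object lemma as literally stated does not compose.
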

\vspace{-5pt}

\ifshort
\else
The theorem above shows that any shared timestamp generator composition of the objects mentioned in next section is RA-linearizable.
\fi

\section{Mechanizing RA-Linearizability Proofs}\label{sec:mec}

To validate our approach, we considered a range of CRDTs listed
in~\figureautorefname~\ref{fig:crdt-implementaton of this paper, their
  correctness, and their interface} and mechanized their
RA-linearizability proofs using Boogie~\cite{BarnettCDJL05}, a
verification tool.
More precisely, we mechanized the proofs of conditions like $\mathsf{Commutativity}$ and
$\mathsf{Refinement}$ which imply RA-linearizability by the results in Section~\ref{sec:proofs}.
Beyond operation-based CRDTs (discussed in the paper), we have also
considered \emph{state-based} CRDTs, where an update occurs only at
the origin, and replicas exchange their \emph{states} instead of operations, and
states from other replicas are merged at the replica receiving them.
The merge function corresponds to the least upper bound operator in a
certain join semi-lattice defined over replica states.

For operation-based CRDTs, we have mechanized the proof of a strenghtening of 
$\mathsf{Commutativity}$ that avoids reasoning about traces and the proof of $\mathsf{Refinement}$ (or
$\mathsf{Refinement}_{\tsof{}}$). Concerning $\mathsf{Commutativity}$, our proofs encode 
two effectors as a single procedure which executes on two equal copies of the replica state. 
In some cases, the precondition of this procedure encodes conditions which are satisfied 
anytime the two effectors are concurrent, e.g., the effector of an {\tt add} and resp., {\tt remove} of OR-Set
are concurrent when the argument {\tt k} of {\tt add} is not in the
argument {\tt R} of {\tt remove}. At least for the CRDTs we consider,
such characterizations are obvious and apply generically to any
conflict-resolution policy based on unique identifiers. 
In some cases, the effectors commute even if they are not concurrent, so no additional precondition is needed. 
We prove that the resulting states are identical after performing the effectors in different order in each of the states.
$\mathsf{Refinement}$ (or
$\mathsf{Refinement}_{\tsof{}}$) is reduced to proving that the
refinement mapping is an inductive invariant for a lock-step execution
of the CRDT implementation and its specification. 

For state-based CRDTs, we have identified a set of conditions similar
to those of operation-based CRDTs that imply
RA-linearizability (see~\cite{arxiv}). In this case, we don't rely on the 
causal delivery assumption.
Extending their semantics with an auxiliary variable maintaining a
correspondence between replica states and sets of operations that
produced them, we extract the visibility relation between operations
as in the case of operation-based CRDTs.
This enables a similar reasoning about RA-linearizability.
In particular, $\mathsf{Commutativity}$ is
replaced by few conditions that now characterize the relationship
between applying operations at a given replica and the merge
function.

\begin{figure}[t]
{\footnotesize
  \begin{minipage}[t]{.6\linewidth}
  \begin{tabular}{|l|c|r|}
    \hline
    {\bf CRDT} & {\bf Imp.} &  {\bf Lin.}\\
    \hhline{|===|}
    Counter~\cite{ShapiroPBZ11}& OB & EO \\
    \hline
    PN-Counter~\cite{ShapiroPBZ11}& SB &  EO \\
    \hline
    LWW-Register~\cite{DBLP:journals/rfc/rfc677} & OB & TO\\
    \hline
    Multi-Value Reg.~\cite{DBLP:conf/sosp/DeCandiaHJKLPSVV07} & SB &  EO\\
    \hline
    LWW-Element Set~\cite{ShapiroPBZ11}& SB &  TO\\
    \hline
\end{tabular}
\end{minipage}
  \begin{minipage}[t]{.35\linewidth}
\begin{tabular}{|l|c|r|}
  \hline
  {\bf CRDT} & {\bf Imp.} &  {\bf Lin.}\\
  \hhline{|===|}
  2P-Set~\cite{ShapiroPBZ11}& SB &  EO\\
  \hline
  OR-Set~\cite{ShapiroPBZ11}& OB &  EO\\
  \hline
  RGA~\cite{RohJKL11}& OB &  TO \\ 
  \hline
  Wooki~\cite{DBLP:conf/wise/WeissUM07}& OB & EO \\
  \hline
\end{tabular}

\end{minipage}
}
\vspace{-2mm}
\caption{CRDTs proved RA-linearizable and the class of linearizations
  used.
  SB: State-Based, OB: Operation-Based, EO: Execution-Order, TO: Timestamp-Order.
}
\label{fig:crdt-implementaton of this paper, their correctness, and their interface}
\vspace{-4mm}
\end{figure}

\section{Related Work}
\label{sec:rel-work}

\noindent
{\bf Correctness Criteria.}
\citet{BurckhardtGYZ14} gives the first formal framework
where CRDTs and other weakly consistent replicated systems can be
specified.
Their CRDT specifications are defined in terms of sets of \emph{partial orders}
as opposed to our sequential specifications, which we think are
easier to reason about when verifying clients.
Beyond simpler specifications, RA-linearizability is related to their formalization of
causal consistency, called \emph{causal convergence} in~\cite{DBLP:conf/popl/BouajjaniEGH17}.
Overall RA-linearizability differs from causal convergence in three
points: (1) query-update rewritings, which enable sequential
specifications and avoid partial orders, (2) the linearization
projected on updates must be admitted by the specification
(intuitively, this ensures that the "final" convergence state is valid
w.r.t.
the specification), and (3) the linearization is required to be
consistent with the visibility order from the execution, and
not an arbitrary one as in causal convergence.
The latter difference makes causal convergence not compositional.

Regarding convergence, RA-linearizability implies that there is a unique total order of updates, and 
therefore if at some point all updates are visible to all replicas, all subsequent query 
operations at any replica will return the same value. This is observably equivalent to strong eventual consistency~\cite{ShapiroPBZ11,GomesKMB17,ZellerBP14}. 
RA-linearizability is also stronger than the session guarantees of~\citet{TerryDPSTW94},
but weaker than sequential consistency~\cite{DBLP:journals/tc/Lamport79} and linearizability~\cite{HerlihyW90}.
RA-linearizable objects that admit execution-order linearizations are close to being linearizable since
the operations are linearized as they were issued at the origin replica, relative to wall-clock time. 
This is similar to linearizability, where each operation appears to take effect instantaneously 
between the wall-clock time of its invocation its response. Unlike linearizability,
RA-linearizability allows queries to return a response 
consistent with only a subsequence of its linearized-before
operations.

\noindent
{\bf Sequential Specifications for CRDTs.}
\citet{PerrinMJ14} provides Update Consistency (UC), a criterion which to
the best of our knowledge is the first to consider sequential specifications and 
characterize linear histories of operations. 
However UC is not compositional due to
an existential quantification over visibility relations like in causal convergence.
Moreover, \citet{PerrinMJ14} doesn't investigate  UC proof methodologies.

\citet{JagadeesanR18} provide a correctness criterion called \emph{SEC}, 
which differs from \CRDTLinshort{} in several points:
\begin{inparaenum}
\item Firstly, \CRDTLinshort{} 
  has a global total order for updates, unlike SEC whose definition is
  quite complex.
\item
  \label{it-dependencies} Secondly, CRDT specifications in SEC are
  parameterized by a \emph{dependency} relation at the
  level of the type's API.
  Then, SEC assumes that all independent operations commute and
  disregards their order even when issued by the same client.
  It is unclear how such a specification could adequately capture
  systems enforcing session guarantees~\cite{TerryDPSTW94}.
 \ifshort
 \else
  \CRDTLinshort{} strives to preserve the guarantees of the
  underlying system network
  since 
  certain
  operations that appear independent at the API level could be
  made data or control dependent by the client.
  This happens in our query-update rewritings where queries
  provide arguments for the subsequent update.
\fi
\item While SEC is also compositional, since operations
  from different objects are assumed independent,
  a history of two  different SEC objects is trivially SEC since the order between
  operations of different objects is ignored.
  We find this notion of composition problematic since 
  the composition of specifications cannot capture
  causality between different objects, a common
  pattern when writing distributed applications (e.g. for referential
  integrity in a key-value store).
  In \CRDTLinshort{} the composition of a set of objects respects the
  client's causality as illustrated by the failure to combine
  some per-object linearizations in~\figureautorefname~\ref{fig:negative_composition}.~\footnote{There are however per-object linearizations for this history which can be
    merged into a global linearization
    (see~\sectionautorefname~\ref{sec:compositionality}).}
\end{inparaenum}

\noindent
{\bf Verification of CRDTs}
There are several works that approach the problem of verifying that a
CRDT implementation is correct w.r.t. a specification.
In~\cite{BurckhardtGYZ14, AttiyaBGMYZ16, Burckhardt14} along with the
formal specification, proofs of correctness of implementations are
given for several CRDTs. Our $\mathsf{Refinement}$ property is inspired by
the Replication Aware Simulations in~\cite{BurckhardtGYZ14}.
\ifshort
\else
Compared to these simulations, our proofs remain at a much more
abstract level -- and are therefore simpler -- since our
specifications are simpler.
We acknowledge however that their method is more general than ours
which only applies to CRDTs satisfying \CRDTLinshort{}.
\fi
\citet{ZellerBP14} and~\citet{GomesKMB17} provide frameworks for the
verification of CRDTs in Isabelle/HOL\@.
\ifshort
\else
Both works introduce a methodology to specify and prove CRDTs
correct.
\fi
Their
proofs are similar to the simulations of~\cite{BurckhardtGYZ14},
albeit in a different specification language also based on partial
orders.

\section{Conclusion}
\label{sec:conclusion}

We presented RA-linearizability, a correctness criterion inspired by
linearizability, intended to simplify the specification of CRDTs 
by resorting to sequential reasoning for the specifications.
We provide proof methodologies for \CRDTLinshort{} for some
well documented CRDTs, and we prove that under certain conditions
these proofs guarantee the compositionality of \CRDTLinshort{}.
In the extended version of this paper~\cite{arxiv} we show 
how our techniques extend to state-based CRDTs.

There are some limitations of \CRDTLinshort{}. Firstly, as we showed
before, some CRDTs might not be RA-linearizable under a certain
API, but a slight change in the API renders them
\crdtlinearizable{}. We would like to investigate what constitutes an
API that enables \CRDTLinshort{} specifications.
secondly, while we argue that \CRDTLinshort{} simplifies
specifications, we leave as future work to show whether it can be
effectively used to verify client applications of a CRDT.

\bibliography{biblio,dblp}

\newpage

\appendix

\section{Transition Rules of the CRDT Semantics}

\begin{figure}[t]
  \begin{subfigure}[!ht]{.3\linewidth}
    \includegraphics[scale=.7]{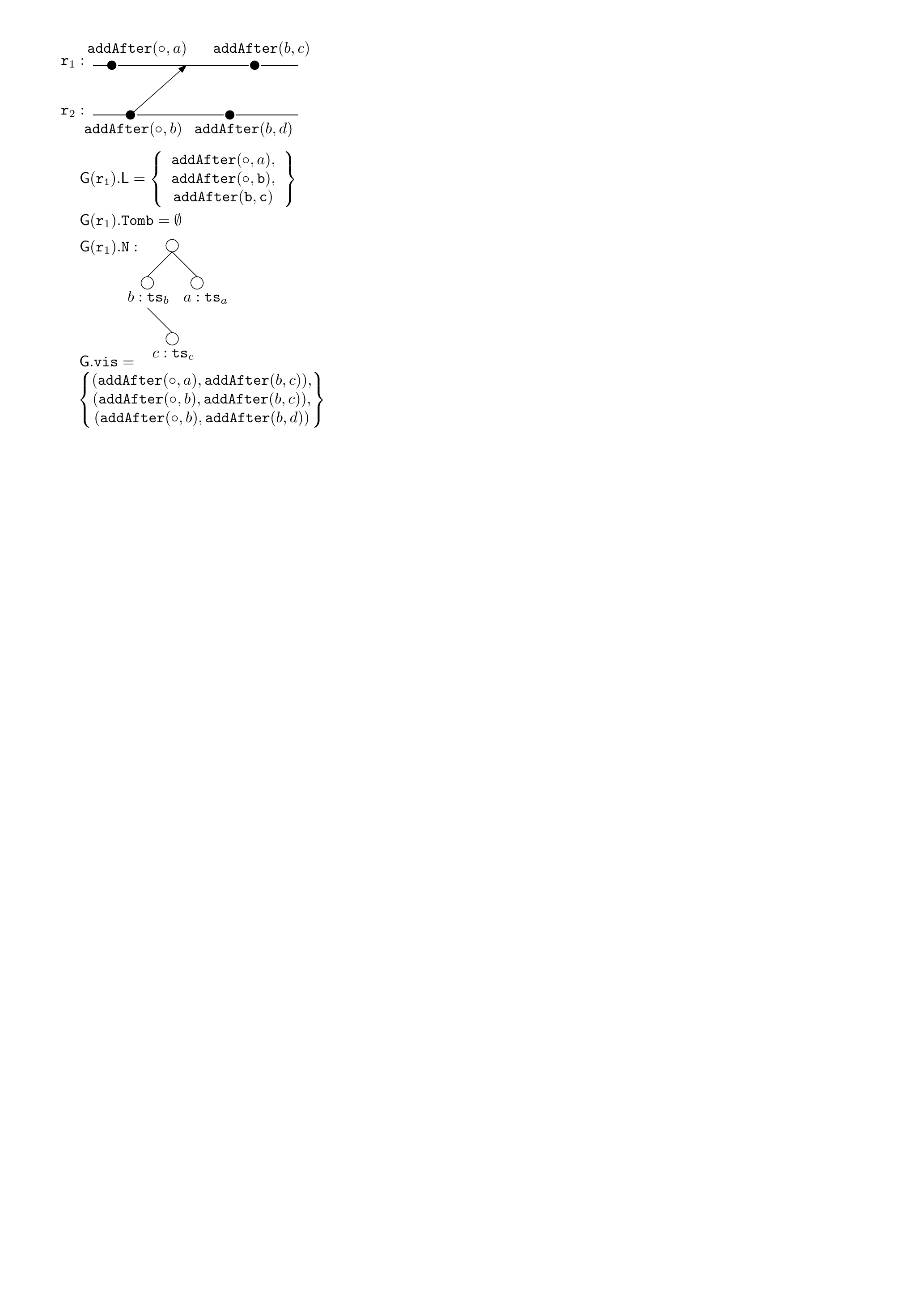}
      \vspace{1.3cm}
    \caption{}
    \label{fig:rga-sem-1}
  \end{subfigure}
  \begin{subfigure}[!ht]{.3\linewidth}
      \includegraphics[scale=.7]{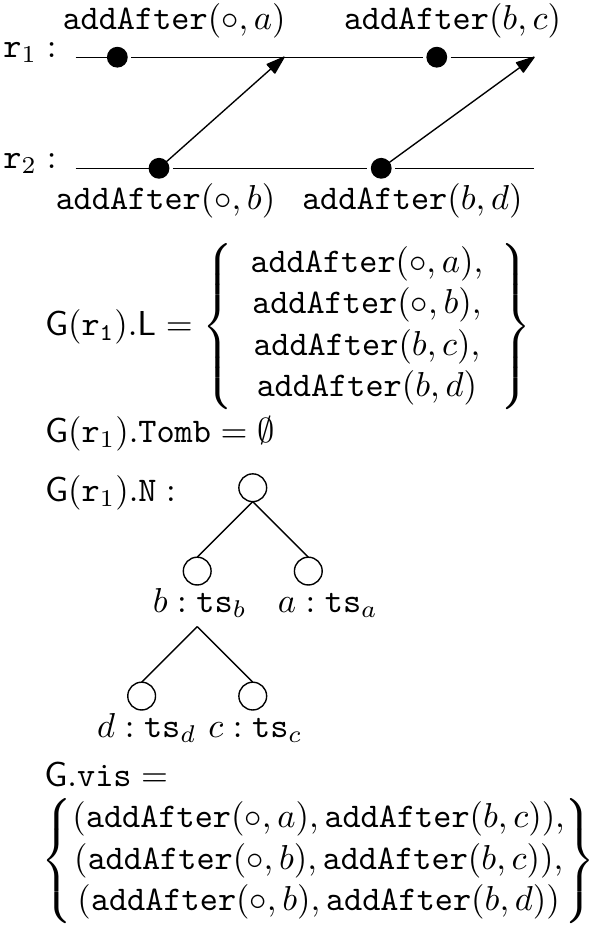}
      \vspace{1cm}
    \caption{}
    \label{fig:rga-sem-2}
  \end{subfigure}
  \begin{subfigure}[!ht]{.3\linewidth}
    \includegraphics[scale=.7]{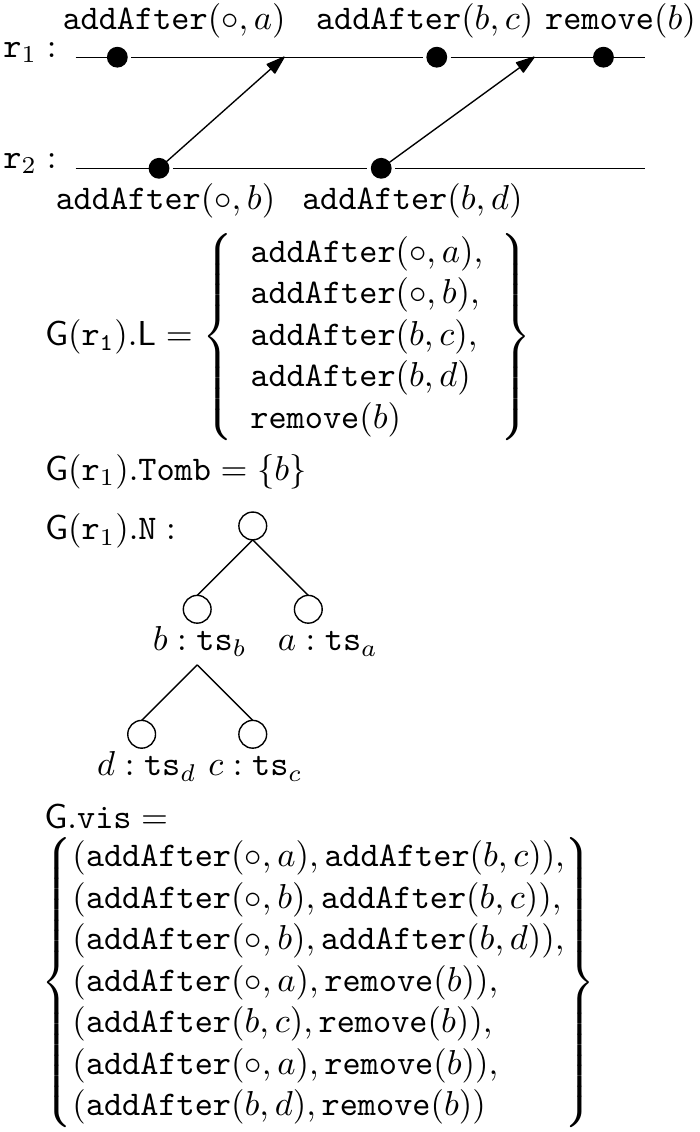}
    \caption{}
    \label{fig:rga-sem-3}
  \end{subfigure}
  \caption{Example of the semantics of RGA.}
  \label{fig:rga-sem}
\end{figure}

\autoref{fig:rga-sem} shows how some components of the semantics
progress according to the rules of~\autoref{fig:crdt-opsem} for the
RGA data type.
In particular we shown: the local labels of replica $\arep_1$
($\gstates(\arep_1).\alabelset$); its state, where we remove the
$\astate$ for succinctness (then $\gstates(\arep_1).\astate.\mathsf{N}$
becomes $\gstates(\arep_1).\mathsf{N}$); and the global visibility
relation $\gstates.\mathtt{vis}$.
The transition from~\autoref{fig:rga-sem-2} to~\autoref{fig:rga-sem-3}
shows an 
transition where the operation
$\alabelshort[{\tt remove}]{b}$ is executed by replica $\arep_1$.
Notice in particular how the global visibility relation is extended.

In~\autoref{fig:rga-sem}, the transition from~\autoref{fig:rga-sem-1}
to~\autoref{fig:rga-sem-2} corresponds to a 
transition which extends the visibility of the operation
$\alabelshort[{\tt addAfter}]{b,d}$ to $\arep_1$.
Notice that this is reflected in the $\gstates(\arep_1).\alabelset$
component.
The relation $\gstates.\mathtt{vis}$ does not change since this
relation only changes when a new operation is executed at the source
replica.

\section{Implementations of Operation-Based CRDT and Their Sequential Specifications}
\label{sec:implementation of operation-based CRDT and their sequential specifications}

\subsection{Operation-Based Counter and Its Sequential Specification}
\label{subsec:operation-based counter and its sequential specification}

\noindent {\bf Implementation}: The operation-based counter of \cite{ShapiroPBZ11} is shown in Listing~\ref{lst:operation-based counter}. It implements a counter interface with operations: ${\tt inc}()$, ${\tt dec}()$ and ${\tt read}$. A payload is a integer $ctr$. {\tt inc} increase the counter value of replica state by $1$, {\tt dec} decrease the counter value of replica state by $1$, and {\tt read} returns the counter value of replica state.

\begin{figure}[t]
\begin{lstlisting}[frame=top,caption={Pseudo-code of operation-based counter},
captionpos=b,label={lst:operation-based counter}]
  payload integer ctr
  initial ctr = 0

  inc() :
    generator :
    effector(inc) :
      ctr = ctr + 1

  dec() :
    generator :
    effector(dec) :
      ctr = ctr - 1

  read() :
    return ctr
\end{lstlisting}
\end{figure}

\noindent {\bf Sequential Specification $\specCounter$}: Each abstract state $\abstate$ is a integer. The sequential specification $\specCounter$ of counter is defined by:

\[
  \begin{array}[rcl]{rcl}
    \abstate & \specarrow{\alabelshort[{\tt inc}]{}} & \abstate+1\\
    \abstate & \specarrow{\alabelshort[{\tt dec}]{}} & \abstate-1\\
    \abstate & \specarrow{\alabellong[\mathsf{read}]{}{\abstate}} & \abstate
  \end{array}
\]

Method $\alabelshort[{\tt inc}]{}$ increase the counter value by $1$. Method $\alabelshort[{\tt dec}]{}$ decrease the counter value by $1$. Method $\alabellong[{\tt read}]{}{k}{}$ returns the counter value.

\subsection{Operation-Based Last-Writer-Win Register and its Sequential Specification}
\label{subsec:operation-based last-writer-win register and its sequential specification}

\noindent {\bf Implementation}: The operation-based last-writer-win Register (LWW-Register) of \cite{ShapiroPBZ11} is shown in Listing~\ref{lst:LWW-register}. It implements a register interface with operations: ${\tt write}(a)$ and ${\tt read}$. A payload is a tuple $(x,\ats)$ of a data value $x$ and its timestamp $\ats$. Here $x_0$ is an initial data and $\ats_0$ is an initial timestamp.

${\tt write}(a)$ generates a new timestamp $\ats'$ and modifies the replica state into $(a,\ats')$, and its effector uses the argument $(a,\ats')$. When applying an effector with arguments $(a,\ats')$ from a replica $(b,\ats_b)$, the resulting replica state is $(a,\ats')$ if $\ats_b<\ats'$, and is $(b,\ats_b)$ otherwise. {\tt read} returns the data value of replica state.

\begin{figure}[t]
\begin{lstlisting}[frame=top,caption={Pseudo-code of operation-based LWW-register},
captionpos=b,label={lst:LWW-register}]
  payload X x, timestamp @|$\ats$|@
  initial @|$x_0$|@, @|$\ats_0$|@

  write(a) :
    generator :
      let @|$\ats'$|@ = getTimestamp()
    effector((a,@|$\ats'$|@)) :
      if (@|$\ats<\ats'$|@)
        (x,@|$\ats$|@) = (a,@|$\ats'$|@)

  read() :
    return x
\end{lstlisting}
\end{figure}

\noindent {\bf Sequential Specification $\specReg$}: Each abstract state $\abstate$ is a data value. The sequential specification $\specReg$ of LWW-register is defined by:

\[
  \begin{array}{rcl}
    \abstate
    & \specarrow{\alabelshort[\mathtt{write}]{a}}
    & a\\
    \abstate
    & \specarrow{\alabellong[\mathtt{read}]{}{\abstate}{}}
    & \abstate
  \end{array}
\]

Method $\alabelshort[{\tt write}]{a}$ update the abstract state into data value $a$. Method $\alabellong[{\tt read}]{}{\abstate}{}$ returns the value of the register.

\subsection{Operation-Based Wooki and its Sequential Specification}
\label{subsec:operation-based wooki and its sequential specification}

\noindent {\bf Implementation}: The Wooki implementation of \cite{DBLP:conf/wise/WeissUM07} is given in Listing~\ref{lst:wooki algorithm}. 
Wooki is an optimized version of Woot \cite{DBLP:conf/cscw/OsterUMI06}. To make our introduction more clear, we borrow the notion of W-character and W-string from Woot.

Wooki implements a list interface with operations: ${\tt addBetween}(a,b,c)$, ${\tt remove}(a)$ and ${\tt read}$. A payload is a W-string (introduced below) $string_s$ which stores the information of list content as well as tombstone.

A W-character $w$ is a tuple $(id,v,degree,flag)$, and is used to stroe the information of an element of list. Here $id$ is the identifier of $w$; $v$ is the value of $w$; $degree$ is the degree of $w$; $flag \in \{ \mathit{true},\mathit{false} \}$ is the flag of $w$ and indicates whether $w$ is ``visible'' in list. A identifier $id$ of W-character is a unique timestamp. We use $degree(w)$ to denote the degree of $w$. A degree is a integer that is fixed when its W-character is generated; when inserting an W-character into $string_s$, the degree of W-characters of $string_s$ influenced the position where this W-character will be inserted into.

Let $\circ_{begin}$ and $\circ_{end}$ be two special values. Let $w_{begin} = (\_,\circ_{begin},0,\mathit{true})$ and $w_{end} = (\_,\circ_{end},\mathit{true})$ be two special W-characters. A W-string is an ordered sequence of W-characters $w_{begin} \cdot w_1 \cdot \ldots \cdot w_n \cdot w_{end}$. 
We never remove $\circ_{begin}$ or $\circ_{end}$, and never put value before $\circ_{begin}$ or after $\circ_{end}$. Since $w_{begin}$ and $w_{end}$ is fixed to be the head and tail of a W-string, in the latter part of this paper, when we considering the content of a W-string, we ignore $w_{begin}$ and $w_{end}$. We define the following functions for a W-string $str$:

\begin{itemize}
\setlength{\itemsep}{0.5pt}
\item[-] $\vert str \vert$ returns the length of $str$,

\item[-] $str[p]$ returns the W-character at position $p$ in $str$. Here we assume that the first element of $str$ is at position 0.

\item[-] ${\tt pos}(str,w)$ returns the position of W-character $w$ in $str$.

\item[-] ${\tt insert}(str,w,p)$ inserts W-character $w$ into $str$ at position $p$.

\item[-] ${\tt subseq}(str,w_1,w_2)$ returns the part of $str$ between the W-characters $w_1$ and $w_2$ (excluding $w_1$ and $w_2$).

\item[-] ${\tt contains}(str,a)$ returns true if there exists a W-character in $str$ with value $a$.

\item[-] ${\tt values}(str)$ returns the sequence of visible (with $\mathit{true}$ flag) values of $str$.

\item[-] ${\tt getWchar}(str,a)$ returns the W-character with value $a$ in $str$.

\item[-] ${\tt changeFlag}(str,pos,f)$ changes the flag of $str[pos]$ into boolean value $f$.
\end{itemize}

\begin{figure}[H]
\begin{lstlisting}[frame=top,caption={Pseudo-code of Wooki},
captionpos=b,label={lst:wooki algorithm}]
  payload W-string @|$string_s$|@
  initial @|$string_s$|@ = @|$\epsilon$|@

  addBetween(a,b,c) :
    generator :
      precondition : @|$c \neq \circ_{begin}$|@ @|$\wedge$|@ @|$a \neq \circ_{end}$|@ @|$\wedge$|@ @|$b \neq \circ_{begin}$|@ @|$\wedge$|@ @|$b \neq \circ_{end}$|@ @|$\wedge$|@ @|$contains(string_s,a)$|@ @|$\wedge$|@ @|$contains(string_s,c)$|@ @|$\wedge$|@ @|$pos(string_s,c) > pos(string_s,a)$|@ @|$\wedge$|@ @|$\neg contains(string_s,b)$|@
      let @|$\ats$|@ = getTimestamp()
      let @|$w_p$|@ = @|$getWchar(string_s,a)$|@
      let @|$w_n$|@ = @|$getWchar(string_s,c)$|@
    effector((w,@|$w_p$|@,@|$w_n$|@)) : with @|$w$|@ = @|$(\ats,b,max(degree(w_p),degree(w_n))+1,\mathit{true})$|@
      integrateIns(@|$w_p,w,w_n$|@)

  remove(a) :
    generator :
      precondition : @|$a \neq \circ_{begin} \wedge a \neq \circ_{end} \wedge contains(string_s,a)$|@
      let w = @|$getWchar(string_s,a)$|@
    effector(w) :
      let p = @|$pos(string_s,w)$|@
      @|$changeFlag(string_s,p,\mathit{false})$|@

  read() :
    let s = @|$values(string_s)$|@
    return s

  integrateIns(@|$w_p,w,w_n$|@)
    let @|$S$|@ = @|$string_s$|@
    let @|$S'$|@ = @|$subseq(S,w_p,w_n)$|@
    if @|$S' = \epsilon$|@
      then  @|$insert(S,w,pos(S,w_n))$|@
    else
      Let i = 0
      Let @|$d_{min}$|@ be the minimal degree of W-characters in @|$S'$|@
      Let F be projection of @|$S'$|@ into W-characters with degree @|$d_{min}$|@
      if (@|$w<_{id} F[0]$|@)
        integrateIns(@|$w_p,w,F[0]$|@)
      else
        while (@|$i < \vert F \vert -1 \wedge F[i] <_{id} w$|@) do
            i = i+1
        if (@|$i = \vert F \vert -1 \wedge F[i] <_{id} w$|@)
            integrateIns(@|$F[i],w,w_n$|@)
        else
            integrateIns(@|$F[i-1],w,F[i]$|@)
\end{lstlisting}
\end{figure}

A total order $<_{id}$ is given for identifiers of W-characters for conflict resolution, and let $<_{id}$ be the total order of timestamps (identifiers of W-charcaters). Given a W-string $str$ and two W-characters $w_1,w_2$ of $str$, we write $w_1 <_{str} w_2$ to indicate that $pos(str,w_1) < pos(str,w_2)$.

Note that, here we use {\tt addBetween} method, which ``adds a value between two values'', and is different from the {\tt addAfter} method of RGA. Also, note that a value can be inserted into the list only once.

${\tt addBetween}(a,b,c)$ intends to generate a new W-character $w$ with value $b$ and put it between W-characters with values $a$ and $c$,respectively. ${\tt addBetween}(a,b,c)$ work as follows: First, we check the precondition, such as three exists W-characters with values $a$ and $c$ in $string_s$, and the W-character with value $a$ is before the W-character with value $c$ in $string_s$. Then, let $w_p$ and $w_n$ be the W-characters with value $a$ and $c$ in $string_s$, respectively, and we generate a W-character $w = (\ats,b,max(degree(w_p),degree(w_n))+1,\mathit{true})$ with value $b$. Finally, we call method $integrateIns(w_p,w,w_n)$ to put $w$ into $string_s$ at some position between $w_p$ and $w_n$.

$integrateIns(w_p,w,w_n)$ is a recursive method and works as follows: If there are no W-character between $w_p$ and $w_n$, then $w$ is put after $w_p$. Else, Wooki selects a sequence $F$ of W-characters, such that each W-character of $F$ is between $w_p$ and $w_n$, and has minimal degree. 
Then, we choose the position of $w$ according to $<_{id}$ order, and recursive call $integrateIns(w_x,w,w_y)$ for some $w_x,w_y \in F \cup \{ w_p,w_n \}$. We can see that, the minimal degree of W-characters of $subseq(S,w_x,w_y)$ 
is larger than that of $subseq(S,w_p,w_n)$, and  $subseq(S,w_x,w_y)$ is a sub-sequence of $subseq(S,w_p,w_n)$.

$\alabelshort[{\tt remove}]{a}$ first finds the W-character with value $a$ in $string_s$ and then sets its flag into $\mathit{false}$. ${\tt read}$ uses $values(string_s)$ to return the list content of $string_s$.

\noindent {\bf Sequential Specification $\specWooki$}: Each abstract state $\abstate = (l,T)$ contains a sequence $l$ of elements of a given type and a set $T$ of elements appearing in the list. The element $l$ is the list of all input values, whether already removed or not; while $T$ stores the removed values and is used as \emph{tombstone}. The sequential specification $\specWooki$ of list with add-between interface is defined by:
\[
  \begin{array}{rcl}
    \big(\ (l_1 \cdot a \cdot l_2 \cdot l_3 \cdot c \cdot l_4,T\big)\ |\ b\text{ is fresh}, a \neq \circ_{end}, c \neq \circ_{begin} \big)
     & \specarrow{\alabelshort[\mathtt{addBetween}]{a,b,c}}
     & (l_1 \cdot a \cdot l_2 \cdot b \cdot l_3 \cdot c \cdot l_4,T)\\
     \big(\ (l,T)\ |\ a\ \in l, \ a \neq \circ_{begin}, \ a \neq \circ_{end} \big) 
     & \specarrow{\alabelshort[\mathtt{remove}]{a}}
     & (l,T \cup \{a\})\\
     (l,T)
     & \specarrow{\alabellong[\mathtt{read}]{}{(l/T)}{}}
     & (l,T)
   \end{array}
\]

The method $\alabelshort[\mathtt{addBetween}]{a,b,c}$ puts $b$ at some random position between $a$ and $c$ in $l$, assuming that each value is put into list at most once. Method $\alabelshort[\mathtt{remove}]{a}$ adds $a$ into $T$, hence removing $a$ from the list for subsequent calls to the $\mathtt{read}$ method. Finally, $\alabellong[\mathtt{read}]{}{s}{}$ returns the list content excluding any element appearing in $T$. Assume that the initial value of list is $(\circ_{begin} \cdot \circ_{end},\emptyset)$, we never put value after $\circ_{end}$ or before $\circ_{begin}$, and $\circ_{begin}$ and $\circ_{end}$ are never removed. We will sometimes ignore the value $\circ_{begin}$ and $\circ_{end}$ from the result of $\ensuremath{\tt read}$.

\section{Discussion about List with Interface of Index ({\tt addAt})}
\label{sec:discussion about list with addAt interface}

In this section, we consider list specification with {\tt addAt} method, which puts a value at an index instead of putting a value after another value or between two values.

We first give two version of list specifications with {\tt addAt} method, while one of them does not use tombstone and one of them uses tombstone. We prove that RGA is not \crdtlinearizable{} w.r.t any of them.

Then, we give a third version of list specification with {\tt addAt} method, $\specAddatThree$, where we use a ``local version of index'', and each method is required to returns the ``local list content''. We prove that RGA is \crdtlinearizable{} w.r.t $\specAddatThree$.

\subsection{A first version of list with {\tt addAt} Interface}
\label{subsec:a first version of list with addAt interface}

The first version of list with {\tt addAt} interface is as follows: A list uses the following three methods:

\begin{itemize}
\setlength{\itemsep}{0.5pt}
\item[-] $\alabelshort[{\tt addAt}]{a,k}$: Inserts value $a$ into position $k$ of the list of replica state. For $k$ exceeding the list size of replica state, $k$ will be inserted at the end of the list.

\item[-] $\alabelshort[{\tt remove}]{a}$: Remove value $a$ from list of the replica state.

\item[-] $\alabellong[{\tt read}]{}{s}{}$: Returns the list content of the replica state.
\end{itemize}

Here we assume the first element of a sequence is at position $0$.

Our RGA algorithm of \sectionautorefname \ref{sec:overview} can be modified as follows for this interface: To do $\alabelshort[{\tt addAt}]{a,k}$, we work as follows:

\begin{itemize}
\setlength{\itemsep}{0.5pt}
\item[-] Let $s = traverse(N, Tomb)$ be the list content in replica state. If $s = \epsilon$, then let $b = \circ$; else, if $\vert s \vert \geq k$, then let $b=s[k-1]$; else, let $b=s[\vert s \vert -1]$. Here let $\vert s \vert$ be the length of $s$. 

\item[-] Work as $\alabelshort[{\tt addAfter}]{b,a}$. Especially, the effector of $\alabelshort[{\tt addAt}]{a,k}$ is the effector of ${\tt addAfter}(b,$ $a)$.
\end{itemize}

\subsection{Two Sequential Specification for list with {\tt addAt} Interface}
\label{subsec:two sequential specification for list with addAt interface}

Depending on whether use tombstone or not, there are two sequential specification for list with {\tt addAt} interface.

\noindent {\bf Sequential specification $\specAddatOne$}: The first sequential specification $\specAddatOne$ of list with {\tt addAt} interface does not use tombstone. Each abstract state $\abstate = l$ contains a sequence $l$ of elements of a given type. The element $l$ is the list of values that has been input and not removed yet. The sequential specification $\specAddatOne$ of list with {\tt addAt} interface is defined by:
\[
  \begin{array}{rcl}
    \big(\ l_1 \cdot l_2 \ |\ a\text{ is fresh}\ , \vert l_1 \vert = k \big)
     & \specarrow{\alabelshort[\mathtt{addAt}]{a,k}}
     & l_1 \cdot a \cdot l_2 \\
      \big(\ l \ |\ a\text{ is fresh}\ , \vert l \vert < k \big)
     & \specarrow{\alabelshort[\mathtt{addAt}]{a,k}}
     & l \cdot a \\
     l_1 \cdot a \cdot l_2
     & \specarrow{\alabelshort[\mathtt{remove}]{a}}
     & l_1 \cdot l_2 \\
     l
     & \specarrow{\alabellong[\mathtt{read}]{}{l}{}}
     & l
   \end{array}
\]

If $\vert l_1 \cdot l_2 \vert \geq k$ and $\vert l_1 \vert = k$, the method $\alabelshort[\mathtt{addAt}]{a,k}$ puts $a$ immediately after $l_1$, assuming that each value is put into list at most once; Else, if $\vert l \vert <k$, the method $\alabelshort[\mathtt{addAt}]{a,k}$ puts $a$ immediately after $l$, assuming that each value is put into list at most once. Method $\alabelshort[\mathtt{remove}]{a}$ remove $a$ from $l_1 \cdot a \cdot l_2$. $\alabellong[\mathtt{read}]{}{l}{}$ returns the list content.

\noindent {\bf Sequential specification $\specAddatTwo$}: The second sequential specification $\specAddatTwo$ of list with {\tt addAt} interface uses tombstone. Each abstract state $\abstate = (l,T)$ contains a sequence $l$ of elements of a given type and a set $T$ of elements appearing in the list. The element $l$ is the list of all input values, whether already removed or not; while $T$ stores the removed values and is used as \emph{tombstone}. We can safely assume that $l$ contains more or equal values than $T$. The sequential specification $\specAddatTwo$ of list with {\tt addAt} interface is defined by: 

\[
  \begin{array}{rcl}
    \big(\ (l_1 \cdot l_2,T\big)\ |\ a\text{ is fresh}\ , \vert l_1/T \vert = k \big)
     & \specarrow{\alabelshort[\mathtt{addAt}]{a,k}}
     & (l_1 \cdot a \cdot l_2,T)\ \\
     \big(\ (l,T\big)\ |\ a\text{ is fresh}\ , \vert l/T \vert < k \big)
     & \specarrow{\alabelshort[\mathtt{addAt}]{a,k}}
     & (l \cdot a,T)\ \\
     \big(\ (l,T)\ |\ a\ \text{occurs in}\ l\ \big)
     & \specarrow{\alabelshort[\mathtt{remove}]{a}}
     & (l,T \cup \{a\})\\
     (l,T)
     & \specarrow{\alabellong[\mathtt{read}]{}{(l/T)}{}}
     & (l,T)\
   \end{array}
\]

If $l_1 \cdot l_2 \geq k$ and $\vert l_1/T \vert = k$, the method $\alabelshort[\mathtt{addAt}]{a,k}$ puts $a$ immediately after $l_1$, assuming that each value is put into list at most once; Else, if $\vert l/T \vert <k$, the method $\alabelshort[\mathtt{addAt}]{a,k}$ puts $a$ immediately after $l$, assuming that each value is put into list at most once. Method $\alabelshort[\mathtt{remove}]{a}$ adds $a$ into $T$, hence removing $a$ from the list for subsequent calls to the $\mathtt{read}$ method. Finally, $\alabellong[\mathtt{read}]{}{s}{}$ returns the list content excluding any element appearing in $T$.

Note that the {\tt addAt} method of $\specAddatTwo$ is nondeterministic, since the values of $T$ in $l_1$ does not influence exeuction. For example, $(a \cdot b, \{ a \}) \specarrow{\alabelshort[\mathtt{addAt}]{c,0}} (c \cdot a \cdot b, \{ a \})$ and $(a \cdot b, \{ a \}) \specarrow{\alabelshort[\mathtt{addAt}]{c,0}} (a \cdot c \cdot b, \{ a \})$ are both valid transitions. However, in the proof of Lemma \ref{lemma:The history h of fig an example history shows that RGA is not RA-linarizable w.r.t specAddat is not RA-linearizable w.r.t SpecAddatOne or SpecAddatTwo} we prove that when we consider only executions where each value can be removed only once, the admitted sequences of $\specAddatTwo$ is a subset of the admitted sequences of $\specAddatOne$. Or we can say, the behavior of $\specAddatTwo$ is ``deterministic''to some extent.

\subsection{RGA not \crdtlinearizable{} w.r.t $\specAddatOne$ or $\specAddatTwo$}
\label{subsec:RGA not RA-linearizable w.r.t specAddatOne or SpecAddatTwo}

\autoref{fig:an example history shows that RGA is not RA-linearizable w.r.t specAddat} shows an example that is a history of RGA and is not \crdtlinearizable{} w.r.t $\specAddatOne$ or $\specAddatTwo$. Here we assume that $\ats_a < \ats_b < \ats_c < \ats_d < \ats_e$. We also draw the replica state of replica $r_2$ and $r_3$ after the execution of $h$ in \autoref{fig:an example history shows that RGA is not RA-linearizable w.r.t specAddat}.

\begin{figure}[!h]
  \centering
  \includegraphics[width=0.9 \textwidth]{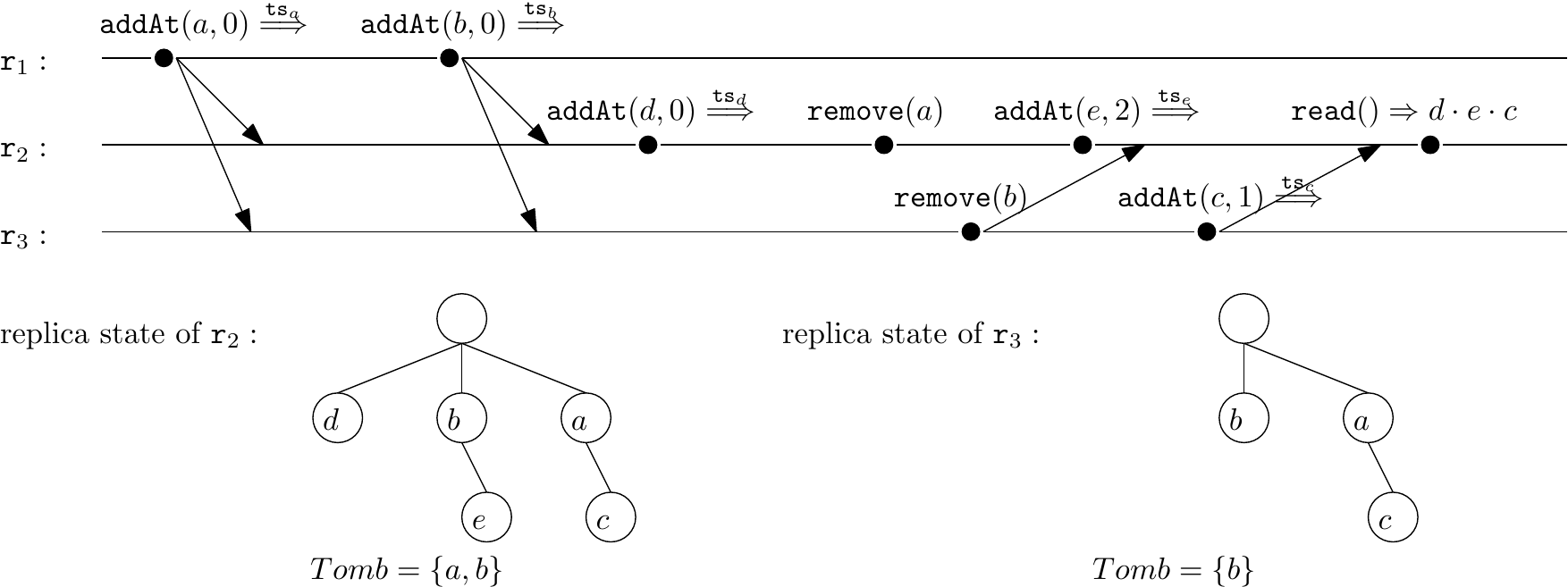}
\vspace{-10pt}
  \caption{An example history shows that RGA is not \crdtlinearizable{} w.r.t .}
  \label{fig:an example history shows that RGA is not RA-linearizable w.r.t specAddat}
\end{figure}

The following lemma states that RGA is not \crdtlinearizable{} w.r.t $\specAddatOne$ or $\specAddatTwo$.

\begin{lemma}
\label{lemma:The history h of fig an example history shows that RGA is not RA-linarizable w.r.t specAddat is not RA-linearizable w.r.t SpecAddatOne or SpecAddatTwo}
RGA is not \crdtlinearizable{} w.r.t $\specAddatOne$ or $\specAddatTwo$.
\end{lemma}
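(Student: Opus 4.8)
\textit{Proof plan.} The plan is to use the explicit history $h$ drawn in \autoref{fig:an example history shows that RGA is not RA-linearizable w.r.t specAddat}: a handful of ${\tt addAt}$ invocations issued at three replicas, with generated timestamps ordered $\ats_a<\ats_b<\ats_c<\ats_d<\ats_e$, together with two ${\tt read}$ invocations, one at $r_2$ and one at $r_3$, whose return values are the two distinct lists also shown in the figure and which are produced against strictly different local prefixes. First I would check that $h$ is genuinely a history of the ${\tt addAt}$-variant of RGA from \autoref{subsec:a first version of list with addAt interface}: this amounts to exhibiting one interleaving of ${\sc Operation}$ and ${\sc Effector}$ steps (\autoref{fig:crdt-opsem}) realising $h$, in which each ${\tt addAt}(a,k)$ is run as ${\tt addAfter}(b,a)$ with $b$ the element at local position $k-1$ of the origin's timestamp tree, verifying along the way that the side conditions on the generated timestamp hold (it is globally fresh and strictly larger than every timestamp visible at the origin) and that ${\tt traverse}$ at $r_2$ and $r_3$ — breaking sibling ties by largest timestamp first — returns exactly the two lists claimed. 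This step is routine tree bookkeeping.

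The core of the argument is to show that no sequence witnesses RA-linearizability of $h$ w.r.t. $\specAddatOne$ (Definition~\ref{definition:ralinearizability1}). Assume toward a contradiction that $(\alabelset,\aseqord)$ is such a witness. Clause (i) forces $\aseqord$ to refine the visibility order of $h$, which fixes the internal order of each causal chain of ${\tt addAt}$s. Clause (iii) applied to the ${\tt read}$ at $r_2$ says that the ${\tt addAt}$s visible to it, taken in $\aseqord$-order and replayed in $\specAddatOne$, must reproduce $r_2$'s list; since in $\specAddatOne$ an ${\tt addAt}(v,k)$ lands immediately after the first $k$ values already present, $r_2$'s list dictates a specific relative $\aseqord$-order between the two index-sensitive concurrent ${\tt addAt}$s. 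Clause (iii) for the ${\tt read}$ at $r_3$ dictates the opposite relative order, and the two demands are jointly unsatisfiable, a contradiction. (Should the obstruction instead surface at clause (ii), i.e.\ the $\aseqord$-projection onto all updates already failing to lie in $\specAddatOne$, the same ``read off the list to recover the forced order'' reasoning applies, taken against a ${\tt read}$ that sees every update.)

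For $\specAddatTwo$, I would first establish, as part of this same proof, that on executions in which no value is removed twice — and $h$, which uses only ${\tt addAt}$ and ${\tt read}$, is trivially such an execution — every sequence admitted by $\specAddatTwo$ is also admitted by $\specAddatOne$; intuitively the tombstoned copies of removed values that $\specAddatTwo$ keeps in $l$ never change where an ${\tt addAt}(v,k)$ is placed in this restricted setting, so the nondeterministic ${\tt addAt}$ rule of $\specAddatTwo$ collapses into that of $\specAddatOne$. Given this inclusion, any RA-linearization of $h$ w.r.t. $\specAddatTwo$ (all of whose witness sequences, being sub-sequences of $h$'s operations, remove each value at most once) would also be one w.r.t. $\specAddatOne$, contradicting the previous paragraph; hence RGA is not RA-linearizable w.r.t. $\specAddatTwo$ either.

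The main obstacle is the impossibility argument of the second paragraph. RGA's timestamp-ordered ${\tt traverse}$ is itself essentially a linearization, so any single fully informed ${\tt read}$ can always be explained; the unsatisfiability has to be squeezed out of the \emph{disagreement} between the two partial-view ${\tt read}$s at $r_2$ and $r_3$ — that is, from the fact that the index $k$ in ${\tt addAt}(\_,k)$ is resolved against the origin replica's local list rather than against the global linearization prefix. Pinning down exactly which relative order each of these ${\tt read}$s forces, and checking that \emph{every} visibility-consistent sequence is ruled out, is the delicate part; the rest is mechanical.
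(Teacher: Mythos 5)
Your overall skeleton matches the paper's: you use the history of \autoref{fig:an example history shows that RGA is not RA-linearizable w.r.t specAddat}, argue impossibility against $\specAddatOne$ directly, and then handle $\specAddatTwo$ by showing that on executions where each value is removed at most once its admitted sequences are contained in those of $\specAddatOne$ (the paper does exactly this, via the simulation $((l,T),l')\in R$ iff $l'=l/T$). That part of your plan is sound and essentially identical to the paper's.

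The gap is in the core impossibility argument. You locate the contradiction in a \emph{disagreement between two partial-view reads} at $r_2$ and $r_3$, and you explicitly assert that ``any single fully informed read can always be explained'' because RGA's timestamp-ordered traversal is essentially a linearization. This is the opposite of what the counterexample exploits, and the claim is false for the ${\tt addAt}$ interface. The paper's history contains a \emph{single} read that sees \emph{all} updates and returns $d\cdot e\cdot c$; the proof enumerates every linearization of the updates consistent with visibility and checks that replaying each one in $\specAddatOne$ yields $d\cdot c\cdot e$, so clause~(iii) of Definition~\ref{definition:ralinearizability1} already fails for that one fully informed read. The reason a fully informed read is \emph{not} always explainable is that the index $k$ in ${\tt addAt}(v,k)$ is resolved against the origin replica's local list (after local removes and with only locally visible insertions), whereas the specification resolves it against the global linearization prefix; the interleaved ${\tt remove}$s make these two notions of ``position $k$'' diverge, so the element lands at a tree position whose traversal order no sequential replay can reproduce. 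Your two-reads strategy would also run into trouble on its own terms: by convergence, two RGA reads seeing the same set of updates return the same list, so extracting contradictory order constraints from two reads requires them to see different update sets, and then the constraints they impose concern different visible subsets and need not conflict. Your parenthetical fallback only contemplates a failure of clause~(ii), which cannot occur here since $\specAddatOne$ admits every sequence of updates; the actual failure mode — clause~(iii) for a single all-seeing read — is exactly the one your final paragraph rules out.
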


\begin {proof}
Let $h$ be the history of \autoref{fig:an example history shows that RGA is not RA-linearizable w.r.t specAddat}. 
It is obvious that $h$ is a history of RGA. We prove that RGA is not \crdtlinearizable{} w.r.t $\specAddatOne$ by proving that $h$ is not \crdtlinearizable{} w.r.t $\specAddatOne$, and we prove this by showing that all possible linarizations of $h$ can not validate the $\alabellong[\mathtt{read}]{}{d \cdot e \cdot c}{}$ operation. We list all possible linearization and their transitions in $\specAddatOne$ as follows:

\begin{itemize}
\setlength{\itemsep}{0.5pt}
\item[-] $\epsilon \specarrow{\alabelshort[\mathtt{addAt}]{a,0}}$ $a \specarrow{\alabelshort[\mathtt{addAt}]{b,0}}$ $b \cdot a \specarrow{\alabelshort[\mathtt{remove}]{b}}$ $a \specarrow{\alabelshort[\mathtt{addAt}]{c,1}}$ $a \cdot c \specarrow{\alabelshort[\mathtt{addAt}]{d,0}}$ $d \cdot a \cdot c \specarrow{\alabelshort[\mathtt{remove}]{a}}$ $d \cdot c \specarrow{\alabelshort[\mathtt{addAt}]{e,2}}$ $d \cdot c \cdot e \specarrow{\alabellong[\mathtt{read}]{}{d \cdot c \cdot e}{}} d \cdot c \cdot e$.

\item[-] $\epsilon \specarrow{\alabelshort[\mathtt{addAt}]{a,0}}$ $a \specarrow{\alabelshort[\mathtt{addAt}]{b,0}}$ $b \cdot a \specarrow{\alabelshort[\mathtt{remove}]{b}}$ $a \specarrow{\alabelshort[\mathtt{addAt}]{d,0}}$ $d \cdot a \specarrow{\alabelshort[\mathtt{addAt}]{c,1}}$ $d \cdot c \cdot a \specarrow{\alabelshort[\mathtt{remove}]{a}}$ $d \cdot c \specarrow{\alabelshort[\mathtt{addAt}]{e,2}}$ $d \cdot c \cdot e \specarrow{\alabellong[\mathtt{read}]{}{d \cdot c \cdot e}{}} d \cdot c \cdot e$.

\item[-] $\epsilon \specarrow{\alabelshort[\mathtt{addAt}]{a,0}}$ $a \specarrow{\alabelshort[\mathtt{addAt}]{b,0}}$ $b \cdot a \specarrow{\alabelshort[\mathtt{remove}]{b}}$ $a \specarrow{\alabelshort[\mathtt{addAt}]{d,0}}$ $d \cdot a \specarrow{\alabelshort[\mathtt{remove}]{a}}$ $d \specarrow{\alabelshort[\mathtt{addAt}]{c,1}}$ $d \cdot c \specarrow{\alabelshort[\mathtt{addAt}]{e,2}}$ $d \cdot c \cdot e \specarrow{\alabellong[\mathtt{read}]{}{d \cdot c \cdot e}{}} d \cdot c \cdot e$.

\item[-] $\epsilon \specarrow{\alabelshort[\mathtt{addAt}]{a,0}}$ $a \specarrow{\alabelshort[\mathtt{addAt}]{b,0}}$ $b \cdot a \specarrow{\alabelshort[\mathtt{remove}]{b}}$ $a \specarrow{\alabelshort[\mathtt{addAt}]{d,0}}$ $d \cdot a \specarrow{\alabelshort[\mathtt{remove}]{a}}$ $d \specarrow{\alabelshort[\mathtt{addAt}]{e,2}}$ $d \cdot e \specarrow{\alabelshort[\mathtt{addAt}]{c,1}}$ $d \cdot c \cdot e \specarrow{\alabellong[\mathtt{read}]{}{d \cdot c \cdot e}{}} d \cdot c \cdot e$.

\item[-] $\epsilon \specarrow{\alabelshort[\mathtt{addAt}]{a,0}}$ $a \specarrow{\alabelshort[\mathtt{addAt}]{b,0}}$ $b \cdot a \specarrow{\alabelshort[\mathtt{addAt}]{d,0}}$ $d \cdot b \cdot a \specarrow{\alabelshort[\mathtt{remove}]{b}}$ $d \cdot a \specarrow{\alabelshort[\mathtt{addAt}]{c,1}}$ $d \cdot c \cdot a \specarrow{\alabelshort[\mathtt{remove}]{a}}$ $d \cdot c \specarrow{\alabelshort[\mathtt{addAt}]{e,2}}$ $d \cdot c \cdot e \specarrow{\alabellong[\mathtt{read}]{}{d \cdot c \cdot e}{}} d \cdot c \cdot e$.

\item[-] $\epsilon \specarrow{\alabelshort[\mathtt{addAt}]{a,0}}$ $a \specarrow{\alabelshort[\mathtt{addAt}]{b,0}}$ $b \cdot a \specarrow{\alabelshort[\mathtt{addAt}]{d,0}}$ $d \cdot b \cdot a \specarrow{\alabelshort[\mathtt{remove}]{b}}$ $d \cdot a \specarrow{\alabelshort[\mathtt{remove}]{a}}$ $d \specarrow{\alabelshort[\mathtt{addAt}]{c,1}}$ $d \cdot c \specarrow{\alabelshort[\mathtt{addAt}]{e,2}}$ $d \cdot c \cdot e \specarrow{\alabellong[\mathtt{read}]{}{d \cdot c \cdot e}{}} d \cdot c \cdot e$.

\item[-] $\epsilon \specarrow{\alabelshort[\mathtt{addAt}]{a,0}}$ $a \specarrow{\alabelshort[\mathtt{addAt}]{b,0}}$ $b \cdot a \specarrow{\alabelshort[\mathtt{addAt}]{d,0}}$ $d \cdot b \cdot a \specarrow{\alabelshort[\mathtt{remove}]{b}}$ $d \cdot a \specarrow{\alabelshort[\mathtt{remove}]{a}}$ $d \specarrow{\alabelshort[\mathtt{addAt}]{e,2}}$ $d \cdot e \specarrow{\alabelshort[\mathtt{addAt}]{c,1}}$ $d \cdot c \cdot e \specarrow{\alabellong[\mathtt{read}]{}{d \cdot c \cdot e}{}} d \cdot c \cdot e$.

\item[-] $\epsilon \specarrow{\alabelshort[\mathtt{addAt}]{a,0}}$ $a \specarrow{\alabelshort[\mathtt{addAt}]{b,0}}$ $b \cdot a \specarrow{\alabelshort[\mathtt{addAt}]{d,0}}$ $d \cdot b \cdot a \specarrow{\alabelshort[\mathtt{remove}]{a}}$ $d \cdot b \specarrow{\alabelshort[\mathtt{remove}]{b}}$ $d \specarrow{\alabelshort[\mathtt{addAt}]{c,1}}$ $d \cdot c \specarrow{\alabelshort[\mathtt{addAt}]{e,2}}$ $d \cdot c \cdot e \specarrow{\alabellong[\mathtt{read}]{}{d \cdot c \cdot e}{}} d \cdot c \cdot e$.

\item[-] $\epsilon \specarrow{\alabelshort[\mathtt{addAt}]{a,0}}$ $a \specarrow{\alabelshort[\mathtt{addAt}]{b,0}}$ $b \cdot a \specarrow{\alabelshort[\mathtt{addAt}]{d,0}}$ $d \cdot b \cdot a \specarrow{\alabelshort[\mathtt{remove}]{a}}$ $d \cdot b \specarrow{\alabelshort[\mathtt{remove}]{b}}$ $d \specarrow{\alabelshort[\mathtt{addAt}]{e,2}}$ $d \cdot e \specarrow{\alabelshort[\mathtt{addAt}]{c,1}}$ $d \cdot c \cdot e \specarrow{\alabellong[\mathtt{read}]{}{d \cdot c \cdot e}{}} d \cdot c \cdot e$.

\item[-] $\epsilon \specarrow{\alabelshort[\mathtt{addAt}]{a,0}}$ $a \specarrow{\alabelshort[\mathtt{addAt}]{b,0}}$ $b \cdot a \specarrow{\alabelshort[\mathtt{addAt}]{d,0}}$ $d \cdot b \cdot a \specarrow{\alabelshort[\mathtt{remove}]{a}}$ $d \cdot b \specarrow{\alabelshort[\mathtt{addAt}]{e,2}}$ $d \cdot b \cdot e \specarrow{\alabelshort[\mathtt{remove}]{b}}$ $d \cdot e \specarrow{\alabelshort[\mathtt{addAt}]{c,1}}$ $d \cdot c \cdot e \specarrow{\alabellong[\mathtt{read}]{}{d \cdot c \cdot e}{}} d \cdot c \cdot e$.
\end{itemize}

Since the last {\tt read} operation returns $d \cdot c \cdot e$ instead of $d \cdot e \cdot c$, these linearization are not correct. Therefore, RGA is not \crdtlinearizable{} w.r.t $\specAddatOne$.

We prove that RGA is not \crdtlinearizable{} w.r.t $\specAddatTwo$ by proving that, when we consider only executions where each value can be removed only once, the admitted sequences of $\specAddatTwo$ is a subset of the admitted sequences of $\specAddatOne$. We need to prove that there exits a relation $R$ between abstracts of $\specAddatTwo$ and $\specAddatOne$, such that when there are transitions $\abstate_0 \specarrow{\alabel_0} \abstate_1 \ldots \specarrow{\alabel_n} \abstate_n$ of $\specAddatTwo$, and in $\alabel_0 \cdot \ldots \cdot \alabel_n$ each value is removed at most once, we have that, (1) $(\abstate_0,\abstate'_0) \in R$, where $\abstate_0$ and $\abstate'_0$ is the initial abstract state of $\specAddatTwo$ and $\specAddatOne$, respectively, and (2) if $\abstate_i \specarrow{\alabel_i} \abstate_{i+1}$ is a transition of $\specAddatTwo$ and $(\abstate_i,\abstate'_i) \in R$, then, $\abstate'_i \specarrow{\alabel_i} \abstate'_{i+1}$ is in $\specAddatOne$ and $(\abstate_{i+1},\abstate'_{i+1}) \in R$ for some abstract state $\abstate'_{i+1}$ of $\specAddatOne$.

Let relation $R$ be defined as follows: $((l,T),l') \in R$, if $l'=l/T$. Then, let us consider all possible transitions:

\begin{itemize}
\setlength{\itemsep}{0.5pt}
\item[-] If $(l_1 \cdot l_2, T) \specarrow{\alabelshort[\mathtt{addAt}]{a,k}} (l_1 \cdot a \cdot l_2,T)$ is in $\specAddatTwo$ and $\vert l_1/T \vert = k$: Then, we can see that $l' = l/T = (l_1/T) \cdot (l_2/T)$, and then, $(l_1/T) \cdot (l_2/T) \specarrow{\alabelshort[\mathtt{addAt}]{a,k}} (l_1/T) \cdot a \cdot (l_2/T)$ is in $\specAddatOne$. Since $a$ is fresh in $l_1 \cdot l_2$ and $l_1 \cdot l_2$ contains more or equal values than $T$, we can see that $a \notin T$ and $(l_1/T) \cdot a \cdot (l_2/T) = (l_1 \cdot a \cdot l_2)/T$. Or we can say, $((l_1 \cdot a \cdot l_2,T),(l_1/T) \cdot a \cdot (l_2/T)) \in R$.

\item[-] If $(l,T) \specarrow{\alabelshort[\mathtt{addAt}]{a,k}} (l \cdot a, T)$ is in $\specAddatTwo$ and $\vert l/T \vert <k$: Then, we can see that $l/T \specarrow{\alabelshort[\mathtt{addAt}]{a,k}} (l/T) \cdot a$ is in $\specAddatOne$ and $a$ is fresh in $l$. Since $l$ contains more or equal values than $T$, we can see that $a \notin T$. Then, $(l/T) \cdot a = (l \cdot a)/T$. Or we can say, $((l \cdot a,T),(l/T) \cdot a) \in R$.

\item[-] If $(l_1 \cdot a \cdot l_2,T) \specarrow{\alabelshort[\mathtt{remove}]{a}} (l_1 \cdot a \cdot l_2,T \cup \{a\})$ is in $\specAddatTwo$: Since each value is removed only once, we can see that $a \notin T$. Then, $( (l_1 \cdot a \cdot l_2,T), (l_1/T) \cdot a \cdot (l_2/T) ) \in R$.  $(l_1/T) \cdot a \cdot (l_2/T) \specarrow{\alabelshort[\mathtt{remove}]{a}} (l_1/T) \cdot (l_2/T)$ is in $\specAddatOne$.  Then, $(l_1/T) \cdot a \cdot (l_2/T) = (l_1 \cdot a \cdot l_2)/T$. Or we can say, $( (l_1 \cdot a \cdot l_2,T), (l_1/T) \cdot a \cdot (l_2/T) ) \in R$.

\item[-] If $(l, T) \specarrow{\alabellong[\mathtt{read}]{}{(l/T)}{}} (l,T)$ is in $\specAddatTwo$: Obviously, $l/T \specarrow{\alabellong[\mathtt{read}]{}{(l/T)}{}} l/T$ is in $\specAddatOne$ and $((l,T),(l/T)) \in R$.
\end{itemize}

Therefore, $R$ holds as required. Then, we can see that for each transitions $\abstate_0 \specarrow{\alabel_0} \abstate_1 \ldots \specarrow{\alabel_n} \abstate_n$ of $\specAddatTwo$ where each value is removed at most once in $\alabel_0 \cdot \ldots \cdot \alabel_n$, there exists transitions $\abstate'_0 \specarrow{\alabel_0} \abstate'_1 \ldots \specarrow{\alabel_n} \abstate'_n$ of $\specAddatOne$. Or we can say, for each sequence $\alabel_0 \cdot \ldots \cdot \alabel_n$ where each value is removed at most once, $\alabel_0 \cdot \ldots \cdot \alabel_n \in \specAddatTwo$ implies that $\alabel_0 \cdot \ldots \cdot \alabel_n \in \specAddatOne$. Since in the history $h$ of \autoref{fig:an example history shows that RGA is not RA-linearizable w.r.t specAddat}, each value is removed at most once, we can see that the set of possible linearization of $h$ of $\specAddatTwo$ is a subset of possible linearization of $h$ of $\specAddatOne$. Since $h$ is not \crdtlinearizable{} w.r.t $\specAddatOne$, we can see that $h$ is not \crdtlinearizable{} w.r.t $\specAddatTwo$. This completes the proof of this lemma. $\qed$

\end {proof}

\subsection{A second version of list with {\tt addAt} Interface}
\label{subsec:a second version of list with addAt interface}

As the second version of list with {\tt addAt} interface, let us introduce the interface of ~\cite{AttiyaBGMYZ16} as below:

\begin{itemize}
\setlength{\itemsep}{0.5pt}
\item[-] $\alabellong[{\tt addAt}]{a,k}{s}{}$: Inserts value $a$ into position $k$ of list of the replica state, and then returns the updated list content of the replica state. For $k$ exceeding the list size of replica state, $a$ will be inserted at the end of the list.

\item[-] $\alabellong[{\tt remove}]{a}{s}{}$: Remove value $a$ from list of the replica state, and returns the updated list content of the replica state.

\item[-] $\alabellong[{\tt read}]{}{s}{}$: Returns the list content of the replica state.
\end{itemize}

Similarly, we assume the first element of a sequence is at position $0$.

Our RGA algorithm of \sectionautorefname \ref{sec:overview} can be modified as follows for this interface:

\begin{itemize}
\setlength{\itemsep}{0.5pt}
\item[-] For {\tt addAt}: Given argument $a$ and $k$ of {\tt addAt}. Let $s = traverse(N, Tomb)$ be the list content in replica state. If $s = \epsilon$, then let $b = \circ$; else, if $\vert s \vert \geq k$, then let $b=s[k-1]$; else, let $b=s[\vert s \vert -1]$.

    Then, we work as $\alabelshort[{\tt addAfter}]{b,a}$. Especially, the effector of $\alabelshort[{\tt addAt}]{a,k}$ is the effector of $\alabelshort[{\tt addAfter}]{b,a}$. After the effector of $\alabelshort[{\tt addAfter}]{b,a}$ is applied in the replica where this operation originates, let $s'$ be the list content of replica state, and we return $s'$.

\item[-] For {\tt remove}: Given argument $a$, we work as $\alabelshort[{\tt remove}]{a}$. After the effector of $\alabelshort[{\tt remove}]{a}$ is applied in the replica where this operation originates, let $s'$ be the list content of replica state, and we return $s'$. 
\end{itemize}

\subsection{Another Sequential Specification for list with {\tt addAt} Interface}
\label{subsec:another sequential specification for list with addAt interface}

Let us introduce another sequential specification $\specAddatThree$ of list with {\tt addAt} interface that also use tombstone. Each abstract state $\abstate = (l,T)$ contains a sequence $l$ of elements of a given type and a set $T$ of elements appearing in the list. The element $l$ is the list of all input values, whether already removed or not; while $T$ stores the removed values and is used as \emph{tombstone}. The sequential specification $\specAddatThree$ of list with {\tt addAt} interface is defined by: 

\[
  \begin{array}{rcl}
    \big(\ (l_1 \cdot b \cdot l_2,T\big)\ |\ a\text{ is fresh}\ , \vert s_1 \cdot b\vert = k \big)
     & \specarrow{\alabellong[{\tt addAt}]{a,k}{s_1 \cdot b \cdot a \cdot s_2}{}}
     & (l_1 \cdot b \cdot a \cdot l_2,T)\
       \left[\begin{array}{c}
           \text{with $s_1 \cdot b \cdot s_2$ is a sub-}\\
           \text{sequence of $l_1 \cdot b \cdot l_2$}
       \end{array}\right]\\
     \big(\ (l_1 \cdot b \cdot l_2,T\big)\ |\ a\text{ is fresh}\ , \vert s \cdot b \vert <k \big)
     & \specarrow{\alabellong[{\tt addAt}]{a,k}{s \cdot b \cdot a}{}}
     & (l_1 \cdot b \cdot a \cdot l_2,T)\
       \left[\begin{array}{c}
           \text{with $s \cdot b$ is a sub-}\\
           \text{sequence of $l_1 \cdot b \cdot l_2$}
       \end{array}\right]\\
     \big(\ (l,T\big)\ |\ a\ \text{occurs in}\ l\ \big)
     & \specarrow{\alabellong[{\tt remove}]{a}{s}{}}
     & (l,T \cup \{ a \})\
       \left[\begin{array}{c}
           \text{with $s$ is a subsequence}\\
           \text{of $l$ and }a \notin s
       \end{array}\right]\\
     (l,T)
     & \specarrow{\alabellong[\mathtt{read}]{}{(l/T)}{}}
     & (l,T)\
   \end{array}
\]

When we do $\alabelshort[\mathtt{addAt}]{a,k}$ for an abstract state $(l,T)$, we do not try to check if there is a prefix of $l$ with length $k$ nor check if $\vert l/T \vert <k$. Instead, we check whether there exists a subsequence $l'_1 \cdot b \cdot l'_2$ of $l$, such that $\vert l'_1 \cdot b \vert = k$, or check whether there exists a subsequence $l' \cdot b$ of $l$, such that $\vert l' \cdot b \vert < k$. Note that here we do not check the tombstone $T$. Intuitively, here $l'_1 \cdot b \cdot l'_2$ or $l' \cdot b$ is the list content of a replica, which is a local view. A intuitively explanation of RGA not being \crdtlinearizable{} w.r.t. $\specAddatOne$ or $\specAddatTwo$ is that index is chosen locally to each replica and may be hard to have a global explanation. The fact that $\specAddatThree$ also works in a ``local'' way may be a intuitive reason of RGA being \crdtlinearizable{} w.r.t. $\specAddatThree$ (proved in the next subsection).

If $s_1 \cdot b \cdot s_2$ is a subsequence of $l_1 \cdot b \cdot l_2$ and $\vert s_1 \cdot b \vert = k$, the method $\alabelshort[\mathtt{addAt}]{a,k}$ puts $a$ immediately after $b$, and returns $s_1 \cdot b \cdot a \cdot s_2$, assuming that each value is put into list at most once.

If $s \cdot b$ is a subsequence of $l_1 \cdot b \cdot l_2$ and $\vert s \cdot b \vert < k$, the method $\alabelshort[\mathtt{addAt}]{a,k}$ puts $a$ immediately after $b$, and returns $s \cdot b \cdot a$, assuming that each value is put into list at most once.

Method $\alabelshort[\mathtt{remove}]{a}$ adds $a$ into $T$, hence removing $a$ from the list for subsequent calls to the $\mathtt{read}$ method. Finally, $\alabellong[\mathtt{read}]{}{s}{}$ returns the list content excluding any element appearing in $T$.

\subsection{RGA is \crdtlinearizable{} w.r.t $\specAddatThree$}
\label{subsec:RGA is RA-linearizable w.r.t specAddatThree}

The following lemma states that RGA is \crdtlinearizable{} w.r.t. $\specAddatThree$.

\begin{lemma}
\label{lemma:RGA is RA-linearizable w.r.t SpecAddatThree}
RGA is \crdtlinearizable{} w.r.t $\specAddatThree$.
\end{lemma}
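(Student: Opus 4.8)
The plan is to obtain the lemma from Theorem~\ref{th:timestamp_order_lin}, by showing that RGA under the {\tt addAt} interface satisfies $\mathsf{Commutativity}$ and (a query-update-rewriting-aware instance of) $\mathsf{Refinement}_{\tsof{}}$ with respect to $\specAddatThree$; this makes every history of the object admit a timestamp-order linearization, hence \crdtlinearizable{} w.r.t. $\specAddatThree$. The first observation is that this variant produces exactly the effectors of the plain RGA of Listing~\ref{lst:rga}: executing {\tt addAt}$(a,k)$ at the origin reads the local list $s={\tt traverse(N, Tomb)}$, selects the element $b$ at the appropriate index of $s$ (or $\circ$ at the front), and emits the {\tt addAfter}$(b,a)$ effector, while {\tt remove}$(a)$ emits the usual tombstone-insertion effector. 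Hence $\mathsf{Commutativity}$ holds by the same argument as for RGA with the {\tt addAfter} interface: two {\tt addAfter} effectors add distinct tree nodes and commute, and a tombstone insertion commutes with everything.

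Since {\tt addAt} and {\tt remove} return (origin-local) list content, they are query-updates, so I would use a query-update rewriting $\gamma$ in the sense of Definition~\ref{definition:distributed linearizability}, analogous to the OR-Set rewriting of Example~\ref{ex:qur-orset}. Concretely $\gamma$ splits {\tt addAt}$(a,k)$ into a pure query returning the local list split around the index-$k$ pivot, $\langle s_1,b,s_2\rangle$, followed by the update {\tt addAfter}$(b,a)$; and it splits {\tt remove}$(a)$ into a pure query returning the local list with $a$ already filtered out, followed by the tombstone update. The specification $\specAddatThree$ decomposes correspondingly: the query transitions are read off the nondeterministic {\tt addAt}/{\tt remove} rules of $\specAddatThree$ (which pick an \emph{arbitrary} subsequence of $l$), and the update transitions are precisely the {\tt addAfter} and {\tt remove} rules of $\specRGA$ (Example~\ref{definition:sequential specification of rga}). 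In particular the update-projection of the decomposed specification coincides with $\specRGA$'s updates, so item~(ii) of Definition~\ref{definition:ralinearizability1} for our linearizations is inherited from the already-established reasoning for RGA with the {\tt addAfter} interface.

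For $\mathsf{Refinement}_{\tsof{}}$ I would reuse the {\tt addAfter}-case refinement mapping $\refmap({\tt N},{\tt Tomb})=({\tt traverse(N, \emptyset)},{\tt Tomb})$. Simulating the tombstone effector and {\tt read}, and the queries introduced by $\gamma$, is immediate: ${\tt traverse(N, Tomb)}$ is a subsequence of ${\tt traverse(N, \emptyset)}=l$ and the pivot $b$ is not a tombstone, so the value returned is one of the subsequences allowed by the $\specAddatThree$ rule, and likewise ${\tt traverse(N, Tomb \cup \{a\})}$ is a subsequence of $l$ avoiding $a$; the implementation merely \emph{resolves} the specification's nondeterminism by always taking the origin replica's own view. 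Simulating an {\tt addAfter}$(b,a)$ effector carrying timestamp $\ats_a$ is required, by the weakened condition of $\mathsf{Refinement}_{\tsof{}}$, only when $\ats_a$ exceeds every timestamp stored in the state; then $a$ is the highest-timestamp child of $b$, so the pre-order traversal visits $a$ immediately after $b$, turning ${\tt traverse(N, \emptyset)}=l_1\cdot b\cdot l_2$ into $l_1\cdot b\cdot a\cdot l_2$, exactly the state change of the matching $\specAddatThree$ transition ($b$ is present in the tree since it was read from it). With both properties in hand, Theorem~\ref{th:timestamp_order_lin} (whose proof extends to query-update rewritings in the same routine way Definition~\ref{definition:ralinearizability1} is lifted to Definition~\ref{definition:distributed linearizability}) gives that the timestamp-order linearization of any history --- {\tt addAfter} updates ordered by their real timestamps, {\tt read}/{\tt remove} by their virtual ones, ties broken by origin execution order --- is an \crdtlinearization{} w.r.t. $\specAddatThree$ and $\gamma$.

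I expect the main obstacle to be the bookkeeping around the two \emph{different} views of a replica state that $\specAddatThree$ juggles: the index $k$ is interpreted against the tombstone-filtered list, the state update splices $a$ into the \emph{full} list, and the returned value lives again in the filtered list. The proof must show RGA keeps these coherent --- that after {\tt addAfter}$(b,a)$ both the filtered list and the full list are the old ones with $a$ inserted immediately after $b$ --- which hinges entirely on the maximality of the freshly generated $\ats_a$ and on $a$ not yet being a tombstone. A secondary point of care is verifying that, once $\gamma$ decomposes $\specAddatThree$, its update-projection is literally $\specRGA$'s, so the ``final convergence'' obligation (item~(ii)) is inherited rather than re-proved.
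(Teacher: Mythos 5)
Your proposal is correct and follows essentially the same route as the paper's proof: establish $\mathsf{Commutativity}$ via causal delivery and the method preconditions, and $\mathsf{Refinement}_{\tsof{}}$ with the mapping $\refmap({\tt N},{\tt Tomb})=({\tt traverse(N,\emptyset)},{\tt Tomb})$, using the maximality of the fresh timestamp to place $a$ right after $b$ and causal delivery to show the origin-local list is one of the subsequences the nondeterministic $\specAddatThree$ rules allow. The only difference is presentational: where you introduce an explicit query-update rewriting $\gamma$, the paper keeps $\alabellongind[{\tt addAt}]{a,k}{s}{\ats_a}{}$ and $\alabellong[{\tt remove}]{a}{s}{}$ as single labels whose return values are validated directly against the nondeterministic specification transitions (consistent with its remark that timestamp-order CRDTs need no rewriting), which spares it the routine lifting of Theorem~\ref{th:timestamp_order_lin} to rewritings that your argument requires.
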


\begin {proof}

Let us first prove $\mathsf{Commutativity}$. By the causal delivery assumption and the preconditions of ${\tt addAt}$, it cannot happen that an $\alabellong[{\tt addAt}]{b,\_}{\_}{}$ operation, which puts ${\tt b}$ after ${\tt a}$, is concurrent with an operation that adds ${\tt a}$ to the list, i.e., $\alabellong[{\tt addAt}]{a,\_}{\_}{}$. Therefore, applying effectors of two concurrent {\tt addAt} operations commute. By the causal delivery assumption and the preconditions of ${\tt addAt}$ and ${\tt remove}$, it can not happen that an $\alabellong[{\tt addAt}]{b,\_}{\_}{}$ operation adding ${\tt b}$ is concurrent with an operation $\alabellong[{\tt remove}]{b}{\_}{}$ that removes ${\tt b}$. Threfore, applying effectors of a concurrent {\tt addAt} operation and a concurrent {\tt remove} operation commute. Since applying effectors of {\tt remove} do set union to tombstone set $T$, applying effectors of concurrent {\tt remove} operations commute.

Let us prove $\mathsf{Refinement}_{\tsof{}}$. Let the refinement mapping $\refmap$ be defined as: $\refmap(N,Tomb) = (l,T)$, where $l={\tt traverse(N, \emptyset)}$, and $T={\tt Tomb}$. Then,

\begin{itemize}
\setlength{\itemsep}{0.5pt}
\item[-] The fact that effectors of ${\tt read}$ queries are simulated by the corresponding operations of the specification is straightforward.

\item[-] Concerning effectors of $\alabel = \alabellongind[{\tt addAt}]{a,k}{s}{\ats_a}{}$. we show that they are simulated by the corresponding specification operation $\alabellongind[{\tt addAt}]{a,k}{s}{}{}$ only when the timestamp $\ats_{\tt a}$ is strictly greater than all the timestamps stored in the replica state where it applies. Thus, let $({\tt N},{\tt Tomb})$ be a replica state such that $\ats < \ats_{\tt a}$ for every $\ats$ with $(\_,\ats,\_)\in {\tt N}$.

    Assume that when $\alabel$ is generated from a replica state $(N_{\alabel},Tomb_{\alabel})$ and let $s_{\alabel}  = {\tt traverse}(N_{\alabel},$ $Tomb_{\alabel})$,

    \begin{itemize}
    \setlength{\itemsep}{0.5pt}
    \item[-] If $\vert s_{\alabel} \vert \geq k$: There exists $b \in s_{\alabel}$ such that $\vert s_1 \cdot b \vert = k$, where $s_{\alabel} = s_1 \cdot b \cdot s_2$. Then, $s = s_1 \cdot b \cdot a \cdot s_2$. By the causal delivery assumption, all values of $s_{\alabel}$ is in $N$ and is in $l$. Assume that $l = l_1 \cdot b \cdot l_2$. Then, $s_1 \cdot b \cdot s_2$ is a sub-sequence of $l_1 \cdot b \cdot l_2$.

        The result of applying the effector $\effector$ corresponding to $\alabellongind[{\tt addAt}]{a,k}{s}{\ats_a}{}$ is to add ${\tt a}$ as a child of ${\tt b}$. Then, applying ${\tt traverse}$ on the new state will result in a sequence where ${\tt a}$ is placed just after ${\tt b}$ because it has the biggest timestamp among the children of ${\tt b}$ (and all the nodes in the tree ${\tt N}$). This corresponds exactly to the sequence obtained by applying the operation $\alabellongind[{\tt addAt}]{a,k}{s}{}{}$ in the context of the specification.

    \item[-] If $\vert s_{\alabel} \vert < k$: Let $b$ be the last value of $s_{\alabel}$ and assume that $s_{\alabel} = s_1 \cdot b$. Then, $s = s_1 \cdot b \cdot a$. By the causal delivery assumption, all values of $s_{\alabel}$ is in $N$ and is in $l$. Assume that $l = l_1 \cdot b \cdot l_2$. Then, $s_1 \cdot b$ is a sub-sequence of $l_1 \cdot b \cdot l_2$.

        The result of applying the effector $\effector$ corresponding to $\alabellongind[{\tt addAt}]{a,k}{s}{\ats_a}{}$ is to add ${\tt a}$ as a child of ${\tt b}$. Then, applying ${\tt traverse}$ on the new state will result in a sequence where ${\tt a}$ is placed just after ${\tt b}$ because it has the biggest timestamp among the children of ${\tt b}$ (and all the nodes in the tree ${\tt N}$). This corresponds exactly to the sequence obtained by applying the operation $\alabellongind[{\tt addAt}]{a,k}{s}{}{}$ in the context of the specification.

    \item[-] If $s_{\alabel} = \epsilon$: Let $b = \circ$. Then, $s = a$.

        The result of applying the effector $\effector$ corresponding to $\alabellongind[{\tt addAt}]{a,k}{s}{\ats_a}{}$ is to add ${\tt a}$ as a child of $b = \circ$. Then, applying ${\tt traverse}$ on the new state will result in a sequence where ${\tt a}$ is the first value because it has the biggest timestamp among all the nodes in the tree ${\tt N}$. This corresponds exactly to the sequence obtained by applying the operation $\alabellongind[{\tt addAt}]{a,k}{s}{}{}$ in the context of the specification.
    \end{itemize}

\item[-] Concerning effectors of $\alabel = \alabellongind[{\tt remove}]{a}{s}{}{}$: Assume that when $\alabel$ is generated from a replica state $(N_{\alabel},Tomb_{\alabel})$ and let $s_{\alabel}  = {\tt traverse}(N_{\alabel},Tomb_{\alabel})$. Since $a \in s_{\alabel}$, assume that $s_{\alabel} = s_{\alabel}^1 \cdot a \cdot s_{\alabel}^2$. We can see that $s = s_{\alabel}^1 \cdot s_{\alabel}^2$.

    By the causal delivery assumption we can see that all values of ${\tt N_{\alabel}}$ are in ${\tt N}$. Therefore, $a \in l$ and $s_{\alabel}^1 \cdot s_{\alabel}^2$ is a sub-sequence of $l$. The result of applying the effector $\effector$ corresponding to $\alabellongind[{\tt remove}]{a}{s}{}{}$ is to add ${\tt a}$ into $Tomb$. Then, applying ${\tt traverse}$ on the new state will result in a sequence without ${\tt a}$. This corresponds exactly to the sequence obtained by applying the operation $\alabellongind[{\tt remove}]{a}{s}{}{}$ in the context of the specification.
\end{itemize}

This completes the proof of this lemma. $\qed$
\end {proof}

\section{\crdtlin{} Proof for State-Based CRDT Implementations}
\label{sec:RA-linearizability proof for state-based CRDT implementations}

In this section, we briefly introduce the notion of state-based CRDT of \cite{ShapiroPBZ11}, and propose its semantics. Then, we propose our prove methodology. Similarly as ~\autoref{sec:proofs}, our prove methodology intends to prove a similar lemma of Lemma \ref{lem:replica_states} and $\mathsf{Refinement}$. In state-based CRDT, a message contains a updated replica state instead of information of only one operation. Therefore, our proof is more complex.

We associate each operation with a ``local'' effector, which is only used for proof. Our prove methodology is used in three forms depending on whether the arguments of ``local'' effector of each update operation is unique.

\subsection{Brief Introduction of State-Based CRDT}
\label{subsec:brief introduction of state-baed CRDT}

Let us first introduce the notion of join semilattice. Given a partial order $<$ and two values $x$ and $y$, we say that $z$ is a least upper bound of $x$ and $y$, if $x<z$, $y<z$, and there does not exists $z'$, such that $(z'<z) \wedge (x<z') \wedge (y<z')$. Let us use $x \sqcup y$ to denote the least upper bound of $x$ and $y$. A join semilattice is a partial order equipped with a least upper bound. According to the definition, it is obvious that the following properties hold:

\begin{itemize}
\setlength{\itemsep}{0.5pt}
\item[-] $x \sqcup y = y \sqcup x$,

\item[-] $x \sqcup x = x$,

\item[-] $(x \sqcup y) \sqcup z = x \sqcup (y \sqcup z)$.
\end{itemize}

Let us introduce the state-based \crdtimp{} of \cite{ShapiroPBZ11}, which is shown in Listing~\ref{lst:outline of state-based CRDT}.

Each replica contains a copy of data and gives its initial value. Both query methods and update methods can have arguments and return values, and can check pre-condition before a method is executed. Note that both query methods and update methods executes locally without synchronization with other replica. Instead, nondeterministically, state-based CRDT transmit modified payload as message between replicas. When a replica receives a message of modified payload, it calls method {\tt merge}, which takes the current payload and the payload in the message, and returns a new payload. The intuition of state-based CRDT is that, the domain of replica state is a join semilattice; the order of the join semilattice is given in {\tt compare} method; $\alabelshort[\mathtt{merge}]{x,y}$ returns the least upper bound of the join semilattice.

\begin{minipage}[t]{1.0\linewidth}
\begin{lstlisting}[frame=top,caption={Outline of state-based CRDT},
captionpos=b,label={lst:outline of state-based CRDT}]
  payload: type of payload
  initial: initial value

  //Query method
  query(arguments): returns
    pre: Precondition
    evaluate locally

  //Update method
  query(arguments): returns
    pre: Precondition
    evaluate locally

  compare(value1, value2): boolean b
    is value1 < value2 in the join semilattice?

  merge(value1, value2)
    returns the least upper bound of value1 and value2
\end{lstlisting}
\end{minipage}

\subsection{Semantics of State-Based CRDT Implementations}
\label{subsec:semantics of state-based CRDT implementations}

Given a state-based CRDT object $\aobj$, its semantics is defined as a labeled transition system (LTS) $\llbracket \aobj \rrbracket_s = (\globalstates,\acts,\aglobalstate_0,\rightarrow)$, where $\globalstates$ is a set of global configurations, $\acts$ is the set of transition labels, $\aglobalstate_0$ is the initial configuration, and $\rightarrow\subseteq \globalstates \times \acts\times \globalstates$ is the transition relation.

A global configuration $(\gstates, \avisord, \msgs)$ is a ``snapshot'' of the system that records all the operations that have been executed. 
$\gstates \in [\reps \rightarrow \localstates]$ stores the local configuration of each replica, and $\avisord\subseteq \powerset{\labels \times \labels}$ is the visibility relation. A local configuration $(\alabelset, \astate)$ contains the state $\astate$ of a replica and the set $\alabelset$ of labels of operations that are ``visible'' to current replica. $\alabel \in \alabelset$, if either $\alabel$ originates in this replica, or there exists a message $(Ls',\astate')$ that is ``aware of $\alabel$'', or we can say, $\alabel \in Ls'$, and such message has been applied in current replica state.
$\msgs$ stores the messages in current distributed system. Each message of $\msgs$ is a local configuration $(Ls,\astate)$. 
The reason for storing $Ls$ in a message is to construct the visibility relation. 

The transition labels are operations. For some fixed initial replica state $\astate_0$, the initial global configuration is defined by $\aglobalstate_0 = (\gstates_0, \emptyset, \emptyset) \in \globalstates$, where $\gstates_0$ maps each replica $\arep$ into $(\emptyset, \astate_0)$.

The transition relation between global configurations is defined as follows:

The first transition rule OPERATION describes a replica $\arep$ in state $\astate$ executing an invocation of method $\amethod$ with argument $\argv$. We use a function $\atsource$ to represent the behavior of the method. 
$\atsource(\astate,\amethod,\argv)$ stands for calling method $\amethod$ with argument $\argv$ on the replica state $\sigma$, and then results in a return value $\retv$, a new replica state $\astate'$, and possibly, a timestamp $\ats$. We assume that timestamps are consistent with the visibility relation
$\avisord$. After this transition, the local configuration $(\alabelset,\sigma)$ of $\arep$ is changed by modifying replica state into $\astate'$ and adding $\alabel$ into the set $\alabelset$ of labels; the visibility relation $\avisord$ is changed to record the fact that all operations in $\alabelset$ is now visible to $\alabel$. 

\[
  \inferrule[\text{\sc Operation}]
  {\gstates(\arep) = (\alabelset, \astate) \\ \atsource(\astate,\amethod,\argv) = (\retv,\astate',\ats) \\ \alabel = \alabelobjind{\argv}{\retv}{(i,\ats)} \\ \mathit{unique}(i) \\
  \ats\neq\bot\implies (\,\forall \alabel'\in\alabelset.\ \tsof(\alabel') < \ats\,) \\
  \forall \alabel'\in \labeldom{\avisord}.\ \tsof(\alabel') \neq \ats}
  {(\gstates, \avisord, \msgs) \xrightarrow{\alabel} (\gstates[\arep \leftarrow (\alabelset \cup \{\alabel\}, \astate')], \avisord \cup (\alabelset \times \{\alabel\}), \msgs)}
\]

The second rule GENERATE describes a replica $\arep$ sending a message. The content of this message is the local configuration $(\alabelset,\astate)$ of current replica. 
After this transition, only the set $\msgs$ of messages is changed by adding this message.

\[
  \inferrule[\text{\sc Generate}]
  {\gstates(\arep) = (\alabelset, \astate)}
  {(\gstates, \avisord, \msgs) \xrightarrow{} (\gstates, \avisord, \msgs \cup \{ (\alabelset,\astate) \} )}
\]

The third rule APPLY describes a replica $\arep$ with local configuration $(\alabelset, \astate_1)$ applying a message $(Ls,\astate_2)$. This transition results in modifying the state of $\arep$ to $\alabelshort[{\tt merge}]{\astate_1,\astate_2}$ and adding $Ls$ to the set of operations of local configuration of replica $\arep$.

\[
  \inferrule[\text{\sc Apply}]
  {\gstates(\arep) = (\alabelset, \astate_1) \\ (Ls,\astate_2) \in \msgs }
  {(\gstates, \avisord, \msgs) \xrightarrow{} (\gstates[\arep \leftarrow (\alabelset \cup Ls, \alabelshort[{\tt merge}]{\astate_1,\astate_2})], \avisord, \msgs)}
\]

It is clear from the definition of the transition relation that, $\avisord$ is a strict partial order.

A message can be applied multiple times on a same replica, or can be lost, or can be applied in any order.

\subsection{Proof Methodology for the Uniquely-Identified-Effectors}
\label{subsec:prove methodology for the uniquely-indenfied-effectors}

\noindent {\bf ``Local'' Effectors and {\tt apply} method}: In state-based CRDT, messages sent between replicas contain replica states instead of operations. To be able to prove a similar result as in Lemma \ref{lem:replica_states}, we extend each method as follows: For each method, the reading part is called a generator, and the updating part is called a ``local'' effector. Here we use ``local'' to emphasize that such effectors are only applied at the original replica, they are a proof artifact with no implications on the state-based CRDT semantics.

As in the operation based CRDTs, the ``local'' effector uses as arguments values produced by the generator. We give examples in Section~\ref{sec:implementation of state-based CRDT and their sequential specifications}. Let ${\tt arg}(\alabel)$ be the arguments that are generated by the generator of $\alabel$. We define a ``universal local effector function'' ${\tt apply}(\astate, {\tt arg}(\alabel) )$, which is a replica state transformer. ${\tt apply}(\astate, {\tt arg}(\alabel) )$ works as applying the ``local'' effector of $\alabel$ on $\astate$. The arguments of ``local'' effectors are important in our proof, and this is the reason why we write them explicitly in defining {\tt apply}.

Since the ``local'' effector contains information of only one operation, applying the ``local'' effector is normally ``simpler'' than {\tt merge}.

\noindent {\bf Three Cases}: We consider three different cases of state-based CRDT as follows:

\begin{itemize}
\setlength{\itemsep}{0.5pt}
\item[-] {\bf Uniquely-Identified-Effectors}: The argument of ``local'' effector of each update operation is unique. Moreover, there is a partial order among arguments of ``local'' effectors, and such partial order is consistent with the visibility relation. 

\item[-] {\bf Cumulative Effectors}: For each update operations $\alabel,\alabel'$, ${\tt arg}(\alabel) = {\tt arg}(\alabel')$, if and only if $\alabel$ and $\alabel'$ use a same method, same input values, same return values, and originate in a same replica. 

\item[-] {\bf Idempotent Effectors}: For each update operations $\alabel,\alabel'$, ${\tt arg}(\alabel) = {\tt arg}(\alabel')$, if and only if $\alabel$ and $\alabel'$ use a same method, same input values and same return values. 
\end{itemize} 

Uniquely-identified-effectors contains state-based multi-value register and state-based LWW-element-set. Cumulative effectors contains state-based PN-counter. Idempotent effectors contains state-based 2P-set. 

In this subsection, we propose our proof methodology of \crdtlin{} for the uniquely-identified-effectors. 

\noindent {\bf A Lemma Similar to Lemma \ref{lem:replica_states}}: We need to prove the following lemma, which states that $\astate$ can be obtained from the initial replica state by applying ``local'' effectors of visible operations in any order consistent with the visibility relation. Especially, this holds when this order is the projection of linearization into visible operations, since the linearization is consistent with the visibility relation. Here $Prop_1$, $Prop_2$, $Prop_3$, $Prop_4$ and $Prop_5$ are properties that are introduced in the latter part of this subsection. We will prove that they make Lemma \ref{lemma:replica state and message can be obtained by doing merge to local effectors for the first case} holds.

\begin{lemma}
\label{lemma:replica state and message can be obtained by doing merge to local effectors for the first case}
Let $\aobj$ be a state-based CRDT that is uniquely-identified-effectors, and satisfies properties $Prop_1$, $Prop_2$, $Prop_3$, $Prop_4$ and $Prop_5$. Given a history of $\aobj$ with a linearization of the operations in history (possibly, rewritten using a query-update rewriting $\gamma$), consistent with the visibility relation. Then, for each local configuration $(\alabelset,\astate)$ or message with content $(\alabelset,\astate)$,

\begin{align*}
\astate = {\tt apply}( {\tt apply}( \ldots {\tt apply} (\astate_0,{\tt arg}(\alabel_1)),\ldots,{\tt arg}(\alabel_k) )
\end{align*}
where $\astate_0$ is the initial state, $\alabelset = \{ \alabel_1,\ldots,\alabel_k \}$, and $\alabel_1 \cdot \ldots \cdot \alabel_k$ is consistent with the visibility relation. Especially, this holds when $\alabel_1 \cdot \ldots \cdot \alabel_k = \alinord\downarrow_{\alabelset}$.
\end{lemma}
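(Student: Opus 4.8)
**Proof proposal for Lemma~\ref{lemma:replica state and message can be obtained by doing merge to local effectors for the first case}.**

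The plan is to proceed by induction on the length of the execution $\rho$ of the state-based object $\aobj$ that produces the global configuration containing the local configuration $(\alabelset,\astate)$ (or the message $(\alabelset,\astate)$). I will establish the stronger invariant that \emph{simultaneously} for every replica's local configuration and every message in $\msgs$, the state component is equal to the result of composing the universal local effector \lstinline|apply| over the labels of the accompanying label set, in \emph{any} order consistent with the visibility relation $\avisord$ — and in particular in the order $\alinord\downarrow_{\alabelset}$, since a linearization is by assumption consistent with $\avisord$. The three transition rules \textsc{Operation}, \textsc{Generate}, and \textsc{Apply} must each be shown to preserve this invariant. The key algebraic facts I expect to need — and which I would package as the properties $Prop_1,\ldots,Prop_5$ referenced in the statement — are: (i) \emph{order-independence} of \lstinline|apply| over any two labels whose arguments are incomparable in the order on ``local'' effector arguments (this is the analogue of $\mathsf{Commutativity}$, and it is where the uniquely-identified-effectors hypothesis and the fact that the argument order is consistent with $\avisord$ are used); (ii) \emph{idempotence}, i.e.\ reapplying an effector whose argument is already ``absorbed'' in $\astate$ leaves $\astate$ unchanged; and (iii) a \emph{merge = join-of-effectors} compatibility law stating that $\alabelshort[{\tt merge}]{\astate_1,\astate_2}$, where $\astate_i$ is the composition of \lstinline|apply| over a label set $\alabelset_i$ (over $\astate_0$), equals the composition of \lstinline|apply| over $\alabelset_1\cup\alabelset_2$ over $\astate_0$.

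The base case is immediate: the initial global configuration has every replica at $(\emptyset,\astate_0)$ and no messages, and the empty composition of \lstinline|apply| over $\emptyset$ is $\astate_0$. For the inductive step, the \textsc{Operation} rule applies the generator of $\amethod$ at replica $\arep$ and records a new label $\alabel$; I would argue that the new replica state $\astate'$ equals \lstinline|apply($\astate$, arg($\alabel$))|, where $\astate$ satisfies the invariant for $\alabelset$, and that $\alabel$ is maximal among $\alabelset\cup\{\alabel\}$ in the argument order (since the fresh timestamp/identifier exceeds all visible ones), so appending it at the end is consistent with any order-completion of the new $\avisord$; $Prop_1$ lets us move $\alabel$ to the end of any other consistent order. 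The \textsc{Generate} rule copies a local configuration verbatim into $\msgs$, so the invariant transfers literally. The \textsc{Apply} rule is the crux: replica $\arep$ with $(\alabelset,\astate_1)$ absorbs a message $(Ls,\astate_2)$, moving to $(\alabelset\cup Ls,\alabelshort[{\tt merge}]{\astate_1,\astate_2})$; by the induction hypothesis $\astate_1$ is the \lstinline|apply|-composition over $\alabelset$ and $\astate_2$ over $Ls$, and I would invoke the merge-compatibility law (iii) to conclude that $\alabelshort[{\tt merge}]{\astate_1,\astate_2}$ is the \lstinline|apply|-composition over $\alabelset\cup Ls$; since $\avisord$ restricted to $\alabelset\cup Ls$ is still a strict partial order and its completions include $\alinord\downarrow_{\alabelset\cup Ls}$, order-independence (i) finishes the step. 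One must also check that a label absorbed ``a second time'' (a message applied more than once, or overlapping label sets) causes no harm, which is exactly where idempotence (ii) enters.

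The main obstacle is the \textsc{Apply} case and, more precisely, justifying the merge-compatibility law (iii) — i.e.\ that \lstinline|merge| genuinely behaves like the least upper bound in a semilattice whose elements are realized as finite \lstinline|apply|-compositions, and that this lattice structure is compatible with the per-operation effectors. For the concrete uniquely-identified-effectors CRDTs (state-based multi-value register, state-based LWW-element-set) this holds because each ``local'' effector inserts a uniquely tagged element, \lstinline|merge| takes componentwise unions modulo the conflict-resolution flags, and the tag order is consistent with visibility — but stating $Prop_1$--$Prop_5$ in a way that is both abstract enough to cover these examples and strong enough to force (iii) is the delicate design point. I would therefore formulate $Prop_1$ as order-independence of \lstinline|apply| on incomparable arguments, $Prop_2$ as idempotence, $Prop_3$ as the generator producing the next effector argument strictly above all visible ones, $Prop_4$ as a commutation law between \lstinline|apply| and \lstinline|merge| (applying an effector then merging equals merging then applying), and $Prop_5$ as \lstinline|merge| being idempotent/associative/commutative with $\alabelshort[{\tt merge}]{\astate,\astate_0}=\astate$; together these let one prove (iii) by a sub-induction on $|Ls|$, peeling off the maximal label of $Ls$, applying $Prop_4$, and recursing.
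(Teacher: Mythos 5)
Your proposal matches the paper's own proof in all essentials: induction on executions over the three transition rules, a trivial \textsc{Generate} case, an \textsc{Apply} case resolved by repeatedly peeling off a label that is maximal in the argument order (which is consistent with visibility by the uniquely-identified-effectors hypothesis) and commuting \lstinline|merge| with \lstinline|apply|, and commutativity of concurrent effectors to pass to an arbitrary visibility-consistent order. The only divergence is in how the auxiliary laws are packaged --- the paper uses a dedicated property $Prop_3$ (merging two states that have both absorbed the same effector equals applying it once to their merge) where you derive the same fact from idempotence of \lstinline|apply| plus the merge--apply commutation law --- and this is an equivalent reformulation, not a different proof.
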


\noindent {\bf Definition of Predicate $P1$ and Properties $Prop_1,\ldots,Prop_5$}: Let us formally introduce the necessary predicate and properties for proving Lemma \ref{lemma:replica state and message can be obtained by doing merge to local effectors for the first case}.

We define a predicate $P1$ that has the following property: $P1(\astate, {\tt arg}(\alabel))$ holds, if and only if, if $\astate = {\tt apply}( {\tt apply}( \ldots {\tt apply} (\astate_0,{\tt arg}(\alabel_1)),\ldots,{\tt arg}(\alabel_n) )$ for some operations $\alabel_1,\ldots,\alabel_n$, then for each $1 \leq i \leq n$, ${\tt arg}(\alabel)$ is not less than ${\tt arg}(\alabel_i)$.

Let us propose the following properties $Prop_1,\ldots,Prop_5$:

\begin{itemize}
\setlength{\itemsep}{0.5pt}
\item[-] $Prop_1$: If operations $\alabel$ and $\alabel'$ are concurrent, then for each replica state $\astate$, ${\tt apply}( {\tt apply}( \astate, {\tt arg}(\alabel) ),$ ${\tt arg}(\alabel') ) = {\tt apply}( {\tt apply}( \astate, {\tt arg}(\alabel') ), {\tt arg}(\alabel) )$.

\item[-] $Prop_2$: If $P1(\astate, {\tt arg}(\alabel))$ and $P1(\astate', {\tt arg}(\alabel))$ holds, then ${\tt merge}( \astate, {\tt apply}( \astate', {\tt arg}(\alabel) ) ) = {\tt apply}($ ${\tt merge}(\astate,\astate'), {\tt arg}(\alabel) )$.

\item[-] $Prop_3$: If $P1(\astate, {\tt arg}(\alabel))$ and $P1(\astate', {\tt arg}(\alabel))$ holds, then ${\tt merge}( {\tt apply}( \astate, {\tt arg}(\alabel) ), {\tt apply}( \astate', {\tt arg}(\alabel$ $) ) ) = {\tt apply}( {\tt merge}(\astate,$ $\astate'), {\tt arg}(\alabel) )$.

\item[-] $Prop_4$: ${\tt merge}(\astate_0,\astate_0) = \astate_0$. For each replica state $\astate,\astate'$, ${\tt merge}(\astate,\astate') = {\tt merge}(\astate',\astate)$.

\item[-] $Prop_5$: During the execution, if the current replica state of a replica is $\astate$, this replica does operation $\alabel$ and changes the replica state into $\astate'$. Then, ${\tt apply}(\astate, {\tt arg}(\alabel) ) = \astate'$.
\end{itemize}

\noindent {\bf Proof of Lemma \ref{lemma:replica state and message can be obtained by doing merge to local effectors for the first case}}:

\begin {proof}

We prove by induction on executions. Obvious they hold in $\aglobalstate_0$. Assume they hold along the execution $\aglobalstate_0 \xrightarrow{}^* \aglobalstate$ and there is a new transition $\aglobalstate \xrightarrow{} \aglobalstate'$. We need to prove that they still hold in $\aglobalstate'$.

\begin{itemize}
\setlength{\itemsep}{0.5pt}
\item[-] For case when replica $\arep$ do operation $\alabel$: Let $(\alabelset,\astate)$ and $(\alabelset',\astate')$ be the local configuration of replica $\arep$ of $\aglobalstate$ and $\aglobalstate'$, respectively. By the semantics we have $\alabelset' = \alabelset \cup \{ \alabel \}$. We need to prove that this lemma holds for local configuration $(\alabelset',\astate')$.

    By the induction assumption, we know that $\astate = {\tt apply}( {\tt apply}(\ldots {\tt apply} (\astate_0,{\tt arg}(\alabel_1)),\ldots,{\tt arg}($ $\alabel_n) )$, where $\alabelset = \{ \alabel_1,\ldots,\alabel_n \}$ and $\alabel_1 \cdot \ldots \cdot \alabel_n$ is consistent with the visibility relation. By $Prop_5$ we can see that $\astate' = {\tt apply}( \astate, {\tt arg}(\alabel) )$. By $Prop_1$, we can see that for each permutation $\alabel'_1 \cdot \ldots \cdot \alabel'_{n+1}$ of $\alabel_1 \cdot \ldots \cdot \alabel_n \cdot \alabel$ that is consistent with visibility relation, we have that $\astate = {\tt apply}( {\tt apply}(\ldots {\tt apply} (\astate_0,{\tt arg}(\alabel'_1)),\ldots,{\tt arg}(\alabel'_{n+1}) )$. Therefore, this lemma still holds for local configuration $(\alabelset',\astate')$.

\item[-] For case when replica $\arep$ send a message: Let $(\alabelset,\astate)$ and $(\alabelset',\astate')$ be the local configuration of replica $\arep$ of $\aglobalstate$ and $\aglobalstate'$, respectively. By the semantics we have $(\alabelset',\astate') = (\alabelset,\astate)$, and the content of the message is $(\alabelset,\astate)$. We need to prove that this lemma holds for the message with content $(\alabelset,\astate)$. This is obvious since this lemma holds for the local configuration $(\alabelset,\astate)$ by the induction assumption.

\item[-] For case when replica $\arep$ apply a message: Let $(\alabelset,\astate)$ and $(\alabelset',\astate')$ be the local configuration of replica $\arep$ of $\aglobalstate$ and $\aglobalstate'$, respectively. Let $(\alabelset'',\astate'')$ be the content of the message. We need to prove that this lemma still holds for local configuration $(\alabelset',\astate')$. By the semantics we have that $\alabelset' = \alabelset \cup \alabelset''$ and $\astate' = {\tt merge}(\astate, \astate'')$.

    By the induction assumption, we know that $\astate = {\tt apply}( {\tt apply}(\ldots {\tt apply} (\astate_0,{\tt arg}(\alabel_1)),\ldots,{\tt arg}($ $\alabel_u) )$, $\astate'' = {\tt apply}( {\tt apply}(\ldots {\tt apply} (\astate_0,{\tt arg}(\alabel''_1)),\ldots,{\tt arg}($ $\alabel''_v) )$, where $\alabelset = \{ \alabel_1,\ldots,\alabel_u \}$, $\alabelset'' = \{ \alabel''_1,\ldots,\alabel''_v \}$, and $\alabel_1 \cdot \ldots \cdot \alabel_u$ and $\alabel''_1 \cdot \ldots \cdot \alabel''_v$ are consistent with the visibility relation.

    Then, our proof proceed as follows:

    \begin{itemize}
    \setlength{\itemsep}{0.5pt}
    \item[-] If $\alabelset \neq \alabelset''$: There exists an operation $\alabel_a$ that is a maximal operation of $\alabelset \cup \alabelset''$ w.r.t the partial order of arguments of ``local'' effectors, and $(\alabel_a \in \alabelset'' \setminus \alabelset) \vee (\alabel_a \in \alabelset \setminus \alabelset'')$.

        \begin{itemize}
        \setlength{\itemsep}{0.5pt}
        \item[-] If $\alabel_a \in \alabelset'' \setminus \alabelset$: Assume that $\alabel_a = \alabel''_{idx}$. Since $\alabel''_1 \cdot \ldots \cdot \alabel''_v$ is consistent with the visibility relation, we can see that $\forall j>idx$, $\alabel''_{idx}$ and $\alabel''_j$ are concurrent. We can obtain $\alabel''_1 \cdot \ldots \cdot \alabel''_{idx-1} \cdot \alabel''_{idx+1} \cdot \ldots \cdot \alabel''_v \cdot \alabel''_{idx}$ from $\alabel''_1 \cdot \ldots \cdot \alabel''_v$ by several times of swapping pairs of operations that are concurrent and adjacent. Let $\astate''_1 = {\tt apply}( {\tt apply}(\ldots {\tt apply} (\astate_0,{\tt arg}(\alabel''_1)),\ldots, {\tt arg}(\alabel''_{idx-1})) \cdot {\tt arg}(\alabel''_{idx+1})) \cdot \ldots \cdot {\tt arg}(\alabel''_v))$. By $Prop_1$, we can see that $\astate'' = {\tt apply}( \astate''_1, {\tt arg}(\alabel''_{idx}) )$.

            Since $\alabel_a = \alabel''_{idx}$ is a maximal operation of $\alabelset \cup \alabelset''$ w.r.t the partial order of arguments of ``local'' effectors, we can see that $P1(\astate, {\tt arg}(\alabel''_{idx}))$ and $P1(\astate''_1, {\tt arg}(\alabel''_{idx}))$ hold. By $Prop_2$, we can see that $\astate' = {\tt merge}(\astate, \astate'') = {\tt merge}(\astate, {\tt apply}( \astate''_1, {\tt arg}(\alabel''_{idx}) ) ) = {\tt apply}($ ${\tt merge}(\astate,\astate''_1), {\tt arg}(\alabel''_{idx}) )$.

        \item[-] If $\alabel_a \in \alabelset \setminus \alabelset''$: By $Prop4$, we can see that $\astate' = {\tt merge}(\astate, \astate'') = {\tt merge}(\astate'',\astate)$. Then, we can similarly work as above case.
        \end{itemize}

    \item[-] If $\alabelset = \alabelset''$: There exists an operation $\alabel_a$ that is a maximal operation of $\alabelset \cup \alabelset''$ w.r.t the partial order of ``local'' effectors.

        Assume that $\alabel_a = \alabel_{idx1}$ and $\alabel_a = \alabel''_{idx2}$. Similarly, we can prove that $\astate = {\tt apply}( \astate_1, {\tt arg}(\alabel_a) )$ and $\astate'' = {\tt apply}( \astate''_1, {\tt arg}(\alabel_a) )$, where $\astate_1 = {\tt apply}( {\tt apply}(\ldots {\tt apply} (\astate_0,{\tt arg}(\alabel_1)),\ldots, {\tt arg}($ $\alabel_{idx1-1})) \cdot {\tt arg}(\alabel_{idx1+1})) \cdot \ldots \cdot {\tt arg}(\alabel_u))$ and $\astate''_1 = {\tt apply}( {\tt apply}(\ldots {\tt apply} (\astate_0,{\tt arg}(\alabel''_1)),\ldots, {\tt arg}($ $\alabel''_{idx2-1})) \cdot {\tt arg}(\alabel''_{idx2+1})) \cdot \ldots \cdot {\tt arg}(\alabel''_v))$.

        Since $\alabel_a$ is a maximal operation of $\alabelset \cup \alabelset''$w.r.t the partial order of arguments of ``local'' effectors, we can see that $P1(\astate_1, {\tt arg}(\alabel_a))$ and $P1(\astate''_1, {\tt arg}(\alabel_a))$ hold. By $Prop_3$, we can see that $\astate' = {\tt merge}(\astate, \astate'') = {\tt merge}( {\tt apply}(\astate_1, {\tt arg}(\alabel_a) ), {\tt apply}( \astate''_1, {\tt arg}($ $\alabel_a) ) ) = {\tt apply}( {\tt merge}(\astate_1,\astate''_1),$ ${\tt arg}(\alabel_a) )$.
    \end{itemize}

    By doing above process for several times, we finally can prove that, $\astate' = {\tt apply}( {\tt apply}(\ldots {\tt apply} ($ ${\tt merge}(\astate_0,\astate_0), {\tt arg}(\alabel'_1)),\ldots,{\tt arg}(\alabel'_n))$, where $\alabelset' = \{ \alabel'_1,\ldots,\alabel'_n \}$, and for each $i$, ${\arg}(\alabel'_i)$ is maximal among $\{ {\tt arg}(\alabel'_1),\ldots,{\tt arg}(\alabel'_i) \}$ w.r.t the partial order of arguments of ``local'' effectors. Let us prove that $\alabel'_1 \cdot \ldots \cdot \alabel'_n$ is consistent with the visibility relation by contradiction. Assume that there exists indexes $i1<i2$, such that $(\alabel'_{i2},\alabel'_{i1}) \in \avisord$. Since the partial order of arguments of ``local'' effectors is consistent with the visibility relation, we can see that ${\tt arg}(\alabel'_{i2}) < {\tt arg}(\alabel'_{i1})$. However, we already knows that ${\tt arg}(\alabel'_{i2})$ is maximal among $\{ {\tt arg}(\alabel'_1),\ldots,{\tt arg}(\alabel'_{i1}),\ldots,{\tt arg}(\alabel'_{i2}) \}$ w.r.t the partial order of arguments of ``local'' effectors, which is a contradiction. Therefore, $\alabel'_1 \cdot \ldots \cdot \alabel'_n$ is consistent with the visibility relation.

    By $Prop_4$, we can see that $\astate' = {\tt apply}( {\tt apply}(\ldots {\tt apply} ( \astate_0, {\tt arg}(\alabel'_1)),\ldots,$ ${\tt arg}(\alabel'_n))$. By $Prop_1$, we can see that, for each permutation $\alabel_1^1 \cdot \ldots \cdot \alabel_n^1$ of $\alabel'_1 \cdot \ldots \cdot \alabel'_n$ that is consistent with visibility relation, we have that $\astate = {\tt apply}( {\tt apply}(\ldots {\tt apply} (\astate_0,{\tt arg}(\alabel_1^1)),\ldots,{\tt arg}(\alabel_n^1) )$. Therefore, this lemma still holds for local configuration $(\alabelset',\astate')$.
\end{itemize}

This completes the proof of this lemma. $\qed$
\end {proof}

\noindent {\bf Proving Refinement}: Since our method has generator and ``local'' effector, similarly as \autoref{sec:distributed-lin}, 
when the state-based CRDT object admits timestamp-order linearizations, we prove the existence of a refinement mapping by proving $\mathsf{Refinement}_{\tsof{}}$. Then, together with Lemma \ref{lemma:replica state and message can be obtained by doing merge to local effectors for the first case}, we can prove \crdtlin{}.

For the state-based multi-value register (introduced in \autoref{subsec:state-based multi-value register, its sequential specification and its ``local'' effectors}), each {\tt write} method generates an unique version vector. Such version vector is contained in the arguments of ``local'' effectors of {\tt write}, and the partial order of version vectors are consistent with the visibility relation. The execution-order linearization is consistent with the partial order of version vectors. The state-based multi-value register admits execution-order linearizations. Then, we define $\mathsf{Refinement}_v$ as the existence of a mapping $\refmap$ such that:

\begin{itemize}
\item[Simulating effectors:] For every effector $\delta$ corresponding to an operation $\alabel$,
\begin{align*}
\forall \sigma\in\Sigma.\ P1(\astate, \alabel)  \land \sigma'=\delta(\sigma)\implies \refmap(\sigma)\specarrow{\secondrep(\gamma(\alabel))}\refmap(\sigma')
\end{align*}

\item[Simulating generators:] For every query $\amethod$, and every $\sigma \in \Sigma$,
  \[
    \atsource(\sigma,\amethod,\argv)= (\retv,\_,\_) \implies \refmap(\sigma)\specarrow{\alabel}\refmap(\sigma)
  \]
\noindent where $\alabel=\alabellong{\argv}{\retv}$. Also, for every query-update $\amethod$, and $\sigma\in\Sigma$,
\[
  \atsource(\sigma,\amethod,\argv)= (\retv,\_,\_) \implies \refmap(\sigma)\specarrow{\firstrep(\gamma(\alabel))}\refmap(\sigma).
\]
\end{itemize}

We need to prove $\mathsf{Refinement}_v$. Then, together with Lemma \ref{lemma:replica state and message can be obtained by doing merge to local effectors for the first case}, we can prove \crdtlin{}.

\subsection{Proof Methodology for the Cumulative Effectors}
\label{subsec:prove methodology for the cumulative effectors}

In this subsection, we propose our prove methodology of \crdtlin{} for the cumulative effectors. 

\noindent {\bf A Lemma Similar to Lemma \ref{lem:replica_states}}: We need to prove the following lemma, which states that $\astate$ can be obtained from the initial replica state by applying ``local'' effectors of visible operations in any order. Especially, this holds when this order is the projection of linearization into visible operations. Here $Prop'_1$, $Prop'_2$ and $Prop'_3$ are properties that are introduced in the latter part of this subsection. They make Lemma \ref{lemma:replica state and message can be obtained by doing merge to local effectors for the second case} holds.

\begin{lemma}
\label{lemma:replica state and message can be obtained by doing merge to local effectors for the second case} 
Let $\aobj$ be a state-based CRDT that is cumulative effectors, and satisfies properties $Prop'_1$, $Prop'_2$, $Prop'_3$, $Prop_4$ and $Prop_5$. Given a history of $\aobj$ with a linearization of the operations in history(possibly, rewritten using a query-update rewriting $\gamma$), consistent with the visibility relation. Then, for each local configuration $(\alabelset,\astate)$ or message with content $(\alabelset,\astate)$,

\begin{align*}
\astate = {\tt apply}( {\tt apply}( \ldots {\tt apply} (\astate_0,{\tt arg}(\alabel_1)),\ldots,{\tt arg}(\alabel_k) )
\end{align*}
where $\astate_0$ is the initial state and $\alabelset = \{ \alabel_1,\ldots,\alabel_k \}$. Especially, this holds when $\alabel_1 \cdot \ldots \cdot \alabel_k = \alinord\downarrow_{\alabelset}$.
\end{lemma}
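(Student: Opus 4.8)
The plan is to prove this by induction on the length of the execution, in exact parallel with the proof of Lemma~\ref{lemma:replica state and message can be obtained by doing merge to local effectors for the first case}, strengthening the statement to the invariant that \emph{every} local configuration $(\alabelset,\astate)$ and \emph{every} message $(\alabelset,\astate)$ reachable in the LTS of $\aobj$ satisfies $\astate = {\tt apply}(\ldots {\tt apply}(\astate_0,{\tt arg}(\alabel_1))\ldots,{\tt arg}(\alabel_k))$ for \emph{some} enumeration $\alabel_1,\ldots,\alabel_k$ of $\alabelset$. By $Prop'_1$ (unrestricted commutativity of ${\tt apply}$, which is available precisely because the effectors are cumulative), the invariant then holds for \emph{every} enumeration, in particular for $\alinord\!\downarrow_\alabelset$; this also explains why, unlike the statement of Lemma~\ref{lem:replica_states} and of the uniquely-identified case, no visibility-consistency of the order $\alabel_1\cdots\alabel_k$ is needed here. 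Query-update rewritings $\gamma$ are threaded through exactly as in the operation-based case and do not affect the argument. Before starting I would record the structural facts about the state-based semantics that will replace the partial order on arguments used in the uniquely-identified proof: label sets grow monotonically along the execution and are closed downward under $\avisord$ (proved by a separate routine induction over {\sc Operation}, {\sc Generate}, {\sc Apply}, without appealing to causal delivery); operations originating at the same replica are $\avisord$-totally-ordered; hence, for each argument value $v$, the operations with ${\tt arg}$ equal to $v$ form an $\avisord$-chain of which any label set contains only a prefix.

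The two easy transitions are {\sc Operation} and {\sc Generate}. For {\sc Operation}, a replica $\arep$ with configuration $(\alabelset,\astate)$ issues $\alabel$ and moves to $(\alabelset\cup\{\alabel\},\astate')$; property $Prop_5$ gives $\astate'={\tt apply}(\astate,{\tt arg}(\alabel))$, so the induction hypothesis for $(\alabelset,\astate)$ plus one more ${\tt apply}$, followed by $Prop'_1$, yields the invariant for $(\alabelset\cup\{\alabel\},\astate')$. For {\sc Generate}, the new message is a verbatim copy of a local configuration for which the invariant already holds, so there is nothing to do.

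The real work is the {\sc Apply} transition, where a replica with configuration $(\alabelset,\astate)$ consumes a message $(\alabelset'',\astate'')$ and moves to $(\alabelset\cup\alabelset'',{\tt merge}(\astate,\astate''))$. Here I would argue purely algebraically, peeling one operation at a time off the union of the two label sets. Maintain the invariant ``${\tt merge}(X,Y)$ with $X$ obtained by ${\tt apply}$-ing the arguments of a label set $A$ in some order and $Y$ obtained by ${\tt apply}$-ing the arguments of a label set $B$ in some order'', starting from $A=\alabelset$, $B=\alabelset''$, $X=\astate$, $Y=\astate''$, where the induction hypothesis is what justifies the base case. Pick an operation $\alabel_a$ that is $\avisord$-maximal in $A\cup B$. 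If $\alabel_a\in B\setminus A$, use $Prop'_1$ to re-express $Y={\tt apply}(Y_1,{\tt arg}(\alabel_a))$ with $Y_1$ built from $B\setminus\{\alabel_a\}$, then push ${\tt apply}$ through ${\tt merge}$ using $Prop'_2$ to get ${\tt merge}(X,Y)={\tt apply}({\tt merge}(X,Y_1),{\tt arg}(\alabel_a))$, and recurse on ${\tt merge}(X,Y_1)$ with $A$ and $B\setminus\{\alabel_a\}$; the case $\alabel_a\in A\setminus B$ is the same after commuting ${\tt merge}$ via $Prop_4$; the case $\alabel_a\in A\cap B$ re-expresses both $X$ and $Y$ as ${\tt apply}(\cdot,{\tt arg}(\alabel_a))$ and uses $Prop'_3$, recursing with $A\setminus\{\alabel_a\}$ and $B\setminus\{\alabel_a\}$. (This streamlines the ``$\alabelset\neq\alabelset''$ versus $\alabelset=\alabelset''$'' split of the first-case proof into a single casing on the membership of $\alabel_a$.) Each step strictly decreases $|A\cup B|$, so the recursion terminates at $A=B=\emptyset$ with ${\tt merge}(\astate_0,\astate_0)=\astate_0$ by $Prop_4$; unwinding produces ${\tt merge}(\astate,\astate'')$ as an iterated ${\tt apply}$ of the arguments of $\alabelset\cup\alabelset''$ in some order, and $Prop'_1$ concludes.

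The main obstacle — and the place where the structural facts from the first paragraph pay off — is discharging the side conditions of $Prop'_2$ and $Prop'_3$ at each step of the peeling (these are the cumulative-case analogues of the ``$P1$'' hypotheses of $Prop_2$ and $Prop_3$, but they cannot be phrased via a maximal argument since, unlike the uniquely-identified case, distinct operations now share arguments). The key observation is that, because $\alabel_a$ is $\avisord$-maximal in $A\cup B$, it is the last element of its ${\tt arg}$-chain in whichever of $A,B$ contains it; combined with ``only a prefix of each chain is seen'', this pins down exactly how many times ${\tt arg}(\alabel_a)$ has been ${\tt apply}$-ed to produce $X$ and $Y_1$ (respectively $X_1$ and $Y_1$), which is precisely the count inequality/equality that $Prop'_2$ (respectively $Prop'_3$) demands. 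For this to survive the recursion I would also verify that removing an $\avisord$-maximal element from a downward-$\avisord$-closed label set keeps it downward-closed (immediate, since a maximal element is below nothing) and that the ``virtual'' states $Y_1,X_1$ arising inside the recursion are manipulated only through $Prop'_1$, $Prop'_2$, $Prop'_3$, $Prop_4$ and never treated as reachable states — the induction hypothesis is invoked solely at the two genuine endpoints $(\alabelset,\astate)$ and $(\alabelset'',\astate'')$. A minor secondary point of care is that the order in which operations are peeled is irrelevant thanks to $Prop'_1$, so one is free to always pick a convenient $\avisord$-maximal element.
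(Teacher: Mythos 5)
Your overall route is the paper's: induction on executions, with the \textsc{Operation} step discharged by $Prop_5$ followed by $Prop'_1$, the \textsc{Generate} step trivial, and the \textsc{Apply} step handled by algebraically peeling operations off ${\tt merge}(\astate,\astate'')$ using $Prop'_1$, $Prop'_2$, $Prop'_3$ and $Prop_4$ until one bottoms out at ${\tt merge}(\astate_0,\astate_0)=\astate_0$, after which $Prop'_1$ gives every enumeration, in particular $\alinord\downarrow_{\alabelset}$. The one organizational difference is in the \textsc{Apply} case: the paper peels in two separate phases --- it first eliminates \emph{all} of $\alabelset\cap\alabelset''$ with $Prop'_3$ (which, contrary to what you suggest, carries no side condition at all), and only then peels the two disjoint remainders with $Prop'_2$ --- whereas you run a single loop on an $\avisord$-maximal element with a three-way membership split. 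That reorganization is harmless and arguably cleaner, and your preliminary structural facts (downward closure of label sets under $\avisord$, total ordering of same-replica operations) are exactly the right replacements for the argument-order used in the uniquely-identified case.

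The gap is in how you discharge the hypotheses of $Prop'_2$. As stated, $Prop'_2$ requires $P2(\astate,{\tt arg}(\alabel))$ \emph{and} $P2(\astate',{\tt arg}(\alabel))$, i.e., that ${\tt arg}(\alabel)$ occurs \emph{zero} times among the effectors used to build each of the two states --- not the ``count inequality'' your last paragraph appeals to. Visibility-maximality of $\alabel_a$ does not deliver this: take two ${\tt inc}()$ operations issued at the same replica, neither yet delivered to the other party; both lie in $B\setminus A$ and share the argument $(\mathit{inc},\arep)$, so when you peel the later one the state $Y_1$ is built from a set still containing the earlier one, $P2(Y_1,{\tt arg}(\alabel_a))$ is false, and $Prop'_2$ cannot be invoked. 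The fact the paper actually leans on at the corresponding point is different and should be stated and proved: since label sets are downward closed under $\avisord$ and same-argument operations are $\avisord$-totally ordered, no argument can occur in both $\alabelset\setminus\alabelset''$ and $\alabelset''\setminus\alabelset$; this cross-disjointness gives $P2$ of the state built from the \emph{opposite} side for free. The residual problem of a repeated argument \emph{within} one side is not covered by $Prop'_2$ as stated (the paper's own write-up asserts within-side distinctness without justification), so to make your step rigorous you must either strengthen $Prop'_2$ to the inequality form you gesture at and re-verify it for the concrete CRDTs (for PN-Counter, $\max(p,p'+1)=\max(p,p')+1$ holds exactly when $p\le p'$, which your prefix-of-a-chain observation does supply), or prove separately that the peeling order you choose never applies $Prop'_2$ with a repeated argument on the second state.
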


\noindent {\bf Definition of Predicate $P2$ and Properties $Prop'_1,Prop'_2,Prop'_3$}: Let us formally introduce the necessary predicate and properties for proving Lemma \ref{lemma:replica state and message can be obtained by doing merge to local effectors for the second case}.

We define a predicate $P_2$ that that has the following property: $P2(\astate, {\tt arg}(\alabel))$ holds, if and only if, if $\astate = {\tt apply}( {\tt apply}( \ldots {\tt apply} (\astate_0,{\tt arg}(\alabel_1)),\ldots,{\tt arg}(\alabel_n) )$ for some operations $\alabel_1,\ldots,\alabel_n$, then for each $1 \leq i \leq n$, ${\tt arg}(\alabel) \neq {\tt arg}(\alabel_i)$.

Let us propose the following properties $Prop'_1$, $Prop'_2$ and $Prop'_3$:

\begin{itemize}
\setlength{\itemsep}{0.5pt}
\item[-] $Prop'_1$: For each replica state $\astate$, ${\tt apply}( {\tt apply}( \astate, {\tt arg}(\alabel) ), {\tt arg}(\alabel') ) = {\tt apply}( {\tt apply}( \astate, {\tt arg}(\alabel') ),$ ${\tt arg}(\alabel) )$.

\item[-] $Prop'_2$: If $P2(\astate, {\tt arg}(\alabel))$ and $P2(\astate', {\tt arg}(\alabel))$ hold, then ${\tt merge}( \astate, {\tt apply}( \astate', {\tt arg}(\alabel) ) ) = {\tt apply}($ ${\tt merge}(\astate,\astate'), {\tt arg}(\alabel) )$.

\item[-] $Prop'_3$: For each replica state $\astate,\astate'$ and argument $x$ of a ``local'' effector , ${\tt merge}( {\tt apply}( \astate, x ),$ ${\tt apply}( \astate', x ) = {\tt apply}({\tt merge}(\astate, \astate'), x )$. 
\end{itemize}

\noindent {\bf Proof of Lemma \ref{lemma:replica state and message can be obtained by doing merge to local effectors for the second case}}:

\begin {proof}

We prove by induction on executions. Obvious they hold in $\aglobalstate_0$. Assume they hold along the execution $\aglobalstate_0 \xrightarrow{}^* \aglobalstate$ and there is a new transition $\aglobalstate \xrightarrow{} \aglobalstate'$. We need to prove that they still hold in $\aglobalstate'$.

\begin{itemize}
\setlength{\itemsep}{0.5pt}
\item[-] For case when replica $\arep$ do operation $\alabel$: Let $(\alabelset,\astate)$ and $(\alabelset',\astate')$ be the local configuration of replica $\arep$ of $\aglobalstate$ and $\aglobalstate'$, respectively. By the semantics we have $\alabelset' = \alabelset \cup \{ \alabel \}$. We need to prove that this lemma holds for local configuration $(\alabelset',\astate')$.

    By the induction assumption, we know that $\astate = {\tt apply}( {\tt apply}(\ldots {\tt apply} (\astate_0,{\tt arg}(\alabel_1)),\ldots,{\tt arg}($ $\alabel_n) )$, where $\alabelset = \{ \alabel_1,\ldots,\alabel_n \}$. By $Prop_5$ we can see that $\astate' = {\tt apply}( \astate, {\tt arg}(\alabel) )$. By $Prop'_1$, we can see that for each permutation $\alabel'_1 \cdot \ldots \cdot \alabel'_{n+1}$ of $\alabel_1 \cdot \ldots \cdot \alabel_n \cdot \alabel$, we have that $\astate = {\tt apply}( {\tt apply}(\ldots {\tt apply} (\astate_0,{\tt arg}(\alabel'_1)),\ldots,{\tt arg}(\alabel'_{n+1}) )$. Therefore, this lemma still holds for local configuration $(\alabelset',\astate')$.

\item[-] For case when replica $\arep$ send a message: Let $(\alabelset,\astate)$ and $(\alabelset',\astate')$ be the local configuration of replica $\arep$ of $\aglobalstate$ and $\aglobalstate'$, respectively. By the semantics we have $(\alabelset',\astate') = (\alabelset,\astate)$, and the content of the message is $(\alabelset,\astate)$. We need to prove that this lemma holds for the message with content $(\alabelset,\astate)$. This is obvious since this lemma holds for the local configuration $(\alabelset,\astate)$ by the induction assumption.

\item[-] For case when replica $\arep$ apply a message: Let $(\alabelset,\astate)$ and $(\alabelset',\astate')$ be the local configuration of replica $\arep$ of $\aglobalstate$ and $\aglobalstate'$, respectively. Let $(\alabelset'',\astate'')$ be the content of the message. We need to prove that this lemma still holds for local configuration $(\alabelset',\astate')$. By the semantics we have $\alabelset' = \alabelset \cup \alabelset''$ and $\astate' = {\tt merge}(\astate, \astate'')$.

    By the induction assumption, we know that $\astate = {\tt apply}( {\tt apply}(\ldots {\tt apply} (\astate_0,{\tt arg}(\alabel_1)),\ldots,{\tt arg}($ $\alabel_u) )$, $\astate'' = {\tt apply}( {\tt apply}(\ldots {\tt apply} (\astate_0,{\tt arg}(\alabel''_1)),\ldots,{\tt arg}($ $\alabel''_v) )$, where $\alabelset = \{ \alabel_1,\ldots,\alabel_u \}$ and $\alabelset'' = \{ \alabel''_1,\ldots,\alabel''_v \}$.

    Let us ``move the local effectors of duplicate operations out''. If $\alabelset \cap \alabelset'' \neq \emptyset$: Let $\alabel_a \in \alabelset \cap \alabelset''$, and assume that $\alabel_a = \alabel_{idx1}$ and $\alabel_a = \alabel''_{idx2}$. Let $\astate_1 = {\tt apply}( {\tt apply}(\ldots {\tt apply} (\astate_0,{\tt arg}(\alabel_1)),\ldots, {\tt arg}($ $\alabel_{idx1-1})) \cdot {\tt arg}(\alabel_{idx1+1})) \cdot \ldots \cdot {\tt arg}(\alabel_u))$ and $\astate''_1 = {\tt apply}( {\tt apply}(\ldots {\tt apply} (\astate_0,{\tt arg}(\alabel''_1)),\ldots, {\tt arg}($ $\alabel''_{idx2-1})) \cdot {\tt arg}(\alabel''_{idx2+1})) \cdot \ldots \cdot {\tt arg}(\alabel''_v))$. By $Prop'_1$, we can see that $\astate = {\tt apply}( \astate_1, {\tt arg}(\alabel_{idx1}) )$ and $\astate'' = {\tt apply}( \astate''_1, {\tt arg}(\alabel''_{idx2}) )$. By $Prop'_3$, we can see that $\astate' = {\tt merge}(\astate, \astate'') = {\tt merge}( {\tt apply}(\astate_1$, ${\tt arg}(\alabel_a$ $) ), {\tt apply}( \astate''_1, {\tt arg}(\alabel_a) ) ) = {\tt apply}( {\tt merge}(\astate_1,\astate''_1), {\tt arg}(\alabel_a) )$. Assume that $\alabelset \cap \alabelset'' = \{ x_1,\ldots,x_l \}$, $\alabelset \setminus \alabelset'' = \{ y_1,\ldots,y_m \}$ and $\alabelset'' \setminus \alabelset = \{ z_1,\ldots,z_n \}$. Then, by doing above process for several times, we can obtain that $\astate' = {\tt apply}( {\tt apply}(\ldots {\tt apply} ( {\tt merge}( \astate_2, \astate''_2 ), {\tt arg}(x_1)),\ldots,{\tt arg}($ $x_l) )$, where $\astate_2 = {\tt apply}( {\tt apply}(\ldots {\tt apply} (\astate_0,{\tt arg}(y_1)),\ldots,{\tt arg}(y_m) )$ and $\astate''_2 = {\tt apply}( {\tt apply}($ $\ldots {\tt apply} (\astate_0,$ ${\tt arg}(z_1)),\ldots,{\tt arg}(z_n) )$. 
    
    Then, since $( \alabelset \setminus \alabelset'' ) \cap ( \alabelset'' \setminus \alabelset ) = \emptyset$. and the visibility relation is transitive, we can see that for each method $m$, input value $a$, return value $b$ and replica $\arep'$, it is impossible that both $( \alabelset \setminus \alabelset'' )$ and $( \alabelset'' \setminus \alabelset )$ contains operations that use method $m$, input value $a$, return value $b$ and originates in replica $\arep'$. Therefore, the arguments of``local'' effector of operations of $\alabelset \setminus \alabelset''$ is disjoint to the arguments of ``local'' effector of operations of $\alabelset'' \setminus \alabelset$. 

    Then, let us ``move the local effectors of other operations out''. By $Prop'_1$, we can see that $\astate_2 = {\tt apply}( \astate_5, {\tt arg}(y_1) )$, where $\astate_5 = {\tt apply}( {\tt apply}(\ldots$ ${\tt apply} (\astate_0,{\tt arg}(y_2)),\ldots,{\tt arg}(y_m) )$. We already know that the argument of ``local'' effector of $y_1$ is different than arguments of ``local'' effectors in $\{ y_2,\ldots,y_m\} \cup \{z_1,\ldots,z_n\}$. Therefore, $P2(\astate_5, {\tt arg}(y_1))$ and $P2(\astate''_2, {\tt arg}(y_1))$ hold. Therefore, by $Prop'_2$ and $Prop_4$, we can see that ${\tt merge}(\astate_2, \astate''_2) = {\tt merge}(\astate''_2, \astate_2) = {\tt merge}(\astate''_2, {\tt apply}( \astate_5, {\tt arg}(y_1) ) ) = {\tt apply}( {\tt merge}(\astate''_2,\astate_5), {\tt arg}(y_1) )$. By doing this process for several times, we can prove that ${\tt merge}(\astate_2,\astate''_2) = {\tt merge}(\astate''_2,\astate_2) = {\tt apply}( {\tt apply}(\ldots{\tt apply} ($ ${\tt merge}(\astate''_2,\astate_0) ,{\tt arg}(y_m)),\ldots,{\tt arg}(y_1) )$.

    Similarly, we can prove that ${\tt merge}(\astate''_2,\astate_0) = {\tt merge}(\astate_0,\astate''_2) = {\tt apply}( {\tt apply}(\ldots{\tt apply} ( {\tt merge}($ $\astate_0,\astate_0) ,{\tt arg}(z_n)),\ldots,{\tt arg}(z_1) )$. Therefore, by $Prop'_1$, we can see that, $\astate'$ can be obtained from ${\tt merge}(\astate_0,\astate_0)$ by applying the ``local'' effectors of operations of $\alabelset \cup \alabelset''$ in any order. By $Prop'_4$, we can see that, $\astate'$ can be obtained from $\astate_0$ by applying the ``local'' effectors of operations of $\alabelset \cup \alabelset''$ in any order. Therefore, this lemma still holds for local configuration $(\alabelset',\astate')$. 
\end{itemize}

This completes the proof of this lemma. $\qed$
\end {proof}

\noindent {\bf Proving Refinement}: Since our method has generator and ``local'' effector, similarly as \autoref{sec:distributed-lin}, when the state-based CRDT object admits execution-order linearizations, we prove the existence of a refinement mapping by proving $\mathsf{Refinement}$. Then, together with Lemma \ref{lemma:replica state and message can be obtained by doing merge to local effectors for the second case}, we can prove \crdtlin{}.

\subsection{Proof Methodology for Idempotent Effectors}
\label{subsec:prove methodology for idempotent effectors} 

In this subsection, we propose our proof methodology of \crdtlin{} for the idempotent effectors. 

\noindent {\bf A Lemma Similar to Lemma \ref{lem:replica_states}}: We need to prove the following lemma, which states that $\astate$ can be obtained from the initial replica state by applying ``local'' effectors of visible operations in any order. Especially, this holds when this order is the projection of linearization into visible operations. Here $Prop_6$ is a property that are introduced in the latter part of this subsection. It makes Lemma \ref{lemma:replica state and message can be obtained by doing merge to local effectors for the third case} holds.

\begin{lemma}
\label{lemma:replica state and message can be obtained by doing merge to local effectors for the third case}
Let $\aobj$ be a state-based CRDT that is idempotent effectors, and satisfies properties $Prop'_1$, $Prop'_2$, $Prop'_3$, $Prop_4$, $Prop_5$ and $Prop_6$. Given a history of $\aobj$ with a linearization of the operations in history(possibly, rewritten using a query-update rewriting $\gamma$), consistent with the visibility relation. Then, for each local configuration $(\alabelset,\astate)$ or message with content $(\alabelset,\astate)$, 

\begin{align*}
\astate = {\tt apply}( {\tt apply}( \ldots {\tt apply} (\astate_0,{\tt arg}(\alabel_1)),\ldots,{\tt arg}(\alabel_k) )
\end{align*}
where $\astate_0$ is the initial state and $\alabelset = \{ \alabel_1,\ldots,\alabel_k \}$. Especially, this holds when $\alabel_1 \cdot \ldots \cdot \alabel_k = \alinord\downarrow_{\alabelset}$.
\end{lemma}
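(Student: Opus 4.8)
The plan is to prove the statement by induction on the length of the execution reaching the global configuration that contains the local configuration (or message) in question, mirroring the proofs of Lemma~\ref{lemma:replica state and message can be obtained by doing merge to local effectors for the first case} and Lemma~\ref{lemma:replica state and message can be obtained by doing merge to local effectors for the second case}. The base case $\aglobalstate_0$ is immediate since every $\alabelset$ is empty. For the induction step I consider the three transition rules. The {\sc Operation} case is handled exactly as in the cumulative-effectors proof: if replica $\arep$ executes $\alabel$ then $\astate'={\tt apply}(\astate,{\tt arg}(\alabel))$ by $Prop_5$, and commutativity of ${\tt apply}$ ($Prop'_1$) lets one reorder the sequence of arguments for $\alabelset\cup\{\alabel\}$ into any order. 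The {\sc Generate} case is trivial: the message content is the current local configuration, to which the induction hypothesis applies verbatim.

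The crux is the {\sc Apply} case, where replica $\arep$ with local configuration $(\alabelset,\astate)$ absorbs a message $(\alabelset'',\astate'')$, producing $(\alabelset\cup\alabelset'',{\tt merge}(\astate,\astate''))$. By the induction hypothesis $\astate$, resp. $\astate''$, is obtained from $\astate_0$ by applying the ``local'' effectors of the operations in $\alabelset$, resp. $\alabelset''$, in some order. The difference with the cumulative case is that two distinct operations with the same method, input and return value but originating at different replicas now have \emph{equal} arguments, so the argument sets of $\alabelset\setminus\alabelset''$ and $\alabelset''\setminus\alabelset$ can overlap. The new ingredient needed to absorb this overlap is the idempotency property $Prop_6$ of ${\tt apply}$: together with $Prop'_1$ it shows that applying the local effectors of a set of operations in any order, possibly with repeated arguments, equals applying each \emph{distinct} argument exactly once. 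I would therefore first collapse both iterated applies to the distinct-argument sets $A=\{{\tt arg}(\alabel):\alabel\in\alabelset\}$ and $A''=\{{\tt arg}(\alabel):\alabel\in\alabelset''\}$.

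Next I would push ${\tt merge}$ through ${\tt apply}$ in two stages. For each $x\in A\cap A''$, $Prop'_3$ pulls $x$ out simultaneously from both operands, replacing ${\tt merge}({\tt apply}(\cdot,x),{\tt apply}(\cdot,x))$ by ${\tt apply}({\tt merge}(\cdot,\cdot),x)$; iterating this over $A\cap A''$ leaves ${\tt merge}$ applied to two states built respectively from $A\setminus A''$ and $A''\setminus A$, which are now genuinely disjoint sets. For these remaining arguments $Prop'_2$ applies, because disjointness together with the fact that the relevant sets contain no repeated argument ensures its precondition $P2$ holds at each step; extracting the arguments of $A\setminus A''$ and then those of $A''\setminus A$ one at a time, and using $Prop_4$ ($\astate_0={\tt merge}(\astate_0,\astate_0)$ and commutativity of ${\tt merge}$), I obtain that $\astate'$ is the result of applying, one each, the arguments in $A\cup A''=\{{\tt arg}(\alabel):\alabel\in\alabelset\cup\alabelset''\}$ in some order. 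A final appeal to $Prop'_1$ and $Prop_6$ re-expands this to an iteration over $\alabelset\cup\alabelset''=\alabelset'$ in any order, which is the claim. Note that, exactly as in the cumulative case, the conclusion only asserts an \emph{arbitrary} ordering of $\alabelset'$ rather than one consistent with visibility, which is what makes the whole argument go through without tracking causality.

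The step I expect to be the main obstacle is the bookkeeping in this {\sc Apply} case: justifying the ``collapse to distinct arguments'' step from $Prop_6$ and $Prop'_1$, and then checking that after peeling off $A\cap A''$ with $Prop'_3$ the leftover states are built from genuinely disjoint argument sets so that the $P2$ preconditions of $Prop'_2$ are satisfied at every extraction step. This is precisely where the idempotent-effectors hypothesis and $Prop_6$ are essential and where the proof genuinely departs from that of Lemma~\ref{lemma:replica state and message can be obtained by doing merge to local effectors for the second case}; the remaining manipulations are the same routine rewriting with $Prop_4$, $Prop_5$, $Prop'_1$ already used there.
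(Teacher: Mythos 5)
Your proposal is correct and follows essentially the same route as the paper's proof: induction on executions, with the {\sc Operation} and message-generation cases inherited from the cumulative case, and the {\sc Apply} case resolved by using $Prop'_1$ and $Prop'_3$ to factor out shared arguments, $Prop_6$ to absorb distinct operations whose effectors have equal arguments, and $Prop'_2$ with $Prop_4$ for the remaining disjoint arguments. The only difference is bookkeeping: the paper peels off first the operations in $\alabelset\cap\alabelset''$, then the cross-side pairs with equal arguments, then the rest, whereas you first collapse each side to its distinct-argument set and re-expand at the end --- the two orderings use the same properties in the same roles.
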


\noindent {\bf Definition of Property $Prop_6$}: The definition of property $Prop_6$ is as follows: 

\begin{itemize}
\setlength{\itemsep}{0.5pt}
\item[-] $Prop_6$: For each replica state $\astate$ and operation $\alabel$, ${\tt apply}( {\tt apply}( \astate, {\tt arg}(\alabel) ), {\tt arg}(\alabel)) = {\tt apply}( \astate,$ ${\tt arg}(\alabel) )$. 
\end{itemize}

\noindent {\bf Proof of Lemma \ref{lemma:replica state and message can be obtained by doing merge to local effectors for the third case}}:

\begin {proof}

We prove by induction on executions. Obvious they hold in $\aglobalstate_0$. Assume they hold along the execution $\aglobalstate_0 \xrightarrow{}^* \aglobalstate$ and there is a new transition $\aglobalstate \xrightarrow{} \aglobalstate'$. We need to prove that they still hold in $\aglobalstate'$.

\begin{itemize}
\setlength{\itemsep}{0.5pt}
\item[-] For case when replica $\arep$ do operation $\alabel$: Same as that of Lemma \ref{lemma:replica state and message can be obtained by doing merge to local effectors for the second case}. 

\item[-] For case when replica $\arep$ send a message: Same as that of Lemma \ref{lemma:replica state and message can be obtained by doing merge to local effectors for the second case}. 

\item[-] For case when replica $\arep$ apply a message: Let $(\alabelset,\astate)$ and $(\alabelset',\astate')$ be the local configuration of replica $\arep$ of $\aglobalstate$ and $\aglobalstate'$, respectively. Let $(\alabelset'',\astate'')$ be the content of the message. We need to prove that this lemma still holds for local configuration $(\alabelset',\astate')$. By the semantics we have $\alabelset' = \alabelset \cup \alabelset''$ and $\astate' = {\tt merge}(\astate, \astate'')$.

    By the induction assumption, we know that $\astate = {\tt apply}( {\tt apply}(\ldots {\tt apply} (\astate_0,{\tt arg}(\alabel_1)),\ldots,{\tt arg}($ $\alabel_u) )$, $\astate'' = {\tt apply}( {\tt apply}(\ldots {\tt apply} (\astate_0,{\tt arg}(\alabel''_1)),\ldots,{\tt arg}($ $\alabel''_v) )$, where $\alabelset = \{ \alabel_1,\ldots,\alabel_u \}$ and $\alabelset'' = \{ \alabel''_1,\ldots,\alabel''_v \}$.

    Let us ``move the local effectors of duplicate operations out''. If $\alabelset \cap \alabelset'' \neq \emptyset$: Let $\alabel_a \in \alabelset \cap \alabelset''$, and assume that $\alabel_a = \alabel_{idx1}$ and $\alabel_a = \alabel''_{idx2}$. Let $\astate_1 = {\tt apply}( {\tt apply}(\ldots {\tt apply} (\astate_0,{\tt arg}(\alabel_1)),\ldots, {\tt arg}($ $\alabel_{idx1-1})) \cdot {\tt arg}(\alabel_{idx1+1})) \cdot \ldots \cdot {\tt arg}(\alabel_u))$ and $\astate''_1 = {\tt apply}( {\tt apply}(\ldots {\tt apply} (\astate_0,{\tt arg}(\alabel''_1)),\ldots, {\tt arg}($ $\alabel''_{idx2-1})) \cdot {\tt arg}(\alabel''_{idx2+1})) \cdot \ldots \cdot {\tt arg}(\alabel''_v))$. By $Prop'_1$, we can see that $\astate = {\tt apply}( \astate_1, {\tt arg}(\alabel_{idx1}) )$ and $\astate'' = {\tt apply}( \astate''_1, {\tt arg}(\alabel''_{idx2}) )$. By $Prop'_3$, we can see that $\astate' = {\tt merge}(\astate, \astate'') = {\tt merge}( {\tt apply}(\astate_1$, ${\tt arg}(\alabel_a$ $) ), {\tt apply}( \astate''_1, {\tt arg}(\alabel_a) ) ) = {\tt apply}( {\tt merge}(\astate_1,\astate''_1), {\tt arg}(\alabel_a) )$. Assume that $\alabelset \cap \alabelset'' = \{ x_1,\ldots,x_l \}$, $\alabelset \setminus \alabelset'' = \{ y_1,\ldots,y_m \}$ and $\alabelset'' \setminus \alabelset = \{ z_1,\ldots,z_n \}$. Then, by doing above process for several times, we can obtain that $\astate' = {\tt apply}( {\tt apply}(\ldots {\tt apply} ( {\tt merge}( \astate_2, \astate''_2 ), {\tt arg}(x_1)),\ldots,{\tt arg}($ $x_l) )$, where $\astate_2 = {\tt apply}( {\tt apply}(\ldots {\tt apply} (\astate_0,{\tt arg}(y_1)),\ldots,{\tt arg}(y_m) )$ and $\astate''_2 = {\tt apply}( {\tt apply}($ $\ldots {\tt apply} (\astate_0,$ ${\tt arg}(z_1)),\ldots,{\tt arg}(z_n) )$.

    Then, Let us ``move the local effectors of duplicate arguments out''. 
    
    If there exists operation $\alabel_b \in \alabelset \setminus \alabelset''$ and operation $\alabel_c \in \alabelset'' \setminus \alabelset$, such that they use a same method, same input value and same return value, and thus, have same argument of ``local'' effector: Assume that $\alabel_b = y_{idxb}$ and $\alabel_c = z_{idxc}$. Let $\astate_3 = {\tt apply}( {\tt apply}(\ldots {\tt apply} (\astate_0,{\tt arg}(y_1)),\ldots, {\tt arg}($ $y_{idxb-1})) \cdot {\tt arg}(y_{idxb+1})) \cdot \ldots \cdot {\tt arg}(y_m))$ and $\astate''_3 = {\tt apply}( {\tt apply}(\ldots {\tt apply} (\astate_0,{\tt arg}(z_1)),\ldots, {\tt arg}($ $z_{idxc-1})) \cdot {\tt arg}(z_{idxc+1})) \cdot \ldots \cdot {\tt arg}(z_n))$. By $Prop'_1$, we can see that $\astate_2 = {\tt apply}( \astate_3, {\tt arg}(y_{idxb}) )$ and $\astate''_2 = {\tt apply}( \astate''_3, {\tt arg}(z_{idxc}) )$. By $Prop'_3$ and $Prop_6$, we can see that ${\tt merge}(\astate_2, \astate''_2) = {\tt merge}( {\tt apply}(\astate_3, {\tt arg}(y_{idxb}) ), {\tt apply}( \astate''_3,$ ${\tt arg}(z_{idxc}) ) )$ $= {\tt apply}( {\tt merge}(\astate_3,\astate''_3), {\tt arg}(y_{idxb}) ) = {\tt apply}( {\tt apply}( {\tt merge}(\astate_3,\astate''_3), {\tt arg}(y_{idxb}) ),$ ${\tt arg}(z_{dixc}) )$.

    Let $S_1 = \{ \alabel \vert \alabel \in \alabelset \setminus \alabelset'', \alabelset'' \setminus \alabelset$ contains an operation with same method, same input value and same return value of $\alabel\} \cup \{ \alabel \vert \alabel \in \alabelset'' \setminus \alabelset, \alabelset \setminus \alabelset''$ contains an operation with same method, same input value and same return value of $\alabel\}$. 
    Assume that $S_1 \cup (\alabelset \cap \alabelset'')= \{ o_1,\ldots,o_r \}$, $(\alabelset \setminus \alabelset'') \setminus S_1 = \{ p_1,\ldots,p_s \}$ and $(\alabelset'' \setminus \alabelset) \setminus S_1 = \{ q_1,\ldots,q_t \}$. By doing this process for several times, we can obtain that $\astate' = {\tt apply}( {\tt apply}(\ldots {\tt apply} ( {\tt merge}( \astate_4, \astate''_4 ), {\tt arg}(o_1)),\ldots,$ ${\tt arg}(o_r) )$, where $\astate_4 = {\tt apply}( {\tt apply}(\ldots {\tt apply} (\astate_0,{\tt arg}(p_1)),\ldots,{\tt arg}(p_s) )$ and $\astate''_4 = {\tt apply}($ ${\tt apply}(\ldots {\tt apply} (\astate_0,{\tt arg}(q_1)),\ldots,{\tt arg}(q_t) )$. Moreover, we can see that, the arguments of ``local'' effector of operations of $(\alabelset \setminus \alabelset'') \setminus S_1$ is disjoint to the arguments of ``local'' effector of operations of $(\alabelset'' \setminus \alabelset) \setminus S_1$.

    Then, let us ``move the local effectors of other operations out''. By $Prop'_1$, we can see that $\astate_4 = {\tt apply}( \astate_6, {\tt arg}(p_1) )$, where $\astate_6 = {\tt apply}( {\tt apply}(\ldots$ ${\tt apply} (\astate_0,{\tt arg}(p_2)),\ldots,{\tt arg}(p_s) )$. We already know that the argument of ``local'' effector of $p_1$ is different than arguments of ``local'' effectors in $\{ p_2,\ldots,p_s\} \cup \{q_1,\ldots,q_t\}$. Therefore, $P2(\astate_6, {\tt arg}(p_1))$ and $P2(\astate''_4, {\tt arg}(p_1))$ hold. Therefore, by $Prop'_2$ and $Prop_4$, we can see that ${\tt merge}(\astate_4, \astate''_4) = {\tt merge}(\astate''_4, \astate_4) = {\tt merge}(\astate''_4, {\tt apply}(\astate_6, {\tt arg}(p_1) ) ) = {\tt apply}( {\tt merge}(\astate''_4,\astate_6), {\tt arg}(p_1) )$. By doing this process for several times, we can prove that ${\tt merge}(\astate_4,\astate''_4) = {\tt merge}(\astate''_4,\astate_4) = {\tt apply}( {\tt apply}(\ldots{\tt apply} ($ ${\tt merge}(\astate''_4,\astate_0) ,{\tt arg}(p_s)),\ldots,{\tt arg}($ $p_1) )$. Similarly, we can prove that ${\tt merge}(\astate''_4,\astate_0) = {\tt merge}(\astate_0$, $\astate''_4) = {\tt apply}( {\tt apply}(\ldots{\tt apply} ($ ${\tt merge}(\astate_0,\astate_0) ,{\tt arg}(q_t)),\ldots,{\tt arg}(q_1) )$. Therefore, by $Prop'_1$, we can see that, $\astate'$ can be obtained from ${\tt merge}(\astate_0,\astate_0)$ by applying the ``local'' effectors of operations of $\alabelset \cup \alabelset''$ in any order. By $Prop'_4$, we can see that, $\astate'$ can be obtained from $\astate_0$ by applying the ``local'' effectors of operations of $\alabelset \cup \alabelset''$ in any order. Therefore, this lemma still holds for local configuration $(\alabelset',\astate')$ in the case of $C_2$.
\end{itemize}

This completes the proof of this lemma. $\qed$
\end {proof}

\noindent {\bf Proving Refinement}: Since our method has generator and ``local'' effector, similarly as \autoref{sec:distributed-lin}, when the state-based CRDT object admits execution-order linearizations, we prove the existence of a refinement mapping by proving $\mathsf{Refinement}$. Then, together with Lemma \ref{lemma:replica state and message can be obtained by doing merge to local effectors for the third case}, we can prove \crdtlin{}.

\section{Implementations of State-Based CRDT and Their Sequential Specifications}
\label{sec:implementation of state-based CRDT and their sequential specifications}

Let us introduce state-based CRDT implementations, their sequential specifications, and their ``local'' effectors and arguments of ``local'' effectors.

\subsection{State-Based Multi-Value Register, Its Sequential Specification and Its ``Local'' Effectors}
\label{subsec:state-based multi-value register, its sequential specification and its ``local'' effectors}

\noindent {\bf Implementation}: The state-based multi-value register implementation of \cite{ShapiroPBZ11} is given in Listing~\ref{lst:state-based multi-value register}. It implements an interface with operations: ${\tt write}(a)$ and ${\tt read}$. This implementation assumes that the number of replicas are fixed. A payload $S$ is a set of $(a,V)$ pairs, where $a$ is a value and $V$ is a vector called version vector. The size of $V$ is the number of replicas in distributed system. $\alabelshort[{\tt myRep}]{}$ is a function that returns current replica identifier. A partial order among version vectors are defined as: Given version vectors $V$ and $V'$, we say that $V > V'$, if for each replica $\arep$, we have $V[\arep] \geq V'[\arep]$, and there exists replica $\arep'$, such that $V[\arep'] > V'[\arep']$.

${\tt write}(a)$ generates a new version vector $V'$, and set the payload into $\{ (a,V') \}$. ${\tt read}$ returns the set of values of multi-value register.

\begin{figure}[t]
\begin{lstlisting}[frame=top,caption={Pseudo-code of state-based multi-value register},
captionpos=b,label={lst:state-based multi-value register}]
  payload Set S
  initial S = @|$\emptyset$|@

  write(a) :
    let g = myRep()
    let @|$\mathcal{V}$|@ = @|$\{ V \vert \exists x, (x,V) \in S \}$|@
    let @|$V'$|@ = @|$[ max_{V \in \mathcal{V}} V[j]]_{j \neq g}$|@
    let @|$V'[g]$|@ = @|$max_{V \in \mathcal{V}} V[g]$|@ + 1
    S = (a,V')


  read() :
    let @|$S_1$|@  = {a @|$\vert \exists$|@ V. (a,V) @|$\in$|@ S}
    return @|$S_1$|@

  compare(@|$S_1$|@, @|$S_2$|@): boolean b
    let b  = @|$\forall (a,V) \in S_1$|@, @|$\exists (a',V') \in S_2$|@, such that @|$V \leq V'$|@
    return b

  merge(@|$S_1$|@, @|$S_2$|@): @|$S_3$|@
    let @|$S'_1$|@ = @|$\{ (a,V) \vert (a,V) \in S_1, \forall (a',V') \in S_2, \neg (V < V') \}$|@
    let @|$S'_2$|@ = @|$\{ (a,V) \vert (a,V) \in S_2, \forall (a',V') \in S_1, \neg (V < V') \}$|@
    let @|$S_3 = S'_1 \cup S'_2$|@
    return @|$S_3$|@
\end{lstlisting}
\end{figure}

The query-update rewriting rewrites $\alabelshort[{\tt write}]{a}$ into $\alabelshort[{\tt write}]{a,V'}$, where $V'$ is the version vector generated by executing $\alabelshort[{\tt write}]{a}$.

\noindent {\bf Sequential Specification $\specMVReg$}: Each abstract state $\abstate$ is a set of tuples $(a,id)$, where $a$ is a data and $id$ is an identifier. Moreover, we assume that there is a partial order on identifier. The sequential specification $\specMVReg$ of multi-value register is defined by:

\[
  \begin{array}{rcl}
    \big(\ \abstate\ |\ \mathtt{id}\ \text{is not less or equal than any identifier of } \abstate\ \big)
             & \specarrow{ \alabelshort[{\tt write}]{a,id} }
    & \abstate \cup \{ (a,\mathtt{id}) \} \setminus \{ (a',id') \in \abstate, id' < id \} \\
    \abstate
    & \specarrow{\alabellong[\mathtt{read}]{}{ S }{}}
    & \abstate\
      \begin{array}{c}
        [\text{with}\ S = \{ a\ \vert\ \exists\ \mathtt{id}, (a,\mathtt{id}) \in \abstate \}]
      \end{array}
  \end{array}
\]

Method $\alabelshort[{\tt write}]{a,id}$ puts $\{ (a,id) \}$ into the abstract state and then removes from the abstract state all pairs with a less identifier. Method ${\tt read}$ returns the value of abstract state.

\noindent {\bf The ``Local'' Effector, Arguments, Partial Order, and Predicate $P1$}: The state-based multi-value register is uniquely-identified-effectors. Given operation $\alabel = {\tt write}(a)$, which generates version vector $V$, then, ${\tt arg}(\alabel) = (a,V)$. The partial order of arguments of ``local'' effector is defined as follows: $(a_1,V_1)$ is less than $(a_2,V_2)$, if $V_1 < V_2$. The {\tt apply} method is defined as follows: Given $\astate = S$ and ${\tt arg}(\alabel) = (a,V)$. ${\tt apply}(\astate, {\tt arg}(\alabel) ) = S \cup \{ (a,V) \} \setminus \{ (a',V') \vert (a',V') \in S, V' < V \}$. Predicate $P1$ is defined as follows: $P1(\astate,(a,V))$ holds, if for each $(a',V') \in \astate$, $V$ is not less than $V'$.

\noindent {\bf Properties of Version Vectors}: The follow lemma states that the version vector of argument of ``local'' effector of each {\tt write} operation is unique, and the order between version vectors is consistent with the visibility relation. 

\begin{lemma}
\label{lemma:version vector is similar as timestamp}
The version vector is unique for each {\tt write} operation, and the order of version vector is consistent with the visibility relation.
\end{lemma}

The follow lemma states that version vectors of concurrent {\tt write} operations are incomparable.

\begin{lemma}
\label{lemma:concurrent implies the version is incomparable}
Given two {\tt write} operation $\alabel_1$ and $\alabel_2$, assume $(a_1,V_1)$ and $(a_2,V_2)$ is the argument of ``local'' effector for $\alabel_1$ and $\alabel_2$, respectively, and assume that $\alabel_1$ and $\alabel_2$ are concurrent. Then, $\neg (V_1 < V_2 \vee V_2 < V_1)$.
\end{lemma}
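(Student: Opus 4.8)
The plan is to prove the contrapositive: whenever the two version vectors are comparable, say $V_1<V_2$ (the case $V_2<V_1$ being symmetric), the operations $\alabel_1$ and $\alabel_2$ are related by $\avisord$, hence not concurrent. Write $g_1,g_2$ for the origin replicas of $\alabel_1,\alabel_2$, and call the $i$-th ${\tt write}$ executed at a replica $g$ (the replica-local execution order being total) the \emph{$i$-th write at $g$}. Everything rests on two invariants of the state-based semantics of Listing~\ref{lst:state-based multi-value register}, both proved by induction on the length of an execution: (I$_1$) the $i$-th write at $g$ produces a version vector whose $g$-component equals $i$; and (I$_2$) in every reachable global configuration and for every replica $\arep$ with local configuration $(\alabelset_\arep,S_\arep)$, if $(a,V)\in S_\arep$ and $V[g]=k\ge 1$, then for all $j\le k$ the $j$-th write at $g$ belongs to $\alabelset_\arep$; the same holds for the label-set component of every message in $\msgs$.

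For (I$_1$): the ${\tt write}$ code sets the new $g$-component to $1$ plus the maximum $g$-component currently stored in the acting replica's payload, so it suffices to know that this maximum always equals the number of writes $g$ has already executed, which holds because only $g$ ever increments the $g$-component, ${\tt merge}$ keeps a subset of the existing pairs, and ${\tt merge}$ never ``forgets'' the largest $g$-component present (a discarded pair with maximal $V[g]$ is dropped only in favour of one that strictly dominates it, which then carries the same $g$-component, since no pair anywhere can exceed it). For (I$_2$), the inductive step uses that operations originating at a replica stay in its label set forever: under GENERATE the claim is copied along with the payload; under APPLY each pair of the merged payload ${\tt merge}(S_1,S_2)$ is a pair of $S_1$ or of $S_2$ with an unchanged version vector, so its witnessing writes lie in that operand's label set and hence in the union; under OPERATION the payload of the acting replica $g'$ becomes a single pair $(a',V')$, and for $g\neq g'$ the component $V'[g]$ is the maximum of the $g$-components already in the old payload, so its witnesses are inherited from the old label set, while for $g=g'$ the component $V'[g']$ equals, by (I$_1$), the index of the new write, all of whose predecessors at $g'$ together with itself are in the new label set.

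With (I$_1$) and (I$_2$) the lemma follows quickly. Put $k=V_1[g_1]$; since $\alabel_1$ is a ${\tt write}$ we have $k\ge 1$, and by (I$_1$) $\alabel_1$ is the $k$-th write at $g_1$. Consider the OPERATION step executing $\alabel_2$ from the payload $S$ of replica $g_2$, producing $V_2$. If $g_2\neq g_1$, then by the ${\tt write}$ code $V_2[g_1]=\max_{(x,V)\in S}V[g_1]\ge V_1[g_1]=k$, so some $(x,V)\in S$ has $V[g_1]\ge k$; (I$_2$), applied to the configuration just before this step, gives $\alabel_1\in\alabelset_{g_2}$ at that moment, and since the OPERATION rule inserts $\alabelset_{g_2}\times\{\alabel_2\}$ into $\avisord$, we get $(\alabel_1,\alabel_2)\in\avisord$. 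If $g_2=g_1$, then $\alabel_1$ and $\alabel_2$ are writes at the same replica with indices $V_1[g_1]$ and $V_2[g_1]$ by (I$_1$); $V_1<V_2$ forces $V_1[g_1]\le V_2[g_1]$, which, as distinct writes have distinct indices, gives either $\alabel_1=\alabel_2$ — impossible, since then $V_1=V_2$ — or $\alabel_1$ executed strictly before $\alabel_2$, whence again $(\alabel_1,\alabel_2)\in\avisord$. In every case $\alabel_1$ and $\alabel_2$ are not concurrent, contradicting the hypothesis; hence neither $V_1<V_2$ nor $V_2<V_1$. The step I expect to be the main obstacle is the precise statement and joint induction of (I$_1$) and (I$_2$) — in particular the ``${\tt merge}$ discards dominated pairs, not components'' point, which is what keeps the maximal-$g$-component bookkeeping tight and feeds the OPERATION case of (I$_2$); the concluding deduction is then routine.
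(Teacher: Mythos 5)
Your proof is correct and follows essentially the same route as the paper, which derives this lemma from the counting invariant on version vectors ($fact3$: each component $V[\arep]$ counts the ${\tt write}$s originating at $\arep$ that are visible to or equal to the operation) together with the payload invariant ($fact4$), both established by induction on executions inside the proof of Lemma~\ref{lemma:version vector is similar as timestamp}; your (I$_1$) and (I$_2$) are reformulations of these, and your concluding deduction (comparability of a component at the origin replica forces membership of $\alabel_1$ in the label set of $\alabel_2$'s origin, hence $(\alabel_1,\alabel_2)\in\avisord$) matches the paper's contradiction argument, merely phrased as a contrapositive.
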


\noindent Proof of Lemma \ref{lemma:version vector is similar as timestamp}:

\begin {proof}

Let us propose $fact1$, $fact2$, $fact3$ and $fact4$:

\begin{itemize}
\setlength{\itemsep}{0.5pt}
\item[-] $fact1$: Given two {\tt write} operation $\alabel_1$ and $\alabel_2$, assume $(a_1,V_1)$ and $(a_2,V_2)$ is the argument of ``local'' effector for $\alabel_1$ and $\alabel_2$, respectively, and assume that $(\alabel_1,\alabel_2) \in \avisord$. Then, $V_1 < V_2$.

\item[-] $fact2$: Given two {\tt write} operation $\alabel_1$ and $\alabel_2$, assume $(a_1,V_1)$ and $(a_2,V_2)$ is the argument of ``local'' effector for $\alabel_1$ and $\alabel_2$, respectively, and assume that $\alabel_1$ and $\alabel_2$ are concurrent. Then, $\neg (V_1 < V_2 \vee V_2 < V_1)$.

\item[-] $fact3$: Assume $(a,V)$ is the argument of ``local'' effector of a $\alabelshort[{\tt write}]{a}$ operation $\alabel$. Then, for each replica $\arep$, $V[\arep]$ = $\vert \{ \alabel' = \alabelshort[{\tt write}]{\_}, \alabel'$ originate in replica $\arep,  (\alabel',\alabel) \in \avisord \vee \alabel' = \alabel \} \vert$.

\item[-] $fact4$: Assume $(\alabelset,S)$ is the local configuration of a replica or the content of a message. Then, $S$ =  $\{ (a,V) \vert \exists \alabel, (a,V)$ is the argument of ``local'' effector of $\alabel, \alabel$ is maximal w.r.t $\avisord$ among {\tt write} operations in $\alabelset \}$.
\end{itemize}

Assume that above four facts are proved. Then, since the visibility relation is a partial order, by $fact3$ we can see that the version vector is unique for each {\tt write} operation, and by $fact1$ we can see that the order of version vector is consistent with the visibility relation.

Let us prove $fact1$. Given two {\tt write} operation $\alabel_1$ and $\alabel_2$, assume $(a_1,V_1)$ and $(a_2,V_2)$ is the argument of ``local'' effector for $\alabel_1$ and $\alabel_2$, respectively, and assume that $(\alabel_1,\alabel_2) \in \avisord$. Then, there are two possibilities.

\begin{itemize}
\setlength{\itemsep}{0.5pt}
\item[-] If $\alabel_1$ and $\alabel_2$ originate in a same replica, then $fact1$ obviously holds according to the implementation.

\item[-] Otherwise, there exists messages $(\alabelset_1,\astate_1),\ldots,(\alabelset_k,\astate_k)$ and replica $\arep_1,\ldots,\arep_{k+1}$, such that (1) $\alabel_1$ originates in replica $\arep_1$, $\alabel_2$ originates in replica $\arep_{k+1}$, (2) for each $1 \leq i \leq k$, message $(\alabelset_i,\astate_i)$ is generated by replica $\arep_i$, (3) the time point of originating $\alabel_1$ is earlier than the time point of generating message $(\alabelset_1,\astate_1)$, (4) for each $1 \leq i \leq k$, message $(\alabelset_i,\astate_i)$ is received by replica $\arep_{i+1}$, and such time point is earlier than the time point of generating message $(\alabelset_{i+1},\astate_{i+1})$, and (5) the time point of receiving message $(\alabelset_k,\astate_k)$ by replica $\arep_{k+1}$ is earlier than the time point of originating $\alabel_2$.

    According to the implementation, we can see that for each message $(\alabelset_i,\astate_i)$ with $1 \leq i \leq k$, there exists $(a'_i,id'_i) \in \astate_i$, such that $V_1 \leq id'_i$. According to the implementation, for each $(a',id') \in \astate_k$, we have $id'<V_2$. Therefore, we can see that $fact1$ still holds.
\end{itemize}

This completes the proof of $fact1$.

Let us prove that $fact2$, $fact3$ and $fact4$ are inductive invariant. We prove by induction on executions. Obvious they hold in $\aglobalstate_0$. Assume they hold along the execution $\aglobalstate_0 \xrightarrow{}^* \aglobalstate$ and there is a new transition $\aglobalstate \xrightarrow{} \aglobalstate'$. 

\begin{itemize}
\setlength{\itemsep}{0.5pt}
\item[-] For case when replica $\arep$ do $\alabelshort[{\tt write}]{a}$: let $(a,V')$ be the argument of ``local'' effector of $\alabel$. Let $Lc = (\alabelset,S)$ and $Lc' = (\alabelset',S')$ be the local configuration of replica $\arep$ of $\aglobalstate$ and $\aglobalstate'$, respectively. Obviously $S' = \{ (a,V') \}$ and $\alabelset' = \alabelset \cup \{ \alabel \}$. We need to prove that $fact4$ still holds for the local configuration $Lc'$, $fact3$ still holds for the argument of ``local'' effector of $\alabel$, and $fact2$ still holds in $\aglobalstate'$.

    Since $\alabel$ is greater than any operations of $\alabelset$ w.r.t the visibility relation, $fact4$ still holds for the local configuration $Lc'$.

    Let $\mathcal{V} = \{ V \vert (\_,V) \in S \}$ be the set of version vectors of $S$. By $fact4$ of local configuration $Lc$, $fact3$ of arguments of ``local'' effectors of $S$, and the transitivity of the visibility relation, we can see that, for each replica $\arep'$, $max_{V_1 \in \mathcal{V}} V_1[\arep']$ is the number of {\tt write} operations originates in replica $\arep'$ and is in $\alabelset$. It is obvious that, for each replica $\arep' \neq \arep$, $V'[\arep'] = max_{V_1 \in \mathcal{V}} V_1[\arep']$, and $V'[\arep] = max_{V_1 \in \mathcal{V}} V_1[\arep] +1$. Therefore, $fact3$ still holds for the argument of ``local'' effector of $\alabel$.

    We prove $fact2$ by contradiction. Since $fact2$ holds in $\aglobalstate$, it is easy to see that the counter-example of $fact2$ must contain operation $\alabel$. Assume there exists an operation $\alabel_2$ that is generated during $\aglobalstate_0 \xrightarrow{}^* \aglobalstate'$, such that the argument of ``local'' effector of $\alabel_2$ is $(a_2,V_2)$, $\alabel$ and $\alabel_2$ are concurrent, and $V' < V_2$. We can see that $V'[\arep] \leq V_2[\arep]$. By $fact3$ of the arguments of ``local'' effectors $(a,V')$ and $(a_2,V_2)$, and the transitivity of the visibility relation, we can see that $(\alabel,\alabel_2) \in \avisord$, which contradicts the assumption that $\alabel$ and $\alabel_2$ are concurrent.

    Similarly, assume there exists an operation $\alabel_2$ that is generated during $\aglobalstate_0 \xrightarrow{}^* \aglobalstate'$, such that the argument of ``local'' effector of $\alabel_2$ is $(a_2,V_2)$, $\alabel$ and $\alabel_2$ are concurrent, and $V_2 < V'$. Assume that $\alabel_2$ is originated in replica $\arep_2$. We can see that $V_2[\arep_2] \leq V'[\arep_2]$. By $fact3$ of the arguments of ``local'' effectors $(a,V')$ and $(a_2,V_2)$, and the transitivity of the visibility relation, we can see that $(\alabel_2,\alabel) \in \avisord$, which contradicts the assumption that $\alabel$ and $\alabel_2$ are concurrent. Therefore, $fact2$ still holds in $\aglobalstate'$.

\item[-] For case when replica $\arep$ generate a message: Let $Lc = (\alabelset,S)$ and $Lc' = (\alabelset',S')$ be the local configuration of replica $\arep$ of $\aglobalstate$ and $\aglobalstate'$, respectively. By the semantics we have $(\alabelset',S') = (\alabelset,S)$ and the content of the message is $(\alabelset,S)$. We only need to prove that $fact4$ still holds for the message $(\alabelset,S)$. This holds obviously since by induction assumption we already knows that $fact4$ holds for the local configuration $(\alabelset,S)$.

\item[-] For case when replica $\arep$ apply a message: Let $Lc = (\alabelset,S)$ and $Lc' = (\alabelset',S')$ be the local configuration of replica $\arep$ of $\aglobalstate$ and $\aglobalstate'$, respectively. Assume that the message content is $(\alabelset'',S'')$. By the semantics we have $\alabelset' = \alabelset \cup \alabelset''$ and $S' = \alabelshort[{\tt merge}]{S,S''}$. We need to prove that $fact4$ still holds for the local configuration $(\alabelset',S')$.

    Let us prove by contradiction. Assume that $fact4$ does not holds for the local configuration $(\alabelset',S')$.

    It is easy to see that, the arguments of ``local'' effectors of maximum of {\tt write} operation w.r.t. $\avisord$ in $\alabelset'$ must be in that of $\alabelset$ or $\alabelset''$. Then, there are two possibilities:

    \begin{itemize}
    \setlength{\itemsep}{0.5pt}
    \item[-] There exists a argument of ``local'' effector $(a_1,V_1) \in S$, such that $(a_1,V_1) \in S'$, and there exists $(a_2,V_2) \in S'$, such that $(a_1,V_1)$ and $(a_2,V_2)$ is the argument of ``local'' effector of some operation $\alabel_1$ and $\alabel_2$, respectively, and $(\alabel_1,\alabel_2) \in \avisord$. By $fact1$ and $fact2$, we can see that $V_1 < V_2$.

        By $fact1$, $fact2$, $fact3$ and $fact4$, we can see that the version vectors of $S$ are pair-wise incomparable, and the version vectors of $S''$ are pair-wise incomparable.
        Then, we can see that $(a_1,V_1) \in S \wedge (a_2,V_2) \in S''$. According to the {\tt merge} method, it is easy to see that $(a_1,V_1) \notin S'$, which contradicts that $(a_1,V_1) \in S'$.

        For the case when there exists a local effector $(a_1,V_1) \in S''$, such that $(a_1,V_1) \in S'$, and there exists $(a_2,V_2) \in S'$, such that $(a_1,V_1)$ and $(a_2,V_2)$ is the argument of ``local'' effector of some operation $\alabel_1$ and $\alabel_2$, respectively, and $(\alabel_1,\alabel_2) \in \avisord$. Similarly, we can see that $(a_1,V_1) \in S'' \wedge (a_2,V_2) \in S$. According to the {\tt merge} method, it is easy to see that $(a_1,V_1) \notin S'$, which contradicts that $(a_1,V_1) \in S'$.

    \item[-] There exists an operation $\alabel_1$ with its argument $(a_1,V_1)$ of ``local'' effector, such that $\alabel_1$ is maximal of {\tt write} operation w.r.t. $\avisord$ in $\alabelset'$, and $(a_1,V_1) \notin S'$. Then, there must exists an operation $\alabel_2$ with its argument $(a_2,V_2)$ of ``local'' effector, such that either $(a_1,V_1) \in S \wedge (a_2,V_2) \in S'' \wedge V_1 < V_2$, or $(a_1,V_1) \in S'' \wedge (a_2,V_2) \in S \wedge V_1 < V_2$. By $fact1$ and $fact2$, in each case, $\alabel_1$ is not maximal of {\tt write} operation w.r.t. $\avisord$ in $\alabelset'$, which is a contradiction.
    \end{itemize}

    Therefore, $fact4$ still holds for the local configuration $(\alabelset',S')$.
\end{itemize}

This completes the proof of this lemma. $\qed$
\end {proof}

\noindent Proof of Lemma \ref{lemma:concurrent implies the version is incomparable}: Obviously from the proof of Lemma \ref{lemma:version vector is similar as timestamp}.

\subsection{State-Based Last-Writer-Win-Element-Set (LWW-Element-Set), Its Sequential Specification and Its ``Local'' Effectors}
\label{subsec:state-based last-writer-win-element set (LWW-element-set), its sequential specification and its ``local'' effectors}

\noindent {\bf Implementation}: The state-based last-writer-win-element-set (LWW-element-set) of \cite{ShapiroPBZ11} is given in Listing~\ref{lst:state-based LWW-element-set}. It
implements a set interface with operations: ${\tt add}(a)$, ${\tt remove}(a)$ and ${\tt read}$. A payload $(A,R)$ contains a set $A$ that records pairs of inserted values and their timestamp, and a set $R$ that records the pairs of removed values and and their timestamp, and $R$ is used as \emph{tombstone}.

A value $b$ is ``in the set'', if $(b,\ats_b) \in A$ for some timestamp $\ats_b$, and there does not exists $(b,\ats'_b) \in R$, such that $\ats_b < \ats'_b$. ${\tt write}(a)$. ${\tt write}(a)$ generates a new timestamp $\ats$ and puts $(a,\ats)$ into set $A$. ${\tt remove}(a)$ generates a new timestamp $\ats$ and puts $(a,\ats)$ into set $R$. {\tt read} returns the set content.

\begin{figure}[t]
\begin{lstlisting}[frame=top,caption={Pseudo-code of state-based LWW-element-set},
captionpos=b,label={lst:state-based LWW-element-set}]
  payload Set A, Set R
  initial @|$\emptyset$|@, @|$\emptyset$|@

  add(a) :
    let @|$\ats$|@ = getTimestamp()
    A = A @|$\cup \{ (a,\ats) \}$|@

  remove(a) :
    let @|$\ats$|@ = getTimestamp()
    R = R @|$\cup \{ (a,\ats) \}$|@

  read() :
    let S = @|$\{ b \vert \exists (b,\ats_b) \in A, \ R$|@ does not contain @|$(b,\_)$|@, or @|$\forall (b,\ats'_b) \in R, \ats'_b < \ats_b \}$|@
    return S

  compare(@|$(A_1,R_1)$|@, @|$(A_2,R_2)$|@): boolean b
    let b  = @|$(A_1 \subseteq A_2) \wedge (R_1 \subseteq R_2)$|@
    return b

  merge(@|$(A_1,R_1)$|@, @|$(A_2,R_2)$|@): @|$(A_3,R_3)$|@
    let @|$A_3 = A_1 \cup A_2$|@
    let @|$R_3 = R_1 \cup R_2$|@
    return @|$(A_3,R_3)$|@
\end{lstlisting}
\end{figure}

\noindent {\bf Sequential Specification $\specLWWSet$}: Each abstract state $\abstate$ is a set $S$ of values. The sequential specification $\specLWWSet$ of set is defined by:
\[
  \begin{array}{rcl}
    S &
               \specarrow{\alabelshort[\mathtt{add}]{a}}
    & S \cup \{ a \} \\
    S &
               \specarrow{\alabelshort[\mathtt{remove}]{a}}
    & S \setminus \{a\} \\
    S
    & \specarrow{\alabellong[\mathtt{read}]{}{ S }{}}
    & S
  \end{array}
\]

Method $\alabelshort[{\tt add}]{a}$ puts $a$ into $S$. Method $\alabelshort[{\tt remove}]{a}$ removes $a$ from $S$. Method $\alabellong[{\tt read}]{}{S}{}$ returns the contents of the set.

\noindent {\bf The ``Local'' Effector, Arguments, Partial Order, and Predicate $P1$}: The state-based LWW-element-set is uniquely-identified-effectors. Given operation $\alabel = \alabellongind[{\tt add}]{a}{}{\ats_a}{}$, then, ${\tt arg}(\alabel) = (add,a,\ats_a)$; given operation $\alabel = \alabellongind[{\tt remove}]{a}{}{\ats_a}{}$, then, ${\tt arg}(\alabel) = (rem,a,\ats_a)$. The partial order of arguments of ``local'' effector is defined as follows: $(\_,\_,\ats)$ is less than $(\_,\_,\ats')$, if $\ats < \ats'$. The {\tt apply} method is defined as follows: Given $\astate = (A,R)$ and ${\tt arg}(\alabel) = (add,a,\ats_a)$, ${\tt apply}(\astate, {\tt arg}(\alabel) ) = (A \cup \{ (a,\ats_a) \},R)$. Given $\astate = (A,R)$ and ${\tt arg}(\alabel) = (rem,a,\ats_a)$, ${\tt apply}(\astate, {\tt arg}(\alabel) ) = (A,R \cup \{ (a,\ats_a) \})$. Predicate $P1$ is defined as follows: $P1((A,R),(\_,\_,\ats))$ holds, if for each $(\_,\ats') \in A \cup R$, $\ats$ is not less than $\ats'$.

\subsection{State-Based PN-Counter, Its Sequential Specification and Its ``Local'' Effectors}
\label{subsec:state-based PN-counter, its sequential specification and its ``local'' effectors}

\noindent {\bf Implementation}: The state-based PN-counter implementation of \cite{ShapiroPBZ11} is given in Listing~\ref{lst:state-based PN-counter}. It implements a counter interface with operations: ${\tt inc}()$, ${\tt dec}()$ and ${\tt read}$. This implementation assumes that the number of replicas are fixed. A payload $(P,N)$ contains a vector $P$ and a vector $N$. The size of $P$ and $N$ is the number of replicas in distributed system. $\alabelshort[{\tt myRep}]{}$ is a function that returns current replica identifier. Let us use $\vec 0$ to indicate the vector that maps each replica identifier to $0$, and the initial value of $P$ and $N$ is $\vec 0$.

The counter value is $\Sigma_{\arep} P[\arep] - \Sigma_{\arep} N[\arep]$. ${\tt inc}()$ that originates in replica $\arep$ increases $P[\arep]$ by $1$. ${\tt dec}()$ that originates in replica $\arep$ increases $N[\arep]$ by $1$. ${\tt read}$ returns the counter value.

\begin{figure}[t]
\begin{lstlisting}[frame=top,caption={Pseudo-code of state-based PN-counter},
captionpos=b,label={lst:state-based PN-counter}]
  payload Vector P, Vector N
  initial @|$\vec 0$|@, @|$\vec 0$|@

  inc() :
    let g = myRep()
    @|$P$|@ = @|$P[ g: P[g] + 1]$|@

  dec() :
    let g = myRep()
    @|$N$|@ = @|$N[ g: N[g] + 1]$|@

  read() :
    let c  = @|$\Sigma_{\arep} P[\arep]$|@ - @|$\Sigma_{\arep} N[\arep]$|@
    return c

  compare(@|$(P_1,N_1)$|@, @|$(P_2,N_2)$|@): boolean b
    let b  = @|$\forall \arep, (P_1[\arep] \leq P_2[\arep]) \wedge (N_1[\arep] \leq N_2[\arep])$|@
    return b

  merge(@|$(P_1,N_1)$|@, @|$(P_2,N_2)$|@): @|$(P_3,N_3)$|@
    let @|$P_3$|@ = @|$\forall \arep, P_3[\arep] = max(P_1[\arep], P_2[\arep])$|@
    let @|$N_3$|@ = @|$\forall \arep N_3[\arep] = max(N_1[\arep], N_2[\arep])$|@
    return @|$(P_3,N_3)$|@
\end{lstlisting}
\end{figure}

\noindent {\bf Sequential Specification}: PN-counter uses the sequential specification $\specCounter$.

\noindent {\bf The ``Local'' Effector, Arguments, and Predicate $P2$}: The state-based PN-counter is cumulative effectors. Given operation $\alabel = {\tt inc}()$ that originates in replica $\arep$, then, ${\tt arg}(\alabel) = (inc, \arep)$; given operation $\alabel = {\tt dec}()$ that originates in replica $\arep$, then, ${\tt arg}(\alabel) = (dec, \arep)$. The {\tt apply} method is defined as follows: Given $\astate = (P,N)$ and ${\tt arg}(\alabel) = (inc, \arep)$, ${\tt apply}(\astate, {\tt arg}(\alabel) ) = (P[\arep \leftarrow P[\arep]+1],N)$. Given $\astate = (P,N)$ and ${\tt arg}(\alabel) = (dec,\arep)$, ${\tt apply}(\astate, {\tt arg}(\alabel) ) = (P,N[\arep \leftarrow N[\arep]+1])$. Predicate $P2$ is defined as follows: Given $\astate = (P,N)$ and ${\tt arg}(\alabel) = (inc,\arep)$, $P2(\astate,{\tt arg}(\alabel))$ holds, if $P[\arep] = 0$; Given $\astate = (P,N)$ and ${\tt arg}(\alabel) = (dec,\arep)$, $P2(\astate,{\tt arg}(\alabel))$ holds, if $N[\arep] = 0$.

\subsection{State-Based Two-Phase Set (2P-Set), Its Sequential Specification and Its ``Local'' Effectors}
\label{subsec:state-based two-phase set (2P-set), its sequential specification and its ``local'' effectors}

\noindent {\bf Implementation}: The state-based 2P-set implementation of \cite{ShapiroPBZ11} is given in Listing~\ref{lst:state-based 2P-set}. It implements a set interface with operations: ${\tt add}(a)$, ${\tt remove}(a)$ and ${\tt read}$. A payload $(A,R)$ contains a set $A$ to record the inserted values, and a set $R$ to store the removed values and is used as \emph{tombstone}. Adding or removing a value twice has no effect, nor does adding an element that has already been removed. Therefore, we assume that the users guarantee that, in each execution, a value will not be added twice.

A value $b$ is ``in the set'', if $b \in A \setminus R$. ${\tt add}(a)$ puts $a$ into set $A$. ${\tt remove}(a)$ puts $a$ into set $R$. ${\tt read}$ returns the set content.

\begin{figure}[t]
\begin{lstlisting}[frame=top,caption={Pseudo-code of state-based 2P-set},
captionpos=b,label={lst:state-based 2P-set}]
  payload Set A, Set R
  initial @|$\emptyset$|@, @|$\emptyset$|@

  add(a) :
    A = A @|$\cup \ \{ a \}$|@

  remove(a) :
    precondition :  @|$a \in A \wedge a \notin R$|@
    R = R @|$\cup \ \{ a \}$|@

  read() :
    let s = A @|$\,\setminus\,$|@ R
    return s

  compare(@|$(A_1,R_1)$|@, @|$(A_2,R_2)$|@): boolean b
    let b  = @|$(A_1 \subseteq A_2) \wedge (R_1 \subseteq R_2)$|@
    return b

  merge(@|$(A_1,R_1)$|@, @|$(A_2,R_2)$|@): @|$(A_3,R_3)$|@
    let @|$A_3$|@ = @|$A_1 \cup A_2$|@
    let @|$R_3$|@ = @|$R_1 \cup R_2$|@
    return @|$(A_3,R_3)$|@
\end{lstlisting}
\end{figure}

\noindent {\bf Sequential Specification}: 2P-set uses the sequential specification $\specLWWSet$.

\noindent {\bf The ``Local'' Effector, Arguments, and Predicate $P2$}: The state-based LWW-element-set is idempotent effectors. Given operation $\alabel = {\tt add}(a)$, then, ${\tt arg}(\alabel) = (add,a)$; given operation $\alabel = {\tt remove}(a)$, then, ${\tt arg}(\alabel) = (rem,a)$. The {\tt apply} method is defined as follows: Given $\astate = (A,R)$ and ${\tt arg}(\alabel) = (add,a)$, ${\tt apply}(\astate, {\tt arg}(\alabel) ) = (A \cup \{ a \},R)$. Given $\astate = (A,R)$ and ${\tt arg}(\alabel) = (rem,a)$, ${\tt apply}(\astate, {\tt arg}(\alabel) ) = (A,R \cup \{ a \})$. Predicate $P2$ is defined as follows: Given $\astate = (A,R)$ and ${\tt arg}(\alabel) = (add,a)$, $P2(\astate,{\tt arg}(\alabel))$ holds, if $a \notin A$; Given $\astate = (A,R)$ and ${\tt arg}(\alabel) = (rem,a)$, $P2(\astate,{\tt arg}(\alabel))$ holds, if $a \notin R$.

\section{Structure of our Proof Code Files}
\label{sec:structure of our proof code file}

Proof of each operation-based CRDT $c$ is divided into two files. ${\tt shortname}(c)\_{\tt Ref}\_{\tt Boogie.bpl}$ proves that all the effectors and generators of $c$ preserves the refinement relation. Whereas, ${\tt shortname}($ $c)\_{\tt Com}\_{\tt Boogie.bpl}$ shows that the effectors of $c$ commute with respect to each other.

For the state based CRDTs, proof is divided into two files as well. ${\tt shortname}(c)\_{\tt Prop}\_{Boogie.bpl}$ proves that $c$ satisfies properties $Prop_1$ (rep., $Prop'_1$), $Prop_2$ (resp., $Prop'_2$), $Prop_3$ (resp., $Prop'_3$), $Prop_4$ and $Prop_5$ (or to $Prop_6$ if $c$ is of second case and satisfies $C1$). On the other hand ${\tt shortname}(c)\_{\tt Ref}\_$ ${\tt Boogie.bpl}$ again shows that all the effectors and generators of $c$ preserves the refinement relation.

${\tt shortname}(c)$ is a function that shortens the name of the CRDTs for using in the files. It is defined as:

\begin{itemize}
\setlength{\itemsep}{0.5pt}
\item[-] Two-Phase Set $\Rightarrow$ 2PSet,

\item[-] Counter $\Rightarrow$ Ctr,

\item[-] Last-Writer-Win Register $\Rightarrow$ LWWReg

\item[-] Last-Writer-Win-Element-Set $\Rightarrow$ LWWSet

\item[-] Multi-Value Register $\Rightarrow$ MVReg

\item[-] OR Set $\Rightarrow$ ORSet

\item[-] PN-Counter $\Rightarrow$ PNCounter

\item[-] RGA $\Rightarrow$ RGA

\item[-] Wooki $\Rightarrow$ Wookie
\end{itemize}

\forget{
\subsection{Implementation, Sequential Specification, and Proof of State-Based Multi-Value Register}
\label{subsec:implementation, sequential specification, and proof of state-based multi-value register}

\noindent {\bf Implementation}: The state-based multi-value register implementation of \cite{ShapiroPBZ11} is given in Listing~\ref{lst:state-based multi-value register}. Here $\alabelshort[{\tt myRep}]{}$ is a function that returns current replica identifier. This implementation assumes that the number of replicas are fixed. A payload $S$ is a set of $(a,V)$ pairs, where $a$ is a value and $V$ is a vector called version vector. The size of $V$ is the number of replicas in distributed system. Given version vectors $V$ and $V'$, we say that $V > V'$, if for each replica $\arep$, we have $V[\arep] \geq V'[\arep]$, and there exists replica $\arep'$, such that $V[\arep'] > V'[\arep']$.

\begin{figure}[t]
\begin{lstlisting}[frame=top,caption={Pseudo-code of state-based multi-value register},
captionpos=b,label={lst:state-based multi-value register}]
  payload Set S
  initial S = @|$\emptyset$|@
  initial lin = @|$\epsilon$|@

  write(a) :
    let g = myRep()
    let @|$\mathcal{V}$|@ = @|$\{ V \vert \exists x, (x,V) \in S \}$|@
    let @|$V'$|@ = @|$[ max_{V \in \mathcal{V}} V[j]]_{j \neq g}$|@
    let @|$V'[g]$|@ = @|$max_{V \in \mathcal{V}} V[g]$|@ + 1
    S = (a,V')
    //@ lin = lin@|$\,\cdot\,$|@write(a,@|$V'$|@)

  read() :
    let @|$S_1$|@  = {a @|$\vert \exists$|@ V. (a,V) @|$\in$|@ S}
    //@ lin = lin@|$\,\cdot\,( $|@read()@|$\,\Rightarrow\,S_1)$|@
    return @|$S_1$|@

  compare(@|$S_1$|@, @|$S_2$|@): boolean b
    let b  = @|$\forall (a,V) \in S_1$|@, @|$\exists (a',V') \in S_2$|@, such that @|$V \leq V'$|@
    return b

  merge(@|$S_1$|@, @|$S_2$|@): @|$S_3$|@
    let @|$S'_1$|@ = @|$\{ (a,V) \vert (a,V) \in S_1, \forall (a',V') \in S_2, \neg (V < V') \}$|@
    let @|$S'_2$|@ = @|$\{ (a,V) \vert (a,V) \in S_2, \forall (a',V') \in S_1, \neg (V < V') \}$|@
    let @|$S_3 = S'_1 \cup S'_2$|@
    return @|$S_3$|@
\end{lstlisting}
\end{figure}

Here we rewrite $\alabelshort[{\tt write}]{a}$ into $\alabelshort[{\tt write}]{a,V'}$.

\noindent {\bf Sequential Specification $\specMVReg$}: Each abstract state $\abstate$ is a set of tuples $(a,id)$, where $a$ is a data and $id$ is an identifier. Moreover, we assume that there is a partial order on identifier. The sequential specification $\specMVReg$ of multi-value register is given by the transitions:

\[
  \begin{array}{rcl}
    \big(\ \abstate\ |\ \mathtt{id}\ \text{is not less or equal than any identifier of } \abstate\ \big)
             & \specarrow{ \alabelshort[{\tt write}]{a,id} }
    & \abstate \cup \{ (a,\mathtt{id}) \} \setminus \{ (a',id') \in \abstate, id' < id \} \\
    \abstate
    & \specarrow{\alabellong[\mathtt{read}]{}{ S }{}}
    & \abstate\
      \begin{array}{c}
        [\text{with}\ S = \{ a\ \vert\ \exists\ \mathtt{id}, (a,\mathtt{id}) \in \abstate \}]
      \end{array}
  \end{array}
\]

Method $\alabelshort[{\tt write}]{a,id}$ puts $\{ (a,id) \}$ into the abstract state and then removes from the abstract state all pairs with a less identifier. Method $\alabellong[{\tt read}]{}{S'}{}$ returns the value of multi-value register.

\noindent {\bf The ``Local'' Effector}: Given operation $\alabel = \alabelshort[{\tt write}]{a}$, which changes the replica state from $(\alabelset,S)$ into $(\alabelset \cup \{ \alabel \},(a,V))$, then, ${\tt eff}(\alabel) = (a,V)$. We later prove that each {\tt write} operation has a unique version vector, and thus, has a unique ``local'' effector.

The follow lemma states that the version vector of ``local'' effector of each operation is unique, and the order between version vectors is consistent with the visibility relation. Then, we defined the order between ``local'' effectors as follows: Given operation $\alabel_1,\alabel_2$ with ``local'' effectors $(a_1,V_1),(a_2,V_2)$, $(a_1,V_1)<(a_2,V_2)$, if $V_1<V_2$. It is easy to see that such partial order is consistent with the visibility relation.

\begin{lemma}
\label{lemma:version vector is similar as timestamp}
The version vector is unique for each {\tt write} operation, and the order of version vector is consistent with the visibility relation.
\end{lemma}

\noindent Proof of Lemma \ref{lemma:version vector is similar as timestamp}:

\begin {proof}

Let us propose $fact1$, $fact2$, $fact3$ and $fact4$:

\begin{itemize}
\setlength{\itemsep}{0.5pt}
\item[-] $fact1$: Given two write operation $\alabel_1$ and $\alabel_2$, assume $(a_1,V_1)$ and $(a_2,V_2)$ is the ``local'' effector for $\alabel_1$ and $\alabel_2$, respectively, and assume that $(\alabel_1,\alabel_2) \in \avisord$. Then, $V_1 < V_2$.

\item[-] $fact2$: Given two write operation $\alabel_1$ and $\alabel_2$, assume $(a_1,V_1)$ and $(a_2,V_2)$ is the ``local'' effector for $\alabel_1$ and $\alabel_2$, respectively, and assume that $\alabel_1$ and $\alabel_2$ are concurrent. Then, $\neg (V_1 < V_2 \vee V_2 < V_1)$.

\item[-] $fact3$: Assume $(a,V)$ is the ``local'' effector of a $\alabelshort[{\tt write}]{a}$ operation $\alabel$. Then, for each replica $\arep$, $V[\arep]$ = $\vert \{ \alabel' = \alabelshort[{\tt write}]{\_}, \alabel'$ originate in replica $\arep,  (\alabel',\alabel) \in \avisord \vee \alabel' = \alabel \} \vert$.

\item[-] $fact4$: Assume $(\alabelset,S)$ is the local configuration of a replica or the content of a message. Then, $S$ =  $\{ (a,V) \vert \exists \alabel, (a,V)$ is the ``local'' effector of $\alabel, \alabel$ is maximal w.r.t $\avisord$ among {\tt write} operations in $\alabelset \}$.
\end{itemize}

Assume that above four facts are proved. Then, since the visibility relation is a partial order, by $fact3$ we can see that the version vector is unique for each {\tt write} operation, and by $fact1$ we can see that the order of version vector is consistent with the visibility relation.

Let us prove $fact1$. Given two write operation $\alabel_1$ and $\alabel_2$, assume $(a_1,V_1)$ and $(a_2,V_2)$ is the ``local'' effector for $\alabel_1$ and $\alabel_2$, respectively, and assume that $(\alabel_1,\alabel_2) \in \avisord$. Then, there are two possibilities.

\begin{itemize}
\setlength{\itemsep}{0.5pt}
\item[-] If $\alabel_1$ and $\alabel_2$ originate in a same replica, then $fact1$ obviously holds according to the implementation.

\item[-] Otherwise, there exists messages $(\alabelset_1,\astate_1),\ldots,(\alabelset_k,\astate_k)$ and replica $\arep_1,\ldots,\arep_{k+1}$, such that (1) $\alabel_1$ originates in replica $\arep_1$, $\alabel_2$ originates in replica $\arep_{k+1}$, (2) for each $1 \leq i \leq k$, message $(\alabelset_i,\astate_i)$ is generated by replica $\arep_i$, (3) the time point of originating $\alabel_1$ is earlier than the time point of generating message $(\alabelset_1,\astate_1)$, (4) for each $1 \leq i \leq k$, message $(\alabelset_i,\astate_i)$ will be received by replica $\arep_{i+1}$, and such time point is earlier than the time point of generating message $(\alabelset_{i+1},\astate_{i+1})$, and (5) the time point of receiving message $(\alabelset_k,\astate_k)$ by replica $\arep_{k+1}$ is earlier than the time point of originating $\alabel_2$.

    According to the implementation, we can see that for each message $(\alabelset_i,\astate_i)$ with $1 \leq i \leq k$, there exists $(a'_i,id'_i) \in \astate_i$, such that $V_1 \leq id'_i$. According to the implementation, for each $(a',id') \in \astate_k$, we have $id'<V_2$. Therefore, we can see that $fact1$ still holds.
\end{itemize}

This completes the proof of $fact1$.

Let us prove that $fact2$, $fact3$ and $fact4$ are inductive invariant. We prove by induction on executions. Obvious they hold in $\aglobalstate_0$. Assume they hold along the execution $\aglobalstate_0 \xrightarrow{}^* \aglobalstate$ and there is a new transition $\aglobalstate \xrightarrow{} \aglobalstate'$. 

\begin{itemize}
\setlength{\itemsep}{0.5pt}
\item[-] For case when replica $\arep$ do $\alabelshort[{\tt write}]{a}$: let $(a,V')$ be the ``local'' effector of $\alabel$. Let $Lc = (\alabelset,S)$ and $Lc' = (\alabelset',S')$ be the local configuration of replica $\arep$ of $\aglobalstate$ and $\aglobalstate'$, respectively. Obviously $S' = \{ (a,V') \}$ and $\alabelset' = \alabelset \cup \{ \alabel \}$. We need to prove that $fact4$ still holds for the local configuration $Lc'$, $fact3$ still holds for the ``local'' effector of $\alabel$, and $fact2$ still holds in $\aglobalstate'$.

    Since $\alabel$ is greater than any operations of $\alabelset$ w.r.t the visibility relation, $fact4$ still holds for the local configuration $Lc'$.

    Let $\mathcal{V} = \{ V \vert (\_,V) \in S \}$ be the set of version vectors of $S$. By $fact4$ of local configuration $Lc$, $fact1$ of ``local'' effectors of $S$, and the transitivity of the visibility relation, we can see that, for each replica $\arep'$, $max_{V_1 \in \mathcal{V}} V_1[\arep']$ is the number of {\tt write} operations originates in replica $\arep'$ and is in $\alabelset$. It is obvious that, for each replica $\arep' \neq \arep$, $V'[\arep'] = max_{V_1 \in \mathcal{V}} V_1[\arep']$, and $V'[\arep] = max_{V_1 \in \mathcal{V}} V_1[\arep] +1$. Therefore, $fact3$ still holds for the ``local'' effector of $\alabel$.

    We prove $fact2$ by contradiction. It is easy to see that the counter-example of $fact2$ must contain operation $\alabel$. Assume there exists an operation $\alabel_2$ that is generated during $\aglobalstate_0 \xrightarrow{}^* \aglobalstate'$, such that the ``local'' effector of $\alabel_2$ is $(a_2,V_2)$, $\alabel$ and $\alabel_2$ are concurrent, and $V' < V_2$. We can see that $V'[\arep] \leq V_2[\arep]$. By $fact3$ of the ``local'' effectors $(a,V')$ and $(a_2,V_2)$, and the transitivity of the visibility relation, we can see that $(\alabel,\alabel_2) \in \avisord$, which contradicts the assumption that $\alabel$ and $\alabel_2$ are concurrent.

    Similarly, assume there exists an operation $\alabel_2$ that is generated during $\aglobalstate_0 \xrightarrow{}^* \aglobalstate'$, such that the ``local'' effector of $\alabel_2$ is $(a_2,V_2)$, $\alabel$ and $\alabel_2$ are concurrent, and $V_2 < V'$. Assume that $\alabel_2$ is originated in replica $\arep_2$. We can see that $V_2[\arep_2] \leq V'[\arep_2]$. By $fact3$ of the ``local'' effectors $(a,V')$ and $(a_2,V_2)$, and the transitivity of the visibility relation, we can see that $(\alabel_2,\alabel) \in \avisord$, which contradicts the assumption that $\alabel$ and $\alabel_2$ are concurrent. Therefore, $fact2$ still holds in $\aglobalstate'$.

\item[-] For case when replica $\arep$ generate a message: Let $Lc = (\alabelset,S)$ and $Lc' = (\alabelset',S')$ be the local configuration of replica $\arep$ of $\aglobalstate$ and $\aglobalstate'$, respectively. It is obvious that $(\alabelset',S') = (\alabelset,S)$ and the content of the message is $(\alabelset,S)$. We only need to prove that $fact4$ still holds for the message $(\alabelset,S)$. This holds obviously since by induction assumption we already knows that $fact4$ holds for the local configuration $(\alabelset,S)$.

\item[-] For case when replica $\arep$ apply a message: Let $Lc = (\alabelset,S)$ and $Lc' = (\alabelset',S')$ be the local configuration of replica $\arep$ of $\aglobalstate$ and $\aglobalstate'$, respectively. Assume that the message content is $(\alabelset'',S'')$. It is obvious that $\alabelset' = \alabelset \cup \alabelset''$ and $S' = \alabelshort[{\tt merge}]{S,S''}$. We need to prove that $fact4$ still holds for the local configuration $(\alabelset',S')$.

    Let us prove by contradiction. Assume that $fact4$ does not holds for the local configuration $(\alabelset',S')$.

    It is easy to see that, the ``local'' effectors of maximum of {\tt write} operation w.r.t. $\avisord$ in $\alabelset'$ must be in $\alabelset$ or $\alabelset''$. Then, there are two possibilities:

    \begin{itemize}
    \setlength{\itemsep}{0.5pt}
    \item[-] There exists a ``local'' effector $(a_1,V_1) \in S$, such that $(a_1,V_1) \in S'$, and there exists $(a_2,V_2) \in S'$, such that $(a_1,V_1)$ and $(a_2,V_2)$ is the ``local'' effector of some operation $\alabel_1$ and $\alabel_2$, respectively, and $(\alabel_1,\alabel_2) \in \avisord$. By $fact1$ and $fact2$, we can see that $V_1 < V_2$.

        By $fact1$, $fact2$, $fact3$ and $fact4$, we can see that for each $(b,id_b),(b',id'_b) \in S$, $\neg(id_b < id'_b)$, and for each $(b,id_b),(b',id'_b) \in S''$, $\neg(id_b < id'_b)$. Then, we can see that $(a_1,V_1) \in S \wedge (a_2,V_2) \in S''$. According to the implementation, it is easy to see that $(a_1,V_1) \notin S'$, which contradicts that $(a_1,V_1) \in S'$.

        For the case when there exists a local effector $(a_1,V_1) \in S''$, such that $(a_1,V_1) \in S'$, and there exists $(a_2,V_2) \in S'$, such that $(a_1,V_1)$ and $(a_2,V_2)$ is the ``local'' effector of some operation $\alabel_1$ and $\alabel_2$, respectively, and $(\alabel_1,\alabel_2) \in \avisord$. Similarly, we can see that $(a_1,V_1) \in S'' \wedge (a_2,V_2) \in S$. According to the implementation, it is easy to see that $(a_1,V_1) \notin S'$, which contradicts that $(a_1,V_1) \in S'$.

    \item[-] There exists an operation $\alabel_1$ and its ``local'' effector $(a_1,V_1)$, such that $\alabel_1$ is maximal of {\tt write} operation w.r.t. $\avisord$ in $\alabelset'$, and $(a_1,V_1) \notin S'$. Then, there must exists an operation $\alabel_2$ and its ``local'' effector $(a_2,V_2)$, such that either $(a_1,V_1) \in S \wedge (a_2,V_2) \in S'' \wedge V_1 < V_2$, or $(a_1,V_1) \in S'' \wedge (a_2,V_2) \in S \wedge V_1 < V_2$. By $fact1$ and $fact2$, in each case, $\alabel_1$ is not maximal of {\tt write} operation w.r.t. $\avisord$ in $\alabelset'$, which is a contradiction.
    \end{itemize}

    Therefore, $fact4$ still holds for the local configuration $(\alabelset',S')$.
\end{itemize}

This completes the proof of this lemma. $\qed$
\end {proof}

\noindent {\bf The {\tt apply} Method}: Given $\astate = S$ and ${\tt eff}(\alabel) = (a,V)$. ${\tt apply}(\astate, {\tt eff}(\alabel) ) = S \cup \{ (a,V) \} \setminus \{ (a',V') \vert (a',V') \in S, V' < V \}$.

\noindent {\bf Predicate $P1$}: The predicate $P1$ is defined as follows: Given operation $\alabel$ with ${\tt eff}(\alabel) = (a,V)$, $P1(\astate, {\tt eff}(\alabel))$ holds, if for each $(a',V') \in \astate$, $\neg (V<V')$.

\noindent {\bf Proof of ReplicaStates}: Since each update operation has a unique ``local'' effector, we use the proof methodology of \sectionautorefname \ref{subsec:prove methodology of ReplicaStates for the first case}. We need to prove $P1$, $Prop_1$, $Prop_2$, $Prop_3$, $Prop_4$ and $Prop_5$.

\noindent {\bf Proof of the predicate $P1$}: We need to prove that, given operation $\alabel$ and replica state $\astate$, assume that ${\tt eff}(\alabel) = (a,V)$, then, $\forall (a',V') \in \astate$, $\neg (V<V')$, if and only if, whenever we obtain $\astate$ from the initial replica state by applying ``local'' effectors of operations in a set $S$, the ``local'' effector of $\alabel$ is not less than ``local'' effector of any operation of $S$.

We prove the only if direction by contradiction. Assume that there exists a set $S$ of operations and an operation $\alabel' \in S$ with ${\tt eff}(\alabel') = (a',V')$, such that we can obtain $\astate$ from the initial replica state by applying ``local'' effectors of operations in a $S$, and $V<V'$. Assume that $S=\{ \alabel_1,\ldots,\alabel_k \}$, for each $i$, we apply the ``local'' effector of $\alabel_i$ to $\astate_{i-1}$ and obtain $\astate_i$. Assume that $\alabel'=\alabel_j$. According to the implementation, we can see that $(a',V') \in \astate_j$, and for each $u>j$, there exists $(a_u,V_u) \in \astate_u$, such that $V' \leq V_u$. Since $\astate=\astate_k$, we can see that there exists $(a_k,V_k) \in \astate$, such that $V<V_k$, which is a contradiction.

Let us prove the if direction. According to the implementation, for each $(a',V') \in \astate$, $(a',V')$ must be a ``local'' effector of some operations of $S$. Since the ``local'' effector of $\alabel$ is not less than ``local'' effector of any operation of $S$, we can see that $\forall (a',V') \in \astate$, $\neg (V<V')$. This completes the proof of predicate $P1$. $\qed$.

\noindent {\bf Proof of $Prop_1$}: Assume that ${\tt eff}(\alabel) = (a,V)$ and ${\tt eff}(\alabel') = (a',V')$. We can see that ${\tt apply}( {\tt apply}(\astate,$ $(a,V) ), (a',V') ) = \astate \cup \{ (a,V),(a',V') \} \setminus (S_1 \cup S_2)$ and ${\tt apply}( {\tt apply}(\astate, (a',V') ), (a,V) ) = \astate \cup \{ (a,V),(a',V') \} \setminus (S_1 \cup S_3)$, where $S_1 = \{ (a_1,V_1) \vert (a_1,V_1) \in \astate, V_1 < V \vee V_1 < V' \}$; $S_2 = \{ (a,V) \}$ if $V<V'$, and $S_2 = \emptyset$ otherwise; $S_3 = \{ (a',V') \}$ if $V' < V$, and $S_3 = \emptyset$ otherwise.

From ($fact1$ and $fact2$ of) the proof of Lemma \ref{lemma:version vector is similar as timestamp}, we can see that, $\alabel$ is visible to $\alabel'$, if and only if the version vector of $\alabel$ is less than that of $\alabel'$. Since $\alabel$ and $\alabel'$ are concurrent, we can see that $\neg( V<V' \vee V'<V )$. Therefore, $S_2 = S_3 = \emptyset$, and ${\tt apply}( {\tt apply}(\astate, (a,V) ), (a',V') ) = {\tt apply}( {\tt apply}(\astate, (a',V') ), (a,V) ) = \astate \cup \{ (a,V), (a',V') \} \setminus S_1$. Therefore, $Prop_1$ holds. $\qed$

\noindent {\bf Proof of $Prop_2$}: Assume that ${\tt eff}(\alabel) = (a,V)$. Then, ${\tt merge}( \astate, {\tt apply}( \astate', {\tt eff}(\alabel) ) ) = S_1 \cup S_2 \cup S_3$ and ${\tt apply}({\tt merge}(\astate,\astate'), {\tt eff}(\alabel) ) = S_1 \cup S_2 \cup \{ (a,V) \}$, where $S_1 = \{ (a_1,V_1) \vert (a_1,V_1) \in \astate, \forall (a_2,V_2) \in \astate' \cup \{ (a,V) \}, \neg(V_1 < V_2) \}$; $S_2 = \{ (a_1,V_1) \vert (a_1,V_1) \in \astate', \forall (a_2,V_2) \in \astate \cup \{ (a,V) \}, \neg(V_1 < V_2) \}$; $S_3 = \{ (a,V) \}$ if $V$ is not less than version vector of any elements of $\astate$, and $S_3 = \emptyset$ otherwise. Since we already know that $P1(\astate, {\tt eff}(\alabel))$ and $P1(\astate', {\tt eff}(\alabel))$ hold, we can see that $S_3 = \{ (a,V) \}$. Therefore, ${\tt merge}( \astate, {\tt apply}( \astate', {\tt eff}(\alabel) ) ) = {\tt apply}($ ${\tt merge}(\astate,\astate'), {\tt eff}(\alabel) )$, and $Prop_2$ holds. $\qed$

\noindent {\bf Proof of $Prop_3$}: Assume that ${\tt eff}(\alabel) = (a,V)$. Then, ${\tt merge}( {\tt apply}(\astate, {\tt eff}(\alabel) ), {\tt apply}( \astate', {\tt eff}(\alabel) ) ) = S_1 \cup S_2 \cup S_3$ and ${\tt apply}({\tt merge}(\astate,\astate'), {\tt eff}(\alabel) ) = S_1 \cup S_2 \cup \{ (a,V) \}$, where $S_1 = \{ (a_1,V_1) \vert (a_1,V_1) \in \astate, \forall (a_2,V_2) \in \astate' \cup \{ (a,V) \}, \neg(V_1 < V_2) \}$; $S_2 = \{ (a_1,V_1) \vert (a_1,V_1) \in \astate', \forall (a_2,V_2) \in \astate \cup \{ (a,V) \}, \neg(V_1 < V_2) \}$; $S_3 = \{ (a,V) \}$ if $V$ is not less than version vector of any elements of $\astate \cup \astate'$, and $S_3 = \emptyset$ otherwise. Since we already know that $P1(\astate, {\tt eff}(\alabel))$ and $P1(\astate', {\tt eff}(\alabel))$ hold, we can see that $S_3 = \{ (a,V) \}$. Therefore, ${\tt merge}( {\tt apply}(\astate, {\tt eff}(\alabel) ) , {\tt apply}( \astate', {\tt eff}(\alabel) ) ) = {\tt apply}({\tt merge}(\astate,\astate'), {\tt eff}(\alabel) )$, and $Prop_3$ holds. $\qed$

\noindent {\bf Proof of $Prop_4$ and $Prop_5$}: $Prop_4$ is obvious according to the {\tt merge} method. According to the implementation, the version vector of $\alabel$ is larger than any version vector of $\astate$. Therefore, ${\tt apply}(\astate, {\tt eff}(\alabel) ) = \astate'$ and $Prop_5$ holds. $\qed$

\noindent {\bf Proof of Refinement}: We prove $\mathsf{Refinement}$ as follows: We consider a refinement mapping $\refmap$ defined as follows: $\refmap(S) = S$. 

\begin{itemize}
\setlength{\itemsep}{0.5pt}
\item[-] Concerning effectors of $\alabelshort[{\tt write}]{a,id}$ operation, we show that they are simulated by the corresponding specification operation $\alabelshort[{\tt write}]{a,id}$ only when the version vector $id$ is not less than 
    all the version vector of operations that are visible in replica state. This is sufficient because, by $\mathsf{ReplicaStates}$, every replica state is obtained by applying ``local'' effectors according to the linearization of their corresponding operations, and the linearization order is consistent with the version vector order.

    Assume we obtain replica state $S'$ from $S$ by applying the ``local'' effector of $\alabelshort[{\tt write}]{a,id}$; while in sequential specification we have $\abstate \xrightarrow{\alabelshort[{\tt write}]{a,id}} \abstate'$, and $\refmap(S) = \abstate$. Moreover, assume that the partial oreder of specification is equivalent to the partial order of version vector.

    We can see that $S' = \alabelshort[{\tt merge}]{S, \{ (a,id) \}}$ and $\abstate' = \abstate \cup \{ (a,\mathtt{id}) \} \setminus \{ (a',id') \vert (a',id') \in \abstate, id' < id \}$. Since the version vector $id$ is not less than 
    all the version vector of $S$, it is easy to see that $S' = \abstate'$.

\item[-] Applying the query $\alabelshort[read]{}$ on the replica state $S$ should result in the same return value as applying the same query in the context of the specification on the same state $\abstate = \refmap(S)$, which again holds trivially.
\end{itemize}

\subsection{Implementation, Sequential Specification and Proof of State-Based LWW-Element-Set}
\label{subsec:implementation, sequential specification and proof of state-based LWW-element-set}

\noindent {\bf Implementation}: The state-based LWW-element-set of \cite{ShapiroPBZ11} is given in Listing~\ref{lst:state-based LWW-element-set}. A payload $(A,R)$ contains a set $A$ that records pairs of inserted values and their timestamp, and a set $R$ that records the pairs of removed values and and their timestamp, and $R$ is used as \emph{tombstone}. A value $b$ is in the set, if $(b,\ats_b) \in A$ for some timestamp $\ats_b$, and there does not exists $(b,\ats'_b) \in R$, such that $\ats_b < \ats'_b$.

\begin{figure}[t]
\begin{lstlisting}[frame=top,caption={Pseudo-code of state-based LWW-element-set},
captionpos=b,label={lst:state-based LWW-element-set}]
  payload Set A, Set R
  initial @|$\emptyset$|@, @|$\emptyset$|@
  initial lin = @|$\epsilon$|@

  add(a) :
    let @|$\ats$|@ = getTimestamp()
    A = A @|$\cup \{ (a,\ats) \}$|@
    //@ lin = insert(lin, add(a), ts)

  remove(a) :
    let @|$\ats$|@ = getTimestamp()
    R = R @|$\cup \{ (a,\ats) \}$|@
    //@ lin = insert(lin, add(a), ts)

  read() :
    let S = @|$\{ b \vert \exists (b,\ats_b) \in A, \ R$|@ does not contain @|$(b,\_)$|@, or @|$\forall (b,\ats'_b) \in R, \ats'_b < \ats_b \}$|@
    //@ lin = insert(lin, read()@|$\Rightarrow$|@S, max(@|$\{\tsof(\alabel)\ |\ \alabel\in \alabelset\}$|@))
    return S

  compare(@|$(A_1,R_1)$|@, @|$(A_2,R_2)$|@): boolean b
    let b  = @|$(A_1 \subseteq A_2) \wedge (R_1 \subseteq R_2)$|@
    return b

  merge(@|$(A_1,R_1)$|@, @|$(A_2,R_2)$|@): @|$(A_3,R_3)$|@
    let @|$A_3 = A_1 \cup A_2$|@
    let @|$R_3 = R_1 \cup R_2$|@
    return @|$(A_3,R_3)$|@
\end{lstlisting}
\end{figure}

\noindent {\bf Sequential Specification $\specLWWSet$}: Each abstract state $\abstate$ is a set $S$ of values. The sequential specification $\specLWWSet$ of set is given by the transitions:
\[
  \begin{array}{rcl}
    S &
               \specarrow{\alabelshort[\mathtt{add}]{a}}
    & S \cup \{ a \} \\
    S &
               \specarrow{\alabelshort[\mathtt{remove}]{a}}
    & S \setminus \{a\} \\
    S
    & \specarrow{\alabellong[\mathtt{read}]{}{ S }{}}
    & S
  \end{array}
\]

Method $\alabelshort[{\tt add}]{a}$ puts $a$ into $S$. Method $\alabelshort[{\tt remove}]{a}$ removes $a$ from $S$. Method $\alabellong[{\tt read}]{}{S}{}$ returns the contents of the 2P-set.

\noindent {\bf The ``Local'' Effector}: Given operation $\alabel = \alabellongind[{\tt add}]{a}{}{\ats_a}{}$, then, ${\tt eff}(\alabel) = (add,a,\ats_a)$; given operation $\alabel = \alabellongind[{\tt remove}]{a}{}{\ats_a}{}$, then, ${\tt eff}(\alabel) = (rem,a,\ats_a)$. We defined the order between ``local'' effectors as the timestamp order of timestamps in ``local'' effectors. According to the property of timestamp, we can see that each update operation has an unique timestamp, and the order of timestamp is consistent with the visibility relation.

\noindent {\bf The {\tt apply} Method}: Given $\astate = (A,R)$ and ${\tt eff}(\alabel) = (add,a,\ats_a)$, ${\tt apply}(\astate, {\tt eff}(\alabel) ) = (A \cup \{ (a,\ats_a) \},R)$. Given $\astate = (A,R)$ and ${\tt eff}(\alabel) = (rem,a,\ats_a)$, ${\tt apply}(\astate, {\tt eff}(\alabel) ) = (A,R \cup \{ (a,\ats_a) \})$.

\noindent {\bf Predicate $P1$}: The predicate $P1$ is defined as follows: Given operation $\alabel = \alabellongind[{\tt add}]{a}{}{\ats_a}{}$ or $\alabel = \alabellongind[{\tt remove}]{a}{}{\ats_a}{}$, $P1((A,R), {\tt eff}(\alabel))$ holds, if for each $(a',\ats'_a) \in A \cup R$, $\ats'_a < \ats_a$.

\noindent {\bf Proof of ReplicaStates}: Since each update operation has a unique ``local'' effector, we use the proof methodology of \sectionautorefname \ref{subsec:prove methodology of ReplicaStates for the first case}. We need to prove $P1$, $Prop_1$, $Prop_2$, $Prop_3$, $Prop_4$ and $Prop_5$.

\noindent {\bf Proof of the predicate $P1$}: It is easy to see that if we obtain $(A,R)$ from the initial replica state by applying ``local'' effectors of operations in a set $S$, then, $(a',\ats'_a) \in A \cup R$, if and only if there exists $\alabel' = \alabellongind[{\tt add}]{a}{}{\ats_a}{} \in S$ or $\alabel' = \alabellongind[{\tt remove}]{a}{}{\ats_a}{} \in S$. Therefore, we can see that our definition of the predicate $P_1$ holds as required.

\noindent {\bf Proof of $Prop_1$}: Given operation $\alabel = \alabellongind[{\tt add}]{a}{}{\ats}{}$ and $\alabel' = \alabellongind[{\tt add}]{a'}{}{\ats'}{}$ and replica state $\astate = (A,R)$, ${\tt apply}( {\tt apply}( \astate,{\tt eff}(\alabel) ),{\tt eff}(\alabel') ) = {\tt apply}( {\tt apply}( \astate, {\tt eff}(\alabel') ), {\tt eff}(\alabel) ) = (A \cup \{ (a,\ats),(a',\ats') \}, R)$. The case of other $\alabel$ and $\alabel'$ are similar. Therefore, $Prop_1$ holds.

\noindent {\bf Proof of $Prop_2$}: Given operation $\alabel = \alabellongind[{\tt add}]{a}{}{\ats}{}$ and replica states $\astate=(A,R)$ and $\astate'=(A',R')$, ${\tt merge}( \astate, {\tt apply}( \astate', {\tt eff}(\alabel) ) ) = {\tt apply}({\tt merge}(\astate,\astate'), {\tt eff}(\alabel) ) = (A \cup A' \cup \{ (a,\ats) \},R)$. The case of other $\alabel$ is similar. Therefore, $Prop_2$ holds.

\noindent {\bf Proof of $Prop_3$}: Given operation $\alabel = \alabellongind[{\tt add}]{a}{}{\ats}{}$ and replica states $\astate=(A,R)$ and $\astate'=(A',R')$, ${\tt merge}( {\tt apply}( \astate, {\tt eff}(\alabel) ), {\tt apply}( \astate', {\tt eff}(\alabel) ) ) = {\tt apply}( {\tt merge}(\astate,\astate'), {\tt eff}(\alabel) ) = (A \cup A' \cup \{ (a,\ats) \},R)$. The case of other $\alabel$ is similar. Therefore, $Prop_3$ holds.

\noindent {\bf Proof of $Prop_4$ and $Prop_5$}: $Prop_4$ is obvious according to the {\tt merge} method. According to the implementation, given $\astate = (A,R)$ and $\alabel = \alabellongind[{\tt add}]{a}{}{\ats}{}$, we have that $\astate' = (A' \cup \{ (a,\ats) \},R)$. The case of other $\alabel$ is similar. Therefore, $Prop_5$ holds.

\noindent {\bf Proof of $\mathsf{Refinement}$}: We consider a refinement mapping $\refmap$ defined as follows: $\refmap(A,R) = \{ a \vert \exists \ats_a, (a,\ats_a) \in A, \forall \ats > \ats_a, (a,\ats) \notin R \}$.

\begin{itemize}
\setlength{\itemsep}{0.5pt}
\item[-] Concerning the ``local'' effectors of $\alabellongind[{\tt add}]{a}{}{\ats_a}{}$ operations, we show that they are simulated by the corresponding specification operation $\alabelshort[{\tt add}]{a}$ only when the timestamp $\ats_a$ is strictly greater than all the timestamps of operations whose effector have been applied in the replica state. This is sufficient because, by $\mathsf{ReplicaStates}$, every replica state is obtained by applying effectors according to the linearization of their corresponding operations, and the linearization order is consistent with the timestamp order.

    Assume we obtain replica state $S' = (A',R')$ from replica state $S = (A,R)$ by applying the ``local'' effector $(add,a,\ats)$ of $\alabel = \alabellongind[{\tt add}]{a}{}{\ats_a}{}$; while in sequential specification we have $\abstate \xrightarrow{\alabelshort[{\tt add}]{a}} \abstate'$, and $\refmap(A,R) = \abstate$. We can see that $A' = A \cup \{ (a,\ats_a) \}$, $R'=R$ and $\abstate' = \abstate \cup \{ a \}$. Then, since $\ats_a$ is strictly greater than all the timestamps of operations whose effector have been applied in the replica state, we can see that $\refmap(A',R') = \refmap(A,R) \cup \{ a \} = \abstate'$.

\item[-] The cases of $\alabellongind[{\tt remove}]{a}{}{\ats_a}{}$ can be similarly proved.

\item[-] Applying the query {\tt read} on the replica state $(A,R)$ should result in the same return value as applying the same query in the context of the specification on the abstract state $\abstate = \refmap(A,R) = \{ a \vert \exists \ats_a, (a,\ats_a) \in A, \forall \ats > \ats_a, (a,\ats) \notin R \}$, which holds trivially.
\end{itemize}

\subsection{Implementation, Sequential Specification and Proof of State-Based 2P-Set}
\label{subsec:implementation, sequential specification and proof of state-based 2P-set}

\noindent {\bf Implementation}: The state-based 2P-set implementation of \cite{ShapiroPBZ11} is given in Listing~\ref{lst:state-based 2P-set}. A payload $(A,R)$ contains a set $A$ to record the inserted values, and a set $R$ to store the removed values and is used as \emph{tombstone}. Adding or removing a value twice has no effect, nor does adding an element that has already been removed. Therefore, we assume that the users guarantee that, in each execution, a value will not be added twice.

\begin{figure}[t]
\begin{lstlisting}[frame=top,caption={Pseudo-code of state-based 2P-set},
captionpos=b,label={lst:state-based 2P-set}]
  payload Set A, Set R
  initial @|$\emptyset$|@, @|$\emptyset$|@
  initial lin = @|$\epsilon$|@

  add(a) :
    A = A @|$\cup \ \{ a \}$|@
    //@ lin = lin@|$\,\cdot\,$|@add(a)

  remove(a) :
    precondition :  @|$a \in A \wedge a \notin R$|@
    R = R @|$\cup \ \{ a \}$|@
    //@ lin = lin@|$\,\cdot\,$|@remove(a)

  read() :
    let s = A @|$\,\setminus\,$|@ R
    //@ lin = lin@|$\,\cdot\,$|@(read()@|$\Rightarrow$|@s)
    return s

  compare(@|$(A_1,R_1)$|@, @|$(A_2,R_2)$|@): boolean b
    let b  = @|$(A_1 \subseteq A_2) \wedge (R_1 \subseteq R_2)$|@
    return b

  merge(@|$(A_1,R_1)$|@, @|$(A_2,R_2)$|@): @|$(A_3,R_3)$|@
    let @|$A_3$|@ = @|$A_1 \cup A_2$|@
    let @|$R_3$|@ = @|$R_1 \cup R_2$|@
    return @|$(A_3,R_3)$|@
\end{lstlisting}
\end{figure}

\noindent {\bf The ``Local'' Effector}: Given operation $\alabel = \alabelshort[{\tt add}]{a}$, then, ${\tt eff}(\alabel) = (add,a)$; given operation $\alabel = \alabelshort[{\tt remove}]{a}$, then, ${\tt eff}(\alabel) = (rem,a)$.

\noindent {\bf The {\tt apply} Method}: Given $\astate = (A,R)$ and ${\tt eff}(\alabel) = (add,a)$, ${\tt apply}(\astate, {\tt eff}(\alabel) ) = (A \cup \{ a \},R)$. Given $\astate = (A,R)$ and ${\tt eff}(\alabel) = (rem,a)$, ${\tt apply}(\astate, {\tt eff}(\alabel) ) = (A,R \cup \{ a \})$.

\noindent {\bf Predicate $P2$}: The predicate $P2$ is defined as follows: Given operation $\alabel = \alabelshort[{\tt add}]{a}$, $P2((A,R), {\tt eff}(\alabel))$ holds, if $a \notin A$. Given operation $\alabel = \alabelshort[{\tt remove}]{a}$, $P2((A,R), {\tt eff}(\alabel))$ holds, if $a \notin R$.

\noindent {\bf Proof of ReplicaStates}: Since all ${\tt add}(a)$ operations use a same ``local'' effector, and all ${\tt remove}(a)$ operations use a same ``local'' effector, we use the methodology of \sectionautorefname \ref{subsec:prove methodology of ReplicaStates for the second case}, and here case $C1$ holds. We need to prove $P_2$, $Prop'_1$, $Prop'_2$, $Prop'_3$, $Prop_4$ and $Prop_5$. Since case $C1$ holds, we additionally need to prove $Prop_6$.

\noindent {\bf Proof of the predicate $P2$}: It is easy to see that if we obtain $(A,R)$ from the initial replica state by applying ``local'' effectors of operations in a set $S$, then, $a \in A$, if and only if there exists $\alabel' = {\tt add}(a) \in S$; $a \in R$, if and only if there exists $\alabel' = {\tt remove}(a) \in S$. Therefore, we can see that our definition of the predicate $P_2$ holds as required.

\noindent {\bf Proof of $Prop'_1$}: Given operation $\alabel = {\tt add}(a)$ and $\alabel' = {\tt add}(b)$ and replica state $\astate = (A,R)$, ${\tt apply}( {\tt apply}( \astate,{\tt eff}(\alabel) ),{\tt eff}(\alabel') ) = {\tt apply}( {\tt apply}( \astate, {\tt eff}(\alabel') ), {\tt eff}(\alabel) ) = (A \cup \{ a,b \}, R)$. The case of other $\alabel$ and $\alabel'$ are similar. Therefore, $Prop'_1$ holds.

\noindent {\bf Proof of $Prop'_2$}: Given operation $\alabel = {\tt add}(a)$ and replica states $\astate=(A,R)$ and $\astate'=(A',R')$, ${\tt merge}( \astate, {\tt apply}( \astate', {\tt eff}(\alabel) ) ) = {\tt apply}({\tt merge}(\astate,\astate'), {\tt eff}(\alabel) ) = (A \cup A' \cup \{ a \},R)$. The case of other $\alabel$ is similar. Therefore, $Prop'_2$ holds.

\noindent {\bf Proof of $Prop'_3$}: Given replica states $\astate=(A,R)$ and $\astate'=(A',R')$ and ``local'' effector $(add,a)$, ${\tt merge}( {\tt apply}( \astate, (add,a) ), {\tt apply}( \astate', (add,a) ) ) = {\tt apply}( {\tt merge}(\astate,\astate'), (add,a) ) = (A \cup A' \cup \{ a \},R)$. The case of other ``local'' effector is similar. Therefore, $Prop'_3$ holds.

\noindent {\bf Proof of $Prop_4$ and $Prop_5$}: $Prop_4$ is obvious according to the {\tt merge} method. According to the implementation, given $\astate = (A,R)$ and $\alabel = {\tt add}(a)$, we have that $\astate' = (A' \cup \{ a \},R)$. The case of other $\alabel$ is similar. Therefore, $Prop_5$ holds.

\noindent {\bf Proof of $Prop_6$}: Given operation $\alabel = {\tt add}(a)$ and replica states $\astate=(A,R)$, ${\tt apply}( {\tt apply}( \astate, {\tt eff}(\alabel) ),$ ${\tt eff}(\alabel)) = {\tt apply}( \astate, {\tt eff}(\alabel) ) = (A \cup \{ a \},R)$. The case of other $\alabel$ is similar. Therefore, $Prop_6$ holds.

\noindent {\bf Proof of $\mathsf{Refinement}$}: The following lemma states that in a replica state $(A,R)$, $R$ is always a subset of $A$.

\begin{lemma}
\label{lemma:in a replica state AR, R is always a subset of A}
During the execution, in a replica state or message $(A,R)$, $R \subseteq A$..
\end{lemma}

\begin {proof}
We prove by induction on executions. Obvious they hold in $\aglobalstate_0$. Assume they hold along the execution $\aglobalstate_0 \xrightarrow{}^* \aglobalstate$ and there is a new transition $\aglobalstate \xrightarrow{} \aglobalstate'$. We need to prove that they still hold in $\aglobalstate'$.

\begin{itemize}
\setlength{\itemsep}{0.5pt}
\item[-] For case when replica $\arep$ do operation $\alabel = {\tt add}(a)$: Let $Lc = (A,R)$ and $Lc' = (A',R')$ be the local configuration of replica $\arep$ of $\aglobalstate$ and $\aglobalstate'$, respectively. Obviously $(A',R') = (A \cup \{a\},R)$. By induction assumption, we know that $R \subseteq A$. Therefore, it is easy to see that $R' \subseteq A'$.

\item[-] For case when replica $\arep$ do operation $\alabel = {\tt remove}(a)$: Let $Lc = (A,R)$ and $Lc' = (A',R')$ be the local configuration of replica $\arep$ of $\aglobalstate$ and $\aglobalstate'$, respectively. According to the algorithm, we know that $a \in A$. Obviously $(A',R') = (A, R \cup \{ a \})$. By induction assumption, we know that $R \subseteq A$. Therefore, it is easy to see that $R' \subseteq A'$.

\item[-] For case when replica $\arep$ apply the message: $Lc = (A,R)$ and $Lc' = (A',R')$ be the local configuration of replica $\arep$ of $\aglobalstate$ and $\aglobalstate'$, respectively. Let $(A'',R'')$ be the message content. Obviously $(A',R') = (A \cup A'',R \cup R'')$. By induction assumption, we know that $R \subseteq A$ and $R'' \subseteq A''$. Therefore, it is easy to see that $R' \subseteq A'$.
\end{itemize}
This completes the proof of this lemma. $\qed$
\end {proof}

We consider a refinement mapping $\refmap$ defined as follows: $\refmap(A,R) = A \setminus R$. 

\begin{itemize}
\setlength{\itemsep}{0.5pt}
\item[-] For the ``local'' effector $(add,a)$ produced by an operation $\alabel = \alabelshort[{\tt add}]{a}$ and the $\alabelshort[{\tt add}]{a}$ operation of the specification $\specLWWSet$.

    Assume we obtain replica state $S' = (A,'R')$ from replica state $S = (A,R)$ by applying the ``local'' effector $(add,a)$; while in sequential specification we have $\abstate \xrightarrow{\alabelshort[{\tt add}]{a}} \abstate'$, and $\refmap(A,R) = \abstate$, or we can say, $\abstate = A \setminus R$. Obviously, we can see that $A' = A \cup \{ a \}$ and $R' = R$.

    Since we already assume that in each execution, a value will not be added twice, therefore, we can safely assume that $a \notin A$. By Lemma \ref{lemma:in a replica state AR, R is always a subset of A}, we can see that it is impossible that $a \notin A \wedge a \in R$. Therefore, the only possibility is that $a \notin A \wedge a \notin R$, and we can see that $\abstate' = (A \cup \{ a \}) \setminus R = (A \setminus R) \cup \{ a \}$. Therefore, it is easy to prove that $S' = \abstate'$.

\item[-] For the ``local'' effector $(rem,a)$ produced by an operation $\alabel = \alabelshort[{\tt remove}]{a}$ and the $\alabelshort[{\tt remove}]{a}$ operation of the specification $\specLWWSet$. 

    Assume we obtain replica state $S' = (A,'R')$ from replica state $S = (A,R)$ by applying the ``local'' effector $(rem,a)$; while in sequential specification we have $\abstate \xrightarrow{\alabelshort[{\tt rem}]{a}} \abstate'$, and $\refmap(A,R) = \abstate$, or we can say, $\abstate = A \setminus R$. Obviously, we can see that $A' = A$ and $R' = R \cup \{ a \}$. Therefore, we have that $\abstate' = A \setminus (R \cup \{a\}) = (A \setminus R) \setminus \{ a \}$. Therefore, it is easy to prove that $S' = \abstate'$.

\item[-] Applying the query $\alabelshort[read]{}$ on the replica state $S$ should result in the same return value as applying the same query in the context of the specification on the same state $\abstate = \refmap(S)$, which again holds trivially.
\end{itemize}

\subsection{Implementation and Proof of State-Based PN-Counter}
\label{subsec:implementation and proof of state-based PN-counter}

\noindent {\bf Implementation}: The state-based PN-counter implementation of \cite{ShapiroPBZ11} is given in Listing~\ref{lst:state-based PN-counter}. Here $\alabelshort[{\tt myRep}]{}$ is a function that returns current replica identifier. This implementation assumes that the number of replicas are fixed. A payload $(P,N)$ contains a vector $P$ and a vector $N$. The size of $P$ and $N$ is the number of replicas in distributed system. Let us use $\vec 0$ to indicate the vector that maps each replica identifier to $0$.

\begin{figure}[t]
\begin{lstlisting}[frame=top,caption={Pseudo-code of state-based PN-counter},
captionpos=b,label={lst:state-based PN-counter}]
  payload Vector P, Vector N
  initial @|$\vec 0$|@, @|$\vec 0$|@
  initial lin = @|$\epsilon$|@

  inc() :
    let g = myRep()
    @|$P$|@ = @|$P[ g: P[g] + 1]$|@
      //@ lin = lin@|$\,\cdot\,$|@inc()

  dec() :
    let g = myRep()
    @|$N$|@ = @|$N[ g: N[g] + 1]$|@
      //@ lin = lin@|$\,\cdot\,$|@inc()

  read() :
    let c  = @|$\Sigma_{\arep} P[\arep]$|@ - @|$\Sigma_{\arep} N[\arep]$|@
    //@ lin = lin@|$\,\cdot\,$|@(read@|$\Rightarrow$|@c)
    return c

  compare(@|$(P_1,N_1)$|@, @|$(P_2,N_2)$|@): boolean b
    let b  = @|$\forall \arep, (P_1[\arep] \leq P_2[\arep]) \wedge (N_1[\arep] \leq N_2[\arep])$|@
    return b

  merge(@|$(P_1,N_1)$|@, @|$(P_2,N_2)$|@): @|$(P_3,N_3)$|@
    let @|$P_3$|@ = @|$\forall \arep, P_3[\arep] = max(P_1[\arep], P_2[\arep])$|@
    let @|$N_3$|@ = @|$\forall \arep N_3[\arep] = max(N_1[\arep], N_2[\arep])$|@
    return @|$(P_3,N_3)$|@
\end{lstlisting}
\end{figure}

\noindent {\bf The ``Local'' Effector}: Given operation $\alabel = {\tt inc}()$ originates in replica $\arep$, then, ${\tt eff}(\alabel) = (inc,\arep)$; given operation $\alabel = {\tt dec}()$ originates in replica $\arep$, then, ${\tt eff}(\alabel) = (dec,\arep)$.

\noindent {\bf The {\tt apply} Method}: Given $\astate = (P,N)$ and ${\tt eff}(\alabel) = (inc,\arep)$, ${\tt apply}(\astate, {\tt eff}(\alabel) ) = (P[\arep \leftarrow P[\arep]+1],N)$. Given $\astate = (A,R)$ and ${\tt eff}(\alabel) = (dec,\arep)$, ${\tt apply}(\astate, {\tt eff}(\alabel) ) = (P,N[\arep \leftarrow N[\arep]+1])$.

\noindent {\bf Predicate $P2$}: The predicate $P2$ is defined as follows: Given operation $\alabel = {\tt inc}()$ that originates in replica $\arep$, $P2((P,N), {\tt eff}(\alabel))$ holds, if $P[\arep]=0$. Given operation $\alabel = {\tt dec}()$, $P2((P,N), {\tt eff}(\alabel))$ holds, if $N[\arep]=0$.

\noindent {\bf Proof of ReplicaStates}: Since all ${\tt inc}()$ operations originate in a same replica use a same ``local'' effector, and all ${\tt dec}()$ operations originate in a same replica use a same ``local'' effector, we use the methodology of \sectionautorefname \ref{subsec:prove methodology of ReplicaStates for the second case}, and here case $C2$ holds. We need to prove $P_2$, $Prop'_1$, $Prop'_2$, $Prop'_3$, $Prop_4$ and $Prop_5$.

\noindent {\bf Proof of the predicate $P2$}: It is easy to see that if we obtain $(P,N)$ from the initial replica state by applying ``local'' effectors of operations in a set $S$, then, $P[\arep] \neq 0$, if and only if there exists $\alabel' = {\tt inc}() \in S$ that originates in replica $\arep$; $N[\arep] \neq = 0$, if and only if there exists $\alabel' = {\tt inc}() \in S$ originates in replica $\arep$. Therefore, we can see that our definition of the predicate $P_2$ holds as required.

\noindent {\bf Proof of $Prop'_1$}: Given operation $\alabel = {\tt inc}()$ originates in replica $\arep$, operation $\alabel' = {\tt inc}()$ originates in replica $\arep'$, and replica state $\astate = (P,N)$, ${\tt apply}( {\tt apply}( \astate,{\tt eff}(\alabel) ),{\tt eff}(\alabel') ) = {\tt apply}( {\tt apply}( \astate, {\tt eff}(\alabel') ),$ ${\tt eff}(\alabel) ) = (P[\arep \leftarrow P[\arep]+1, \arep' \leftarrow P[\arep']+1],N)$. Given operation $\alabel = {\tt inc}()$ originates in replica $\arep$, operation $\alabel' = {\tt inc}()$ originates in replica $\arep$, and replica state $\astate = (P,N)$, ${\tt apply}( {\tt apply}( \astate,{\tt eff}(\alabel) ),{\tt eff}(\alabel'$ $) ) = {\tt apply}( {\tt apply}( \astate, {\tt eff}(\alabel') ), {\tt eff}(\alabel) ) = (P[\arep \leftarrow P[\arep]+1],N)$. The case of other $\alabel$ and $\alabel'$ are similar. Therefore, $Prop'_1$ holds.

\noindent {\bf Proof of $Prop'_2$}: Given operation $\alabel = {\tt inc}()$ originates in replica $\arep$, and replica states $\astate=(P_1,N_1)$ and $\astate'=(P_2,N_2)$, ${\tt merge}( \astate, {\tt apply}( \astate', {\tt eff}(\alabel) ) ) = (P_3,N_3)$  and ${\tt apply}({\tt merge}(\astate,\astate'), {\tt eff}(\alabel) ) = (P_4,N_4)$.

Here $(P_3,N_3)$ is as follows: For each replica $\arep' \neq \arep$, $P_3[\arep']=max(P_1[\arep'],P_2[\arep'])$, $P_3[\arep] = max(P_1[\arep],P_2[\arep]+1)$; for each replica $\arep'$, $N_3[\arep'] = max(N_1[\arep'],N_2[\arep'])$. $(P_4,N_4)$ is as follows: For each replica $\arep' \neq \arep$, $P_4[\arep']=max(P_1[\arep'],P_2[\arep'])$, $P_4[\arep] = max(P_1[\arep],P_2[\arep])+1$; for each replica $\arep'$, $N_4[\arep'] = max(N_1[\arep'],N_2[\arep'])$. Since $P2(\astate, {\tt eff}(\alabel))$ and $P2(\astate', {\tt eff}(\alabel))$ hold, we can see that $P_1[\arep] = P_2[\arep] = 0$. Therefore, ${\tt merge}( \astate, {\tt apply}( \astate', {\tt eff}(\alabel) ) ) = {\tt apply}({\tt merge}(\astate,\astate'), {\tt eff}(\alabel) )$. The case of other $\alabel$ is similar. Therefore, $Prop'_2$ holds.

\noindent {\bf Proof of $Prop'_3$}: Given replica states $\astate=(P_1,N_1)$ and $\astate'=(P_2,N_2)$ and ``local'' effector $(inc,\arep)$, ${\tt merge}( {\tt apply}( \astate, (inc,\arep) ), {\tt apply}( \astate', (inc,\arep) ) ) = (P_3,N_3)$, ${\tt apply}( {\tt merge}(\astate,\astate'), (inc,\arep) ) = (P_4,N_4)$.

Here $(P_3,N_3)$ is as follows: For each replica $\arep' \neq \arep$, $P_3[\arep']=max(P_1[\arep'],P_2[\arep'])$, $P_3[\arep] = max(P_1[\arep]+1,P_2[\arep]+1)$; for each replica $\arep'$, $N_3[\arep'] = max(N_1[\arep'],N_2[\arep'])$. $(P_4,N_4)$ is as follows: For each replica $\arep' \neq \arep$, $P_4[\arep']=max(P_1[\arep'],P_2[\arep'])$, $P_4[\arep] = max(P_1[\arep],P_2[\arep])+1$; for each replica $\arep'$, $N_4[\arep'] = max(N_1[\arep'],N_2[\arep'])$. Since $max(P_1[\arep]+1,P_2[\arep]+1) = max(P_1[\arep],P_2[\arep])+1$, we can see that $(P_3,N_3) = (P_4,N_4)$. Therefore, ${\tt merge}( {\tt apply}( \astate, (inc,\arep) ), {\tt apply}( \astate', (inc,\arep) ) ) = {\tt apply}( {\tt merge}(\astate,\astate'), (inc,\arep) )$. The case of other ``local'' effector is similar. Therefore, $Prop'_3$ holds.

\noindent {\bf Proof of $Prop_4$ and $Prop_5$}: $Prop_4$ is obvious according to the {\tt merge} method. According to the implementation, given $\astate = (P,N)$ and $\alabel = {\tt inc}()$ originates in replica $\arep$, we have that $\astate' = (P[\arep \leftarrow P[\arep]+1],N)$. The case of other $\alabel$ is similar. Therefore, $Prop_5$ holds.

\noindent {\bf Proof of $\mathsf{Refinement}$}: We consider a refinement mapping $\refmap$ defined as follows: $\refmap(P,N) = \Sigma_{\arep} P[\arep] - \Sigma_{\arep} N[\arep]$.

\begin{itemize}
\setlength{\itemsep}{0.5pt}
    \item[-] For the ``local'' effector $(inc,\arep)$ produced by $\alabel = {\tt inc}()$ originates in replica $\arep$ and the ${\tt inc}()$ operation of the specification $\specCounter$: Assume we obtain replica state $S'=(P',N')$ from $S=(P,N)$ by applying ``local'' effector $(inc,\arep)$; while in sequential specification we have $\abstate \xrightarrow{\alabelshort[{\tt inc}]{}} \abstate+1$, and $\refmap((P,N)) = \abstate$. We need to prove that $\refmap((P',N')) = \abstate+1$.

    Since $P' = P[\arep \leftarrow P[\arep]+1]$ and $N' = N$, we can see that $\refmap((P',N')) = \refmap((P,N))+1 = \abstate+1$.

    \item[-] The case of {\tt dec} can be similarly proved.

    \item[-] Applying the query {\tt read} on the replica state $(P,N)$ should result in the same return value as applying the same query in the context of the specification on the abstract state $\abstate = \refmap(S) = \Sigma_{\arep} P[\arep] - \Sigma_{\arep} N[\arep]$, which again holds trivially.
\end{itemize}

}

\forget{
\section{Implementation, Sequential Specification, and Proof of Wooki}
\label{sec:implementation, sequential specification, and proof of wooki}

\subsection{Proof of Wooki}
\label{subsec:proof of Wooki}

Given a W-string $s$ and two W-characters $w_1,w_2$, we say that $w_1$ and $w_2$ are degree-$i$-adjacent in $s$, if

\begin{itemize}
\setlength{\itemsep}{0.5pt}
\item[-] the degree of $w_1$ and $w_2$ are $i$,

\item[-] there does not exists W-character $w$ of $s$, such that the degree of $w$ is less or equal than $i$, and $w_1 <_s w <_s w_2$.
\end{itemize}

The following lemma states that, when doing $\alabelshort[{\tt integreteIns}]{w_p,w,w_n}$, for each $i$, $F[i],F[i+1]$ are degree-$d_{min}$-adjacent.

\begin{lemma}
\label{lemma:in F of Wooki, W-characters are degree-dmin-adjacent}
When doing $\alabelshort[{\tt integreteIns}]{w_p,w,w_n}$, for each $i$, $F[i],F[i+1]$ are degree-$d_{min}$-adjacent.
\end{lemma}

\begin {proof}
Obviously, $F[i],F[i+1]$ have degree $d_{min}$, and there is no W-character that is with degree $d_{min}$ and is between $F[i]$ and $F[i+1]$ in $string_s$.

Since $d_{min}$ is the minimal degree of W-characters in $S'$, there does not exists W-character that is between $F[i]$ and $F[i+1]$ and with a degree smaller than $d_{min}$. This completes the proof of this lemma. $\qed$
\end {proof}

The following lemma states a property of degrees of the argument of {\tt integrateIns}. Its can be obviously proved by induction and we omit its proof.

\begin{lemma}
\label{lemma:a property of degree of argument of integrateIns}
If $\alabelshort[{\tt addBetween}]{a,b,c}$ calls $\alabelshort[{\tt integrateIns}]{w_p,w_b,w_n}$, and that, for each time, we find degree $d_{min}^i$ and then recursively calls $\alabelshort[{\tt integrateIns}]{w_i,w_b,w'_i},\ldots$. Then the degree of $w_i$ and $w'_i$ is chosen from $\{ d_p,d_n,d_{min}^1,\ldots d_{min}^i \}$, where $d_p$ and $d_n$ is the degree of $w_p$ and $w_n$, respectively.
\end{lemma}

The following lemma states that, given two W-characters that are degree-$i$-adjacent in $string_s$, then, they are ordered by $<_{id}$ in $s$.

\begin{lemma}
\label{lemma:in strings, given two degree-i-adjacent W-characters, they are ordered by id order}
If $w_1$ and $w_2$ are degree-$i$-adjacent in $string_s$, then, $w_1 <_{string_s} w_2$, if and only if $w_1 <_{id} w_2$.
\end{lemma}

\begin {proof}
Let us prove this property by induction.

It is obvious that this property holds initially. Let us prove the induction part by contradiction. Assume this property hold for $string_s$, let $string'_s$ be obtained from $string_s$ by applying the effector of $\alabelshort[{\tt addBetween}]{a,b,c}$, and this property does not hold for $string'_s$. Let $w_b$ be the W-character of $b$. Assume the degree of $w_b$ is $i_b$. Then, there exists W-characters $w_x$ and $w_y$, such that $w_x$ and $w_y$ are degree-$k$-adjacent for some natural number $k$, and it is not the case that $w_x <_{string'_s} w_y$ if and only if $w_x <_{id} w_y$.

Since the order of non-$w_b$ W-characters is the same in $string_s$ and $string'_s$, it is easy to see that $w_x = w_b$ or $w_y = w_b$. Let us consider the case when $w_y = w_b$. Or we can say, $w_x$ and $w_b$ are degree-$i_b$-adjacent, and it is not the case that $w_x <_{string'_s} w_b$ if and only if $w_x <_{id} w_b$.

Let us consider the case when $w_x <_{string'_s} w_b \wedge w_b <_{id} w_x$.

Assume $\alabelshort[{\tt addBetween}]{a,b,c}$ calls $\alabelshort[{\tt integrateIns}]{w_a,w_b,w_c}$. Assume $\alabelshort[{\tt integrateIns}]{w_1,w_b,w_2}$ is the last {\tt integrateIns} with $d_{min} < i_b$, and then, we recursively calls $\alabelshort[{\tt integrateIns}]{w_u,w_b,w_v}$. By Lemma \ref{lemma:a property of degree of argument of integrateIns}, we can see that the degree of $w_u$ and $w_v$ are less than $i_b$.

It is obviously that $w_u <_{string'_s} w_b <_{string'_s} w_v$. Since $w_x$ and $w_b$ are degree-$i_b$-adjacent, we can see that $w_u <_{string'_s} w_x$. Therefore, in $\alabelshort[{\tt integrateIns}]{w_u,w_b,w_v}$, $d_{min} = i_b$, and $w_x,w_b \in F$.

By Lemma \ref{lemma:in F of Wooki, W-characters are degree-dmin-adjacent}, given $F$ of $\alabelshort[{\tt integrateIns}]{w_u,w_b,w_v}$, for each $i$, $F[i]$ and $F[i+1]$ are degree-$d_{min}$-adjacent. By induction assumption, we can see that, in $string'_s$, the W-characters of $F \setminus \{ w_b \}$ are ordered by $<_{id}$. Then, according to Wooki algorithm, we have $w_b <_{string'_s} w_x$, contradicts the assumption that $w_x <_{string'_s} w_b$.

The case of $w_b <_{string'_s} w_x \wedge w_x <_{id} w_b$ can be similarly proved. This completes the proof of this lemma. $\qed$
\end {proof}

The following lemma states that, inserting a value $b$ is ``independent'' from whether another value $e$ has already been inserted or not.

\begin{lemma}
\label{lemma:in Wooki algorithm,the order of sigma and b is the same, between insert b and first insert e and then insert b}
Assume from a replica state $\sigma$, we obtain $\sigma_b$ by applying the effector of ${\tt addBetween}($ $a,b,c)$, and obtain $\sigma_{eb}$ by first applying the effector of $\alabelshort[{\tt addBetween}]{d,e,f}$ and then applying the effector of $\alabelshort[{\tt addBetween}]{a,b,c}$. Let $w_b$ be W-character of $b$. Then, the order of $\{$W-characters of $\sigma \} \cup \{ w_b \}$ is the same for $\sigma_b$ and $\sigma_{eb}$.
\end{lemma}

\begin {proof}
Let $w_a,w_c,w_d,w_e,w_f$ be the W-character of $a,c,d,e,f$, respectively. Assume $\sigma_e$ is obtained from $\sigma$ by applying the effector of $\alabelshort[{\tt addBetween}]{d,e,f}$. Let us use case $1$ to mention the process of doing $\alabelshort[{\tt integrateIns}]{w_a,w_b,w_c}$ from $\sigma$. Let us use case $2$ to mention the process of doing $\alabelshort[{\tt integrateIns}]{w_a,w_b,w_c}$ from $\sigma_e$. Let $j_e$ be the degree of $w_e$.

We need to prove that, for each degree $i$, the order of $\{$W-characters of degree $i$ in $\sigma \} \cup \{ w_b \}$ is the same for case $1$ and case $2$. We prove this by considering all possible degrees.

When case $1$ and case $2$ both call $\alabelshort[{\tt integrateIns}]{w_1,w_b,w_2}$ for some $w_1$ and $w_2$, and $d_{min}<j_e$. Then, it is easy to see that case $1$ and case $2$ work in the same way. It is easy to see that the order of $\{$W-characters of degree $d_{min}$ in $\sigma \} \cup \{ w_b \}$ is the same for case $1$ and case $2$.

When case $1$ and case $2$ both call $\alabelshort[{\tt integrateIns}]{w_1,w_b,w_2}$ for some $w_1$ and $w_2$, and it is the first time that $d_{min} \geq j_e$. Then, there are two possibilities:

\noindent {\bf Possibility $1$}: In $\alabelshort[{\tt integrateIns}]{w_1,w_b,w_2}$ of case $1$, $d_{min} > j_e$, while in $\alabelshort[{\tt integrateIns}]{w_1,w_b,w_2}$ of case $2$, $d_{min} = j_e$. Or we can say, $w_e$ is the only W-character that has degree $j_e$ and is between $w_1$ and $w_2$ in $\sigma_e$. It is easy to see that the order of $\{$W-characters of degree $j_e$ in $\sigma \} \cup \{ w_b \}$ is the same for case $1$ and case $2$.

Assume $w_e$ is put into $\sigma_e$ by calling $\alabelshort[{\tt integrateIns}]{w_{pe},w_e,w_{ne}}$. Obviously, the degree of $w_{pe}$ and $w_{ne}$ is smaller than $j_e$. Since in case $2$, we have $d_{min} = j_e$, then, it is not the case that $w_1 <_{\sigma_e} w_{pe} <_{\sigma_e} w_2 \vee w_1 <_{\sigma_e} w_{ne} <_{\sigma_e} w_2$. Thus, we can see that, $w_{pe} <_{\sigma_e} w_1 <_{\sigma_e} w_2 <_{\sigma_e} w_{ne}$.

Let $F_1$ be the array $F$ in $\alabelshort[{\tt integrateIns}]{w_1,w_b,w_2}$ of case $1$, and let $d_{min1}$ be the $d_{min}$ in $\newline$ $\alabelshort[{\tt integrateIns}]{w_1,w_b,w_2}$ of case $1$. Assume that $F_1 = w'_1 \cdot \ldots \cdot w'_n$. By Lemma \ref{lemma:in F of Wooki, W-characters are degree-dmin-adjacent}, $(w'_1,w'_2)$, $\ldots$, $(w'_{n-1},w'_n)$ are degree-$d_{min1}$-adjacent. By Lemma \ref{lemma:in strings, given two degree-i-adjacent W-characters, they are ordered by id order}, we can see that $w'_1 <_{id} w'_2 \ldots <_{id} w'_n$.

Let us assume that $w'_k <_{\sigma_e} w_e <_{\sigma_e} w'_{k+1}$. Let us prove $w'_k <_{id} w_e$ by contradiction. Assume that $w_e <_{id} w'_k$. Then, it is easy to see that, in the process of doing $\alabelshort[{\tt integrateIns}]{w_{pe},w_e,w_{ne}}$ from $\sigma$, in each time of recursively calling {\tt IntegrateIns}, $w'_k$ should never be in array $F$. Since $w_{pe} <_{\sigma_e} w_1 <_{\sigma_e} w_2 <_{\sigma_e} w_{ne}$, there exists a W-character $w_x$, such that $w'_k <_{\sigma_e} w_x <_{\sigma_e} w_e$, and in the process of doing $\alabelshort[{\tt integrateIns}]{w_{pe},w_e,w_{ne}}$ from $\sigma_b$, at some time-point we will recursively call $\alabelshort[{\tt integrateIns}]{w_x,w_e,\_}$. Since this time-point is before we reach degree $d_{min1}$, we can see that the degree of $w_x$ is smaller than $d_{min1}$. Therefore, $d_{min1}$ should equal the degree of $w_x$, which brings a contradiction. Similarly, we can prove that $w_e <_{id} w'_{k+1}$.

Therefore, we can see that $F_1[1] <_{id} \ldots <_{id} F_1[k] <_{id} w_e <_{id} F_1[k+1] <_{id} \ldots <_{id} F_1[n]$ and $F_1[1] <_{\sigma_e} \ldots <_{\sigma_e} F_1[k] <_{\sigma_e} w_e <_{\sigma_e} F_1[k+1] <_{\sigma_e} \ldots <_{\sigma_e} F_1[n]$. Then,

\begin{itemize}
\setlength{\itemsep}{0.5pt}
\item[-] If $w_b <_{id} w'_k$ or $w'_{k+1} <_{id} w_b$, then, case $1$ and case $2$ work in the same way.

\item[-] Else, if $w'_k <_{id} w_b <_{id} w_e$, case $1$ recursively calls $\alabelshort[{\tt integrateIns}]{w'_k,w_b,w'_{k+1}}$. Case $2$ first recursively calls $\alabelshort[{\tt integrateIns}]{w_1,w_b,w_e}$, and then recursively calls $\alabelshort[{\tt integrateIns}]{w'_k,w_b,w_e}$. We can see that the order of $\{ w'_1,\ldots,w'_n,w_b \}$ is the same for case $1$ and case $2$. It is obvious that the order of $\{$W-characters of degree $d_{min1}$ before $w_1$ or after $w_2$ in $\sigma \} \cup \{ w_b \}$ is the same for case $1$ and case $2$. Therefore, it is easy to see that the order of $\{$W-characters of degree $d_{min1}$ in $\sigma \} \cup \{ w_b \}$ is the same for case $1$ and case $2$.

\item[-] Else, we know that $w_e <_{id} w_b <_{id} w'_{k+1}$, case $1$ recursively calls $\alabelshort[{\tt integrateIns}]{w'_k,w_b,w'_{k+1}}$. Case $2$ first recursively calls $\alabelshort[{\tt integrateIns}]{w_e,w_b,w_2}$, and then recursively calls ${\tt integrateIns}($ $w_e,w_b,w'_{k+1})$. We can see that the order of $\{ w'_1,\ldots,w'_n,w_b \}$ is the same for case $1$ and case $2$. Similarly, we can see that the order of $\{$W-characters of degree $d_{min1}$ in $\sigma \} \cup \{ w_b \}$ is the same for case $1$ and case $2$.
\end{itemize}

From now on, when case $2$ calls ${\tt integrateIns}(\_,w_b,w_e)$, we can see that $w_b <_{id} w_e$; when case $2$ calls ${\tt integrateIns}(w_e,w_b,\_)$, we can see that $w_e <_{id} w_b$.

\noindent {\bf Possibility $2$}: In $\alabelshort[{\tt integrateIns}]{w_1,w_b,w_2}$ of both case $1$ and case $2$, we have $d_{min} = j_e$.

Similarly, we can prove that, $w_{pe} <_{\sigma_e} w_1 <_{\sigma_e} w_2 <_{\sigma_e} w_{ne}$.

Let $F_2$ be the array $F$ in $\alabelshort[{\tt integrateIns}]{w_1,w_b,w_2}$ of case $2$, and let $d_{min2}=j_e$ be the $d_{min}$ in $\alabelshort[{\tt integrateIns}]{w_1,w_b,w_2}$ of case $2$. Assume that $F_2 = w'_1 \cdot \ldots \cdot w'_k \cdot w_e \cdot w'_{k+1} \cdot \ldots \cdot w'_n$. By Lemma \ref{lemma:in F of Wooki, W-characters are degree-dmin-adjacent}, $(w'_1,w'_2),\ldots,(w'_k,w_e),(w_e,w'_{k+1}),\ldots,(w'_{n-1},w'_n)$ are degree-$d_{min2}$-adjacent. By Lemma \ref{lemma:in strings, given two degree-i-adjacent W-characters, they are ordered by id order}, we can see that $w'_1 <_{id} w'_2 \ldots <_{id} w'_k <_{id} w_e <_{id} w'_{k+1} \ldots <_{id} w'_n$. Then,

\begin{itemize}
\setlength{\itemsep}{0.5pt}
\item[-] If $w_b <_{id} w'_k \vee w'_{k+1} <_{id} w_b$, then, case $1$ and case $2$ work in the same way.

\item[-] Else, if $w'_k <_{id} w_b <_{id} w_e <_{id} w'_{k+1}$, case $1$ recursively calls $\alabelshort[{\tt integrateIns}]{w'_k,w_b,w'_{k+1}}$, and case $2$ recursively calls $\alabelshort[{\tt integrateIns}]{w'_k,w_b,w_e}$. We can see that the order of $\{ w'_1,\ldots$, $w'_n,w_b \}$ is the same for case $1$ and case $2$. Similarly, we can see that the order of $\{$W-characters of degree $j_e$ in $\sigma \} \cup \{ w_b \}$ is the same for case $1$ and case $2$.

\item[-] Else, we know that $w'_k <_{id} w_e <_{id} w_b <_{id} w'_{k+1}$, case $1$ recursively calls $\alabelshort[{\tt integrateIns}]{w'_k$, $w_b,w'_{k+1}}$, and case $2$ recursively calls $\alabelshort[{\tt integrateIns}]{w_e,w_b,w'_{k+1}}$. We can see that the order of $\{ w'_1,\ldots,w'_n,w_b \}$ is the same for case $1$ and case $2$. Similarly, we can see that the order of $\{$W-characters of degree $j_e$ in $\sigma \} \cup \{ w_b \}$ is the same for case $1$ and case $2$.
\end{itemize}

From now on, when case $2$ calls ${\tt integrateIns}(\_,w_b,w_e)$, we can see that $w_b <_{id} w_e$; when case $2$ calls ${\tt integrateIns}(w_e,w_b,\_)$, we can see that $w_e <_{id} w_b$.

\noindent {\bf Induction situation}: If $w_b <_{id} w_e$, then case $1$ calls $\alabelshort[{\tt integrateIns}]{w'_1,w_b,w'_2}$ and case $2$ calls $\alabelshort[{\tt integrateIns}]{w'_1,w_b,w_e}$ for some $w'_1$ and $w'_2$. Let $F'_1$ be the array $F$ in $\alabelshort[{\tt integrateIns}]{w'_1,w_b,w'_2}$ of case $1$, and let $d'_{min1}$ be the $d_{min}$ in $\alabelshort[{\tt integrateIns}]{w'_1,w_b,w'_2}$ of case $1$. Assume that $F'_1 = w''_1 \cdot \ldots \cdot w''_n$. By Lemma \ref{lemma:in F of Wooki, W-characters are degree-dmin-adjacent}, $(w''_1,w''_2),\ldots,(w''_{n-1},w''_n)$ are degree-$d'_{min1}$-adjacent. By Lemma \ref{lemma:in strings, given two degree-i-adjacent W-characters, they are ordered by id order}, we can see that $w''_1 <_{id} w''_2 \ldots <_{id} w''_n$. Let us assume that $w''_k <_{\sigma_e} w_e <_{\sigma_e} w''_{k+1}$. Similarly, we can prove that $w''_k <_{id} w_e <_{id} w''_{k+1}$. Therefore, we can see that $F'_1[1] <_{id} \ldots <_{id} F'_1[k] <_{id} w_e <_{id} F'_1[k+1] <_{id} \ldots <_{id} F'_1[n]$ and $F'_1[1] <_{\sigma_e} \ldots <_{\sigma_e} F'_1[k] <_{\sigma_e} w_e <_{\sigma_e} F'_1[k+1] <_{\sigma_e} \ldots <_{\sigma_e} F'_1[n]$. Then,

\begin{itemize}
\setlength{\itemsep}{0.5pt}
\item[-] If $w_b <_{id} w''_k$, then, case $1$ and case $2$ work in the same way.

\item[-] Else, we know that $w''_k <_{id} w_b <_{id} w_e$. Case $1$ recursively calls $\alabelshort[{\tt integrateIns}]{w''_k,w_b,w''_{k+1}}$. Case $2$ recursively calls $\alabelshort[{\tt integrateIns}]{w''_k,w_b,w_e}$. We can see that the order of $\{ w''_1,\ldots,w''_n,w_b \}$ is the same for case $1$ and case $2$. Similarly, we can see that the order of $\{$W-characters of degree $d'_{min1}$ in $\sigma \} \cup \{ w_b \}$ is the same for case $1$ and case $2$.
\end{itemize}

Else, we know that $w_e <_{id} w_b$, case $1$ calls ${\tt integrateIns}(w'_1,w_b,w'_2)$ and case $2$ calls ${\tt integrateIns}($ $w_e,w_b,w'_2)$ for some $w'_1$ and $w'_2$. Let $F'_1$ be the array $F$ in $\alabelshort[{\tt integrateIns}]{w'_1,w_b,w'_2}$ of case $1$, and let $d'_{min1}$ be the $d_{min}$ in $\alabelshort[{\tt integrateIns}]{w'_1,w_b,w'_2}$ of case $1$. Assume that $F'_1 = w''_1 \cdot \ldots \cdot w''_n$. By Lemma \ref{lemma:in F of Wooki, W-characters are degree-dmin-adjacent}, $(w''_1,w''_2),\ldots,(w''_{n-1},w''_n)$ are degree-$d'_{min1}$-adjacent. By Lemma \ref{lemma:in strings, given two degree-i-adjacent W-characters, they are ordered by id order}, we can see that $w''_1 <_{id} w''_2 \ldots <_{id} w''_n$. Let us assume that $w''_k <_{\sigma_e} w_e <_{\sigma_e} w''_{k+1}$. Similarly, we can prove that $w''_k <_{id} w_e <_{id} w''_{k+1}$. Therefore, we can see that $F'_1[1] <_{id} \ldots <_{id} F'_1[k] <_{id} w_e <_{id} F'_1[k+1] <_{id} \ldots <_{id} F'_1[n]$ and $F'_1[1] <_{\sigma_e} \ldots <_{\sigma_e} F'_1[k] <_{\sigma_e} w_e <_{\sigma_e} F'_1[k+1] <_{\sigma_e} \ldots <_{\sigma_e} F'_1[n]$. Then,

\begin{itemize}
\setlength{\itemsep}{0.5pt}
\item[-] If $w'_{k+1} <_{id} w_b$, then, case $1$ and case $2$ work in the same way.

\item[-] Else, we know that $w_e <_{id} w_b <_{id} w'_{k+1}$. Case $1$ recursively calls $\alabelshort[{\tt integrateIns}]{w''_k,w_b,w''_{k+1}}$. Case $2$ recursively calls $\alabelshort[{\tt integrateIns}]{w''_k,w_b,w_e}$. We can see that the order of $\{ w''_1,\ldots,w''_n,w_b \}$ is the same for case $1$ and case $2$. Similarly, we can see that the order of $\{$W-characters of degree $d'_{min1}$ in $\sigma \} \cup \{ w_b \}$ is the same for case $1$ and case $2$.
\end{itemize}

This completes the proof of this lemma. $\qed$
\end {proof}

The following lemma states that, in Wooki algorithm, two effectors that correspond to two ``concurrent'' {\tt addBetween} operations commute.

\begin{lemma}
\label{lemma:in Wooki algorithm, two downstreams of two addBetween operations commute}
Assume from a replica state $\sigma$, we obtain $\sigma_b$ by applying the effector of ${\tt addbwteeen}($ $a,b,c)$ to $\sigma$, and obtain $\sigma_{be}$ by applying the effector of $\alabelshort[{\tt addBetween}]{d,e,f}$ to $\sigma_b$. Assume we obtain $\sigma_e$ by applying the effector of $\alabelshort[{\tt addBetween}]{d,e,f}$ to $\sigma$, and obtain $\sigma_{eb}$ by applying the effector of $\alabelshort[{\tt addBetween}]{a,b,c}$ to $\sigma_e$. Assume that $b \notin \{ d, f \}$ and $e \notin \{ a,c \}$ Then, $\sigma_{be} = \sigma_{eb}$.
\end{lemma}

\begin {proof}
We prove by contradiction. Let $w_a,w_b,w_c,w_d,w_e,w_f$ be the W-character of $a,b,c,d,e,f$, respectively. Assume that $w_b <_{\sigma_{eb}} w_e$ and $w_e <_{\sigma_{be}} w_b$.

By Lemma \ref{lemma:in Wooki algorithm,the order of sigma and b is the same, between insert b and first insert e and then insert b}, we know that the order between W-characters of $\sigma$ and $\{ w_b,w_e \}$ are the same in $\sigma_{be}$ and $\sigma_{eb}$.

Therefore, the only possibility is that in both $\sigma_{be}$ and $\sigma_{eb}$, $w_b$ and $w_e$ are adjacent, and they are in different order in $\sigma_{be}$ and in $\sigma_{eb}$. Let $w_1$ be the W-character that are before $w_b$ in $\sigma_{eb}$ and has a maximal index.

According to Wooki algorithm, in the process of applying the effector of $\alabelshort[{\tt addBetween}]{a,b,c}$ to $\sigma_e$, the last time of calling recursive method ${\tt integrateIns}$ must be $\alabelshort[{\tt integrateIns}]{w_1,w_b,w_e}$. 
We can see that $w_e$ is not an argument of $\alabelshort[{\tt integrateIns}]{w_a,w_b,w_c}$. Similarly as the proof of Lemma \ref{lemma:in Wooki algorithm,the order of sigma and b is the same, between insert b and first insert e and then insert b}, we can prove that, since we call $\alabelshort[{\tt integrateIns}]{\_,w_b,w_e}$, we must have $w_b <_{id} w_e$. Similarly, for the case of $w_e <_{\sigma_{be}} w_b$, we can prove that $w_e <_{id} w_b$. This implies that $w_b <_{id} w_e \wedge w_e <_{id} w_b$, which is a contradiction. $\qed$
\end {proof}

Then, let us prove that Wooki is \crdtlinearizable{} w.r.t $\specWooki$.

\begin{lemma}
\label{lemma:Wooki is correct}
Wooki is \crdtlinearizable{} w.r.t $\specWooki$.
\end{lemma}

\begin {proof}

A refinement mapping $\refmap$ is given as follows:

Given a replica state $\sigma$ that is a sequence of W-characters. Assume that $\sigma = w_1 \cdot \ldots \cdot w_n$, and for each $i$, $w_i = (id_i,v_i,degree_i,flag_i)$. Then, the refinement mapping $\refmap(\sigma) = (l,T)$, where $l = v_1 \cdot \ldots \cdot v_n$, and $T = \{ v_i \vert flag_i = \mathit{false} \}$.

Our proof proceeds as follows:

\begin{itemize}
\setlength{\itemsep}{0.5pt}
\item[-] By Lemma \ref{lemma:in Wooki algorithm, two downstreams of two addBetween operations commute}, we can see that the effectors of concurrent {\tt addBetween} operations commute. According to Wooki algorithm, it is easy to see that the effector of concurrent {\tt remove} operations commute, since they both set the flags of some W-characters into $\mathit{false}$; Concurrent {\tt addBetween} and a {\tt remove} effectors commute because in this case, the W-character influenced by {\tt remove} are different from the W-character added by the {\tt addBetween}.

    Let us prove $\mathsf{ReplicaStates}$: Since every operation is appended to the linearization when it executes generator it clearly follows, the linearization order is consistent with visibility order. Then, by the causal delivery assumption, the order in which effectors are applied at a given replica is also consistent with the visibility order. Let $\alinord_1$ be the projection of linearization order into labels of effectors applied in a replica $\arep$, and $\alinord_2$ be the order of labels of effectors applied in replica $\arep$. By Lemma \ref{lemma:given two sequence consistent with visibility order, one can be obtained from the other}, $\alinord_2$ can be obtained from $\alinord_1$ by several time of swapping adjacent pair of concurrent operations. We have already proved that effector of concurrent operations commute. Therefore, we know that $\mathsf{ReplicaStates}$ is an inductive invariant.

    Note that, by the causal delivery assumption and the preconditions of ${\tt addBetween}$ and ${\tt remove}$, it cannot happen that an $\alabelshort[{\tt addBetween}]{a,b,c}$ operation adding ${\tt b}$ between ${\tt a}$ and ${\tt c}$ is concurrent with an operation that adds ${\tt a}$ or ${\tt c}$ to the list, i.e., $\alabelshort[{\tt addBetween}]{\_,a,\_}$ or $\alabelshort[{\tt addBetween}]{\_,c,\_}$; or that an $\alabelshort[{\tt addBetween}]{a,b,c}$ operation adding ${\tt b}$ is concurrent with an operation $\alabelshort[{\tt remove}]{b}$ that removes ${\tt b}$. This ensures that reordering concurrent effectors doesn't lead to ``invalid'' replica states such as the replica state does not contains W-character of $a$ or $c$ while the effector requires to put $b$ between $a$ and $c$ (which would happen if $\alabelshort[{\tt addBetween}]{\_,a,\_}$ or $\alabelshort[{\tt addBetween}]{\_,c,\_}$ is delivered before $\alabelshort[{\tt addBetween}]{a,b,c}$), or a replica state does not contain W-character of $b$ when effector requires to remove $b$ (which would happen if $\alabelshort[{\tt remove}]{b}$ is delivered before $\alabelshort[{\tt addAfter}]{\_,b,\_}$).

\item[-] Let us prove $\mathsf{Refinement}$:
    \begin{itemize}
    \setlength{\itemsep}{0.5pt}
    \item[-] Assume $\refmap(\sigma) = (l,T)$. If $\sigma'$ is obtained from $\sigma$ by applying an effector $\delta$ produced by an operation $\alabelshort[{\tt addBetween}]{a,b,c}$. By the causal delivery assumption, we can see that the W-character of $a$ and $c$ is already in $\sigma$, and then, $a,c \in l$. It is obvious that the W-character of $a$ is before the W-character of $c$ in $\sigma$, and then, $a$ is before $c$ in $l$. By the Wooki algorithm, we can see that $\sigma'$ is obtained from $\sigma$ by inserting a W-character $(\_,b,\_,\mathit{true})$ of $b$ at some position between the W-character of $a$ and the W-character of $c$. Let $\refmap(\sigma') = (l',T')$. It is obvious that $T=T'$, and $l'$ is obtained from $l$ by adding $b$ at some position between $a$ and $c$. Thus, we have $\refmap(\sigma) \specarrow{\alabelshort[{\tt addBetween}]{a,b,c}} \refmap(\sigma')$.

    \item[-] Assume $\refmap(\sigma) = (l,T)$. If $\sigma'$ is obtained from $\sigma$ by applying an effector $\delta$ produced by an operation $\alabelshort[{\tt remove}]{a}$. By the causal delivery assumption, we can see that a W-character $w_a$ of $a$ is already in $\sigma$, and then, $a \in l$. By the Wooki algorithm, we can see that $\sigma'$ is obtained from $\sigma$ by setting the flag of $w_a$ into $\mathit{false}$. Let $\refmap(\sigma) = (l',T')$. It is obvious that $l=l'$, and $T' = T \cup \{ a \}$. Thus, we have $\refmap(\sigma) \specarrow{\alabelshort[{\tt remove}]{a}} \refmap(\sigma')$.

    \item[-] Assume we do $\alabellong[{\tt read}]{}{s}{}$ on replica state $\sigma$. Assume $\sigma = w_1 \cdot \ldots \cdot w_n$, and for each $i$, $w_i = (id_i,v_i,degree_i,flag_i)$. Then, $s$ is the projection of $v_1 \cdot \ldots \cdot v_n$ into values with flag $\mathit{true}$. Assume $\refmap(\sigma) = (l,T)$. We can see that $l = v_1 \cdot \ldots \cdot v_n$ and $T = \{ v_i \vert flag_i = \mathit{false} \}$. Thus, we have $\refmap(\sigma) \specarrow{\alabellong[{\tt read}]{}{s}{}} \refmap(\sigma)$.
    \end{itemize}

\item[-] We have already prove that $\mathsf{ReplicaStates}$ is an inductive invariant and $\mathsf{Refinement}$ holds. Then, similarly as in \sectionautorefname \ref{subsec:time order of execution as linearization}, we can prove that $\mathsf{\CRDTLinshort{}}$ is an inductive invariant.
\end{itemize}

This completes the proof of this lemma. $\qed$
\end {proof}
}

\forget{
\section{Implementation, Sequential Specification, and Proof of Counter}
\label{sec:implementation, sequential specification, and proof of counter}

\subsection{Obtain Permutations by Swapping Adjacent and Concurrent Operations}
\label{subsec:obtain permutations by swapping adjacent and concurrent operations}

Given a history $(\alabelset,\avisord)$, let $seq(\alabelset)$ be a set such that for each $s \in seq(\alabelset)$, $s$ is a sequence of operations of $\alabelset$, $s$ is consistent with $\avisord$, each operation of $\alabelset$ occurs in $s$ and occurs only once. Given two sequences $s_1,s_2$, let $\mathit{diff}(s_1,s_2) = \{ (\alabel_1,\alabel_2) \vert$ the order of $\alabel_1$ and $\alabel_2$ in $s_1$ is different from that of $s_2 \}$. The following lemma states that, given a history $(\alabelset,\avisord)$ and two sequences $\aseqord_1,\aseqord_2 \in seq(\alabelset)$, we can obtain $\aseqord_2$ from $\aseqord_1$ by several time of swapping adjacent pair of concurrent operations.

\begin{lemma}
\label{lemma:given two sequence consistent with visibility order, one can be obtained from the other}
Given a history $(\alabelset,\avisord)$ and two sequences $\aseqord_1,\aseqord_2 \in seq(\alabelset)$. If $\aseqord_1 \neq \aseqord_2$, then we can obtain $\aseqord_2$ from $\aseqord_1$ by several time of swapping adjacent pair of concurrent operations.
\end{lemma}

\begin {proof}

First, we need to prove that, if $\mathit{diff}(\aseqord_1,\aseqord_2) \neq \emptyset$, then, there exists $(\alabel_1,\alabel_2) \in \mathit{diff}(\aseqord_1,\aseqord_2)$, such that $\alabel_1$ and $\alabel_2$ are concurrent, and $\alabel_1$ and $\alabel_2$ are adjacent in $\aseqord_1$.

We prove this by contradiction. Assume $\mathit{diff}(\aseqord_1,\aseqord_2) \neq \emptyset$, and for each $(\alabel_1,\alabel_2) \in \mathit{diff}(\aseqord_1$, $\aseqord_2)$, we have that either $\alabel_1$ and $\alabel_2$ are not concurrent, or $\alabel_1$ and $\alabel_2$ are not adjacent in $\aseqord_1$.

Since $\mathit{diff}(\aseqord_1,\aseqord_2) \neq \emptyset$, let $(\alabel,\alabel')$ be a element of $\mathit{diff}(\aseqord_1,\aseqord_2)$, and the distance of $\alabel$ and $\alabel'$ is minimal in $\{$ the distance between $\alabel_1$ and $\alabel_2$ in $\aseqord_1 \vert (\alabel_1,\alabel_2) \in \mathit{diff}(\aseqord_1,\aseqord_2) \}$. Let us prove that $\alabel$ and $\alabel'$ are adjacent in $\aseqord_1$ by contradiction: If there exists $\alabel''$ between $\alabel$ and $\alabel'$ in $\aseqord_1$. Assume that in $\aseqord_1$, $\alabel$ is before $\alabel''$, and $\alabel''$ is before $\alabel'$. By assumption, the order between $\alabel$ and $\alabel''$, and between $\alabel''$ and $\alabel'$ is the same in $\aseqord_1$ and in $\aseqord_2$. This implies that $\alabel$ is still before $\alabel'$ in $\aseqord_2$, which contradicts the fact that $(\alabel,\alabel') \in \mathit{diff}(\aseqord_1,\aseqord_2)$.

Since $\alabel$ and $\alabel'$ are adjacent in $\aseqord_1$ and $(\alabel,\alabel') \in \mathit{diff}(\aseqord_1,\aseqord_2)$, by assumption we know that $\alabel$ and $\alabel'$ are not concurrent. Or we can say, $(\alabel,\alabel') \in \avisord \vee (\alabel',\alabel) \in \avisord$. This contradicts that both $\aseqord_1$ and $\aseqord_2$ are consistent with visibility relation. This completes the proof of the first part.

Since $\aseqord_1 \neq \aseqord_2$, we have $\mathit{diff}(\aseqord_1,\aseqord_2) \neq \emptyset$, and then, as discussed above, there exists $(\alabel,\alabel') \in \mathit{diff}(\aseqord_1,\aseqord_2)$, such that $\alabel$ and $\alabel'$ are concurrent, and $\alabel$ and $\alabel'$ are adjacent in $\aseqord_1$. Let $\aseqord_3$ be a sequence that is obtained from $\aseqord_1$ by swapping $\alabel$ and $\alabel'$. It is easy to see that $\mathit{diff}(\aseqord_1,\aseqord_2) > \mathit{diff}(\aseqord_3,\aseqord_2)$. Therefore, by several times of above process, we finally obtain $\aseqord_2$ from $\aseqord_1$ by swapping pairs of adjacent and concurrent operations. This completes the proof of this lemma. $\qed$
\end {proof}

\subsection{Proof of Operation-Based Counter}
\label{subsec:proof of operation-based counter}

The following lemma states that the operation-based counter is \crdtlinearizable{} w.r.t. $\specCounter$.

\begin{lemma}
\label{lemma:operation-based counter is correct}
The operation-based counter is \crdtlinearizable{} w.r.t $\specCounter$.
\end{lemma}

\begin {proof}
Since every operation is appended to the linearization when it executes generator it clearly follows, the linearization order is consistent with visibility order. Then, by the causal delivery assumption, the order in which effectors are applied at a given replica is also consistent with the visibility order. Let $\alinord_1$ be the projection of linearization order into labels of effectors applied in a replica $\arep$, and $\alinord_2$ be the order of labels of effectors applied in replica $\arep$. By Lemma \ref{lemma:given two sequence consistent with visibility order, one can be obtained from the other}, $\alinord_2$ can be obtained from $\alinord_1$ by several time of swapping adjacent pair of concurrent operations. It is obvious that applying effectors of concurrent operations commute. Therefore, we know that $\mathsf{ReplicaStates}$ is an inductive invariant.

Let us prove that $\mathsf{Refinement}$ holds. We consider a refinement mapping $\refmap$ defined as the identity function. The effector produced by $\alabelshort[{\tt inc}]{}$ and the $\alabelshort[{\tt inc}]{}$ operation of the specification $\specCounter$ has exactly the same effect. The same holds for that of {\tt dec}. Applying the query {\tt read} on the replica state $\sigma$ should result in the same return value as applying the same query in the context of the specification on the same state $\refmap(\sigma)=\sigma$, which again holds trivially.

We have already prove that $\mathsf{ReplicaStates}$ is an inductive invariant and $\mathsf{Refinement}$ holds. Then, similarly as in \sectionautorefname \ref{subsec:time order of execution as linearization}, we can prove that $\mathsf{\CRDTLinshort{}}$ is an inductive invariant. $\qed$
\end {proof}

\section{Implementation and Proof of Operation-Based PN-Counter}
\label{sec:implementation and proof of operation-based PN-counter}

\subsection{PN-Counter Implementation}
\label{subsec:PN-counter implementation}

\cite{ShapiroPBZ11} introduces state-based \crdtimp{}, where an update occurs entirely at the source replica, and then propagates by transmitting the modified payload between replicas. When a replica receives an effector of modified payload, it calls method {\tt merge}, which takes the current payload and the payload in effector, and returns a new payload. \cite{ShapiroPBZ11} also shows how to obtain a operation-based \crdtimp{} from a state-based \crdtimp{}, and we draw it in Listing~\ref{lst:operation-based emulation of state-based object}. Since query operations do not change, we ignore query operations. To do a update operation $f(a)$, we compute the state-based update and perform merge in downstream. Here the precondition of downstream is empty because merge is always enabled.

\begin{minipage}[t]{1.0\linewidth}
\begin{lstlisting}[frame=top,caption={operation-based emulation of state-based object},
captionpos=b,label={lst:operation-based emulation of state-based object}]
  payload S ( the state-based payload )
  initial initial payload of S

  f(a)
    generator :
      precondition : precondition of f(a)
      let s = generator of f(a) in state-based
    effector(s) :
      S = merge(S,s)
\end{lstlisting}
\end{minipage}

$\\$ $\\$

\cite{ShapiroPBZ11} gives a state-based PN-counter implementation. As discussed above, we give its operation-based version in Listing~\ref{lst:operation-based PN-counter}. Here $\alabelshort[{\tt myRep}]{}$ is a function that returns current replica identifier, and $\alabelshort[{\tt reps}]{}$ is a function that returns the number of replicas in distributed system. This implementation assumes that the number of replicas are fixed. A payload $S$ contains a integer array $P$ and a integer array $N$. $P[\arep]$ represents the number of $\alabelshort[{\tt inc}]{}$ that happens on replica $\arep$ and their effector have been applied in current replica state, while $N[\arep]$ represents the number of $\alabelshort[{\tt dec}]{}$ that happens on replica $\arep$ and their effector have been applied in current replica state. Annotation1 is an annotation for effector of {\tt inc} or {\tt dec}.

\begin{figure}[t]
\begin{lstlisting}[frame=top,caption={Pseudo-code of operation-based PN-counter},
captionpos=b,label={lst:operation-based PN-counter}]
  payload ingeter[reps()] P, ingeter[reps()] N
  initial P = [@|$0,\ldots,0$|@], N = [@|$0,\ldots,0$|@]
  initial lin = @|$\epsilon$|@

  inc()
    generator :
      let g = myRep()
      let @|$P'$|@ = @|$P[ g: P[g] + 1]$|@
      let @|$N'$|@ = @|$N$|@
      //@ lin = lin@|$\,\cdot\,$|@inc()
    effector(@|$(P',N')$|@) :
      @|$\forall \arep$|@, @|$P[\arep]$|@ = @|$max( P[\arep],P'[\arep] )$|@
      @|$\forall \arep$|@, @|$N[\arep]$|@ = @|$max( N[\arep],N'[\arep] )$|@
      //@ Annotation1 : @|$\forall \arep, P'[\arep]$|@ = @|$\vert \{ \alabel = \alabelshort[{\tt inc}]{}, \alabel$|@ happens on replica @|$\arep,  (\alabel,\alabel') \in \avisord \vee \alabel = \alabel', where \ \alabel'$|@ is the operation that generates this @|$(P',N')$|@ effector@|$\} \vert$|@,
      @|$N'[\arep]$|@ = @|$\vert \{ \alabel = \alabelshort[{\tt dec}]{}, \alabel$|@ happens on replica @|$\arep,  (\alabel,\alabel') \in \avisord \vee \alabel = \alabel', where \ \alabel'$|@ is the operation that generates this @|$(P',N')$|@ effector@|$\} \vert$|@,

  dec()
    generator :
      let g = myRep()
      let @|$P'$|@ = @|$P$|@
      let @|$N'$|@ = @|$N[ g: N[g] + 1]$|@
      //@ lin = lin@|$\,\cdot\,$|@dec()
    effector(@|$(P',N')$|@) :
      @|$\forall \arep$|@, @|$P[\arep]$|@ = @|$max( P[\arep],P'[\arep] )$|@
      @|$\forall \arep$|@, @|$N[\arep]$|@ = @|$max( N[\arep],N'[\arep] )$|@
      //@ Annotation1 : @|$\forall \arep, P'[\arep]$|@ = @|$\vert \{ \alabel = \alabelshort[{\tt inc}]{}, \alabel$|@ happens on replica @|$\arep,  (\alabel,\alabel') \in \avisord \vee \alabel = \alabel', where \ \alabel'$|@ is the operation that generates this @|$(P',N')$|@ effector@|$\} \vert$|@,
      @|$N'[\arep]$|@ = @|$\vert \{ \alabel = \alabelshort[{\tt dec}]{}, \alabel$|@ happens on replica @|$\arep,  (\alabel,\alabel') \in \avisord \vee \alabel = \alabel', where \ \alabel'$|@ is the operation that generates this @|$(P',N')$|@ effector@|$\} \vert$|@,


  read() :
    let c  = @|$\Sigma_{\arep} P[\arep]$|@ - @|$\Sigma_{\arep} N[\arep]$|@
    //@ lin = lin@|$\,\cdot\,$|@(read@|$\Rightarrow$|@c)
    return c
\end{lstlisting}
\end{figure}

\subsection{Proof of Operation-Based PN-Counter}
\label{subsec:proof of operation-based PN-counter}

We say $P_1<P_2$, if for each replica $\arep'$, we have $P_1[\arep'] \leq P_2[\arep']$, and there exists replica $\arep''$, such that $P_1[\arep''] < P_2 [\arep'']$. We say $N_1<N_2$, if for each replica $\arep'$, we have $N_1[\arep'] \leq N_2[\arep']$, and there exists replica $\arep''$, such that $N_1[\arep''] < N_2 [\arep'']$. We say $(P_1,N_1) < (P_2,N_2)$, if $(P_1 < P_2 \wedge N_1 \leq N_2) \vee (P_1 \leq P_2 \wedge N_1 < N_2)$. The following lemma states that the operation-based PN-counter is \crdtlinearizable{} w.r.t. $\specCounter$.

\begin{lemma}
\label{lemma:operation-based PN-counter is correct}
The operation-based PN-counter is \crdtlinearizable{} w.r.t $\specCounter$.
\end{lemma}

\begin {proof}

Let us propose Annotation2, which is an annotation of local configuration and obviously holds in the initial global configuration.

\begin{itemize}
\setlength{\itemsep}{0.5pt}
\item[-] Annotation2: Let $(\alabelset,(P,N))$ be the local configuration of a replica. Then, for each replica $\arep$, $P[\arep]$ =  $\vert \{ \alabel \vert \alabel = \alabelshort[{\tt inc}]{}, \alabel$ happens on replica $\arep, \alabel \in \alabelset \} \vert$,
    $N[\arep]$ =  $\vert \{ \alabel \vert \alabel = \alabelshort[{\tt dec}]{}, \alabel$ happens on replica $\arep, \alabel \in \alabelset \} \vert$.
\end{itemize}

Let us prove that the Annotation1 and Annotation2 are inductive invariant.

We prove by induction on executions. Obvious they hold in $\aglobalstate_0$. Assume they hold along the execution $\aglobalstate_0 \xrightarrow{}^* \aglobalstate$ and there is a new transition $\aglobalstate \xrightarrow{} \aglobalstate'$. We need to prove that they still hold in $\aglobalstate'$. We only need to consider when a replica do generator or effector of {\tt inc} and {\tt dec}:

\begin{itemize}
\setlength{\itemsep}{0.5pt}
\item[-] For case when replica $\arep$ do generator of a {\tt inc} operation $\alabel$ and then apply its effector: 
    Let $Lc = (\alabelset,(P,N))$ and $Lc' = (\alabelset',(P',N'))$ be the local configuration of replica $\arep$ of $\aglobalstate$ and $\aglobalstate'$, respectively. Obviously, the effector of $\alabel$ is $(P',N')$.

    It is easy to see that $P'=P[r:P[r]+1]$, $N'=N$, and $\alabelset' = \alabelset \cup \{ \alabel \}$. By Annotation2 of the local configuration $Lc$, we can see that Annotation1 for effector $(P',N')$ holds, and Annotation2 for the local configuration $Lc'$ holds.

\item[-] For case when replica $\arep$ apply effector $(P_{\alabel},N_{\alabel})$ of a {\tt inc} operation $\alabel$ originated in a different replica: We only need to prove Annotation2. Let $Lc = (\alabelset,(P,N))$ and $Lc' = (\alabelset',(P',N'))$ be the local configuration of replica $\arep$ of $\aglobalstate$ and $\aglobalstate'$, respectively. Assume $\alabel$ happens on replica $\arep_{\alabel}$.

    By the causal delivery assumption, for each operation $\alabel''$, such that $\alabel''$ is visible to $\alabel$, we can see that 
    $\alabel'' \in \alabelset$. By Annotation1 of the effector $(P_{\alabel},N_{\alabel})$ and Annotation2 of the local configuration $Lc$, we can see that, for each replica $\arep' \neq \arep_{\alabel}$, we have $P_{\alabel}[\arep'] \leq P[\arep']$ and $N_{\alabel}[\arep'] \leq N[\arep']$.

    By the causal delivery assumption, for each operation $\alabel'$, such that $\alabel$ is visible to $\alabel'$, we can see that 
    $\alabel' \notin \alabelset$. Especially, this holds for operations happen on replica $\arep_{\alabel}$. Since the effector of $\alabel$ has not been applied in $Lc$ yet, it is easy to see that $\alabel \notin \alabelset$. By Annotation1 of the effector $(P_{\alabel},N_{\alabel})$ and Annotation2 of the local configuration $Lc$, we can see that, $P_{\alabel}[\arep_{\alabel}] = P[\arep_{\alabel}]+1$, $N_{\alabel}[\arep_{\alabel}] = N[\arep_{\alabel}]$.

    Therefore, we can see that $P' = P[\arep_{\alabel}: P[\arep_{\alabel}]+1]$ and $N' = N$. Thus, Annotation2 for the local configuration $Lc'$ holds.

\item[-] The case of {\tt dec} can be similarly proved.
\end{itemize}

This completes the proof of Annotation1 and Annotation2.

Let us propose $fact1$:

$fact1$: Assume $\alinord = \alabel''_1 \cdot \ldots \cdot \alabel''_n$, $(P_k,N_k)$ is obtained from the initial replica state by applying effectors $(P''_1,N''_1),\ldots,(P''_k,N''_k)$, where for each $i$, $(P''_i,N''_i)$ is the effectors of $\alabel''_i$, and $k$ is a natural number such that $1 \leq k \leq n$. Then, for each replica $\arep$, $P_k[\arep]$ =  $\vert \{ \alabel \vert \alabel = \alabelshort[{\tt inc}]{}, \alabel$ happens on replica $\arep$, and $\alabel \in \{ \alabel''_1,\ldots,\alabel''_k \} \} \vert$, $N_k[\arep]$ =  $\vert \{ \alabel \vert \alabel = \alabelshort[{\tt dec}]{}, \alabel$ happens on replica $\arep$, and $\alabel \in \{ \alabel''_1,\ldots,\alabel''_k \} \} \vert$.

We prove $fact1$ by induction on $\alinord$. Obviously $fact1$ holds initially.

Assume that $fact1$ holds for $\alabel''_1 \cdot \ldots \cdot \alabel''_m$, and assume that $\alabel''_{m+1} = \alabelshort[{\tt inc}]{}$, and $\alabel''_{m+1}$ happens on replica $\arep_{m+1}$. We need to prove that $fact1$ holds for $\alabel''_1 \cdot \ldots \cdot \alabel''_{m+1}$.

Assume that $(P,N)$ is obtained from the initial replica state by applying effectors $(P''_1,N''_1),\ldots,(P''_m,N''_m)$, and $(P',N')$ is obtained from $(P,N)$ by applying effectors $(P''_{m+1},N''_{m+1})$.

Since $\alinord$ is consistent with the visibility relation, for each operation $\alabel_1$ such that $(\alabel_1,\alabel''_{m+1}) \in \avisord$, we can see that $\alabel_1 \in \{ \alabel''_1,\dots,\alabel''_m \}$. By Annotation1 of the effector $(P''_{m+1},N''_{m+1})$ and induction assumption, we can see that, for each replica $\arep' \neq \arep_{m+1}$, we have $P''_{m+1}[\arep'] \leq P[\arep']$ and $N''_{m+1}[\arep'] \leq N[\arep']$.

Obviously, $\alabel''_{m+1} \notin \{ \alabel''_1,\ldots,\alabel''_m \}$. Since $\alinord$ is consistent with the visibility relation, for each operation $\alabel_2$ such that $(\alabel''_{m+1},\alabel_2) \in \avisord$, $l_2$ is after $\alabel''_{m+1}$ in $\alinord$, and thus, $\alabel_2 \notin \{ \alabel''_1,\ldots,\alabel''_m \}$. Especially, this holds for operations happen on replica $\arep_{m+1}$. By Annotation1 of the effector $(P''_{m+1},N''_{m+1})$ and induction assumption, we can see that, $P''_{m+1}[\arep_{m+1}] = P[\arep_{m+}]+1$ and $N''_{m+1}[\arep_{m+1}] = N[\arep_{m+1}]$.

Therefore, we can see that, $P' = P[\arep_{m+1}: P[\arep_{m+1}]+1]$ and $N' = N$. 
Thus, $fact1$ holds for $\alabel''_1 \cdot \ldots \cdot \alabel''_{m+1}$. This completes the proof of $fact1$.

Then, our proof of the lemma proceed as follows:

\begin{itemize}
\setlength{\itemsep}{0.5pt}
\item[-] We need to prove that $\mathsf{ReplicaStates}$ are inductive invariant.

Since every operation is appended to the linearization when it executes generator it clearly follows, the linearization order is consistent with visibility order. Then, by the causal delivery assumption, the order in which effectors are applied at a given replica is also consistent with the visibility order. Let $\alinord_1$ be the projection of linearization order into labels of effectors applied in a replica $\arep$, and $\alinord_2$ be the order of labels of effectors applied in replica $\arep$. By Lemma \ref{lemma:given two sequence consistent with visibility order, one can be obtained from the other}, $\alinord_2$ can be obtained from $\alinord_1$ by several time of swapping adjacent pair of concurrent operations. It is obvious that applying effectors of concurrent operations commute. Therefore, we know that $\mathsf{ReplicaStates}$ is an inductive invariant.

\item[-] Let us prove that $\mathsf{Refinement}$ holds. We consider a refinement mapping $\refmap$ defined as follows: $\refmap(P,N) = \Sigma_{\arep} P[\arep] - \Sigma_{\arep} N[\arep]$.

    \begin{itemize}
    \setlength{\itemsep}{0.5pt}
    \item[-] For the effector $(P_{\alabel},N_{\alabel})$ produced by $\alabel = \alabelshort[{\tt inc}]{}$ and the $\alabelshort[{\tt inc}]{}$ operation of the specification $\specCounter$:

        Assume we obtain replica state $S'=(P',N')$ from $S=(P,N)$ by applying effector $(P_{\alabel},N_{\alabel})$; while in sequential specification we have $\abstate \xrightarrow{\alabelshort[{\tt inc}]{}} \abstate+1$, and $\refmap((P,N)) = \abstate$. We need to prove that $\refmap((P',N')) = \abstate+1$.

        Assume $\alinord = \alabel''_1 \cdot \ldots \cdot \alabel''_n$. Here additionally, we assume that $S$ is obtained from the initial replica state by applying effectors of $\alabel''_1,\ldots,\alabel''_m$ for a natural number $m$ such that $1 \leq m \leq n$, and we assume that $\alabel = \alabel''_{m+1}$.

        According to $fact1$, it is easy to see that $P' = P[\arep_{\alabel}: P[\arep_{\alabel}]+1]$ and $N' = N$. Therefore, $\refmap((P',N')) = \abstate+1$.

    \item[-] The case of {\tt dec} can be similarly proved.

    \item[-] Applying the query {\tt read} on the replica state $(P,N)$ should result in the same return value as applying the same query in the context of the specification on the abstract state $\abstate = \refmap(S) = \Sigma_{\arep} P[\arep] - \Sigma_{\arep} N[\arep]$, which again holds trivially.
    \end{itemize}

\item[-] We have already prove that $\mathsf{ReplicaStates}$ is an inductive invariant and $\mathsf{Refinement}$ holds. Then, similarly as in \sectionautorefname \ref{subsec:time order of execution as linearization}, we can prove that $\mathsf{\CRDTLinshort{}}$ is an inductive invariant.
\end{itemize}

This completes the proof of this lemma. $\qed$
\end {proof}

\section{Implementation, Sequential Specification, and Proof of Last-Writer-Win Register (LWW-Register)}
\label{sec:implementation, sequential specification, and proof of last-writer-win register (LWW-register)}

\subsection{Proof of LWW-Register}
\label{subsec:proof of LWW-register}

The following lemma states that the operation-based LWW-register is \crdtlinearizable{} w.r.t. $\specReg$.

\begin{lemma}
\label{lemma:operation-based LWW-register is correct}
The operation-based LWW-register is \crdtlinearizable{} w.r.t $\specReg$.
\end{lemma}

\begin {proof}

Since the order between the timestamps generated by ${\tt write}$ operations is consistent with the visibility relation, it easily follows that the linearization order $\alinord$ is consistent with the visibility relation. 
By the causal delivery assumption, the order in which effectors are applied at a given replica is also consistent with the visibility order. Let $\alinord_1$ be the projection of linearization order into labels of effectors applied in a replica $\arep$, and $\alinord_2$ be the order of labels of effectors applied in replica $\arep$. By Lemma \ref{lemma:given two sequence consistent with visibility order, one can be obtained from the other}, $\alinord_2$ can be obtained from $\alinord_1$ by several time of swapping adjacent pair of concurrent operations. It is obvious that applying effectors of concurrent operations commute. Therefore, we know that $\mathsf{ReplicaStates}$ is an inductive invariant.

Concerning the proof of $\mathsf{Refinement}$, we consider a refinement mapping $\refmap$ defined as follows: $\refmap(x,ts) = x$.

\begin{itemize}
\setlength{\itemsep}{0.5pt}
\item[-] Concerning effectors of $\alabellongind[{\tt write}]{a}{}{\ats_a}{}$ operations, we show that they are simulated by the corresponding specification operation $\alabelshort[{\tt write}]{a}$ only when the timestamp $\ats_a$ is strictly greater than all the timestamps of operations whose effector have been applied in the replica state. This is sufficient because, by $\mathsf{ReplicaStates}$, every replica state is obtained by applying effectors according to the linearization of their corresponding operations, and the linearization order is consistent with the timestamp order.

    Assume that $\alinord = \alabel''_1 \cdot \ldots \cdot \alabel''_m \cdot \alabel''_{m+1} \ldots$, $\alabel''_{m+1} = \alabellongind[{\tt write}]{a}{}{\ats_a}{}$. We already know that, the timestamp of $\alabel''_1,\ldots,\alabel''_m$ is less than $\ats_a$. Assume $S$ is obtained by applying the effectors of $\alabel''_1,\ldots,\alabel''_m$, and assume that $S = (x,\ats_x)$. It is easy to see that, $\ats_x < \ats_a$.

    Assume we obtain replica state $S'$ from $S$ by applying effector $(a,\ats_a)$; while in sequential specification we have $\abstate \xrightarrow{\alabelshort[{\tt write}]{a}} \abstate'$, and $\refmap(x,\ats_x) = \abstate = x$. We can see that $S' = (a,\ats_a)$ and $\abstate' = a$, and then, $\refmap(S') = \abstate'$.

\item[-] Applying the query {\tt read} on the replica state $(x,\ats_x)$ should result in the same return value as applying the same query in the context of the specification on the same state $\abstate = \refmap(x,\ats_x) = x$, which holds trivially.
\end{itemize}

We have already prove that $\mathsf{ReplicaStates}$ is an inductive invariant and $\mathsf{Refinement}$ holds. Then, similarly as in \sectionautorefname \ref{subsec:time-stamp order as linearizabtion}, we can prove that $\mathsf{\CRDTLinshort{}}$ is an inductive invariant. $\qed$
\end {proof}

\section{Implementation, Sequential Specification, and Proof of Operation-Based Multi-Value Register}
\label{sec:implementation, sequential specification, and proof of opeation-based multi-value register}

\subsection{Multi-Value Register Implementation}
\label{subsec:multi-value register implementation}

\cite{ShapiroPBZ11} gives a state-based multi-value register implementation. Similarly as previous subsection, we give its operation-based version in Listing~\ref{lst:operation-based multi-value register}. Here $\alabelshort[{\tt myRep}]{}$ is a function that returns current replica identifier. This implementation assumes that the number of replicas are fixed. A payload $S$ is a set of $(a,V)$ pairs, where $a$ is a value and $V$ is a vector called version vector. The size of $V$ is the number of replicas in distributed system. 
Given version vectors $V$ and $V'$, we say that $V > V'$, if for each replica $\arep$, we have $V[\arep] \geq V'[\arep]$, and there exists replica $\arep'$, such that $V[\arep'] > V'[\arep']$. Annotation1 is an annotation for effector of $write(a)$.

\begin{figure}[t]
\begin{lstlisting}[frame=top,caption={Pseudo-code of operation-based multi-value register},
captionpos=b,label={lst:operation-based multi-value register}]
  payload Set S
  initial S = @|$\emptyset$|@
  initial lin = @|$\epsilon$|@

  write(a) :
    generator :
      let g = myRep()
      let @|$\mathcal{V}$|@ = @|$\{ V \vert \exists x, (x,V) \in S \}$|@
      let @|$V'$|@ = @|$[ max_{V \in \mathcal{V}} V[j]]_{j \neq g}$|@
      let @|$V'[g]$|@ = @|$max_{V \in \mathcal{V}} V[g]$|@ + 1
      //@ lin = lin@|$\,\cdot\,( $|@readIds()@|$\,\Rightarrow\,$|@S@|$ )\,\cdot\,$|@write(a,@|$V'$|@,S)
    effector(@|$S'=\{(a,V')\}$|@) :
      let A = @|$\{ (a_1,V_1) \in S \vert \forall (a_2,V_2) \in S', \neg (V_2 > V_1) \}$|@
      let B = @|$\{ (a_2,V_2) \in S' \vert \forall (a_1,V_1) \in S, \neg (V_1 > V_2) \}$|@
      S = A @|$\cup$|@ B
      //@ Annotation1 : @|$\forall \arep, V'[\arep]$|@ = @|$\vert \{ \alabel = \alabelshort[{\tt write}]{\_}, \alabel$|@ happens on replica @|$\arep,  (\alabel,\alabel') \in \avisord \vee \alabel = \alabel' \} \vert$|@, where @|$\alabel'$|@ is the operation that generates this effector @|$\{(a,V')\}$|@

  read() :
    let @|$S_1$|@  = {a @|$\vert \exists$|@ V. (a,V) @|$\in$|@ S}
    //@ lin = lin@|$\,\cdot\,( $|@read()@|$\,\Rightarrow\,S_1)$|@
    return @|$S_1$|@
\end{lstlisting}
\end{figure}

\subsection{The Sequential Specification of Multi-Value Register}
\label{subsec:the sequential specification of multi-value register}

The query-update rewriting of multi-value register is as follows: $\gamma( \alabelshort[{\tt write}]{a}) = ( \alabellong[{\tt readIds}]{}{S}{}, \alabelshort[{\tt write}]{a,id,S})$.

Each abstract state $\abstate$ is a set of tuples $(a,id)$, where $a$ is a data and $id$ is an identifier. The sequential specification $\specMVReg$ of multi-value register is given by the transitions:

\[
  \begin{array}{rcl}
    \abstate
    & \specarrow{\alabellong[\mathtt{readIds}]{}{\abstate}{}}
    & \abstate\\
    \big(\ \abstate\ |\ \mathtt{id}\ \text{does not occur in } \abstate\ \big)
             & \specarrow{ \alabelshort[{\tt write}]{a,id,S} }
    & \abstate \setminus S \cup \{ (a,\mathtt{id}) \}\\
    \abstate
    & \specarrow{\alabellong[\mathtt{read}]{}{ S }{}}
    & \abstate\
      \begin{array}{c}
        [\text{with}\ S = \{ a\ \vert\ \exists\ \mathtt{id}, (a,\mathtt{id}) \in \abstate \}]
      \end{array}
  \end{array}
\]

Method $\alabellong[{\tt readIds}]{}{S}{}$ returns the abstract state. Method $\alabelshort[{\tt write}]{a,id,S}$ removes $S$ from the abstract state and puts $\{ (a,id) \}$ into the abstract state. Method $\alabellong[{\tt read}]{}{S'}{}$ returns the value of multi-value register.

\subsection{Proof of Operation-Based Multi-Value Register}
\label{subsec:proof of operation-based multi-value register}

The following lemma states that the operation-based multi-value register is \crdtlinearizable{} w.r.t. $\specMVReg$. Since the content of an effector of a {\tt write} operation is of the form $\{ (a,V) \}$, we can write $(a,V)$ instead for simplicity.

\begin{lemma}
\label{lemma:multi-value register is correct}
The operation-based multi-value register is \crdtlinearizable{} w.r.t $\specMVReg$.
\end{lemma}

\begin {proof}

Let us propose $fact1$:

\noindent $fact1$: Given two write operation $\alabel_1$ and $\alabel_2$, assume $(a_1,V_1)$ and $(a_2,V_2)$ is the effector for $\alabel_1$ and $\alabel_2$, respectively, and assume that $(\alabel_1,\alabel_2) \in \avisord$. Then, $V_1 < V_2$.

We prove $fact1$ as follows: Assume $\alabel_2$ happens on replica $\arep$. Since $(\alabel_1,\alabel_2) \in \avisord$, 
the effector $(a_1,V_1)$ has bee applied in replica $\arep$ before $(a_2,V_2)$ is generated. According to the implementations, we can see that, $V_1 < V_2$.

Let us propose $fact2$:

\noindent $fact2$: Given two write operation $\alabel_1$ and $\alabel_2$, assume $(a_1,V_1)$ and $(a_2,V_2)$ is the effector for $\alabel_1$ and $\alabel_2$, respectively, and assume that $\alabel_1$ and $\alabel_2$ are concurrent. Then, $\neg (V_1 < V_2 \vee V_2 < V_1)$.

Let us propose Annotation2, which is an annotation of local configuration and obviously holds in the initial global configuration.

\noindent Annotation2: Let $(\alabelset,S)$ be the local configuration of a replica. Then, $S$ =  $\{ (a,V) \vert \exists \alabel, \alabel$ generates the effector $(a,V), \alabel$ is maximal w.r.t $\avisord$ among {\tt write} operations in $\alabelset \}$.

Let us prove that the Annotation1, Annotation2 and $fact2$ are inductive invariant.

We prove by induction on executions. Obvious they hold in $\aglobalstate_0$. Assume they hold along the execution $\aglobalstate_0 \xrightarrow{}^* \aglobalstate$ and there is a new transition $\aglobalstate \xrightarrow{} \aglobalstate'$. We need to prove that they still hold in $\aglobalstate'$. We only need to consider when a replica do generator or effector of {\tt write}:

\begin{itemize}
\setlength{\itemsep}{0.5pt}
\item[-] For case when replica $\arep$ do generator of a {\tt write} operation $\alabel$ and then apply its effector: let $(a,V')$ be the effector of $\alabel$. 
    Let $Lc = (\alabelset,S)$ and $Lc' = (\alabelset',S')$ be the local configuration of replica $\arep$ of $\aglobalstate$ and $\aglobalstate'$, respectively. Obviously $S' = \{ (a,V') \}$ and $\alabelset' = \alabelset \cup \{ \alabel \}$.

    Since $\alabel$ is greater than any operations of $\alabelset$
    w.r.t the visibility relation, Annotation2 still holds for the local configuration $Lc'$.

    Let $\mathcal{V} = \{ V \vert (\_,V) \in S \}$ be the set of version vectors of $S$. By Annotation2 of local configuration $Lc$, Annotation1 of effectors of $S$, and the transitivity of the visibility relation (due to causal delivery),
    we can see that, for each replica $\arep' \neq \arep$, $max_{V \in \mathcal{V}} V[\arep']$ is the number of {\tt write} operations happen on replica $\arep'$ and is in $\alabelset$, 
    and $max_{V \in \mathcal{V}} V[\arep]$ is the number of {\tt write} operations happen on replica $\arep$ and is in $\alabelset$. 
    Then, for each replica $\arep' \neq \arep$, $V'[\arep'] = max_{V \in \mathcal{V}} V[\arep']$, and $V'[\arep] = max_{V \in \mathcal{V}} V[\arep] +1$. Therefore, Annotation1 still holds for the effector $(a,V')$. 

    We prove $fact2$ by contradiction. Assume there exists an operation $\alabel_2$ that is generated during $\aglobalstate_0 \xrightarrow{}^* \aglobalstate'$, such that $\alabel$ and $\alabel_2$ are concurrent and $V' < V_2$, where $(a_2,V_2)$ is the effector of $\alabel_2$. We can see that $V'[\arep] \leq V_2[\arep]$. By Annotation1 of $(a,V')$ and $(a_2,V_2)$, and the transitivity of the visibility relation (due to causal delivery), we can see that $(\alabel,\alabel_2) \in \avisord$, which contradicts the assumption that $\alabel$ and $\alabel_2$ are concurrent. Similarly, to deal with the case when $V_2 < V'$, assume that $\alabel_2$ happens on replica $\arep_2$. We can see that $V_2[\arep_2] \leq V'[\arep_2]$. By Annotation1 of $(a,V')$ and $(a_2,V_2)$, and the transitivity of the visibility relation (due to causal delivery), we can see that $(\alabel_2,\alabel) \in \avisord$, which contradicts the assumption that $\alabel$ and $\alabel_2$ are concurrent. Therefore, $fact2$ still holds during $\aglobalstate_0 \xrightarrow{}^* \aglobalstate'$.

\item[-] For case when replica $\arep$ apply effector $(a,V')$ of a {\tt write} operation $\alabel$ originated in a different replica: Let $Lc = (\alabelset,S)$ and $Lc' = (\alabelset',S')$ be the local configuration of replica $\arep$ of $\aglobalstate$ and $\aglobalstate'$, respectively. It is easy to see that $\alabelset' = \alabelset \cup \{ \alabel \}$. 

    Given $(b,V_b) \in S$, and assume operation $\alabel_b$ generates $(b,V_b)$. By Annotation2 of the local configuration $Lc$ and the causal delivery assumption, we can see that, $(\alabel,\alabel_b) \notin \avisord$. By $fact1$ and $fact2$, we can see that, $\neg (V' < V_b)$. Moreover, by $fact1$ and $fact2$, we have that, $V_b < V'$, if and only if $(\alabel_b,\alabel) \in \avisord$. Therefore, we have $S' = S \setminus \{ (c,V) \vert (c,V) \in S \wedge V < V' \} \cup \{ (a,V') \}$.

    By the causal delivery assumption, for each operation $\alabel'$ such that $(\alabel,\alabel') \in \avisord$, the effector of $\alabel$ has not been applied in $S'$, and then, the effector of $\alabel$ has not been applied in $S$. Therefore, Annotation2 still holds in local configuration $Lc'$. 

\end{itemize}

This completes the proof of Annotation1 and Annotation2.

Let us propose $fact3$:

\noindent $fact3$: Assume $\alinord = \alabel''_1 \cdot \ldots \cdot \alabel''_n$, $S_k$ is obtained from the initial replica state by applying effectors $(a''_1,V''_1),\ldots,(a''_k,V''_k)$, where for each $i$, $(a''_i,V''_i)$ is the effectors of $\alabel''_i$, and $k$ is a natural number such that $1 \leq k \leq n$. Then, $S_k = \{ (b,V) \vert \exists \alabel, \alabel$ generates $(b,V), \alabel$ is maximal w.r.t the visibility relation among $ \{ \alabel''_1,\ldots,\alabel''_k \} \}$.

We prove $fact3$ by induction on $\alinord$. Obviously $fact3$ holds initially.

Assume that $fact3$ holds for $\alabel''_1 \cdot \ldots \cdot \alabel''_m$, and assume that $\alabel''_{m+1}$ is a {\tt write} operation, and $\alabel''_{m+1}$ happens on replica $\arep_{m+1}$. We need to prove that $fact3$ holds for $\alabel''_1 \cdot \ldots \cdot \alabel''_{m+1}$.

Assume that $S$ is obtained from the initial replica state by applying effectors $(a''_1,V''_1),\ldots,(a''_m,V''_m)$, and $S'$ is obtained from $S$ by applying effectors $(a''_{m+1}, V''_{m+1})$.

Since $\alinord$ is consistent with the visibility relation, for each $i \leq m$, we can see that, $(\alabel''_{m+1},\alabel''_i) \notin \avisord$. By $fact1$ and $fact2$, we can see that, $\neg (V''_{m+1} < V''_i)$. Therefore, $S' = S \setminus S_1 \cup \{ (a''_{m+1},V''_{m+1}) \}$, where $S_1 = \{ (b,V) \vert (b,V) \in S, V < V''_{m+1} \}$.

According to $fact1$, $fact2$ and the induction assumption, for each $(b,V) \in S$ and assume $\alabel_1$ generates $(b,V)$, it is easy to prove that, $(b,V) \in S_1$, if and only if, $(\alabel_1,\alabel''_{m+1}) \in \avisord$. Therefore, $fact3$ holds for $\alabel''_1 \cdot \ldots \cdot \alabel''_{m+1}$. This completes the proof of $fact3$.

Then, our proof of the lemma proceed as follows:

\begin{itemize}
\setlength{\itemsep}{0.5pt}
\item[-] We need to prove that $\mathsf{ReplicaStates}$ is an inductive invariant.

Since every operation is appended to the linearization when it executes generator it clearly follows, the linearization order is consistent with visibility order. Then, by the causal delivery assumption, the order in which effectors are applied at a given replica is also consistent with the visibility order. Let $\alinord_1$ be the projection of linearization order into labels of effectors applied in a replica $\arep$, and $\alinord_2$ be the order of labels of effectors applied in replica $\arep$. By Lemma \ref{lemma:given two sequence consistent with visibility order, one can be obtained from the other}, $\alinord_2$ can be obtained from $\alinord_1$ by several time of swapping adjacent pair of concurrent operations.

Let us prove that applying effectors of concurrent operations commute, and we only need to consider the case of two concurrent {\tt write} operations. Let $(a_1,V_1)$ and $(a_2,V_2)$ be the effector of two concurrent {\tt write} operations $\alabel_1$ and $\alabel_2$. Given a replica state $S$, assume we obtained $S'$ from $S$ by applying $(a_1,V_1)$ and then applying $(a_2,V_2)$, and assume we obtained $S''$ from $S$ by applying $(a_2,V_2)$ and then applying $(a_1,V_1)$. Then, by the implementation, it is easy to see that, $S' = S'' = S \cup \{ (a_1,V_1),(a_2,V_2) \} \setminus S_1$. Here $(b,V) \in S_1$, if $(b,V) \in S \cup \{ (a_1,V_1),(a_2,V_2) \}$, and one of the following cases holds:

    \begin{itemize}
    \setlength{\itemsep}{0.5pt}
    \item[-] $(b,V) \in S$, and $(V < V_1) \vee (V < V_2)$,

    \item[-] $(b,V) = (a_1,V_1)$, $V_1 < V_2$, or there exists $(c,V_c) \in S$, such that $V_1 < V_c$,

    \item[-] $(b,V) = (a_2,V_2)$, $V_2 < V_1$, or there exists $(c,V_c) \in S$, such that $V_2 < V_c$.
    \end{itemize}

\item[-] Let us prove that $\mathsf{Refinement}$ holds. We consider a refinement mapping $\refmap$ defined as the identity function.

    \begin{itemize}
    \setlength{\itemsep}{0.5pt}
    \item[-] For the effector $(a,V_a)$ produced by $\alabel = \alabelshort[{\tt write}]{a,V_a,S_1}$ and the $\alabelshort[{\tt write}]{a,V_a,S_1}$ operation of the specification $\specMVReg$:

    Assume we obtain replica state $S'$ from $S$ by applying effector $(a,V_a)$; while in sequential specification we have $\abstate \xrightarrow{\alabelshort[{\tt write}]{a,V_a,S_1}} \abstate'$, and $\refmap(S) = \abstate$, or we can say, $S = \abstate$. We need to prove that $S' = \abstate'$.

    Assume $\alinord = \alabel''_1 \cdot \ldots \cdot \alabel''_n$. Here additionally, we assume that $S$ is obtained from the initial replica state by applying effectors of $\alabel''_1,\ldots,\alabel''_m$, and we assume that $\alabel = \alabel''_{m+1}$.

    Given $(b,V_b) \in S$ and assume that operation $\alabel_b$ generates $(b,V_b)$. Since $\alinord$ is consistent with the visibility relation and $fact3$, we can see that, $(\alabel,\alabel_b) \notin \avisord$. By $fact1$ and $fact2$, we can see that, $\neg (V_a < V_b)$. Therefore, we can see that $S' = S \setminus S_2 \cup \{ (a,V_a) \}$, where $S_2 = \{ (b,V_b) \vert (b,V_b) \in S \wedge V_b < V_a \}$.

    Assume that in the execution, before we do operation $\alabel$, the local configuration of replica $\arep_{\alabel}$ is $Lc = (\alabelset,S_1)$, where $\arep_{\alabel}$ is the replica where $\alabel$ happens on. 
    By Annotation2 of the local configuration $Lc$, we can see that, $S_1 = \{ (b,V_b) \vert \exists \alabel', \alabel'$ generates the effector $(b,V_b), \alabel'$ is maximal among {\tt write} operations visible to $l \}$. We can see that $\abstate' = \abstate \setminus S_1 \cup \{ (a,V_a) \}$.

    Let us prove $S' = \abstate'$ by contradiction.

        \begin{itemize}
        \setlength{\itemsep}{0.5pt}
        \item[-] If there exists item $(c,V_c)$ in $\abstate'$ but not in $S'$: we can see that $(c,V_c) \in S$, $(c,V_c) \notin S_1$, and $(c,V_c) \in S_2$.

        Let $\alabel_c$ be the operation that generates effector $(c,V_c)$. Since $(c,V_c) \in S_2$, by $fact1$ and $fact2$, we can see that, $(\alabel_c,\alabel) \in \avisord$. Since $(c,V_c) \notin S_1$, we know that there exists a {\tt write} operation $\alabel_d$, such that $(\alabel_c,\alabel_d)$, $(\alabel_d,\alabel) \in \avisord$. Assume the effector of $\alabel_d$ is $(a_d,V_d)$. By $fact1$ and $fact2$, we can see that, $V_c < V_d$.

        Since $(\alabel_d,\alabel) \in \avisord$ and $\alinord$ is consistent with the visibility relation, we can see that, $\alabel_d = \alabel''_i$ for some $i \leq m$. Since $(c,V_c) \in S$ and $V_c < V_d$, we can see that the effector of $\alabel_d$ has not been applied yet in $S$. This contradicts the assumption that, $S$ is obtained from the initial replica state by applying effectors of $\alabel''_1,\ldots,\alabel''_m$ and $\alabel_d = \alabel''_i$ for some $i \leq m$.

        \item[-] If there exists item $(c,V_c)$ in $S'$ but not in $\abstate'$: we can see that $(c,V_c) \in S$, $(c,V_c) \notin S_2$, and $(c,V_c) \in S_1$.

        Let $\alabel_c$ be the operation that generates effector $(c,V_c)$. Since $(c,V_c) \notin S_2$, we know that $\neg(V_c < V_a)$. Since $(c,V_c) \in S_1$, we know that $(\alabel_c,\alabel) \in \avisord$. This contradicts $fact1$ and $fact2$.
        \end{itemize}

    Therefore, we know that $S' = \abstate'$, and the case of the effector $(a,V_a)$ and the $\alabelshort[{\tt write}]{a,V_a,S_1}$ operation of the specification $\specMVReg$ holds.

    \item[-] Whenever the query-update $\alabelshort[{\tt write}]{a}$ executes generator on a state $S$, the query $\alabellong[{\tt readIds}]{}{R}{}$ introduced by the query-update rewriting should be enabled in the specification state $\refmap(S)=\abstate=S$. This clearly holds because the the computation of $R$ in {\tt generator} returns $S$, and the result of $\alabelshort[readIds]{}$ in the specification state $\abstate = S$ also returns $S$.

    \item[-] Applying the query $\alabelshort[read]{}$ on the replica state $S$ should result in the same return value as applying the same query in the context of the specification on the same state $\abstate = \refmap(S)$, which again holds trivially.
    \end{itemize}

\item[-] We have already prove that $\mathsf{ReplicaStates}$ is an inductive invariant and $\mathsf{Refinement}$ holds. Then, similarly as in \sectionautorefname \ref{subsec:time order of execution as linearization}, we can prove that $\mathsf{\CRDTLinshort{}}$ is an inductive invariant.
\end{itemize}

This completes the proof of this lemma. $\qed$
\end {proof}

\section{Implementation, Sequential Specification, and Proof of Two-phase Set (2P-Set)}
\label{sec:implementation, sequential specification, and proof of two-phase set (2P-set)}

\subsection{2P-Set Implementation}
\label{subsec:2P-set implementation}

\cite{ShapiroPBZ11} gives a state-based two-phase set (2P-set). Similarly as previous subsection, we give its operation-based version in Listing~\ref{lst:2P-set}. A payload $(A,R)$ contains a set $A$ to record the inserted values, and a set $R$ to store the removed values and is used as \emph{tombstone}. Adding or removing a value twice has no effect, nor does adding an element that has already been removed. Therefore, we assume that the users guarantee that, in each execution, a value will not be added twice.

\begin{figure}[t]
\begin{lstlisting}[frame=top,caption={Pseudo-code of 2P-set},
captionpos=b,label={lst:2P-set}]
  payload Set A, Set R
  initial @|$\emptyset$|@, @|$\emptyset$|@
  initial lin = @|$\epsilon$|@

  add(a) :
    generator :
      //@ lin = lin@|$\,\cdot\,$|@add(a)
    effector((@|$A'$|@,@|$R'$|@)) : with @|$A'$|@ = A @|$\cup$|@ @|$\{$|@a@|$\}$|@, @|$R'$|@ = R
      A = A @|$\cup \ A'$|@
      R = R @|$\cup \ R'$|@
      //@ Annotation1 : @|$A'$|@ = @|$\{ b \vert \exists \alabel = \alabelshort[{\tt add}]{b}, (\alabel,\alabel') \in \avisord \vee \alabel = \alabel' \}$|@,
      @|$R'$|@ = @|$\{ b \vert \exists \alabel = \alabelshort[{\tt remove}]{b}, (\alabel,\alabel') \in \avisord \vee \alabel = \alabel' \}$|@, where @|$\alabel'$|@ is the operation that generates this effector @|$(A',R')$|@

  remove(a) :
    generator :
      precondition :  @|$a \in A \wedge a \notin R$|@
       //@ lin = lin@|$\,\cdot\,$|@remove(a)
    effector((@|$A'$|@,@|$R'$|@)) : with @|$A'$|@ = A, @|$R'$|@ = R @|$\cup$|@ @|$\{$|@a@|$\}$|@,
      A = A @|$\cup \ A'$|@
      R = R @|$\cup \ R'$|@
      //@ Annotation1 : @|$A'$|@ = @|$\{ b \vert \exists \alabel = \alabelshort[{\tt add}]{b}, (\alabel,\alabel') \in \avisord \vee \alabel = \alabel' \}$|@,
      @|$R'$|@ = @|$\{ b \vert \exists \alabel = \alabelshort[{\tt remove}]{b}, (\alabel,\alabel') \in \avisord \vee \alabel = \alabel' \}$|@, where @|$\alabel'$|@ is the operation that generates this effector @|$(A',R')$|@

  read() :
    let s = A @|$\,\setminus\,$|@ R
    //@ lin = lin@|$\,\cdot\,$|@(read()@|$\Rightarrow$|@s)
    return s
\end{lstlisting}
\end{figure}

\subsection{The Sequential Specification of 2P-Set}
\label{subsec:the sequential specification of 2P-set}

Each abstract state $\abstate$ is a set $S$ of values. The sequential specification $\specLWWSet$ of set is given by the transitions:
\[
  \begin{array}{rcl}
    S &
               \specarrow{\alabelshort[\mathtt{add}]{a}}
    & S \cup \{ a \} \\
    S &
               \specarrow{\alabelshort[\mathtt{remove}]{a}}
    & S \setminus \{a\} \\
    S
    & \specarrow{\alabellong[\mathtt{read}]{}{ S }{}}
    & S
  \end{array}
\]

Method $\alabelshort[{\tt add}]{a}$ puts $a$ into $S$. Method $\alabelshort[{\tt remove}]{a}$ removes $a$ from $S$. Method $\alabellong[{\tt read}]{}{S}{}$ returns the contents of the LWW-element-set.

\subsection{Proof of Operation-Based 2P-Set}
\label{subsec:proof of operation-based 2P-set}

The following lemma states that the operation-based 2P-set is \crdtlinearizable{} w.r.t. $\specLWWSet$. 

\begin{lemma}
\label{lemma:2P-set is correct}
The operation-based 2P-set is \crdtlinearizable{} w.r.t $\specLWWSet$. 
\end{lemma}

\begin {proof}

Let us propose Annotation2, which is an annotation of local configurations and obviously holds in the initial global configuration.

\noindent Annotation2: Let $(\alabelset, (A,R))$ be the local configuration of a replica. Then, $A$ =  $\{ b \vert \exists \alabel = \alabelshort[{\tt add}]{b}, \alabel \in \alabelset \}$,
    $R$ =  $\{ b \vert \exists \alabel = \alabelshort[{\tt remove}]{b}, \alabel \in \alabelset \}$,

Let us prove that the Annotation1 and Annotation2 are inductive invariant.

We prove by induction on executions. Obvious they hold in $\aglobalstate_0$. Assume they hold along the execution $\aglobalstate_0 \xrightarrow{}^* \aglobalstate$ and there is a new transition $\aglobalstate \xrightarrow{} \aglobalstate'$. We need to prove that they still hold in $\aglobalstate'$. We only need to consider when a replica do generator or effector of {\tt add} or {\tt remove}:

\begin{itemize}
\setlength{\itemsep}{0.5pt}
\item[-] For case when replica $\arep$ do generator of an operation $\alabel = \alabelshort[{\tt add}]{a}$ and then apply its effector: Let $Lc = (\alabelset,(A,R))$ and $Lc' = (\alabelset',(A',R'))$ be the local configuration of replica $\arep$ of $\aglobalstate$ and $\aglobalstate'$, respectively. Obviously $\alabelset' = \alabelset \cup \{ \alabel \}$, $(A',R') = (A \cup \{ a \},R)$, and $(A',R')$ is the effector of $\alabel$.

        By Annotation2 of the local configuration $Lc$, we can see that Annotation1 of effector $(A',R')$ holds, and Annotation2 of local configuration $Lc'$ holds.

\item[-] For case when replica $\arep$ apply effector $(A_a,R_a)$ of an operation $\alabel = \alabelshort[{\tt add}]{a}$ originated in a different replica: We only need to prove Annotation2. Let $Lc = (\alabelset,(A,R))$ and $Lc' = (\alabelset',(A',R'))$ be the local configuration of replica $\arep$ of $\aglobalstate$ and $\aglobalstate'$, respectively.

        By the causal delivery assumption, 
        if an operation $\alabel''$ is visible to $\alabel$, then 
        $\alabel'' \in \alabelset$. By Annotation1 of the effector $(A_a,R_a)$, and Annotation2 of the local configuration $Lc$, we can see that $A_a \setminus \{ a \} \subseteq A$ and $R_a \subseteq R$, and thus, we can see that $A' = A \cup \{ a \}$ and $R' = R$. Therefore, Annotation2 holds for replica state $Lc'$.

\item[-] The cases of $\alabelshort[{\tt remove}]{a}$ can be similarly proved.
\end{itemize}

This completes the proof of Annotation1 and Annotation2.

Let us propose $fact1$:

\noindent $fact1$: Assume $\alinord = \alabel''_1 \cdot \ldots \cdot \alabel''_n$, $S_k = (A_k,R_k)$ is obtained from the initial replica state by applying effectors $(A''_1,R''_1),\ldots,(A''_k,R''_k)$, where for each $i$, $(A''_i,R''_i)$ is the effectors of $\alabel''_i$, and $k$ is a natural number such that $1 \leq k \leq n$. Then, $A_k = \{ b \vert \exists \alabel = \alabelshort[{\tt add}]{b}, \alabel \in \{ \alabel''_1,\ldots,\alabel''_k \} \}$ and $R_k = \{ b \vert \exists \alabel = \alabelshort[{\tt remove}]{b}, \alabel \in \{ \alabel''_1,\ldots,\alabel''_k \} \}$.

We prove $fact1$ by induction on $\alinord$. Obviously $fact1$ holds initially.

Assume that $fact1$ holds for $\alabel''_1 \cdot \ldots \cdot \alabel''_m$, and assume that $\alabel''_{m+1}$ is a $\alabelshort[{\tt add}]{a}$ operation, and $\alabel''_{m+1}$ happens on replica $\arep_{m+1}$. We need to prove that $fact1$ holds for $\alabel''_1 \cdot \ldots \cdot \alabel''_{m+1}$.

Assume that $S=(A,R)$ is obtained from the initial replica state by applying effectors $(A''_1,R''_1),\ldots,(A''_m,R''_m)$, and $S'=(A',R')$ is obtained from $S$ by applying effectors $(A''_{m+1}, R''_{m+1})$.

Since $\alinord$ is consistent with the visibility relation, for each operation $\alabel_1$ such that $(\alabel_1,\alabel''_{m+1}) \in \avisord$, we can see that $\alabel_1 \in \{ \alabel''_1,\ldots,\alabel''_m \}$. By Annotation1 of the effector $(A''_{m+1}, R''_{m+1})$ and the induction assumption, we can see that $A''_{m+1} \setminus \{ a \} \subseteq A$ and $R''_{m+1} \subseteq R$. Then, by the implementation, it is easy to see that $(A',R') = (A \cup \{ a \},R)$. Therefore, $fact1$ holds for $\alabel''_1 \cdot \ldots \cdot \alabel''_{m+1}$. The case when $\alabel''_{m+1}$ is a $\alabelshort[{\tt remove}]{a}$ operation can be similarly proved. This completes the proof of $fact1$.

Our proof of the lemma proceed as follows:

\begin{itemize}
\setlength{\itemsep}{0.5pt}
\item[-] We need to prove that $\mathsf{ReplicaStates}$ is an inductive invariant.

Since every operation is appended to the linearization when it executes generator it clearly follows, the linearization order is consistent with visibility order. Then, by the causal delivery assumption, the order in which effectors are applied at a given replica is also consistent with the visibility order. Let $\alinord_1$ be the projection of linearization order into labels of effectors applied in a replica $\arep$, and $\alinord_2$ be the order of labels of effectors applied in replica $\arep$. By Lemma \ref{lemma:given two sequence consistent with visibility order, one can be obtained from the other}, $\alinord_2$ can be obtained from $\alinord_1$ by several time of swapping adjacent pair of concurrent operations. Since the effectors do set union, it is obvious that applying effectors of concurrent operations commute. Therefore, we know that $\mathsf{ReplicaStates}$ is an inductive invariant.

\item[-] Let us prove that $\mathsf{Refinement}$ holds. We consider a refinement mapping $\refmap$ defined as follows: $\refmap(A,R) = A \setminus R$.

    \begin{itemize}
    \setlength{\itemsep}{0.5pt}
    \item[-] For the effector $(A_a,R_a)$ produced by an operation $\alabel = \alabelshort[{\tt add}]{a}$ and the $\alabelshort[{\tt add}]{a}$ operation of the specification $\specTwoPSet$:

    Assume we obtain replica state $S' = (A,'R')$ from replica state $S = (A,R)$ by applying effector $(A_a,R_a)$; while in sequential specification we have $\abstate \xrightarrow{\alabelshort[{\tt add}]{a}} \abstate'$, and $\refmap(A,R) = \abstate$, or we can say, $\abstate = A \setminus R$. Obviously $\abstate' = (A \setminus R) \cup \{ a \}$. We need to prove that $S' = \abstate'$.

    Assume $\alinord = \alabel''_1 \cdot \ldots \cdot \alabel''_n$. Here additionally, we assume that $S$ is obtained from the initial replica state by applying effectors of $\alabel''_1,\ldots,\alabel''_m$ for a natural number $m$ such that $1 \leq m \leq n$, and we assume that $\alabel = \alabel''_{m+1}$.

    Since $\alinord$ is consistent with the visibility relation, for each operation $\alabel'$, such that $(\alabel',\alabel) \in \avisord$, we can see that, $\alabel' \in \{ \alabel''_1,\ldots,\alabel''_m \}$. 
    By $fact1$ and Annotation1 of the effector $(A_a,R_a)$, we can see that $A_a \setminus \{ a \} \subseteq A$ and $R_a \subseteq R$, and thus, we can see that $A' = A \cup \{ a \}$ and $R' = R$. Therefore, we have that $\abstate' = A' \setminus R'$, or we can say, $\abstate' = \refmap(S')$.

    \item[-] The cases of {\tt remove} can be similarly proved.

    \item[-] Applying the query $\alabelshort[read]{}$ on the replica state $S$ should result in the same return value as applying the same query in the context of the specification on the same state $\abstate = \refmap(S)$, which again holds trivially.
    \end{itemize}

\item[-] We have already prove that $\mathsf{ReplicaStates}$ is an inductive invariant and $\mathsf{Refinement}$ holds. Then, similarly as in \sectionautorefname \ref{subsec:time order of execution as linearization}, we can prove that $\mathsf{\CRDTLinshort{}}$ is an inductive invariant.
\end{itemize}

This completes the proof of this lemma. $\qed$
\end {proof}

\section{Implementation and Proof of Last-Writer-Win-Element-Set (LWW-Element-Set)}
\label{sec:implementation and proof of last-writer-win-element-set (LWW-element-set)}

\subsection{LWW-Element-Set Implementation}
\label{subsec:LWW-element-set implementation}

\cite{ShapiroPBZ11} gives a state-based last-writer-win-element-set (LWW-element-set). Similarly as previous subsection, we give its operation-based version in Listing~\ref{lst:LWW-element-set}. A payload $(A,R)$ contains a set $A$ that records pairs of inserted values and their timestamp, and a set $R$ that records the pairs of removed values and and their timestamp, and $R$ is used as \emph{tombstone}. A value $b$ is in the set, if $(b,\ats_b) \in A$ for some timestamp $\ats_b$, and there does not exists $(b,\ats'_b) \in R$, such that $\ats_b < \ats'_b$.

\begin{figure}[t]
\begin{lstlisting}[frame=top,caption={Pseudo-code of LWW-element-set},
captionpos=b,label={lst:LWW-element-set}]
  payload Set A, Set R
  initial @|$\emptyset$|@, @|$\emptyset$|@
  initial lin = @|$\epsilon$|@

  add(a) :
    generator :
      let @|$\ats'$|@ = getTimestamp()
      //@ lin = insert(lin, add(a), ts')
    effector((@|$A'$|@,@|$R'$|@)) : with @|$A'$|@ = A @|$\cup$|@ @|$\{$|@(a,@|$\ats'$|@)@|$\}$|@, @|$R'$|@ = R
      A = A @|$\cup \ A'$|@
      R = R @|$\cup \ R'$|@
      //@ Annotation1 : @|$A'$|@ = @|$\{(b,\ats_b) \vert \exists \alabel = \alabellongind[{\tt add}]{b}{}{\ats_b}{}, (\alabel,\alabel') \in \avisord \vee \alabel = \alabel' \})$|@,
      @|$R'$|@ = @|$\{(b,\ats_b) \vert \exists \alabel = \alabellongind[{\tt remove}]{b}{}{\ats_b}{}, (\alabel,\alabel') \in \avisord \vee \alabel = \alabel' \})$|@, where @|$\alabel'$|@ is the operation that generates this effector @|$(A',R')$|@

  remove(a) :
    generator :
      let @|$\ats'$|@ = getTimestamp()
      //@ lin = insert(lin, remove(a), ts')
    effector((@|$A'$|@,@|$R'$|@)) : with @|$A'$|@ = A, @|$R'$|@ = @|$\cup$|@ @|$\{$|@(a,@|$\ats'$|@)@|$\}$|@
      A = A @|$\cup \ A'$|@
      R = R @|$\cup \ R'$|@
      //@ Annotation1 : @|$A'$|@ = @|$\{(b,\ats_b) \vert \exists \alabel = \alabellongind[{\tt add}]{b}{}{\ats_b}{}, (\alabel,\alabel') \in \avisord \vee \alabel = \alabel' \})$|@,
      @|$R'$|@ = @|$\{(b,\ats_b) \vert \exists \alabel = \alabellongind[{\tt remove}]{b}{}{\ats_b}{}, (\alabel,\alabel') \in \avisord \vee \alabel = \alabel' \})$|@, where @|$\alabel'$|@ is the operation that generates this effector @|$(A',R')$|@

  read() :
    let S = @|$\{ b \vert \exists (b,\ats_b) \in A, \ R$|@ does not contain @|$(b,\_)$|@, or @|$\forall (b,\ats'_b) \in R, \ats'_b < \ats_b \}$|@
    //@ lin = insert(lin, read()@|$\Rightarrow$|@S, max(@|$\{\tsof(\alabel)\ |\ \alabel\in \alabelset\}$|@))
    return S
\end{lstlisting}
\end{figure}

\subsection{Proof of LWW-Element-Set}
\label{subsec:proof of LWW-element-set}

The following lemma states that the operation-based LWW-element-set is \crdtlinearizable{} w.r.t. $\specLWWSet$.

\begin{lemma}
\label{lemma:operation-based LWW-element-set is correct}
The operation-based LWW-element-set is \crdtlinearizable{} w.r.t $\specLWWSet$.
\end{lemma}

\begin {proof}

Let us propose Annotation2, which is an annotation of local configuration and obviously holds in the initial global configuration.

\begin{itemize}
\setlength{\itemsep}{0.5pt}
\item[-] Annotation2: Let $(\alabelset,(A,R))$ be the local configuration of a replica. Then, $A$ =  $\{ (b,\ats_b) \vert \exists \alabel = \alabellongind[{\tt add}]{b}{}{\ats_b}{}, \alabel \in \alabelset \}$,
    $R$ =  $\{ (b,\ats_b) \vert \exists \alabel = \alabellongind[{\tt remove}]{b}{}{\ats_b}{}, \alabel \in \alabelset \}$.
\end{itemize}

Let us prove that the Annotation1 and Annotation2 are inductive invariant. We prove by induction on executions. Obvious they hold in $\aglobalstate_0$. Assume they hold along the execution $\aglobalstate_0 \xrightarrow{}^* \aglobalstate$ and there is a new transition $\aglobalstate \xrightarrow{} \aglobalstate'$. We need to prove that they still hold in $\aglobalstate'$. We only need to consider when a replica do generator or effector of $\alabelshort[{\tt add}]{a}$ or $\alabelshort[{\tt remove}]{a}$:

\begin{itemize}
\setlength{\itemsep}{0.5pt}
\item[-] For case when replica $\arep$ do generator of an operation $\alabel = \alabellongind[{\tt add}]{a}{}{\ats_a}{}$ and then apply its effector: Let $Lc = (\alabelset,(A,R))$ and $Lc' = (\alabelset',(A',R'))$ be the local configuration of replica $\arep$ of $\aglobalstate$ and $\aglobalstate'$, respectively. It is easy to see that $\alabelset' = \alabelset \cup \{ \alabel \}$, $(A',R') = (A \cup \{ (a,\ats_a) \},R)$ and the effector of $\alabel$ is $(A',R')$.

    By Annotatioon2 of the local configuration $Lc$, we can see that Annotation1 holds for the effector $(A',R')$, and Annotation2 holds for the local configuration $Lc'$.

\item[-] For case when replica $\arep$ apply effector $(A_a,R_a)$ of an operation $\alabel = \alabellongind[{\tt add}]{a}{}{\ats_a}{}$ originated in a different replica: We only need to prove Annotation2. Let $Lc = (\alabelset,(A,R))$ and $Lc' = (\alabelset',(A',R'))$ be the local configuration of replica $\arep$ of $\aglobalstate$ and $\aglobalstate'$, respectively.

    By the causal delivery assumption, 
    if an operation $\alabel''$ is visible to $\alabel$, then $\alabel'' \in \alabelset$. 
    By Annotation1 of the effector $(A_a,R_a)$, and Annotation2 of the local configuration $Lc$, we can see that $A_a \setminus \{ (a,\ats_a) \} \subseteq A$ and $R_a \subseteq R$, and thus, we can see that $A' = A \cup \{ (a,\ats_a) \}$ and $R' = R$. Therefore, Annotation2 holds for the local configuration $Lc'$.

\item[-] The cases of $\alabelshort[{\tt remove}]{a}$ can be similarly proved.
\end{itemize}

This completes the proof of Annotation1 and Annotation2.

Let us propose $fact1$:

\noindent $fact1$: Assume $\alinord = \alabel''_1 \cdot \ldots \cdot \alabel''_n$, $S_k=(A_k,R_k)$ is obtained from the initial replica state by applying effectors $(A''_1,R''_1),\ldots,(A''_k,R''_k)$, where for each $i$, $(A''_i,R''_i)$ is the effectors of $\alabel''_i$, and $k$ is a natural number such that $1 \leq k \leq n$. Then, $A_k = \{ (b,\ats_b) \vert \exists \alabel = \alabellongind[{\tt add}]{b}{}{\ats_b}{}, \alabel \in \{ \alabel''_1,\ldots,\alabel''_k \} \}$ and $R_k = \{ (b,\ats_b) \vert \exists \alabel = \alabellongind[{\tt remove}]{b}{}{\ats_b}{}, \alabel \in \{ \alabel''_1,\ldots,\alabel''_k \} \}$.

We prove $fact1$ by induction on $\alinord$. Obviously $fact1$ holds initially.

Assume that $fact1$ holds for $\alabel''_1 \cdot \ldots \cdot \alabel''_m$, and assume that $\alabel''_{m+1} = \alabellongind[{\tt add}]{a}{}{\ats_a}{}$. We need to prove that $fact1$ holds for $\alabel''_1 \cdot \ldots \cdot \alabel''_{m+1}$.

Assume that $S=(A,R)$ is obtained from the initial replica state by applying effectors $(A''_1,R''_1),\ldots,(A''_m,R''_m)$, and $S'=(A',R')$ is obtained from $S$ by applying effectors $(A''_{m+1}, R''_{m+1})$.

Since $\alinord$ is consistent with the visibility relation, for each operation $\alabel_1$ such that $(\alabel_1,\alabel''_{m+1}) \in \avisord$, we can see that $\alabel_1 \in \{ \alabel''_1,\ldots,\alabel''_m \}$. By Annotation1 of the effector $(A''_{m+1}, R''_{m+1})$ and the induction assumption, we can see that $A''_{m+1} \setminus \{ (a,\ats_a) \} \subseteq A$ and $R''_{m+1} \subseteq R$. Then, by the implementation, it is easy to see that $(A',R') = (A \cup \{ (a,\ats_a) \},R)$. Therefore, $fact1$ holds for $\alabel''_1 \cdot \ldots \cdot \alabel''_{m+1}$. The case when $\alabel''_{m+1}$ is a $\alabelshort[{\tt remove}]{a}$ operation can be similarly proved. This completes the proof of $fact1$.

Since the order between the timestamps generated by ${\tt add}$ operations is consistent with the visibility relation, it easily follows that the linearization order $\alinord$ is consistent with the visibility relation. By the causal delivery assumption, the order in which effectors are applied at a given replica is also consistent with the visibility order. Let $\alinord_1$ be the projection of linearization order into labels of effectors applied in a replica $\arep$, and $\alinord_2$ be the order of labels of effectors applied in replica $\arep$. By Lemma \ref{lemma:given two sequence consistent with visibility order, one can be obtained from the other}, $\alinord_2$ can be obtained from $\alinord_1$ by several time of swapping adjacent pair of concurrent operations. Since the effectors do set union, it is obvious that applying effectors of concurrent operations commute. Therefore, we know that $\mathsf{ReplicaStates}$ is an inductive invariant.

Concerning the proof of $\mathsf{Refinement}$, we consider a refinement mapping $\refmap$ defined as follows: $\refmap(A,R) = \{ a \vert \exists \ats_a, (a,\ats_a) \in A, \forall \ats > \ats_a, (a,\ats) \notin R \}$.

\begin{itemize}
\setlength{\itemsep}{0.5pt}
\item[-] Concerning effectors of $\alabellongind[{\tt add}]{a}{}{\ats_a}{}$ operations, we show that they are simulated by the corresponding specification operation $\alabelshort[{\tt add}]{a}$ only when the timestamp $\ats_a$ is strictly greater than all the timestamps of operations whose effector have been applied in the replica state. This is sufficient because, by $\mathsf{ReplicaStates}$, every replica state is obtained by applying effectors according to the linearization of their corresponding operations, and the linearization order is consistent with the timestamp order.

    Assume we obtain replica state $S' = (A',R')$ from replica state $S = (A,R)$ by applying effector $(A_a,R_a)$ of $\alabel = \alabellongind[{\tt add}]{a}{}{\ats_a}{}$; while in sequential specification we have $\abstate \xrightarrow{\alabelshort[{\tt add}]{a}} \abstate'$, and $\refmap(A,R) = \abstate$. We can see that $\abstate' = \abstate \cup \{ a \}$.

    Assume $\alinord = \alabel''_1 \cdot \ldots \cdot \alabel''_n$. Here additionally, we assume that $S$ is obtained from the initial replica state by applying effectors of $\alabel''_1,\ldots,\alabel''_m$, and we assume that $\alabel = \alabel''_{m+1}$.

    Since $\alinord$ is consistent with the visibility relation, for each operation $\alabel'$, such that $(\alabel',\alabel) \in \avisord$, we can see that, $\alabel' \in \{ \alabel''_1,\ldots,\alabel''_m \}$. By Annotation1 of the effector $(A'', R'')$ and $fact1$, 
    we can see that $A_a \setminus \{ (a,\ats_a) \} \subseteq A$ and $R_a \subseteq R$, and thus, we can see that $A' = A \cup \{ (a,\ats_a) \}$ and $R' = R$.

    Since $\ats_a$ is greater than all the timestamps of operations whose effector have been applied in $S$, by $fact1$, we can see that, for each $\ats \in \{ \ats' \vert (\_,\ats') \in S \}$, we have $\ats < \ats_a$. Therefore, we can see that $\refmap(A',R') = \refmap(A,R) \cup \{ a \} = \abstate'$.

\item[-] The cases of $\alabellongind[{\tt remove}]{a}{}{\ats_a}{}$ can be similarly proved.

\item[-] Applying the query {\tt read} on the replica state $(A,R)$ should result in the same return value as applying the same query in the context of the specification on the abstract state $\abstate = \refmap(A,R) = \{ a \vert \exists \ats_a, (a,\ats_a) \in A, \forall \ats > \ats_a, (a,\ats) \notin R \}$, which holds trivially.
\end{itemize}

We have already prove that $\mathsf{ReplicaStates}$ is an inductive invariant and $\mathsf{Refinement}$ holds. Then, similarly as in \sectionautorefname \ref{subsec:time-stamp order as linearizabtion}, we can prove that $\mathsf{\CRDTLinshort{}}$ is an inductive invariant. $\qed$
\end {proof}

\section{Implementation, Sequential Specification, and Proof of Wooki}
\label{sec:implementation, sequential specification, and proof of wooki}

\subsection{Proof of Wooki}
\label{subsec:proof of Wooki}

Given a W-string $s$ and two W-characters $w_1,w_2$, we say that $w_1$ and $w_2$ are degree-$i$-adjacent in $s$, if

\begin{itemize}
\setlength{\itemsep}{0.5pt}
\item[-] the degree of $w_1$ and $w_2$ are $i$,

\item[-] there does not exists W-character $w$ of $s$, such that the degree of $w$ is less or equal than $i$, and $w_1 <_s w <_s w_2$.
\end{itemize}

The following lemma states that, when doing $\alabelshort[{\tt integreteIns}]{w_p,w,w_n}$, for each $i$, $F[i],F[i+1]$ are degree-$d_{min}$-adjacent.

\begin{lemma}
\label{lemma:in F of Wooki, W-characters are degree-dmin-adjacent}
When doing $\alabelshort[{\tt integreteIns}]{w_p,w,w_n}$, for each $i$, $F[i],F[i+1]$ are degree-$d_{min}$-adjacent.
\end{lemma}

\begin {proof}
Obviously, $F[i],F[i+1]$ have degree $d_{min}$, and there is no W-character that is with degree $d_{min}$ and is between $F[i]$ and $F[i+1]$ in $string_s$.

Since $d_{min}$ is the minimal degree of W-characters in $S'$, there does not exists W-character that is between $F[i]$ and $F[i+1]$ and with a degree smaller than $d_{min}$. This completes the proof of this lemma. $\qed$
\end {proof}

The following lemma states a property of degrees of the argument of {\tt integrateIns}. Its can be obviously proved by induction and we omit its proof.

\begin{lemma}
\label{lemma:a property of degree of argument of integrateIns}
If $\alabelshort[{\tt addBetween}]{a,b,c}$ calls $\alabelshort[{\tt integrateIns}]{w_p,w_b,w_n}$, and that, for each time, we find degree $d_{min}^i$ and then recursively calls $\alabelshort[{\tt integrateIns}]{w_i,w_b,w'_i},\ldots$. Then the degree of $w_i$ and $w'_i$ is chosen from $\{ d_p,d_n,d_{min}^1,\ldots d_{min}^i \}$, where $d_p$ and $d_n$ is the degree of $w_p$ and $w_n$, respectively.
\end{lemma}

The following lemma states that, given two W-characters that are degree-$i$-adjacent in $string_s$, then, they are ordered by $<_{id}$ in $s$.

\begin{lemma}
\label{lemma:in strings, given two degree-i-adjacent W-characters, they are ordered by id order}
If $w_1$ and $w_2$ are degree-$i$-adjacent in $string_s$, then, $w_1 <_{string_s} w_2$, if and only if $w_1 <_{id} w_2$.
\end{lemma}

\begin {proof}
Let us prove this property by induction.

It is obvious that this property holds initially. Let us prove the induction part by contradiction. Assume this property hold for $string_s$, let $string'_s$ be obtained from $string_s$ by applying the effector of $\alabelshort[{\tt addBetween}]{a,b,c}$, and this property does not hold for $string'_s$. Let $w_b$ be the W-character of $b$. Assume the degree of $w_b$ is $i_b$. Then, there exists W-characters $w_x$ and $w_y$, such that $w_x$ and $w_y$ are degree-$k$-adjacent for some natural number $k$, and it is not the case that $w_x <_{string'_s} w_y$ if and only if $w_x <_{id} w_y$.

Since the order of non-$w_b$ W-characters is the same in $string_s$ and $string'_s$, it is easy to see that $w_x = w_b$ or $w_y = w_b$. Let us consider the case when $w_y = w_b$. Or we can say, $w_x$ and $w_b$ are degree-$i_b$-adjacent, and it is not the case that $w_x <_{string'_s} w_b$ if and only if $w_x <_{id} w_b$.

Let us consider the case when $w_x <_{string'_s} w_b \wedge w_b <_{id} w_x$.

Assume $\alabelshort[{\tt addBetween}]{a,b,c}$ calls $\alabelshort[{\tt integrateIns}]{w_a,w_b,w_c}$. Assume $\alabelshort[{\tt integrateIns}]{w_1,w_b,w_2}$ is the last {\tt integrateIns} with $d_{min} < i_b$, and then, we recursively calls $\alabelshort[{\tt integrateIns}]{w_u,w_b,w_v}$. By Lemma \ref{lemma:a property of degree of argument of integrateIns}, we can see that the degree of $w_u$ and $w_v$ are less than $i_b$.

It is obviously that $w_u <_{string'_s} w_b <_{string'_s} w_v$. Since $w_x$ and $w_b$ are degree-$i_b$-adjacent, we can see that $w_u <_{string'_s} w_x$. Therefore, in $\alabelshort[{\tt integrateIns}]{w_u,w_b,w_v}$, $d_{min} = i_b$, and $w_x,w_b \in F$.

By Lemma \ref{lemma:in F of Wooki, W-characters are degree-dmin-adjacent}, given $F$ of $\alabelshort[{\tt integrateIns}]{w_u,w_b,w_v}$, for each $i$, $F[i]$ and $F[i+1]$ are degree-$d_{min}$-adjacent. By induction assumption, we can see that, in $string'_s$, the W-characters of $F \setminus \{ w_b \}$ are ordered by $<_{id}$. Then, according to Wooki algorithm, we have $w_b <_{string'_s} w_x$, contradicts the assumption that $w_x <_{string'_s} w_b$.

The case of $w_b <_{string'_s} w_x \wedge w_x <_{id} w_b$ can be similarly proved. This completes the proof of this lemma. $\qed$
\end {proof}

The following lemma states that, inserting a value $b$ is ``independent'' from whether another value $e$ has already been inserted or not.

\begin{lemma}
\label{lemma:in Wooki algorithm,the order of sigma and b is the same, between insert b and first insert e and then insert b}
Assume from a replica state $\sigma$, we obtain $\sigma_b$ by applying the effector of ${\tt addBetween}($ $a,b,c)$, and obtain $\sigma_{eb}$ by first applying the effector of $\alabelshort[{\tt addBetween}]{d,e,f}$ and then applying the effector of $\alabelshort[{\tt addBetween}]{a,b,c}$. Let $w_b$ be W-character of $b$. Then, the order of $\{$W-characters of $\sigma \} \cup \{ w_b \}$ is the same for $\sigma_b$ and $\sigma_{eb}$.
\end{lemma}

\begin {proof}
Let $w_a,w_c,w_d,w_e,w_f$ be the W-character of $a,c,d,e,f$, respectively. Assume $\sigma_e$ is obtained from $\sigma$ by applying the effector of $\alabelshort[{\tt addBetween}]{d,e,f}$. Let us use case $1$ to mention the process of doing $\alabelshort[{\tt integrateIns}]{w_a,w_b,w_c}$ from $\sigma$. Let us use case $2$ to mention the process of doing $\alabelshort[{\tt integrateIns}]{w_a,w_b,w_c}$ from $\sigma_e$. Let $j_e$ be the degree of $w_e$.

We need to prove that, for each degree $i$, the order of $\{$W-characters of degree $i$ in $\sigma \} \cup \{ w_b \}$ is the same for case $1$ and case $2$. We prove this by considering all possible degrees.

When case $1$ and case $2$ both call $\alabelshort[{\tt integrateIns}]{w_1,w_b,w_2}$ for some $w_1$ and $w_2$, and $d_{min}<j_e$. Then, it is easy to see that case $1$ and case $2$ work in the same way. It is easy to see that the order of $\{$W-characters of degree $d_{min}$ in $\sigma \} \cup \{ w_b \}$ is the same for case $1$ and case $2$.

When case $1$ and case $2$ both call $\alabelshort[{\tt integrateIns}]{w_1,w_b,w_2}$ for some $w_1$ and $w_2$, and it is the first time that $d_{min} \geq j_e$. Then, there are two possibilities:

\noindent {\bf Possibility $1$}: In $\alabelshort[{\tt integrateIns}]{w_1,w_b,w_2}$ of case $1$, $d_{min} > j_e$, while in $\alabelshort[{\tt integrateIns}]{w_1,w_b,w_2}$ of case $2$, $d_{min} = j_e$. Or we can say, $w_e$ is the only W-character that has degree $j_e$ and is between $w_1$ and $w_2$ in $\sigma_e$. It is easy to see that the order of $\{$W-characters of degree $j_e$ in $\sigma \} \cup \{ w_b \}$ is the same for case $1$ and case $2$.

Assume $w_e$ is put into $\sigma_e$ by calling $\alabelshort[{\tt integrateIns}]{w_{pe},w_e,w_{ne}}$. Obviously, the degree of $w_{pe}$ and $w_{ne}$ is smaller than $j_e$. Since in case $2$, we have $d_{min} = j_e$, then, it is not the case that $w_1 <_{\sigma_e} w_{pe} <_{\sigma_e} w_2 \vee w_1 <_{\sigma_e} w_{ne} <_{\sigma_e} w_2$. Thus, we can see that, $w_{pe} <_{\sigma_e} w_1 <_{\sigma_e} w_2 <_{\sigma_e} w_{ne}$.

Let $F_1$ be the array $F$ in $\alabelshort[{\tt integrateIns}]{w_1,w_b,w_2}$ of case $1$, and let $d_{min1}$ be the $d_{min}$ in $\newline$ $\alabelshort[{\tt integrateIns}]{w_1,w_b,w_2}$ of case $1$. Assume that $F_1 = w'_1 \cdot \ldots \cdot w'_n$. By Lemma \ref{lemma:in F of Wooki, W-characters are degree-dmin-adjacent}, $(w'_1,w'_2)$, $\ldots$, $(w'_{n-1},w'_n)$ are degree-$d_{min1}$-adjacent. By Lemma \ref{lemma:in strings, given two degree-i-adjacent W-characters, they are ordered by id order}, we can see that $w'_1 <_{id} w'_2 \ldots <_{id} w'_n$.

Let us assume that $w'_k <_{\sigma_e} w_e <_{\sigma_e} w'_{k+1}$. Let us prove $w'_k <_{id} w_e$ by contradiction. Assume that $w_e <_{id} w'_k$. Then, it is easy to see that, in the process of doing $\alabelshort[{\tt integrateIns}]{w_{pe},w_e,w_{ne}}$ from $\sigma$, in each time of recursively calling {\tt IntegrateIns}, $w'_k$ should never be in array $F$. Since $w_{pe} <_{\sigma_e} w_1 <_{\sigma_e} w_2 <_{\sigma_e} w_{ne}$, there exists a W-character $w_x$, such that $w'_k <_{\sigma_e} w_x <_{\sigma_e} w_e$, and in the process of doing $\alabelshort[{\tt integrateIns}]{w_{pe},w_e,w_{ne}}$ from $\sigma_b$, at some time-point we will recursively call $\alabelshort[{\tt integrateIns}]{w_x,w_e,\_}$. Since this time-point is before we reach degree $d_{min1}$, we can see that the degree of $w_x$ is smaller than $d_{min1}$. Therefore, $d_{min1}$ should equal the degree of $w_x$, which brings a contradiction. Similarly, we can prove that $w_e <_{id} w'_{k+1}$.

Therefore, we can see that $F_1[1] <_{id} \ldots <_{id} F_1[k] <_{id} w_e <_{id} F_1[k+1] <_{id} \ldots <_{id} F_1[n]$ and $F_1[1] <_{\sigma_e} \ldots <_{\sigma_e} F_1[k] <_{\sigma_e} w_e <_{\sigma_e} F_1[k+1] <_{\sigma_e} \ldots <_{\sigma_e} F_1[n]$. Then,

\begin{itemize}
\setlength{\itemsep}{0.5pt}
\item[-] If $w_b <_{id} w'_k$ or $w'_{k+1} <_{id} w_b$, then, case $1$ and case $2$ work in the same way.

\item[-] Else, if $w'_k <_{id} w_b <_{id} w_e$, case $1$ recursively calls $\alabelshort[{\tt integrateIns}]{w'_k,w_b,w'_{k+1}}$. Case $2$ first recursively calls $\alabelshort[{\tt integrateIns}]{w_1,w_b,w_e}$, and then recursively calls $\alabelshort[{\tt integrateIns}]{w'_k,w_b,w_e}$. We can see that the order of $\{ w'_1,\ldots,w'_n,w_b \}$ is the same for case $1$ and case $2$. It is obvious that the order of $\{$W-characters of degree $d_{min1}$ before $w_1$ or after $w_2$ in $\sigma \} \cup \{ w_b \}$ is the same for case $1$ and case $2$. Therefore, it is easy to see that the order of $\{$W-characters of degree $d_{min1}$ in $\sigma \} \cup \{ w_b \}$ is the same for case $1$ and case $2$.

\item[-] Else, we know that $w_e <_{id} w_b <_{id} w'_{k+1}$, case $1$ recursively calls $\alabelshort[{\tt integrateIns}]{w'_k,w_b,w'_{k+1}}$. Case $2$ first recursively calls $\alabelshort[{\tt integrateIns}]{w_e,w_b,w_2}$, and then recursively calls ${\tt integrateIns}($ $w_e,w_b,w'_{k+1})$. We can see that the order of $\{ w'_1,\ldots,w'_n,w_b \}$ is the same for case $1$ and case $2$. Similarly, we can see that the order of $\{$W-characters of degree $d_{min1}$ in $\sigma \} \cup \{ w_b \}$ is the same for case $1$ and case $2$.
\end{itemize}

From now on, when case $2$ calls ${\tt integrateIns}(\_,w_b,w_e)$, we can see that $w_b <_{id} w_e$; when case $2$ calls ${\tt integrateIns}(w_e,w_b,\_)$, we can see that $w_e <_{id} w_b$.

\noindent {\bf Possibility $2$}: In $\alabelshort[{\tt integrateIns}]{w_1,w_b,w_2}$ of both case $1$ and case $2$, we have $d_{min} = j_e$.

Similarly, we can prove that, $w_{pe} <_{\sigma_e} w_1 <_{\sigma_e} w_2 <_{\sigma_e} w_{ne}$.

Let $F_2$ be the array $F$ in $\alabelshort[{\tt integrateIns}]{w_1,w_b,w_2}$ of case $2$, and let $d_{min2}=j_e$ be the $d_{min}$ in $\alabelshort[{\tt integrateIns}]{w_1,w_b,w_2}$ of case $2$. Assume that $F_2 = w'_1 \cdot \ldots \cdot w'_k \cdot w_e \cdot w'_{k+1} \cdot \ldots \cdot w'_n$. By Lemma \ref{lemma:in F of Wooki, W-characters are degree-dmin-adjacent}, $(w'_1,w'_2),\ldots,(w'_k,w_e),(w_e,w'_{k+1}),\ldots,(w'_{n-1},w'_n)$ are degree-$d_{min2}$-adjacent. By Lemma \ref{lemma:in strings, given two degree-i-adjacent W-characters, they are ordered by id order}, we can see that $w'_1 <_{id} w'_2 \ldots <_{id} w'_k <_{id} w_e <_{id} w'_{k+1} \ldots <_{id} w'_n$. Then,

\begin{itemize}
\setlength{\itemsep}{0.5pt}
\item[-] If $w_b <_{id} w'_k \vee w'_{k+1} <_{id} w_b$, then, case $1$ and case $2$ work in the same way.

\item[-] Else, if $w'_k <_{id} w_b <_{id} w_e <_{id} w'_{k+1}$, case $1$ recursively calls $\alabelshort[{\tt integrateIns}]{w'_k,w_b,w'_{k+1}}$, and case $2$ recursively calls $\alabelshort[{\tt integrateIns}]{w'_k,w_b,w_e}$. We can see that the order of $\{ w'_1,\ldots$, $w'_n,w_b \}$ is the same for case $1$ and case $2$. Similarly, we can see that the order of $\{$W-characters of degree $j_e$ in $\sigma \} \cup \{ w_b \}$ is the same for case $1$ and case $2$.

\item[-] Else, we know that $w'_k <_{id} w_e <_{id} w_b <_{id} w'_{k+1}$, case $1$ recursively calls $\alabelshort[{\tt integrateIns}]{w'_k$, $w_b,w'_{k+1}}$, and case $2$ recursively calls $\alabelshort[{\tt integrateIns}]{w_e,w_b,w'_{k+1}}$. We can see that the order of $\{ w'_1,\ldots,w'_n,w_b \}$ is the same for case $1$ and case $2$. Similarly, we can see that the order of $\{$W-characters of degree $j_e$ in $\sigma \} \cup \{ w_b \}$ is the same for case $1$ and case $2$.
\end{itemize}

From now on, when case $2$ calls ${\tt integrateIns}(\_,w_b,w_e)$, we can see that $w_b <_{id} w_e$; when case $2$ calls ${\tt integrateIns}(w_e,w_b,\_)$, we can see that $w_e <_{id} w_b$.

\noindent {\bf Induction situation}: If $w_b <_{id} w_e$, then case $1$ calls $\alabelshort[{\tt integrateIns}]{w'_1,w_b,w'_2}$ and case $2$ calls $\alabelshort[{\tt integrateIns}]{w'_1,w_b,w_e}$ for some $w'_1$ and $w'_2$. Let $F'_1$ be the array $F$ in $\alabelshort[{\tt integrateIns}]{w'_1,w_b,w'_2}$ of case $1$, and let $d'_{min1}$ be the $d_{min}$ in $\alabelshort[{\tt integrateIns}]{w'_1,w_b,w'_2}$ of case $1$. Assume that $F'_1 = w''_1 \cdot \ldots \cdot w''_n$. By Lemma \ref{lemma:in F of Wooki, W-characters are degree-dmin-adjacent}, $(w''_1,w''_2),\ldots,(w''_{n-1},w''_n)$ are degree-$d'_{min1}$-adjacent. By Lemma \ref{lemma:in strings, given two degree-i-adjacent W-characters, they are ordered by id order}, we can see that $w''_1 <_{id} w''_2 \ldots <_{id} w''_n$. Let us assume that $w''_k <_{\sigma_e} w_e <_{\sigma_e} w''_{k+1}$. Similarly, we can prove that $w''_k <_{id} w_e <_{id} w''_{k+1}$. Therefore, we can see that $F'_1[1] <_{id} \ldots <_{id} F'_1[k] <_{id} w_e <_{id} F'_1[k+1] <_{id} \ldots <_{id} F'_1[n]$ and $F'_1[1] <_{\sigma_e} \ldots <_{\sigma_e} F'_1[k] <_{\sigma_e} w_e <_{\sigma_e} F'_1[k+1] <_{\sigma_e} \ldots <_{\sigma_e} F'_1[n]$. Then,

\begin{itemize}
\setlength{\itemsep}{0.5pt}
\item[-] If $w_b <_{id} w''_k$, then, case $1$ and case $2$ work in the same way.

\item[-] Else, we know that $w''_k <_{id} w_b <_{id} w_e$. Case $1$ recursively calls $\alabelshort[{\tt integrateIns}]{w''_k,w_b,w''_{k+1}}$. Case $2$ recursively calls $\alabelshort[{\tt integrateIns}]{w''_k,w_b,w_e}$. We can see that the order of $\{ w''_1,\ldots,w''_n,w_b \}$ is the same for case $1$ and case $2$. Similarly, we can see that the order of $\{$W-characters of degree $d'_{min1}$ in $\sigma \} \cup \{ w_b \}$ is the same for case $1$ and case $2$.
\end{itemize}

Else, we know that $w_e <_{id} w_b$, case $1$ calls ${\tt integrateIns}(w'_1,w_b,w'_2)$ and case $2$ calls ${\tt integrateIns}($ $w_e,w_b,w'_2)$ for some $w'_1$ and $w'_2$. Let $F'_1$ be the array $F$ in $\alabelshort[{\tt integrateIns}]{w'_1,w_b,w'_2}$ of case $1$, and let $d'_{min1}$ be the $d_{min}$ in $\alabelshort[{\tt integrateIns}]{w'_1,w_b,w'_2}$ of case $1$. Assume that $F'_1 = w''_1 \cdot \ldots \cdot w''_n$. By Lemma \ref{lemma:in F of Wooki, W-characters are degree-dmin-adjacent}, $(w''_1,w''_2),\ldots,(w''_{n-1},w''_n)$ are degree-$d'_{min1}$-adjacent. By Lemma \ref{lemma:in strings, given two degree-i-adjacent W-characters, they are ordered by id order}, we can see that $w''_1 <_{id} w''_2 \ldots <_{id} w''_n$. Let us assume that $w''_k <_{\sigma_e} w_e <_{\sigma_e} w''_{k+1}$. Similarly, we can prove that $w''_k <_{id} w_e <_{id} w''_{k+1}$. Therefore, we can see that $F'_1[1] <_{id} \ldots <_{id} F'_1[k] <_{id} w_e <_{id} F'_1[k+1] <_{id} \ldots <_{id} F'_1[n]$ and $F'_1[1] <_{\sigma_e} \ldots <_{\sigma_e} F'_1[k] <_{\sigma_e} w_e <_{\sigma_e} F'_1[k+1] <_{\sigma_e} \ldots <_{\sigma_e} F'_1[n]$. Then,

\begin{itemize}
\setlength{\itemsep}{0.5pt}
\item[-] If $w'_{k+1} <_{id} w_b$, then, case $1$ and case $2$ work in the same way.

\item[-] Else, we know that $w_e <_{id} w_b <_{id} w'_{k+1}$. Case $1$ recursively calls $\alabelshort[{\tt integrateIns}]{w''_k,w_b,w''_{k+1}}$. Case $2$ recursively calls $\alabelshort[{\tt integrateIns}]{w''_k,w_b,w_e}$. We can see that the order of $\{ w''_1,\ldots,w''_n,w_b \}$ is the same for case $1$ and case $2$. Similarly, we can see that the order of $\{$W-characters of degree $d'_{min1}$ in $\sigma \} \cup \{ w_b \}$ is the same for case $1$ and case $2$.
\end{itemize}

This completes the proof of this lemma. $\qed$
\end {proof}

The following lemma states that, in Wooki algorithm, two effectors that correspond to two ``concurrent'' {\tt addBetween} operations commute.

\begin{lemma}
\label{lemma:in Wooki algorithm, two downstreams of two addBetween operations commute}
Assume from a replica state $\sigma$, we obtain $\sigma_b$ by applying the effector of ${\tt addbwteeen}($ $a,b,c)$ to $\sigma$, and obtain $\sigma_{be}$ by applying the effector of $\alabelshort[{\tt addBetween}]{d,e,f}$ to $\sigma_b$. Assume we obtain $\sigma_e$ by applying the effector of $\alabelshort[{\tt addBetween}]{d,e,f}$ to $\sigma$, and obtain $\sigma_{eb}$ by applying the effector of $\alabelshort[{\tt addBetween}]{a,b,c}$ to $\sigma_e$. Assume that $b \notin \{ d, f \}$ and $e \notin \{ a,c \}$ Then, $\sigma_{be} = \sigma_{eb}$.
\end{lemma}

\begin {proof}
We prove by contradiction. Let $w_a,w_b,w_c,w_d,w_e,w_f$ be the W-character of $a,b,c,d,e,f$, respectively. Assume that $w_b <_{\sigma_{eb}} w_e$ and $w_e <_{\sigma_{be}} w_b$.

By Lemma \ref{lemma:in Wooki algorithm,the order of sigma and b is the same, between insert b and first insert e and then insert b}, we know that the order between W-characters of $\sigma$ and $\{ w_b,w_e \}$ are the same in $\sigma_{be}$ and $\sigma_{eb}$.

Therefore, the only possibility is that in both $\sigma_{be}$ and $\sigma_{eb}$, $w_b$ and $w_e$ are adjacent, and they are in different order in $\sigma_{be}$ and in $\sigma_{eb}$. Let $w_1$ be the W-character that are before $w_b$ in $\sigma_{eb}$ and has a maximal index.

According to Wooki algorithm, in the process of applying the effector of $\alabelshort[{\tt addBetween}]{a,b,c}$ to $\sigma_e$, the last time of calling recursive method ${\tt integrateIns}$ must be $\alabelshort[{\tt integrateIns}]{w_1,w_b,w_e}$. 
We can see that $w_e$ is not an argument of $\alabelshort[{\tt integrateIns}]{w_a,w_b,w_c}$. Similarly as the proof of Lemma \ref{lemma:in Wooki algorithm,the order of sigma and b is the same, between insert b and first insert e and then insert b}, we can prove that, since we call $\alabelshort[{\tt integrateIns}]{\_,w_b,w_e}$, we must have $w_b <_{id} w_e$. Similarly, for the case of $w_e <_{\sigma_{be}} w_b$, we can prove that $w_e <_{id} w_b$. This implies that $w_b <_{id} w_e \wedge w_e <_{id} w_b$, which is a contradiction. $\qed$
\end {proof}

Then, let us prove that Wooki is \crdtlinearizable{} w.r.t $\specWooki$.

\begin{lemma}
\label{lemma:Wooki is correct}
Wooki is \crdtlinearizable{} w.r.t $\specWooki$.
\end{lemma}

\begin {proof}

A refinement mapping $\refmap$ is given as follows:

Given a replica state $\sigma$ that is a sequence of W-characters. Assume that $\sigma = w_1 \cdot \ldots \cdot w_n$, and for each $i$, $w_i = (id_i,v_i,degree_i,flag_i)$. Then, the refinement mapping $\refmap(\sigma) = (l,T)$, where $l = v_1 \cdot \ldots \cdot v_n$, and $T = \{ v_i \vert flag_i = \mathit{false} \}$.

Our proof proceeds as follows:

\begin{itemize}
\setlength{\itemsep}{0.5pt}
\item[-] By Lemma \ref{lemma:in Wooki algorithm, two downstreams of two addBetween operations commute}, we can see that the effectors of concurrent {\tt addBetween} operations commute. According to Wooki algorithm, it is easy to see that the effector of concurrent {\tt remove} operations commute, since they both set the flags of some W-characters into $\mathit{false}$; Concurrent {\tt addBetween} and a {\tt remove} effectors commute because in this case, the W-character influenced by {\tt remove} are different from the W-character added by the {\tt addBetween}.

    Let us prove $\mathsf{ReplicaStates}$: Since every operation is appended to the linearization when it executes generator it clearly follows, the linearization order is consistent with visibility order. Then, by the causal delivery assumption, the order in which effectors are applied at a given replica is also consistent with the visibility order. Let $\alinord_1$ be the projection of linearization order into labels of effectors applied in a replica $\arep$, and $\alinord_2$ be the order of labels of effectors applied in replica $\arep$. By Lemma \ref{lemma:given two sequence consistent with visibility order, one can be obtained from the other}, $\alinord_2$ can be obtained from $\alinord_1$ by several time of swapping adjacent pair of concurrent operations. We have already proved that effector of concurrent operations commute. Therefore, we know that $\mathsf{ReplicaStates}$ is an inductive invariant.

    Note that, by the causal delivery assumption and the preconditions of ${\tt addBetween}$ and ${\tt remove}$, it cannot happen that an $\alabelshort[{\tt addBetween}]{a,b,c}$ operation adding ${\tt b}$ between ${\tt a}$ and ${\tt c}$ is concurrent with an operation that adds ${\tt a}$ or ${\tt c}$ to the list, i.e., $\alabelshort[{\tt addBetween}]{\_,a,\_}$ or $\alabelshort[{\tt addBetween}]{\_,c,\_}$; or that an $\alabelshort[{\tt addBetween}]{a,b,c}$ operation adding ${\tt b}$ is concurrent with an operation $\alabelshort[{\tt remove}]{b}$ that removes ${\tt b}$. This ensures that reordering concurrent effectors doesn't lead to ``invalid'' replica states such as the replica state does not contains W-character of $a$ or $c$ while the effector requires to put $b$ between $a$ and $c$ (which would happen if $\alabelshort[{\tt addBetween}]{\_,a,\_}$ or $\alabelshort[{\tt addBetween}]{\_,c,\_}$ is delivered before $\alabelshort[{\tt addBetween}]{a,b,c}$), or a replica state does not contain W-character of $b$ when effector requires to remove $b$ (which would happen if $\alabelshort[{\tt remove}]{b}$ is delivered before $\alabelshort[{\tt addAfter}]{\_,b,\_}$).

\item[-] Let us prove $\mathsf{Refinement}$:
    \begin{itemize}
    \setlength{\itemsep}{0.5pt}
    \item[-] Assume $\refmap(\sigma) = (l,T)$. If $\sigma'$ is obtained from $\sigma$ by applying an effector $\delta$ produced by an operation $\alabelshort[{\tt addBetween}]{a,b,c}$. By the causal delivery assumption, we can see that the W-character of $a$ and $c$ is already in $\sigma$, and then, $a,c \in l$. It is obvious that the W-character of $a$ is before the W-character of $c$ in $\sigma$, and then, $a$ is before $c$ in $l$. By the Wooki algorithm, we can see that $\sigma'$ is obtained from $\sigma$ by inserting a W-character $(\_,b,\_,\mathit{true})$ of $b$ at some position between the W-character of $a$ and the W-character of $c$. Let $\refmap(\sigma') = (l',T')$. It is obvious that $T=T'$, and $l'$ is obtained from $l$ by adding $b$ at some position between $a$ and $c$. Thus, we have $\refmap(\sigma) \specarrow{\alabelshort[{\tt addBetween}]{a,b,c}} \refmap(\sigma')$.

    \item[-] Assume $\refmap(\sigma) = (l,T)$. If $\sigma'$ is obtained from $\sigma$ by applying an effector $\delta$ produced by an operation $\alabelshort[{\tt remove}]{a}$. By the causal delivery assumption, we can see that a W-character $w_a$ of $a$ is already in $\sigma$, and then, $a \in l$. By the Wooki algorithm, we can see that $\sigma'$ is obtained from $\sigma$ by setting the flag of $w_a$ into $\mathit{false}$. Let $\refmap(\sigma) = (l',T')$. It is obvious that $l=l'$, and $T' = T \cup \{ a \}$. Thus, we have $\refmap(\sigma) \specarrow{\alabelshort[{\tt remove}]{a}} \refmap(\sigma')$.

    \item[-] Assume we do $\alabellong[{\tt read}]{}{s}{}$ on replica state $\sigma$. Assume $\sigma = w_1 \cdot \ldots \cdot w_n$, and for each $i$, $w_i = (id_i,v_i,degree_i,flag_i)$. Then, $s$ is the projection of $v_1 \cdot \ldots \cdot v_n$ into values with flag $\mathit{true}$. Assume $\refmap(\sigma) = (l,T)$. We can see that $l = v_1 \cdot \ldots \cdot v_n$ and $T = \{ v_i \vert flag_i = \mathit{false} \}$. Thus, we have $\refmap(\sigma) \specarrow{\alabellong[{\tt read}]{}{s}{}} \refmap(\sigma)$.
    \end{itemize}

\item[-] We have already prove that $\mathsf{ReplicaStates}$ is an inductive invariant and $\mathsf{Refinement}$ holds. Then, similarly as in \sectionautorefname \ref{subsec:time order of execution as linearization}, we can prove that $\mathsf{\CRDTLinshort{}}$ is an inductive invariant.
\end{itemize}

This completes the proof of this lemma. $\qed$
\end {proof}

\subsection{Proof of Wooki}
\label{subsec:proof of Wooki}

Given a W-string $s$ and two W-characters $w_1,w_2$, we say that $w_1$ and $w_2$ are degree-$i$-adjacent in $s$, if

\begin{itemize}
\setlength{\itemsep}{0.5pt}
\item[-] the degree of $w_1$ and $w_2$ are $i$,

\item[-] there does not exists W-character $w$ of $s$, such that the degree of $w$ is less or equal than $i$, and $w_1 <_s w <_s w_2$.
\end{itemize}

The following lemma states that, when doing $\alabelshort[{\tt integreteIns}]{w_p,w,w_n}$, for each $i$, $F[i],F[i+1]$ are degree-$d_{min}$-adjacent.

\begin{lemma}
\label{lemma:in F of Wooki, W-characters are degree-dmin-adjacent}
When doing $\alabelshort[{\tt integreteIns}]{w_p,w,w_n}$, for each $i$, $F[i],F[i+1]$ are degree-$d_{min}$-adjacent.
\end{lemma}

\begin {proof}
Obviously, $F[i],F[i+1]$ have degree $d_{min}$, and there is no W-character that is with degree $d_{min}$ and is between $F[i]$ and $F[i+1]$ in $string_s$.

Since $d_{min}$ is the minimal degree of W-characters in $S'$, there does not exists W-character that is between $F[i]$ and $F[i+1]$ and with a degree smaller than $d_{min}$. This completes the proof of this lemma. $\qed$
\end {proof}

The following lemma states a property of degrees of the argument of {\tt integrateIns}. Its can be obviously proved by induction and we omit its proof.

\begin{lemma}
\label{lemma:a property of degree of argument of integrateIns}
If $\alabelshort[{\tt addBetween}]{a,b,c}$ calls $\alabelshort[{\tt integrateIns}]{w_p,w_b,w_n}$, and that, for each time, we find degree $d_{min}^i$ and then recursively calls $\alabelshort[{\tt integrateIns}]{w_i,w_b,w'_i},\ldots$. Then the degree of $w_i$ and $w'_i$ is chosen from $\{ d_p,d_n,d_{min}^1,\ldots d_{min}^i \}$, where $d_p$ and $d_n$ is the degree of $w_p$ and $w_n$, respectively.
\end{lemma}

The following lemma states that, given two W-characters that are degree-$i$-adjacent in $string_s$, then, they are ordered by $<_{id}$ in $s$.

\begin{lemma}
\label{lemma:in strings, given two degree-i-adjacent W-characters, they are ordered by id order}
If $w_1$ and $w_2$ are degree-$i$-adjacent in $string_s$, then, $w_1 <_{string_s} w_2$, if and only if $w_1 <_{id} w_2$.
\end{lemma}

\begin {proof}
Let us prove this property by induction.

It is obvious that this property holds initially. Let us prove the induction part by contradiction. Assume this property hold for $string_s$, let $string'_s$ be obtained from $string_s$ by applying the effector of $\alabelshort[{\tt addBetween}]{a,b,c}$, and this property does not hold for $string'_s$. Let $w_b$ be the W-character of $b$. Assume the degree of $w_b$ is $i_b$. Then, there exists W-characters $w_x$ and $w_y$, such that $w_x$ and $w_y$ are degree-$k$-adjacent for some natural number $k$, and it is not the case that $w_x <_{string'_s} w_y$ if and only if $w_x <_{id} w_y$.

Since the order of non-$w_b$ W-characters is the same in $string_s$ and $string'_s$, it is easy to see that $w_x = w_b$ or $w_y = w_b$. Let us consider the case when $w_y = w_b$. Or we can say, $w_x$ and $w_b$ are degree-$i_b$-adjacent, and it is not the case that $w_x <_{string'_s} w_b$ if and only if $w_x <_{id} w_b$.

Let us consider the case when $w_x <_{string'_s} w_b \wedge w_b <_{id} w_x$.

Assume $\alabelshort[{\tt addBetween}]{a,b,c}$ calls $\alabelshort[{\tt integrateIns}]{w_a,w_b,w_c}$. Assume $\alabelshort[{\tt integrateIns}]{w_1,w_b,w_2}$ is the last {\tt integrateIns} with $d_{min} < i_b$, and then, we recursively calls $\alabelshort[{\tt integrateIns}]{w_u,w_b,w_v}$. By Lemma \ref{lemma:a property of degree of argument of integrateIns}, we can see that the degree of $w_u$ and $w_v$ are less than $i_b$.

It is obviously that $w_u <_{string'_s} w_b <_{string'_s} w_v$. Since $w_x$ and $w_b$ are degree-$i_b$-adjacent, we can see that $w_u <_{string'_s} w_x$. Therefore, in $\alabelshort[{\tt integrateIns}]{w_u,w_b,w_v}$, $d_{min} = i_b$, and $w_x,w_b \in F$.

By Lemma \ref{lemma:in F of Wooki, W-characters are degree-dmin-adjacent}, given $F$ of $\alabelshort[{\tt integrateIns}]{w_u,w_b,w_v}$, for each $i$, $F[i]$ and $F[i+1]$ are degree-$d_{min}$-adjacent. By induction assumption, we can see that, in $string'_s$, the W-characters of $F \setminus \{ w_b \}$ are ordered by $<_{id}$. Then, according to Wooki algorithm, we have $w_b <_{string'_s} w_x$, contradicts the assumption that $w_x <_{string'_s} w_b$.

The case of $w_b <_{string'_s} w_x \wedge w_x <_{id} w_b$ can be similarly proved. This completes the proof of this lemma. $\qed$
\end {proof}

The following lemma states that, inserting a value $b$ is ``independent'' from whether another value $e$ has already been inserted or not.

\begin{lemma}
\label{lemma:in Wooki algorithm,the order of sigma and b is the same, between insert b and first insert e and then insert b}
Assume from a replica state $\sigma$, we obtain $\sigma_b$ by applying the effector of ${\tt addBetween}($ $a,b,c)$, and obtain $\sigma_{eb}$ by first applying the effector of $\alabelshort[{\tt addBetween}]{d,e,f}$ and then applying the effector of $\alabelshort[{\tt addBetween}]{a,b,c}$. Let $w_b$ be W-character of $b$. Then, the order of $\{$W-characters of $\sigma \} \cup \{ w_b \}$ is the same for $\sigma_b$ and $\sigma_{eb}$.
\end{lemma}

\begin {proof}
Let $w_a,w_c,w_d,w_e,w_f$ be the W-character of $a,c,d,e,f$, respectively. Assume $\sigma_e$ is obtained from $\sigma$ by applying the effector of $\alabelshort[{\tt addBetween}]{d,e,f}$. Let us use case $1$ to mention the process of doing $\alabelshort[{\tt integrateIns}]{w_a,w_b,w_c}$ from $\sigma$. Let us use case $2$ to mention the process of doing $\alabelshort[{\tt integrateIns}]{w_a,w_b,w_c}$ from $\sigma_e$. Let $j_e$ be the degree of $w_e$.

We need to prove that, for each degree $i$, the order of $\{$W-characters of degree $i$ in $\sigma \} \cup \{ w_b \}$ is the same for case $1$ and case $2$. We prove this by considering all possible degrees.

When case $1$ and case $2$ both call $\alabelshort[{\tt integrateIns}]{w_1,w_b,w_2}$ for some $w_1$ and $w_2$, and $d_{min}<j_e$. Then, it is easy to see that case $1$ and case $2$ work in the same way. It is easy to see that the order of $\{$W-characters of degree $d_{min}$ in $\sigma \} \cup \{ w_b \}$ is the same for case $1$ and case $2$.

When case $1$ and case $2$ both call $\alabelshort[{\tt integrateIns}]{w_1,w_b,w_2}$ for some $w_1$ and $w_2$, and it is the first time that $d_{min} \geq j_e$. Then, there are two possibilities:

\noindent {\bf Possibility $1$:} In $\alabelshort[{\tt integrateIns}]{w_1,w_b,w_2}$ of case $1$, $d_{min} > j_e$, while in $\alabelshort[{\tt integrateIns}]{w_1,w_b,w_2}$ of case $2$, $d_{min} = j_e$. Or we can say, $w_e$ is the only W-character that has degree $j_e$ and is between $w_1$ and $w_2$ in $\sigma_e$. It is easy to see that the order of $\{$W-characters of degree $j_e$ in $\sigma \} \cup \{ w_b \}$ is the same for case $1$ and case $2$.

Assume $w_e$ is put into $\sigma_e$ by calling $\alabelshort[{\tt integrateIns}]{w_{pe},w_e,w_{ne}}$. Obviously, the degree of $w_{pe}$ and $w_{ne}$ is smaller than $j_e$. Since in case $2$, we have $d_{min} = j_e$, then, it is not the case that $w_1 <_{\sigma_e} w_{pe} <_{\sigma_e} w_2 \vee w_1 <_{\sigma_e} w_{ne} <_{\sigma_e} w_2$. Thus, we can see that, $w_{pe} <_{\sigma_e} w_1 <_{\sigma_e} w_2 <_{\sigma_e} w_{ne}$.

Let $F_1$ be the array $F$ in $\alabelshort[{\tt integrateIns}]{w_1,w_b,w_2}$ of case $1$, and let $d_{min1}$ be the $d_{min}$ in $\newline$ $\alabelshort[{\tt integrateIns}]{w_1,w_b,w_2}$ of case $1$. Assume that $F_1 = w'_1 \cdot \ldots \cdot w'_n$. By Lemma \ref{lemma:in F of Wooki, W-characters are degree-dmin-adjacent}, $(w'_1,w'_2)$, $\ldots$, $(w'_{n-1},w'_n)$ are degree-$d_{min1}$-adjacent. By Lemma \ref{lemma:in strings, given two degree-i-adjacent W-characters, they are ordered by id order}, we can see that $w'_1 <_{id} w'_2 \ldots <_{id} w'_n$. Since $d_{min1} > j_e$, we can see that, $w'_n <_{\sigma_e} w_e \vee w_e <_{\sigma_e} w'_1$.

Let us consider the case when $w'_n <_{\sigma_e} w_e$. 
Let us prove $w'_n <_{id} w_e$ by contradiction. Assume that $w_e <_{id} w'_n$. Then, it is easy to see that, in the process of doing $\alabelshort[{\tt integrateIns}]{w_{pe},w_e,w_{ne}}$ from $\sigma$, in each time of recursively calling {\tt IntegrateIns}, $w'_n$ should never be in array $F$. Since $w_{pe} <_{\sigma_e} w_1 <_{\sigma_e} w_2 <_{\sigma_e} w_{ne}$, there exists a W-character $w_x$, such that $w'_n <_{\sigma_e} w_x <_{\sigma_e} w_e$, and in the process of doing $\alabelshort[{\tt integrateIns}]{w_{pe},w_e,w_{ne}}$ from $\sigma_b$, at some time-point we will recursively call $\alabelshort[{\tt integrateIns}]{w_x$, $w_e,\_}$. Since this time-point is before we reach degree $d_{min1}$, we can see that the degree of $w_x$ is smaller than $d_{min1}$. Therefore, $d_{min1}$ should equal the degree of $w_x$, which brings a contradiction.

Therefore, we can see that $F_1[1] <_{id} F_1[2] \ldots <_{id} F_1[n] <_{id} w_e$ and $F_1[1] <_{\sigma_e} F_1[2] \ldots <_{\sigma_e} F_1[n] <_{\sigma_e} w_e$.

\begin{itemize}
\setlength{\itemsep}{0.5pt}
\item[-] If $w_b <_{id} w'_n$, then, case $1$ and case $2$ work in the same way.

\item[-] Else, if $w'_n <_{id} w_b <_{id} w_e$, case $1$ recursively calls $\alabelshort[{\tt integrateIns}]{w'_n,w_b,w_2}$. Case $2$ first recursively calls $\alabelshort[{\tt integrateIns}]{w_1,w_b,w_e}$, and then recursively calls $\alabelshort[{\tt integrateIns}]{w'_n,w_b,w_e}$. We can see that the order of $\{ w'_1,\ldots,w'_n,w_b \}$ is the same for case $1$ and case $2$. It is obvious that the order of $\{$W-characters of degree $d_{min1}$ before $w_1$ or after $w_2$ in $\sigma \} \cup \{ w_b \}$ is the same for case $1$ and case $2$. Therefore, it is easy to see that the order of $\{$W-characters of degree $d_{min1}$ in $\sigma \} \cup \{ w_b \}$ is the same for case $1$ and case $2$.

\item[-] Else, we know that $w'_n <_{id} w_e <_{id} w_b$, case $1$ recursively calls $\alabelshort[{\tt integrateIns}]{w'_n,w_b,w_2}$, and case $2$ recursively calls $\alabelshort[{\tt integrateIns}]{w_e,w_b,w_2}$. We can see that the order of $\{ w'_1,\ldots$, $w'_n,w_b \}$ is the same for case $1$ and case $2$. Similarly, we can see that the order of $\{$W-characters of degree $d_{min1}$ in $\sigma \} \cup \{ w_b \}$ is the same for case $1$ and case $2$.
\end{itemize}

Let us consider the case when $w_e <_{\sigma_e} w'_1$. Similarly, we can prove that $w_e <_{id} w'_1$. Therefore, we can see that $w_e <_{id} F_1[1] \ldots <_{id} F_1[n]$ and $w_e <_{\sigma_e} F_1[1] \ldots <_{\sigma_e} F_1[n]$.

\begin{itemize}
\setlength{\itemsep}{0.5pt}
\item[-] If $w'_1 <_{id} w'_b$, then, case $1$ and case $2$ work in the same way.

\item[-] Else, if $w_e <_{id} w_b <_{id} w'_1$, case $1$ recursively calls $\alabelshort[{\tt integrateIns}]{w_1,w_b,w'_1}$, and case $2$ recursively calls $\alabelshort[{\tt integrateIns}]{w_e,w_b,w'_1}$. We can see that the order of $\{ w'_1,\ldots,w'_n,w_b \}$ is the same for case $1$ and case $2$. Similarly, we can see that the order of $\{$W-characters of degree $d_{min1}$ in $\sigma \} \cup \{ w_b \}$ is the same for case $1$ and case $2$.

\item[-] Else, we know that $w_b <_{id} w_e <_{id} w'_1$, case $1$ recursively calls $\alabelshort[{\tt integrateIns}]{w_1,w_b,w'_1}$, and case $2$ recursively calls $\alabelshort[{\tt integrateIns}]{w_1,w_b,w_e}$. We can see that the order of $\{ w'_1,\ldots$, $w'_n,w_b \}$ is the same for case $1$ and case $2$. Similarly, we can see that the order of $\{$W-characters of degree $d_{min1}$ in $\sigma \} \cup \{ w_b \}$ is the same for case $1$ and case $2$.
\end{itemize}

\noindent {\bf Possibility $2$:} In $\alabelshort[{\tt integrateIns}]{w_1,w_b,w_2}$ of both case $1$ and case $2$, we have $d_{min} = j_e$.

Similarly, we can prove that, $w_{pe} <_{\sigma_e} w_1 <_{\sigma_e} w_2 <_{\sigma_e} w_{ne}$.

Let $F_2$ be the array $F$ in $\alabelshort[{\tt integrateIns}]{w_1,w_b,w_2}$ of case $2$, and let $d_{min2}=j_e$ be the $d_{min}$ in $\alabelshort[{\tt integrateIns}]{w_1,w_b,w_2}$ of case $2$.

There are three possible subcases of $F_2$.

In the first subcase, assume that $F_2 = w'_1 \cdot \ldots \cdot w'_k \cdot w_e \cdot w'_{k+1} \cdot \ldots \cdot w'_n$. By Lemma \ref{lemma:in F of Wooki, W-characters are degree-dmin-adjacent}, $(w'_1,w'_2),\ldots,(w'_k,w_e),(w_e,w'_{k+1}),\ldots,(w'_{n-1},w'_n)$ are degree-$d_{min2}$-adjacent. By Lemma \ref{lemma:in strings, given two degree-i-adjacent W-characters, they are ordered by id order}, we can see that $w'_1 <_{id} w'_2 \ldots <_{id} w'_k <_{id} w_e <_{id} w'_{k+1} \ldots <_{id} w'_n$.

\begin{itemize}
\setlength{\itemsep}{0.5pt}
\item[-] If $w_b <_{id} w'_k \vee w'_{k+1} <_{id} w_b$, then, case $1$ and case $2$ work in the same way.

\item[-] Else, if $w'_k <_{id} w_b <_{id} w_e <_{id} w'_{k+1}$, case $1$ recursively calls $\alabelshort[{\tt integrateIns}]{w'_k,w_b,w'_{k+1}}$, and case $2$ recursively calls $\alabelshort[{\tt integrateIns}]{w'_k,w_b,w_e}$. We can see that the order of $\{ w'_1,\ldots$, $w'_n,w_b \}$ is the same for case $1$ and case $2$. Similarly, we can see that the order of $\{$W-characters of degree $j_e$ in $\sigma \} \cup \{ w_b \}$ is the same for case $1$ and case $2$.

\item[-] Else, we know that $w'_k <_{id} w_e <_{id} w_b <_{id} w'_{k+1}$, case $1$ recursively calls $\alabelshort[{\tt integrateIns}]{w'_k$, $w_b,w'_{k+1}}$, and case $2$ recursively calls $\alabelshort[{\tt integrateIns}]{w_e,w_b,w'_{k+1}}$. We can see that the order of $\{ w'_1,\ldots,w'_n,w_b \}$ is the same for case $1$ and case $2$. Similarly, we can see that the order of $\{$W-characters of degree $j_e$ in $\sigma \} \cup \{ w_b \}$ is the same for case $1$ and case $2$.
\end{itemize}

In the second subcase, assume that $F_2 = w_e \cdot w'_1 \cdot \ldots \cdot w'_n$. By Lemma \ref{lemma:in F of Wooki, W-characters are degree-dmin-adjacent}, $(w_e,w'_1),\ldots,(w'_{n-1},w'_n)$ are degree-$d_{min2}$-adjacent. By Lemma \ref{lemma:in strings, given two degree-i-adjacent W-characters, they are ordered by id order}, we can see that $w_e <_{id} w'_{k+1} \ldots <_{id} w'_n$.

\begin{itemize}
\setlength{\itemsep}{0.5pt}
\item[-] If $w'_1 <_{id} w_b$, then, case $1$ and case $2$ work in the same way.

\item[-] Else, if $w_e <_{id} w_b <_{id} w'_1$, case $1$ recursively calls $\alabelshort[{\tt integrateIns}]{w_1$, $w_b,w'_1}$, and case $2$ recursively calls $\alabelshort[{\tt integrateIns}]{w_e,w_b,w'_1}$. We can see that the order of $\{ w'_1,\ldots,w'_n,w_b \}$ is the same for case $1$ and case $2$. Similarly, we can see that the order of $\{$W-characters of degree $j_e$ in $\sigma \} \cup \{ w_b \}$ is the same for case $1$ and case $2$.

\item[-] Else, we know that if $w_b <_{id} w_e <_{id} w'_1$, case $1$ recursively calls $\alabelshort[{\tt integrateIns}]{w_1,w_b,w'_1}$, and case $2$ recursively calls $\alabelshort[{\tt integrateIns}]{w_1,w_b,w_e}$. We can see that the order of $\{ w'_1,\ldots$, $w'_n,w_b \}$ is the same for case $1$ and case $2$. Similarly, we can see that the order of $\{$W-characters of degree $j_e$ in $\sigma \} \cup \{ w_b \}$ is the same for case $1$ and case $2$.
\end{itemize}

In the third subcase, assume that $F_2 = w'_1 \cdot \ldots \cdot w'_k \cdot w_e$. By Lemma \ref{lemma:in F of Wooki, W-characters are degree-dmin-adjacent}, $(w'_1,w'_2),\ldots,(w'_k,w_e)$ are degree-$d_{min2}$-adjacent. By Lemma \ref{lemma:in strings, given two degree-i-adjacent W-characters, they are ordered by id order}, we can see that $w'_1 <_{id} w'_2 \ldots <_{id} w'_k <_{id} w_e$.

\begin{itemize}
\setlength{\itemsep}{0.5pt}
\item[-] If $w_b <_{id} w'_k $, then, case $1$ and case $2$ work in the same way.

\item[-] Else, if $w'_k <_{id} w_b <_{id} w_e$, case $1$ recursively calls $\alabelshort[{\tt integrateIns}]{w'_k,w_b,w_2}$, and case $2$ recursively calls $\alabelshort[{\tt integrateIns}]{w'_k,w_b,w_e}$. We can see that the order of $\{ w'_1,\ldots$, $w'_n,w_b \}$ is the same for case $1$ and case $2$. Similarly, we can see that the order of $\{$W-characters of degree $j_e$ in $\sigma \} \cup \{ w_b \}$ is the same for case $1$ and case $2$.

\item[-] Else, we know that $w'_k <_{id} w_e <_{id} w_b$, case $1$ recursively calls $\alabelshort[{\tt integrateIns}]{w'_k$, $w_b,w_2}$, and case $2$ recursively calls $\alabelshort[{\tt integrateIns}]{w_e,w_b,w_2}$. We can see that the order of $\{ w'_1,\ldots,w'_n,w_b \}$ is the same for case $1$ and case $2$. Similarly, we can see that the order of $\{$W-characters of degree $j_e$ in $\sigma \} \cup \{ w_b \}$ is the same for case $1$ and case $2$.
\end{itemize}

\noindent {\bf What we need to prove in induction step:} In an induction step, the induction hypothesis is as follows: There exists degree $j$, such that $j \geq j_e$, and for each degree $i \geq j$, the order of $\{$W-characters of degree $i$ in $\sigma \} \cup \{ w_b \}$ is the same for case $1$ and case $2$. Case1 now recursively calls $\alabelshort[{\tt integrateIns}]{w_1,w_b,w_2}$ for some $w_1$ and $w_2$, case2 now recursively calls $\alabelshort[{\tt integrateIns}]{w_3,w_b,w_4}$ for some $w_3$ and $w_4$, and we have $w_1 \leq_{\sigma_e} w_3$ and $w_4 \leq_{\sigma_e} w_2$. Let $j_1$ and $j_2$ be the minimal degree of $subseq(\sigma,w_1,w_2)$ and $subseq(\sigma_e,w_3,w_4)$, respectively. We already know that $j_1 > j_e$.

Then, we need to prove that one of the following subcases hold:

\begin{itemize}
\setlength{\itemsep}{0.5pt}
\item[-] $j_1 = j_2$, the order of $\{$W-characters of degree $j_1$ in $\sigma \} \cup \{ w_b \}$ is the same for case $1$ and case $2$, case1 now recursively calls $\alabelshort[{\tt integrateIns}]{w'_1,w_b,w'_2}$ for some $w'_1$ and $w'_2$, case2 now recursively calls $\alabelshort[{\tt integrateIns}]{w'_3,w_b,w'_4}$ for some $w'_3$ and $w'_4$, $w'_1 \leq_{\sigma_e} w'_3$ and $w'_4 \leq_{\sigma_e} w'_2$.

\item[-] $j_1 \neq j_2$ (or we can say, $j_2 > j_1$), the order of $\{$W-characters of degree $j_1$ in $\sigma \} \cup \{ w_b \}$ is the same for case $1$ and case $2$, case1 now recursively calls $\alabelshort[{\tt integrateIns}]{w'_1,w_b,w'_2}$ for some $w'_1$ and $w'_2$, case2 do not change and still recursively calls $\alabelshort[{\tt integrateIns}]{w_3,w_b,w_4}$, $w'_1 \leq_{\sigma_e} w_3$ and $w_4 \leq_{\sigma_e} w'_2$.
\end{itemize}

\noindent {\bf Induction situation:} Let us consider the inductive situations.

\begin{itemize}
\setlength{\itemsep}{0.5pt}
\item[-] Assume case $1$ calls $\alabelshort[{\tt integrateIns}]{w'_1,w_b,w'_2}$ and case $2$ calls $\alabelshort[{\tt integrateIns}]{w'_1,w_b,w_e}$. By induction assumption, we can see that, $w_e <_{\sigma_e} w'_2$ and $w_b <_{id} w_e$.

    Let $F'_1$ be the array $F$ in $\alabelshort[{\tt integrateIns}]{w'_1,w_b,w'_2}$ of case $1$, and let $d'_{min1}$ be the $d_{min}$ in $\alabelshort[{\tt integrateIns}]{w'_1,w_b,w'_2}$ of case $1$. Assume that $F'_1 = w''_1 \cdot \ldots \cdot w''_n$. By Lemma \ref{lemma:in F of Wooki, W-characters are degree-dmin-adjacent}, $(w''_1,w''_2),\ldots,(w''_{n-1},w''_n)$ are degree-$d'_{min1}$-adjacent. By Lemma \ref{lemma:in strings, given two degree-i-adjacent W-characters, they are ordered by id order}, we can see that $w''_1 <_{id} w''_2 \ldots <_{id} w''_n$. Since $d'_{min1} > j_e$, according to the definition of degree-$d'_{min1}$-adjacent, we can see that, $w''_n <_{\sigma_e} w_e \vee w_e <_{\sigma_e} w''_1$.

    \begin{itemize}
    \setlength{\itemsep}{0.5pt}
    \item[-] Let us consider the situation of $w_e <_{\sigma_e} w''_1$. We can see that only case $1$ need to concern this round, since in case $2$, there is no W-character with degree $d'_{min1}$ between $w'_1$ and $w_e$.

        By induction assumption, we can see that $w_{pe} <_{\sigma_e} w'_1 <_{\sigma_e} w'_2 <_{\sigma_e} w_{ne}$.

        Let us prove $w_e <_{id} w''_1$ by contradiction. Assume that $w''_1 <_{id} w_e$. Then, it is easy to see that, in the process of doing $\alabelshort[{\tt integrateIns}]{w_{pe},w_e,w_{ne}}$ from $\sigma_b$, in each time of recursively calling {\tt IntegrateIns}, $w''_1$ should never be in array $F$. Since $w_{pe} <_{\sigma_e} w'_1 <_{\sigma_e} w'_2 <_{\sigma_e} w_{ne}$, there exists a W-character $w_x$, such that $w_e <_{\sigma_e} w_x <_{\sigma_e} w''_1$, and in the process of doing $\alabelshort[{\tt integrateIns}]{w_{pe},w_e,w_{ne}}$ from $\sigma_b$, at some time-point we will recursively call $\alabelshort[{\tt integrateIns}]{\_,w_e,w_x}$. Since this time-point is before we reach degree $d'_{min1}$, we can see that the degree of $w_x$ is smaller than $d'_{min1}$. Therefore, $d'_{min1}$ should equal the degree of $w_x$, which brings a contradiction.

        We can see that $w_b <_{id} w_e <_{id} w''_1 <_{id} \ldots <_{id} w''_n$. Case $1$ recursively calls ${\tt integrateIns(}$ $w'_1,w_b,w''_1)$. Case $2$ does not concern this round and still wait in ${\tt integrateIns}(w'_1$, $w_b,w_e)$. It is easy to see that the order of $\{$W-characters of degree $d'_{min1}$ in $\sigma \} \cup \{ w_b \}$ is the same for case $1$ and case $2$.

    \item[-] Let us consider the situation of $w''_n <_{\sigma_e} w_e$. Similarly, we can prove the $w''_n <_{id} w_e$. Therefore, we have $w''_1 <_{id} \ldots <_{id} w''_n <_{id} w_e$. Then,
        \begin{itemize}
        \setlength{\itemsep}{0.5pt}
        \item[-] If $w_b <_{id} w''_n$, then, case $1$ and case $2$ work in the same way.

        \item[-] Else, if $w''_n <_{id} w_b <_{id} w_e$, case $1$ recursively calls $\alabelshort[{\tt integrateIns}]{w''_n,w_b,w'_2}$, and case $2$ recursively calls $\alabelshort[{\tt integrateIns}]{w''_n,w_b,w_e}$. We can see that the order of $\{ w''_1,\ldots,w''_n,w_b \}$ is the same for case $1$ and case $2$. It is easy to see that the order of $\{$W-characters of degree $d'_{min1}$ in $\sigma \} \cup \{ w_b \}$ is the same for case $1$ and case $2$.
        \end{itemize}
    \end{itemize}

\item[-] Assume case $1$ calls $\alabelshort[{\tt integrateIns}]{w'_1,w_b,w'_2}$ and case $2$ calls ${\tt integrateIns}(w_e$, $w_b,w'_2)$. By induction assumption, we can see that, $w_e <_{\sigma_e} w'_2$ and $w_b <_{id} w_e$.

    Let $F'_1$ be the array $F$ in $\alabelshort[{\tt integrateIns}]{w'_1,w_b,w'_2}$ of case $1$, and let $d'_{min1}$ be the $d_{min}$ in $\alabelshort[{\tt integrateIns}]{w'_1,w_b,w'_2}$ of case $1$. Assume that $F'_1 = w''_1 \cdot \ldots \cdot w''_n$. By Lemma \ref{lemma:in F of Wooki, W-characters are degree-dmin-adjacent}, $(w''_1,w''_2),\ldots,(w''_{n-1},w''_n)$ are degree-$d'_{min1}$-adjacent. By Lemma \ref{lemma:in strings, given two degree-i-adjacent W-characters, they are ordered by id order}, we can see that $w''_1 <_{id} w''_2 \ldots <_{id} w''_n$. Since $d'_{min1} > j_e$, according to the definition of degree-$d'_{min1}$-adjacent, we can see that, $w''_n <_{\sigma_e} w_e \vee w_e <_{\sigma_e} w''_1$.

    \begin{itemize}
    \setlength{\itemsep}{0.5pt}
    \item[-] Let us consider the situation of $w_e <_{\sigma_e} w''_1$. Similarly, we can prove the $w_e <_{id} w''_1$. Therefore, we have $w_e <_{id} w''_1 \ldots <_{id} w''_n$. Then,
        \begin{itemize}
        \setlength{\itemsep}{0.5pt}
        \item[-] If $w''_1 <_{id} w_b$, then, case $1$ and case $2$ work in the same way.

        \item[-] Else, if $w_e <_{id} w_b <_{id} w''_1$, case $1$ recursively calls $\alabelshort[{\tt integrateIns}]{w_1,w_b,w''_1}$, and case $2$ recursively calls $\alabelshort[{\tt integrateIns}]{w_e,w_b,w''_1}$. We can see that the order of $\{ w''_1,\ldots,w''_n,w_b \}$ is the same for case $1$ and case $2$. It is easy to see that the order of $\{$W-characters of degree $d'_{min1}$ in $\sigma \} \cup \{ w_b \}$ is the same for case $1$ and case $2$.
        \end{itemize}

    \item[-] Let us consider the situation of $w''_n <_{\sigma_e} w_e$. We can see that only case $1$ need to concern this round, since in case $2$, there is no W-character with degree $d'_{min1}$ between $w_e$ and $w'_2$. Similarly, we can prove the $w''_n <_{id} w_e$.

        We can see that $w''_1 <_{id} \ldots <_{id} w''_n <_{id} w_e <_{id} w_b$. Case $1$ recursively calls ${\tt integrateIns(}$ $w''_n,w_b,w'_2)$. Case $2$ does not concern this round and still wait in ${\tt integrateIns}(w_e$, $w_b,w'_2)$. It is easy to see that the order of $\{$W-characters of degree $d'_{min1}$ in $\sigma \} \cup \{ w_b \}$ is the same for case $1$ and case $2$. It is easy to see that the order of $\{$W-characters of degree $d'_{min1}$ in $\sigma \} \cup \{ w_b \}$ is the same for case $1$ and case $2$.
    \end{itemize}
\end{itemize}

This completes the proof of this lemma. $\qed$
\end {proof}

The following lemma states that, in Wooki algorithm, two effectors that correspond to two ``concurrent'' {\tt addBetween} operations commute.

\begin{lemma}
\label{lemma:in Wooki algorithm, two downstreams of two addBetween operations commute}
Assume from a replica state $\sigma$, we obtain $\sigma_b$ by applying the effector of ${\tt addbwteeen}($ $a,b,c)$ to $\sigma$, and obtain $\sigma_{be}$ by applying the effector of $\alabelshort[{\tt addBetween}]{d,e,f}$ to $\sigma_b$. Assume we obtain $\sigma_e$ by applying the effector of $\alabelshort[{\tt addBetween}]{d,e,f}$ to $\sigma$, and obtain $\sigma_{eb}$ by applying the effector of $\alabelshort[{\tt addBetween}]{a,b,c}$ to $\sigma_e$. Assume that $b \notin \{ d, f \}$ and $e \notin \{ a,c \}$ Then, $\sigma_{be} = \sigma_{eb}$.
\end{lemma}

\begin {proof}
We prove by contradiction. Let $w_a,w_b,w_c,w_d,w_e,w_f$ be the W-character of $a,b,c,d,e,f$, respectively. Assume that $w_b <_{\sigma_{eb}} w_e$ and $w_e <_{\sigma_{be}} w_b$.

By Lemma \ref{lemma:in Wooki algorithm,the order of sigma and b is the same, between insert b and first insert e and then insert b}, we know that the order between W-characters of $\sigma$ and $\{ w_b,w_e \}$ are the same in $\sigma_{be}$ and $\sigma_{eb}$.

Therefore, the only possibility is that in both $\sigma_{be}$ and $\sigma_{eb}$, $w_b$ and $w_e$ are adjacent, and they are in different order in $\sigma_{be}$ and in $\sigma_{eb}$. Let $w_1$ be the right-most W-character that are before $w_b$ in $\sigma_{eb}$.

According to Wooki algorithm, in the process of applying the effector of $\alabelshort[{\tt addBetween}]{a,b,c}$ to $\sigma_e$, the last time of calling recursive method ${\tt integrateIns}$ must be $\alabelshort[{\tt integrateIns}]{w_1,w_b,w_e}$. 
We can see that $w_e$ is not an argument of $\alabelshort[{\tt integrateIns}]{w_a,w_b,w_c}$. Similarly as the proof of Lemma \ref{lemma:in Wooki algorithm,the order of sigma and b is the same, between insert b and first insert e and then insert b}, we can prove that, since we call $\alabelshort[{\tt integrateIns}]{\_,w_b,w_e}$, we must have $w_b <_{id} w_e$. Similarly, for the case of $w_e <_{\sigma_{be}} w_b$, we can prove that $w_e <_{id} w_b$. This implies that $w_b <_{id} w_e \wedge w_e <_{id} w_b$, which is a contradiction. $\qed$
\end {proof}

Then, let us prove that Wooki is \crdtlinearizable{} w.r.t $\specWooki$.

\begin{lemma}
\label{lemma:Wooki is correct}
Wooki is \crdtlinearizable{} w.r.t $\specWooki$.
\end{lemma}

\begin {proof}

A refinement mapping $\refmap$ is given as follows:

Given a replica state $\sigma$ that is a sequence of W-characters. Assume that $\sigma = w_1 \cdot \ldots \cdot w_n$, and for each $i$, $w_i = (id_i,v_i,degree_i,flag_i)$. Then, the refinement mapping $\refmap(\sigma) = (l,T)$, where $l = v_1 \cdot \ldots \cdot v_n$, and $T = \{ v_i \vert flag_i = \mathit{false} \}$.

Our proof proceeds as follows:

\begin{itemize}
\setlength{\itemsep}{0.5pt}
\item[-] By Lemma \ref{lemma:in Wooki algorithm, two downstreams of two addBetween operations commute}, we can see that the effectors of concurrent {\tt addBetween} operations commute. According to Wooki algorithm, it is easy to see that the effector of concurrent {\tt remove} operations commute, since they both set the flags of some W-characters into $\mathit{false}$; Concurrent {\tt addBetween} and a {\tt remove} effectors commute because in this case, the W-character influenced by {\tt remove} are different from the W-character added by the {\tt addBetween}.

    Let us prove $\mathsf{ReplicaStates}$: Since every operation is appended to the linearization when it executes generator it clearly follows, the linearization order is consistent with visibility order. Then, by the causal delivery assumption, the order in which effectors are applied at a given replica is also consistent with the visibility order. Let $\alinord_1$ be the projection of linearization order into labels of effectors applied in a replica $\arep$, and $\alinord_2$ be the order of labels of effectors applied in replica $\arep$. By Lemma \ref{lemma:given two sequence consistent with visibility order, one can be obtained from the other}, $\alinord_2$ can be obtained from $\alinord_1$ by several time of swapping adjacent pair of concurrent operations. We have already proved that effector of concurrent operations commute. Therefore, we know that $\mathsf{ReplicaStates}$ is an inductive invariant.

    Note that, by the causal delivery assumption and the preconditions of ${\tt addBetween}$ and ${\tt remove}$, it cannot happen that an $\alabelshort[{\tt addBetween}]{a,b,c}$ operation adding ${\tt b}$ between ${\tt a}$ and ${\tt c}$ is concurrent with an operation that adds ${\tt a}$ or ${\tt c}$ to the list, i.e., $\alabelshort[{\tt addBetween}]{\_,a,\_}$ or $\alabelshort[{\tt addBetween}]{\_,c,\_}$; or that an $\alabelshort[{\tt addBetween}]{a,b,c}$ operation adding ${\tt b}$ is concurrent with an operation $\alabelshort[{\tt remove}]{b}$ that removes ${\tt b}$. This ensures that reordering concurrent effectors doesn't lead to ``invalid'' replica states such as the replica state does not contains W-character of $a$ or $c$ while the effector requires to put $b$ between $a$ and $c$ (which would happen if $\alabelshort[{\tt addBetween}]{\_,a,\_}$ or $\alabelshort[{\tt addBetween}]{\_,c,\_}$ is delivered before $\alabelshort[{\tt addBetween}]{a,b,c}$), or a replica state does not contain W-character of $b$ when effector requires to remove $b$ (which would happen if $\alabelshort[{\tt remove}]{b}$ is delivered before $\alabelshort[{\tt addAfter}]{\_,b,\_}$).

\item[-] Let us prove $\mathsf{Refinement}$:
    \begin{itemize}
    \setlength{\itemsep}{0.5pt}
    \item[-] Assume $\refmap(\sigma) = (l,T)$. If $\sigma'$ is obtained from $\sigma$ by applying an effector $\delta$ produced by an operation $\alabelshort[{\tt addBetween}]{a,b,c}$. By the causal delivery assumption, we can see that the W-character of $a$ and $c$ is already in $\sigma$, and then, $a,c \in l$. It is obvious that the W-character of $a$ is before the W-character of $c$ in $\sigma$, and then, $a$ is before $c$ in $l$. By the Wooki algorithm, we can see that $\sigma'$ is obtained from $\sigma$ by inserting a W-character $(\_,b,\_,\mathit{true})$ of $b$ at some position between the W-character of $a$ and the W-character of $c$. Let $\refmap(\sigma') = (l',T')$. It is obvious that $T=T'$, and $l'$ is obtained from $l$ by adding $b$ at some position between $a$ and $c$. Thus, we have $\refmap(\sigma) \specarrow{\alabelshort[{\tt addBetween}]{a,b,c}} \refmap(\sigma')$.

    \item[-] Assume $\refmap(\sigma) = (l,T)$. If $\sigma'$ is obtained from $\sigma$ by applying an effector $\delta$ produced by an operation $\alabelshort[{\tt remove}]{a}$. By the causal delivery assumption, we can see that a W-character $w_a$ of $a$ is already in $\sigma$, and then, $a \in l$. By the Wooki algorithm, we can see that $\sigma'$ is obtained from $\sigma$ by setting the flag of $w_a$ into $\mathit{false}$. Let $\refmap(\sigma) = (l',T')$. It is obvious that $l=l'$, and $T' = T \cup \{ a \}$. Thus, we have $\refmap(\sigma) \specarrow{\alabelshort[{\tt remove}]{a}} \refmap(\sigma')$.

    \item[-] Assume we do $\alabellong[{\tt read}]{}{s}{}$ on replica state $\sigma$. Assume $\sigma = w_1 \cdot \ldots \cdot w_n$, and for each $i$, $w_i = (id_i,v_i,degree_i,flag_i)$. Then, $s$ is the projection of $v_1 \cdot \ldots \cdot v_n$ into values with flag $\mathit{true}$. Assume $\refmap(\sigma) = (l,T)$. We can see that $l = v_1 \cdot \ldots \cdot v_n$ and $T = \{ v_i \vert flag_i = \mathit{false} \}$. Thus, we have $\refmap(\sigma) \specarrow{\alabellong[{\tt read}]{}{s}{}} \refmap(\sigma)$.
    \end{itemize}

\item[-] We have already prove that $\mathsf{ReplicaStates}$ is an inductive invariant and $\mathsf{Refinement}$ holds. Then, similarly as in \sectionautorefname \ref{subsec:time order of execution as linearization}, we can prove that $\mathsf{\CRDTLinshort{}}$ is an inductive invariant.
\end{itemize}

This completes the proof of this lemma. $\qed$
\end {proof}

\section{Implementation and Proof of Tree-Doc}
\label{sec:implementation and proof of tree-doc}

\subsection{The Tree-Doc Algorithm}
\label{subsec:the Tree-Doc algorithm}

The tree-doc algorithm of \cite{DBLP:conf/icdcs/PreguicaMSL09,DBLP:journals/corr/abs-0710-1784} is given in Listing~\ref{lst:Tree-Doc algorithm}. The tree-doc algorithm of Listing~\ref{lst:Tree-Doc algorithm} can be considered as the tree-doc with ``unique disambiguator'' in \cite{DBLP:conf/icdcs/PreguicaMSL09,DBLP:journals/corr/abs-0710-1784}, and we do not consider optimizations, such as balancing tree. To make our introduction of tree-doc algorithm clear, we introduce the notion of T-characters, T-identfiers, and T-identifier order.

In replica state of each replica, tree-doc algorithm stores the list as a set of T-characters. A T-character $w$ is a tuple $(v,tid,gv)$, where $v$ is the value of $w$, and $tid$ is a T-identifier of $w$. Here $gv$ is a ``ghost field'' of $w$ and also stores the value of $w$. The reason of introducing $gv$ is that, the $\alabelshort[{\tt remove}]{a}$ method will find the T-character $w$ of value $a$ and set the value of $w$ into a special value $nil$; while in the proof, it will be convenient if we can always remember the value of $w$. Therefore, when $w$ is initialized, the value of $w$ is put both in $v$ and $gv$, while the value in $gv$ is never changed afterward. Note that $gv$ is a ``ghost field'' and is only used for proof. It does not influence the execution.

T-identifiers are the unique identifiers of T-characters. The requirement of T-identifier are as follows: there should be a total order $<_t$ over the T-identifiers, and the T-identifiers are designed to satisfy a ``dense'' property: for each T-identifier $tid_1,tid_2$, assume $tid_1 <_t tid_2$, then, there must exists a T-identifier $tid_3$, such that $tid_1 <_t tid_2 <_t tid_3$.

To satisfy this ``dense'' property, intuitively, tree-doc algorithm uses paths of a binary tree as T-identifier, and the order $<_t$ can be considered as ``walking the tree in infix order''. However, this is insufficient for concurrent edit, as users might concurrently insert two different T-character at the same position of a binary tree. To address this issue, when there are multiple T-character ``at a same position of the binary tree'', we call the set of these T-characters a major node, and call each of them a minor node.

For example, \autoref{fig:an execution of tree-doc, the local state of replica r1 after execution, and the T-identifiers of T-characters} shows an execution of tree-doc. Let $w_a,\ldots,w_k$ be the T-character of $a,\ldots,k$, respectively. We can see that $w_b,w_c,w_d$ are all inserted as ``left son'' of $w_a$. Therefore, the set $\{ w_b,w_c,w_d \}$ is a major node and we use a box containing $w_b,w_c,w_d$ to emphasize a major node. Similarly, the set $\{ w_e,w_f \}$, $\{ w_g,w_h \}$, $\{ w_j,w_k \}$ are also major nodes, respectively.

\begin{figure}[!h]
  \centering
  \includegraphics[width=0.85 \textwidth]{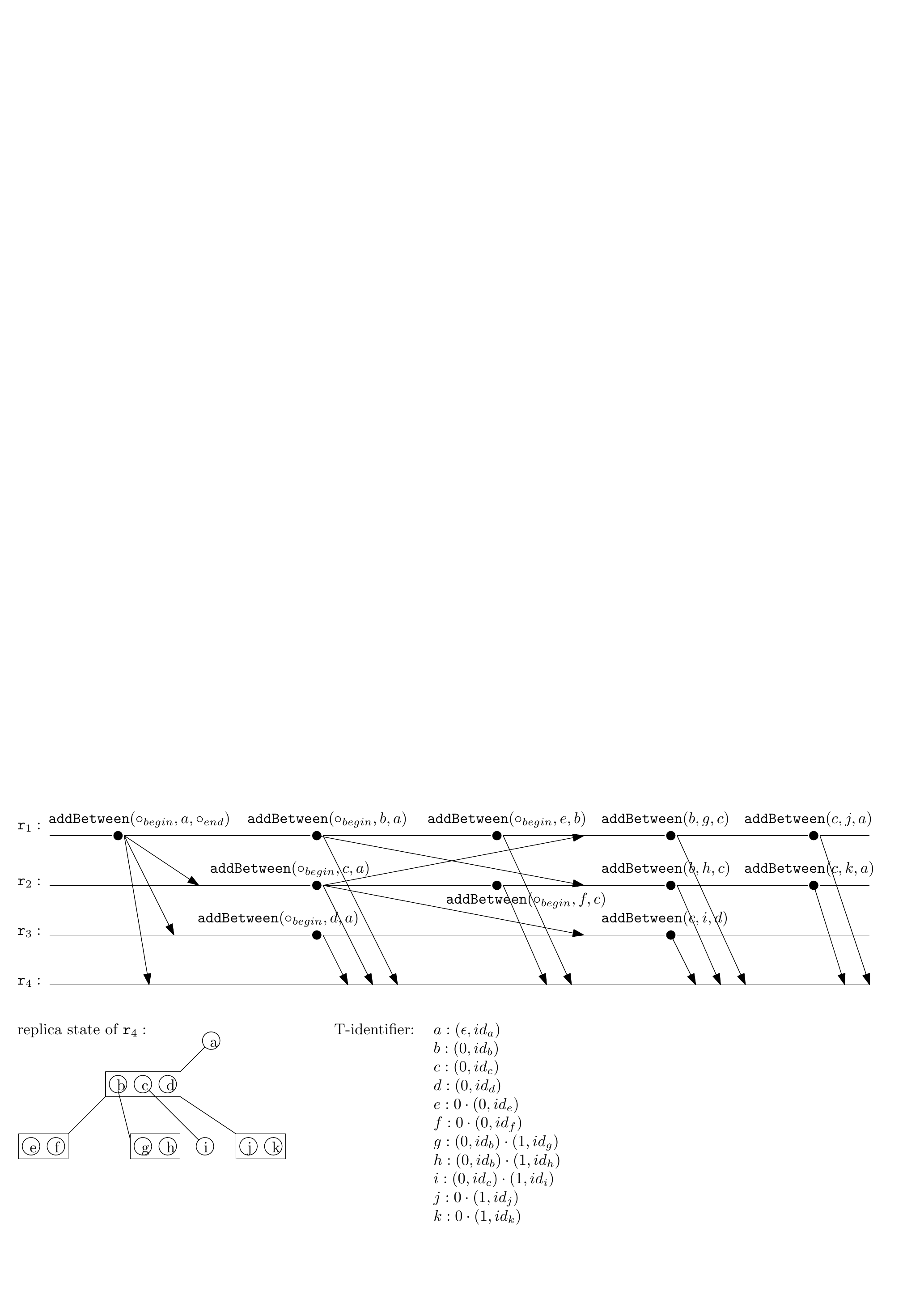}
\vspace{-10pt}
  \caption{An execution of tree-doc, the replica state of replica $\arep_1$ after execution, and the T-identifiers of T-characters.}
  \label{fig:an execution of tree-doc, the local state of replica r1 after execution, and the T-identifiers of T-characters}
\end{figure}

\noindent {\bf Definition of T-identifier:} A T-identifier $tid$ of a T-character $w$ is a sequence $tid = c_1 \cdot \ldots \cdot c_n \cdot (p,id)$, where $id$ is a unique identifier, and for each $1 \leq i \leq n$, $c_i \in \{ 0,1,(0,id'),(1,id')\}$. We call such $id$ the disambiguator of $w$, and require disambiguator to be unique for each T-character. Each $c_i$ of $tid$ represent how to choose the node in the $i$-th step of the ``path''. If $c_i=0$ (resp., $c_i=1$), then the node in the $i$-th step is a major node and is a ``left son'' (resp., ``right son'') of the current node; Else, $c_i = (0,id')$ (resp., $c_i = (1,id')$), and this represents that the node in the $i$-th step is a minor node with disambiguator $id'$, and it is a ``left son'' (resp., ``right son'') of current node. We require that for each $i$, if $c_i = (0,\_)$, then $c_{i+1} = 1 \vee (1,\_)$. The path to the root is fixed to be $\epsilon$.

For example, in \autoref{fig:an execution of tree-doc, the local state of replica r1 after execution, and the T-identifiers of T-characters}, the T-identifier of $w_f$ is $0 \cdot (0,id_f)$, which represent that $w_f$ is a ``left son'' of the root's ``left son'' (could be a major node); the T-identifier of $w_g$ is $(0,id_b) \cdot (1,id_g)$, which represent that $w_g$ is a ``right son'' of the minor node $w_b$.

\noindent {\bf Definition of T-tree:} We say a set $T$ of T-characters is a T-tree, if:

\begin{itemize}
\setlength{\itemsep}{0.5pt}
\item[-] There exists only one T-character of $T$ with T-identifier $\epsilon$.

\item[-] Each ghost value of T-character is unique, and each T-character has a unique disambiguator. For each T-character $w=(v,tid,gv) \in T$, $v = nil \vee v = gv$ holds.

\item[-] For each T-character $w=(\_,tid,\_) \in T$, assume $tid = c_1 \cdot \ldots \cdot c_n \cdot (p,id)$. We require that, 

    \begin{itemize}
    \setlength{\itemsep}{0.5pt}
    \item[-] If $c_n \in \{ 0,1 \}$, then there exists a T-character $w' = (\_, c_1 \cdot \ldots \cdot c_{n-1} \cdot (c_n,\_), \_) \in T$,

    \item[-] Else, if $c_n = (p_n,id_n)$, then there exists a T-character $w' = (\_, c_1 \cdot \ldots \cdot c_{i-1} \cdot (p_n,id_n), \_) \in T$, and there exists another T-character $w'' = (\_, c_1 \cdot \ldots \cdot c_{n-1} \cdot (p_n,id'_n), \_) \in T$ and $id_n < id'_n$. Or we can say, $w'$ and $w''$ is contained in a same major node, and the disambiguator of $w'$ is less that that of $w''$.
    \end{itemize}
\end{itemize}

Note that, as shown in \autoref{fig:an execution of tree-doc, the local state of replica r1 after execution, and the T-identifiers of T-characters}, a T-tree may be not a binary tree.

\noindent {\bf Definition of T-identifier order:} We say that the T-identifier $tid_1$ is a descendant of the T-identifier $tid_2$, if $tid_2 = c_1 \cdot \ldots \cdot c_n \cdot (p,id)$, and $tid_1 = c_1 \cdot \ldots \cdot c_n \cdot p \cdot l \vee tid_1 = c_1 \cdot \ldots \cdot c_n \cdot (p,id) \cdot l$. Given a T-tree, the order $<_t$, called T-identifier order of this T-tree, is defined as follows: $tid_1 <_t tid_2$, if one of the following cases holds

\begin{itemize}
\setlength{\itemsep}{0.5pt}
\item[-] ($itd_2$ is a descendant of $tid_1$): If $tid_1 = c_1 \cdot \ldots \cdot c_n \cdot (p,id)$, $(tid_2 = c_1 \cdot \ldots \cdot c_n \cdot p \cdot x \cdot l) \vee (tid_2 = c_1 \cdot \ldots \cdot c_n \cdot (p,id) \cdot x \cdot l)$, and $x=1 \vee x=(1,\_)$.

\item[-] ($itd_1$ is a descendant of $tid_2$): If $tid_2 = c_1 \cdot \ldots \cdot c_n \cdot (p,id)$, $tid_1 = c_1 \cdot \ldots \cdot c_n \cdot p \cdot x \cdot l$, and $x=0 \vee x=(0,\_)$.

\item[-] (Otherwise): If $tid_1 = c_1 \cdot \ldots \cdot c_n \cdot x_1 \cdot x_2 \cdot l_1$, $tid_2 = c_1 \cdot \ldots \cdot c_n \cdot y_1 \cdot y_2 \cdot l_2$, and one of the following cases holds for some $p \in \{ 0,1 \}$ and $id,id'$:

    \begin{itemize}
    \setlength{\itemsep}{0.5pt}
    \item[-] $x_1=p$, $x_2 \in \{ 0, (0,\_) \}$, and $y_1=(p,\_) \vee (y_1=p \wedge y_2 \in \{ 1, (1,\_) \} )$,

    \item[-] $x_1=(p,id)$, $x_2 \cdot l_1 = \epsilon$, 
        and $y_1 = (p,id') \wedge id<id'$,

    \item[-] $x_1=(p,id)$, $x_2 \neq \epsilon$, and $(y_1 = (p,id') \wedge id<id') \vee (y_1=p \wedge y_2 \in \{ 1, (1,\_) \})$.
    \end{itemize}
\end{itemize}

We say that T-identifiers $c_1 \cdot \ldots \cdot c_k \cdot (p,id)$ and $c'_1 \cdot \ldots \cdot c'_n \cdot (p',id')$ is in a same major node, if $k=n$, $p=p'$, and for each $i$, we have $c_i = c'_i$. Intuitively, the major node's left descendant is before any minor node and the descendant of such minor node in $<_t$; minor node are ordered by disambiguator in $<_t$; given minor node $n_a,n_b$ of a same major node, such that the disambiguator of $n_a$ is less than that of $n_b$, then $n_a$ and the descendant of $n_a$ is before $n_b$ and the descendant of $n_b$ in $<_t$; $n_a$ and the descendant of $n_a$ is before the major node's right descendant in $<_t$.

We define the following function for a set $tree$ of T-charcaters:

\begin{itemize}
\setlength{\itemsep}{0.5pt}
\item[-] $contains(tree,a)$ returns true if there exists a T-character $(a,\_,a)$ in $tree$.

\item[-] $getTchar(tree,a)$ returns the T-character whose value is $a$ in $tree$.

\item[-] $removeTChar(tree,tid)$ will find the T-character with T-identifier $tid$ in $tree$, and then set the value of this T-character into $nil$.

\item[-] $values(tree)$ returns a sequence, which is obtained by walking the tree according to $<_t$ orders and consider only values of T-characters whose value are not $nil$.
\end{itemize}

\begin{figure}[!h]
\begin{lstlisting}[frame=top,caption={Pseudo-code of Tree-Doc algorithm},
captionpos=b,label={lst:Tree-Doc algorithm}]
  payload Set @|$tree$|@
  initial @|$tree$|@ = @|$\emptyset$|@
  initial lin = @|$\epsilon$|@

  addBetween(a,b,c) :
    generator :
      precondition :  @|$contains(tree,a) \wedge contains(tree,c)$|@, @|$tid_a <_t tid_c$|@, where @|$tid_a$|@ and @|$tid_c$|@ are the T-identifiers of T-characters with value a and c in @|$tree$|@, respectively.
      let @|$tid$|@ be a T-identifier of a T-character in @|$tree$|@, such that, @|$tid_a <_t tid \leq_t tid_c$|@, @|$ \neg \exists (\_,tid',\_) \in tree, tid_a <_t tid' <_t tid$|@
      let id = getTimestamp()
      let @|$w_b$|@ = @|$(b,newID(tid_a,tid,id),b)$|@
      //@ lin = lin@|$\,\cdot\,$|@addBetween(a,b,c)
    effector(@|$w_b$|@) :
      @|$tree$|@ = @|$tree \cup \{ w_b \}$|@

  remove(a) :
    generator :
      precondition : @|$contains(tree,a)$|@
      assume @|$getTchar(tree,a) = (a,tid,a)$|@
      //@ lin = lin@|$\,\cdot\,$|@remove(a)
    effector(tid) :
      @|$removeTChar(tree,tid)$|@

  read() :
    let s = @|$values(string_s)$|@
    //@ lin = lin@|$\,\cdot\,$|@(read()@|$\Rightarrow$|@s)
    return s

  newID(@|$tid_1,tid_2,id$|@)
      precondition : @|$tid_1 <_t tid_2$|@, and @|$\neg \exists (\_,tid,\_) \in tree, tid_1 <_t tid <_t tid_2$|@
    if @|$tid_2$|@ is a descendant of @|$tid_1$|@
      assume @|$tid_2 = l_2 \cdot (p,id_2)$|@
      return @|$l_2 \cdot p \cdot (0,id)$|@
    else if @|$tid_1$|@ is a descendant of @|$tid_2$|@
      assume @|$tid_1 = l_1 \cdot (p,id_1)$|@
      return @|$l_1 \cdot p \cdot (1,id)$|@
    else if @|$tid_1 = l\cdot (p,id_1)$|@, @|$tid_2 = l\cdot (p,id_2)$|@, and @|$id_1 < id_2$|@
      return @|$l \cdot (p,id_1) \cdot (1,id)$|@
    else
      assume @|$tid_1 = l_1 \cdot (p,id_1)$|@
      return @|$l_1 \cdot p \cdot (1,id)$|@
\end{lstlisting}
\end{figure}

$\\$ $\\$

The payload of each replica is a set $tree$ of T-characters. Later we will prove that during the execution, each replica state $tree$ is a T-tree.

To do $\alabelshort[{\tt addBetween}]{a,b,c}$, we first ensure $tree$ contains T-characters of $a$ and $c$, $tid_a <_t tid_c$, where $tid_a$ and $tid_c$ are the T-identifiers of T-characters with value a and c in $tree$, respectively. Then, we find a T-identifier $tid$ of some T-character of $tree$, such that $tid_a <_t tid \leq_t tid_c$, and $tid_a$ and $tid$ are adjacent w.r.t $<_t$. Then, we calls method $newID(tid_a,tid,id)$ to generate a new T-identifier that is with disambiguator $id$ and its T-identifier is between $tid_a$ and $tid$ w.r.t $<_t$, and use this T-identifier to generate a new T-character $w_b$, and put $w_b$ into $tree$.

$newID(tid_1,tid_2,id)$ require that $tid_1$ and $tid_2$ are adjacent w.r.t $<_t$, and it will generate a T-identifier $tid'$, such that $tid_1 <_t tid' <_t tid_2$.

To do $\alabelshort[{\tt remove}]{a}$, we first find the T-identifier $tid$ of T-character with value $a$ in $tree$, and then set the value of T-character with T-identifier $tid$ to be $nil$. To do ${\tt read}$, we return $values(tree)$.

\subsection{Proof of Tree-Doc}
\label{subsec:proof of tree-doc}

The following lemma state that, the T-identifier order $<_t$ of a T-tree is a total and acyclic order.

\begin{lemma}
\label{lemma:the <t order is a total and acyclic order}
Given a T-tree $tree$ and the T-identifier order $<_t$ of $tree$, then, $<_t$ is a total and acyclic order.
\end{lemma}

\begin {proof}
Let us prove that $<_t$ is a total order. Given T-identifiers $tid_1 = c_1 \cdot \ldots \cdot c_n \cdot (p,id)$ and $tid_2 = c'_1 \cdot \ldots \cdot c'_m \cdot (p',id')$.

\begin{itemize}
\setlength{\itemsep}{0.5pt}
\item[-] If $tid_1$ is a descendant of $tid_2$, then we have $tid_1 <_t tid_2 \vee tid_2 <_t tid_1$. Similarly, if $tid_2$ is a descendant of $tid_1$, then we have $tid_1 <_t tid_2 \vee tid_2 <_t tid_1$.

\item[-] If $tid_1$ is not a descendant of $tid_1$ and $tid_1$ is not a descendant of $tid_2$. We can see that there exists $c''_1,\ldots,c''_k$, such that $tid_1 = c''_1 \cdot \ldots \cdot c''_k \cdot x_1 \cdot x_2 \cdot l_1$, $tid_2 = c''_1 \cdot \ldots \cdot c''_k \cdot y_1 \cdot y_2 \cdot l_2$, and $x_1 \neq y_1$. Then,

    \begin{itemize}
    \setlength{\itemsep}{0.5pt}
    \item[-] If $x_1 \in \{ 0, (0,\_) \}$ and $y_1 \in \{ 1, (1,\_) \}$, then $tid_1 <_t tid_2$.

    \item[-] Else, if $x_1=p_u$ and $y_1 = (p_u,id_u)$ for some $p_u \in \{ 0,1 \}$. Since $tid_1$ is not a descendant of $tid_1$ and $tid_1$ is not a descendant of $tid_2$, we can see that $x_2 \neq \epsilon \wedge y_2 \neq \epsilon$. Then,

        \begin{itemize}
        \setlength{\itemsep}{0.5pt}
        \item[-] If $x_2 \in \{ 0, (0,\_) \}$, we have $tid_1 <_t tid_2$,
        \item[-] Else, if $x_2 \in \{ 1, (1,\_) \}$, we have $tid_2 <_t tid_1$.
        \end{itemize}

    \item[-] Else, $x_1 = (p_u,id_u)$, $y_1 = (p_u,id'_u)$, $p \in \{0,1 \}$ and $id_u \neq id'_u$. Then,

        \begin{itemize}
        \setlength{\itemsep}{0.5pt}
        \item[-] If $id_u < id'_u$, we have $tid_1 <_t tid_2$,
        \item[-] Else, if $id'_u < id_u$, we have $tid_2 <_t tid_1$.
        \end{itemize}
    \end{itemize}
\end{itemize}

Therefore, $<_t$ is a total order.

We prove that $<_t$ is acyclic by contradiction. Assume that there exist $(\_,tid_1,\_), (\_,tid_2,\_) \in T$, such that $(tid_1 <_t tid_2) \wedge (tid_2 <_t tid_1)$.

It is easy to see that $tid_1$ is not a descendant of $tid_2$, and $tid_2$ is not a descendant of $tid_1$. Then, as discussed above, we can see that we can not have $(tid_1 <_t tid_2) \wedge (tid_2 <_t tid_1)$, which brings a contradiction. Therefore, it is easy to see that $<_t$ is acyclic. This completes the proof of this lemma. $\qed$
\end {proof}

The following lemma state that, if we add a T-character (an effector) into a T-tree, then, we obtain another T-tree.

\begin{lemma}
\label{lemma:if we add a T-character (an effector) into a T-tree, then we obtain another T-tree}
Given an effector $w_b = (b,tid_b,b)$ of an operation $\alabel = \alabelshort[{\tt addBetween}]{a,b,c}$, assume $\alabel$ calls $\alabelshort[{\tt newID}]{tid_a,tid,id}$ to generate effector $w_b$. Given a T-tree $tree$, assume that $(\_,tid_a,a),(\_,tid,\_) \in tree$, no T-character of $tree$ has ghost value $b$, no T-character of $tree$ has disambiguator $id$. Then, $tree \cup \{ w_b \}$ is a T-tree.
\end{lemma}

\begin {proof}
Assume that $tid_a = c_1 \cdot \ldots \cdot c_n \cdot (p_a,id_a)$, $tid = c'_1 \cdot \ldots \cdot c'_m \cdot (p',id')$. Then, according to the method {\tt newID}, we can see that $tid_b \in \{ c'_1 \cdot \ldots \cdot c'_m \cdot p' \cdot (0,id), c_1 \cdot \ldots \cdot c_n \cdot p_a \cdot (1,id), c_1 \cdot \ldots \cdot c_n \cdot (p_a,id_a) \cdot (1,id) \}$. 
Then, it is easy to see that $tree \cup \{ w_b \}$ is a T-tree. $\qed$
\end {proof}

The following lemma state that, when we do $\alabelshort[{\tt addBetween}]{a,b,c}$ and generate an effector $(b,tid_b,b)$, then $tid_b$ is between the T-identifiers of the T-characters of $a$ and $c$.

\begin{lemma}
\label{lemma:when we do addBetween(a,b,c) and generate an effector (b,tid_b,b) with newID(tid_a,tid_x,id), then tid_b is between tid_a and tid_x}
Assume the current replica state is a T-tree $tree$, we do an operation $\alabel = {\tt addBetween}($ $a,b,c)$ and generate an effector $w_b = (b,tid_b,b)$, and this process calls $\alabelshort[{\tt newID}]{tid_a,tid,id}$. Let $<_t$ be the T-identifier order of the T-tree $tree \cup \{ w_b \}$. Then, we have that, $tid_a <_t tid_b <_t tid$.
\end{lemma}

\begin {proof}
Since we assume each value is put only once, and $\alabelshort[{\tt getTimestamp}]{}$ always return a unique disambiguator, by Lemma \ref{lemma:if we add a T-character (an effector) into a T-tree, then we obtain another T-tree}, we can see that $tree \cup \{ w_b \}$ is a T-tree.

Let us consider all possible cases of $tid_a$ and $tid$, and shows that $tid_a <_t tid_b <_t tid$.

\begin{itemize}
\setlength{\itemsep}{0.5pt}
\item[-] If $tid$ is a descendant of $tid_a$: 
    Assume that $tid = l_2 \cdot (p_2,id_2)$, then, $tid_b = l_2 \cdot p_2 \cdot (0,id)$. 
    Then, it is easy to see that $tid_a <_t tid_b <_t tid$.

\item[-] If $tid_a$ is a descendant of $tid$: 
    Assume that $tid_a = l_1 \cdot (p_1,id_1)$, then, $tid_b = l_1 \cdot p_1 \cdot (1,id)$. 
    Then, it is easy to see that $tid_a <_t tid_b <_t tid$.

\item[-] Otherwise, there are two possibilities:
    \begin{itemize}
    \setlength{\itemsep}{0.5pt}
    \item[-] If $tid_a = l \cdot (p,id_1)$, $tid = l \cdot (p,id_2)$, and $id_1 < id_2$: 
    Then, $tid_b = l \cdot (p,id_1) \cdot (1,id)$. 
    Then, it is easy to see that $tid_a <_t tid_b <_t tid$.

    \item[-] There exists $tid_0$, such that $tid_0 = l \cdot (p,id_0)$, $tid = l \cdot (p,id_2)$, $id_0 < id_2$, and $tid_a$ is a descendant of $tid_0$: 
    Assume that $tid_a = l \cdot (p,id_0) \cdot l_2 \cdot (p',id'_1)$, then, $tid_b = l \cdot (p,id_0) \cdot l_2 \cdot p' \cdot (1,id)$. 
    Then, it is easy to see that $tid_a <_t tid_b <_t tid$.
    \end{itemize}

\end{itemize}

This completes the proof of this lemma. $\qed$
\end {proof}

Then, let us prove that tree-doc is \crdtlinearizable{} w.r.t $\specWooki$.

\begin{lemma}
\label{lemma:tree-doc is correct}
Tree-doc is \crdtlinearizable{} w.r.t $\specWooki$.
\end{lemma}

\begin {proof}

Let us first prove that, during the executions, each replica state is a T-tree. Recall that each value is put only once, and $\alabelshort[{\tt getTimestamp}]{}$ always return a unique disambiguator.

\begin{itemize}
\setlength{\itemsep}{0.5pt}
\item[-] The initial replica state for each replica is obviously a T-tree.

\item[-] When replica $\arep$ do generator of an operation $\alabel = \alabelshort[{\tt addBetween}]{a,b,c}$ and then apply its effector: Assume the effector of $\alabel$ is $(b,tid_b,b)$, the replica state of replica $\arep$ before $(b,tid_b,b)$ is applied is $\sigma$, and we call $\alabelshort[{\tt newID}]{tid_a,tid,id}$ to generate $tid_b$. By tree-doc algorithm, it is obvious that, $(\_,tid_a,a),(\_,$ $tid,\_) \in \sigma$. It is obvious that after we apply the effector, the replica state of replica $\arep$ is $\sigma \cup \{ (b,tid_b,b) \}$. Then, by Lemma \ref{lemma:if we add a T-character (an effector) into a T-tree, then we obtain another T-tree}, we can see that $\sigma \cup \{ (b,tid_b,b) \}$ is a T-tree.

\item[-] When a replica $\arep$ apply effector of an operation $\alabel = \alabelshort[{\tt addBetween}]{a,b,c}$ originated in a different replica $\arep'$: Assume the effector of $\alabel$ is $(b,tid_b,b)$, and the replica state of replica $\arep$ before $(b,tid_b,b)$ is applied is $\sigma$. Assume we call $\alabelshort[{\tt newID}]{tid_a,tid,id}$ to generate $tid_b$. By tree-doc algorithm, it is obvious that, $(,tid_a,a),(\_,$ $tid,\_)$ is in the replica state of $\arep'$ when $(b,tid_b,b)$ is applied in replica $\arep'$. By the causal delivery assumption, we can see that $(\_,tid_a,a),(\_,$ $tid,\_)$ is in the replica state of replica $\arep$ before $(b,tid_b,b)$ is applied. It is obvious that after we apply the effector in replica $\arep$, the replica state is $\sigma \cup \{ (b,tid_b,b) \}$. Then, by Lemma \ref{lemma:if we add a T-character (an effector) into a T-tree, then we obtain another T-tree}, we can see that $\sigma \cup \{ (b,tid_b,b) \}$ is a T-tree.
\end{itemize}

Therefore, during the executions, each replica state is a T-tree.

A refinement mapping $\refmap$ is given as follows: Given a replica state $\sigma$ that is a T-tree. Assume that $\sigma = \{ w_1,\ldots,w_n\}$, and for each $i$, $w_i = (v_i,tid_i,gv_i)$. Then, the refinement mapping $\refmap(\sigma) = (l,T)$, where $l$ is obtained by walking $\sigma$ and read ghost value of each T-characters according to the T-identifier order $<_t$ of $\sigma$, and $T = \{ gv_i \vert v_i = nil \}$. By Lemma \ref{lemma:the <t order is a total and acyclic order}, we know that $<_t$ is a total and acyclic order. Therefore, our construction of $\refmap$ has definition.

Our proof proceeds as follows:

\begin{itemize}
\setlength{\itemsep}{0.5pt}
\item[-] Since the effector of {\tt addBetween} do set union, we can see that the effectors of concurrent {\tt addBetween} operations commute. According to tree-doc algorithm, it is easy to see that the effector of concurrent {\tt remove} operations commute, since they both set the value of some T-characters into $nil$. Concurrent {\tt addBetween} and a {\tt remove} effectors commute, since the T-character changed by {\tt remove} are different from the T-character added by the {\tt addBetween}.

    Let us prove $\mathsf{ReplicaStates}$: Since every operation is appended to the linearization when it executes generator it clearly follows, the linearization order is consistent with visibility order. Then, by the causal delivery assumption, the order in which effectors are applied at a given replica is also consistent with the visibility order. Let $\alinord_1$ be the projection of linearization order into labels of effectors applied in a replica $\arep$, and $\alinord_2$ be the order of labels of effectors applied in replica $\arep$. By Lemma \ref{lemma:given two sequence consistent with visibility order, one can be obtained from the other}, $\alinord_2$ can be obtained from $\alinord_1$ by several time of swapping adjacent pair of concurrent operations. We have already proved that effector of concurrent operations commute. Therefore, we know that $\mathsf{ReplicaStates}$ is an inductive invariant.

    Note that, by the causal delivery assumption and tree-doc algorithm, it cannot happen that an $\alabelshort[{\tt addBetween}]{a,b,c}$ operation adding ${\tt b}$ between ${\tt a}$ and ${\tt c}$ is concurrent with an operation that adds ${\tt a}$ or ${\tt c}$ or ${\tt x}$ to the list, where the T-identifier of T-character of $x$ is $tid_x$ and we use $\alabelshort[{\tt newID}]{tid_a,tid_x,id}$ to generate the effector of $\alabelshort[{\tt addBetween}]{a,b,c}$. It cannot happen that an $\alabelshort[{\tt addBetween}]{a,b,c}$ operation adding ${\tt b}$ is concurrent with an operation $\alabelshort[{\tt remove}]{b}$ that removes ${\tt b}$. This ensures that reordering concurrent effectors doesn't lead to ``invalid'' replica states such as the replica state does not contains T-character of $a$ or $c$ or $x$ while the effector requires to put $b$ between $a$ and $c$ and use $\alabelshort[{\tt newID}]{tid_a,tid_x,id}$ to generate the effector of $\alabelshort[{\tt addBetween}]{a,b,c}$ (which would happen if $\alabelshort[{\tt addBetween}]{\_,a,\_}$ or $\alabelshort[{\tt addBetween}]{\_,c,\_}$ or $\alabelshort[{\tt addBetween}]{\_,x,\_}$ is delivered before $\alabelshort[{\tt addBetween}]{a,b,c}$), or a replica state does not contain T-character of $b$ when effector requires to remove $b$ (which would happen if $\alabelshort[{\tt remove}]{b}$ is delivered before $\alabelshort[{\tt addAfter}]{\_,b,\_}$).

\item[-] Let us prove $\mathsf{Refinement}$:
    \begin{itemize}
    \setlength{\itemsep}{0.5pt}
    \item[-] Assume $\refmap(\sigma) = (l,T)$. If $\sigma'$ is obtained from $\sigma$ by applying an effector $\delta$ produced by an operation $\alabel = \alabelshort[{\tt addBetween}]{a,b,c}$. Assume the effector is $w_b = (b,tid_b,b)$. Obviously, $\sigma' = \sigma \cup \{ w_b \}$ and $\sigma'$ is also a T-tree. We need to prove that $\refmap(\sigma) \specarrow{\alabelshort[{\tt addBetween}]{a,b,c}} \refmap(\sigma')$.

        Assume in the source replica of $\alabel$, it calls $\alabelshort[{\tt newID}]{tid_a,tid_x,id}$ to generate $tid_b$. Assume that $\alabel$ happens on replica $\arep$. Let $tid_c$ be the T-identifier of T-character with value $c$ in the source replica of $\alabel$, and let $<'_t$ be the T-identifier order in the replica $\arep$ after $w_b$ is applied. By Lemma \ref{lemma:when we do addBetween(a,b,c) and generate an effector (b,tid_b,b) with newID(tid_a,tid_x,id), then tid_b is between tid_a and tid_x}, we can see that $tid_a <'_t tid_b <'_t tid_x$. According to tree-doc algorithm, we can see that $tid_a <'_t tid_x <'_t tid_c$.

        It is easy to see that, given T-trees $tree_1$ and $tree_2$, if the T-identifiers of T-characters of $tree_1$ is a subset of that of $tree_2$, then the T-identifier order of $tree_1$ is a subset of that of $tree_2$.

        Let $tree_b$ be the replica state in replica $\arep$ after $w_b$ is applied. By the causal delivery assumption, we can see that all T-characters (effectors) of $tree_b$ have already been applied in $\sigma'$. Let $<_{t\sigma'}$ be the T-identifier order of $\sigma'$. Then, we can see that, $tid_a <_{t\sigma'} tid_b <_{t\sigma'} tid_x <_{t\sigma'} tid_c$.

        Assume $\refmap(\sigma') = (l',T')$. Then, it is easy to see that $T = T'$, and $l'$ is obtained from $l$ by putting $b$ into some postiion between $a$ and $c$. Therefore, $\refmap(\sigma) \specarrow{\alabelshort[{\tt addBetween}]{a,b,c}} \refmap(\sigma')$.

    \item[-] Assume $\refmap(\sigma) = (l,T)$. If $\sigma'$ is obtained from $\sigma$ by applying an effector $\delta$ produced by an operation $\alabelshort[{\tt remove}]{a}$. By the causal delivery assumption, we can see that a T-character $w_a$ with ghost value $a$ is already in $\sigma$, and then, $a \in l$. By the tree-doc algorithm, we can see that $\sigma'$ is obtained from $\sigma$ by setting the value of $w_a$ into $nil$. Let $\refmap(\sigma) = (l',T')$. It is obvious that $l=l'$, and $T' = T \cup \{ a \}$. Thus, we have $\refmap(\sigma) \specarrow{\alabelshort[{\tt remove}]{a}} \refmap(\sigma')$.

    \item[-] Assume we do $\alabellong[{\tt read}]{}{s}{}$ on replica state $\sigma$. Assume $\sigma = \{w_1,\ldots,w_n\}$, and for each $i$, $w_i = (v_i,tid_i,gv_i)$. Then, $s$ is obtained by walking $\sigma$ and read value of each T-characters according to the T-identifier order $<_t$ of $\sigma$, and ignoring all $nil$. Assume $\refmap(\sigma) = (l,T)$. We can see that $l$ is obtained by walking $\sigma$ and read ghost value of each T-characters according to the T-identifier order $<_t$ of $\sigma$, and $T = \{ gv_i \vert v_i = nil \}$. Thus, we have $\refmap(\sigma) \specarrow{\alabellong[{\tt read}]{}{s}{}} \refmap(\sigma)$.
    \end{itemize}

\item[-] We have already prove that $\mathsf{ReplicaStates}$ is an inductive invariant and $\mathsf{Refinement}$ holds. Then, similarly as in \sectionautorefname \ref{subsec:time order of execution as linearization}, we can prove that $\mathsf{\CRDTLinshort{}}$ is an inductive invariant.
\end{itemize}

This completes the proof of this lemma. $\qed$
\end {proof}

\section{\crdtlin{} Proof without Causal Delivery Assumption}
\label{sec:RA-linearizability proof without causal delivery assumption}

The causal delivery assumption is essential for correctness of several implementations. For example, in the RGA, by the causal delivery assumption and the preconditions of ${\tt addAfter}$ and ${\tt remove}$, it cannot happen that an $\alabelshort[{\tt addAfter}]{a,b}$ operation adding ${\tt b}$ after ${\tt a}$ is concurrent with an operation that adds ${\tt a}$ to
the list, i.e., $\alabelshort[{\tt addAfter}]{c,a}$, for some ${\tt c}$, or that an $\alabelshort[{\tt addAfter}]{a,b}$ operation adding ${\tt b}$ is concurrent with an operation $\alabelshort[{\tt remove}]{b}$ that removes ${\tt b}$. This ensures that reordering concurrent effectors does not lead to ``invalid'' replica states. 

However, some implementations can execute without causal delivery assumption. Such implementations contains:

\begin{itemize}
\setlength{\itemsep}{0.5pt}
\item[-] PN-Counter,

\item[-] LWW-Register,

\item[-] Multi-Value Register,

\item[-] 2P-Set,

\item[-] LWW-Element Set.
\end{itemize}

In this section, we propose how to prove \crdtlin{} for these implementations. The methodology is similar to the methodology in Section \ref{sec:proofs}.

\subsection{Proof Methodology for Implementations without Causal Delivery}
\label{subsec:proof methodology for implementations without causal delivery}

\noindent {\bf Semantics:} We modify the transition rule of downstream as follows:

\[
  \inferrule[\text{\sc DownStream}]
  {\gstates(\arep) = (\alabelset, \astate) \\ \alabel \in \labeldom{\avisord} \setminus \alabelset \\
    \downstreams(\alabel)= \delta \\ \delta(\astate) = \astate'}
  {(\gstates, \avisord, \downstreams) \xrightarrow{\dwn{\arep}{\alabel}} (\gstates[\arep \leftarrow (\alabelset \cup \{\alabel\} \cup \avisord^{-1}(\alabel), \astate')], \avisord, \downstreams)}
\]

According to the semantics, it is obvious that the visibility relation is transitive.

\noindent {\bf Implementations and Notations:} The implementations and notations keep unchanged.

\noindent {\bf Proof Methodology}: As in Section \ref{sec:proofs}, we need to prove 
the invariants $\mathsf{ReplicaStates}$ and $\mathsf{\CRDTLinshort{}}$.

Similarly, to prove inductive invariant, we rely on additional assertions $\mathsf{Refinement}$. 

Similarly, we identify two general classes of CRDT implementations which differ in the way in which the linearization $\alinord$ is extended when executing operations at the origin replica. One class of objects, including PN-counter, LWW-register, multi-value register and 2P-set, admit execution-order linearizations; while the other class of objects, including LWW-element Set, admit timestamp-order linearizations.

\noindent {\bf Join Semilattice and Least-Upper-Bound:} Given a partial order $<$ and two values $x$ and $y$, we say that $z$ is a least upper bound of $x$ and $y$, if $x<z$, $y<z$, and there does not exists $z'$, such that $(z'<z) \wedge (x<z') \wedge (y<z')$. Let us use $x \sqcup y$ to denote the least upper bound of $x$ and $y$. A join semilattice is a partial order equipped with a least upper bound. According to the definition, it is obvious that the following properties hold:

\begin{itemize}
\setlength{\itemsep}{0.5pt}
\item[-] $x \sqcup y = y \sqcup x$,

\item[-] $x \sqcup x = x$,

\item[-] $(x \sqcup y) \sqcup z = x \sqcup (y \sqcup z)$.
\end{itemize}

As we will introduced in the next section, in a state-based CRDT implementation, a method {\tt compare} is used to give the partial order. {\tt merge} takes two replica states as arguments, and returns true or false. In our proof of this section, we will use such {\tt merge} method to give the partial order.

\noindent {\bf Proof Method of $\mathsf{ReplicaStates}$:} Recall that, four implementations (PN-counter, multi-value register, 2P-set and LWW-Element Set) to be proved in this section are obtained from their state-based version in \cite{ShapiroPBZ11}. \cite{ShapiroPBZ11} give a operation-based LWW-register and a state-based LWW-register, and the operation-based LWW-register can be considered as obtained from the state-based LWW-register.

As stated in Section \ref{sec:implementation and proof of operation-based PN-counter}, the content of an effector is a modified replica state. Given an effector $\delta$ with content $\astate'$, when applying $\delta$ on a replica state $\astate$, we obtain a new replica state $\alabelshort[{\tt merge}]{\astate,\astate'}$. Here {\tt merge} is a function which takes two replica state as arguments and returns a new replica state.

We need to prove the following two properties of {\tt merge}:

\begin{itemize}
\setlength{\itemsep}{0.5pt}
\item[-] Least upper bound property: The domain of replica state is a join semilattice, and $\alabelshort[\mathtt{merge}]{x,y}$ returns the least upper bound of $x$ and $y$.

\item[-] Monotone property (w.r.t execution): During the execution, if the current replica state of a replica $\arep$ is $\astate$, replica $\arep$ do generator of an operation and then apply its effector, and its new replica state is $\astate'$. Then, $\alabelshort[{\tt merge}]{\astate,\astate'} = \astate'$.
\end{itemize}

Assume that we already prove above two properties of {\tt merge}. Then, we can prove the following lemma, which states that each replica state is obtained by doing merge to effectors of visible operations according to any possible total order. $\alinord$ is obviously contained in such total orders.

\begin{lemma}
\label{lemma:replica state and effector can be obtained by doing merge to visible operations}
Assume that {\tt merge} satisfies the above two properties. Given a local configuration $(\alabelset, \astate)$, then, $\astate = {\tt merge}( {\tt merge}( \ldots {\tt merge} (\astate_0,\astate_1),\astate_2),\ldots,\astate_k )$, where $\astate_0$ is the initial replica state, $\alabelset = \{ \alabel_1,\ldots, \alabel_k \}$, and for each $1 \leq i \leq k$, $\astate_i$ is the content of effector of $\alabel_i$.

Given an operation $\alabel$, and assume that the content of its effector is $\astate$, then, we have that $\astate = {\tt merge}( {\tt merge}( \ldots {\tt merge} (\astate_0,\astate_1),\astate_2),\ldots,\astate_m )$, where $\astate_0$ is the initial replica state, $\{\alabel\} \cup \avisord^{-1}(\alabel) = \{ \alabel_1,\ldots, \alabel_m \}$, and for each $1 \leq i \leq m$, $\astate_i$ is the content of effector of $\alabel_i$.
\end{lemma}

\begin {proof}

We prove by induction on executions. Obvious they hold in $\aglobalstate_0$. Assume they hold along the execution $\aglobalstate_0 \xrightarrow{}^* \aglobalstate$ and there is a new transition $\aglobalstate \xrightarrow{} \aglobalstate'$. We need to prove that they still hold in $\aglobalstate'$.

\begin{itemize}
\setlength{\itemsep}{0.5pt}
\item[-] For case when replica $\arep$ do generator of an operation $\alabel$ and then apply its effector: Let $(\alabelset,\astate)$ and $(\alabelset',\astate')$ be the local configuration of replica $\arep$ of $\aglobalstate$ and $\aglobalstate'$, respectively. We can see that the content of effector of $\alabel$ is $\astate'$. By assumption we can see that $\alabelshort[{\tt merge}]{\astate,\astate'} = \astate'$. We need to prove that this property still holds for local configuration $(\alabelset',\astate')$ and operation $\alabel$.

    Assume that $\alabelset = \{ \alabel_1,\ldots, \alabel_n \}$. By induction assumption, we have that $\\$ $\astate = {\tt merge}( {\tt merge}( \ldots {\tt merge} (\astate_0,\astate_1),\astate_2),\ldots,\astate_n )$, where for each $1 \leq i \leq n$, $\astate_i$ is the content of effector of $\alabel_i$. Since $\alabelshort[{\tt merge}]{\astate,\astate'} = \astate'$, by assumption, it is easy to see that $\astate' = {\tt merge}( {\tt merge}( \ldots {\tt merge} (\astate_0,\astate_1),\astate_2),\ldots,\astate_n,\astate' )$.

    For the case of the local configuration $(\alabelset',\astate)$, it is easy to see that $\alabelset' = \alabelset \cup \{ \alabel \}$. For the case of operation $\alabel$, it is easy to see that $\{\alabel\} \cup \avisord^{-1}(\alabel) = \alabelset'$. Therefore, this property still holds for local configuration $(\alabelset',\astate')$ and operation $\alabel$.

\item[-] For case when replica $\arep$ apply effector $\delta$ of an operation $\alabel$ originated in a different replica: Let $(\alabelset,\astate)$ and $(\alabelset',\astate')$ be the local configuration of replica $\arep$ of $\aglobalstate$ and $\aglobalstate'$, respectively. Let $S$ be the content of $\delta$. We need to prove that this property still holds for local configuration $(\alabelset',\astate')$.

    Assume that $\alabelset = \{ \alabel_1,\ldots, \alabel_u \}$ and $\{\alabel\} \cup \avisord^{-1}(\alabel) = \{ \alabel'_1,\ldots, \alabel'_v \}$. By the induction assumption, we can see that $\astate = {\tt merge}( {\tt merge}( \ldots {\tt merge} (\astate_0,\astate_1),\astate_2),\ldots,\astate_u )$ and $\\$ $S = {\tt merge}( {\tt merge}( \ldots {\tt merge} (\astate_0,\astate'_1),\astate'_2),\ldots,\astate'_v )$, where for each $i$, $\astate_i$ is the content of effector of $\alabel_i$, and $\astate'_i$ is the content of effector of $\alabel'_i$.

    We already know that $\alabelset' = \alabelset \cup \{\alabel\} \cup \avisord^{-1}(\alabel)$, and it is easy to see that $\alabelshort[{\tt merge}]{\astate_x,\astate_x} = \astate_x$ for each replica state $\astate_x$. Assume that $\alabelset' = \{ \alabel''_1,\ldots,\alabel''_n \}$. Then, it is easy to prove that $\alabelshort[{\tt merge}]{\astate,S} = {\tt merge}( {\tt merge}( \ldots {\tt merge} (\astate_0,\astate''_1),\astate''_2),\ldots,\astate''_u )$, where for each $i$, $\astate''_i$ is the content of effector of $\alabel''_i$. Therefore, this property still holds for local configuration $(\alabelset',\astate')$.
\end{itemize}

This completes the proof of this lemma. $\qed$
\end {proof}

According to Lemma \ref{lemma:replica state and effector can be obtained by doing merge to visible operations}, given a local configuration $(\alabelset, \astate)$, we have that $\astate$ is obtained by merging effectors of operations of $\alabelset$ in any order, including the order of $\alinord$. Therefore, $\mathsf{ReplicaStates}$ holds.

\noindent {\bf Proof Method of $\mathsf{Refinement}$ and $\mathsf{\CRDTLinshort{}}$:} the same as that in Section \ref{sec:proofs}.

\subsection{Proof of Operation-Based PN-Counter without Causal Delivery Assumption}
\label{subsec:proof of operation-based PN-counter without causal delivery assumption}

The proof of \crdtlin{} of PN-counter without causal delivery assumption is given below. 

\begin{lemma}
\label{lemma:when there is no causal delivery assumption, the operation-based PN-counter is still correct}
When there is no causal delivery assumption, the operation-based PN-counter is still \crdtlinearizable{} w.r.t $\specCounter$.
\end{lemma}

\begin {proof}

We need to prove Annotation1, Annotation2, and we additionally need to prove the least upper bound property and monotone property of {\tt merge}. The proof of other part ($fact1$, $\mathsf{Refinement}$) can be done as that of Lemma \ref{lemma:operation-based PN-counter is correct}.

We prove Annotation1 and Annotation2 by induction on executions. Obvious they hold in $\aglobalstate_0$. Assume they hold along the execution $\aglobalstate_0 \xrightarrow{}^* \aglobalstate$ and there is a new transition $\aglobalstate \xrightarrow{} \aglobalstate'$. We need to prove that they still hold in $\aglobalstate'$. We only need to consider when a replica do generator or effector of {\tt inc} and {\tt dec}:

\begin{itemize}
\setlength{\itemsep}{0.5pt}
\item[-] For case when replica $\arep$ do generator of a {\tt inc} operation $\alabel$ and then apply its effector: Same as that of Lemma \ref{lemma:operation-based PN-counter is correct}.

\item[-] For case when replica $\arep$ apply effector $(P_{\alabel},N_{\alabel})$ of a {\tt inc} operation $\alabel$ originated in a different replica: We only need to prove Annotation2. 
    Let $Lc = (\alabelset,(P,N))$ and $Lc' = (\alabelset',(P',N'))$ be the local configuration of replica $\arep$ of $\aglobalstate$ and $\aglobalstate'$, respectively.

    Given a replica $\arep$, let $V_1(\arep)$ =  $\{ \alabel' \vert \alabel' = \alabelshort[{\tt inc}]{}, \alabel'$ happens on replica $\arep, \alabel' \in \alabelset \}$, let $V_2(\arep)$ =  $\{ \alabel' \vert \alabel' = \alabelshort[{\tt inc}]{}, \alabel'$ happens on replica $\arep$, and $\alabel'$ is visible to $\alabel$ or $\alabel' = \alabel\}$.

    It is easy to see that $\alabelset' = \alabelset \cup \alabel \cup \avisord^{-1}(\alabel)$. Since the visibility relation is transitive, 
    we can see that, $( V_1(\arep) \subseteq V_2(\arep) ) \vee ( V_2(\arep) \subseteq V_1(\arep) )$. It is easy to see that, $P'[\arep]$ = 
    $V_1(\arep) \cup V_2(\arep) = \mathsf{max}_{\subset} \{ \vert V_1(\arep) \vert ,\vert V_2(\arep) \vert \}$. By Annotation1 of the effector $(P_{\alabel},N_{\alabel})$ and Annotation2 of the local configuration $Lc$, we can see that, Annotation2 for the local configuration $Lc'$ holds.

\item[-] The case of {\tt dec} can be similarly proved.
\end{itemize}

This completes the proof of Annotation1 and Annotation2.

The function {\tt merge} is defined as follows: Assume $\astate = (P,N)$ and $\astate' = (P',N')$, then, $\alabelshort[{\tt merge}]{\astate,\astate'} = \astate''$, where $\astate'' = (P'',N'')$, such that, for each replica $\arep$, $P''[\arep] = \mathsf{max}\{ P[\arep], P'[\arep] \}$ and $N''[\arep] = \mathsf{max}\{ N[\arep], N'[\arep] \}$.

The function {\tt compare} is defined as follows: Assume $\astate = (P,N)$ and $\astate' = (P',N')$, then, $\alabelshort[{\tt compare}]{\astate,\astate'}$ returns true, if for each replica $\arep$, $(P[\arep] \leq P'[\arep]) \wedge (N[\arep] \leq N'[\arep])$.

Let us prove that the domain of replica state is a join semilattice, and $\alabelshort[\mathtt{merge}]{x,y}$ returns the least upper bound of $x$ and $y$. It is obvious that the order $<_c$ introduced by {\tt compare} is a partial order. Given $(P_1,N_1), (P_2,N_2)$, let $(P_3,N_3) = \alabelshort[{\tt merge}]{ (P_1,N_1), (P_2,N_2) }$. It is easy to see that $(P_1,N_1) \leq_c (P_3,N_3)$ and $(P_2,N_2) \leq_c (P_3,N_3)$. Let us prove that there does not exist $(P_4,N_4)$, such that $( (P_4,N_4)<_c(P_3,N_3) )$ $\wedge$ $((P_1,N_1) \leq_c (P_4,N_4))$ $\wedge$ $((P_2,N_2) \leq_c (P_4,N_4))$. We prove this by contradiction. Assume such $(P_4,N_4)$ exists. It is easy to see that there exists a replica $\arep$, such that $(P_4[\arep] < max ( P_1[\arep],P_2[\arep] )) \vee (N_4[\arep] < max ( N_1[\arep],N_2[\arep] ))$. This contradicts that $((P_1,N_1) \leq_c (P_4,N_4))$ $\wedge$ $((P_2,N_2) \leq_c (P_4,N_4))$. Therefore, we can see that {\tt merge} returns the least upper bound.

Let us prove the monotone property of {\tt merge} for {\tt inc} operation. Given a replica $\arep$ with local configuration $(\alabelset,(P,N))$, when such replica do a {\tt inc} operation $\alabel$ and results in a new local configuration $(\alabelset \cup \{ \alabel \},(P',N'))$, it is obvious that $P' = P[\arep: P[\arep]+1]$ and $N' = N$. Therefore, we can see that $ \alabelshort[{\tt merge}]{ (P,N), (P',N') } = (P',N')$. The case for {\tt dec} is similar. This completes the proof of this lemma. $\qed$
\end {proof}

\subsection{Proof of LWW-Register without Causal Delivery Assumption}
\label{subsec:proof of LWW-register without causal delivery assumption}

The proof of \crdtlin{} of LWW-register without causal delivery assumption is given below. 

\begin{lemma}
\label{lemma:when there is no causal delivery assumption, the LWW-register is still correct}
When there is no causal delivery assumption, the operation-based LWW-register is \crdtlinearizable{} w.r.t $\specReg$.
\end{lemma}

\begin {proof}

We need to prove the least upper bound property and monotone property of {\tt merge}. The proof of other part ($\mathsf{Refinement}$) can be done as that of Lemma \ref{lemma:operation-based LWW-register is correct}.

The function {\tt merge} is defined as follows: Assume $\astate = (a,\ats)$ and $\astate' = (a',\ats')$. If $\ats<\ats'$, then $\alabelshort[{\tt merge}]{\astate,\astate'} = (a',\ats')$; otherwise, we know that $\ats'<\ats$, and $\alabelshort[{\tt merge}]{\astate,\astate'} = (a,\ats)$.

The function {\tt compare} is defined as follows: Assume $\astate = (a,\ats)$ and $\astate' = (a',\ats')$. Then, $\alabelshort[{\tt merge}]{\astate,\astate'}$ returns true, if $\ats<\ats'$.

Let us prove that the domain of replica state is a join semilattice, and $\alabelshort[\mathtt{merge}]{x,y}$ returns the least upper bound of $x$ and $y$. It is obvious that the order $<_c$ introduced by {\tt compare} is a partial order. Given $(x_1,\ats_1), (x_2,\ats_2)$, let $(x_3,\ats_3) = \alabelshort[{\tt merge}]{ (x_1,\ats_1), (x_2,\ats_2) }$. It is easy to see that $(x_1,\ats_1) \leq_c (x_3,\ats_3)$ and $(x_2,\ats_2) \leq_c (x_3,\ats_3)$. Let us prove that there does not exist $(x_4,\ats_4)$, such that $( (x_4,\ats_4)<_c(x_3,\ats_3) )$ $\wedge$ $((x_1,\ats_1) \leq_c (x_4,\ats_4))$ $\wedge$ $((x_2,\ats_2) \leq_c (x_4,\ats_4))$. We prove this by contradiction. Assume such $(x_4,\ats_4)$ exists. It is easy to see that $(\ats_4 < \ats_1) \vee (\ats_4 < \ats_2)$. This contradicts that $((x_1,\ats_1) \leq_c (x_4,\ats_4))$ $\wedge$ $((x_2,\ats_2) \leq_c (x_4,\ats_4))$. Therefore, we can see that {\tt merge} returns the least upper bound.

Let us prove the monotone property of {\tt merge} for {\tt write} operation. Given a replica $\arep$ with local configuration $(\alabelset,(x,\ats))$, when such replica do a $\alabelshort[{\tt write}]{a}$ operation $\alabel$ and results a new local configuration $(\alabelset \cup \{ \alabel \},(a,\ats_a))$: Since the timestamp order is consistent with the visibility order, we can see that $\ats < \ats_a$. Therefore, we can see that $ \alabelshort[{\tt merge}]{ (x,\ats), (a,\ats_a) } = (a,\ats_a)$. This completes the proof of this lemma. $\qed$
\end {proof}

\subsection{Proof of Operation-Based Multi-Value Register without Causal Delivery Assumption}
\label{subsec:proof of operation-based multi-value register without causal delivery assumption}

The proof of \crdtlin{} of multi-value reigster without causal delivery assumption is given below.

\begin{lemma}
\label{lemma:when there is no causal delivery assumption, the operation-based multi-value register is still correct}
When there is no causal delivery assumption, the operation-based multi-value register is \crdtlinearizable{} w.r.t $\specMVReg$.
\end{lemma}

\begin {proof}

We need to prove $fact1$, $fact2$, Annotation1 and Annotation2, and we additionally need to prove the least upper bound property and monotone property of {\tt merge}. The proof of other part ($fact3$ and $\mathsf{Refinement}$) can be done as that of Lemma \ref{lemma:multi-value register is correct}.

We prove $fact1$ as follows: Assume $\alabel_2$ happens on replica $\arep$. Since $(\alabel_1,\alabel_2) \in \avisord$, there are two possibilities:

\begin{itemize}
\setlength{\itemsep}{0.5pt}
\item[-] Either the effector of $\alabel_1$ has been applied in replica $\arep$ before $(a_2,V_2)$ is generated,

\item[-] Or, there exists an operation $\alabel_3$ and assume that $\alabel_3$ happens on replica $\arep_3$, such that, $(\alabel_1,\alabel_3),(\alabel_3,$ $\alabel_2) \in \avisord$, the effector of $\alabel_1$ has been applied in replica $\arep_3$ before the effector of $\alabel_3$ is generated, and the effector of $\alabel_3$ has been applied in replica $\arep$ before $(a_2,V_2)$ is generated.
\end{itemize}

For both cases, according to the implementations, we can see that, $V_1 < V_2$.

Let us prove that the Annotation1, Annotation2 and $fact2$ are inductive invariant.

We prove by induction on executions. Obvious they hold in $\aglobalstate_0$. Assume they hold along the execution $\aglobalstate_0 \xrightarrow{}^* \aglobalstate$ and there is a new transition $\aglobalstate \xrightarrow{} \aglobalstate'$. We need to prove that they still hold in $\aglobalstate'$. We only need to consider when a replica do generator or effector of {\tt write}:

\begin{itemize}
\setlength{\itemsep}{0.5pt}
\item[-] For case when replica $\arep$ do generator of a {\tt write} operation $\alabel$ and then apply its effector: Similar as that of the proof of Lemma \ref{lemma:multi-value register is correct}. 

\item[-] For case when replica $\arep$ apply effector $(a,V')$ of a {\tt write} operation $\alabel$ originated in a different replica: Let $Lc = (\alabelset,S)$ and $Lc' = (\alabelset',S')$ be the local configuration of replica $\arep$ of $\aglobalstate$ and $\aglobalstate'$, respectively. It is easy to see that $\alabelset' = \alabelset \cup \{ \alabel \} \cup \avisord^{-1}(\alabel)$.

    We can see that $S' = (S \cup \{ (a,V') \}) \setminus (S_1 \cup S_2)$, where $S_1 = \{ (b,V_b) \vert (b,V_b) \in S, V_b < V' \}$, and $S_2 = \{ (b,V_b) \vert (b,V_b) \in \{ (a,V') \}, \exists (b',V'_b) \in S, V_b < V'_b \}$. By Annotation1 of the effector $(a,V')$, Annotation2 of the local configuration $Lc$, $fact1$ and $fact2$, we can see that Annotation2 for the local configuration $Lc'$ holds.
\end{itemize}

The function {\tt merge} is defined as follows: Given replica state $\astate$ and $\astate'$, then, $\alabelshort[{\tt merge}]{\astate,\astate'} = \astate''$, here $\astate'' = S_1 \cup S_1$, where $S_1 = \{ (a_1,V_1) \in \astate \vert \forall (a_2,V_2) \in \astate', \neg (V_2 > V_1) \}$, and $S_2 = \{ (a_2,V_2) \in \astate' \vert \forall (a_1,V_1) \in \astate, \neg (V_1 > V_2) \}$.

The function {\tt compare} is defined as follows: Given replica state $\astate$ and $\astate'$. Then, $\alabelshort[{\tt merge}]{\astate,\astate'}$ returns true, if for each $(a_1,V_1) \in \astate$, there exists $(a_2,V_2) \in \astate'$, such that $V_1 \leq V_2$.

By Annotation2, it is easy to see the following property:

\begin{itemize}
\setlength{\itemsep}{0.5pt}
\item[-] $fact4$: Given a local configuration $(\alabelset, \astate)$, then, for each $(a_1,V_1), (a_2,V_2) \in \astate$, we have that $\neg( V_1 < V_2 \vee V_2 < V_1 )$.

\item[-] $fact5$: Given a local configuration $(\alabelset, \astate)$, if $(a_1,V_1), (a_2,V_2) \in \astate$ and $V_1 = V_2$, then, $a_1 = a_2$.
\end{itemize}

Let us prove that the order $<_c$ introduced by {\tt compare} is a partial order. It is obvious that reflexivity and transitivity holds. Let us prove antisymmetry as follows. Given replica states $\astate$ and $\astate'$, assume that $\astate \leq_c \astate' \wedge \astate' \leq_c \astate$. Given $(a_1,V_1) \in \astate$, we can see that there exists $(a_2,V_2) \in \astate'$, such that $V_1 \leq V_2$, and we can see that there exists $(a_3,V_3) \in \astate$, such that $V_2 \leq V_3$. Since the order of version vector is transitive, we can see that $V_1 \leq V_3$. By $fact4$ and $fact5$ we can see that $(a_1,V_1) = (a_2,V_3)$. Therefore, $V_1 = V_2$. By $fact5$ we can see that $(a_1,V_1) = (a_2,V_2)$. Therefore, for each $(a_1,V_1) \in \astate$, we have that $(a_2,V_2) \in \astate'$. We can similarly prove that for each $(a_1,V_1) \in \astate'$, we have that $(a_2,V_2) \in \astate$. Therefore, $\astate = \astate'$.

Let us prove that {\tt merge} returns the least upper bound. Given $\astate_1, \astate_2$, let $\astate_3 = \alabelshort[{\tt merge}]{\astate_1, \astate_2}$.

\begin{itemize}
\setlength{\itemsep}{0.5pt}
\item[-] Let us prove that $\astate_1 \leq \astate_3$. For each $(a_1,V_1) \in \astate_1$, either $(a_1,V_1) \in \astate_3$, or there exists $(a_2,V_2) \in \astate_2$, such that $V_1 < V_2 \wedge (a_2,V_2) \in \astate_3$. Therefore, $\astate_1 \leq \astate_3$. Similarly, we can prove that $\astate_2 \leq \astate_3$.

\item[-] Let us prove that there does not exist $\astate_4$, such that $\astate_4 <_c \astate_3$ $\wedge$ $\astate_1 \leq_c \astate_4$ $\wedge$ $\astate_2 \leq_c \astate_4$. We prove this by contradiction. Assume such $\astate_4$ exists. Then, there exists $(a_4,V_4) \in \astate_4$ and $(a,V) \in \astate_3$, such that $V_4 < V$. It is easy to see that $(a,V) \in \astate_1 \vee (a,V) \in \astate_2$. Assume that $(a,V) \in \astate_1$. Since $\astate_1 \leq_c \astate_4$, there exists $(a'_4,V'_4) \in \astate_4$, such that $V < V'_4$. Then we can see that $V_4 < V'_4$, which contradicts $fact4$. Therefore, such $\astate_4$ does not exist.
\end{itemize}

Therefore, we can see that {\tt merge} returns the least upper bound.

Let us prove the monotone property of {\tt merge} for {\tt write} operation. Given a replica $\arep$ with local configuration $(\alabelset,\astate)$, when such replica do a $\alabelshort[{\tt write}]{a}$ operation $\alabel$ and results in a new local configuration $(\alabelset \cup \{ \alabel \}, \{ (a_{\alabel},V_{\alabel}) \})$. Let $\mathcal{V} = \{ V \vert \exists a, (a,V) \in \astate \}$. It is easy to see that, for each replica $\arep' \neq \arep$, $V_{\alabel}[\arep'] = max_{V \in \mathcal{V}} V[\arep']$, and $V_{\alabel}[\arep'] = max_{V \in \mathcal{V}} V[\arep'] +1$. Therefore, $\astate <_c \{ (a_{\alabel},V_{\alabel}) \}$, and it is easy to prove that $ \alabelshort[{\tt merge}]{ \astate, \{ (a_{\alabel},V_{\alabel}) \} } = \{ (a_{\alabel},V_{\alabel}) \}$. This completes the proof of this lemma. $\qed$
\end {proof}

\subsection{Proof of Operation-Based 2P-Set without Causal Delivery Assumption}
\label{subsec:proof of operation-based 2p-set without causal delivery assumption}

The proof of \crdtlin{} of 2p-set without causal delivery assumption is given below. Lemma \ref{lemma:2P-set is correct}.

\begin{lemma}
\label{lemma:when there is no causal delivery assumption, the operation-based 2p-set is still correct}
When there is no causal delivery assumption, the operation-based 2P-set is \crdtlinearizable{} w.r.t $\specTwoPSet$.
\end{lemma}

\begin {proof}

We need to prove Annotation1, Annotation2, and we additionally need to prove the least upper bound property and monotone property of {\tt merge}. The proof of other part ($fact1$, $\mathsf{Refinement}$) can be done as that of Lemma \ref{lemma:2P-set is correct}.

Let us prove that the Annotation1 and Annotation2 are inductive invariant.

We prove by induction on executions. Obvious they hold in $\aglobalstate_0$. Assume they hold along the execution $\aglobalstate_0 \xrightarrow{}^* \aglobalstate$ and there is a new transition $\aglobalstate \xrightarrow{} \aglobalstate'$. We need to prove that they still hold in $\aglobalstate'$. We only need to consider when a replica do generator or effector of {\tt add} or {\tt remove}:

\begin{itemize}
\setlength{\itemsep}{0.5pt}
\item[-] For case when replica $\arep$ do generator of an operation $\alabel = \alabelshort[{\tt add}]{a}$ and then apply its effector: The same as that of Lemma \ref{lemma:2P-set is correct}.

\item[-] For case when replica $\arep$ apply effector $(A_a,R_a)$ of an operation $\alabel = \alabelshort[{\tt write}]{a}$ originated in a different replica: We only need to prove Annotation2. Let $Lc = (\alabelset,(A,R))$ and $Lc' = (\alabelset',(A',R'))$ be the local configuration of replica $\arep$ of $\aglobalstate$ and $\aglobalstate'$, respectively. It is easy to see that $\alabelset' = \alabelset \cup \alabel \cup \avisord^{-1}(\alabel)$, $A' = A \cup A_a$ and $R' = R \cup R_a$. By Annotation1 of the effector $(A_a,R_a)$ and Annotation2 of the local configuration $Lc$, we can see that Annotation2 holds for the local configuration $Lc'$.

\item[-] The cases of $\alabelshort[{\tt remove}]{a}$ can be similarly proved.
\end{itemize}

This completes the proof of Annotation1 and Annotation2.

The function {\tt merge} is defined as follows: $\alabelshort[{\tt merge}]{(A_1,R_1),(A_2,R_2)} = (A_1 \cup A_2, R_1 \cup R_2)$.

The function {\tt compare} is defined as follows: $\alabelshort[{\tt compare}]{(A_1,R_1),(A_2,R_2)}$ returns true, if $(A_1 \subseteq A_2) \wedge (R_A \subseteq R_2)$.

Let us prove that the domain of replica state is a join semilattice, and $\alabelshort[\mathtt{merge}]{x,y}$ returns the least upper bound of $x$ and $y$. It is obvious that the order $<_c$ introduced by {\tt compare} is a partial order. Given $(A_1,R_1), (A_2,R_2)$, let $(A_3,R_3) = \alabelshort[{\tt merge}]{ (A_1,R_1), (A_2,R_2) }$. It is easy to see that $(A_1,R_1) \leq_c (A_3,R_3)$ and $(A_2,R_2) \leq_c (A_3,R_3)$. Let us prove that there does not exist $(A_4,R_4)$, such that $( (A_4,R_4)<_c(A_3,R_3) )$ $\wedge$ $((A_1,R_1) \leq_c (A_4,R_4))$ $\wedge$ $((A_2,R_2) \leq_c (A_4,R_4))$. We prove this by contradiction. Assume such $(A_4,R_4)$ exists. It is easy to see that there exists $a$, such that $(a \in A_1 \cup A_2 \wedge a \notin A_4) \vee (a \in R_1 \cup R_2 \wedge a \notin R_4)$ holds. This contradicts that $((A_1,R_1) \leq_c (A_4,R_4))$ $\wedge$ $((A_2,R_2) \leq_c (A_4,R_4))$. Therefore, we can see that {\tt merge} returns the least upper bound.

Let us prove the monotone property of {\tt merge} for {\tt add} operation. Given a replica $\arep$ with local configuration $(\alabelset,(A,R))$, when such replica do a $\alabelshort[{\tt add}]{a}$ operation $\alabel$ and results in a new local configuration $(\alabelset \cup \{ \alabel \},(A',R'))$, it is obvious that $A' = A \cup \{ a \}$ and $R' = R$. Therefore, we can see that $ \alabelshort[{\tt merge}]{ (A,R), (A',R') } = (A',R')$. The case for {\tt remove} is similar. This completes the proof of this lemma. $\qed$

\end {proof}

\subsection{Proof of LWW-Element-Set without Causal Delivery Assumption}
\label{subsec:proof of LWW-element-set without causal delivery assumption}

The proof of \crdtlin{} of LWW-element-set without causal delivery assumption is given below.

\begin{lemma}
\label{lemma:when there is no causal delivery assumption, operation-based LWW-element-set is correct is still correct}
When there is no causal delivery assumption, the operation-based LWW-element-set is \crdtlinearizable{} w.r.t $\specLWWSet$.
\end{lemma}

\begin {proof}

We need to prove Annotation1, Annotation2, and we additionally need to prove the least upper bound property and monotone property of {\tt merge}. The proof of other part ($fact1$, $\mathsf{Refinement}$) can be done as that of Lemma \ref{lemma:operation-based LWW-element-set is correct}.

Let us prove that the Annotation1 and Annotation2 are inductive invariant. We prove by induction on executions. Obvious they hold in $\aglobalstate_0$. Assume they hold along the execution $\aglobalstate_0 \xrightarrow{}^* \aglobalstate$ and there is a new transition $\aglobalstate \xrightarrow{} \aglobalstate'$. We need to prove that they still hold in $\aglobalstate'$. We only need to consider when a replica do generator or effector of $\alabelshort[{\tt add}]{a}$ or $\alabelshort[{\tt remove}]{a}$:

\begin{itemize}
\setlength{\itemsep}{0.5pt}
\item[-] For case when replica $\arep$ do generator of an operation $\alabel = \alabellongind[{\tt add}]{a}{}{\ats_a}{}$ and then apply its effector: The same as that of Lemma \ref{lemma:operation-based LWW-element-set is correct}.

\item[-] For case when replica $\arep$ apply effector $(A'',R'')$ of an operation $\alabel = \alabellongind[{\tt add}]{a}{}{\ats_a}{}$ originated in a different replica: We only need to prove Annotation2. Let $Lc = (\alabelset,(A,R))$ and $Lc' = (\alabelset',(A',R'))$ be the local configuration of replica $\arep$ of $\aglobalstate$ and $\aglobalstate'$, respectively. It is easy to see that $\alabelset' = \alabelset \cup \alabel \cup \avisord^{-1}(\alabel)$, $A' = A \cup A_a$ and $R' = R \cup R_a$. By Annotation1 of the effector $(A_a,R_a)$ and Annotation2 of the local configuration $Lc$, we can see that Annotation2 holds for the local configuration $Lc'$.

\item[-] The cases of $\alabelshort[{\tt remove}]{a}$ can be similarly proved.
\end{itemize}

This completes the proof of Annotation1 and Annotation2.

The function {\tt merge} is defined as follows: $\alabelshort[{\tt merge}]{(A_1,R_1),(A_2,R_2)} = (A_1 \cup A_2, R_1 \cup R_2)$.

The function {\tt compare} is defined as follows: $\alabelshort[{\tt compare}]{(A_1,R_1),(A_2,R_2)}$ returns true, if $(A_1 \subseteq A_2) \wedge (R_A \subseteq R_2)$.

As in the proof of Lemma \ref{lemma:when there is no causal delivery assumption, the operation-based 2p-set is still correct}, we can prove that the domain of replica state is a join semilattice, and $\alabelshort[\mathtt{merge}]{x,y}$ returns the least upper bound of $x$ and $y$.

Let us prove the monotone property of {\tt merge} for {\tt add} operation. Given a replica $\arep$ with local configuration $(\alabelset,(A,R))$, when such replica do a $\alabellongind[{\tt add}]{a}{}{\ats_a}{}$ operation $\alabel$ and results in a new local configuration $(\alabelset \cup \{\alabel\} \cup \avisord^{-1}(\alabel),(A',R'))$. It is obvious that $A' = A \cup \{ (a,\ats_a) \}$ and $R' = R$. 
Therefore, we can see that $\alabelshort[{\tt merge}]{ (A,R), (A',R') } = (A',R')$. The case for {\tt remove} is similar. This completes the proof of this lemma. $\qed$
\end {proof}
}

\forget{
\subsection{State-Based PN-Counter Implementation and its Proof}
\label{subsec:state-based PN-counter implementation and its proof}

The state-based PN-counter implementation of \cite{ShapiroPBZ11} is given in Listing~\ref{lst:state-based PN-counter}. Since we already see its operation-based version, we skip the explanation of it.

\begin{figure}[t]
\begin{lstlisting}[frame=top,caption={Pseudo-code of state-based PN-counter},
captionpos=b,label={lst:state-based PN-counter}]
  payload ingeter[reps()] P, ingeter[reps()] N
  initial P = [@|$0,\ldots,0$|@], N = [@|$0,\ldots,0$|@]
  initial lin = @|$\epsilon$|@

  inc()
    let g = myRep()
    @|$P$|@ = @|$P[ g: P[g] + 1]$|@
    //@ lin = lin@|$\,\cdot\,$|@inc()

  dec()
    let g = myRep()
    let @|$N$|@ = @|$N[ g: N[g] + 1]$|@
    //@ lin = lin@|$\,\cdot\,$|@dec()

  read() :
    let c  = @|$\Sigma_{\arep} P[\arep]$|@ - @|$\Sigma_{\arep} N[\arep]$|@
    //@ lin = lin@|$\,\cdot\,$|@(read@|$\Rightarrow$|@c)
    return c

  compare(X, Y): boolean b
    let b  = @|$ ( \forall 0 \leq \arep < reps(), X.P[\arep] \leq Y.P[\arep] )$|@ @|$\wedge$|@ @|$ ( \forall 0 \leq \arep < reps(), X.N[\arep] \leq Y.N[\arep] )$|@
    return b

  merge(X, Y): payload Z
    for each @|$0 \leq \arep < reps()$|@, @|$Z.P[\arep] = max( X.P[\arep], Y.P[\arep] )$|@
    for each @|$0 \leq \arep < reps()$|@, @|$Z.N[\arep] = max( X.N[\arep], Y.N[\arep] )$|@
    return Z
\end{lstlisting}
\end{figure}

The following lemma states that the state-based counter is \crdtlinearizable{} w.r.t. $\specCounter$.

\begin{lemma}
\label{lemma:state-based PN-counter is correct}
The state-based PN-counter is \crdtlinearizable{} w.r.t $\specCounter$.
\end{lemma}

\begin {proof}

Let us propose Annotation1 for messages, Annotation2 for ``virtual messages'', and Annotation3 for local configurations.

\begin{itemize}
\setlength{\itemsep}{0.5pt}
\item[-] Annotation1: Given a message $(\alabelset,(P,N))$. For each replica $\arep$, $P[\arep]$ =  $\vert \{ \alabel \vert \alabel = \alabelshort[{\tt inc}]{}, \alabel$ happens on replica $\arep, \alabel \in \alabelset \} \vert$, $N[\arep]$ =  $\vert \{ \alabel \vert \alabel = \alabelshort[{\tt dec}]{}, \alabel$ happens on replica $\arep, \alabel \in \alabelset \} \vert$.

\item[-] Annotation2: Given a ``virtual message'' $(\alabel,(P,N))$. For each replica $\arep$, $P[\arep]$ =  $\vert \{ \alabel' \vert \alabel' = \alabelshort[{\tt inc}]{}, \alabel'$ happens on replica $\arep, (\alabel',\alabel) \in \avisord \vee \alabel' = \alabel \} \vert$, $N[\arep]$ =  $\vert \{ \alabel' \vert \alabel' = \alabelshort[{\tt dec}]{}, \alabel'$ happens on replica $\arep, (\alabel',\alabel) \in \avisord \vee \alabel' = \alabel \} \vert$.

\item[-] Annotation3: Given a local configuration $(\alabelset,(P,N))$. For each replica $\arep$, $P[\arep]$ =  $\vert \{ \alabel \vert \alabel = \alabelshort[{\tt inc}]{}, \alabel$ happens on replica $\arep, \alabel \in \alabelset \} \vert$, $N[\arep]$ =  $\vert \{ \alabel \vert \alabel = \alabelshort[{\tt dec}]{}, \alabel$ happens on replica $\arep, \alabel \in \alabelset \} \vert$.
\end{itemize}

Let us prove that the Annotation1, Annotation2 and Annotation3 are inductive invariant.

We prove by induction on executions. Obvious they hold in $\aglobalstate_0$. Assume they hold along the execution $\aglobalstate_0 \xrightarrow{}^* \aglobalstate$ and there is a new transition $\aglobalstate \xrightarrow{} \aglobalstate'$. We need to prove that they still hold in $\aglobalstate'$. We only need to consider when a replica do {\tt inc} and {\tt dec} operation, or generate messages, or receive messages:

\begin{itemize}
\setlength{\itemsep}{0.5pt}
\item[-] For case when replica $\arep$ do a {\tt inc} operation $\alabel$: Let $Lc = (\alabelset,(P,N))$ and $Lc' = (\alabelset',(P',N'))$ be the local configuration of replica $\arep$ of $\aglobalstate$ and $\aglobalstate'$, respectively. It is easy to see that a new ``virtual message'' $(\alabel,(P',N'))$ is generated.

    It is easy to see that $P'=P[r:P[r]+1]$, $N'=N$, and $\alabelset' = \alabelset \cup \{ \alabel \}$. By Annotation3 of the local configuration $Lc$, we can see that Annotation3 for the local configuration $Lc'$ holds, and Annotation2 for the ``virtual message'' $(\alabel,(P',N'))$ holds.

\item[-] For case when replica $\arep$ generated a message $msg$: Let $Lc = (\alabelset,(P,N))$ and $Lc' = (\alabelset',(P',N'))$ be the local configuration of replica $\arep$ of $\aglobalstate$ and $\aglobalstate'$, respectively. It is easy to see that $msg = (\alabelset,(P,N))$ and $Lc = Lc'$.

    By Annotation3 of the local configuration $Lc$, we can see that Annotation3 for the local configuration $Lc'$ holds, and Annotation1 for the message $msg$ holds.

\item[-] For case when replica $\arep$ apply message $msg$: Let $Lc = (\alabelset,(P,N))$ and $Lc' = (\alabelset',(P',N'))$ be the local configuration of replica $\arep$ of $\aglobalstate$ and $\aglobalstate'$, respectively. Let $msg = (\alabelset_1,(P_1,N_1))$.

    Given a replica $\arep$, let $V_1(\arep)$ =  $\{ \alabel' \vert \alabel' = \alabelshort[{\tt inc}]{}, \alabel'$ happens on replica $\arep, \alabel' \in \alabelset \}$, let $V_2(\arep)$ =  $\{ \alabel' \vert \alabel' = \alabelshort[{\tt inc}]{}, \alabel'$ happens on replica $\arep, \alabel' \in \alabelset_1 \}$.

    It is easy to see that $\alabelset' = \alabelset \cup \alabelset_1$. Since the visibility relation is transitive, and the visibility relation is a total order on operations happen on replica $\arep$, we can see that, $( V_1(\arep) \subseteq V_2(\arep) ) \vee ( V_2(\arep) \subseteq V_1(\arep) )$. It is easy to see that, $P'[\arep]$ = $\mathsf{max}\{ P[\arep], P_1[\arep] \}$ = $\mathsf{max}\{ \vert V_1(\arep) \vert ,\vert V_2(\arep) \vert \}$. By Annotation3 of the local configuration $Lc$ and Annotation1 of the message $msg$, we can see that, Annotation3 for the local configuration $Lc'$ holds.

\item[-] The case of {\tt dec} can be similarly proved.
\end{itemize}

This completes the proof of Annotation1, Annotation2 and Annotation3.

Let us propose $fact1$:

$fact1$: Assume $\alinord = \alabel''_1 \cdot \ldots \cdot \alabel''_n$, and for each $i$, the ``virtual message'' of $\alabel''_i$ is $(\alabel''_i,(P''_i,N''_i))$. Assume that $(P,N)$ is obtained from the initial replica state by merging $(P''_1,N''_1),\ldots,(P''_k,N''_k)$, and $k$ is a natural number such that $1 \leq k \leq n$. Then, for each replica $\arep$, $P[\arep]$ =  $\vert \{ \alabel \vert \alabel = \alabelshort[{\tt inc}]{}, \alabel$ happens on replica $\arep$, and $\alabel \in \{ \alabel''_1,\ldots,\alabel''_k \} \} \vert$, $N[\arep]$ =  $\vert \{ \alabel \vert \alabel = \alabelshort[{\tt dec}]{}, \alabel$ happens on replica $\arep$, and $\alabel \in \{ \alabel''_1,\ldots,\alabel''_k \} \} \vert$.

The proof of $fact1$ is same as that of Lemma \ref{lemma:operation-based PN-counter is correct}.

Then, our proof of the lemma proceed as follows:

\begin{itemize}
\setlength{\itemsep}{0.5pt}
\item[-] We need to prove that $\mathsf{ReplicaStates}$ are inductive invariant.

It is obvious that the order introduced by {\tt merge} is a partial order. Given $(P_1,N_1), (P_2,N_2)$, let $(P_3,N_3) = \alabelshort[{\tt merge}]{ (P_1,N_1), (P_2,N_2) }$. It is easy to see that $(P_1,N_1) \leq (P_3,N_3)$ and $(P_2,N_2) \leq (P_3,N_3)$. Let us prove that there does not exist $(P_4,N_4)$, such that $( (P_4,N_4)<(P_3,N_3) )$ $\wedge$ $((P_1,N_1) \leq (P_4,N_4))$ $\wedge$ $((P_2,N_2) \leq (P_4,N_4))$. We prove this by contradiction. Assume such $(P_4,N_4)$ exists. It is easy to see that there exists a replica $\arep$, such that $(P_4[\arep] < max ( P_1[\arep],P_2[\arep] )) \vee (N_4[\arep] < max ( N_1[\arep],N_2[\arep] ))$. This contradicts that $((P_1,N_1) \leq (P_4,N_4))$ $\wedge$ $((P_2,N_2) \leq (P_4,N_4))$. Therefore, we can see that {\tt merge} returns the least upper bound.

Given a replica $\arep$ with local configuration $(\alabelset,(P,N))$, when such replica do a {\tt inc} operation $\alabel$ and results in a new local configuration $(\alabelset \cup \{ \alabel \},(P',N'))$: By Annotation3 of the local configuration $(\alabelset,(P,N))$ and Annotation3 of the local configuration $(\alabelset \cup \{ \alabel \},(P',N'))$, we can see that $P' = P[\arep: P[\arep]+1]$ and $N' = N$. Therefore, we can see that $ \alabelshort[{\tt merge}]{ (P,N), (P',N') } = (P',N')$. The case for {\tt dec} is similar.

\item[-] The proof of $\mathsf{Refinement}$ is similar as that of Lemma \ref{lemma:operation-based PN-counter is correct}.

\item[-] We have already prove that $\mathsf{ReplicaStates}$ is an inductive invariant and $\mathsf{Refinement}$ holds. Then, similarly as in \sectionautorefname \ref{subsec:time order of execution as linearization}, we can prove that $\mathsf{\CRDTLinshort{}}$ is an inductive invariant.
\end{itemize}

This completes the proof of this lemma. $\qed$
\end {proof}
}

\forget{
The following lemma states that the state-based multi-value register is \crdtlinearizable{} w.r.t. $\specMVReg$.

\begin{lemma}
\label{lemma:state-based multi-value register is correct}
The state-based multi-value register is \crdtlinearizable{} w.r.t $\specMVReg$.
\end{lemma}

\begin {proof}

Let us propose $fact1$:

\noindent $fact1$: Assume $(\alabel_1,(a_1,V_1))$ and $(\alabel_2,(a_2,V_2))$ is the ``virtual message'' of $\alabel_1$ and $\alabel_2$, respectively, and assume that $(\alabel_1,\alabel_2) \in \avisord$. Then, $V_1 < V_2$.

The proof of $fact1$: Same as that of Lemma \ref{lemma:when there is no causal delivery assumption, the operation-based multi-value register is still correct}.

Let us propose Annotation1 for messages, Annotation2 for ``virtual messages'', and Annotation3 for local configurations.

\begin{itemize}
\setlength{\itemsep}{0.5pt}
\item[-] Annotation1: Given a message $(\alabelset,S)$. Then, $S$ = $\{ (a,V) \vert \exists \alabel \in \alabelset, (\alabel,(a,V))$ is the ``virtual message'' of $\alabel, \alabel$ is maximal w.r.t $\avisord$ among {\tt write} operations in $\alabelset \}$.

\item[-] Annotation2: Given a ``virtual message'' $(\alabel,S)$. Then, $S = (a,V)$ for some data $a$, and for each replica $\arep$, $S[\arep]$ =  $\vert \{ \alabel' \vert \alabel' = \alabelshort[{\tt write}]{\_}, \alabel'$ happens on replica $\arep, (\alabel',\alabel) \in \avisord \vee \alabel' = \alabel \} \vert$.

\item[-] Annotation3: Given a local configuration $(\alabelset,S)$. Then, $S$ = $\{ (a,V) \vert \exists \alabel \in \alabelset, (\alabel,(a,V))$ is the ``virtual message'' of $\alabel, \alabel$ is maximal w.r.t $\avisord$ among {\tt write} operations in $\alabelset \}$.
\end{itemize}

Let us prove that the Annotation1, Annotation2 and Annotation3 are inductive invariant.

We prove by induction on executions. Obvious they hold in $\aglobalstate_0$. Assume they hold along the execution $\aglobalstate_0 \xrightarrow{}^* \aglobalstate$ and there is a new transition $\aglobalstate \xrightarrow{} \aglobalstate'$. We need to prove that they still hold in $\aglobalstate'$. We only need to consider when a replica do {\tt write} operation, or generate messages, or receive messages:

\begin{itemize}
\setlength{\itemsep}{0.5pt}
\item[-] For case when replica $\arep$ do a $\alabelshort[{\tt write}]{a}$ operation $\alabel$: Let $Lc = (\alabelset,S)$ and $Lc' = (\alabelset',S')$ be the local configuration of replica $\arep$ of $\aglobalstate$ and $\aglobalstate'$, respectively. It is easy to see that a new ``virtual message'' $(\alabel,(a,V))$ is generated, $\alabelset' = \alabelset \cup \{ \alabel \}$ and $S' = (a,V)$. Here for each replica $\arep' \neq \arep$, $V[\arep'] = max \{ V'[\arep'] \vert \exists a', (a',V') \in S \}$, and $V[\arep] = max \{ V'[\arep] \vert \exists a', (a',V') \in S \}$ +1.

    By Annotation3 of the local configuration $Lc$, and since $\alabel$ is greater than any operations of $\alabelset$ w.r.t the visibility relation, it is easy to see that Annotation3 for the local configuration $Lc'$ holds.

    Similarly as the proof of Lemma \ref{lemma:multi-value register is correct}, we can see that, for each replica $\arep_1$, $max \{ V'[\arep_1] \vert \exists a_1, (a_1,V_1) \in S \}$ is the number of {\tt write} operations happen on replica $\arep_1$ and is in $\alabelset$. Therefore, we can see that Annotation2 for the ``virtual message'' $(\alabel,(a,V))$ holds.

\item[-] For case when replica $\arep$ generated a message $msg$: Let $Lc = (\alabelset,S)$ and $Lc' = (\alabelset',S')$ be the local configuration of replica $\arep$ of $\aglobalstate$ and $\aglobalstate'$, respectively. It is easy to see that $msg = (\alabelset,S)$ and $Lc = Lc'$.

    By Annotation3 of the local configuration $Lc$, we can see that Annotation3 for the local configuration $Lc'$ holds, and Annotation1 for the message $msg$ holds.

\item[-] For case when replica $\arep$ apply message $msg$: Let $Lc = (\alabelset,S)$ and $Lc' = (\alabelset',S')$ be the local configuration of replica $\arep$ of $\aglobalstate$ and $\aglobalstate'$, respectively. Let $msg = (\alabelset_1,S_1)$. We can see that $\alabelset' = \alabelset \cup \alabelset_1$.

    We can see that $S' = (S \cup S_1) \setminus (S_2 \cup S_3)$, where $S_2 = \{ (b,V_b) \vert (b,V_b) \in S, \exists (b',V'_b) \in S_1, V_b < V'_b \}$, and $S_3 = \{ (b,V_b) \vert (b,V_b) \in S_1, \exists (b',V'_b) \in S, V_b < V'_b \}$.

    By Annotation3 of the local configuration $Lc$, Annotation2 of the message $msg$ and $fact1$, we can see that Annotation3 for the local configuration $Lc'$ holds.
\end{itemize}

This completes the proof of Annotation1, Annotation2 and Annotation3.

Let us propose $fact2$:

\noindent $fact2$: Assume $\alinord = \alabel''_1 \cdot \ldots \cdot \alabel''_n$, and for each $i$, the ``virtual message'' of $\alabel''_i$ is $(\alabel''_i,S''_i)$. Assume that $S$ is obtained from the initial replica state by applying effectors $S''_1,\ldots,S''_k$, and $k$ is a natural number such that $1 \leq k \leq n$. Then, $S = \{ (a,V) \vert \exists i, 1 \leq i \leq k, \alabel''_i$ generates the ``virtual message'' $(\alabel''_i,(a,V)),\alabel''_i$ is maximal w.r.t the visibility relation among $ \{ \alabel''_1,\ldots,\alabel''_k \} \}$.

The proof of $fact2$: Similar as that of Lemma \ref{lemma:multi-value register is correct}.

Let us prove that $\mathsf{ReplicaStates}$ is an inductive invariant.

By Annotation1, Annotation3 and $fact1$, we can see that, given a message $(\alabelset,S)$ or a local configuration $(\alabelset,S)$, for each $(a_1,V_1),(a_2,V_2) \in S$, we have $\neg ( (V_1 < V_2) \vee (V_2 < V_1) )$. Therefore, this property is an invariant of the distributed system, and it is safe to assume that the domain of the replica state satisfies this property.

Given $S_1, S_2$, let $S_3 = \alabelshort[{\tt merge}]{ S_1,S_2 }$. It is easy to see that $S_1 \leq S_3$ and $S_2 \leq S_3$. Let us prove that there does not exist $S_4$, such that $(S_4<S_3) \wedge (S_1 \leq S_3) \wedge (S_2 < S_3)$.

We prove this by contradiction. Assume such $S_4$ exists.

Since $S_4 < S_3$, we can see that, there exists $(a_4,V_4) \in S_4$ and $(a_3,V_3) \in S_3$, such that $V_4<V_3$. By the definition of {\tt merge}, we can see that, $((a_3,V_3) \in S_1) \vee ((a_3,V_3) \in S_2)$. Assume that $(a_3,V_3) \in S_1$. By the above property, we can see that for each $(a'_4,V'_4) \in S_4$, we have $\neg (V_4 < V'_4 \vee V'_4 < V_4)$. Therefore, it is easy to se that, for each $(a'_4,V'_4) \in S_4$, $\neg (V_3 \leq V'_4)$. Thus, we can see that $\neg (S_1 \leq S_4)$, which contradicts the assumption that $S_1 < S_4$. Therefore, we can see that {\tt merge} returns the least upper bound.

Given a replica $\arep$ with local configuration $(\alabelset,S)$, when such replica do a $\alabelshort[{\tt write}]{ a }$ operation $\alabel$ and results in a new local configuration $(\alabelset \cup \{ \alabel \},S')$: By the implementation, we can see that $S' = (a,V)$, where for each replica $\arep' \neq \arep$, $V[\arep'] = max \{ V'[\arep'] \vert \exists a', (a',V') \in S \}$, and $V[\arep] = max \{ V'[\arep] \vert \exists a', (a',V') \in S \}$ +1. Therefore, it is easy to see that $ \alabelshort[{\tt merge}]{ S, S' } = S'$.

The proof of $\mathsf{Refinement}$ is similar as that of Lemma \ref{lemma:multi-value register is correct}.

We have already prove that $\mathsf{ReplicaStates}$ is an inductive invariant and $\mathsf{Refinement}$ holds. Then, similarly as in \sectionautorefname \ref{subsec:time order of execution as linearization}, we can prove that $\mathsf{\CRDTLinshort{}}$ is an inductive invariant.

This completes the proof of this lemma. $\qed$
\end {proof}
}

\forget{
\subsection{State-Based LWW-Register Implementation and its Proof}
\label{subsec:state-based LWW-register implementation and its proof}

The state-based LWW-register implementation of \cite{ShapiroPBZ11} is given in Listing~\ref{lst:state-based LWW-regiser}. Since we already see its operation-based version, we skip the explanation of it.

\begin{figure}[t]
\begin{lstlisting}[frame=top,caption={Pseudo-code of state-based LWW-register},
captionpos=b,label={lst:state-based LWW-regiser}]
  payload X x, timestamp @|$\ats$|@
  initial @|$x_0$|@, @|$\ats_0$|@
  initial lin = @|$\epsilon$|@

  write(a) :
    let @|$\ats'$|@ = getTimestamp()
    //@ lin = insert(lin, write(a), ts')
    (x,@|$\ats$|@) = (a,@|$\ats'$|@)

  read() :
    //@ lin = lin@|$\,\cdot\,$|@(read()@|$\Rightarrow$|@x)
    return x

  compare(@|$(x_1,\ats_1)$|@, @|$(x_2,\ats_2)$|@): boolean b
    let b  = @|$\ats_1 < \ats_2$|@
    return b

  merge(@|$(x_1,\ats_1)$|@, @|$(x_2,\ats_2)$|@): @|$(x_3,\ats_3)$|@
    if @|$\ats_1 \leq \ats_2$|@
      then @|$(x_3,\ats_3)$|@ = @|$(x_2,\ats_2)$|@
    else
      @|$(x_3,\ats_3)$|@ = @|$(x_1,\ats_1)$|@
    return @|$(x_3,\ats_3)$|@
\end{lstlisting}
\end{figure}

The following lemma states that the state-based LWW-register is \crdtlinearizable{} w.r.t. $\specReg$.

\begin{lemma}
\label{lemma:state-based LWW-register is correct}
The state-based LWW-register is \crdtlinearizable{} w.r.t $\specReg$.
\end{lemma}

\begin {proof}

We need to prove that $\mathsf{ReplicaStates}$ are inductive invariant.

It is obvious that the order introduced by {\tt merge} is a partial order. Given $(x_1,\ats_1), (x_2,\ats_2)$, let $(x_3,\ats_3) = \alabelshort[{\tt merge}]{ (x_1,\ats_1), (x_2,\ats_2) }$. It is easy to see that $(x_1,\ats_1) \leq (x_3,\ats_3)$ and $(x_2,\ats_2) \leq (x_3,\ats_3)$. Let us prove that there does not exist $(x_4,\ats_4)$, such that $( (x_4,\ats_4)<(x_3,\ats_3) )$ $\wedge$ $((x_1,\ats_1) \leq (x_4,\ats_4))$ $\wedge$ $((x_2,\ats_2) \leq (x_4,\ats_4))$. We prove this by contradiction. Assume such $(x_4,\ats_4)$ exists. It is easy to see that $(\ats_4 < \ats_1) \vee (\ats_4 < \ats_2)$. This contradicts that $((x_1,\ats_1) \leq (x_4,\ats_4))$ $\wedge$ $((x_2,\ats_2) \leq (x_4,\ats_4))$. Therefore, we can see that {\tt merge} returns the least upper bound.

Given a replica $\arep$ with local configuration $(\alabelset,(x,\ats))$, when such replica do a $\alabelshort[{\tt write}]{a}$ operation $\alabel$ and results a new local configuration $(\alabelset \cup \{ \alabel \},(a,\ats_a))$: Since the timestamp order is consistent with the visibility order, we can see that $\ats < \ats_a$. Therefore, we can see that $ \alabelshort[{\tt merge}]{ (x,\ats), (a,\ats_a) } = (a,\ats_a)$.

The proof of $\mathsf{Refinement}$ is same as that of Lemma \ref{lemma:operation-based LWW-register is correct}.

We have already prove that $\mathsf{ReplicaStates}$ is an inductive invariant and $\mathsf{Refinement}$ holds. Then, similarly as in \sectionautorefname \ref{subsec:time-stamp order as linearizabtion}, we can prove that $\mathsf{\CRDTLinshort{}}$ is an inductive invariant. $\qed$
\end {proof}
}

\forget{
\subsection{Proof Method for State-Based CRDT}
\label{subsec:proof method for state-based CRDT}

We use the same $\mathsf{Refinement}$ and $\mathsf{\CRDTLinshort{}}$ as that of operation-based CRDT. As in Section \ref{sec:proofs}, we identify two general classes of CRDT implementations which differ in the way in which the linearization $\alinord$ is extended when executing operations at the origin replica. One class of objects, including PN-counter, LWW-register, multi-value register and 2P-set, admit execution-order linearizations; while the other class of objects, including LWW-element Set, admit timestamp-order linearizations.

We need to prove that the domain of replica state is a semilattice which use a order given by {\tt compare} method, and $\alabelshort[\mathtt{merge}]{x,y}$ returns the least upper bound of semilattice.

We propose a new $\mathsf{ReplicaStates}$ as follows:

\begin{itemize}
\setlength{\itemsep}{0.5pt}
\item[-] $\mathsf{ReplicaStates}$: We require that, for each replica $\arep$ with local configuration $(\alabelset,\astate)$, the replica state $\sigma$ is obtained by merging the ``virtual messages'' of operations in $\alabelset$ in the order defined by $\alinord$.
\end{itemize}

We prove $\mathsf{ReplicaStates}$ by induction on execution. It is obvious that $\mathsf{ReplicaStates}$ holds initially. The induction case is dealt with as follows: Given a replica with local configuration $(\alabelset_1,\astate_1)$, by induction assumption, the replica state $\astate_1$ is obtained by merging the ``virtual messages'' of operations in $\alabelset_1$ in the order defined by $\alinord$. Then,

\begin{itemize}
\setlength{\itemsep}{0.5pt}
\item[-] When such replica do a $\alabel$ operation and results in new replica state $\astate_2$: It is easy to see that $\alabel$ is after operations of $\alabelset_1$ in $\alinord$. It is obvious that the ``virtual message'' of $\alabel$ is $(\alabelset_1 \cup \{ \alabel \},\astate_2)$.

    We need to prove that $\alabelshort[{\tt merge}]{\astate_1,\astate_2} = \astate_2$.

    Then, we can see that, the replica state $\astate_2$ is obtained by merging the ``virtual messages'' of operations in $\alabelset_1 \cup \{ \alabel \}$ in the order defined by $\alinord$.

\item[-] When a message $(\alabelset_2,\astate_2)$ is applied in this replica: By induction assumption, we can see that, the replica state $\astate_2$ is obtained by merging the ``virtual messages'' of operations in $\alabelset_2$ in the order defined by $\alinord$. According to the properties of {\tt merge}, it is easy to see that, the replica state $\alabelshort[{\tt merge}]{\astate_1,\astate_2}$ is obtained by merging the ``virtual messages'' of operations in $\alabelset_1 \cup \alabelset_2$ in the order defined by $\alinord$.
\end{itemize}

We still need to prove $\mathsf{Refinement}$. With $\mathsf{ReplicaStates}$ and $\mathsf{Refinement}$, we can prove $\mathsf{\CRDTLinshort{}}$ similarly as that of \sectionautorefname \ref{subsec:time order of execution as linearization} or \sectionautorefname \ref{subsec:time-stamp order as linearizabtion}.
}

\forget{
\section{Non-Deterministic Specifications, Convergence, and Consistent Conflict Resolution}
\label{subsec:appendix non-deterministic specifications, convergence, and consistent conflict resolution}

A sequential specification \Spec{} is deterministic, if for every label, the transition from a given initial state can produce at most one final state. Otherwise, we say that \Spec{} is non-deterministic.

For a non-deterministic specification Spec{}, it is possible that although a history $h$ is \crdtlinearizable{} w.r.t Spec{}, the behavior of $h$ is counterintuitive. An example $h$ of such history is shown in \autoref{fig:a wrong history w.r.t listbets}. If only $\alabellong[{\tt read}]{}{c \cdot b}{}$ or $\alabellong[{\tt read}]{}{a \cdot b \cdot c}{}$ exists in $h$, then each of them is reasonable, since $\alabelshort[{\tt remove}]{a_2,a_1,a_3}$ puts $a_1$ in a random position between $a_2$ and $a_3$. However, $\alabellong[{\tt read}]{}{c \cdot b}{}$ represents that $c$ is put before $b$, while $\alabellong[{\tt read}]{}{a \cdot b \cdot c}{}$ represents that $b$ is before $c$. This implies that different replica has ``different conflict resolution'', which violates the intuition of CRDT and should be excluded.

\begin{figure}[t]
  \centering
  \includegraphics[width=0.85 \textwidth]{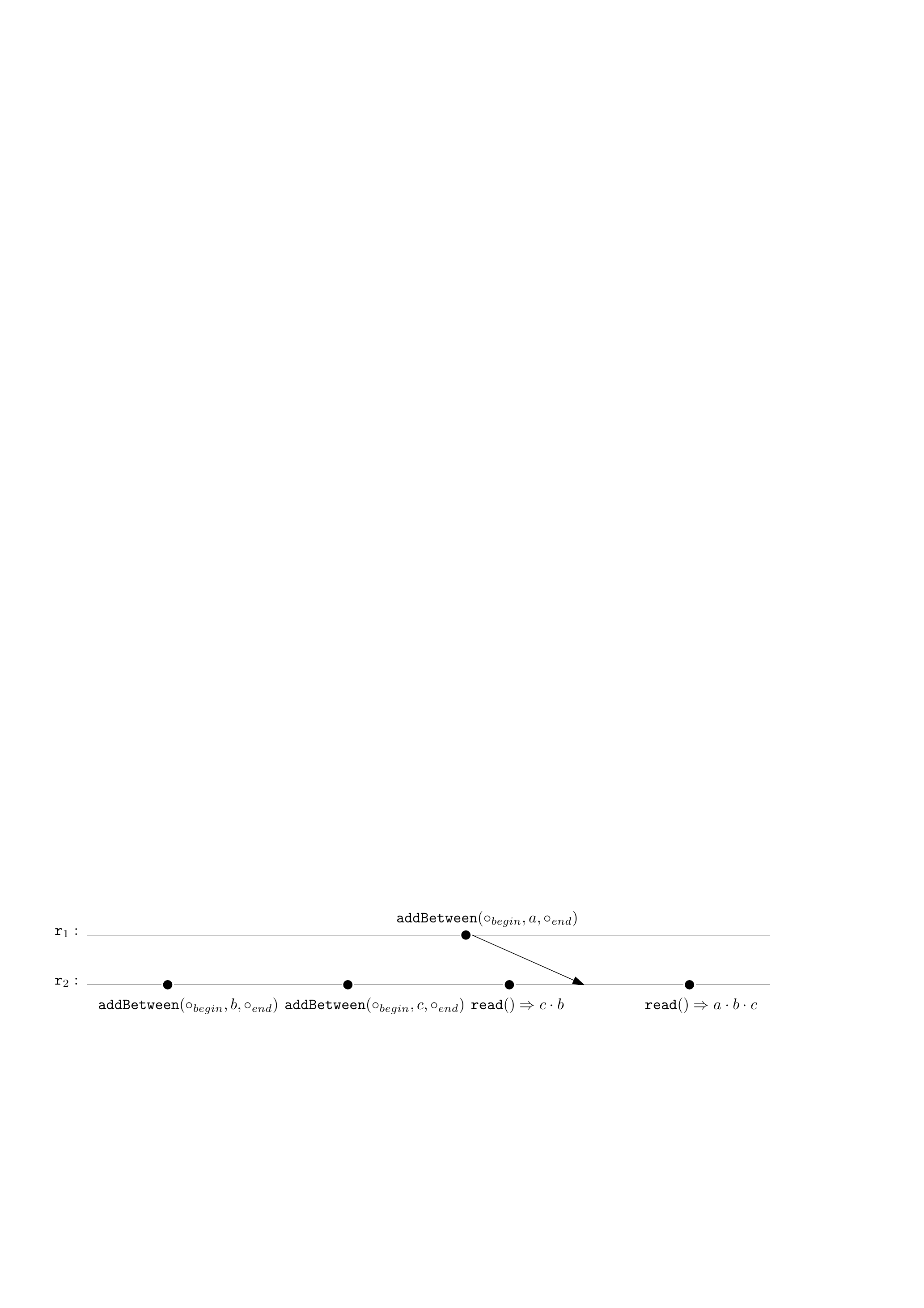}
\vspace{-10pt}
  \caption{A history w.r.t $\mathit{listBet}_s$ that should be excluded.}
  \label{fig:a wrong history w.r.t listbets}
\end{figure}

Given an execution $e$ of $\aobj$, we propose the following two property of $e$: Given local state $\sigma_1$ and $\sigma_2$ of $e$, if $\sigma_1$ (resp., $\sigma_2$) is obtained from the initial local state by applying a sequence $s_1$ of downstream (resp., a sequence $s_2$ of downstream).

\begin{itemize}
\setlength{\itemsep}{0.5pt}
\item[-] Convergence: if $s_1 \cap \updates = s_2 \cap \updates$, then $\sigma_1 = \sigma_2$.

\item[-] Consistent Conflict Resolution: if $s_1 \cap \updates$ is a sub-sequence of $s_2 \cap \updates$. Let $S$ be the set of downstream of update operations that are in $s_2$ and not in $s_1$, and let $s_3$ be a sequences of $S$ that is consistent with visibility relation. Then, we can obtain $\sigma_2$ from $\sigma_1$ by applying the downstream sequence $s_3$.
\end{itemize}

We say an execution $e$ is ready for downstream, if given a local state $\sigma$ of $e$ which is of replica $\arep$, we can safely apply all the downstream that are still not visible to replica $\arep$ in any order consistent with visibility relation.

Let us go back to \autoref{fig:a wrong history w.r.t listbets}. Let $\sigma_1$ be the local state of $\arep_2$ after we do $\alabelshort[{\tt addBetween}]{\circ_{begin},c,\circ_{end}}$, and let $\sigma_2$ be the local state of $\arep_2$ after we apply the downstream of $\alabelshort[{\tt addBetween}]{\circ_{begin},a,\circ_{end}}$.

$\alabelshort[{\tt addBetween}]{\circ_{begin},c,\circ_{end}}$

$\alabellong[{\tt read}]{}{c \cdot b}{}$,

The following lemma states that, once we prove ready for downstream and downstream of concurrent operations commute, we can obtain convergence and consistent conflict resolution, no matter whether the \Spec{} is deterministic or non-deterministic.

\begin{lemma}
\label{lemma:execution-order linearization ensures convergence and consistent conflict resolution}
If $\aobj$ is \crdtlinearizable{} w.r.t \Spec{} and each of its execution is ready for downstream.
execution-order linearizations

When doing $\alabelshort[{\tt integreteIns}]{c_p,c,c_n}$, for each $i$, $F[i],F[i+1]$ are degree-$d_{min}$-adjacent.
\end{lemma}

\begin {proof}
Obviously, $F[i],F[i+1]$ have degree $d_{min}$, and there is no W-character with degree $d_{min}$ and between $F[i]$ and $F[i+1]$ in $string_s$.

Since $d_{min}$ is the minimal degree of W-characters in $S'$, there does not exists W-character that is between $F[i]$ and $F[i+1]$ and will a degree smaller than $d_{min}$. This completes the proof of this lemma. $\qed$
\end {proof}

\subsection{\crdtlin{} with Non-Deterministic Sequential Specifications}
\label{subsec:appendix RA-linearizability with non-deterministic sequential specifications}

Recall that a sequential specification \Spec{} is deterministic, if for every label, the transition from a given initial state can produce at most one final state. Otherwise, we say that \Spec{} is non-deterministic.

Similarly as in \sectionautorefname \ref{subsec:definition of distributed linearizability}, let us provide the definition of \crdtlin{} with non-deterministic sequential specifications. For presentation reasons, we first consider the case where all the labels in the history are either queries or updates. We say a sequence $s$ is closed under a relation $R$, if when $a$ is in $s$ and $(a,b) \in R$, we have $b$ also in $s$.

\begin{definition}
\label{definition:ralinearizability1 with non-deterministic specifications}
A history $h = (\alabelset,\avisord)$ with $\alabelset\subseteq \queries\cup\updates$ is \crdtlinearizable{} w.r.t. a non-deterministic sequential specification \Spec{}, if there exists a specification sequence $(\alabelset, \aseqord) \in \Spec{}$, called the \emph{\crdtlinearization{}} of $h$, where we remark that the set of labels are identical, such that
\begin{enumerate}[(i)]
\item \aseqord{} is consistent with  \avisord{}, that is: $(\avisord \cup \aseqord)^{+}$ is acyclic,

\item the projection of $\aseqord$ to \emph{updates} is admitted by $\Spec$, i.e. $\aseqord\!\downarrow_{\updates} \in \Spec$,

\item There exists a function $\gconfres: \labels^* \rightarrow \abstates$ (Here the name $\gconfres$ is short for global conflict resolution). Given sequence $s \in \labels^*$, such that $s$ is consistent with $\avisord$ and closed under $\avisord^{-1}$, $\gconfres(s)$ is defined as follows: assume $\abstate_0$ is the initial abstract state of \Spec{},

$$ \gconfres(s)=\left\{
\begin{aligned}
an \ element \ of \ S &  & If \ S = \{ \abstate \vert \abstate_0 \xrightarrow{ \aseqord\!\downarrow_{s\cap \updates} }^* \abstate \} \neq \emptyset \\
undefined & & Otherwise
\end{aligned}
\right.
$$

Moreover, we require function $\gconfres$ to satisfy a property called $\mathsf{ConsistentChoice}$:

\begin{itemize}
\setlength{\itemsep}{0.5pt}
\item[-] $\mathsf{ConsistentChoice}$: Given two sub-sequences $s_1,s_2$, such that $s_1$ and $s_2$ are both consistent with $\avisord{}$, $s_1$ and $s_2$ are both closed under $\avisord{}^{-1}$, and $s_1$ is a sub-sequence of $s_2$. Then, $\gconfres(s_1) \xrightarrow{ s_2 - s_1 }^* \gconfres(s_2)$, where $s_2 - s_1$ denotes the sequences obtained from $s_2$ by removing elements of $s_1$.

\end{itemize}

\item for each query $\alabel_{\mathsf{qr}}\in \alabelset$, let $s = \avisord^{-1}(\alabel_{\mathsf{qr}})\cap \updates$. Then, we require that $\gconfres(s) \xrightarrow{ \alabel_{\mathsf{qr}} } \gconfres(s)$.
\end{enumerate}
In this case we say that $(\alabelset, \aseqord)$ is an \emph{\crdtlinearization{}} of $h$ w.r.t. $\Spec{}$.
\end{definition}

Let us explain Definition \ref{definition:ralinearizability1 with non-deterministic specifications}. Although $\Spec{}$ is non-deterministic, we use $\gconfres(s)$ to remember our choice for operation sequence $s$. Our definition need to concern two aspects:

\begin{itemize}
\setlength{\itemsep}{0.5pt}
\item[-] Convergence: To obtain $\gconfres(s)$ we only consider $\aseqord\!\downarrow_{s\cap \updates}$, which obviously implies convergence when we consider $\gconfres(s)$ and $\gconfres(s')$ and $s$ is a permutation of $s'$.

\item[-] Unique Choice in Specification: We should get rid of histories that have several operations that use ``Inconsistent non-deterministic choice''.

An example $h$ of such history is shown in \autoref{fig:a wrong history w.r.t listbets}. If only $\alabellong[{\tt read}]{}{c \cdot b}{}$ or $\alabellong[{\tt read}]{}{a \cdot b \cdot c}{}$ exists in $h$, then each of them is reasonable, since $\alabelshort[{\tt remove}]{a_2,a_1,a_3}$ puts $a_1$ in a random position between $a_2$ and $a_3$. However, $\alabellong[{\tt read}]{}{c \cdot b}{}$ represents that $c$ is put before $b$, while $\alabellong[{\tt read}]{}{a \cdot b \cdot c}{}$ represents that $b$ is before $c$. Therefore, we can not put these two $read$ operation in one history.

Since the set of operations visible to $\alabellong[{\tt read}]{}{c \cdot b}{}$ and the set of operations visible to $\alabellong[{\tt read}]{}{a \cdot b \cdot c}{}$ are different, our convergence requirement is not enough to rule them out. Instead, the $\mathsf{ConsistentChoice}$ condition is able to rule them out. $\mathsf{ConsistentChoice}$ essentially represents that, a ``bigger state'' follows the same choice in a ``smaller'' state.
\end{itemize}

\begin{figure}[t]
  \centering
  \includegraphics[width=0.85 \textwidth]{ErrorExecutionofWooki.pdf}
\vspace{-10pt}
  \caption{A ``wrong'' history w.r.t $\mathit{listBet}_s$.}
  \label{fig:a wrong history w.r.t listbets}
\end{figure}

The case when histories include query-updates is similarly dealt with as Definition \ref{definition:distributed linearizability}. We do it by rewriting of the original history where each query-update is decomposed into a label representing the query part and another label representing the update part.

\begin{definition}[\CRDTLin{} with Non-Deterministic Sequential Specifications]
\label{definition:ralinearizability with non-deterministic sequential specifications and rewritting}
A history $h =(\alabelset,\avisord)$ is \crdtlinearizable{} w.r.t. a non-deterministic sequential specification \Spec{}, if there exists a query-update rewriting $\gamma$ such that $\gamma(h)$ is \crdtlinearizable{} w.r.t. \Spec{}.
\end{definition}

A set $H$ of histories is called \crdtlinearizable{} w.r.t a non-deterministic sequential specification $\Spec$ when each history $h\in H$ is \crdtlinearizable{} w.r.t. $\Spec$. A data type implementation is \crdtlinearizable{} w.r.t. a non-deterministic sequential specification $\Spec$ if for any object $\aobj$ of the data type, the set $\histories(\aobj)$ is linearizable w.r.t. $\Spec$.

Let us begin to consider convergence. Given a \crdtlinearizable{} history with two replicas $r_1,r_2$ see the same set of operations. According to Definition \ref{definition:ralinearizability1 with non-deterministic specifications} and Definition \ref{definition:ralinearizability with non-deterministic sequential specifications and rewritting}, to obtain $\gconfres(s)$, we only consider $\aseqord\!\downarrow_{s\cap \updates}$, which obviously implies convergence, as formalized in the following lemma.

\begin{lemma}
\label{lemma:distributed linarizability implies convergence for non-deterministic sequential specifications}
If a history $h$ is \crdtlinearizable{} w.r.t. a non-deterministic sequential specification \Spec, then $h$ is convergent.
\end{lemma}

\subsection{Proof Methodology for \crdtlin{} w.r.t Non-Deterministic Sequential Specifications}
\label{subsec:appendix proof methodology for RA-linearizability w.r.t non-deterministic sequential specifications}

In this subsection, we propose our methodology for proving \crdtlin{} w.r.t non-deterministic sequential specifications. Our methodology is similar for proving \crdtlin{} w.r.t deterministic sequential specifications in \sectionautorefname \ref{ssec:proof-methodology}.

Additionally, we need the following definitions:

\begin{itemize}
\setlength{\itemsep}{0.5pt}
\item[-] We introduce a function $\igconfres: \labels^* \rightarrow \states$ to explicitly record the consequence of applying downstream.

Given a sequence $s$ of downstream, such that $s$ is consistent with $\avisord$ and closed under $\avisord^{-1}$, $\igconfres(s)$ is the local state obtained by applying downstream of $s$ in the order of $s$.

\item[-] From $\igconfres(s)$, we generate a abstract state $\gconfres'(s)$. Latter we will prove that $\gconfres'$ satisfies the requirement of $\gconfres$.

    We use a refinement mapping $\refmap$ that maps $\igconfres(s)$ into $\gconfres'(s)$. 
\end{itemize}

Here the name $\igconfres$ is short for global conflict resolution in implementations.

Then, we need to prove the following properties:

\begin{itemize}
\setlength{\itemsep}{0.5pt}
\item[-] We prove that any two downstream that correspond to two ``concurrent'' operations commute.

\item[-] We prove $\mathsf{Refinement}$ holds between $\igconfres(s)$ and $\gconfres'(s)$ in a way as follows:

\begin{enumerate}
\item Given an update operation $\alabel$, if $s \cdot \alabel$ is consistent with $\avisord$ and closed under $\avisord^{-1}$, obviously $\igconfres(s \cdot \alabel)$ is obtained from $\igconfres(s)$ by applying the downstream of $\alabel$. Then, we require that $\gconfres'(s)\xRightarrow{\alabel}\gconfres'(s \cdot \alabel)$.

\item If a query $\alabel$ is applied on a state $\igconfres(s)$ or it is introduced by a rewriting of a query-update that executes \lstinline|generator| on a state $\igconfres(s)$, then $\gconfres'(s)\xRightarrow{\alabel}\gconfres'(s)$.
\end{enumerate}
\end{itemize}

With above definitions and properties, our proof proceed as follows:

\noindent {\bf Prove correctness of $\gconfres'$:} We prove that $\gconfres'$ satisfies the requirement of $\gconfres$ as follows:

\begin{itemize}
\setlength{\itemsep}{0.5pt}
\item[-] Given sequences $s_1$ and $s_2$, such that both $s_1$ and $s_2$ are consistent with $\avisord$ and are closed under $\avisord^{-1}$, and $s_1$ is a permutation of $s_2$. Then, since any two downstream that correspond to two ``concurrent'' operations commute, we can see that $\igconfres(s_1)$ = $\igconfres(s_2)$. Since $\mathsf{Refinement}$ holds, we have transitions $\gconfres'(\epsilon)\xRightarrow{s_1}^*\gconfres'(s_1)$ and $\gconfres'(\epsilon)\xRightarrow{s_2}^*\gconfres'(s_2)$. Thus, $\gconfres'(s_1) = \gconfres'(s_2)$.

    Moreover, let $s_3$ be the projection of $\alinord$ into operations of $s_1$. Similarly, we can prove that $\gconfres'(s_1) = \gconfres'(s_3)$.

\item[-] $\mathsf{ConsistentChoice}$: Given $s_1$ and $s_2$, such that they are both consistent with $\avisord$ and closed under $\avisord^{-1}$, and $s_1$ is a sub-sequence of $s_2$. Obviously $s_1 \cdot (s_2 - s_1)$ is closed under $\avisord^{-1}$.

    We prove that $s_1 \cdot (s_2 - s_1)$ is consistent with $\avisord$ by contradiction. Assume that $s_1 \cdot (s_2 - s_1)$ is not consistent with $\avisord$. Since $s_1$ and $s_2$ are consistent with $\avisord$, this implies that, there exists $a,b$, such that, $a \in s_1$, $b \in s_2 - s_1$, and $(b,a) \in \avisord$. Or we can say, $(b,a) \in \avisord^{-1}$, $a \in s_1$, $b \notin s_1$. This contradicts the assumption that $s_1$ is closed under $\avisord^{-1}$. Therefore, $s_1 \cdot (s_2 - s_1)$ is consistent with $\avisord$.

    Since $s_2$ and $s_1 \cdot (s_2 - s_1)$ are consistent with $\avisord$ and closed under $\avisord^{-1}$, we know that there exists $\igconfres(s_2)$ and $\igconfres(s_1 \cdot (s_2 - s_1))$, and transitions from $\igconfres(\epsilon)$ to $\igconfres(s_2)$, and transitions from $\igconfres(\epsilon)$ to $\igconfres(s_1 \cdot (s_2 - s_1))$. By $\mathsf{Refinement}$, we know that there exist transitions $\gconfres'(\epsilon)\xRightarrow{s_2}^*\gconfres'(s_2)$ and transitions $\gconfres'(\epsilon)\xRightarrow{s_1}^*\gconfres'(s_1)\xRightarrow{s_2 - s_1}^*\gconfres'(s_1 \cdot (s_2 - s_1))$.

    Obviously, $s_2$ is a permutation of $s_1 \cdot (s_2 - s_1)$. As discussed above, we can see that $\gconfres'(s_2) = \gconfres'(s_1 \cdot (s_2 - s_1))$.
\end{itemize}

\noindent {\bf Prove $\mathsf{ReplicaStates}$:} $\mathsf{ReplicaStates}$ holds since any two downstream that correspond to two ``concurrent'' operations commute.

\noindent {\bf Prove $\mathsf{Refinement}$:} This is specific to implementations.

\noindent {\bf Prove Annotations:} If above proof relies on annotations of the downstream or local state, then we should also prove that the annotation of downstream holds for new downstream, and the annotation of local state holds for new local state.

\noindent {\bf Prove $\mathsf{\CRDTLinshort{}}$:} Similarly as in \sectionautorefname \ref{ssec:proof-methodology}, this is a consequence of $\mathsf{ReplicaStates}$, $\mathsf{Refinement}$, and the correctness of $\gconfres'$.

We have applied this methodology to Wooki and Tree-Doc. For Wooki and Tree-Doc, we use execution-order linearizations. The linearization $\alinord$ is defined by the order in which the \lstinline|generator| procedures are executed.

\subsection{Proof of Wooki}
\label{subsec:proof of Wooki}

Then, let us prove that Wooki is \crdtlinearizable{} w.r.t $\mathit{listBet}_s$.

\begin{lemma}
\label{lemma:Wooki is correct}
Wooki is \crdtlinearizable{} w.r.t $\mathit{listBet}_s$.
\end{lemma}

\begin {proof}

We prove following the prove methodology of \sectionautorefname \ref{subsec:appendix proof methodology for RA-linearizability w.r.t non-deterministic sequential specifications}.

We define abstract state $\gconfres'(s)$ as follows: given a sequence $s$ of downstream, such that $s$ is consistent with $\avisord$ and closed under $\avisord^{-1}$, assume $\igconfres(s) = c_1 \dot \ldots \cdot c_n$, where for each $i$, $c_i = (id_i,v_i,degree_i,flag_i)$. Then, $\gconfres'(s) = (v_1 \cdot \ldots \cdot v_n,T)$, where $T = \{ v_i \vert flag_i = \mathit{false} \}$.

Then, let us prove the downstream of concurrent operations commute and the $\mathsf{Refinement}$ property,

\begin{itemize}
\setlength{\itemsep}{0.5pt}
\item[-] By Lemma \ref{lemma:in Wooki algorithm, two downstreams of two addBetween operations commute}, we can see that the downstream of concurrent {\tt addBetween} operations commute. According to Wooki algorithm, it is easy to see that the downstream of concurrent {\tt remove} operations commute, since they both put some value into tombstone; A {\tt addBetween} and a {\tt remove} downstream commute when they are concurrent because in this case, the W-character removed by {\tt remove} are different from the pair added by the {\tt addBetween}.
\item[-] Let us prove $\mathsf{Refinement}$ holds between $\igconfres(s)$ and $\gconfres'(s)$ as follows:

    \begin{itemize}
    \setlength{\itemsep}{0.5pt}
    \item[-] Given an operation $\alabel = \alabelshort[{\tt addBetween}]{a,b,c}$, assume that $s \cdot \alabel$ is consistent with $\avisord$ and closed under $\avisord^{-1}$. We need to prove that $\gconfres'(s)\xRightarrow{\alabel}\gconfres'(s \cdot \alabel)$.

        Assume that $\igconfres(s)=str_1$, and $\igconfres(s')=str_2$. We can see that there exists W-character $c_a = (\_,a,\_,\_)$ and $c_c = (\_,c,\_,\_)$ in $str_1$, such that $c_a$ is before $c_c$ in $str_1$, there is no W-character of value $b$ in $str_1$, and $str_2$ is obtained from $str_1$ by inserting a W-character $c_b = (\_,b,\_,\mathit{true})$ at some position between $c_a$ and $c_c$. Then, it is easy to see that $\gconfres'(s)\xRightarrow{\alabel}\gconfres'(s \cdot \alabel)$.

    \item[-] Given an operation $\alabel = \alabelshort[{\tt remove}]{b}$, assume that $s \cdot \alabel$ is consistent with $\avisord$ and closed under $\avisord^{-1}$. We need to prove that $\gconfres'(s)\xRightarrow{\alabel}\gconfres'(s \cdot \alabel)$.

        Assume that $\igconfres(s)=str_1$, and $\igconfres(s')=str_2$. We can see that there exists W-character $c_b = (\_,b,\_,\_)$ in $str_1$, and $str_2$ is obtained from $str_1$ by setting the flag of $c_b$ into $\mathit{false}$. Then, it is easy to see that $\gconfres'(s)\xRightarrow{\alabel}\gconfres'(s \cdot \alabel)$.

    \item[-] Given an operation $\alabel = \alabellong[{\tt read}]{}{s_1}{}$, assume that $s \cdot \alabel$ is consistent with $\avisord$ and closed under $\avisord^{-1}$. Assume that $\alabel$ is applied on the state $\igconfres(s)$. We need to prove that $\gconfres'(s)\xRightarrow{\alabel}\gconfres'(s)$.

        Assume that $\igconfres(s)=c_1 \cdot \ldots \cdot c_n$, where for each $i$, $c_i = (id_i,v_i,degree_i,flag_i)$. Then, $s_1$ is obtained from $v_1 \cdot \ldots \cdot v_n$ by removing values with flag $\mathit{false}$. Then, it is easy to see that $\gconfres'(s)\xRightarrow{\alabel}\gconfres'(s)$.
    \end{itemize}

\end{itemize}

Then, our proof proceeds as follows:

\begin{itemize}
\setlength{\itemsep}{0.5pt}
\item[-] {\bf Prove correctness of $\gconfres'$:} With above properties, as discussed in \sectionautorefname \ref{subsec:appendix proof methodology for RA-linearizability w.r.t non-deterministic sequential specifications}, we can prove that $\gconfres'$ satisfies the requirement of $\gconfres$ in Definition \ref{definition:ralinearizability1 with non-deterministic specifications}.

\item[-] {\bf Prove $\mathsf{ReplicaStates}$:} Since every operation is appended to the linearization when it executes {\tt generator} it clearly follows, the linearization order is consistent with visibility order. Then, by the {\textred{causal delivery}} assumption, the order in which downstream are applied at a given replica is also consistent with the visibility order. Let $\alinord_1$ be the projection of linearization order into labels applied in a replica $\arep$, and $\alinord_2$ be the order of labels applied in replica $\arep$. By Lemma \ref{lemma:given two sequence consistent with visibility order, one can be obtained from the other}, $\alinord_2$ can be obtained from $\alinord_1$ by several time of swapping adjacent pair of concurrent operations. This holds, since we have already prove that, the downstream of concurrent operations commute.

\item[-] {\bf Prove $\mathsf{\CRDTLinshort{}}$:} Finally, we describe the proof of the fact that $\mathsf{\CRDTLinshort{}}$ is an inductive invariant. As already mentioned, appending operations to the linearization when they execute \lstinline|generator| clearly implies that $\alinord$ is consistent with the visibility. Next, the projection of $\alinord$ on the updates is obviously admitted by the specification (the updates are always enabled from the point of view of the specification).
We also have to argue that for each query $\alabel_{\mathsf{qr}} = \alabellongind[read]{}{s_1}{}$, the sequence $\alinord'\cdot \alabel_{\mathsf{qr}}$ where $\alinord'$ is the projection of $\alinord$ on the set of updates
visible to $\alabel_{\mathsf{qr}}$ is admitted by the specification. First, by $\mathsf{ReplicaStates}$, the state $\sigma$ of the replica where $\alabel_{\mathsf{qr}}$ is applied is obtained by applying the downstream of the operations visible to $\alabel_{\mathsf{qr}}$ in the linearization order. Then, by $\mathsf{Refinement}$, every downstream is simulated by the corresponding operation in the context of the specification. This implies that $\gconfres'(\epsilon)\xRightarrow{\alinord'}^*\gconfres'(\alinord')$. The query $\alabel_{\mathsf{qr}}$ is also simulated by the same operation in the context of the specification, which implies that $\gconfres'(\alinord')\xRightarrow{\alabel_{\mathsf{qr}}}\gconfres'(\alinord')$. These two facts imply that $\gconfres'(\epsilon)\xRightarrow{\alinord'\cdot \alabel_{\mathsf{qr}}}^*\gconfres'(\alinord')$ which means that $\alinord'\cdot \alabel_{\mathsf{qr}}$ is admitted by the specification.
\end{itemize}

This completes the proof of this lemma. $\qed$
\end {proof}
}

\forget
{
\section{\crdtlin{} with Non-Deterministic Sequential Specifications and Its Proof}
\label{sec:appendix RA-linearizability with non-deterministic sequential specifications and its proof}

\subsection{\crdtlin{} with Non-Deterministic Sequential Specifications}
\label{subsec:appendix RA-linearizability with non-deterministic sequential specifications}

Recall that a sequential specification \Spec{} is deterministic, if for every label, the transition from a given initial state can produce at most one final state. Otherwise, we say that \Spec{} is non-deterministic.

Similarly as in \sectionautorefname \ref{subsec:definition of distributed linearizability}, let us provide the definition of \crdtlin{} with non-deterministic sequential specifications. For presentation reasons, we first consider the case where all the labels in the history are either queries or updates. We say a sequence $s$ is closed under a relation $R$, if when $a$ is in $s$ and $(a,b) \in R$, we have $b$ also in $s$.

\begin{definition}
\label{definition:ralinearizability1 with non-deterministic specifications}
A history $h = (\alabelset,\avisord)$ with $\alabelset\subseteq \queries\cup\updates$ is \crdtlinearizable{} w.r.t. a non-deterministic sequential specification \Spec{}, if there exists a specification sequence $(\alabelset, \aseqord) \in \Spec{}$, called the \emph{\crdtlinearization{}} of $h$, where we remark that the set of labels are identical, such that
\begin{enumerate}[(i)]
\item \aseqord{} is consistent with  \avisord{}, that is: $(\avisord \cup \aseqord)^{+}$ is acyclic,

\item the projection of $\aseqord$ to \emph{updates} is admitted by $\Spec$, i.e. $\aseqord\!\downarrow_{\updates} \in \Spec$,

\item There exists a function $\gconfres: \labels^* \rightarrow \abstates$ (Here the name $\gconfres$ is short for global conflict resolution). Given sequence $s \in \labels^*$, such that $s$ is consistent with $\avisord$ and closed under $\avisord^{-1}$, $\gconfres(s)$ is defined as follows: assume $\abstate_0$ is the initial abstract state of \Spec{},

$$ \gconfres(s)=\left\{
\begin{aligned}
an \ element \ of \ S &  & If \ S = \{ \abstate \vert \abstate_0 \xrightarrow{ \aseqord\!\downarrow_{s\cap \updates} }^* \abstate \} \neq \emptyset \\
undefined & & Otherwise
\end{aligned}
\right.
$$

Moreover, we require function $\gconfres$ to satisfy a property called $\mathsf{ConsistentChoice}$:

\begin{itemize}
\setlength{\itemsep}{0.5pt}
\item[-] $\mathsf{ConsistentChoice}$: Given two sub-sequences $s_1,s_2$, such that $s_1$ and $s_2$ are both consistent with $\avisord{}$, $s_1$ and $s_2$ are both closed under $\avisord{}^{-1}$, and $s_1$ is a sub-sequence of $s_2$. Then, $\gconfres(s_1) \xrightarrow{ s_2 - s_1 }^* \gconfres(s_2)$, where $s_2 - s_1$ denotes the sequences obtained from $s_2$ by removing elements of $s_1$.

\end{itemize}

\item for each query $\alabel_{\mathsf{qr}}\in \alabelset$, let $s = \avisord^{-1}(\alabel_{\mathsf{qr}})\cap \updates$. Then, we require that $\gconfres(s) \xrightarrow{ \alabel_{\mathsf{qr}} } \gconfres(s)$.
\end{enumerate}
In this case we say that $(\alabelset, \aseqord)$ is an \emph{\crdtlinearization{}} of $h$ w.r.t. $\Spec{}$.
\end{definition}

Let us explain Definition \ref{definition:ralinearizability1 with non-deterministic specifications}. Although $\Spec{}$ is non-deterministic, we use $\gconfres(s)$ to remember our choice for operation sequence $s$. Our definition need to concern two aspects:

\begin{itemize}
\setlength{\itemsep}{0.5pt}
\item[-] Convergence: To obtain $\gconfres(s)$ we only consider $\aseqord\!\downarrow_{s\cap \updates}$, which obviously implies convergence when we consider $\gconfres(s)$ and $\gconfres(s')$ and $s$ is a permutation of $s'$.

\item[-] Unique Choice in Specification: We should get rid of histories that have several operations that use ``Inconsistent non-deterministic choice''.

An example $h$ of such history is shown in \autoref{fig:a wrong history w.r.t listbets}. If only $\alabellong[{\tt read}]{}{c \cdot b}{}$ or $\alabellong[{\tt read}]{}{a \cdot b \cdot c}{}$ exists in $h$, then each of them is reasonable, since $\alabelshort[{\tt remove}]{a_2,a_1,a_3}$ puts $a_1$ in a random position between $a_2$ and $a_3$. However, $\alabellong[{\tt read}]{}{c \cdot b}{}$ represents that $c$ is put before $b$, while $\alabellong[{\tt read}]{}{a \cdot b \cdot c}{}$ represents that $b$ is before $c$. Therefore, we can not put these two $read$ operation in one history.

Since the set of operations visible to $\alabellong[{\tt read}]{}{c \cdot b}{}$ and the set of operations visible to $\alabellong[{\tt read}]{}{a \cdot b \cdot c}{}$ are different, our convergence requirement is not enough to rule them out. Instead, the $\mathsf{ConsistentChoice}$ condition is able to rule them out. $\mathsf{ConsistentChoice}$ essentially represents that, a ``bigger state'' follows the same choice in a ``smaller'' state.
\end{itemize}

\begin{figure}[t]
  \centering
  \includegraphics[width=0.85 \textwidth]{ErrorExecutionofWooki.pdf}
\vspace{-10pt}
  \caption{A ``wrong'' history w.r.t $\mathit{listBet}_s$.}
  \label{fig:a wrong history w.r.t listbets}
\end{figure}

The case when histories include query-updates is similarly dealt with as Definition \ref{definition:distributed linearizability}. We do it by rewriting of the original history where each query-update is decomposed into a label representing the query part and another label representing the update part.

\begin{definition}[\CRDTLin{} with Non-Deterministic Sequential Specifications]
\label{definition:ralinearizability with non-deterministic sequential specifications and rewritting}
A history $h =(\alabelset,\avisord)$ is \crdtlinearizable{} w.r.t. a non-deterministic sequential specification \Spec{}, if there exists a query-update rewriting $\gamma$ such that $\gamma(h)$ is \crdtlinearizable{} w.r.t. \Spec{}.
\end{definition}

A set $H$ of histories is called \crdtlinearizable{} w.r.t a non-deterministic sequential specification $\Spec$ when each history $h\in H$ is \crdtlinearizable{} w.r.t. $\Spec$. A data type implementation is \crdtlinearizable{} w.r.t. a non-deterministic sequential specification $\Spec$ if for any object $\aobj$ of the data type, the set $\histories(\aobj)$ is linearizable w.r.t. $\Spec$.

Let us begin to consider convergence. Given a \crdtlinearizable{} history with two replicas $r_1,r_2$ see the same set of operations. According to Definition \ref{definition:ralinearizability1 with non-deterministic specifications} and Definition \ref{definition:ralinearizability with non-deterministic sequential specifications and rewritting}, to obtain $\gconfres(s)$, we only consider $\aseqord\!\downarrow_{s\cap \updates}$, which obviously implies convergence, as formalized in the following lemma.

\begin{lemma}
\label{lemma:distributed linarizability implies convergence for non-deterministic sequential specifications}
If a history $h$ is \crdtlinearizable{} w.r.t. a non-deterministic sequential specification \Spec, then $h$ is convergent.
\end{lemma}

\subsection{Proof Methodology for \crdtlin{} w.r.t Non-Deterministic Sequential Specifications}
\label{subsec:appendix proof methodology for RA-linearizability w.r.t non-deterministic sequential specifications}

In this subsection, we propose our methodology for proving \crdtlin{} w.r.t non-deterministic sequential specifications. Our methodology is similar for proving \crdtlin{} w.r.t deterministic sequential specifications in \sectionautorefname \ref{ssec:proof-methodology}.

Additionally, we need the following definitions:

\begin{itemize}
\setlength{\itemsep}{0.5pt}
\item[-] We introduce a function $\igconfres: \labels^* \rightarrow \states$ to explicitly record the consequence of applying downstream.

Given a sequence $s$ of downstream, such that $s$ is consistent with $\avisord$ and closed under $\avisord^{-1}$, $\igconfres(s)$ is the local state obtained by applying downstream of $s$ in the order of $s$.

\item[-] From $\igconfres(s)$, we generate a abstract state $\gconfres'(s)$. Latter we will prove that $\gconfres'$ satisfies the requirement of $\gconfres$.

    We use a refinement mapping $\refmap$ that maps $\igconfres(s)$ into $\gconfres'(s)$. 
\end{itemize}

Here the name $\igconfres$ is short for global conflict resolution in implementations.

Then, we need to prove the following properties:

\begin{itemize}
\setlength{\itemsep}{0.5pt}
\item[-] We prove that any two downstream that correspond to two ``concurrent'' operations commute.

\item[-] We prove $\mathsf{Refinement}$ holds between $\igconfres(s)$ and $\gconfres'(s)$ in a way as follows:

\begin{enumerate}
\item Given an update operation $\alabel$, if $s \cdot \alabel$ is consistent with $\avisord$ and closed under $\avisord^{-1}$, obviously $\igconfres(s \cdot \alabel)$ is obtained from $\igconfres(s)$ by applying the downstream of $\alabel$. Then, we require that $\gconfres'(s)\xRightarrow{\alabel}\gconfres'(s \cdot \alabel)$.

\item If a query $\alabel$ is applied on a state $\igconfres(s)$ or it is introduced by a rewriting of a query-update that executes \lstinline|generator| on a state $\igconfres(s)$, then $\gconfres'(s)\xRightarrow{\alabel}\gconfres'(s)$.
\end{enumerate}
\end{itemize}

With above definitions and properties, our proof proceed as follows:

\noindent {\bf Prove correctness of $\gconfres'$:} We prove that $\gconfres'$ satisfies the requirement of $\gconfres$ as follows:

\begin{itemize}
\setlength{\itemsep}{0.5pt}
\item[-] Given sequences $s_1$ and $s_2$, such that both $s_1$ and $s_2$ are consistent with $\avisord$ and are closed under $\avisord^{-1}$, and $s_1$ is a permutation of $s_2$. Then, since any two downstream that correspond to two ``concurrent'' operations commute, we can see that $\igconfres(s_1)$ = $\igconfres(s_2)$. Since $\mathsf{Refinement}$ holds, we have transitions $\gconfres'(\epsilon)\xRightarrow{s_1}^*\gconfres'(s_1)$ and $\gconfres'(\epsilon)\xRightarrow{s_2}^*\gconfres'(s_2)$. Thus, $\gconfres'(s_1) = \gconfres'(s_2)$.

    Moreover, let $s_3$ be the projection of $\alinord$ into operations of $s_1$. Similarly, we can prove that $\gconfres'(s_1) = \gconfres'(s_3)$.

\item[-] $\mathsf{ConsistentChoice}$: Given $s_1$ and $s_2$, such that they are both consistent with $\avisord$ and closed under $\avisord^{-1}$, and $s_1$ is a sub-sequence of $s_2$. Obviously $s_1 \cdot (s_2 - s_1)$ is closed under $\avisord^{-1}$.

    We prove that $s_1 \cdot (s_2 - s_1)$ is consistent with $\avisord$ by contradiction. Assume that $s_1 \cdot (s_2 - s_1)$ is not consistent with $\avisord$. Since $s_1$ and $s_2$ are consistent with $\avisord$, this implies that, there exists $a,b$, such that, $a \in s_1$, $b \in s_2 - s_1$, and $(b,a) \in \avisord$. Or we can say, $(b,a) \in \avisord^{-1}$, $a \in s_1$, $b \notin s_1$. This contradicts the assumption that $s_1$ is closed under $\avisord^{-1}$. Therefore, $s_1 \cdot (s_2 - s_1)$ is consistent with $\avisord$.

    Since $s_2$ and $s_1 \cdot (s_2 - s_1)$ are consistent with $\avisord$ and closed under $\avisord^{-1}$, we know that there exists $\igconfres(s_2)$ and $\igconfres(s_1 \cdot (s_2 - s_1))$, and transitions from $\igconfres(\epsilon)$ to $\igconfres(s_2)$, and transitions from $\igconfres(\epsilon)$ to $\igconfres(s_1 \cdot (s_2 - s_1))$. By $\mathsf{Refinement}$, we know that there exist transitions $\gconfres'(\epsilon)\xRightarrow{s_2}^*\gconfres'(s_2)$ and transitions $\gconfres'(\epsilon)\xRightarrow{s_1}^*\gconfres'(s_1)\xRightarrow{s_2 - s_1}^*\gconfres'(s_1 \cdot (s_2 - s_1))$.

    Obviously, $s_2$ is a permutation of $s_1 \cdot (s_2 - s_1)$. As discussed above, we can see that $\gconfres'(s_2) = \gconfres'(s_1 \cdot (s_2 - s_1))$.
\end{itemize}

\noindent {\bf Prove $\mathsf{ReplicaStates}$:} $\mathsf{ReplicaStates}$ holds since any two downstream that correspond to two ``concurrent'' operations commute.

\noindent {\bf Prove $\mathsf{Refinement}$:} This is specific to implementations.

\noindent {\bf Prove Annotations:} If above proof relies on annotations of the downstream or local state, then we should also prove that the annotation of downstream holds for new downstream, and the annotation of local state holds for new local state.

\noindent {\bf Prove $\mathsf{\CRDTLinshort{}}$:} Similarly as in \sectionautorefname \ref{ssec:proof-methodology}, this is a consequence of $\mathsf{ReplicaStates}$, $\mathsf{Refinement}$, and the correctness of $\gconfres'$.

We have applied this methodology to Wooki and Tree-Doc. For Wooki and Tree-Doc, we use execution-order linearizations. The linearization $\alinord$ is defined by the order in which the \lstinline|generator| procedures are executed.

\subsection{Proof of Wooki}
\label{subsec:proof of Wooki}

Then, let us prove that Wooki is \crdtlinearizable{} w.r.t $\mathit{listBet}_s$.

\begin{lemma}
\label{lemma:Wooki is correct}
Wooki is \crdtlinearizable{} w.r.t $\mathit{listBet}_s$.
\end{lemma}

\begin {proof}

We prove following the prove methodology of \sectionautorefname \ref{subsec:appendix proof methodology for RA-linearizability w.r.t non-deterministic sequential specifications}.

We define abstract state $\gconfres'(s)$ as follows: given a sequence $s$ of downstream, such that $s$ is consistent with $\avisord$ and closed under $\avisord^{-1}$, assume $\igconfres(s) = c_1 \dot \ldots \cdot c_n$, where for each $i$, $c_i = (id_i,v_i,degree_i,flag_i)$. Then, $\gconfres'(s) = (v_1 \cdot \ldots \cdot v_n,T)$, where $T = \{ v_i \vert flag_i = \mathit{false} \}$.

Then, let us prove the downstream of concurrent operations commute and the $\mathsf{Refinement}$ property,

\begin{itemize}
\setlength{\itemsep}{0.5pt}
\item[-] By Lemma \ref{lemma:in Wooki algorithm, two downstreams of two addBetween operations commute}, we can see that the downstream of concurrent {\tt addBetween} operations commute. According to Wooki algorithm, it is easy to see that the downstream of concurrent {\tt remove} operations commute, since they both put some value into tombstone; A {\tt addBetween} and a {\tt remove} downstream commute when they are concurrent because in this case, the W-character removed by {\tt remove} are different from the pair added by the {\tt addBetween}.
\item[-] Let us prove $\mathsf{Refinement}$ holds between $\igconfres(s)$ and $\gconfres'(s)$ as follows:

    \begin{itemize}
    \setlength{\itemsep}{0.5pt}
    \item[-] Given an operation $\alabel = \alabelshort[{\tt addBetween}]{a,b,c}$, assume that $s \cdot \alabel$ is consistent with $\avisord$ and closed under $\avisord^{-1}$. We need to prove that $\gconfres'(s)\xRightarrow{\alabel}\gconfres'(s \cdot \alabel)$.

        Assume that $\igconfres(s)=str_1$, and $\igconfres(s')=str_2$. We can see that there exists W-character $c_a = (\_,a,\_,\_)$ and $c_c = (\_,c,\_,\_)$ in $str_1$, such that $c_a$ is before $c_c$ in $str_1$, there is no W-character of value $b$ in $str_1$, and $str_2$ is obtained from $str_1$ by inserting a W-character $c_b = (\_,b,\_,\mathit{true})$ at some position between $c_a$ and $c_c$. Then, it is easy to see that $\gconfres'(s)\xRightarrow{\alabel}\gconfres'(s \cdot \alabel)$.

    \item[-] Given an operation $\alabel = \alabelshort[{\tt remove}]{b}$, assume that $s \cdot \alabel$ is consistent with $\avisord$ and closed under $\avisord^{-1}$. We need to prove that $\gconfres'(s)\xRightarrow{\alabel}\gconfres'(s \cdot \alabel)$.

        Assume that $\igconfres(s)=str_1$, and $\igconfres(s')=str_2$. We can see that there exists W-character $c_b = (\_,b,\_,\_)$ in $str_1$, and $str_2$ is obtained from $str_1$ by setting the flag of $c_b$ into $\mathit{false}$. Then, it is easy to see that $\gconfres'(s)\xRightarrow{\alabel}\gconfres'(s \cdot \alabel)$.

    \item[-] Given an operation $\alabel = \alabellong[{\tt read}]{}{s_1}{}$, assume that $s \cdot \alabel$ is consistent with $\avisord$ and closed under $\avisord^{-1}$. Assume that $\alabel$ is applied on the state $\igconfres(s)$. We need to prove that $\gconfres'(s)\xRightarrow{\alabel}\gconfres'(s)$.

        Assume that $\igconfres(s)=c_1 \cdot \ldots \cdot c_n$, where for each $i$, $c_i = (id_i,v_i,degree_i,flag_i)$. Then, $s_1$ is obtained from $v_1 \cdot \ldots \cdot v_n$ by removing values with flag $\mathit{false}$. Then, it is easy to see that $\gconfres'(s)\xRightarrow{\alabel}\gconfres'(s)$.
    \end{itemize}

\end{itemize}

Then, our proof proceeds as follows:

\begin{itemize}
\setlength{\itemsep}{0.5pt}
\item[-] {\bf Prove correctness of $\gconfres'$:} With above properties, as discussed in \sectionautorefname \ref{subsec:appendix proof methodology for RA-linearizability w.r.t non-deterministic sequential specifications}, we can prove that $\gconfres'$ satisfies the requirement of $\gconfres$ in Definition \ref{definition:ralinearizability1 with non-deterministic specifications}.

\item[-] {\bf Prove $\mathsf{ReplicaStates}$:} Since every operation is appended to the linearization when it executes {\tt generator} it clearly follows, the linearization order is consistent with visibility order. Then, by the {\textred{causal delivery}} assumption, the order in which downstream are applied at a given replica is also consistent with the visibility order. Let $\alinord_1$ be the projection of linearization order into labels applied in a replica $\arep$, and $\alinord_2$ be the order of labels applied in replica $\arep$. By Lemma \ref{lemma:given two sequence consistent with visibility order, one can be obtained from the other}, $\alinord_2$ can be obtained from $\alinord_1$ by several time of swapping adjacent pair of concurrent operations. This holds, since we have already prove that, the downstream of concurrent operations commute.

\item[-] {\bf Prove $\mathsf{\CRDTLinshort{}}$:} Finally, we describe the proof of the fact that $\mathsf{\CRDTLinshort{}}$ is an inductive invariant. As already mentioned, appending operations to the linearization when they execute \lstinline|generator| clearly implies that $\alinord$ is consistent with the visibility. Next, the projection of $\alinord$ on the updates is obviously admitted by the specification (the updates are always enabled from the point of view of the specification).
We also have to argue that for each query $\alabel_{\mathsf{qr}} = \alabellongind[read]{}{s_1}{}$, the sequence $\alinord'\cdot \alabel_{\mathsf{qr}}$ where $\alinord'$ is the projection of $\alinord$ on the set of updates
visible to $\alabel_{\mathsf{qr}}$ is admitted by the specification. First, by $\mathsf{ReplicaStates}$, the state $\sigma$ of the replica where $\alabel_{\mathsf{qr}}$ is applied is obtained by applying the downstream of the operations visible to $\alabel_{\mathsf{qr}}$ in the linearization order. Then, by $\mathsf{Refinement}$, every downstream is simulated by the corresponding operation in the context of the specification. This implies that $\gconfres'(\epsilon)\xRightarrow{\alinord'}^*\gconfres'(\alinord')$. The query $\alabel_{\mathsf{qr}}$ is also simulated by the same operation in the context of the specification, which implies that $\gconfres'(\alinord')\xRightarrow{\alabel_{\mathsf{qr}}}\gconfres'(\alinord')$. These two facts imply that $\gconfres'(\epsilon)\xRightarrow{\alinord'\cdot \alabel_{\mathsf{qr}}}^*\gconfres'(\alinord')$ which means that $\alinord'\cdot \alabel_{\mathsf{qr}}$ is admitted by the specification.
\end{itemize}

This completes the proof of this lemma. $\qed$
\end {proof}
}

\forget{
\section{\crdtimp{}}
\label{sec:crdt implementation}

\subsection{Multi-Value Register Implementation}
\label{subsec:multi-value register implementation}

\cite{ShapiroPBZ11} shows how to obtain a state-based \crdtimp{} from a operation-based \crdtimp{}, and we draw it in Listing~\ref{lst:operation-based emulation of state-based object}. To do an operation $f(a)$, we compute the state-based update and perform merge method in downstream. Here the precondition of downstream is empty because merge is always enabled.

\begin{minipage}[t]{1.0\linewidth}
\begin{lstlisting}[frame=top,caption={operation-based emulation of state-based object},
captionpos=b,label={lst:operation-based emulation of state-based object}]
  payload S ( the state-based payload )
  initial initial payload of S

  update method f(a)
    generator :
      precondition : precondition of f
      let s = generator of f(a) in state-based
    effector(s) :
      S = merge(S,s)
\end{lstlisting}
\end{minipage}

\cite{ShapiroPBZ11} gives a state-based multi-value register implementation. As discussed above, we give its operation-based version in Listing~\ref{lst:operation-based multi-value register}.

The following is a multi-value register implementation.

\begin{minipage}[t]{1.0\linewidth}
\begin{lstlisting}[frame=top,caption={Pseudo-code of the or-set CRDT},
captionpos=b,label={lst:operation-based multi-value register}]
  payload Set S
  initial S = @|$\emptyset$|@
  initial seq = @|$\epsilon$|@

  add(a) :
    generator :
      let k = getUniqueIdentifier()
      //@ let seq' = seq@|$\,\cdot\,\alabelshort[add]{a,k}$|@
    effector(a, k) :
      S = S @|$\cup$|@ {(a, k)}
      //@ S' = S @|$\cup$|@ {(a, k)}

  remove(a) :
    generator :
      let R = @|$\{$|@ (a,k): (a,k) @|$\in$|@ S @|$\}$|@
      //@ let seq' = seq@|$\,\cdot\,\alabellongind[readIds]{a}{R}{}\,\cdot\,\alabelshort[remove]{a,R}$|@
    effector(R) :
      S = S @|$\setminus$|@ R
      //@ R = @|$\{ (a,k): \exists\ \alabel = \alabellongind[add]{a,k}{\bot}{*}.\ (\alabel, \alabelshort[remove]{a,R}) \in \avisord$|@
                       @|$\land\,\forall\ \alabel' = \alabellongind[remove]{a,*}{\bot}{*}.\ \{(\alabel,\alabel'),(\alabel',\alabelshort[remove]{a,R})\}\not\subseteq \avisord\}$|@
      //@ S' = S @|$\setminus$|@ R

  read() :
    let A = {a : @|$\exists$|@ k. (a,k) @|$\in$|@ S}
    //@ let seq' = seq@|$\,\cdot\,\alabellongind[read]{}{A}{}$|@
    return A
\end{lstlisting}
\end{minipage}

\subsection{OR-set Implementation and Formation}
\label{subsec:or-set implementation and formation}

The or-set implementation is shown below. Here function $\mathit{myRep}()$ returns the current replica identifier.

\renewcommand{\algorithmcfname}{CRDT Implementation}
\noindent
\noindent\begin{algorithm}[H]
$\mathit{payload}$ set $S$; $\mathit{maxTS}$\\
$\mathit{initial}$ $\emptyset$; $(0,\mathit{myRep}())$\\

$add(a)$ \\
\ \ $\mathit{generator}$: \\
\ \ \ \ assume $\mathit{maxTS} = (c,r')$; \\
\ \ \ \ let $\mathit{ts}' =(c+1,\mathit{myRep}())$; \\

\ \ $\mathit{downstream}((a,\mathit{ts}'))$: \\
\ \ \ \ $S = S \cup \{ (a,\mathit{ts}') \}$; \\
\ \ \ \ $\mathit{maxTS} = \mathit{max} \{ \mathit{maxTS},\mathit{ts}' \}$;

$rem(a)$ \\
\ \ $\mathit{generator}$: \\
\ \ \ \ $\mathit{pre}$: \ $\exists \mathit{ts}', (a,\mathit{ts}') \in S$ \\
\ \ \ \ let $S_1 = \{ (a,\mathit{ts}') \vert (a,\mathit{ts}') \in S \}$; \\

\ \ $\mathit{downstream}(S_1)$: \\
\ \ \ \ $S = S \setminus S_1$.

$read()$ \\
\ \ \ \ \KwRet $\{ a \vert \exists \mathit{ts}, (a,\mathit{ts}) \in S \}$; \\

\caption{OR-set}
\label{Method-or-set}
\end{algorithm}

The formation of or-set is as follows: $I(r) = (\Sigma, \sigma_0, \mathit{Msg}, \mathit{do},\mathit{receive})$, where

\begin{itemize}
\setlength{\itemsep}{0.5pt}
\item[-] $\Sigma = \{ (S,\mathit{ts}) \vert$, $S$ is a set, each item of $S$ is of the form $(a',\mathit{ts}')$ with $a' \in D$ and $\mathit{ts}' \in \mathbb{N} \times \mathbb{R}.$ $\mathit{ts} \in \mathbb{N} \times \mathbb{R} \}$. $\Sigma_0 = (\emptyset,(0,\mathit{myRep}()))$.

\item[-] Each message content in $\mathit{Msg}$ is either in $D \times \mathbb{N} \times \mathbb{R}$, or a subset of $D \times \mathbb{N} \times \mathbb{R}$.

\item[-] $\mathit{do}((S,(c,r')),\mathit{add},a) = ((S \cup \{ (a, (c+1,r)) \}, (c+1,r)),(a,(c+1,r)))$.

\item[-] If $\exists \mathit{ts}', (a,\mathit{ts}') \in S$, then $\mathit{do}((S,\mathit{ts}),\mathit{rem},a) = ((S \setminus S_1,\mathit{ts}), S_1)$, where $S_1 = \{ (a,\mathit{ts}'') \in S \}$.

\item[-] $\mathit{do}((S,\mathit{ts}),\mathit{read}) = ((S,\mathit{ts}),\{ a \vert \exists \mathit{ts}', (a,\mathit{ts})' \in S \})$.

\item[-] $\mathit{receive}((S,\mathit{ts}),(a,\mathit{ts}')) = (S \cup \{ (a,\mathit{ts}') \}, \mathit{max}( \mathit{ts},\mathit{ts}' ))$,

\item[-] $\mathit{receive}((S,\mathit{ts}),S_1) = (S \setminus S_1,\mathit{ts})$,
\end{itemize}

\section{\Spec{}}
\label{sec:specification}

\section{Proofs of \sectionautorefname \ref{sec:proving distributed linearizability}}
\label{sec:appendix proofs of section proving distributed linearizability}

\subsection{Proof of OR-set Implementation}
\label{subsec:appendix proofs of or-set implementation}

The following lemma states a property that can be generated from $P(\mathit{config},h,\mathit{lin},\mathit{map})$ for or-set.

\begin{lemma}
\label{lemma:a property that can be obtained from P for or-set}
If $P(\mathit{config},h,\mathit{lin},\mathit{map})$ holds for or-set, then each $\mathit{add}$ operation generate a new unique time-stamp. Moreover, for each replica $r'$,

\begin{itemize}
    \setlength{\itemsep}{0.5pt}
    \item[-] $R(r').S = \{ (b,\mathit{ts}') \vert b \in D, \exists o' = (\mathit{add}(b),\_,\_,\mathit{ts}'), o' \in \mathit{vd}(h,\mathit{del},r'), \forall o'' = (\mathit{rem}(b),\_,\_,\_), o'' \in \mathit{vd}(h,\mathit{config},r') \Rightarrow (o',o'') \notin h.\mathit{vis} \}$.

    \item[-] $R(r').\mathit{maxTS} = (0,r')$ if $\mathit{vd}(h,\mathit{config},r') = \emptyset$; otherwise, $R(r').\mathit{maxTS}$ is the maximal time stamp of $\mathit{add}$ operations of $\mathit{vd}(h,\mathit{config},r')$.
    \end{itemize}
\end{lemma}

\begin {proof}
By $C_4$, it is easy to see that each $\mathit{add}$ operation generate a new unique time-stamp by induction. The property of $R(r')$ can be also easily proved by induction, since the visibility relation is transitive. $\qed$
\end {proof}

The following lemma states that our $P(\mathit{config},h,\mathit{lin},\mathit{map})$ is an invariant of or-set.

\begin{lemma}
\label{lemma:P is an invariant of or-set}
$P(\mathit{config},h,\mathit{lin},\mathit{map})$ is an invariant of or-set.
\end{lemma}

\begin {proof}

Let us prove that $P$ is a simulation relation. It is obvious that $P(\mathit{config}_0,\epsilon,\emptyset,\emptyset)$ holds.

Assume $P((R,T,\mathit{MsgHB},\mathit{MsgDel}),h,\mathit{lin},\mathit{map})$ holds. Here we do not give the detailed value of $\mathit{MsgHB}'$ and $\mathit{MsgDel}'$, since it can be obtained from the definition of $\llbracket \mathit{obj} \rrbracket_{\mathit{op}}$.

\begin{itemize}
\setlength{\itemsep}{0.5pt}
\item[-] If $(R,T,\mathit{MsgHB},\mathit{MsgDel}) {\xrightarrow{\mathit{do}(\mathit{add},a,r,\mathit{mid})}} (R',T',\mathit{MsgHB}',\mathit{MsgDel}')$: Then,

    \begin{itemize}
    \setlength{\itemsep}{0.5pt}
    \item[-] $R' = R[ r: (R(r).S \cup \{ (a,\mathit{ts}) \},\mathit{ts}) ]$ and $T' = T \cup \{ (\mathit{mid},(a,\mathit{ts}),r) \}$. Here $\mathit{ts} = ( \mathit{max} \{ c \vert (\_,(c,\_)) \in R(r).S \} +1,r)$.

    \item[-] Let $h' = h \otimes i$, where $i$ is the identifier of the newly-generated $\mathit{add}$ operation.

    \item[-] Let $\mathit{lin}' = \mathit{lin} \cdot (\mathit{add}(a),i,\mathit{vd}(h,\mathit{config},r))$.

    \item[-] Let $\mathit{map}' = \mathit{map} \cup \{ (\mathit{mid},i) \}$.
    \end{itemize}

    It is easy to see that $h'$ is still distributed linearizable and $\mathit{lin}'$ is its linearization. We need to prove that $R'(r) = \mathit{apply}(\mathit{lin}',\mathit{vd}(h',\mathit{del}',r))$ and $C_4$ still holds for message $\mathit{mid}$.

    We already know that $R(r) = \mathit{apply}(\mathit{lin},\mathit{vd}(h,\mathit{del},r))$.

    By Lemma \ref{lemma:a property that can be obtained from P for or-set}, it is not hard to see that $C_4$ still holds for message $\mathit{mid}$. From construction of $R'(r)$, Lemma \ref{lemma:a property that can be obtained from P for or-set} and $C_4$ holds for message $\mathit{mid}$, we can see that $R'(r) = \mathit{apply}(\mathit{lin}',\mathit{vd}(h',\mathit{del}',r))$.

\item[-] If $(R,T,\mathit{MsgHB},\mathit{MsgDel}) {\xrightarrow{\mathit{do}(\mathit{rem},a,r,\mathit{mid})}} (R',T',\mathit{MsgHB}',\mathit{MsgDel}')$: Then,

    \begin{itemize}
    \setlength{\itemsep}{0.5pt}
    \item[-] $R' = R[ r: (R(r).S \setminus \{ (a,\mathit{ts}) \in R(r).S \},R(r).\mathit{maxTS}) ]$ and $T' = T \cup \{ (\mathit{mid},\{ (a,\mathit{ts}) \in R(r) \},r) \}$.

    \item[-] Let $h' = h \otimes i$, where $i$ is the identifier of the newly-generated $\mathit{rem}$ operation.

    \item[-] Let $\mathit{lin}' = \mathit{lin} \cdot (\mathit{add}(a),i,\mathit{vd}(h,\mathit{config},r))$.

    \item[-] Let $\mathit{map}' = \mathit{map} \cup \{ (\mathit{mid},i) \}$.
    \end{itemize}

    It is easy to see that $h'$ is still distributed linearizable and $\mathit{lin}'$ is its linearization. We need to prove that $R'(r) = \mathit{apply}(\mathit{lin}',\mathit{vd}(h',\mathit{del}',r))$ and $C_4$ still holds for message $\mathit{mid}$.

    By Lemma \ref{lemma:a property that can be obtained from P for or-set}, it is not hard to see that $C_4$ still holds for message $\mathit{mid}$. From construction of $R'(r)$, Lemma \ref{lemma:a property that can be obtained from P for or-set} and $C_4$ holds for message $\mathit{mid}$, we can see that $R'(r) = \mathit{apply}(\mathit{lin}',\mathit{vd}(h',\mathit{del}',r))$.

\item[-] If $(R,T,\mathit{MsgHB},\mathit{MsgDel}) {\xrightarrow{\mathit{do}(\mathit{read},S_1,r)}} (R',T',\mathit{MsgHB}',\mathit{MsgDel}')$: Then,

    \begin{itemize}
    \setlength{\itemsep}{0.5pt}
    \item[-] $R' = R$ and $T' = T$.

    \item[-] Let $h' = h \otimes i$, where $i$ is the identifier of the newly-generated $\mathit{read}$ operation.

    \item[-] Let $\mathit{lin}' = \mathit{lin} \cdot (\mathit{read} \Rightarrow S_1,i,\mathit{vd}(h,\mathit{config},r))$.

    \item[-] Let $\mathit{map}'$.
    \end{itemize}

    We need to prove that $h'$ is distributed linearizable and $\mathit{lin}'$ is a linearization. Assume that in $\mathit{OR}$-$\mathit{set}_s$, $\mathit{state}_0 {\xrightarrow{\mathit{lin}}} \mathit{state}$ and $\mathit{state} {\xrightarrow{ (\mathit{read} \Rightarrow S_2, i, \mathit{vd}(h,\mathit{config},r) ) }} \mathit{state}$. Then by definition of $\mathit{OR}$-$\mathit{set}_s$, we can see that, $a \in S_2$, if there exists $(\mathit{add}(a),j,\_) \in \mathit{lin}'$, and for each $(\mathit{rem}(a),\_,S_2) \in \mathit{lin}'$, we have $j \notin S_2$. Lemma \ref{lemma:a property that can be obtained from P for or-set}, we can see that $S_1 = S_2$, and since $h$ is distributed linearizable and $\mathit{lin}$ is a linearization of $h$, we can see $h'$ is distributed linearizable and $\mathit{lin}'$ is a linearization.

\item[-] If $(R,T,\mathit{MsgHB},\mathit{MsgDel}) {\xrightarrow{\mathit{receive}(\mathit{mid},r)}} (R',T',\mathit{MsgHB}',\mathit{MsgDel}')$, where $(\mathit{mid},(a,\mathit{ts}),r') \in T$: Then,

    \begin{itemize}
    \setlength{\itemsep}{0.5pt}
    \item[-] $R' = R[ r: (R(r).S \cup \{ (a,\mathit{ts}) \},\mathit{max} \{ R(r).\mathit{maxTS},\mathit{ts} \} ) ]$ and $T' = T$.

    \item[-] Let $h' = h$.

    \item[-] Let $\mathit{lin}' = \mathit{lin}$.

    \item[-] Let $\mathit{map}' = \mathit{map}$.
    \end{itemize}

    We need to prove that $R'(r) = \mathit{apply}(\mathit{lin}',\mathit{vd}(h',\mathit{del}',r))$.

    We already know that $R(r) = \mathit{apply}(\mathit{lin},\mathit{vd}(h,\mathit{del},r))$. Since $R'(r)$ is obtained from $R(r)$ by applying message $\mathit{mid}$, and $\mathit{apply}(\mathit{lin}',\mathit{vd}(h',\mathit{del}',r))$ is obtained from $\mathit{apply}(\mathit{lin},\mathit{vd}(h,\mathit{del},r))$ by applying message $\mathit{mid}$. Therefore, $R'(r) = \mathit{apply}(\mathit{lin}',\mathit{vd}(h',\mathit{del}',r))$.

\item[-] If $(R,T,\mathit{MsgHB},\mathit{MsgDel}) {\xrightarrow{\mathit{receive}(\mathit{mid},r)}} (R',T',\mathit{MsgHB}',\mathit{MsgDel}')$, where $(\mathit{mid},S_1,r') \in T$: Then,

    \begin{itemize}
    \setlength{\itemsep}{0.5pt}
    \item[-] $R' = R[ r: (R(r).S \setminus S_1, R(r).\mathit{maxTS}) ]$ and $T' = T$.

    \item[-] Let $h' = h$.

    \item[-] Let $\mathit{lin}' = \mathit{lin}$.

    \item[-] Let $\mathit{map}' = \mathit{map}$.
    \end{itemize}

    We need to prove that $R'(r) = \mathit{apply}(\mathit{lin}',\mathit{vd}(h',\mathit{del}',r))$.

    We already know that $R(r) = \mathit{apply}(\mathit{lin},\mathit{vd}(h,\mathit{del},r))$. Since $R'(r)$ is obtained from $R(r)$ by applying message $\mathit{mid}$, and $\mathit{apply}(\mathit{lin}',\mathit{vd}(h',\mathit{del}',r))$ is obtained from $\mathit{apply}(\mathit{lin},\mathit{vd}(h,\mathit{del},r))$ by applying message $\mathit{mid}$. Therefore, $R'(r) = \mathit{apply}(\mathit{lin}',\mathit{vd}(h',\mathit{del}',r))$.
\end{itemize}

This completes the proof of this lemma. $\qed$
\end {proof}

\subsection{Proof of RGA}
\label{subsec:appendix proofs of rga}

The following lemma states a property that can be generated from $P(\mathit{config},h,\mathit{lin},\mathit{map})$ for RGA.

\begin{lemma}
\label{lemma:a property that can be obtained from P for rga}
If $P(\mathit{config},h,\mathit{lin},\mathit{map})$ holds, then each $\mathit{add}$ operation generate a new unique time-stamp. Moreover, for each replica $r'$,

\begin{itemize}
    \setlength{\itemsep}{0.5pt}
    \item[-] $R(r').N = \{ (a,\mathit{ts}_a,\mathit{ts}_b) \vert \exists o' = (\mathit{add}(\_,\_),i,\_,\_), \mathit{map}(i) = (a,\mathit{ts}_a,\mathit{ts}_b), o' \in \mathit{vd}(h,\mathit{config},r') \}$.

    \item[-] $R(r').\mathit{Tomb} = \{ a \vert \exists o = (\mathit{rem}(a),i,\_,\_), \mathit{map}(i) \in \mathit{vd}(h,\mathit{config},r') \}$.
    \end{itemize}
\end{lemma}

\begin {proof}
By $C_4$, it is easy to see that each $\mathit{add}$ operation generate a new unique time-stamp by induction. The property of $R(r')$ can be also easily proved by induction, since the visibility relation is transitive. $\qed$
\end {proof}

The following lemma states that our $P(\mathit{config},h,\mathit{lin},\mathit{map})$ is an invariant of rga.

\begin{lemma}
\label{lemma:P is an invariant of rga}
$P(\mathit{config},h,\mathit{lin},\mathit{map})$ is an invariant of rga.
\end{lemma}

\begin {proof}

Let us prove that $P$ is a simulation relation. It is obvious that $P(\mathit{config}_0,\epsilon,\emptyset,\emptyset)$ holds.

Assume $P((R,T,\mathit{MsgHB},\mathit{MsgDel}),h,\mathit{lin},\mathit{map})$ holds. Here we do not give the detailed value of $\mathit{MsgHB}'$ and $\mathit{MsgDel}'$, since it can be obtained from the definition of $\llbracket \mathit{obj} \rrbracket_{\mathit{op}}$.

\begin{itemize}
\setlength{\itemsep}{0.5pt}
\item[-] If $(R,T,\mathit{MsgHB},\mathit{MsgDel}) {\xrightarrow{\mathit{do}(\mathit{add},a,b,r,\mathit{mid})}} (R',T',\mathit{MsgHB}',\mathit{MsgDel}')$: Then,

    \begin{itemize}
    \setlength{\itemsep}{0.5pt}
    \item[-] $R' = R[ r: (R(r).N \cup \{ (a,\mathit{ts}_a,\mathit{ts}_b) \}, R(r).\mathit{Tomb}) ]$ and $T' = T \cup \{ (\mathit{mid},(a,\mathit{ts}_a,\mathit{ts}_b),r) \}$. Here $\mathit{ts}_a = ( \mathit{max} \{ c \vert (\_,(c,\_),\_) \in R(r).N \vee (\_,\_,(c,\_)) \in R(r).N \} +1,r)$, and $\mathit{ts}_b$ is the time-stamp of $b$ in $R(r).N$.

    \item[-] Let $h' = h \otimes i$, where $i$ is the identifier of the newly-generated $\mathit{add}$ action.

    \item[-] $\mathit{lin}'$ is obtained from $\mathit{lin}$ by inserting $(\mathit{add}(a,b),i,\mathit{vd}(h,\mathit{config},r))$ after the last operation with time-stamp less or equal than $\mathit{ts}_a$.

    \item[-] Let $\mathit{map}' = \mathit{map} \cup \{ (\mathit{mid},i) \}$.
    \end{itemize}

    It is easy to see that $h'$ is still distributed linearizable and $\mathit{lin}'$ is its linearization. We need to prove that $R'(r) = \mathit{apply}(\mathit{lin}',\mathit{vd}(h',\mathit{del}',r))$ and $C_4$ still holds for message $\mathit{mid}$.

    We already know that $R(r) = \mathit{apply}(\mathit{lin},\mathit{vd}(h,\mathit{del},r))$.

    By Lemma \ref{lemma:a property that can be obtained from P for rga}, it is not hard to see that $C_4$ still holds for message $\mathit{mid}$. From the fact that $\mathit{ts}_a$ is unique, the fact that there is no $\mathit{rem}(a)$ in $h$, the construction of $R'(r)$, Lemma \ref{lemma:a property that can be obtained from P for rga} and $C_4$ holds for message $\mathit{mid}$, we can see that $R'(r) = \mathit{apply}(\mathit{lin}',\mathit{vd}(h',\mathit{del}',r))$.

\item[-] If $(R,T,\mathit{MsgHB},\mathit{MsgDel}) {\xrightarrow{\mathit{do}(\mathit{rem},a,r,\mathit{mid})}} (R',T',\mathit{MsgHB}',\mathit{MsgDel}')$: Then,

    \begin{itemize}
    \setlength{\itemsep}{0.5pt}
    \item[-] $R' = R[ r: (R(r).N,R(r).\mathit{Tomb} \cup \{ a \} ) ]$ and $T' = T \cup \{ (\mathit{mid},\{ a \},r) \}$.

    \item[-] Let $h' = h \otimes i$, where $i$ is the identifier of the newly-generated $\mathit{rem}$ operation.

    \item[-] $\mathit{lin}'$ is obtained from $\mathit{lin}$ by inserting $(\mathit{rem}(a),i,\mathit{vd}(h,\mathit{config},r))$ after the last operation with time-stamp less or equal than the time-stamp of operation $i$.

    \item[-] Let $\mathit{map}' = \mathit{map} \cup \{ (\mathit{mid},i) \}$.
    \end{itemize}

    By Lemma \ref{lemma:a property that can be obtained from P for rga}, it is easy to see that $\mathit{lin} \uparrow_{\mathit{vd}(h,\mathit{config},r)}$ contains a $\mathit{add}(a,\_)$ operation $o$ and $(o,i) \in h'.\mathit{vis}$. By Lemma \ref{lemma:a property that can be obtained from P for rga}, it is easy to see that $\mathit{lin} \uparrow_{\mathit{vd}(h,\mathit{config},r)}$ does not contain $\mathit{rem}(a)$. Since $i$ does not visible to any operation in $\mathit{vd}(h,\mathit{config},r)$, we can see that $h'$ is still distributed linearizable and $\mathit{lin}'$ is its linearization.

    We need to prove that $R'(r) = \mathit{apply}(\mathit{lin}',\mathit{vd}(h',\mathit{del}',r))$ and $C_4$ still holds for message $\mathit{mid}$.

    It is obvious that $C_4$ holds for message $\mathit{mid}$. By Lemma \ref{lemma:a property that can be obtained from P for rga}, the construction of $R'(r)$, and $C_4$ holds for message $\mathit{mid}$, we can see that $R'(r) = \mathit{apply}(\mathit{lin}',\mathit{vd}(h',\mathit{del}',r))$.

\item[-] If $(R,T,\mathit{MsgHB},\mathit{MsgDel}) {\xrightarrow{\mathit{do}(\mathit{read},l,r)}} (R',T',\mathit{MsgHB}',\mathit{MsgDel}')$: Then,

    \begin{itemize}
    \setlength{\itemsep}{0.5pt}
    \item[-] $R' = R$ and $T' = T$.

    \item[-] Let $h' = h \otimes i$, where $i$ is the identifier of the newly-generated $\mathit{read}$ operation.

    \item[-] $\mathit{lin}'$ is obtained from $\mathit{lin}$ by inserting $(\mathit{rem}(a),i,\mathit{vd}(h,\mathit{config},r))$ after the last operation with time-stamp less or equal than the time-stamp of operation $i$.

    \item[-] Let $\mathit{map}' = \mathit{map}$.
    \end{itemize}

    We need to prove that $h'$ is distributed linearizable and $\mathit{lin}'$ is a linearization. Assume that in $\mathit{list}_s^{\mathit{af}}$, $\mathit{state}_0 {\xrightarrow{\mathit{lin}}} \mathit{state}$ and $\mathit{state} {\xrightarrow{ (\mathit{read} \Rightarrow l_1, i, \mathit{vd}(h,\mathit{config},r) ) }} \mathit{state}$.

    By Lemma \ref{lemma:a property that can be obtained from P for rga} and RGA implementation, we can see that $l$ and $l_1$ has the same items.

    Given items $a,b$, assume that $a$ is before $b$ in $l$, then, there are two possibilities,

    \begin{itemize}
    \setlength{\itemsep}{0.5pt}
    \item[-] $a$ is a ancestor of $b$ in $R(r).N$,

    \item[-] there exists items $c_1,c_2,c_3$, such that in $R(r).N$, $c_2$ and $c_3$ are sons of $c_1$, $c_2$ is a ancestor of $a$, $c_3$ is a ancestor of $b$, and the time-stamp of $c_2$ is larger than that of $c_3$.
    \end{itemize}

    If the first possibility holds, then there exists items $d_1,\ldots,d_k$, such that in $R(r).N$, $b$ is a son of $d_1$, $d_1$ is a son of $d_2$, $\ldots$, and $d_k$ is a son of $a$. It is easy to see that $(\mathit{add}(a,\_),\mathit{add}(d_k,a)),(\mathit{add}(d_k,a),\mathit{add}(d_{\mathit{k-1}},d_k)), \ldots, (\mathit{add}(d_1,d_2),\mathit{add}(b,d_1)) \in h.\mathit{vis}$. Since $\mathit{lin}$ is consistent with visibility relation, we know that in $\mathit{lin}$, $\mathit{add}(a,\_)$ is before $\mathit{add}(d_k,a)$, $\mathit{add}(d_k,a)$ is before $\mathit{add}(d_{\mathit{k-1}},d_k)$, $\ldots$, and $\mathit{add}(d_1,d_2)$ is before $\mathit{add}(b,d_1)$. According to $\mathit{list}_s^{\mathit{af}}$, it is easy to see that in $a$ is before $b$ in $l_1$.

    If the second possibility holds, then it is easy to see that $(\mathit{add}(c_2,c_1),\mathit{add}(a,c_2)),$ $(\mathit{add}(c_1,\_),\mathit{add}(c_2,c_1)), (\mathit{add}(c_3,c_1),\mathit{add}(b,c_3)),(\mathit{add}(c_1,\_),\mathit{add}(c_3,c_1)), \in h.\mathit{vis}$. Since $\mathit{lin}$ is consistent with visibility relation and time-stamp, we know that in $\mathit{lin}$, $\mathit{add}(c_3,c_1)$ is before $\mathit{add}(c_2,c_1)$, $\mathit{add}(c_2,c_1)$ is before $\mathit{add}(a,c_2)$, and $\mathit{add}(c_3,c_1)$ is before $\mathit{add}(b,c_3)$. According to $\mathit{list}_s^{\mathit{af}}$, it is easy to see that in $a$ is before $b$ in $l_1$.

    Therefore, $h'$ is distributed linearizable and $\mathit{lin}'$ is a linearization.

\item[-] If $(R,T,\mathit{MsgHB},\mathit{MsgDel}) {\xrightarrow{\mathit{receive}(\mathit{mid},r)}} (R',T',\mathit{MsgHB}',\mathit{MsgDel}')$, where $(\mathit{mid},(a,\mathit{ts}_a,\mathit{ts}_b),r') \in T$: Then,

    \begin{itemize}
    \setlength{\itemsep}{0.5pt}
    \item[-] $R' = R[ r: ( R(r).N \cup \{ (a,\mathit{ts}_a,\mathit{ts}_b) \}, R(r).\mathit{Tomb} ) ]$ and $T' = T$.

    \item[-] Let $h' = h$.

    \item[-] Let $\mathit{lin}' = \mathit{lin}$.

    \item[-] Let $\mathit{map}' = \mathit{map}$.
    \end{itemize}

    We need to prove that $R'(r) = \mathit{apply}(\mathit{lin}',\mathit{vd}(h',\mathit{del}',r))$.

    We already know that $R(r) = \mathit{apply}(\mathit{lin},\mathit{vd}(h,\mathit{del},r))$.

    We can see that $R'(r)$ is obtained from $R(r)$ by applying message $\mathit{mid}$, and $\mathit{apply}(\mathit{lin}',\mathit{vd}(h',\mathit{del}',r))$ is obtained from $\mathit{apply}(\mathit{lin},\mathit{vd}(h,\mathit{del},r))$ by additionally applying messages $\mathit{mid}$, but possibly in the middle of $\mathit{lin}'$. It is easy to see that $\mathit{map}(\mathit{mid})$ is a $\mathit{add}(a,\_)$ operation. By Lemma \ref{lemma:a property that can be obtained from P for rga}, we can see that there is no $\mathit{add}(a,\_)$ nor $\mathit{rem}(a)$ in $\mathit{vd}(h,\mathit{del},r)$. Thus, for each $\mathit{lin}''$ generated from $\mathit{lin}'$ by postponing message $\mathit{mid}$ to a later position, we can see that $\mathit{apply}(\mathit{lin}'',\mathit{vd}(h',\mathit{del}',r)) = \mathit{apply}(\mathit{lin}',\mathit{vd}(h',\mathit{del}',r))$.

    Therefore, $R'(r) = \mathit{apply}(\mathit{lin}',\mathit{vd}(h',\mathit{del}',r))$.

\item[-] If $(R,T,\mathit{MsgHB},\mathit{MsgDel}) {\xrightarrow{\mathit{receive}(\mathit{mid},r)}} (R',T',\mathit{MsgHB}',\mathit{MsgDel}')$, where $(\mathit{mid},a,r') \in T$: Then,

    \begin{itemize}
    \setlength{\itemsep}{0.5pt}
    \item[-] $R' = R[ r: (R(r).N,R(r).\mathit{Tomb} \cup \{ a \}) ]$ and $T' = T$.

    \item[-] Let $h' = h$.

    \item[-] Let $\mathit{lin}' = \mathit{lin}$.

    \item[-] Let $\mathit{map}' = \mathit{map}$.
    \end{itemize}

    We need to prove that $R'(r) = \mathit{apply}(\mathit{lin}',\mathit{vd}(h',\mathit{del}',r))$.

    We already know that $R(r) = \mathit{apply}(\mathit{lin},\mathit{vd}(h,\mathit{del},r))$.

    We can see that $R'(r)$ is obtained from $R(r)$ by applying message $\mathit{mid}$, and $\mathit{apply}(\mathit{lin}',\mathit{vd}(h',\mathit{del}',r))$ is obtained from $\mathit{apply}(\mathit{lin},\mathit{vd}(h,\mathit{del},r))$ by additionally applying messages $\mathit{mid}$, but possibly in the middle of $\mathit{lin}'$.

    It is easy to see that, for each $\mathit{lin}''$ generated from $\mathit{lin}'$ by postponing message $\mathit{mid}$ to a later position, we have $\mathit{apply}(\mathit{lin}'',\mathit{vd}(h',\mathit{del}',r)) = \mathit{apply}(\mathit{lin}',\mathit{vd}(h',\mathit{del}',r))$.

    Therefore, $R'(r) = \mathit{apply}(\mathit{lin}',\mathit{vd}(h',\mathit{del}',r))$.
\end{itemize}

This completes the proof of this lemma. $\qed$
\end {proof}

\section{Proofs of \sectionautorefname \ref{sec:compositionality of distributed linearizability}}
\label{sec:appendix proofs of section compositionality of distributed linearizability}

\subsection{Proofs of Lemma \ref{lemma:several t0-specifications}}
\label{subsec:appendix proofs of Lemma several t0-specifications}

A specification $\mathit{spec}$ is called t0-specification, if given a history $h$ that is distributed linearizable w.r.t $\mathit{spec}$, then any sequence that is consistent with visibility relation is a linearization of $h$.

Given two sequences $l_1,l_2$, let $\mathit{diff}(l_1,l_2) = \{ (o_1,o_2) \vert$ the order of $o_1$ and $o_2$ in $l_1$ is different from that of $l_2 \}$. Given a sequence $l$ and two elements $o_1$ an $o_2$ of $l$, let $\mathit{swap}(l,o_1,o_2)$ be a sequence obtained from $l$ by swapping $o_1$ and $o_2$.

The following lemma states that $\mathit{OR}$-$\mathit{set}_s$ is a t0-specification.

\begin{lemma}
\label{lemma:or-set is a t0-specification}
$\mathit{OR}$-$\mathit{set}_s$ is a t0-specification.
\end{lemma}

\begin {proof}
Given a distributed linearizable history $h$ and assume that $\mathit{lin}$ is a linearization. It is obvious that $\mathit{lin}$ is consistent with visibility relation. We need to prove that, each such sequence $\mathit{lin}'$ described below is also a linearization of $h$

\begin{itemize}
\setlength{\itemsep}{0.5pt}
\item[-] $\mathit{lin}'$ contains the same set of elements as that of $\mathit{lin}$.

\item[-] $\mathit{lin}'$ is consistent with visibility relation.
\end{itemize}

We prove this by showing that each such $\mathit{lin}'$ can be obtained from $\mathit{lin}$ by several times of swapping a pair of adjacent elements. Our proof requires the following two properties:

\begin{itemize}
\setlength{\itemsep}{0.5pt}
\item[-] The first property is: Given a linarization $\mathit{lin}$ and a sequence $\mathit{lin}'$ consistent with visibility relation of $h$, if $\mathit{diff}(\mathit{lin},\mathit{lin}') \neq \emptyset$, there exists $(o_1,o_2) \in \mathit{diff}(\mathit{lin},\mathit{lin}')$, such that $o_1$ and $o_2$ are concurrent, and $o_1$ and $o_2$ are adjacent in $\mathit{lin}$.

    We prove this by contradiction. Assume $\mathit{diff}(\mathit{lin},\mathit{lin}') \neq \emptyset$, and for each $(o_1,o_2) \in \mathit{diff}(\mathit{lin},\mathit{lin}')$, we have that either $o_1$ and $o_2$ are not concurrent, or $o_1$ and $o_2$ are not adjacent in $\mathit{lin}$.

    Since $\mathit{diff}(\mathit{lin},\mathit{lin}') \neq \emptyset$, let $(o,o')$ be a element of $\mathit{diff}(\mathit{lin},\mathit{lin}')$, and the distance of $o_1$ and $o_2$ is minimal in $\{$ the distance between $o_1$ and $o_2 \vert (o_1,o_2) \in \mathit{diff}(\mathit{lin},\mathit{lin}') \}$. Let us prove that $o$ and $o'$ are adjacent by contradiction: If there exists $o''$ between $o$ and $o'$. Assume that in $\mathit{lin}$, $o$ is before $o''$, and $o''$ is before $o'$. By assumption, the order between $o$ and $o''$, and between $o''$ and $o'$ is the same in $\mathit{lin}$ and in $\mathit{lin}'$. This implies that $o$ is still before $o'$ in $\mathit{lin}'$, which contradicts the fact that $(o,o') \in \mathit{diff}(\mathit{lin},\mathit{lin}')$.

    Since $o$ and $o'$ are adjacent and $(o,o') \in \mathit{diff}(\mathit{lin},\mathit{lin}')$, by assumption we know that $o$ and $o'$ are not concurrent. Or we can say, $(o,o') \in \mathit{vis} \vee \mathit{o',o} \in \mathit{vis}$. This contradicts that both $\mathit{lin}$ and $\mathit{lin}'$ are consistent with visibility relation. This completes the proof of the first step.

\item[-] The second property is: Given a linearization $\mathit{lin}$ and $o_1,o_2 \in \mathit{lin}$, such that $o_1$ and $o_2$ are concurrent and adjacent in $\mathit{lin}$, then, $l = \mathit{swap}(\mathit{lin},o_1,o_2)$ is also a linearization.

    Let $o_1 = (\ell_1,\mathit{id}_1,S_1)$ and $o_2 = (\ell_2,\mathit{id}_2,S_2)$. Since $o_1$ and $o_2$ are concurrent, we know that $\mathit{id}_1 \notin S_2 \wedge \mathit{id}_2 \notin S_1$. Assume $\mathit{lin} = l_1 \cdot o_1 \cdot o_2 \cdot l_2$. Assume in the abstract state of $\mathit{OR}$-$\mathit{set}_s$, we have $\sigma_0 {\xrightarrow{l_1}} \sigma_1 {\xrightarrow{o_1}} \sigma_2 {\xrightarrow{o_2}} \sigma_3 {\xrightarrow{l_2}} \sigma_4$, where $\sigma_0$ is the initial state of $\mathit{OR}$-$\mathit{set}_s$. Then, we need to prove that, there exists $\sigma'_2$, such that $\sigma_1 {\xrightarrow{o_2}} \sigma'_2 {\xrightarrow{o_1}} \sigma_3$. We prove this by consider all the possible cases:

    \begin{itemize}
    \setlength{\itemsep}{0.5pt}
    \item[-] If $o_1 = (\mathit{add}(a_1),\mathit{id}_1,S_1)$ and $o_2 = (\mathit{add}(a_2),\mathit{id}_2,S_2)$: We can see that $\sigma_2$ is obtained from $\sigma_1$ by inserting $(a_1,\mathit{id}_1,\mathit{true})$, and $\sigma_3$ is obtained from $\sigma_2$ by inserting $(a_2,\mathit{id}_2,\mathit{true})$. Let $\sigma'_2$ be obtained from $\sigma_1$ by inserting $(a_2,\mathit{id}_2,\mathit{true})$. Then, it is easy to see that $\sigma_1 {\xrightarrow{o_2}} \sigma'_2 {\xrightarrow{o_1}} \sigma_3$.

    \item[-] If $o_1 = (\mathit{add}(a_1),\mathit{id}_1,S_1)$ and $o_2 = (\mathit{rem}(a_2),\mathit{id}_2,S_2)$: We can see that $\sigma_2$ is obtained from $\sigma_1$ by inserting $(a_1,\mathit{id}_1,\mathit{true})$, and $\sigma_3$ is obtained from $\sigma_2$ by marking $a_2$ with identifiers of $S_2$ into $\mathit{false}$. Let $\sigma'_2$ be obtained from $\sigma_1$ by marking $a_2$ with identifiers of $S_2$ into $\mathit{false}$. Since $\mathit{id_1} \notin S_2$, we can see that $\sigma_1 {\xrightarrow{o_2}} \sigma'_2 {\xrightarrow{o_1}} \sigma_3$.

    \item[-] If $o_1 = (\mathit{add}(a_1),\mathit{id}_1,S_1)$ and $o_2 = (\mathit{read}() \Rightarrow l_2,\mathit{id}_2,S_2)$: Let $\sigma'_2 = \sigma_1$. Since $\mathit{id}_1 \notin S_2$, it is easy to see that $\sigma_1 {\xrightarrow{o_2}} \sigma'_2 {\xrightarrow{o_1}} \sigma_3$.

    \item[-] If $o_1 = (\mathit{rem}(a_1),\mathit{id}_1,S_1)$ and $o_2 = (\mathit{add}(a_2),\mathit{id}_2,S_2)$: We can see that $\sigma_2$ is obtained from $\sigma_1$ by marking $a_1$ with identifiers of $S_1$ into $\mathit{false}$, and $\sigma_3$ is obtained from $\sigma_2$ by inserting $(a_2,\mathit{id}_2,\mathit{true})$. Let $\sigma'_2$ be obtained from $\sigma_1$ by inserting $(a_2,\mathit{id}_2,\mathit{true})$. Since $\mathit{id}_2 \notin S_1$, we can see that $\sigma_1 {\xrightarrow{o_2}} \sigma'_2 {\xrightarrow{o_1}} \sigma_3$.

    \item[-] If $o_1 = (\mathit{rem}(a_1),\mathit{id}_1,S_1)$ and $o_2 = (\mathit{rem}(a_2),\mathit{id}_2,S_2)$: We can see that $\sigma_2$ is obtained from $\sigma_1$ by marking $a_1$ with identifiers of $S_1$ into $\mathit{false}$, and $\sigma_3$ is obtained from $\sigma_2$ by marking $a_2$ with identifiers of $S_2$ into $\mathit{false}$. Let $\sigma'_2$ be obtained from $\sigma_1$ by marking $a_2$ with identifiers of $S_2$ into $\mathit{false}$. Then, it is easy to see that $\sigma_1 {\xrightarrow{o_2}} \sigma'_2 {\xrightarrow{o_1}} \sigma_3$.

    \item[-] If $o_1 = (\mathit{rem}(a_1),\mathit{id}_1,S_1)$ and $o_2 = (\mathit{read}() \Rightarrow l_2,\mathit{id}_2,S_2)$: Let $\sigma'_2 = \sigma_1$. Since $\mathit{id}_1 \notin S_2$, it is easy to see that $\sigma_1 {\xrightarrow{o_2}} \sigma'_2 {\xrightarrow{o_1}} \sigma_3$.

    \item[-] If $o_1 = (\mathit{read}() \Rightarrow l_1,\mathit{id}_1,S_1)$ and $o_2 = (\mathit{add}(a_1),\mathit{id}_2,S_2)$: Let $\sigma'_2$ be obtained from $\sigma_1$ by inserting $(a_1,\mathit{id}_1,\mathit{true})$. Since $\mathit{id}_2 \notin S_1$, it is easy to see that $\sigma_1 {\xrightarrow{o_2}} \sigma'_2 {\xrightarrow{o_1}} \sigma_3$.

    \item[-] If $o_1 = (\mathit{read}() \Rightarrow l_1,\mathit{id}_1,S_1)$ and $o_2 = (\mathit{rem}(a_1),\mathit{id}_2,S_2)$: Let $\sigma'_2$ be obtained from $\sigma_1$ by marking $a_2$ with identifiers of $S_2$ into $\mathit{false}$. Since $\mathit{id}_2 \notin S_1$, it is easy to see that $\sigma_1 {\xrightarrow{o_2}} \sigma'_2 {\xrightarrow{o_1}} \sigma_3$.

    \item[-] If $o_1 = (\mathit{read}() \Rightarrow l_1,\mathit{id}_1,S_1)$ and $o_2 = (\mathit{read}() \Rightarrow l_2,\mathit{id}_2,S_2)$: Let $\sigma'_2 = \sigma_1$. Then, it is easy to see that $\sigma_1 {\xrightarrow{o_2}} \sigma'_2 {\xrightarrow{o_1}} \sigma_3$.
    \end{itemize}
\end{itemize}

Based on these two steps, given a linearization $\mathit{lin}$ and a sequence $\mathit{lin}' \neq \mathit{lin}$ which is consistent with visibility relation: We have $\mathit{diff}(\mathit{lin},\mathit{lin}') \neq \emptyset$. Based on the first above property, there exists $(o_1,o_2) \in \mathit{diff}(\mathit{lin},\mathit{lin}')$, such that $o_1$ and $o_2$ are concurrent, and $o_1$ and $o_2$ are adjacent in $\mathit{lin}$. Based on the second above property, $\mathit{lin}'' = \mathit{swap}(\mathit{lin},o_1,o_2)$ is also a linearization. Moreover, it is easy to see that $\mathit{diff}(\mathit{lin},\mathit{lin}') > \mathit{diff}(\mathit{lin}'',\mathit{lin}')$. Therefore, by several times of above process, we finally obtain $\mathit{lin}'$ from $\mathit{lin}$ by swapping pairs of operations, and prove that $\mathit{lin}'$ is also a linearization. This completes the proof of this lemma. $\qed$
\end {proof}

The following lemma states that $\mathit{set}_s$ is a t0-specification.

\begin{lemma}
\label{lemma:set is a t0-specification}
$\mathit{set}_s$ is a t0-specification.
\end{lemma}

\begin {proof}

We prove this lemma similarly as that of Lemma \ref{lemma:or-set is a t0-specification}. We need to prove that, given a linearization $\mathit{lin}$ and $o_1,o_2 \in \mathit{lin}$, such that $o_1$ and $o_2$ are concurrent and adjacent in $\mathit{lin}$, then, $l = \mathit{swap}(\mathit{lin},o_1,o_2)$ is also a linearization.

Let $o_1 = (\ell_1,\mathit{id}_1,S_1)$ and $o_2 = (\ell_2,\mathit{id}_2,S_2)$. Since $o_1$ and $o_2$ are concurrent, we know that $\mathit{id}_1 \notin S_2 \wedge \mathit{id}_2 \notin S_1$. Assume $\mathit{lin} = l_1 \cdot o_1 \cdot o_2 \cdot l_2$. Assume in the abstract state of $\mathit{set}_s$, we have $\sigma_0 {\xrightarrow{l_1}} \sigma_1 {\xrightarrow{o_1}} \sigma_2 {\xrightarrow{o_2}} \sigma_3 {\xrightarrow{l_2}} \sigma_4$, where $\sigma_0$ is the initial state of $\mathit{set}_s$. Then, we need to prove that, there exists $\sigma'_2$, such that $\sigma_1 {\xrightarrow{o_2}} \sigma'_2 {\xrightarrow{o_1}} \sigma_3$. We prove this by consider all the possible cases:

\begin{itemize}
\setlength{\itemsep}{0.5pt}
\item[-] If $o_1 = (\mathit{add}(a_1),\mathit{id}_1,S_1)$ and $o_2 = (\mathit{add}(a_2),\mathit{id}_2,S_2)$: We can see that, if $(a_1,\_) \in \sigma_1$, then $\sigma_2 = \sigma_1$; else, $\sigma_2$ is obtained from $\sigma_1$ by inserting $(a_1,\mathit{true})$. We can also see that, if $(a_2,\_) \in \sigma_2$, then $\sigma_3 = \sigma_2$; else, $\sigma_3$ is obtained from $\sigma_2$ by inserting $(a_2,\mathit{true})$. Let $\sigma'_2$ be: if $(a_2,\_) \in \sigma_1$, then $\sigma'_2 = \sigma_1$; else, $\sigma'_2$ is obtained from $\sigma_1$ by inserting $(a_2,\mathit{true})$. Then, it is easy to see that $\sigma_1 {\xrightarrow{o_2}} \sigma'_2 {\xrightarrow{o_1}} \sigma_3$.

\item[-] If $o_1 = (\mathit{add}(a_1),\mathit{id}_1,S_1)$ and $o_2 = (\mathit{rem}(a_2),\mathit{id}_2,S_2)$: Let $\sigma'_2$ be: if $(a_2,\mathit{false}) \in \sigma_1$, then $\sigma'_2 = \sigma_1$; else, $\sigma'_2$ is obtained from $\sigma_1$ by marking $a_2$ into $\mathit{false}$. Since $\mathit{vis}^{-1}(o_2) \cdot o_2 \in \mathit{set}_s$, we know that $(a_2,\_) \in \sigma_1$. Then, it is easy to see that $\sigma_1 {\xrightarrow{o_2}} \sigma'_2 {\xrightarrow{o_1}} \sigma_3$.

\item[-] If $o_1 = (\mathit{add}(a_1),\mathit{id}_1,S_1)$ and $o_2 = (\mathit{read}() \Rightarrow l_2,\mathit{id}_2,S_2)$: Let $\sigma'_2 = \sigma_1$. Since $\mathit{id}_1 \notin S_2$, it is easy to see that $\sigma_1 {\xrightarrow{o_2}} \sigma'_2 {\xrightarrow{o_1}} \sigma_3$.

\item[-] If $o_1 = (\mathit{rem}(a_1),\mathit{id}_1,S_1)$ and $o_2 = (\mathit{add}(a_2),\mathit{id}_2,S_2)$: Let $\sigma'_2$ be: if $(a_2,\_) \in \sigma_1$, then $\sigma'_2 = \sigma_1$; else, $\sigma'_2$ is obtained from $\sigma_1$ by inserting $(a_2,\mathit{true})$. Since $\mathit{vis}^{-1}(o_1) \cdot o_1 \in \mathit{set}_s$, we know that $(a_1,\_) \in \sigma_1$. Then, it is easy to see that $\sigma_1 {\xrightarrow{o_2}} \sigma'_2 {\xrightarrow{o_1}} \sigma_3$.

\item[-] If $o_1 = (\mathit{rem}(a_1),\mathit{id}_1,S_1)$ and $o_2 = (\mathit{rem}(a_2),\mathit{id}_2,S_2)$: Let $\sigma'_2$ be: if $(a_2,\mathit{false}) \in \sigma_1$, then $\sigma'_2 = \sigma_1$; else, $\sigma'_2$ is obtained from $\sigma_1$ by marking $a_2$ into $\mathit{false}$. Then, it is easy to see that $\sigma_1 {\xrightarrow{o_2}} \sigma'_2 {\xrightarrow{o_1}} \sigma_3$.

\item[-] If $o_1 = (\mathit{rem}(a_1),\mathit{id}_1,S_1)$ and $o_2 = (\mathit{read}() \Rightarrow l_2,\mathit{id}_2,S_2)$: Let $\sigma'_2 = \sigma_1$. Since $\mathit{id}_1 \notin S_2$, it is easy to see that $\sigma_1 {\xrightarrow{o_2}} \sigma'_2 {\xrightarrow{o_1}} \sigma_3$.

\item[-] If $o_1 = (\mathit{read}() \Rightarrow l_1,\mathit{id}_1,S_1)$ and $o_2 = (\mathit{add}(a_1),\mathit{id}_2,S_2)$: Let $\sigma'_2$ be: if $(a_2,\_) \in \sigma_1$, then $\sigma'_2 = \sigma_1$; else, $\sigma'_2$ is obtained from $\sigma_1$ by inserting $(a_2,\mathit{true})$. Since $\mathit{id}_2 \notin S_1$, it is easy to see that $\sigma_1 {\xrightarrow{o_2}} \sigma'_2 {\xrightarrow{o_1}} \sigma_3$.

\item[-] If $o_1 = (\mathit{read}() \Rightarrow l_1,\mathit{id}_1,S_1)$ and $o_2 = (\mathit{rem}(a_1),\mathit{id}_2,S_2)$: Let $\sigma'_2$ be: if $(a_2,\mathit{false}) \in \sigma_1$, then $\sigma'_2 = \sigma_1$; else, $\sigma'_2$ is obtained from $\sigma_1$ by marking $a_2$ into $\mathit{false}$. Since $\mathit{id}_2 \notin S_1$, it is easy to see that $\sigma_1 {\xrightarrow{o_2}} \sigma'_2 {\xrightarrow{o_1}} \sigma_3$.

\item[-] If $o_1 = (\mathit{read}() \Rightarrow l_1,\mathit{id}_1,S_1)$ and $o_2 = (\mathit{read}() \Rightarrow l_2,\mathit{id}_2,S_2)$: Let $\sigma'_2 = \sigma_1$. Then, it is easy to see that $\sigma_1 {\xrightarrow{o_2}} \sigma'_2 {\xrightarrow{o_1}} \sigma_3$.
\end{itemize}

This completes the proof of this lemma. $\qed$
\end {proof}

The following lemma states that $\mathit{counter}_s$ is a t0-specification.

\begin{lemma}
\label{lemma:counter is a t0-specification}
$\mathit{counter}_s$ is a t0-specification.
\end{lemma}

\begin {proof}

We prove this lemma similarly as that of Lemma \ref{lemma:or-set is a t0-specification}. We need to prove that, given a linearization $\mathit{lin}$ and $o_1,o_2 \in \mathit{lin}$, such that $o_1$ and $o_2$ are concurrent and adjacent in $\mathit{lin}$, then, $l = \mathit{swap}(\mathit{lin},o_1,o_2)$ is also a linearization.

Let $o_1 = (\ell_1,\mathit{id}_1,S_1)$ and $o_2 = (\ell_2,\mathit{id}_2,S_2)$. Since $o_1$ and $o_2$ are concurrent, we know that $\mathit{id}_1 \notin S_2 \wedge \mathit{id}_2 \notin S_1$. Assume $\mathit{lin} = l_1 \cdot o_1 \cdot o_2 \cdot l_2$. Assume in the abstract state of $\mathit{counter}_s$, we have $\sigma_0 {\xrightarrow{l_1}} \sigma_1 {\xrightarrow{o_1}} \sigma_2 {\xrightarrow{o_2}} \sigma_3 {\xrightarrow{l_2}} \sigma_4$, where $\sigma_0$ is the initial state of $\mathit{counter}_s$. Then, we need to prove that, there exists $\sigma'_2$, such that $\sigma_1 {\xrightarrow{o_2}} \sigma'_2 {\xrightarrow{o_1}} \sigma_3$. We prove this by consider all the possible cases:

\begin{itemize}
\setlength{\itemsep}{0.5pt}
\item[-] If $o_1 = (\mathit{inc},\mathit{id}_1,S_1)$ and $o_2 = (\mathit{inc},\mathit{id}_2,S_2)$: Assume that $\sigma_1 = k$, then $\sigma_2 = \mathit{k+1}$ and $\sigma_3 = \mathit{k+2}$. Let $\sigma'_2 = \mathit{k+1}$. Then, it is easy to see that $\sigma_1 {\xrightarrow{o_2}} \sigma'_2 {\xrightarrow{o_1}} \sigma_3$.

\item[-] If $o_1 = (\mathit{inc},\mathit{id}_1,S_1)$ and $o_2 = (\mathit{dec},\mathit{id}_2,S_2)$: Assume that $\sigma_1 = k$, and let $\sigma'_2 = \mathit{k-1}$. Then, it is easy to see that $\sigma_1 {\xrightarrow{o_2}} \sigma'_2 {\xrightarrow{o_1}} \sigma_3$.

\item[-] If $o_1 = (\mathit{inc},\mathit{id}_1,S_1)$ and $o_2 = (\mathit{read}() \Rightarrow k_2,\mathit{id}_2,S_2)$: Let $\sigma'_2 = \sigma_1$. Since $\mathit{id}_1 \notin S_2$, it is easy to see that $\sigma_1 {\xrightarrow{o_2}} \sigma'_2 {\xrightarrow{o_1}} \sigma_3$.

\item[-] If $o_1 = (\mathit{dec},\mathit{id}_1,S_1)$ and $o_2 = (\mathit{inc},\mathit{id}_2,S_2)$: Assume that $\sigma_1 = k$, and let $\sigma'_2 = \mathit{k+1}$. Then, it is easy to see that $\sigma_1 {\xrightarrow{o_2}} \sigma'_2 {\xrightarrow{o_1}} \sigma_3$.

\item[-] If $o_1 = (\mathit{dec},\mathit{id}_1,S_1)$ and $o_2 = (\mathit{dec},\mathit{id}_2,S_2)$: Assume that $\sigma_1 = k$, and let $\sigma'_2 = \mathit{k-1}$. Then, it is easy to see that $\sigma_1 {\xrightarrow{o_2}} \sigma'_2 {\xrightarrow{o_1}} \sigma_3$.

\item[-] If $o_1 = (\mathit{dec},\mathit{id}_1,S_1)$ and $o_2 = (\mathit{read}() \Rightarrow k_2,\mathit{id}_2,S_2)$: Let $\sigma'_2 = \sigma_1$. Since $\mathit{id}_1 \notin S_2$, it is easy to see that $\sigma_1 {\xrightarrow{o_2}} \sigma'_2 {\xrightarrow{o_1}} \sigma_3$.

\item[-] If $o_1 = (\mathit{read}() \Rightarrow k_1,\mathit{id}_1,S_1)$ and $o_2 = (\mathit{inc},\mathit{id}_2,S_2)$: Assume that $\sigma_1 = k$, and let $\sigma'_2 = \mathit{k+1}$. Since $\mathit{id}_2 \notin S_1$, it is easy to see that $\sigma_1 {\xrightarrow{o_2}} \sigma'_2 {\xrightarrow{o_1}} \sigma_3$.

\item[-] If $o_1 = (\mathit{read}() \Rightarrow k_1,\mathit{id}_1,S_1)$ and $o_2 = (\mathit{dec},\mathit{id}_2,S_2)$: Assume that $\sigma_1 = k$, and let $\sigma'_2 = \mathit{k-1}$. Since $\mathit{id}_2 \notin S_1$, it is easy to see that $\sigma_1 {\xrightarrow{o_2}} \sigma'_2 {\xrightarrow{o_1}} \sigma_3$.

\item[-] If $o_1 = (\mathit{read}() \Rightarrow k_1,\mathit{id}_1,S_1)$ and $o_2 = (\mathit{read}() \Rightarrow k_2,\mathit{id}_2,S_2)$: Let $\sigma'_2 = \sigma_1$. Then, it is easy to see that $\sigma_1 {\xrightarrow{o_2}} \sigma'_2 {\xrightarrow{o_1}} \sigma_3$.
\end{itemize}

This completes the proof of this lemma. $\qed$
\end {proof}

With Lemma \ref{lemma:or-set is a t0-specification}, Lemma \ref{lemma:set is a t0-specification} and Lemma \ref{lemma:counter is a t0-specification}, we can now prove Lemma \ref{lemma:several t0-specifications}.

\SeveralTZeroSpecifications*

\begin {proof}
This lemma holds obviously from Lemma \ref{lemma:or-set is a t0-specification}, Lemma \ref{lemma:set is a t0-specification} and Lemma \ref{lemma:counter is a t0-specification}. $\qed$
\end {proof}

\subsection{Proofs of Lemma \ref{lemma:several t1-specifications}}
\label{subsec:appendix proofs of Lemma several t1-specifications}

The following lemma states that $\mathit{list}_s^{\mathit{af}}$ is a t1-specification.

\begin{lemma}
\label{lemma:list-af is a t1-specification}
$\mathit{list}_s^{\mathit{af}}$ is a t1-specification.
\end{lemma}

\begin {proof}

Given a distributed linearizable history $h$ and a linearization $\mathit{lin}$ that is a strict time-stamp order candidate, we need to prove that, each strict time-stamp order candidate $\mathit{lin}'$ is a linearization.

We prove this by showing that each such $\mathit{lin}'$ can be obtained from $\mathit{lin}$ by several times of swapping a pair of adjacent elements. Our proof requires the following two properties:

\begin{itemize}
\setlength{\itemsep}{0.5pt}
\item[-] The first property is: Given a linarization $\mathit{lin}$ that is a strict time-stamp order candidate, and a strict time-stamp order candidate $\mathit{lin}'$. If $\mathit{diff}(\mathit{lin},\mathit{lin}') \neq \emptyset$, there exists $(o_1,o_2) \in \mathit{diff}(\mathit{lin},\mathit{lin}')$, such that $o_1$ and $o_2$ are concurrent, $o_1$ and $o_2$ are adjacent in $\mathit{lin}$, and the time-stamp of $o_1$ in $h$ equals that of $o_2$.

    We prove this by contradiction. Assume $\mathit{diff}(\mathit{lin},\mathit{lin}') \neq \emptyset$, and for each $(o_1,o_2) \in \mathit{diff}(\mathit{lin},\mathit{lin}')$, we have that either $o_1$ and $o_2$ are not concurrent, or $o_1$ and $o_2$ are not adjacent in $\mathit{lin}$, or the time-stamp of $o_1$ in $h$ is different from that of $o_2$.

    By the definition of strict time-stamp order candidate, it is easy to see that if $o_1$ and $o_2$ have different time-stamp, then their order is the same between $\mathit{lin}$ and $\mathit{lin}'$. Therefore, we know that the time-stamp of $o_1$ in $h$ equals that of $o_2$.

    Since $\mathit{diff}(\mathit{lin},\mathit{lin}') \neq \emptyset$, let $(o,o')$ be a element of $\mathit{diff}(\mathit{lin},\mathit{lin}')$, and the distance of $o_1$ and $o_2$ is minimal in $\{$ the distance between $o_1$ and $o_2 \vert (o_1,o_2) \in \mathit{diff}(\mathit{lin},\mathit{lin}') \}$. Let us prove that $o$ and $o'$ are adjacent by contradiction: If there exists $o''$ between $o$ and $o'$. Assume that in $\mathit{lin}$, $o$ is before $o''$, and $o''$ is before $o'$. By assumption, the order between $o$ and $o''$, and between $o''$ and $o'$ is the same in $\mathit{lin}$ and in $\mathit{lin}'$. This implies that $o$ is still before $o'$ in $\mathit{lin}'$, which contradicts the fact that $(o,o') \in \mathit{diff}(\mathit{lin},\mathit{lin}')$.

    Since $o$ and $o'$ are adjacent and $(o,o') \in \mathit{diff}(\mathit{lin},\mathit{lin}')$, by assumption we know that $o$ and $o'$ are not concurrent. Or we can say, $(o,o') \in \mathit{vis} \vee \mathit{o',o} \in \mathit{vis}$. This contradicts that both $\mathit{lin}$ and $\mathit{lin}'$ are consistent with visibility relation. This completes the proof of the first step.

\item[-] The second property is: Given a linearization $\mathit{lin}$ that is a strict time-stamp order candidate, and $o_1,o_2 \in \mathit{lin}$, such that $o_1$ and $o_2$ are concurrent and adjacent in $\mathit{lin}$, and $o_1$ and $o_2$ have the same time-stamp in $h$. Then, $l = \mathit{swap}(\mathit{lin},o_1,o_2)$ is also a linearization and is also a strict time-stamp order candidate. It is obvious that $l$ is still a strict time-stamp order candidate.

    Let $o_1 = (\ell_1,\mathit{id}_1,S_1)$ and $o_2 = (\ell_2,\mathit{id}_2,S_2)$. Since $o_1$ and $o_2$ are concurrent, we know that $\mathit{id}_1 \notin S_2 \wedge \mathit{id}_2 \notin S_1$. Assume $\mathit{lin} = l_1 \cdot o_1 \cdot o_2 \cdot l_2$. Assume in the abstract state of $\mathit{list}_s^{\mathit{af}}$, we have $\sigma_0 {\xrightarrow{l_1}} \sigma_1 {\xrightarrow{o_1}} \sigma_2 {\xrightarrow{o_2}} \sigma_3 {\xrightarrow{l_2}} \sigma_4$, where $\sigma_0$ is the initial state of $\mathit{OR}$-$\mathit{set}_s$. Then, we need to prove that, there exists $\sigma'_2$, such that $\sigma_1 {\xrightarrow{o_2}} \sigma'_2 {\xrightarrow{o_1}} \sigma_3$. We prove this by consider all the possible cases:

    \begin{itemize}
    \setlength{\itemsep}{0.5pt}
    \item[-] If $o_1 = (\mathit{add}(a_1,b_1),\mathit{id}_1,S_1)$ and $o_2 = (\_,\mathit{id}_2,S_2)$: This case is impossible. We can see that the time-stamp of $a$ is larger than operations in $S_1$, and thus, the time-stamp of $o_1$ is the time-stamp of $a$. Since $\mathit{id}_1 \notin S_2$, we know that the time-stamp of $o_2$ is different from that of $o_1$, contradicts the assumption that $o_1$ and $o_2$ have same time-stamp.

    \item[-] If $o_1 = (\_,\mathit{id}_1,S_1)$ and $o_2 = (\mathit{add}(a_2,b_2),\mathit{id}_2,S_2)$: Similarly, we can prove that this case is impossible.

    \item[-] If $o_1 = (\mathit{rem}(a_1),\mathit{id}_1,S_1)$ and $o_2 = (\mathit{rem}(a_2),\mathit{id}_2,S_2)$: Let $\sigma'_2$ be obtained from $\sigma_1$ by marking $a_2$ into $\mathit{false}$. Then, it is easy to see that $\sigma_1 {\xrightarrow{o_2}} \sigma'_2 {\xrightarrow{o_1}} \sigma_3$.

    \item[-] If $o_1 = (\mathit{rem}(a_1),\mathit{id}_1,S_1)$ and $o_2 = (\mathit{read}() \Rightarrow \mathit{list}_1,\mathit{id}_2,S_2)$: Let $\sigma'_2 = \sigma_1$. Since $\mathit{id}_1 \notin S_2$, it is easy to see that $\sigma_1 {\xrightarrow{o_2}} \sigma'_2 {\xrightarrow{o_1}} \sigma_3$.

    \item[-] If $o_1 = (\mathit{read}() \Rightarrow \mathit{list}_1,\mathit{id}_1,S_1)$ and $o_2 = (\mathit{read}() \Rightarrow \mathit{list}_2,\mathit{id}_2,S_2)$: Let $\sigma'_2 = \sigma_1$. Then, it is easy to see that $\sigma_1 {\xrightarrow{o_2}} \sigma'_2 {\xrightarrow{o_1}} \sigma_3$.
    \end{itemize}
\end{itemize}

Based on these two steps, given a linearization $\mathit{lin}$ that is a strict time-stamp order candidate, and a sequence $\mathit{lin}' \neq \mathit{lin}$ that is a strict time-stamp order candidate: We have $\mathit{diff}(\mathit{lin},\mathit{lin}') \neq \emptyset$. Based on the first above property, there exists $(o_1,o_2) \in \mathit{diff}(\mathit{lin},\mathit{lin}')$, such that $o_1$ and $o_2$ are concurrent, and $o_1$ and $o_2$ are adjacent in $\mathit{lin}$, and $o_1$ and $o_2$ have a same time-stamp. Based on the second above property, $\mathit{lin}'' = \mathit{swap}(\mathit{lin},o_1,o_2)$ is also a linearization, and is a strict time-stamp order candidate. Moreover, it is easy to see that $\mathit{diff}(\mathit{lin},\mathit{lin}') > \mathit{diff}(\mathit{lin}'',\mathit{lin}')$. Therefore, by several times of above process, we finally obtain $\mathit{lin}'$ from $\mathit{lin}$ by swapping pairs of operations, and prove that $\mathit{lin}'$ is also a linearization, and is a strict time-stamp order candidate. This completes the proof of this lemma. $\qed$
\end {proof}

The following lemma states that $\mathit{reg}_s$ is a t1-specification.

\begin{lemma}
\label{lemma:reg is a t1-specification}
$\mathit{reg}_s$ is a t1-specification.
\end{lemma}

\begin {proof}

We prove this lemma similarly as that of Lemma \ref{lemma:list-af is a t1-specification}. We need to prove that, given a linearization $\mathit{lin}$ that is a strict time-stamp order candidate, and $o_1,o_2 \in \mathit{lin}$, such that $o_1$ and $o_2$ are concurrent and adjacent in $\mathit{lin}$, and $o_1$ and $o_2$ have the same time-stamp in $h$. Then, $l = \mathit{swap}(\mathit{lin},o_1,o_2)$ is also a linearization and is also a strict time-stamp order candidate. It is obvious that $l$ is still a strict time-stamp order candidate.

Let $o_1 = (\ell_1,\mathit{id}_1,S_1)$ and $o_2 = (\ell_2,\mathit{id}_2,S_2)$. Since $o_1$ and $o_2$ are concurrent, we know that $\mathit{id}_1 \notin S_2 \wedge \mathit{id}_2 \notin S_1$. Assume $\mathit{lin} = l_1 \cdot o_1 \cdot o_2 \cdot l_2$. Assume in the abstract state of $\mathit{reg}_s$, we have $\sigma_0 {\xrightarrow{l_1}} \sigma_1 {\xrightarrow{o_1}} \sigma_2 {\xrightarrow{o_2}} \sigma_3 {\xrightarrow{l_2}} \sigma_4$, where $\sigma_0$ is the initial state of $\mathit{OR}$-$\mathit{set}_s$. Then, we need to prove that, there exists $\sigma'_2$, such that $\sigma_1 {\xrightarrow{o_2}} \sigma'_2 {\xrightarrow{o_1}} \sigma_3$. We prove this by consider all the possible cases:

\begin{itemize}
\setlength{\itemsep}{0.5pt}
\item[-] If $o_1 = (\mathit{write}(a_1),\mathit{id}_1,S_1)$ and $o_2 = (\_,\mathit{id}_2,S_2)$: This case is impossible. We can see that the time-stamp of $a$ is larger than operations in $S_1$, and thus, the time-stamp of $o_1$ is the time-stamp of $a$. Since $\mathit{id}_1 \notin S_2$, we know that the time-stamp of $o_2$ is different from that of $o_1$, contradicts the assumption that $o_1$ and $o_2$ have same time-stamp.

\item[-] If $o_1 = (\_,\mathit{id}_1,S_1)$ and $o_2 = (\mathit{write}(a_2),\mathit{id}_2,S_2)$: Similarly, we can prove that this case is impossible.

\item[-] If $o_1 = (\mathit{read}() \Rightarrow a_1,\mathit{id}_1,S_1)$ and $o_2 = (\mathit{read}() \Rightarrow a_2,\mathit{id}_2,S_2)$: Let $\sigma'_2 = \sigma_1$. Then, it is easy to see that $\sigma_1 {\xrightarrow{o_2}} \sigma'_2 {\xrightarrow{o_1}} \sigma_3$.
\end{itemize}
This completes the proof of this lemma. $\qed$
\end {proof}

With Lemma \ref{lemma:list-af is a t1-specification} and Lemma \ref{lemma:reg is a t1-specification}, we can now prove Lemma \ref{lemma:several t1-specifications}.

\SeveralTOneSpecifications*

\begin {proof}
This lemma holds obviously from Lemma \ref{lemma:list-af is a t1-specification} and Lemma \ref{lemma:reg is a t1-specification}. $\qed$
\end {proof}

\subsection{Proof of Lemma \ref{lemma:several t0-specifications can be composed}}
\label{subsec:appendix proofs of lemma several t0-specifications can be composed}

\composingTZero*
\begin {proof}
Assume that $h = (\mathit{Op},\mathit{ro},\mathit{vis})$. We need to prove that, if $h \uparrow_{\mathit{obj}}$ is distributed linearizable for each object $\mathit{obj}$ of $h$, then $h$ is distributed linearizable.

We construct a linearization $\mathit{lin}$ of $h$ in a process as follows:

\begin{itemize}
\setlength{\itemsep}{0.5pt}
\item[-] Initially a set $\mathit{Op}' = \mathit{Op}$ and $\mathit{lin} = \epsilon$.

\item[-] We begin a loop as follows: In each round of the loop, we choose an operation $o$ that is minimal w.r.t $\mathit{vis}$ in $\mathit{Op}'$, let $\mathit{Op}' = \mathit{Op}' \setminus \{ o \}$, and let $\mathit{lin} = \mathit{lin} \cdot o$.
\end{itemize}

If this process terminates with $\mathit{Op}' = \emptyset$: Then it is easy to see that $\mathit{lin}$ is consistent with $\mathit{vis}$, and thus, for each object $\mathit{obj}$, it is easy to see that $\mathit{lin} \uparrow_{\mathit{obj}}$ is consistent with $\mathit{vis} \uparrow_{\mathit{obj}}$. By the definition of t0-specifications, we know that, for each object $\mathit{obj}$, $\mathit{lin} \uparrow_{\mathit{obj}}$ is a linearization of $h \uparrow_{\mathit{obj}}$. Therefore, $h$ is distributed linearizable.

Let us prove that this process terminates with $\mathit{Op}' = \emptyset$ by contradiction: Assume this process terminates with $\mathit{Op}' \neq \emptyset$, then it is easy to see that $\mathit{vis}^*$ has cycle, which contradicts the assumption that $\mathit{vis}^*$ is acyclic. Therefore, this process terminates with $\mathit{Op}' = \emptyset$. $\qed$
\end {proof}

\subsection{Proof of Lemma \ref{lemma:several t0-specifications and one t1-specification can be composed}}
\label{subsec:appendix proofs of lemma several t0-specifications and one t1-specification can be composed}

\composingTZeroAndOneTOne*
\begin {proof}
Assume that $h = (\mathit{Op},\mathit{ro},\mathit{vis})$. Let $\mathit{obj}_1$ be the only object that uses t1-specification, and let $\mathit{objs}_0$ be the set of other objects. We need to prove that, if $h \uparrow_{\mathit{obj}}$ is distributed linearizable for each object $\mathit{obj}$ of $h$, then $h$ is distributed linearizable.

We construct a linearization $\mathit{lin}$ of $h$ in a process as follows:

\begin{itemize}
\setlength{\itemsep}{0.5pt}
\item[-] Initially a set $\mathit{Op}' = \mathit{Op}$ and $\mathit{lin} = \epsilon$.

\item[-] We begin a loop as follows: in each round of the loop, we choose an operation $o$ shown below, and then let $\mathit{Op}' = \mathit{Op}' \setminus \{ o \}$, and let $\mathit{lin} = \mathit{lin} \cdot o$.

    \begin{itemize}
    \setlength{\itemsep}{0.5pt}
    \item[-] either $o$ is of an operation of $\mathit{objs}_0$ and is minimal w.r.t $\mathit{vis}$ in $\mathit{Op}'$,

    \item[-] or $o$ is of an operation of $\mathit{obj}_1$, is minimal w.r.t $\mathit{vis}$ in $\mathit{Op}'$, and has the minimal time-stamp among operations of $\mathit{obj}_1$ in $\mathit{Op}'$.
    \end{itemize}
\end{itemize}

If this process terminates with $\mathit{Op}' = \emptyset$: Then it is easy to see that $\mathit{lin}$ is consistent with $\mathit{vis}$, and thus, for each object $\mathit{obj}$, it is easy to see that $\mathit{lin} \uparrow_{\mathit{obj}}$ is consistent with $\mathit{vis} \uparrow_{\mathit{obj}}$. It is also easy to see that for operation of $\mathit{obj}_1$, $\mathit{lin}$ is consistent with time-stamp. By the definition of t0-specifications, we know that, for each object $\mathit{obj} \in \mathit{objs}$, $\mathit{lin} \uparrow_{\mathit{obj}}$ is a linearization of $h \uparrow_{\mathit{obj}}$. By the definition of t1-specifications, we know that, $\mathit{lin} \uparrow_{\mathit{obj}_1}$ is a linearization of $h \uparrow_{\mathit{obj}_1}$. Therefore, $h$ is distributed linearizable.

Let us prove that this process terminates with $\mathit{Op}' = \emptyset$ by contradiction: Assume this process terminates with $\mathit{Op}' \neq \emptyset$. Let set $S_1 = \{ o' \vert o'$ is minimal w.r.t $\mathit{vis}$ in $\mathit{Op}'$ $\}$. Then, we can see that, for each operation $o \in S_1$, $o$ is of object $\mathit{obj}_1$, and $o$ does not have minimal time-stamps among operations of $\mathit{obj}_1$ in $\mathit{Op}'$. Let $o_0$ be the operation that is of object $\mathit{obj}_1$ and has minimal time-stamp among operations of $\mathit{obj}_1$ in $\mathit{Op}'$. It is obvious that $o_0 \notin S_1$. Therefore, there exists operations $o_1,\ldots,o_k$, such that $o_1 \in S_1$, $o_1$ is of object $\mathit{obj}_1$, $(o_1,o_2),\ldots,(o_k,o_0) \in \mathit{vis}$. Since the visibility is transitive, we have that $(o_1,o_0) \in \mathit{vis}$. We already know that the time-stamp of $o_0$ is less than that of $o_1$. This contradicts the assumption that time-stamp is consistent with visiblity. Therefore, this process terminates with $\mathit{Op}' = \emptyset$. $\qed$

\end {proof}

\subsection{Proof of Lemma \ref{lemma:several t0-specifications and several t1-specification can be composed}}
\label{subsec:appendix proofs of lemma several t0-specifications and several t1-specification can be composed}

\composingTZeroAndTOne*
\begin {proof}
Assume that $h = (\mathit{Op},\mathit{ro},\mathit{vis})$. Let $\mathit{objs}_0$ be the set of objects that use t0-specifications in $h$, and let $\mathit{objs}_1$ be the set of objects that use t1-specifications in $h$. We need to prove that, if $h \uparrow_{\mathit{obj}}$ is distributed linearizable for each object $\mathit{obj}$ of $h$, then $h$ is distributed linearizable.

We construct a linearization $\mathit{lin}$ of $h$ in a process as follows:

\begin{itemize}
\setlength{\itemsep}{0.5pt}
\item[-] Initially a set $\mathit{Op}' = \mathit{Op}$ and $\mathit{lin} = \epsilon$.

\item[-] We begin a loop as follows: in each round of the loop, we choose an operation $o$ shown below, and then let $\mathit{Op}' = \mathit{Op}' \setminus \{ o \}$, and let $\mathit{lin} = \mathit{lin} \cdot o$.

    \begin{itemize}
    \setlength{\itemsep}{0.5pt}
    \item[-] either $o$ is of an operation of objects in $\mathit{objs}_0$ and is minimal w.r.t $\mathit{vis}$ in $\mathit{Op}'$,

    \item[-] or $o$ is of an operation of object $\mathit{obj}_1 \in \mathit{objs}_1$, is minimal w.r.t $\mathit{vis}$ in $\mathit{Op}'$, and has the minimal time-stamp among operations of $\mathit{obj}_1$ in $\mathit{Op}'$.
    \end{itemize}
\end{itemize}

If this process terminates with $\mathit{Op}' = \emptyset$: Then it is easy to see that $\mathit{lin}$ is consistent with $\mathit{vis}$, and thus, for each object $\mathit{obj}$, it is easy to see that $\mathit{lin} \uparrow_{\mathit{obj}}$ is consistent with $\mathit{vis} \uparrow_{\mathit{obj}}$. It is also easy to see that for each object $\mathit{ojb}_1 \in \mathit{objs}_1$, $\mathit{lin}$ is consistent with time-stamp of $\mathit{obj}_1$. By the definition of t0-specifications, we know that, for each object $\mathit{obj} \in \mathit{objs}_0$, $\mathit{lin} \uparrow_{\mathit{obj}}$ is a linearization of $h \uparrow_{\mathit{obj}}$. By the definition of t1-specifications, we know that, for each object $\mathit{obj}_1 \in \mathit{objs}_1$, $\mathit{lin} \uparrow_{\mathit{obj}_1}$ is a linearization of $h \uparrow_{\mathit{obj}_1}$. Therefore, $h$ is distributed linearizable.

Let us prove that this process terminates with $\mathit{Op}' = \emptyset$ by contradiction: Assume this process terminates with $\mathit{Op}' \neq \emptyset$. Let set $S_1 = \{ o' \vert o'$ is minimal w.r.t $\mathit{vis}$ in $\mathit{Op}'$ $\}$. Then, we can see that, for each operation $o \in S_1$, there exists a object $\mathit{obj}_1 \in \mathit{objs}_1$, such that $o$ is of $\mathit{obj}_1$, and $o$ does not have minimal time-stamps among operations of $\mathit{obj}_1$ in $\mathit{Op}'$.

Let $S_2 = \{ o \vert \exists \mathit{obj}_1 \in \mathit{objs}_1, o$ is of object $\mathit{obj}_1, o$ has minimal time-stamp among operations of $\mathit{obj}_1$ in $\mathit{Op}' \}$. It is easy to see that $\forall o \in S_2$, $o \notin S_1$.

Thus, it is easy to see that, for each operation $o' \in S_2$, there exists an operation $o \in S_1$ and operations $o'_1,\ldots,o'_k$, such that $(o,o'_1),(o'_1,o'_2),\ldots,(o'_k,o') \in \mathit{vis}$. Since the visibility relation is transitive, we have that $(o,o') \in \mathit{vis}$.

Let $S_3 = \{ (o,o') \vert o \in S_1, o' \in S_2, \exists o'_1,\ldots,o'_k, (o,o'_1),(o'_1,o'_2),\ldots,(o'_k,o') \in \mathit{vis} \}$. Let $S_4 = \{ (\mathit{obj},\mathit{obj}') \vert \exists (o,o') \in S_3$, $o$ is of object $\mathit{obj}$, $o'$ is of object $\mathit{obj}' \}$.

Let us prove that there is a cycle in $S_4$ by contradiction. Given $(\mathit{obj}_2,\mathit{obj}_1) \in S_4$, we know that there is a operation of object of $\mathit{obj}_2$ in $S_1$, and thus, there must exists a operation of object of $\mathit{obj}_2$ in $S_2$. By definition of $S_2$, it is easy to see that there exists $\mathit{obj}_3$, such that $(\mathit{obj}_3,\mathit{obj}_2) \in S_4$. Since $S_4$ has no cycle, we applying this process and finally terminate with $(\mathit{obj}_k,\mathit{obj}_{\mathit{k-1}}),\ldots,(\mathit{obj}_2,\mathit{obj}_1) \in S_4$ and could not found any $\mathit{obj}'$ to make $(\mathit{obj}',\mathit{obk}_k) \in S_4$. However, this implies that there is a operation of $\mathit{obj}_k$ that has minimal time-stamp among operations of $\mathit{obj}_k$ in $\mathit{Op}'$, and is in $S_1$. This contradicts our conclusion that $\forall o \in S_2$, $o \notin S_1$. Therefore, this is a cycle in $S_4$.

Let the cycle in $S_4$ be $(\mathit{obj}_1,\mathit{obj}_k),(\mathit{obj}_k,\mathit{obj}_{\mathit{k-1}}),\ldots,(\mathit{obj}_2,\mathit{obj}_1)$. Then, there exists operations $o^{0}_{\mathit{o1}}, o^{1}_{\mathit{o1}},\ldots, o^{0}_{\mathit{ok}}, o^{1}_{\mathit{ok}}$, such that

\begin{itemize}
\setlength{\itemsep}{0.5pt}
\item[-] $o^{0}_{\mathit{o1}}, o^{1}_{\mathit{o1}}$ is of object $\mathit{obj}_1$, $\ldots$, $o^{0}_{\mathit{ok}}, o^{1}_{\mathit{ok}}$ is of object $\mathit{obj}_k$.

\item[-] $(o^{1}_{\mathit{o1}},o^{0}_{\mathit{ok}}), (o^{1}_{\mathit{ok}},o^{0}_{\mathit{ok-1}})$, $\ldots$, $(o^{1}_{\mathit{o2}},o^{0}_{\mathit{o1}}) \in S_3$.
\end{itemize}

Thus, it is easy to see $(o^{1}_{\mathit{o1}},o^{0}_{\mathit{ok}}), (o^{1}_{\mathit{ok}},$ $o^{0}_{\mathit{ok-1}})$, $\ldots$, $(o^{1}_{\mathit{o2}},o^{0}_{\mathit{o1}}) \in \mathit{vis}$. By definition of $S_2$, we can see that $\mathit{ts}(o^{0}_{\mathit{o1}}) < \mathit{ts}(o^{1}_{\mathit{o1}}), \ldots, \mathit{ts}(o^{0}_{\mathit{ok}}) < \mathit{ts}(o^{1}_{\mathit{ok}})$. This contradicts the definition of causal-time-stamp. Therefore, this process terminates with $\mathit{Op}' = \emptyset$. $\qed$
\end {proof}

\section{For State-based CRDT}
\label{sec:for state-based CRDT}

\begin{example}[List with add-between interface]
\label{definition:sequential specification of list with add-after interface}
Such kind of list is similar as list with add-after interface. One difference is the $\mathit{add}$ method: $\mathit{add}(b,a,c)$ inserts item $b$ into the list at some nondeterministic position between position of $a$ and position of $c$. The other difference is that, we assume that the initial value of list is $(\circ_1,\mathit{true}) \cdot (\circ_2,\mathit{true})$ and these two nodes can not be removed. The sequential specification $\mathit{list}_s^{\mathit{ab}}$ of list is given as follows: Here $\mathit{ab}$ represents add-between. When the context is clear, in $\mathit{read}$ operation, we will omit $\circ_1$ and $\circ_2$.
\begin{itemize}
\setlength{\itemsep}{0.5pt}
\item[-] $\{ \mathit{state} = (a_1,f_1) \cdot \ldots \cdot (a_n,f_n) \wedge k < m < l \wedge b \notin \{ a_1, \ldots, a_n \} \}$ $add(b,a_k,a_l)$ $\{ \mathit{state} = (a_1,f_1) \cdot \ldots \cdot (a_m,f_m) \cdot (b,\mathit{true}) \cdot (a_{m+1},f_{m+1}) \cdot \ldots \cdot (a_n,f_n) \}$. Here the chosen of $m$ is deterministic.
\item[-] $\{ \mathit{state} = (a_1,f_1) \cdot \ldots \cdot (a_n,f_n) \wedge S = \{ a \vert (a,\mathit{true}) \in \mathit{state} \} \wedge l = a_1 \cdot \ldots \cdot a_n \uparrow_{S} \}$ $(read() \Rightarrow l)$ $\{ \mathit{state} = (a_1,f_1) \cdot \ldots \cdot (a_n,f_n) \}$.
\end{itemize}
\end{example}

Given a object $\mathit{obj}$ of a state-based CRDT with $\Sigma$ be the set of local states, we define its semantics as a set of executions generated from an LTS $\llbracket \mathit{obj} \rrbracket_s = (\mathit{Config},\mathit{config}_0,\Sigma',\rightarrow)$ as in \autoref{fig:the semantics of a state-based CRDT object}.

\begin{figure}[ht]
$\mathit{RState} = \mathbb{R} \rightarrow \Sigma$

$\mathit{TState} = \mathbb{MID} \times \mathbb{MSG} \times \mathbb{R}$.

$\mathit{Config} = \mathit{RState} \times \mathit{TState}$, $\mathit{config}_0 \in \mathit{Config}$.

$\Sigma' = \mathit{do}(\mathbb{M} \times \mathbb{D} \times \mathbb{D} \times \mathbb{R}) \cup \mathit{send}(\mathbb{MID} \times \mathbb{R}) \cup \mathit{receive}(\mathbb{MID} \times \mathbb{R})$

\[
\begin{array}{l c}
\bigfrac{ R(r) = \sigma, r.\mathit{do}(\sigma,m,a) = (\sigma',b) }
{ (R,T) {\xrightarrow{\mathit{do}(m,a,b,r)}} (R[r:\sigma'],T) }
\end{array}
\]

\[
\begin{array}{l c}
\bigfrac{ R(r) = \sigma, \mathit{unique}(\mathit{mid}) }
{ (R,T) {\xrightarrow{\mathit{send}(\mathit{mid},r)}} (R,T \cup \{ (\mathit{mid},\sigma,r) \}) }
\end{array}
\]

\[
\begin{array}{l c}
\bigfrac{ R(r) = \sigma, r.\mathit{receive}(\sigma,\sigma') = \sigma'',(\mathit{mid},\sigma',r') \in T, r \neq r'}
{ (R,T) {\xrightarrow{\mathit{receive}(\mathit{mid},r)}} (R[r:\sigma''],T) }
\end{array}
\]
\caption{The definition of semantics of $\llbracket \mathit{obj} \rrbracket_s$}
\label{fig:the semantics of a state-based CRDT object}
\end{figure}

A configuration $(R,T)$ is a snapshot of distributed system and contains two parts: $R$ gives the local state of each replica, and $T$ gives the set of messages that has been generated. Let $\mathbb{MID}$ be the set of message identifiers of message content. A message is a tuple $(\mathit{mid},\mathit{msg},r)$, where $\mathit{mid} \in \mathbb{MID}$ is the identifier, $\mathit{msg} \in \mathbb{MSG}$ is the message content, and $r$ is the original replica of message. $\mathit{config}_0$ is the initial configuration, which maps each replica into the initial local state, and has no message inside. Since $\mathit{obj}$ is a state-based CRDT, each message content is chosen from $\Sigma$.

Each element of $\Sigma'$ is called an action. $\rightarrow \in \mathit{Config} \times \Sigma' \times \mathit{Config}$ is the transition relation and describe a single step of distributed systems. The first rule in \autoref{fig:the semantics of a state-based CRDT object} describes replica $r$ performs a operation $m(a) \Rightarrow b$ and works locally. The second rule describes that a replica $r$ may nondeterministically decide to send a message with its local state as message content. Here $\mathit{unique}$ is a function that ensures $\mathit{mid}$ be a fresh message identifier. The third rule describes delivery of a message to a replica $r$ other than its origin replica $r'$.

A sequence $l$ of actions is an execution of $\llbracket \mathit{obj} \rrbracket_s = (\mathit{Config},\mathit{config}_0,\Sigma',\rightarrow)$, if there exists $(R,T) \in \mathit{Config}$, such that $\mathit{config}_0 {\xrightarrow{ l }} (R,T)$. The semantics of $\mathit{obj}$ is defined as the set of executions of $\llbracket \mathit{obj} \rrbracket_s$. Given an execution, when the context is clear, we can associate a unique operation identifier to each action. Or we can say, it is safe to assume each action is in the form of either $\mathit{do}(i,m,a,b,r)$, or $\mathit{send}(i,\mathit{mid},r)$, or $\mathit{receive}(i,\mathit{mid},r)$, where $i \in \mathbb{OID}$ is a unique operation identifier.

Given an execution $l = \alpha_1 \cdot \ldots \cdot \alpha_k$ of $\llbracket \mathit{obj} \rrbracket_s$ of state-based CRDT $\mathit{obj}$, we can obtain a corresponding history $\mathit{history}(l) = (\mathit{Op},\mathit{ro},\mathit{vis})$, such that

\begin{itemize}
\setlength{\itemsep}{0.5pt}
\item[-] Each operation in $\mathit{Op}$ is a tuple $(\ell,i,\mathit{obj})$, such that $i$ is the operation identifier of a $\mathit{do}(m,a,b,r)$ action of $l$.

\item[-] $(o_1,o_2) \in \mathit{ro}$, if they are of same replica, and the index of $o_1$ in $h$ is before that of $o_2$.

\item[-] Let us defines a delivery relation $\mathit{del} \subseteq \mathbb{OP} \times \mathbb{OP}$ as follows: $(o_1,o_2) \in \mathit{del}$, if: $o_1$ and $o_2$ are of different replicas, there exists a $\mathit{send}(\mathit{mid},r)$ action and a $\mathit{receive}(\mathit{mid},r')$ action, $o_1$ and $\mathit{send}(\mathit{mid},r)$ happen on a same replica and $o_1$ happens earlier, $\mathit{receive}(\mathit{mid},r)$ and $o_2$ happen on a same replica and $\mathit{receive}(\mathit{mid},r)$ happens earlier.

\item[-] $\mathit{vis} = (\mathit{ro}+\mathit{del})^*$.
\end{itemize}

Intuitively, each local state can be considered as the consequence of all updates it receives. Since state-based CRDT sends the modified local state as message, the visibility relation is then the transitive closure of replica order and message delivery relation. Let $\mathit{history}(\llbracket \mathit{obj} \rrbracket_s)$ be the set of histories of all executions of $\llbracket \mathit{obj} \rrbracket_s$.

\subsection{Proof Strategy of State-based CRDT}
\label{subsec:proof strategy of operation-based CRDT}

Given a state-based CRDT object $\mathit{obj}$ and a sequential specification $\mathit{spec}$, we need to construct a invariant $\mathit{inv}(\mathit{config},h,\mathit{lin},\mathit{del},\mathit{map})$, where

\begin{itemize}
\setlength{\itemsep}{0.5pt}
\item[-] $\mathit{config}$ is a configuration of $\llbracket \mathit{obj} \rrbracket_s$.

\item[-] $h$ is a history.

\item[-] $h$ is distributed linearizable w.r.t $\mathit{spec}$ and $\mathit{lin}$ is a linearization.

\item[-] $\mathit{del} \subseteq \mathbb{MID} \times \mathbb{R}$ is the message delivery relation.

\item[-] $\mathit{map} \subseteq \mathbb{MID} \times 2^{\mathbb{OID}}$ maps each message $\mathit{mid}$ to a set $S_1$ of operations. Intuitively, $S_1$ is the set of operations whose information are contained in $\mathit{mid}$.
\end{itemize}

$\mathit{inv}(\mathit{config},h,\mathit{lin},\mathit{del},\mathit{map})$ needs to satisfy the following properties:

\begin{itemize}
\setlength{\itemsep}{0.5pt}
\item[-] The visibility of $h$ is transitive.

\item[-] $\mathit{del}$ preserves causal delivery: If $(o_1,o_2) \in \mathit{vis}$ and $(o_2,r) \in \mathit{del}$, then $(o_1,r) \in \mathit{del}$.

\item[-] $\mathit{map}$ preserves causal delivery: Given $o_1,o_3 \in \mathit{map}(\mathit{mid})$, if $\exists o_2$, such that $(o_1,o_2),(o_2,o_3) \in \mathit{vis}$, then $o_2 \in \mathit{map}(\mathit{mid})$.

\item[-] $\mathit{inv}$ holds initially: $\mathit{inv}(\mathit{config}_0,\epsilon,\emptyset,\emptyset,\emptyset)$ holds, where $\mathit{config}_0$ is the initial configuration of $\llbracket \mathit{obj} \rrbracket_s$.

\item[-] $\mathit{inv}$ is a transition invariant:

    \begin{itemize}
    \setlength{\itemsep}{0.5pt}
    \item[-] If $\mathit{inv}(\mathit{config},h,\mathit{lin},\mathit{del},\mathit{map})$ holds and $\mathit{config} {\xrightarrow{\mathit{do}(m,a,b,r)}} \mathit{config}'$, then $\mathit{inv}(\mathit{config}', h \otimes i, \mathit{lin} \cdot i,\mathit{del},\mathit{map})$ holds. Note that here we always put a new operation in the last of linearization.

        Here $i$ is the identifier of the newly-generated $\mathit{do}$ action. Given $h = (\mathit{Op},\mathit{ro},\mathit{vis})$, then, $h \otimes i = (\mathit{Op}',\mathit{ro}',\mathit{vis}')$, where $\mathit{Op}' = \mathit{Op} \cup \{ (m(a) \Rightarrow b,i,\mathit{obj}) \}$, $\mathit{ro}' = \mathit{ro} \cup \{ (j,i) \vert j \in \mathit{Op}, j$ is of replica $r \}$, and $\mathit{vis}' = (\mathit{vis} \cup \{ (j,i) \vert j \in \mathit{Op},(j,r) \in \mathit{del} \} \cup \{ (j,i) \vert j \in \mathit{Op}, j$ is of replica $r \})^*$.

    \item[-] If $\mathit{inv}(\mathit{config},h,\mathit{lin},\mathit{del},\mathit{map})$ holds and $\mathit{config} {\xrightarrow{\mathit{send}(\mathit{mid},r)}} \mathit{config}'$, then $\mathit{inv}(\mathit{config}',h,\mathit{lin},\mathit{del},\mathit{map}')$ holds, where $\mathit{map}' = \mathit{map} \cup (\mathit{mid}, \mathit{vd}(h,\mathit{del},r))$.

    \item[-] If $\mathit{inv}(\mathit{config},h,\mathit{lin},\mathit{del},\mathit{map})$ holds and $\mathit{config} {\xrightarrow{\mathit{receive}(\mathit{mid},r)}} \mathit{config}'$, then $\mathit{inv}(\mathit{config}',h,\mathit{lin},\mathit{del}',\mathit{map})$ holds, where $\mathit{del}' = \mathit{del} \cup \{ (i,r) \vert i \in \mathit{map}(\mathit{mid}) \}$.
    \end{itemize}
\end{itemize}

Here $\mathit{vd}(h,\mathit{del},r) = \{ i \vert (i,j) \in h.\mathit{vis}, j$ is of replica $r \} \cup \{ i \vert (i,r) \in \mathit{del} \}$ is the set of operations that are either to some operation of replica $r$, or has been delivered into replica $r$. An invariant $\mathit{inv}$ satisfies above properties is called invariant of state-based CRDT. The following lemma states that the existence of such invariant implies distributed linearizability.

\begin{lemma}
\label{lemma:invariant of state-based CRDT implies distributed linearizability}
If there exists a invariant $\mathit{inv}$ of state-based CRDT for object $\mathit{obj}$ and sequential specification $\mathit{spec}$, then each history of $\mathit{history}(\llbracket \mathit{obj} \rrbracket_s)$ is distributed linearizable w.r.t $\mathit{spec}$.
\end{lemma}

\begin {proof}
Given an execution $l=\alpha_1 \cdot \ldots \cdot \alpha_n$, let $\mathit{config}_0 {\xrightarrow{\alpha_1}} \mathit{config}_1 \ldots {\xrightarrow{\alpha_n}} \mathit{config}_n$ be the transitions from initial configuration. We need to prove that, for each $1 \leq k \leq n$, we have $\mathit{inv}(\mathit{config}_k,h_k,\mathit{lin}_k,\mathit{del}_k,\mathit{map}_k)$ holds, where $h_k$ is the history of execution $l_k = \alpha_1 \cdot \ldots \cdot \alpha_k$, $\mathit{lin}_k$ is the linearization of $h_k$, $\mathit{del}_k$ records message delivery relation of $l_k$, and $\mathit{map}_k$ records the operations contained in each message in $l_k$.

Since $\mathit{inv}$ holds initially and is a transition invariant, it is easy to prove above requirements by induction on execution. This completes the proof of this lemma. $\qed$
\end {proof}

For many state-based CRDT implementations, $\mathit{inv}((R,T),h,\mathit{lin},\mathit{del},\mathit{map}) = C_1 \wedge C_2$, where

\begin{itemize}
\setlength{\itemsep}{0.5pt}

\item[-] $C_1: \forall (\mathit{mid},\mathit{msg},\_) \in T$, $\mathit{msg} = \mathit{apply}(\mathit{lin},\mathit{map}(\mathit{mid}))$.

\item[-] $C_2: \forall r$, $R(r) = \mathit{apply}(\mathit{lin},\mathit{vd}(h,\mathit{del},r))$.
\end{itemize}

The function $\mathit{apply}(\mathit{lin},S)$ returns a local state by applying ``virtual messages'' of operations in $S$ according to total order $\mathit{lin}$. Here for each update operation $o$ of $h$, we need to define a local state $\mathit{ds}(o)$, which is the ``virtual messages'' of $o$. Note that state-based CRDT send message randomly, instead of each message for a update operation. This is the reason why we need to manually generate virtual message for each update operation.

To give $\mathit{inv}$, it only remains to give the virtual messages. The virtual message of state-based PN-counter and state-based multi-value register as follows. The proof of them being invariants of state-based CRDT is given in Appendix \ref{subsec:appendix proof of state-based PN-counter} and Appendix \ref{subsec:appendix proof of state-based multi-value register}, respectively.

\begin{example}[virtual messages of state-based PN-counter]
\label{example:virtual messagess of state-based PN-counter}

For each update operation $o$, $\mathit{ds}(o) = (P,N)$, where

\begin{itemize}
\setlength{\itemsep}{0.5pt}
\item[-] $\forall r'$, $P[r'] = \vert \{ o' \vert o'$ is a $\mathit{inc}$ operation of replica $r'$, $o' = o \vee (o',o) \in h.\mathit{vis} \} \vert$.

\item[-] $\forall r'$, $N[r'] = \vert \{ o' \vert o'$ is a $\mathit{dec}$ operation of replica $r'$, $o' = o \vee (o',o) \in h.\mathit{vis} \} \vert$.
\end{itemize}
\end{example}

\begin{example}[virtual messages of state-based Multi-value Register]
\label{example:virtual messages of state-based multi-value register}

For each update operation $o = (\mathit{write}(a),\_,\_)$ of replica $r$, $\mathit{ds}(o) = (a,V)$, where

\begin{itemize}
\setlength{\itemsep}{0.5pt}
\item[-] $\forall r'$, $V[r'] = \vert \{ o' \vert o'$ is a $\mathit{write}$ operation of replica $r'$, $o' = o \vee (o',o) \in h.\mathit{vis} \} \vert$.
\end{itemize}
\end{example}

\subsection{Proof of State-based PN-counter}
\label{subsec:appendix proof of state-based PN-counter}

The following lemma states that each visibility-closed set is a union of operations visible to a set of operations. Its proof is obvious and omitted here.

\begin{lemma}
\label{lemma:a transitive-closed set is a union of visibility of several sets}
Given a set $\mathit{Op}$ of operations and a transitive and acyclic visibility relation $\mathit{vis} \subseteq \mathit{Op} \times \mathit{Op}$, if given a set $S \subseteq \mathit{Op}$, if $S$ satisfies that $\forall o_1,o_2 \in S, o_2 \in S \wedge (o_1,o_2) \in \mathit{vis} \Rightarrow o_1 \in S$, then there exists a set $O \subseteq \mathit{Op}$, such that $S = \cup_{o \in O} \mathit{vis}^{-1}(o)$.
\end{lemma}

The following lemma states that given two operations $o_1,o_2$, for each replica $r$, either the set of operations of replica $r$ visible to $o_1$ is a subset of that of $o_2$, or the set of operations of replica $r$ visible to $o_2$ is a subset of that of $o_1$. Its proof is obvious and omitted here.

\begin{lemma}
\label{lemma:the view of a replica of one operation is contained in another operaiton, or vice versa}
Assume that $\mathit{inv}((R,T),h,\mathit{lin},\mathit{del},\mathit{map})$ holds. Let $S_o^r = \{ o' \vert (o',o) \in \mathit{vis}, o'$ is of replica $r \}$. Then for each operations $o_1$ and $o_2$, and for each replica $r$, $S_{\mathit{o1}}^r \subseteq S_{\mathit{o2}}^r \vee S_{\mathit{o2}}^r \subseteq S_{\mathit{o1}}^r$.
\end{lemma}

Recall that $\mathit{inv} = C_1 \wedge C_2$ with the virtual messages defined as follows: For each update operation $o$, $\mathit{ds}(o) = (P,N)$, where

\begin{itemize}
\setlength{\itemsep}{0.5pt}
\item[-] $\forall r'$, $P[r'] = \vert \{ o' \vert o'$ is a $\mathit{inc}$ operation of replica $r'$, $o' = o \vee (o',o) \in h.\mathit{vis} \} \vert$.

\item[-] $\forall r'$, $N[r'] = \vert \{ o' \vert o'$ is a $\mathit{dec}$ operation of replica $r'$, $o' = o \vee (o',o) \in h.\mathit{vis} \} \vert$.
\end{itemize}

The following lemma states that $\mathit{inv}$ is an invariant of state-based PN-counter.

\begin{lemma}
\label{lemma:inv is an invariant of state-based CRDT for state-based PN-counter}
$\mathit{inv}$ is an invariant of state-based PN-counter.
\end{lemma}

\begin {proof}

It is obvious that $\mathit{inv}(\mathit{config}_0,\epsilon,\emptyset,\emptyset,\emptyset)$ holds.

Let us prove that $\mathit{inv}$ is a transition invariant: assume $\mathit{inv}((R,T),h,\mathit{lin},\mathit{del},\mathit{map})$ holds,

\begin{itemize}
\setlength{\itemsep}{0.5pt}
\item[-] If $(R,T) {\xrightarrow{\mathit{do}(\mathit{inc},r)}} (R',T')$: Then,

    \begin{itemize}
    \setlength{\itemsep}{0.5pt}
    \item[-] It is easy to see that $R' = R[ r: ( R(r).P[r: R(r).P(r)+1 ], R(r).N ) ]$ and $T' = T$.

    \item[-] Let $h' = h \otimes i$, where $i$ is the identifier of the newly-generated $\mathit{inc}$ action.

    \item[-] Let $\mathit{lin}' = \mathit{lin} \cdot (\mathit{inc},i,\mathit{obj})$.

    \item[-] Let $\mathit{del}' = \mathit{del}$ and $\mathit{map}' = \mathit{map}$.
    \end{itemize}

    It is easy to see that $\mathit{lin}'$ is a linearization of $h'$. It is obvious that all other properties hold, except for $C_2$ for replica $r$. Therefore, let us prove that $R'(r) = \mathit{apply}(\mathit{lin}',\mathit{vd}(h',\mathit{del}',r))$.

    Since $R(r) = \mathit{apply}(\mathit{lin},\mathit{vd}(h,\mathit{del},r))$ and $\mathit{lin}' = \mathit{lin} \cdot (\mathit{inc},i,\mathit{obj})$, we know that $\mathit{apply}(\mathit{lin}',\mathit{vd}(h',\mathit{del}',r)) = \mathit{merge}(R(r),\mathit{ds}(i))$. Therefore, we need to prove that $R'(r) = \mathit{merge}(R(r),\mathit{ds}(i))$.

    Since $\mathit{vd}(h,\mathit{del},r)$ satisfies that, $\forall o_1,o_2 \in \mathit{vd}(h,\mathit{del},r), o_2 \in \mathit{vd}(h,\mathit{del},r) \wedge (o_1,o_2) \in \mathit{vis} \Rightarrow o_1 \in \mathit{vd}(h,\mathit{del},r)$, by Lemma \ref{lemma:a transitive-closed set is a union of visibility of several sets}, we know that there exists a set $O$, such that $\mathit{vd}(h,\mathit{del},r) = \cup_{o \in O} \mathit{vis}^{-1}(o)$. By Lemma \ref{lemma:the view of a replica of one operation is contained in another operaiton, or vice versa} and the construction of $\mathit{ds}$, we can see that $R(r) = (P',N')$, where for each replica $r'$, $P'[r'] = \vert \{ j \in \mathit{vd}(h,\mathit{del},r) \uparrow_{\mathit{inc}}$ and $j$ is of replica $r \} \vert$ and $N'[r'] = \vert \{ j \in \mathit{vd}(h,\mathit{del},r) \uparrow_{\mathit{dec}}$ and $j$ is of replica $r \} \vert$.

    We already know that $\mathit{ds}(i) = (P'',N'')$, where for each replica $r'$, $P''[r'] = \vert \{ j \in \mathit{vd}(h',\mathit{del}',r) \uparrow_{\mathit{inc}}$ and $j$ is of replica $r \} \vert$ and $N''[r'] = \vert \{ j \in \mathit{vd}(h',\mathit{del}',r) \uparrow_{\mathit{dec}}$ and $j$ is of replica $r \} \vert$. Then, it is obvious that $\mathit{merge}(R(r),\mathit{ds}(i)) = \mathit{ds}(i)$. It is also easy to see that $\mathit{ds}(i) = (R(r).P[r: R(r).P(r)+1], R(r).N) = R'(r)$. Therefore, $R'(r) = \mathit{merge}(R(r),\mathit{ds}(i))$.

\item[-] If $(R,T) {\xrightarrow{\mathit{do}(\mathit{dec},r)}} (R',T')$: Similarly as that of $(R,T) {\xrightarrow{\mathit{do}(\mathit{inc},r)}} (R',T')$.

\item[-] If $(R,T) {\xrightarrow{\mathit{do}(\mathit{read},k,r)}} (R',T')$: Then,

    \begin{itemize}
    \setlength{\itemsep}{0.5pt}
    \item[-] It is obvious that $R' = R$ and $T' = T$.

    \item[-] Let $h' = h \otimes i$, where $i$ is the identifier of the newly-generated $\mathit{read}$ action.

    \item[-] Let $\mathit{lin}' = \mathit{lin} \cdot ((\mathit{read}() \Rightarrow k,i,\mathit{obj}), \mathit{vd}(h,\mathit{del},r) )$.

    \item[-] Let $\mathit{del}' = \mathit{del}$ and $\mathit{map}' = \mathit{map}$.
    \end{itemize}

    It is easy to see that all other properties hold, except for $h'$ being distributed linearizable w.r.t $\mathit{spec}$ with $\mathit{lin}'$ the linearization. Let us prove that $h'$ is distributed linearizable w.r.t $\mathit{spec}$ and $\mathit{lin}'$ is a linearization. It is easy to see that only operation $i$ need to be checked.

    It is easy to see that $\mathit{vd}(h,\mathit{del},r) = \mathit{vis}^{-1}(i)$. Similarly as the case of $(R,T) {\xrightarrow{\mathit{do}(\mathit{inc},r)}} (R',T')$, we can prove that $R(r) = (P',N')$, where for each replica $r'$, $P'[r'] = \vert \{ j \in \mathit{vd}(h,\mathit{del},r) \uparrow_{\mathit{inc}}$ and $j$ is of replica $r \} \vert = \vert \{ j \in \mathit{vis}^{-1}(i) \uparrow_{\mathit{inc}}$ and $j$ is of replica $r \} \vert$ and $N'[r'] = \vert \{ j \in \mathit{vd}(h,\mathit{del},r) \uparrow_{\mathit{dec}}$ and $j$ is of replica $r \} \vert = \vert \{ j \in \mathit{vis}^{-1}(i) \uparrow_{\mathit{dec}}$ and $j$ is of replica $r \} \vert$. Since $k = \Sigma_{r'} P[r'] - \Sigma_{r'} N'[r']$, $k$ is obtained by minus the number of all visible $\mathit{dec}$ of $i$ from the number of all visible $\mathit{inc}$ of $i$. Therefore, we can see that $((\mathit{read}() \Rightarrow k,i,\mathit{obj}), \mathit{vd}(h,\mathit{del},r) )$ of $\mathit{lin}'$ is ``correct''. Then, $h'$ is distributed linearizable w.r.t $\mathit{spec}$ and $\mathit{lin}'$ is a linearization.

\item[-] If $(R,T) {\xrightarrow{\mathit{send}(\mathit{mid},r)}} (R',T')$: Then,

    \begin{itemize}
    \setlength{\itemsep}{0.5pt}
    \item[-] It is obvious that $R' = R$. Let $T' = T \cup \{ (\mathit{mid},R(r),r) \}$.

    \item[-] Let $h' = h$.

    \item[-] Let $\mathit{lin}' = \mathit{lin}$.

    \item[-] Let $\mathit{del}' = \mathit{del}$.

    \item[-] Let $\mathit{map}' = \mathit{map} \cup \{ (\mathit{mid},\mathit{vd}(h,\mathit{del},r)) \}$.
    \end{itemize}

    It is easy to see that all other properties hold, except for checking $C_1$ for $\mathit{mid}$. This holds obviously since the message content of message $\mathit{mid}$ is $R(r)$, and we already know that $R(r) = \mathit{apply}(\mathit{lin},\mathit{vd}(h,\mathit{del},r)) = \mathit{apply}(\mathit{lin},\mathit{map}(\mathit{mid}))$.

\item[-] If $(R,T) {\xrightarrow{\mathit{receive}(\mathit{mid},r)}} (R',T')$: Then,

    \begin{itemize}
    \setlength{\itemsep}{0.5pt}
    \item[-] Let $R' = R[ r: \mathit{merge}(R(r),\mathit{msg})]$ where $(\mathit{mid},\mathit{msg},\_) \in T$. It is obvious that $T' = T$.

    \item[-] Let $h' = h$.

    \item[-] Let $\mathit{lin}' = \mathit{lin}$.

    \item[-] Let $\mathit{del}' = \mathit{del} \cup \{ (i,r) \vert i \in \mathit{map}(\mathit{mid}) \}$.

    \item[-] Let $\mathit{map}' = \mathit{map}$.
    \end{itemize}

    It is easy to see that all other properties hold, except for $C_2$ for replica $r$. Therefore, let us prove that $R'(r) = \mathit{apply}(\mathit{lin}',\mathit{vd}(h',\mathit{del}',r))$.

    We already know that $R'(r) = \mathit{merge}(R(r), \mathit{msg})$, $R(r) = \mathit{apply}(\mathit{lin},\mathit{vd}(h,\mathit{del},r))$ and $\mathit{msg} = \mathit{apply}(\mathit{lin},\mathit{map}(\mathit{mid}))$. It is easy to see that $\mathit{vd}(h',\mathit{del}',r) = \mathit{vd}(h,\mathit{del},r) \cup \mathit{map}(\mathit{mid})$. It is easy to prove that, applying messages in any order lead to the same consequence. Therefore, we have $\mathit{merge}(R(r), \mathit{msg}) = \mathit{apply}(\mathit{lin}',\mathit{vd}(h,\mathit{del},r) \cup \mathit{map}(\mathit{mid}))$. Then, we have $R'(r) = \mathit{apply}(\mathit{lin}',\mathit{vd}(h',\mathit{del}',r))$.
\end{itemize}

This completes the proof of this lemma. $\qed$
\end {proof}

\subsection{Proof of State-based Multi-value Register}
\label{subsec:appendix proof of state-based multi-value register}

Recall that $\mathit{inv} = C_1 \wedge C_2$ with the virtual messages defined as follows: For each update operation $o$, $\mathit{ds}(o) = (a,V)$, where

\begin{itemize}
\setlength{\itemsep}{0.5pt}
\item[-] $\forall r'$, $V[r'] = \vert \{ o' \vert o'$ is a $\mathit{write}$ operation of replica $r'$, $o' = o \vee (o',o) \in h.\mathit{vis} \} \vert$.
\end{itemize}

The following lemma states that $\mathit{inv}$ is an invariant of state-based multi-value register.

\begin{lemma}
\label{lemma:inv is an invariant of state-based CRDT for state-based multi-value register}
$\mathit{inv}$ is an invariant of state-based multi-value register.
\end{lemma}

\begin {proof}

It is obvious that $\mathit{inv}(\mathit{config}_0,\epsilon,\emptyset,\emptyset,\emptyset)$ holds.

Let us prove that $\mathit{inv}$ is a transition invariant: assume $\mathit{inv}((R,T),h,\mathit{lin},\mathit{del},\mathit{map})$ holds,

\begin{itemize}
\setlength{\itemsep}{0.5pt}
\item[-] If $(R,T) {\xrightarrow{\mathit{do}(\mathit{write},a,r)}} (R',T')$: Then,

    \begin{itemize}
    \setlength{\itemsep}{0.5pt}
    \item[-] $R' = R[ r: \{ (a,V') \} ], R(r).N)]$ and $T' = T$. Here $\forall r' \neq r, V'[r'] = \mathit{max} \{ V_1(r) \vert (\_,V_1) \in R(r) \}$, and $V'[r] = \mathit{max} \{ V_1(r) \vert (\_,V_1) \in R(r) \} + 1$.

    \item[-] Let $h' = h \otimes i$, where $i$ is the identifier of the newly-generated $\mathit{inc}$ action.

    \item[-] Let $\mathit{lin}' = \mathit{lin} \cdot (\mathit{inc},i,\mathit{vis}^{-1}(i))$.

    \item[-] Let $\mathit{del}' = \mathit{del}$ and $\mathit{map}' = \mathit{map}$.
    \end{itemize}

    It is easy to see that $\mathit{lin}'$ is a linearization of $h'$. It is obvious that all other properties hold, except for $C_2$ for replica $r$. Therefore, let us prove that $R'(r) = \mathit{apply}(\mathit{lin}',\mathit{vd}(h',\mathit{del}',r))$.

    It is easy to see that $\mathit{vd}(h',\mathit{del}',r) = h'.\mathit{vis}^{-1}(i)$. And then, we need to prove that $(a,V') = \mathit{apply}(\mathit{lin}',h'.\mathit{vis}^{-1}(i))$.

    Recall that $R(r) = \mathit{apply}(\mathit{lin},\mathit{vd}(h,\mathit{del},r))$, from Lemma \ref{lemma:a transitive-closed set is a union of visibility of several sets}, we know that there exists set $O$, such that $\mathit{vd}(h,\mathit{del},r) = \cup_{o \in O} \mathit{vis}^{-1}(o)$. We can prove that, for each $o = \mathit{write}(b)$, $\mathit{apply}(\mathit{lin},\mathit{vis}^{-1}(o)) = (b,V_b)$, where $\forall r' \neq r, V_b[r'] = \vert \{ o' \vert o' \in \mathit{vis}^{-1}(o), o'$ is of replica $r' \} \vert$, and $V_b[r] = \vert \{ o' \vert o' \in \mathit{vis}^{-1}(o), o'$ is of replica $r' \} \vert + 1$.

    It is not hard to prove that the order of merging virtual message is not important, and a virtual message can be applied multiple times. By Lemma \ref{lemma:the view of a replica of one operation is contained in another operaiton, or vice versa}, we can see that $\mathit{apply}(\mathit{lin},\mathit{vd}(h,\mathit{del},r))$ is obtained by merging $\{ o \in O \vert \mathit{apply}(\mathit{lin},\mathit{vis}^{-1}(o)) \}$. Therefore, we can see that $\mathit{apply}(\mathit{lin}',h'.\mathit{vis}^{-1}(i)) = \mathit{apply}(\mathit{lin}',\mathit{vd}(h',\mathit{del}',r))$ is obtained by merging $\{ o \in O \vert \mathit{apply}(\mathit{lin},\mathit{vis}^{-1}(o)) \} \cup \{ \mathit{ds}(i) \}$. By Lemma \ref{lemma:the view of a replica of one operation is contained in another operaiton, or vice versa}, it is not hard to see that $\mathit{apply}(\mathit{lin}',h'.\mathit{vis}^{-1}(i)) = \mathit{ds}(i)$.

    Then, we need to prove that $(a,V') = \mathit{ds}(i)$. This holds since $R(r) = \mathit{apply}(\mathit{lin},\mathit{vd}(h,\mathit{del},r))$ is obtained by merging $\{ o \in O \vert \mathit{apply}(\mathit{lin},\mathit{vis}^{-1}(o)) \}$, Lemma \ref{lemma:the view of a replica of one operation is contained in another operaiton, or vice versa}, and the value of $V'$.

\item[-] If $(R,T) {\xrightarrow{\mathit{do}(\mathit{read},S,r)}} (R',T')$: Then,

    \begin{itemize}
    \setlength{\itemsep}{0.5pt}
    \item[-] It is obvious that $R' = R$ and $T' = T$.

    \item[-] Let $h' = h \otimes i$, where $i$ is the identifier of the newly-generated $\mathit{read}$ action.

    \item[-] Let $\mathit{lin}' = \mathit{lin} \cdot (\mathit{read}() \Rightarrow S,i,\mathit{vd}(h,\mathit{del},r) )$.

    \item[-] Let $\mathit{del}' = \mathit{del}$ and $\mathit{map}' = \mathit{map}$.
    \end{itemize}

    It is easy to see that all other properties hold, except for $h'$ being distributed linearizable w.r.t $\mathit{spec}$ with $\mathit{lin}'$ the linearization. Let us prove that $h'$ is distributed linearizable w.r.t $\mathit{spec}$ and $\mathit{lin}'$ is a linearization. It is easy to see that only operation $i$ need to be checked.

    It is easy to see that $\mathit{vd}(h,\mathit{del},r) = h'.\mathit{vis}^{-1}(i)$. Similarly as the case of $(R,T) {\xrightarrow{\mathit{do}(\mathit{write},a,r)}} (R',T')$, we can prove that there exists a set $O$, such that $R(r) = \mathit{apply}(\mathit{lin},\mathit{vd}(h,\mathit{del},r))$ is obtained by merging $\{ o \in O \vert \mathit{apply}(\mathit{lin},\mathit{vis}^{-1}(o)) \}$.

    By the definition of merging, it is same to assume that $O = \mathit{max}_{\mathit{vis}} \mathit{vd}(h,\mathit{del},r)$. Assume that for each operation $o = \mathit{write}(a) \in O$, $\mathit{apply}(\mathit{lin},\mathit{vis}^{-1}(o))) = (a,V_o)$. Then it is not hard to see that $R(r) = \{ (a,V_o) \vert o = \mathit{write}(a) \in O \}$. Therefore, $S = \{ a \vert o = \mathit{write}(a) \in \mathit{vis}^{-1}(i), \forall o' = \mathit{write}(\_) \in \mathit{vis}^{-1}(i), (o,o') \notin \mathit{vis} \}$. According to sequential specification $\mathit{spec}$, $(\mathit{read} \Rightarrow S,i,\mathit{obj})$ of $\mathit{lin}'$ is ``correct''. Then, $h'$ is distributed linearizable w.r.t $\mathit{spec}$ and $\mathit{lin}'$ is a linearization.

\item[-] If $(R,T) {\xrightarrow{\mathit{send}(\mathit{mid},r)}} (R',T')$: Then,

    \begin{itemize}
    \setlength{\itemsep}{0.5pt}
    \item[-] It is obvious that $R' = R$. Let $T' = T \cup \{ (\mathit{mid},R(r),r) \}$.

    \item[-] Let $h' = h$.

    \item[-] Let $\mathit{lin}' = \mathit{lin}$.

    \item[-] Let $\mathit{del}' = \mathit{del}$.

    \item[-] Let $\mathit{map}' = \mathit{map} \cup \{ (\mathit{mid},\mathit{vd}(h,\mathit{del},r)) \}$.
    \end{itemize}

    It is easy to see that all other properties hold, except for checking $C_1$ for $\mathit{mid}$. This holds obviously since the message content of message $\mathit{mid}$ is $R(r)$, and we already know that $R(r) = \mathit{apply}(\mathit{lin},\mathit{vd}(h,\mathit{del},r)) = \mathit{apply}(\mathit{lin},\mathit{map}(\mathit{mid}))$.

\item[-] If $(R,T) {\xrightarrow{\mathit{receive}(\mathit{mid},r)}} (R',T')$: Then,

    \begin{itemize}
    \setlength{\itemsep}{0.5pt}
    \item[-] Let $R' = R[ r: \mathit{merge}(R(r),\mathit{msg})]$ where $(\mathit{mid},\mathit{msg},\_) \in T$. It is obvious that $T' = T$.

    \item[-] Let $h' = h$.

    \item[-] Let $\mathit{lin}' = \mathit{lin}$.

    \item[-] Let $\mathit{del}' = \mathit{del} \cup \{ (i,r) \vert i \in \mathit{map}(\mathit{mid}) \}$.

    \item[-] Let $\mathit{map}' = \mathit{map}$.
    \end{itemize}

    It is easy to see that all other properties hold, except for $C_2$ for replica $r$. Therefore, let us prove that $R'(r) = \mathit{apply}(\mathit{lin}',\mathit{vd}(h',\mathit{del}',r))$.

    We already know that $R'(r) = \mathit{merge}(R(r), \mathit{msg})$, $R(r) = \mathit{apply}(\mathit{lin},\mathit{vd}(h,\mathit{del},r))$ and $\mathit{msg} = \mathit{apply}(\mathit{lin},\mathit{map}(\mathit{mid}))$. It is easy to see that $\mathit{vd}(h',\mathit{del}',r) = \mathit{vd}(h,\mathit{del},r) \cup \mathit{map}(\mathit{mid})$. It is easy to prove that, applying messages in any order lead to the same consequence. Therefore, we have $\mathit{merge}(R(r), \mathit{msg}) = \mathit{apply}(\mathit{lin}',\mathit{vd}(h,\mathit{del},r) \cup \mathit{map}(\mathit{mid}))$. Then, we have $R'(r) = \mathit{apply}(\mathit{lin}',\mathit{vd}(h',\mathit{del}',r))$.
\end{itemize}

This completes the proof of this lemma. $\qed$
\end {proof}
}

\forget
{
\subsection{The WOOT Algorithm}
\label{subsec:the woot algorithm}

The WOOT algorithm of \ref{the paper of WOOT} is given in Listing~\ref{lst:woot algorithm}. Note that here $integrateIns$ is a recursive method used by $addBetween$ method.

In local of each replica, WOOT algorithm stores the list as a sequence of W-characters. A W-character $w$ is a five-tuple $<id,v,flag,id_p,id_n>$, where $id$ is the identifier of $w$; $v$ is the value of $w$; $flag \in \{ \mathit{true},\mathit{false} \}$ is the flag of $w$ and indicates whether $w$ is ``visible'' in list; $id_p$ and $id_n$ is the identifier of the previous and next W-character of $w$, respectively. The previous and the next W-characters of $w$ are the W-characters between which $w$ has been inserted on its generation state. Given $w = (id,v,flag,id_p,id_n)$, let $C_P(w) = id_p$ and $C_N(w) = id_n$ denote the previous and next W-character of $w$, respectively. A identifier $id$ of W-character is a tuple $(ctr,\arep)$, where $ctr \in \mathbb{N}$.

A W-string is an ordered sequence of W-characters $w_b \cdot w_1 \cdot \ldots \cdot w_n \cdot w_e$, where $w_b$ and $w_e$ are special W-characters that mark the beginning and the ending of the sequence. The values of $w_b$ and $w_e$ are $\circ_b$ and $\circ_e$, respectively. We define the following function for a W-string $str$:

\begin{itemize}
\setlength{\itemsep}{0.5pt}
\item[-] $\vert str \vert$ returns the length of $str$,

\item[-] $str[p]$ returns the W-character at position $p$ in $str$. Her we assume that the first element of $str$ is at position 0.

\item[-] $pos(str,w)$ returns the position of W-character $w$ in $S$.

\item[-] $insert(str,w,p)$ inserts W-character $w$ into $str$ at position $p$.

\item[-] $subseq(str,w_1,w_2)$ returns the part of $str$ between the W-characters $w_1$ and $w_2$ (excluding $w_1$ and $w_2$).

\item[-] $contains(str,a)$ returns true if there exists a W-character in $str$ with value $a$.

\item[-] $values(str)$ returns the sequence of visible (with $\mathit{true}$ flag) values of $str$.

\item[-] $getWchar(str,a)$ returns the W-character with value $a$ in $str$.

\item[-] $changeFlag(str,pos,f)$ changes the flag of $str[pos]$ into $f$.
\end{itemize}

A total order $<_{id}$ is given for identifiers of W-characters for conflict resolution. Given two identifiers $(ctr_1,\arep_1)$ and $(ctr_2,\arep_2)$, we have $(ctr_1,\arep_1) <{id} (ctr_2,\arep_2)$, if $\arep_1 < \arep_2 \vee (\arep_1 = \arep_2 \wedge ctr_1 < ctr_2)$. Given a sequence $str$ and two elements $a,b$ of $str$, we write $a <_{str} b$ to indicate that $pos(str,a) < pos(str,b)$.

\begin{minipage}[t]{1.0\linewidth}
\begin{lstlisting}[frame=top,caption={Pseudo-code of WOOT algorithm},
captionpos=b,label={lst:woot algorithm}]
  payload int @|$H_s$|@, W-string @|$string_s$|@
  initial @|$H_s$|@ = 0, @|$string_s$|@ = @|$\epsilon$|@
  initial seq = @|$\epsilon$|@

  addBetween(a,b,c) :
    generator :
      precondition :  @|$contains(string_s,b) \wedge contains(string_s,c) \wedge pos(string_s,c) - pos(string_s,b) = 1\wedge \neg contains(string_s,a)$|@
      let g = myRep()
      let @|$c_p$|@ = @|$getWchar(string_s,b)$|@
      let @|$c_n$|@ = @|$getWchar(string_s,c)$|@
      @|$H_s$|@ = @|$H_s$|@ + 1
      //@ let seq' = seq@|$\,\cdot\,\alabellongind[addBetween]{a,b,c}{}{}$|@
    effector((w,@|$c_p$|@,@|$c_n$|@)) : with @|$w = ((H_s,g),a,\mathit{true},c_p.id,c_n.id)$|@
      integrateIns(@|$w,c_p,c_n$|@)

  remove(a) :
    generator :
      precondition : @|$contains(string_s,a)$|@
      let w = @|$getWchar(string_s,a)$|@
      //@ let seq' = seq@|$\,\cdot\,\alabellongind[remove]{a}{}{}$|@
    effector(w) :
      let p = @|$pos(string_s,w)$|@
      @|$changeFlag(string_s,p,\mathit{false})$|@

  read() :
    let s = @|$values(string_s)$|@
    //@ let seq' = seq@|$\,\cdot\,\alabellongind[read]{}{s}{}$|@
    return s

  integrateIns(@|$c,c_p,c_n$|@)
    let @|$S$|@ = @|$string_s$|@
    let @|$S'$|@ = @|$subseq(S,c_p,c_n)$|@
    if @|$S' = \epsilon$|@
      then  @|$insert(S,c,pos(S,c_n))$|@
    else
      Let L = @|$c_p \cdot d_0 \cdot \ldots \cdot d_m \cdot c_n$|@, where @|$d_0, \ldots, d_m$|@ are the W-characters in @|$S'$|@
        such that for each @|$d_i$|@, @|$C_P(d_i) <_S c_p$|@ and @|$c_n <_S C_N(d_i)$|@
      Let i = 1
      while (@|$i < \vert L \vert -1 \wedge L[i] <_{id} c$|@) do
        i = i+1
      integrateIns(@|$c,L[i-1],L[i]$|@)
\end{lstlisting}
\end{minipage}

The payload of each replica is a integer value $H_s$ used to generate identifier, and a W-string $string_s$.

To do $addBetween(a,b,c)$, we first ensure that $b$ and $c$ are adjacent in $string_s$ and $a$ is not in $string_s$. Then, we generate a W-character $w$ for value $a$, and calls method $integrateIns(w,c_p,c_n)$ to put $w$ between $c_p$ and $c_n$, which are the W-characters of $b$ and $c$ in $string_s$, respectively.

$integrateIns(c,c_p,c_n)$ is a recursive method and works as follows: If there are no W-character between $c_p$ and $c_n$ (for example, in the current replica), then $w$ is put after $c_p$. Else, WOOT select a set $L$ of W-characters, such that each W-character of $L$ has a ranger ``wider than the range between $c_p$ and $c_n$''. The W-characters in $L$ are the W-characters that needs to be considered when the range is between $c_p$ and $c_n$. It can be proved that W-characters in $L$ are sorted by the $<_{id}$ order. Then, we choose the position of $c$ to be between $L[i-1]$ and $L[i]$. We can see that the range between $L[i-1]$ and $L[i]$ is strictly shorter than the range between $c_p$ and $c_n$. Since there may be W-characters in the range between $L[i-1]$ and $L[i]$ in $string_s$, we make a recursive call to $integrateIns(c,L[i-1],L[i])$ to compute the position of $c$ in the range between $L[i-1]$ and $L[i]$.

To do $remove(a)$, we just set the flag of W-character of $a$ in $string_s$ to be $\mathit{false}$. To do $read()$, we return $values(string_s)$.
}

\forget
{
\section{Proofs of \sectionautorefname \ref{sec:compositionality of distributed linearizability}}
\label{sec:appendix proofs of section compositionality of distributed linearizability}

\subsection{Proofs of Lemma \ref{lemma:several t0-specifications}}
\label{subsec:appendix proofs of Lemma several t0-specifications}

A specification $\mathit{spec}$ is called t0-specification, if given a history $h$ that is distributed linearizable w.r.t $\mathit{spec}$, then any sequence that is consistent with visibility relation is a linearization of $h$.

Given two sequences $l_1,l_2$, let $\mathit{diff}(l_1,l_2) = \{ (o_1,o_2) \vert$ the order of $o_1$ and $o_2$ in $l_1$ is different from that of $l_2 \}$. Given a sequence $l$ and two elements $o_1$ an $o_2$ of $l$, let $\mathit{swap}(l,o_1,o_2)$ be a sequence obtained from $l$ by swapping $o_1$ and $o_2$.

The following lemma states that $\mathit{OR}$-$\mathit{set}_s$ is a t0-specification.

\begin{lemma}
\label{lemma:or-set is a t0-specification}
$\mathit{OR}$-$\mathit{set}_s$ is a t0-specification.
\end{lemma}

\begin {proof}
Given a distributed linearizable history $h$ and assume that $\mathit{lin}$ is a linearization. It is obvious that $\mathit{lin}$ is consistent with visibility relation. We need to prove that, each such sequence $\mathit{lin}'$ described below is also a linearization of $h$

\begin{itemize}
\setlength{\itemsep}{0.5pt}
\item[-] $\mathit{lin}'$ contains the same set of elements as that of $\mathit{lin}$.

\item[-] $\mathit{lin}'$ is consistent with visibility relation.
\end{itemize}

We prove this by showing that each such $\mathit{lin}'$ can be obtained from $\mathit{lin}$ by several times of swapping a pair of adjacent elements. Our proof requires the following two properties:

\begin{itemize}
\setlength{\itemsep}{0.5pt}
\item[-] The first property is: Given a linarization $\mathit{lin}$ and a sequence $\mathit{lin}'$ consistent with visibility relation of $h$, if $\mathit{diff}(\mathit{lin},\mathit{lin}') \neq \emptyset$, there exists $(o_1,o_2) \in \mathit{diff}(\mathit{lin},\mathit{lin}')$, such that $o_1$ and $o_2$ are concurrent, and $o_1$ and $o_2$ are adjacent in $\mathit{lin}$.

    We prove this by contradiction. Assume $\mathit{diff}(\mathit{lin},\mathit{lin}') \neq \emptyset$, and for each $(o_1,o_2) \in \mathit{diff}(\mathit{lin},\mathit{lin}')$, we have that either $o_1$ and $o_2$ are not concurrent, or $o_1$ and $o_2$ are not adjacent in $\mathit{lin}$.

    Since $\mathit{diff}(\mathit{lin},\mathit{lin}') \neq \emptyset$, let $(o,o')$ be a element of $\mathit{diff}(\mathit{lin},\mathit{lin}')$, and the distance of $o_1$ and $o_2$ is minimal in $\{$ the distance between $o_1$ and $o_2 \vert (o_1,o_2) \in \mathit{diff}(\mathit{lin},\mathit{lin}') \}$. Let us prove that $o$ and $o'$ are adjacent by contradiction: If there exists $o''$ between $o$ and $o'$. Assume that in $\mathit{lin}$, $o$ is before $o''$, and $o''$ is before $o'$. By assumption, the order between $o$ and $o''$, and between $o''$ and $o'$ is the same in $\mathit{lin}$ and in $\mathit{lin}'$. This implies that $o$ is still before $o'$ in $\mathit{lin}'$, which contradicts the fact that $(o,o') \in \mathit{diff}(\mathit{lin},\mathit{lin}')$.

    Since $o$ and $o'$ are adjacent and $(o,o') \in \mathit{diff}(\mathit{lin},\mathit{lin}')$, by assumption we know that $o$ and $o'$ are not concurrent. Or we can say, $(o,o') \in \mathit{vis} \vee \mathit{o',o} \in \mathit{vis}$. This contradicts that both $\mathit{lin}$ and $\mathit{lin}'$ are consistent with visibility relation. This completes the proof of the first step.

\item[-] The second property is: Given a linearization $\mathit{lin}$ and $o_1,o_2 \in \mathit{lin}$, such that $o_1$ and $o_2$ are concurrent and adjacent in $\mathit{lin}$, then, $l = \mathit{swap}(\mathit{lin},o_1,o_2)$ is also a linearization.

    Let $o_1 = (\ell_1,\mathit{id}_1,S_1)$ and $o_2 = (\ell_2,\mathit{id}_2,S_2)$. Since $o_1$ and $o_2$ are concurrent, we know that $\mathit{id}_1 \notin S_2 \wedge \mathit{id}_2 \notin S_1$. Assume $\mathit{lin} = l_1 \cdot o_1 \cdot o_2 \cdot l_2$. Assume in the abstract state of $\mathit{OR}$-$\mathit{set}_s$, we have $\sigma_0 {\xrightarrow{l_1}} \sigma_1 {\xrightarrow{o_1}} \sigma_2 {\xrightarrow{o_2}} \sigma_3 {\xrightarrow{l_2}} \sigma_4$, where $\sigma_0$ is the initial state of $\mathit{OR}$-$\mathit{set}_s$. Then, we need to prove that, there exists $\sigma'_2$, such that $\sigma_1 {\xrightarrow{o_2}} \sigma'_2 {\xrightarrow{o_1}} \sigma_3$. We prove this by consider all the possible cases:

    \begin{itemize}
    \setlength{\itemsep}{0.5pt}
    \item[-] If $o_1 = (\mathit{add}(a_1),\mathit{id}_1,S_1)$ and $o_2 = (\mathit{add}(a_2),\mathit{id}_2,S_2)$: We can see that $\sigma_2$ is obtained from $\sigma_1$ by inserting $(a_1,\mathit{id}_1,\mathit{true})$, and $\sigma_3$ is obtained from $\sigma_2$ by inserting $(a_2,\mathit{id}_2,\mathit{true})$. Let $\sigma'_2$ be obtained from $\sigma_1$ by inserting $(a_2,\mathit{id}_2,\mathit{true})$. Then, it is easy to see that $\sigma_1 {\xrightarrow{o_2}} \sigma'_2 {\xrightarrow{o_1}} \sigma_3$.

    \item[-] If $o_1 = (\mathit{add}(a_1),\mathit{id}_1,S_1)$ and $o_2 = (\mathit{rem}(a_2),\mathit{id}_2,S_2)$: We can see that $\sigma_2$ is obtained from $\sigma_1$ by inserting $(a_1,\mathit{id}_1,\mathit{true})$, and $\sigma_3$ is obtained from $\sigma_2$ by marking $a_2$ with identifiers of $S_2$ into $\mathit{false}$. Let $\sigma'_2$ be obtained from $\sigma_1$ by marking $a_2$ with identifiers of $S_2$ into $\mathit{false}$. Since $\mathit{id_1} \notin S_2$, we can see that $\sigma_1 {\xrightarrow{o_2}} \sigma'_2 {\xrightarrow{o_1}} \sigma_3$.

    \item[-] If $o_1 = (\mathit{add}(a_1),\mathit{id}_1,S_1)$ and $o_2 = (\mathit{read}() \Rightarrow l_2,\mathit{id}_2,S_2)$: Let $\sigma'_2 = \sigma_1$. Since $\mathit{id}_1 \notin S_2$, it is easy to see that $\sigma_1 {\xrightarrow{o_2}} \sigma'_2 {\xrightarrow{o_1}} \sigma_3$.

    \item[-] If $o_1 = (\mathit{rem}(a_1),\mathit{id}_1,S_1)$ and $o_2 = (\mathit{add}(a_2),\mathit{id}_2,S_2)$: We can see that $\sigma_2$ is obtained from $\sigma_1$ by marking $a_1$ with identifiers of $S_1$ into $\mathit{false}$, and $\sigma_3$ is obtained from $\sigma_2$ by inserting $(a_2,\mathit{id}_2,\mathit{true})$. Let $\sigma'_2$ be obtained from $\sigma_1$ by inserting $(a_2,\mathit{id}_2,\mathit{true})$. Since $\mathit{id}_2 \notin S_1$, we can see that $\sigma_1 {\xrightarrow{o_2}} \sigma'_2 {\xrightarrow{o_1}} \sigma_3$.

    \item[-] If $o_1 = (\mathit{rem}(a_1),\mathit{id}_1,S_1)$ and $o_2 = (\mathit{rem}(a_2),\mathit{id}_2,S_2)$: We can see that $\sigma_2$ is obtained from $\sigma_1$ by marking $a_1$ with identifiers of $S_1$ into $\mathit{false}$, and $\sigma_3$ is obtained from $\sigma_2$ by marking $a_2$ with identifiers of $S_2$ into $\mathit{false}$. Let $\sigma'_2$ be obtained from $\sigma_1$ by marking $a_2$ with identifiers of $S_2$ into $\mathit{false}$. Then, it is easy to see that $\sigma_1 {\xrightarrow{o_2}} \sigma'_2 {\xrightarrow{o_1}} \sigma_3$.

    \item[-] If $o_1 = (\mathit{rem}(a_1),\mathit{id}_1,S_1)$ and $o_2 = (\mathit{read}() \Rightarrow l_2,\mathit{id}_2,S_2)$: Let $\sigma'_2 = \sigma_1$. Since $\mathit{id}_1 \notin S_2$, it is easy to see that $\sigma_1 {\xrightarrow{o_2}} \sigma'_2 {\xrightarrow{o_1}} \sigma_3$.

    \item[-] If $o_1 = (\mathit{read}() \Rightarrow l_1,\mathit{id}_1,S_1)$ and $o_2 = (\mathit{add}(a_1),\mathit{id}_2,S_2)$: Let $\sigma'_2$ be obtained from $\sigma_1$ by inserting $(a_1,\mathit{id}_1,\mathit{true})$. Since $\mathit{id}_2 \notin S_1$, it is easy to see that $\sigma_1 {\xrightarrow{o_2}} \sigma'_2 {\xrightarrow{o_1}} \sigma_3$.

    \item[-] If $o_1 = (\mathit{read}() \Rightarrow l_1,\mathit{id}_1,S_1)$ and $o_2 = (\mathit{rem}(a_1),\mathit{id}_2,S_2)$: Let $\sigma'_2$ be obtained from $\sigma_1$ by marking $a_2$ with identifiers of $S_2$ into $\mathit{false}$. Since $\mathit{id}_2 \notin S_1$, it is easy to see that $\sigma_1 {\xrightarrow{o_2}} \sigma'_2 {\xrightarrow{o_1}} \sigma_3$.

    \item[-] If $o_1 = (\mathit{read}() \Rightarrow l_1,\mathit{id}_1,S_1)$ and $o_2 = (\mathit{read}() \Rightarrow l_2,\mathit{id}_2,S_2)$: Let $\sigma'_2 = \sigma_1$. Then, it is easy to see that $\sigma_1 {\xrightarrow{o_2}} \sigma'_2 {\xrightarrow{o_1}} \sigma_3$.
    \end{itemize}
\end{itemize}

Based on these two steps, given a linearization $\mathit{lin}$ and a sequence $\mathit{lin}' \neq \mathit{lin}$ which is consistent with visibility relation: We have $\mathit{diff}(\mathit{lin},\mathit{lin}') \neq \emptyset$. Based on the first above property, there exists $(o_1,o_2) \in \mathit{diff}(\mathit{lin},\mathit{lin}')$, such that $o_1$ and $o_2$ are concurrent, and $o_1$ and $o_2$ are adjacent in $\mathit{lin}$. Based on the second above property, $\mathit{lin}'' = \mathit{swap}(\mathit{lin},o_1,o_2)$ is also a linearization. Moreover, it is easy to see that $\mathit{diff}(\mathit{lin},\mathit{lin}') > \mathit{diff}(\mathit{lin}'',\mathit{lin}')$. Therefore, by several times of above process, we finally obtain $\mathit{lin}'$ from $\mathit{lin}$ by swapping pairs of operations, and prove that $\mathit{lin}'$ is also a linearization. This completes the proof of this lemma. $\qed$
\end {proof}

The following lemma states that $\mathit{set}_s$ is a t0-specification.

\begin{lemma}
\label{lemma:set is a t0-specification}
$\mathit{set}_s$ is a t0-specification.
\end{lemma}

\begin {proof}

We prove this lemma similarly as that of Lemma \ref{lemma:or-set is a t0-specification}. We need to prove that, given a linearization $\mathit{lin}$ and $o_1,o_2 \in \mathit{lin}$, such that $o_1$ and $o_2$ are concurrent and adjacent in $\mathit{lin}$, then, $l = \mathit{swap}(\mathit{lin},o_1,o_2)$ is also a linearization.

Let $o_1 = (\ell_1,\mathit{id}_1,S_1)$ and $o_2 = (\ell_2,\mathit{id}_2,S_2)$. Since $o_1$ and $o_2$ are concurrent, we know that $\mathit{id}_1 \notin S_2 \wedge \mathit{id}_2 \notin S_1$. Assume $\mathit{lin} = l_1 \cdot o_1 \cdot o_2 \cdot l_2$. Assume in the abstract state of $\mathit{set}_s$, we have $\sigma_0 {\xrightarrow{l_1}} \sigma_1 {\xrightarrow{o_1}} \sigma_2 {\xrightarrow{o_2}} \sigma_3 {\xrightarrow{l_2}} \sigma_4$, where $\sigma_0$ is the initial state of $\mathit{set}_s$. Then, we need to prove that, there exists $\sigma'_2$, such that $\sigma_1 {\xrightarrow{o_2}} \sigma'_2 {\xrightarrow{o_1}} \sigma_3$. We prove this by consider all the possible cases:

\begin{itemize}
\setlength{\itemsep}{0.5pt}
\item[-] If $o_1 = (\mathit{add}(a_1),\mathit{id}_1,S_1)$ and $o_2 = (\mathit{add}(a_2),\mathit{id}_2,S_2)$: We can see that, if $(a_1,\_) \in \sigma_1$, then $\sigma_2 = \sigma_1$; else, $\sigma_2$ is obtained from $\sigma_1$ by inserting $(a_1,\mathit{true})$. We can also see that, if $(a_2,\_) \in \sigma_2$, then $\sigma_3 = \sigma_2$; else, $\sigma_3$ is obtained from $\sigma_2$ by inserting $(a_2,\mathit{true})$. Let $\sigma'_2$ be: if $(a_2,\_) \in \sigma_1$, then $\sigma'_2 = \sigma_1$; else, $\sigma'_2$ is obtained from $\sigma_1$ by inserting $(a_2,\mathit{true})$. Then, it is easy to see that $\sigma_1 {\xrightarrow{o_2}} \sigma'_2 {\xrightarrow{o_1}} \sigma_3$.

\item[-] If $o_1 = (\mathit{add}(a_1),\mathit{id}_1,S_1)$ and $o_2 = (\mathit{rem}(a_2),\mathit{id}_2,S_2)$: Let $\sigma'_2$ be: if $(a_2,\mathit{false}) \in \sigma_1$, then $\sigma'_2 = \sigma_1$; else, $\sigma'_2$ is obtained from $\sigma_1$ by marking $a_2$ into $\mathit{false}$. Since $\mathit{vis}^{-1}(o_2) \cdot o_2 \in \mathit{set}_s$, we know that $(a_2,\_) \in \sigma_1$. Then, it is easy to see that $\sigma_1 {\xrightarrow{o_2}} \sigma'_2 {\xrightarrow{o_1}} \sigma_3$.

\item[-] If $o_1 = (\mathit{add}(a_1),\mathit{id}_1,S_1)$ and $o_2 = (\mathit{read}() \Rightarrow l_2,\mathit{id}_2,S_2)$: Let $\sigma'_2 = \sigma_1$. Since $\mathit{id}_1 \notin S_2$, it is easy to see that $\sigma_1 {\xrightarrow{o_2}} \sigma'_2 {\xrightarrow{o_1}} \sigma_3$.

\item[-] If $o_1 = (\mathit{rem}(a_1),\mathit{id}_1,S_1)$ and $o_2 = (\mathit{add}(a_2),\mathit{id}_2,S_2)$: Let $\sigma'_2$ be: if $(a_2,\_) \in \sigma_1$, then $\sigma'_2 = \sigma_1$; else, $\sigma'_2$ is obtained from $\sigma_1$ by inserting $(a_2,\mathit{true})$. Since $\mathit{vis}^{-1}(o_1) \cdot o_1 \in \mathit{set}_s$, we know that $(a_1,\_) \in \sigma_1$. Then, it is easy to see that $\sigma_1 {\xrightarrow{o_2}} \sigma'_2 {\xrightarrow{o_1}} \sigma_3$.

\item[-] If $o_1 = (\mathit{rem}(a_1),\mathit{id}_1,S_1)$ and $o_2 = (\mathit{rem}(a_2),\mathit{id}_2,S_2)$: Let $\sigma'_2$ be: if $(a_2,\mathit{false}) \in \sigma_1$, then $\sigma'_2 = \sigma_1$; else, $\sigma'_2$ is obtained from $\sigma_1$ by marking $a_2$ into $\mathit{false}$. Then, it is easy to see that $\sigma_1 {\xrightarrow{o_2}} \sigma'_2 {\xrightarrow{o_1}} \sigma_3$.

\item[-] If $o_1 = (\mathit{rem}(a_1),\mathit{id}_1,S_1)$ and $o_2 = (\mathit{read}() \Rightarrow l_2,\mathit{id}_2,S_2)$: Let $\sigma'_2 = \sigma_1$. Since $\mathit{id}_1 \notin S_2$, it is easy to see that $\sigma_1 {\xrightarrow{o_2}} \sigma'_2 {\xrightarrow{o_1}} \sigma_3$.

\item[-] If $o_1 = (\mathit{read}() \Rightarrow l_1,\mathit{id}_1,S_1)$ and $o_2 = (\mathit{add}(a_1),\mathit{id}_2,S_2)$: Let $\sigma'_2$ be: if $(a_2,\_) \in \sigma_1$, then $\sigma'_2 = \sigma_1$; else, $\sigma'_2$ is obtained from $\sigma_1$ by inserting $(a_2,\mathit{true})$. Since $\mathit{id}_2 \notin S_1$, it is easy to see that $\sigma_1 {\xrightarrow{o_2}} \sigma'_2 {\xrightarrow{o_1}} \sigma_3$.

\item[-] If $o_1 = (\mathit{read}() \Rightarrow l_1,\mathit{id}_1,S_1)$ and $o_2 = (\mathit{rem}(a_1),\mathit{id}_2,S_2)$: Let $\sigma'_2$ be: if $(a_2,\mathit{false}) \in \sigma_1$, then $\sigma'_2 = \sigma_1$; else, $\sigma'_2$ is obtained from $\sigma_1$ by marking $a_2$ into $\mathit{false}$. Since $\mathit{id}_2 \notin S_1$, it is easy to see that $\sigma_1 {\xrightarrow{o_2}} \sigma'_2 {\xrightarrow{o_1}} \sigma_3$.

\item[-] If $o_1 = (\mathit{read}() \Rightarrow l_1,\mathit{id}_1,S_1)$ and $o_2 = (\mathit{read}() \Rightarrow l_2,\mathit{id}_2,S_2)$: Let $\sigma'_2 = \sigma_1$. Then, it is easy to see that $\sigma_1 {\xrightarrow{o_2}} \sigma'_2 {\xrightarrow{o_1}} \sigma_3$.
\end{itemize}

This completes the proof of this lemma. $\qed$
\end {proof}

The following lemma states that $\mathit{counter}_s$ is a t0-specification.

\begin{lemma}
\label{lemma:counter is a t0-specification}
$\mathit{counter}_s$ is a t0-specification.
\end{lemma}

\begin {proof}

We prove this lemma similarly as that of Lemma \ref{lemma:or-set is a t0-specification}. We need to prove that, given a linearization $\mathit{lin}$ and $o_1,o_2 \in \mathit{lin}$, such that $o_1$ and $o_2$ are concurrent and adjacent in $\mathit{lin}$, then, $l = \mathit{swap}(\mathit{lin},o_1,o_2)$ is also a linearization.

Let $o_1 = (\ell_1,\mathit{id}_1,S_1)$ and $o_2 = (\ell_2,\mathit{id}_2,S_2)$. Since $o_1$ and $o_2$ are concurrent, we know that $\mathit{id}_1 \notin S_2 \wedge \mathit{id}_2 \notin S_1$. Assume $\mathit{lin} = l_1 \cdot o_1 \cdot o_2 \cdot l_2$. Assume in the abstract state of $\mathit{counter}_s$, we have $\sigma_0 {\xrightarrow{l_1}} \sigma_1 {\xrightarrow{o_1}} \sigma_2 {\xrightarrow{o_2}} \sigma_3 {\xrightarrow{l_2}} \sigma_4$, where $\sigma_0$ is the initial state of $\mathit{counter}_s$. Then, we need to prove that, there exists $\sigma'_2$, such that $\sigma_1 {\xrightarrow{o_2}} \sigma'_2 {\xrightarrow{o_1}} \sigma_3$. We prove this by consider all the possible cases:

\begin{itemize}
\setlength{\itemsep}{0.5pt}
\item[-] If $o_1 = (\mathit{inc},\mathit{id}_1,S_1)$ and $o_2 = (\mathit{inc},\mathit{id}_2,S_2)$: Assume that $\sigma_1 = k$, then $\sigma_2 = \mathit{k+1}$ and $\sigma_3 = \mathit{k+2}$. Let $\sigma'_2 = \mathit{k+1}$. Then, it is easy to see that $\sigma_1 {\xrightarrow{o_2}} \sigma'_2 {\xrightarrow{o_1}} \sigma_3$.

\item[-] If $o_1 = (\mathit{inc},\mathit{id}_1,S_1)$ and $o_2 = (\mathit{dec},\mathit{id}_2,S_2)$: Assume that $\sigma_1 = k$, and let $\sigma'_2 = \mathit{k-1}$. Then, it is easy to see that $\sigma_1 {\xrightarrow{o_2}} \sigma'_2 {\xrightarrow{o_1}} \sigma_3$.

\item[-] If $o_1 = (\mathit{inc},\mathit{id}_1,S_1)$ and $o_2 = (\mathit{read}() \Rightarrow k_2,\mathit{id}_2,S_2)$: Let $\sigma'_2 = \sigma_1$. Since $\mathit{id}_1 \notin S_2$, it is easy to see that $\sigma_1 {\xrightarrow{o_2}} \sigma'_2 {\xrightarrow{o_1}} \sigma_3$.

\item[-] If $o_1 = (\mathit{dec},\mathit{id}_1,S_1)$ and $o_2 = (\mathit{inc},\mathit{id}_2,S_2)$: Assume that $\sigma_1 = k$, and let $\sigma'_2 = \mathit{k+1}$. Then, it is easy to see that $\sigma_1 {\xrightarrow{o_2}} \sigma'_2 {\xrightarrow{o_1}} \sigma_3$.

\item[-] If $o_1 = (\mathit{dec},\mathit{id}_1,S_1)$ and $o_2 = (\mathit{dec},\mathit{id}_2,S_2)$: Assume that $\sigma_1 = k$, and let $\sigma'_2 = \mathit{k-1}$. Then, it is easy to see that $\sigma_1 {\xrightarrow{o_2}} \sigma'_2 {\xrightarrow{o_1}} \sigma_3$.

\item[-] If $o_1 = (\mathit{dec},\mathit{id}_1,S_1)$ and $o_2 = (\mathit{read}() \Rightarrow k_2,\mathit{id}_2,S_2)$: Let $\sigma'_2 = \sigma_1$. Since $\mathit{id}_1 \notin S_2$, it is easy to see that $\sigma_1 {\xrightarrow{o_2}} \sigma'_2 {\xrightarrow{o_1}} \sigma_3$.

\item[-] If $o_1 = (\mathit{read}() \Rightarrow k_1,\mathit{id}_1,S_1)$ and $o_2 = (\mathit{inc},\mathit{id}_2,S_2)$: Assume that $\sigma_1 = k$, and let $\sigma'_2 = \mathit{k+1}$. Since $\mathit{id}_2 \notin S_1$, it is easy to see that $\sigma_1 {\xrightarrow{o_2}} \sigma'_2 {\xrightarrow{o_1}} \sigma_3$.

\item[-] If $o_1 = (\mathit{read}() \Rightarrow k_1,\mathit{id}_1,S_1)$ and $o_2 = (\mathit{dec},\mathit{id}_2,S_2)$: Assume that $\sigma_1 = k$, and let $\sigma'_2 = \mathit{k-1}$. Since $\mathit{id}_2 \notin S_1$, it is easy to see that $\sigma_1 {\xrightarrow{o_2}} \sigma'_2 {\xrightarrow{o_1}} \sigma_3$.

\item[-] If $o_1 = (\mathit{read}() \Rightarrow k_1,\mathit{id}_1,S_1)$ and $o_2 = (\mathit{read}() \Rightarrow k_2,\mathit{id}_2,S_2)$: Let $\sigma'_2 = \sigma_1$. Then, it is easy to see that $\sigma_1 {\xrightarrow{o_2}} \sigma'_2 {\xrightarrow{o_1}} \sigma_3$.
\end{itemize}

This completes the proof of this lemma. $\qed$
\end {proof}

With Lemma \ref{lemma:or-set is a t0-specification}, Lemma \ref{lemma:set is a t0-specification} and Lemma \ref{lemma:counter is a t0-specification}, we can now prove Lemma \ref{lemma:several t0-specifications}.

\SeveralTZeroSpecifications*

\begin {proof}
This lemma holds obviously from Lemma \ref{lemma:or-set is a t0-specification}, Lemma \ref{lemma:set is a t0-specification} and Lemma \ref{lemma:counter is a t0-specification}. $\qed$
\end {proof}

\subsection{Proofs of Lemma \ref{lemma:several t1-specifications}}
\label{subsec:appendix proofs of Lemma several t1-specifications}

The following lemma states that $\mathit{list}_s^{\mathit{af}}$ is a t1-specification.

\begin{lemma}
\label{lemma:list-af is a t1-specification}
$\mathit{list}_s^{\mathit{af}}$ is a t1-specification.
\end{lemma}

\begin {proof}

Given a distributed linearizable history $h$ and a linearization $\mathit{lin}$ that is a strict time-stamp order candidate, we need to prove that, each strict time-stamp order candidate $\mathit{lin}'$ is a linearization.

We prove this by showing that each such $\mathit{lin}'$ can be obtained from $\mathit{lin}$ by several times of swapping a pair of adjacent elements. Our proof requires the following two properties:

\begin{itemize}
\setlength{\itemsep}{0.5pt}
\item[-] The first property is: Given a linarization $\mathit{lin}$ that is a strict time-stamp order candidate, and a strict time-stamp order candidate $\mathit{lin}'$. If $\mathit{diff}(\mathit{lin},\mathit{lin}') \neq \emptyset$, there exists $(o_1,o_2) \in \mathit{diff}(\mathit{lin},\mathit{lin}')$, such that $o_1$ and $o_2$ are concurrent, $o_1$ and $o_2$ are adjacent in $\mathit{lin}$, and the time-stamp of $o_1$ in $h$ equals that of $o_2$.

    We prove this by contradiction. Assume $\mathit{diff}(\mathit{lin},\mathit{lin}') \neq \emptyset$, and for each $(o_1,o_2) \in \mathit{diff}(\mathit{lin},\mathit{lin}')$, we have that either $o_1$ and $o_2$ are not concurrent, or $o_1$ and $o_2$ are not adjacent in $\mathit{lin}$, or the time-stamp of $o_1$ in $h$ is different from that of $o_2$.

    By the definition of strict time-stamp order candidate, it is easy to see that if $o_1$ and $o_2$ have different time-stamp, then their order is the same between $\mathit{lin}$ and $\mathit{lin}'$. Therefore, we know that the time-stamp of $o_1$ in $h$ equals that of $o_2$.

    Since $\mathit{diff}(\mathit{lin},\mathit{lin}') \neq \emptyset$, let $(o,o')$ be a element of $\mathit{diff}(\mathit{lin},\mathit{lin}')$, and the distance of $o_1$ and $o_2$ is minimal in $\{$ the distance between $o_1$ and $o_2 \vert (o_1,o_2) \in \mathit{diff}(\mathit{lin},\mathit{lin}') \}$. Let us prove that $o$ and $o'$ are adjacent by contradiction: If there exists $o''$ between $o$ and $o'$. Assume that in $\mathit{lin}$, $o$ is before $o''$, and $o''$ is before $o'$. By assumption, the order between $o$ and $o''$, and between $o''$ and $o'$ is the same in $\mathit{lin}$ and in $\mathit{lin}'$. This implies that $o$ is still before $o'$ in $\mathit{lin}'$, which contradicts the fact that $(o,o') \in \mathit{diff}(\mathit{lin},\mathit{lin}')$.

    Since $o$ and $o'$ are adjacent and $(o,o') \in \mathit{diff}(\mathit{lin},\mathit{lin}')$, by assumption we know that $o$ and $o'$ are not concurrent. Or we can say, $(o,o') \in \mathit{vis} \vee \mathit{o',o} \in \mathit{vis}$. This contradicts that both $\mathit{lin}$ and $\mathit{lin}'$ are consistent with visibility relation. This completes the proof of the first step.

\item[-] The second property is: Given a linearization $\mathit{lin}$ that is a strict time-stamp order candidate, and $o_1,o_2 \in \mathit{lin}$, such that $o_1$ and $o_2$ are concurrent and adjacent in $\mathit{lin}$, and $o_1$ and $o_2$ have the same time-stamp in $h$. Then, $l = \mathit{swap}(\mathit{lin},o_1,o_2)$ is also a linearization and is also a strict time-stamp order candidate. It is obvious that $l$ is still a strict time-stamp order candidate.

    Let $o_1 = (\ell_1,\mathit{id}_1,S_1)$ and $o_2 = (\ell_2,\mathit{id}_2,S_2)$. Since $o_1$ and $o_2$ are concurrent, we know that $\mathit{id}_1 \notin S_2 \wedge \mathit{id}_2 \notin S_1$. Assume $\mathit{lin} = l_1 \cdot o_1 \cdot o_2 \cdot l_2$. Assume in the abstract state of $\mathit{list}_s^{\mathit{af}}$, we have $\sigma_0 {\xrightarrow{l_1}} \sigma_1 {\xrightarrow{o_1}} \sigma_2 {\xrightarrow{o_2}} \sigma_3 {\xrightarrow{l_2}} \sigma_4$, where $\sigma_0$ is the initial state of $\mathit{OR}$-$\mathit{set}_s$. Then, we need to prove that, there exists $\sigma'_2$, such that $\sigma_1 {\xrightarrow{o_2}} \sigma'_2 {\xrightarrow{o_1}} \sigma_3$. We prove this by consider all the possible cases:

    \begin{itemize}
    \setlength{\itemsep}{0.5pt}
    \item[-] If $o_1 = (\mathit{add}(a_1,b_1),\mathit{id}_1,S_1)$ and $o_2 = (\_,\mathit{id}_2,S_2)$: This case is impossible. We can see that the time-stamp of $a$ is larger than operations in $S_1$, and thus, the time-stamp of $o_1$ is the time-stamp of $a$. Since $\mathit{id}_1 \notin S_2$, we know that the time-stamp of $o_2$ is different from that of $o_1$, contradicts the assumption that $o_1$ and $o_2$ have same time-stamp.

    \item[-] If $o_1 = (\_,\mathit{id}_1,S_1)$ and $o_2 = (\mathit{add}(a_2,b_2),\mathit{id}_2,S_2)$: Similarly, we can prove that this case is impossible.

    \item[-] If $o_1 = (\mathit{rem}(a_1),\mathit{id}_1,S_1)$ and $o_2 = (\mathit{rem}(a_2),\mathit{id}_2,S_2)$: Let $\sigma'_2$ be obtained from $\sigma_1$ by marking $a_2$ into $\mathit{false}$. Then, it is easy to see that $\sigma_1 {\xrightarrow{o_2}} \sigma'_2 {\xrightarrow{o_1}} \sigma_3$.

    \item[-] If $o_1 = (\mathit{rem}(a_1),\mathit{id}_1,S_1)$ and $o_2 = (\mathit{read}() \Rightarrow \mathit{list}_1,\mathit{id}_2,S_2)$: Let $\sigma'_2 = \sigma_1$. Since $\mathit{id}_1 \notin S_2$, it is easy to see that $\sigma_1 {\xrightarrow{o_2}} \sigma'_2 {\xrightarrow{o_1}} \sigma_3$.

    \item[-] If $o_1 = (\mathit{read}() \Rightarrow \mathit{list}_1,\mathit{id}_1,S_1)$ and $o_2 = (\mathit{read}() \Rightarrow \mathit{list}_2,\mathit{id}_2,S_2)$: Let $\sigma'_2 = \sigma_1$. Then, it is easy to see that $\sigma_1 {\xrightarrow{o_2}} \sigma'_2 {\xrightarrow{o_1}} \sigma_3$.
    \end{itemize}
\end{itemize}

Based on these two steps, given a linearization $\mathit{lin}$ that is a strict time-stamp order candidate, and a sequence $\mathit{lin}' \neq \mathit{lin}$ that is a strict time-stamp order candidate: We have $\mathit{diff}(\mathit{lin},\mathit{lin}') \neq \emptyset$. Based on the first above property, there exists $(o_1,o_2) \in \mathit{diff}(\mathit{lin},\mathit{lin}')$, such that $o_1$ and $o_2$ are concurrent, and $o_1$ and $o_2$ are adjacent in $\mathit{lin}$, and $o_1$ and $o_2$ have a same time-stamp. Based on the second above property, $\mathit{lin}'' = \mathit{swap}(\mathit{lin},o_1,o_2)$ is also a linearization, and is a strict time-stamp order candidate. Moreover, it is easy to see that $\mathit{diff}(\mathit{lin},\mathit{lin}') > \mathit{diff}(\mathit{lin}'',\mathit{lin}')$. Therefore, by several times of above process, we finally obtain $\mathit{lin}'$ from $\mathit{lin}$ by swapping pairs of operations, and prove that $\mathit{lin}'$ is also a linearization, and is a strict time-stamp order candidate. This completes the proof of this lemma. $\qed$
\end {proof}

The following lemma states that $\mathit{reg}_s$ is a t1-specification.

\begin{lemma}
\label{lemma:reg is a t1-specification}
$\mathit{reg}_s$ is a t1-specification.
\end{lemma}

\begin {proof}

We prove this lemma similarly as that of Lemma \ref{lemma:list-af is a t1-specification}. We need to prove that, given a linearization $\mathit{lin}$ that is a strict time-stamp order candidate, and $o_1,o_2 \in \mathit{lin}$, such that $o_1$ and $o_2$ are concurrent and adjacent in $\mathit{lin}$, and $o_1$ and $o_2$ have the same time-stamp in $h$. Then, $l = \mathit{swap}(\mathit{lin},o_1,o_2)$ is also a linearization and is also a strict time-stamp order candidate. It is obvious that $l$ is still a strict time-stamp order candidate.

Let $o_1 = (\ell_1,\mathit{id}_1,S_1)$ and $o_2 = (\ell_2,\mathit{id}_2,S_2)$. Since $o_1$ and $o_2$ are concurrent, we know that $\mathit{id}_1 \notin S_2 \wedge \mathit{id}_2 \notin S_1$. Assume $\mathit{lin} = l_1 \cdot o_1 \cdot o_2 \cdot l_2$. Assume in the abstract state of $\mathit{reg}_s$, we have $\sigma_0 {\xrightarrow{l_1}} \sigma_1 {\xrightarrow{o_1}} \sigma_2 {\xrightarrow{o_2}} \sigma_3 {\xrightarrow{l_2}} \sigma_4$, where $\sigma_0$ is the initial state of $\mathit{OR}$-$\mathit{set}_s$. Then, we need to prove that, there exists $\sigma'_2$, such that $\sigma_1 {\xrightarrow{o_2}} \sigma'_2 {\xrightarrow{o_1}} \sigma_3$. We prove this by consider all the possible cases:

\begin{itemize}
\setlength{\itemsep}{0.5pt}
\item[-] If $o_1 = (\mathit{write}(a_1),\mathit{id}_1,S_1)$ and $o_2 = (\_,\mathit{id}_2,S_2)$: This case is impossible. We can see that the time-stamp of $a$ is larger than operations in $S_1$, and thus, the time-stamp of $o_1$ is the time-stamp of $a$. Since $\mathit{id}_1 \notin S_2$, we know that the time-stamp of $o_2$ is different from that of $o_1$, contradicts the assumption that $o_1$ and $o_2$ have same time-stamp.

\item[-] If $o_1 = (\_,\mathit{id}_1,S_1)$ and $o_2 = (\mathit{write}(a_2),\mathit{id}_2,S_2)$: Similarly, we can prove that this case is impossible.

\item[-] If $o_1 = (\mathit{read}() \Rightarrow a_1,\mathit{id}_1,S_1)$ and $o_2 = (\mathit{read}() \Rightarrow a_2,\mathit{id}_2,S_2)$: Let $\sigma'_2 = \sigma_1$. Then, it is easy to see that $\sigma_1 {\xrightarrow{o_2}} \sigma'_2 {\xrightarrow{o_1}} \sigma_3$.
\end{itemize}
This completes the proof of this lemma. $\qed$
\end {proof}

With Lemma \ref{lemma:list-af is a t1-specification} and Lemma \ref{lemma:reg is a t1-specification}, we can now prove Lemma \ref{lemma:several t1-specifications}.

\SeveralTOneSpecifications*

\begin {proof}
This lemma holds obviously from Lemma \ref{lemma:list-af is a t1-specification} and Lemma \ref{lemma:reg is a t1-specification}. $\qed$
\end {proof}

\subsection{Proof of Lemma \ref{lemma:several t0-specifications can be composed}}
\label{subsec:appendix proofs of lemma several t0-specifications can be composed}

\composingTZero*
\begin {proof}
Assume that $h = (\mathit{Op},\mathit{ro},\mathit{vis})$. We need to prove that, if $h \uparrow_{\mathit{obj}}$ is distributed linearizable for each object $\mathit{obj}$ of $h$, then $h$ is distributed linearizable.

We construct a linearization $\mathit{lin}$ of $h$ in a process as follows:

\begin{itemize}
\setlength{\itemsep}{0.5pt}
\item[-] Initially a set $\mathit{Op}' = \mathit{Op}$ and $\mathit{lin} = \epsilon$.

\item[-] We begin a loop as follows: In each round of the loop, we choose an operation $o$ that is minimal w.r.t $\mathit{vis}$ in $\mathit{Op}'$, let $\mathit{Op}' = \mathit{Op}' \setminus \{ o \}$, and let $\mathit{lin} = \mathit{lin} \cdot o$.
\end{itemize}

If this process terminates with $\mathit{Op}' = \emptyset$: Then it is easy to see that $\mathit{lin}$ is consistent with $\mathit{vis}$, and thus, for each object $\mathit{obj}$, it is easy to see that $\mathit{lin} \uparrow_{\mathit{obj}}$ is consistent with $\mathit{vis} \uparrow_{\mathit{obj}}$. By the definition of t0-specifications, we know that, for each object $\mathit{obj}$, $\mathit{lin} \uparrow_{\mathit{obj}}$ is a linearization of $h \uparrow_{\mathit{obj}}$. Therefore, $h$ is distributed linearizable.

Let us prove that this process terminates with $\mathit{Op}' = \emptyset$ by contradiction: Assume this process terminates with $\mathit{Op}' \neq \emptyset$, then it is easy to see that $\mathit{vis}^*$ has cycle, which contradicts the assumption that $\mathit{vis}^*$ is acyclic. Therefore, this process terminates with $\mathit{Op}' = \emptyset$. $\qed$
\end {proof}

\subsection{Proof of Lemma \ref{lemma:several t0-specifications and one t1-specification can be composed}}
\label{subsec:appendix proofs of lemma several t0-specifications and one t1-specification can be composed}

\composingTZeroAndOneTOne*
\begin {proof}
Assume that $h = (\mathit{Op},\mathit{ro},\mathit{vis})$. Let $\mathit{obj}_1$ be the only object that uses t1-specification, and let $\mathit{objs}_0$ be the set of other objects. We need to prove that, if $h \uparrow_{\mathit{obj}}$ is distributed linearizable for each object $\mathit{obj}$ of $h$, then $h$ is distributed linearizable.

We construct a linearization $\mathit{lin}$ of $h$ in a process as follows:

\begin{itemize}
\setlength{\itemsep}{0.5pt}
\item[-] Initially a set $\mathit{Op}' = \mathit{Op}$ and $\mathit{lin} = \epsilon$.

\item[-] We begin a loop as follows: in each round of the loop, we choose an operation $o$ shown below, and then let $\mathit{Op}' = \mathit{Op}' \setminus \{ o \}$, and let $\mathit{lin} = \mathit{lin} \cdot o$.

    \begin{itemize}
    \setlength{\itemsep}{0.5pt}
    \item[-] either $o$ is of an operation of $\mathit{objs}_0$ and is minimal w.r.t $\mathit{vis}$ in $\mathit{Op}'$,

    \item[-] or $o$ is of an operation of $\mathit{obj}_1$, is minimal w.r.t $\mathit{vis}$ in $\mathit{Op}'$, and has the minimal time-stamp among operations of $\mathit{obj}_1$ in $\mathit{Op}'$.
    \end{itemize}
\end{itemize}

If this process terminates with $\mathit{Op}' = \emptyset$: Then it is easy to see that $\mathit{lin}$ is consistent with $\mathit{vis}$, and thus, for each object $\mathit{obj}$, it is easy to see that $\mathit{lin} \uparrow_{\mathit{obj}}$ is consistent with $\mathit{vis} \uparrow_{\mathit{obj}}$. It is also easy to see that for operation of $\mathit{obj}_1$, $\mathit{lin}$ is consistent with time-stamp. By the definition of t0-specifications, we know that, for each object $\mathit{obj} \in \mathit{objs}$, $\mathit{lin} \uparrow_{\mathit{obj}}$ is a linearization of $h \uparrow_{\mathit{obj}}$. By the definition of t1-specifications, we know that, $\mathit{lin} \uparrow_{\mathit{obj}_1}$ is a linearization of $h \uparrow_{\mathit{obj}_1}$. Therefore, $h$ is distributed linearizable.

Let us prove that this process terminates with $\mathit{Op}' = \emptyset$ by contradiction: Assume this process terminates with $\mathit{Op}' \neq \emptyset$. Let set $S_1 = \{ o' \vert o'$ is minimal w.r.t $\mathit{vis}$ in $\mathit{Op}'$ $\}$. Then, we can see that, for each operation $o \in S_1$, $o$ is of object $\mathit{obj}_1$, and $o$ does not have minimal time-stamps among operations of $\mathit{obj}_1$ in $\mathit{Op}'$. Let $o_0$ be the operation that is of object $\mathit{obj}_1$ and has minimal time-stamp among operations of $\mathit{obj}_1$ in $\mathit{Op}'$. It is obvious that $o_0 \notin S_1$. Therefore, there exists operations $o_1,\ldots,o_k$, such that $o_1 \in S_1$, $o_1$ is of object $\mathit{obj}_1$, $(o_1,o_2),\ldots,(o_k,o_0) \in \mathit{vis}$. Since the visibility is transitive, we have that $(o_1,o_0) \in \mathit{vis}$. We already know that the time-stamp of $o_0$ is less than that of $o_1$. This contradicts the assumption that time-stamp is consistent with visiblity. Therefore, this process terminates with $\mathit{Op}' = \emptyset$. $\qed$

\end {proof}

\subsection{Proof of Lemma \ref{lemma:several t0-specifications and several t1-specification can be composed}}
\label{subsec:appendix proofs of lemma several t0-specifications and several t1-specification can be composed}

\composingTZeroAndTOne*
\begin {proof}
Assume that $h = (\mathit{Op},\mathit{ro},\mathit{vis})$. Let $\mathit{objs}_0$ be the set of objects that use t0-specifications in $h$, and let $\mathit{objs}_1$ be the set of objects that use t1-specifications in $h$. We need to prove that, if $h \uparrow_{\mathit{obj}}$ is distributed linearizable for each object $\mathit{obj}$ of $h$, then $h$ is distributed linearizable.

We construct a linearization $\mathit{lin}$ of $h$ in a process as follows:

\begin{itemize}
\setlength{\itemsep}{0.5pt}
\item[-] Initially a set $\mathit{Op}' = \mathit{Op}$ and $\mathit{lin} = \epsilon$.

\item[-] We begin a loop as follows: in each round of the loop, we choose an operation $o$ shown below, and then let $\mathit{Op}' = \mathit{Op}' \setminus \{ o \}$, and let $\mathit{lin} = \mathit{lin} \cdot o$.

    \begin{itemize}
    \setlength{\itemsep}{0.5pt}
    \item[-] either $o$ is of an operation of objects in $\mathit{objs}_0$ and is minimal w.r.t $\mathit{vis}$ in $\mathit{Op}'$,

    \item[-] or $o$ is of an operation of object $\mathit{obj}_1 \in \mathit{objs}_1$, is minimal w.r.t $\mathit{vis}$ in $\mathit{Op}'$, and has the minimal time-stamp among operations of $\mathit{obj}_1$ in $\mathit{Op}'$.
    \end{itemize}
\end{itemize}

If this process terminates with $\mathit{Op}' = \emptyset$: Then it is easy to see that $\mathit{lin}$ is consistent with $\mathit{vis}$, and thus, for each object $\mathit{obj}$, it is easy to see that $\mathit{lin} \uparrow_{\mathit{obj}}$ is consistent with $\mathit{vis} \uparrow_{\mathit{obj}}$. It is also easy to see that for each object $\mathit{ojb}_1 \in \mathit{objs}_1$, $\mathit{lin}$ is consistent with time-stamp of $\mathit{obj}_1$. By the definition of t0-specifications, we know that, for each object $\mathit{obj} \in \mathit{objs}_0$, $\mathit{lin} \uparrow_{\mathit{obj}}$ is a linearization of $h \uparrow_{\mathit{obj}}$. By the definition of t1-specifications, we know that, for each object $\mathit{obj}_1 \in \mathit{objs}_1$, $\mathit{lin} \uparrow_{\mathit{obj}_1}$ is a linearization of $h \uparrow_{\mathit{obj}_1}$. Therefore, $h$ is distributed linearizable.

Let us prove that this process terminates with $\mathit{Op}' = \emptyset$ by contradiction: Assume this process terminates with $\mathit{Op}' \neq \emptyset$. Let set $S_1 = \{ o' \vert o'$ is minimal w.r.t $\mathit{vis}$ in $\mathit{Op}'$ $\}$. Then, we can see that, for each operation $o \in S_1$, there exists a object $\mathit{obj}_1 \in \mathit{objs}_1$, such that $o$ is of $\mathit{obj}_1$, and $o$ does not have minimal time-stamps among operations of $\mathit{obj}_1$ in $\mathit{Op}'$.

Let $S_2 = \{ o \vert \exists \mathit{obj}_1 \in \mathit{objs}_1, o$ is of object $\mathit{obj}_1, o$ has minimal time-stamp among operations of $\mathit{obj}_1$ in $\mathit{Op}' \}$. It is easy to see that $\forall o \in S_2$, $o \notin S_1$.

Thus, it is easy to see that, for each operation $o' \in S_2$, there exists an operation $o \in S_1$ and operations $o'_1,\ldots,o'_k$, such that $(o,o'_1),(o'_1,o'_2),\ldots,(o'_k,o') \in \mathit{vis}$. Since the visibility relation is transitive, we have that $(o,o') \in \mathit{vis}$.

Let $S_3 = \{ (o,o') \vert o \in S_1, o' \in S_2, \exists o'_1,\ldots,o'_k, (o,o'_1),(o'_1,o'_2),\ldots,(o'_k,o') \in \mathit{vis} \}$. Let $S_4 = \{ (\mathit{obj},\mathit{obj}') \vert \exists (o,o') \in S_3$, $o$ is of object $\mathit{obj}$, $o'$ is of object $\mathit{obj}' \}$.

Let us prove that there is a cycle in $S_4$ by contradiction. Given $(\mathit{obj}_2,\mathit{obj}_1) \in S_4$, we know that there is a operation of object of $\mathit{obj}_2$ in $S_1$, and thus, there must exists a operation of object of $\mathit{obj}_2$ in $S_2$. By definition of $S_2$, it is easy to see that there exists $\mathit{obj}_3$, such that $(\mathit{obj}_3,\mathit{obj}_2) \in S_4$. Since $S_4$ has no cycle, we applying this process and finally terminate with $(\mathit{obj}_k,\mathit{obj}_{\mathit{k-1}}),\ldots,(\mathit{obj}_2,\mathit{obj}_1) \in S_4$ and could not found any $\mathit{obj}'$ to make $(\mathit{obj}',\mathit{obk}_k) \in S_4$. However, this implies that there is a operation of $\mathit{obj}_k$ that has minimal time-stamp among operations of $\mathit{obj}_k$ in $\mathit{Op}'$, and is in $S_1$. This contradicts our conclusion that $\forall o \in S_2$, $o \notin S_1$. Therefore, this is a cycle in $S_4$.

Let the cycle in $S_4$ be $(\mathit{obj}_1,\mathit{obj}_k),(\mathit{obj}_k,\mathit{obj}_{\mathit{k-1}}),\ldots,(\mathit{obj}_2,\mathit{obj}_1)$. Then, there exists operations $o^{0}_{\mathit{o1}}, o^{1}_{\mathit{o1}},\ldots, o^{0}_{\mathit{ok}}, o^{1}_{\mathit{ok}}$, such that

\begin{itemize}
\setlength{\itemsep}{0.5pt}
\item[-] $o^{0}_{\mathit{o1}}, o^{1}_{\mathit{o1}}$ is of object $\mathit{obj}_1$, $\ldots$, $o^{0}_{\mathit{ok}}, o^{1}_{\mathit{ok}}$ is of object $\mathit{obj}_k$.

\item[-] $(o^{1}_{\mathit{o1}},o^{0}_{\mathit{ok}}), (o^{1}_{\mathit{ok}},o^{0}_{\mathit{ok-1}})$, $\ldots$, $(o^{1}_{\mathit{o2}},o^{0}_{\mathit{o1}}) \in S_3$.
\end{itemize}

Thus, it is easy to see $(o^{1}_{\mathit{o1}},o^{0}_{\mathit{ok}}), (o^{1}_{\mathit{ok}},$ $o^{0}_{\mathit{ok-1}})$, $\ldots$, $(o^{1}_{\mathit{o2}},o^{0}_{\mathit{o1}}) \in \mathit{vis}$. By definition of $S_2$, we can see that $\mathit{ts}(o^{0}_{\mathit{o1}}) < \mathit{ts}(o^{1}_{\mathit{o1}}), \ldots, \mathit{ts}(o^{0}_{\mathit{ok}}) < \mathit{ts}(o^{1}_{\mathit{ok}})$. This contradicts the definition of causal-time-stamp. Therefore, this process terminates with $\mathit{Op}' = \emptyset$. $\qed$
\end {proof}
}

\forget{
\section{For State-based CRDT}
\label{sec:for state-based CRDT}

\begin{example}[List with add-between interface]
\label{definition:sequential specification of list with add-after interface}
Such kind of list is similar as list with add-after interface. One difference is the $\mathit{add}$ method: $\mathit{add}(b,a,c)$ inserts item $b$ into the list at some nondeterministic position between position of $a$ and position of $c$. The other difference is that, we assume that the initial value of list is $(\circ_1,\mathit{true}) \cdot (\circ_2,\mathit{true})$ and these two nodes can not be removed. The sequential specification $\mathit{list}_s^{\mathit{ab}}$ of list is given as follows: Here $\mathit{ab}$ represents add-between. When the context is clear, in $\mathit{read}$ operation, we will omit $\circ_1$ and $\circ_2$.
\begin{itemize}
\setlength{\itemsep}{0.5pt}
\item[-] $\{ \mathit{state} = (a_1,f_1) \cdot \ldots \cdot (a_n,f_n) \wedge k < m < l \wedge b \notin \{ a_1, \ldots, a_n \} \}$ $add(b,a_k,a_l)$ $\{ \mathit{state} = (a_1,f_1) \cdot \ldots \cdot (a_m,f_m) \cdot (b,\mathit{true}) \cdot (a_{m+1},f_{m+1}) \cdot \ldots \cdot (a_n,f_n) \}$. Here the chosen of $m$ is deterministic.
\item[-] $\{ \mathit{state} = (a_1,f_1) \cdot \ldots \cdot (a_n,f_n) \wedge S = \{ a \vert (a,\mathit{true}) \in \mathit{state} \} \wedge l = a_1 \cdot \ldots \cdot a_n \uparrow_{S} \}$ $(read() \Rightarrow l)$ $\{ \mathit{state} = (a_1,f_1) \cdot \ldots \cdot (a_n,f_n) \}$.
\end{itemize}
\end{example}

Given a object $\mathit{obj}$ of a state-based CRDT with $\Sigma$ be the set of local states, we define its semantics as a set of executions generated from an LTS $\llbracket \mathit{obj} \rrbracket_s = (\mathit{Config},\mathit{config}_0,\Sigma',\rightarrow)$ as in \autoref{fig:the semantics of a state-based CRDT object}.

\begin{figure}[ht]
$\mathit{RState} = \mathbb{R} \rightarrow \Sigma$

$\mathit{TState} = \mathbb{MID} \times \mathbb{MSG} \times \mathbb{R}$.

$\mathit{Config} = \mathit{RState} \times \mathit{TState}$, $\mathit{config}_0 \in \mathit{Config}$.

$\Sigma' = \mathit{do}(\mathbb{M} \times \mathbb{D} \times \mathbb{D} \times \mathbb{R}) \cup \mathit{send}(\mathbb{MID} \times \mathbb{R}) \cup \mathit{receive}(\mathbb{MID} \times \mathbb{R})$

\[
\begin{array}{l c}
\bigfrac{ R(r) = \sigma, r.\mathit{do}(\sigma,m,a) = (\sigma',b) }
{ (R,T) {\xrightarrow{\mathit{do}(m,a,b,r)}} (R[r:\sigma'],T) }
\end{array}
\]

\[
\begin{array}{l c}
\bigfrac{ R(r) = \sigma, \mathit{unique}(\mathit{mid}) }
{ (R,T) {\xrightarrow{\mathit{send}(\mathit{mid},r)}} (R,T \cup \{ (\mathit{mid},\sigma,r) \}) }
\end{array}
\]

\[
\begin{array}{l c}
\bigfrac{ R(r) = \sigma, r.\mathit{receive}(\sigma,\sigma') = \sigma'',(\mathit{mid},\sigma',r') \in T, r \neq r'}
{ (R,T) {\xrightarrow{\mathit{receive}(\mathit{mid},r)}} (R[r:\sigma''],T) }
\end{array}
\]
\caption{The definition of semantics of $\llbracket \mathit{obj} \rrbracket_s$}
\label{fig:the semantics of a state-based CRDT object}
\end{figure}

A configuration $(R,T)$ is a snapshot of distributed system and contains two parts: $R$ gives the local state of each replica, and $T$ gives the set of messages that has been generated. Let $\mathbb{MID}$ be the set of message identifiers of message content. A message is a tuple $(\mathit{mid},\mathit{msg},r)$, where $\mathit{mid} \in \mathbb{MID}$ is the identifier, $\mathit{msg} \in \mathbb{MSG}$ is the message content, and $r$ is the original replica of message. $\mathit{config}_0$ is the initial configuration, which maps each replica into the initial local state, and has no message inside. Since $\mathit{obj}$ is a state-based CRDT, each message content is chosen from $\Sigma$.

Each element of $\Sigma'$ is called an action. $\rightarrow \in \mathit{Config} \times \Sigma' \times \mathit{Config}$ is the transition relation and describe a single step of distributed systems. The first rule in \autoref{fig:the semantics of a state-based CRDT object} describes replica $r$ performs a operation $m(a) \Rightarrow b$ and works locally. The second rule describes that a replica $r$ may nondeterministically decide to send a message with its local state as message content. Here $\mathit{unique}$ is a function that ensures $\mathit{mid}$ be a fresh message identifier. The third rule describes delivery of a message to a replica $r$ other than its origin replica $r'$.

A sequence $l$ of actions is an execution of $\llbracket \mathit{obj} \rrbracket_s = (\mathit{Config},\mathit{config}_0,\Sigma',\rightarrow)$, if there exists $(R,T) \in \mathit{Config}$, such that $\mathit{config}_0 {\xrightarrow{ l }} (R,T)$. The semantics of $\mathit{obj}$ is defined as the set of executions of $\llbracket \mathit{obj} \rrbracket_s$. Given an execution, when the context is clear, we can associate a unique operation identifier to each action. Or we can say, it is safe to assume each action is in the form of either $\mathit{do}(i,m,a,b,r)$, or $\mathit{send}(i,\mathit{mid},r)$, or $\mathit{receive}(i,\mathit{mid},r)$, where $i \in \mathbb{OID}$ is a unique operation identifier.

Given an execution $l = \alpha_1 \cdot \ldots \cdot \alpha_k$ of $\llbracket \mathit{obj} \rrbracket_s$ of state-based CRDT $\mathit{obj}$, we can obtain a corresponding history $\mathit{history}(l) = (\mathit{Op},\mathit{ro},\mathit{vis})$, such that

\begin{itemize}
\setlength{\itemsep}{0.5pt}
\item[-] Each operation in $\mathit{Op}$ is a tuple $(\ell,i,\mathit{obj})$, such that $i$ is the operation identifier of a $\mathit{do}(m,a,b,r)$ action of $l$.

\item[-] $(o_1,o_2) \in \mathit{ro}$, if they are of same replica, and the index of $o_1$ in $h$ is before that of $o_2$.

\item[-] Let us defines a delivery relation $\mathit{del} \subseteq \mathbb{OP} \times \mathbb{OP}$ as follows: $(o_1,o_2) \in \mathit{del}$, if: $o_1$ and $o_2$ are of different replicas, there exists a $\mathit{send}(\mathit{mid},r)$ action and a $\mathit{receive}(\mathit{mid},r')$ action, $o_1$ and $\mathit{send}(\mathit{mid},r)$ happen on a same replica and $o_1$ happens earlier, $\mathit{receive}(\mathit{mid},r)$ and $o_2$ happen on a same replica and $\mathit{receive}(\mathit{mid},r)$ happens earlier.

\item[-] $\mathit{vis} = (\mathit{ro}+\mathit{del})^*$.
\end{itemize}

Intuitively, each local state can be considered as the consequence of all updates it receives. Since state-based CRDT sends the modified local state as message, the visibility relation is then the transitive closure of replica order and message delivery relation. Let $\mathit{history}(\llbracket \mathit{obj} \rrbracket_s)$ be the set of histories of all executions of $\llbracket \mathit{obj} \rrbracket_s$.

\subsection{Proof Strategy of State-based CRDT}
\label{subsec:proof strategy of operation-based CRDT}

Given a state-based CRDT object $\mathit{obj}$ and a sequential specification $\mathit{spec}$, we need to construct a invariant $\mathit{inv}(\mathit{config},h,\mathit{lin},\mathit{del},\mathit{map})$, where

\begin{itemize}
\setlength{\itemsep}{0.5pt}
\item[-] $\mathit{config}$ is a configuration of $\llbracket \mathit{obj} \rrbracket_s$.

\item[-] $h$ is a history.

\item[-] $h$ is distributed linearizable w.r.t $\mathit{spec}$ and $\mathit{lin}$ is a linearization.

\item[-] $\mathit{del} \subseteq \mathbb{MID} \times \mathbb{R}$ is the message delivery relation.

\item[-] $\mathit{map} \subseteq \mathbb{MID} \times 2^{\mathbb{OID}}$ maps each message $\mathit{mid}$ to a set $S_1$ of operations. Intuitively, $S_1$ is the set of operations whose information are contained in $\mathit{mid}$.
\end{itemize}

$\mathit{inv}(\mathit{config},h,\mathit{lin},\mathit{del},\mathit{map})$ needs to satisfy the following properties:

\begin{itemize}
\setlength{\itemsep}{0.5pt}
\item[-] The visibility of $h$ is transitive.

\item[-] $\mathit{del}$ preserves causal delivery: If $(o_1,o_2) \in \mathit{vis}$ and $(o_2,r) \in \mathit{del}$, then $(o_1,r) \in \mathit{del}$.

\item[-] $\mathit{map}$ preserves causal delivery: Given $o_1,o_3 \in \mathit{map}(\mathit{mid})$, if $\exists o_2$, such that $(o_1,o_2),(o_2,o_3) \in \mathit{vis}$, then $o_2 \in \mathit{map}(\mathit{mid})$.

\item[-] $\mathit{inv}$ holds initially: $\mathit{inv}(\mathit{config}_0,\epsilon,\emptyset,\emptyset,\emptyset)$ holds, where $\mathit{config}_0$ is the initial configuration of $\llbracket \mathit{obj} \rrbracket_s$.

\item[-] $\mathit{inv}$ is a transition invariant:

    \begin{itemize}
    \setlength{\itemsep}{0.5pt}
    \item[-] If $\mathit{inv}(\mathit{config},h,\mathit{lin},\mathit{del},\mathit{map})$ holds and $\mathit{config} {\xrightarrow{\mathit{do}(m,a,b,r)}} \mathit{config}'$, then $\mathit{inv}(\mathit{config}', h \otimes i, \mathit{lin} \cdot i,\mathit{del},\mathit{map})$ holds. Note that here we always put a new operation in the last of linearization.

        Here $i$ is the identifier of the newly-generated $\mathit{do}$ action. Given $h = (\mathit{Op},\mathit{ro},\mathit{vis})$, then, $h \otimes i = (\mathit{Op}',\mathit{ro}',\mathit{vis}')$, where $\mathit{Op}' = \mathit{Op} \cup \{ (m(a) \Rightarrow b,i,\mathit{obj}) \}$, $\mathit{ro}' = \mathit{ro} \cup \{ (j,i) \vert j \in \mathit{Op}, j$ is of replica $r \}$, and $\mathit{vis}' = (\mathit{vis} \cup \{ (j,i) \vert j \in \mathit{Op},(j,r) \in \mathit{del} \} \cup \{ (j,i) \vert j \in \mathit{Op}, j$ is of replica $r \})^*$.

    \item[-] If $\mathit{inv}(\mathit{config},h,\mathit{lin},\mathit{del},\mathit{map})$ holds and $\mathit{config} {\xrightarrow{\mathit{send}(\mathit{mid},r)}} \mathit{config}'$, then $\mathit{inv}(\mathit{config}',h,\mathit{lin},\mathit{del},\mathit{map}')$ holds, where $\mathit{map}' = \mathit{map} \cup (\mathit{mid}, \mathit{vd}(h,\mathit{del},r))$.

    \item[-] If $\mathit{inv}(\mathit{config},h,\mathit{lin},\mathit{del},\mathit{map})$ holds and $\mathit{config} {\xrightarrow{\mathit{receive}(\mathit{mid},r)}} \mathit{config}'$, then $\mathit{inv}(\mathit{config}',h,\mathit{lin},\mathit{del}',\mathit{map})$ holds, where $\mathit{del}' = \mathit{del} \cup \{ (i,r) \vert i \in \mathit{map}(\mathit{mid}) \}$.
    \end{itemize}
\end{itemize}

Here $\mathit{vd}(h,\mathit{del},r) = \{ i \vert (i,j) \in h.\mathit{vis}, j$ is of replica $r \} \cup \{ i \vert (i,r) \in \mathit{del} \}$ is the set of operations that are either to some operation of replica $r$, or has been delivered into replica $r$. An invariant $\mathit{inv}$ satisfies above properties is called invariant of state-based CRDT. The following lemma states that the existence of such invariant implies distributed linearizability.

\begin{lemma}
\label{lemma:invariant of state-based CRDT implies distributed linearizability}
If there exists a invariant $\mathit{inv}$ of state-based CRDT for object $\mathit{obj}$ and sequential specification $\mathit{spec}$, then each history of $\mathit{history}(\llbracket \mathit{obj} \rrbracket_s)$ is distributed linearizable w.r.t $\mathit{spec}$.
\end{lemma}

\begin {proof}
Given an execution $l=\alpha_1 \cdot \ldots \cdot \alpha_n$, let $\mathit{config}_0 {\xrightarrow{\alpha_1}} \mathit{config}_1 \ldots {\xrightarrow{\alpha_n}} \mathit{config}_n$ be the transitions from initial configuration. We need to prove that, for each $1 \leq k \leq n$, we have $\mathit{inv}(\mathit{config}_k,h_k,\mathit{lin}_k,\mathit{del}_k,\mathit{map}_k)$ holds, where $h_k$ is the history of execution $l_k = \alpha_1 \cdot \ldots \cdot \alpha_k$, $\mathit{lin}_k$ is the linearization of $h_k$, $\mathit{del}_k$ records message delivery relation of $l_k$, and $\mathit{map}_k$ records the operations contained in each message in $l_k$.

Since $\mathit{inv}$ holds initially and is a transition invariant, it is easy to prove above requirements by induction on execution. This completes the proof of this lemma. $\qed$
\end {proof}

For many state-based CRDT implementations, $\mathit{inv}((R,T),h,\mathit{lin},\mathit{del},\mathit{map}) = C_1 \wedge C_2$, where

\begin{itemize}
\setlength{\itemsep}{0.5pt}

\item[-] $C_1: \forall (\mathit{mid},\mathit{msg},\_) \in T$, $\mathit{msg} = \mathit{apply}(\mathit{lin},\mathit{map}(\mathit{mid}))$.

\item[-] $C_2: \forall r$, $R(r) = \mathit{apply}(\mathit{lin},\mathit{vd}(h,\mathit{del},r))$.
\end{itemize}

The function $\mathit{apply}(\mathit{lin},S)$ returns a local state by applying ``virtual messages'' of operations in $S$ according to total order $\mathit{lin}$. Here for each update operation $o$ of $h$, we need to define a local state $\mathit{ds}(o)$, which is the ``virtual messages'' of $o$. Note that state-based CRDT send message randomly, instead of each message for a update operation. This is the reason why we need to manually generate virtual message for each update operation.

To give $\mathit{inv}$, it only remains to give the virtual messages. The virtual message of state-based PN-counter and state-based multi-value register as follows. The proof of them being invariants of state-based CRDT is given in Appendix \ref{subsec:appendix proof of state-based PN-counter} and Appendix \ref{subsec:appendix proof of state-based multi-value register}, respectively.

\begin{example}[virtual messages of state-based PN-counter]
\label{example:virtual messagess of state-based PN-counter}

For each update operation $o$, $\mathit{ds}(o) = (P,N)$, where

\begin{itemize}
\setlength{\itemsep}{0.5pt}
\item[-] $\forall r'$, $P[r'] = \vert \{ o' \vert o'$ is a $\mathit{inc}$ operation of replica $r'$, $o' = o \vee (o',o) \in h.\mathit{vis} \} \vert$.

\item[-] $\forall r'$, $N[r'] = \vert \{ o' \vert o'$ is a $\mathit{dec}$ operation of replica $r'$, $o' = o \vee (o',o) \in h.\mathit{vis} \} \vert$.
\end{itemize}
\end{example}

\begin{example}[virtual messages of state-based Multi-value Register]
\label{example:virtual messages of state-based multi-value register}

For each update operation $o = (\mathit{write}(a),\_,\_)$ of replica $r$, $\mathit{ds}(o) = (a,V)$, where

\begin{itemize}
\setlength{\itemsep}{0.5pt}
\item[-] $\forall r'$, $V[r'] = \vert \{ o' \vert o'$ is a $\mathit{write}$ operation of replica $r'$, $o' = o \vee (o',o) \in h.\mathit{vis} \} \vert$.
\end{itemize}
\end{example}

\subsection{Proof of State-based PN-counter}
\label{subsec:appendix proof of state-based PN-counter}

The following lemma states that each visibility-closed set is a union of operations visible to a set of operations. Its proof is obvious and omitted here.

\begin{lemma}
\label{lemma:a transitive-closed set is a union of visibility of several sets}
Given a set $\mathit{Op}$ of operations and a transitive and acyclic visibility relation $\mathit{vis} \subseteq \mathit{Op} \times \mathit{Op}$, if given a set $S \subseteq \mathit{Op}$, if $S$ satisfies that $\forall o_1,o_2 \in S, o_2 \in S \wedge (o_1,o_2) \in \mathit{vis} \Rightarrow o_1 \in S$, then there exists a set $O \subseteq \mathit{Op}$, such that $S = \cup_{o \in O} \mathit{vis}^{-1}(o)$.
\end{lemma}

The following lemma states that given two operations $o_1,o_2$, for each replica $r$, either the set of operations of replica $r$ visible to $o_1$ is a subset of that of $o_2$, or the set of operations of replica $r$ visible to $o_2$ is a subset of that of $o_1$. Its proof is obvious and omitted here.

\begin{lemma}
\label{lemma:the view of a replica of one operation is contained in another operaiton, or vice versa}
Assume that $\mathit{inv}((R,T),h,\mathit{lin},\mathit{del},\mathit{map})$ holds. Let $S_o^r = \{ o' \vert (o',o) \in \mathit{vis}, o'$ is of replica $r \}$. Then for each operations $o_1$ and $o_2$, and for each replica $r$, $S_{\mathit{o1}}^r \subseteq S_{\mathit{o2}}^r \vee S_{\mathit{o2}}^r \subseteq S_{\mathit{o1}}^r$.
\end{lemma}

Recall that $\mathit{inv} = C_1 \wedge C_2$ with the virtual messages defined as follows: For each update operation $o$, $\mathit{ds}(o) = (P,N)$, where

\begin{itemize}
\setlength{\itemsep}{0.5pt}
\item[-] $\forall r'$, $P[r'] = \vert \{ o' \vert o'$ is a $\mathit{inc}$ operation of replica $r'$, $o' = o \vee (o',o) \in h.\mathit{vis} \} \vert$.

\item[-] $\forall r'$, $N[r'] = \vert \{ o' \vert o'$ is a $\mathit{dec}$ operation of replica $r'$, $o' = o \vee (o',o) \in h.\mathit{vis} \} \vert$.
\end{itemize}

The following lemma states that $\mathit{inv}$ is an invariant of state-based PN-counter.

\begin{lemma}
\label{lemma:inv is an invariant of state-based CRDT for state-based PN-counter}
$\mathit{inv}$ is an invariant of state-based PN-counter.
\end{lemma}

\begin {proof}

It is obvious that $\mathit{inv}(\mathit{config}_0,\epsilon,\emptyset,\emptyset,\emptyset)$ holds.

Let us prove that $\mathit{inv}$ is a transition invariant: assume $\mathit{inv}((R,T),h,\mathit{lin},\mathit{del},\mathit{map})$ holds,

\begin{itemize}
\setlength{\itemsep}{0.5pt}
\item[-] If $(R,T) {\xrightarrow{\mathit{do}(\mathit{inc},r)}} (R',T')$: Then,

    \begin{itemize}
    \setlength{\itemsep}{0.5pt}
    \item[-] It is easy to see that $R' = R[ r: ( R(r).P[r: R(r).P(r)+1 ], R(r).N ) ]$ and $T' = T$.

    \item[-] Let $h' = h \otimes i$, where $i$ is the identifier of the newly-generated $\mathit{inc}$ action.

    \item[-] Let $\mathit{lin}' = \mathit{lin} \cdot (\mathit{inc},i,\mathit{obj})$.

    \item[-] Let $\mathit{del}' = \mathit{del}$ and $\mathit{map}' = \mathit{map}$.
    \end{itemize}

    It is easy to see that $\mathit{lin}'$ is a linearization of $h'$. It is obvious that all other properties hold, except for $C_2$ for replica $r$. Therefore, let us prove that $R'(r) = \mathit{apply}(\mathit{lin}',\mathit{vd}(h',\mathit{del}',r))$.

    Since $R(r) = \mathit{apply}(\mathit{lin},\mathit{vd}(h,\mathit{del},r))$ and $\mathit{lin}' = \mathit{lin} \cdot (\mathit{inc},i,\mathit{obj})$, we know that $\mathit{apply}(\mathit{lin}',\mathit{vd}(h',\mathit{del}',r)) = \mathit{merge}(R(r),\mathit{ds}(i))$. Therefore, we need to prove that $R'(r) = \mathit{merge}(R(r),\mathit{ds}(i))$.

    Since $\mathit{vd}(h,\mathit{del},r)$ satisfies that, $\forall o_1,o_2 \in \mathit{vd}(h,\mathit{del},r), o_2 \in \mathit{vd}(h,\mathit{del},r) \wedge (o_1,o_2) \in \mathit{vis} \Rightarrow o_1 \in \mathit{vd}(h,\mathit{del},r)$, by Lemma \ref{lemma:a transitive-closed set is a union of visibility of several sets}, we know that there exists a set $O$, such that $\mathit{vd}(h,\mathit{del},r) = \cup_{o \in O} \mathit{vis}^{-1}(o)$. By Lemma \ref{lemma:the view of a replica of one operation is contained in another operaiton, or vice versa} and the construction of $\mathit{ds}$, we can see that $R(r) = (P',N')$, where for each replica $r'$, $P'[r'] = \vert \{ j \in \mathit{vd}(h,\mathit{del},r) \uparrow_{\mathit{inc}}$ and $j$ is of replica $r \} \vert$ and $N'[r'] = \vert \{ j \in \mathit{vd}(h,\mathit{del},r) \uparrow_{\mathit{dec}}$ and $j$ is of replica $r \} \vert$.

    We already know that $\mathit{ds}(i) = (P'',N'')$, where for each replica $r'$, $P''[r'] = \vert \{ j \in \mathit{vd}(h',\mathit{del}',r) \uparrow_{\mathit{inc}}$ and $j$ is of replica $r \} \vert$ and $N''[r'] = \vert \{ j \in \mathit{vd}(h',\mathit{del}',r) \uparrow_{\mathit{dec}}$ and $j$ is of replica $r \} \vert$. Then, it is obvious that $\mathit{merge}(R(r),\mathit{ds}(i)) = \mathit{ds}(i)$. It is also easy to see that $\mathit{ds}(i) = (R(r).P[r: R(r).P(r)+1], R(r).N) = R'(r)$. Therefore, $R'(r) = \mathit{merge}(R(r),\mathit{ds}(i))$.

\item[-] If $(R,T) {\xrightarrow{\mathit{do}(\mathit{dec},r)}} (R',T')$: Similarly as that of $(R,T) {\xrightarrow{\mathit{do}(\mathit{inc},r)}} (R',T')$.

\item[-] If $(R,T) {\xrightarrow{\mathit{do}(\mathit{read},k,r)}} (R',T')$: Then,

    \begin{itemize}
    \setlength{\itemsep}{0.5pt}
    \item[-] It is obvious that $R' = R$ and $T' = T$.

    \item[-] Let $h' = h \otimes i$, where $i$ is the identifier of the newly-generated $\mathit{read}$ action.

    \item[-] Let $\mathit{lin}' = \mathit{lin} \cdot ((\mathit{read}() \Rightarrow k,i,\mathit{obj}), \mathit{vd}(h,\mathit{del},r) )$.

    \item[-] Let $\mathit{del}' = \mathit{del}$ and $\mathit{map}' = \mathit{map}$.
    \end{itemize}

    It is easy to see that all other properties hold, except for $h'$ being distributed linearizable w.r.t $\mathit{spec}$ with $\mathit{lin}'$ the linearization. Let us prove that $h'$ is distributed linearizable w.r.t $\mathit{spec}$ and $\mathit{lin}'$ is a linearization. It is easy to see that only operation $i$ need to be checked.

    It is easy to see that $\mathit{vd}(h,\mathit{del},r) = \mathit{vis}^{-1}(i)$. Similarly as the case of $(R,T) {\xrightarrow{\mathit{do}(\mathit{inc},r)}} (R',T')$, we can prove that $R(r) = (P',N')$, where for each replica $r'$, $P'[r'] = \vert \{ j \in \mathit{vd}(h,\mathit{del},r) \uparrow_{\mathit{inc}}$ and $j$ is of replica $r \} \vert = \vert \{ j \in \mathit{vis}^{-1}(i) \uparrow_{\mathit{inc}}$ and $j$ is of replica $r \} \vert$ and $N'[r'] = \vert \{ j \in \mathit{vd}(h,\mathit{del},r) \uparrow_{\mathit{dec}}$ and $j$ is of replica $r \} \vert = \vert \{ j \in \mathit{vis}^{-1}(i) \uparrow_{\mathit{dec}}$ and $j$ is of replica $r \} \vert$. Since $k = \Sigma_{r'} P[r'] - \Sigma_{r'} N'[r']$, $k$ is obtained by minus the number of all visible $\mathit{dec}$ of $i$ from the number of all visible $\mathit{inc}$ of $i$. Therefore, we can see that $((\mathit{read}() \Rightarrow k,i,\mathit{obj}), \mathit{vd}(h,\mathit{del},r) )$ of $\mathit{lin}'$ is ``correct''. Then, $h'$ is distributed linearizable w.r.t $\mathit{spec}$ and $\mathit{lin}'$ is a linearization.

\item[-] If $(R,T) {\xrightarrow{\mathit{send}(\mathit{mid},r)}} (R',T')$: Then,

    \begin{itemize}
    \setlength{\itemsep}{0.5pt}
    \item[-] It is obvious that $R' = R$. Let $T' = T \cup \{ (\mathit{mid},R(r),r) \}$.

    \item[-] Let $h' = h$.

    \item[-] Let $\mathit{lin}' = \mathit{lin}$.

    \item[-] Let $\mathit{del}' = \mathit{del}$.

    \item[-] Let $\mathit{map}' = \mathit{map} \cup \{ (\mathit{mid},\mathit{vd}(h,\mathit{del},r)) \}$.
    \end{itemize}

    It is easy to see that all other properties hold, except for checking $C_1$ for $\mathit{mid}$. This holds obviously since the message content of message $\mathit{mid}$ is $R(r)$, and we already know that $R(r) = \mathit{apply}(\mathit{lin},\mathit{vd}(h,\mathit{del},r)) = \mathit{apply}(\mathit{lin},\mathit{map}(\mathit{mid}))$.

\item[-] If $(R,T) {\xrightarrow{\mathit{receive}(\mathit{mid},r)}} (R',T')$: Then,

    \begin{itemize}
    \setlength{\itemsep}{0.5pt}
    \item[-] Let $R' = R[ r: \mathit{merge}(R(r),\mathit{msg})]$ where $(\mathit{mid},\mathit{msg},\_) \in T$. It is obvious that $T' = T$.

    \item[-] Let $h' = h$.

    \item[-] Let $\mathit{lin}' = \mathit{lin}$.

    \item[-] Let $\mathit{del}' = \mathit{del} \cup \{ (i,r) \vert i \in \mathit{map}(\mathit{mid}) \}$.

    \item[-] Let $\mathit{map}' = \mathit{map}$.
    \end{itemize}

    It is easy to see that all other properties hold, except for $C_2$ for replica $r$. Therefore, let us prove that $R'(r) = \mathit{apply}(\mathit{lin}',\mathit{vd}(h',\mathit{del}',r))$.

    We already know that $R'(r) = \mathit{merge}(R(r), \mathit{msg})$, $R(r) = \mathit{apply}(\mathit{lin},\mathit{vd}(h,\mathit{del},r))$ and $\mathit{msg} = \mathit{apply}(\mathit{lin},\mathit{map}(\mathit{mid}))$. It is easy to see that $\mathit{vd}(h',\mathit{del}',r) = \mathit{vd}(h,\mathit{del},r) \cup \mathit{map}(\mathit{mid})$. It is easy to prove that, applying messages in any order lead to the same consequence. Therefore, we have $\mathit{merge}(R(r), \mathit{msg}) = \mathit{apply}(\mathit{lin}',\mathit{vd}(h,\mathit{del},r) \cup \mathit{map}(\mathit{mid}))$. Then, we have $R'(r) = \mathit{apply}(\mathit{lin}',\mathit{vd}(h',\mathit{del}',r))$.
\end{itemize}

This completes the proof of this lemma. $\qed$
\end {proof}

\subsection{Proof of State-based Multi-value Register}
\label{subsec:appendix proof of state-based multi-value register}

Recall that $\mathit{inv} = C_1 \wedge C_2$ with the virtual messages defined as follows: For each update operation $o$, $\mathit{ds}(o) = (a,V)$, where

\begin{itemize}
\setlength{\itemsep}{0.5pt}
\item[-] $\forall r'$, $V[r'] = \vert \{ o' \vert o'$ is a $\mathit{write}$ operation of replica $r'$, $o' = o \vee (o',o) \in h.\mathit{vis} \} \vert$.
\end{itemize}

The following lemma states that $\mathit{inv}$ is an invariant of state-based multi-value register.

\begin{lemma}
\label{lemma:inv is an invariant of state-based CRDT for state-based multi-value register}
$\mathit{inv}$ is an invariant of state-based multi-value register.
\end{lemma}

\begin {proof}

It is obvious that $\mathit{inv}(\mathit{config}_0,\epsilon,\emptyset,\emptyset,\emptyset)$ holds.

Let us prove that $\mathit{inv}$ is a transition invariant: assume $\mathit{inv}((R,T),h,\mathit{lin},\mathit{del},\mathit{map})$ holds,

\begin{itemize}
\setlength{\itemsep}{0.5pt}
\item[-] If $(R,T) {\xrightarrow{\mathit{do}(\mathit{write},a,r)}} (R',T')$: Then,

    \begin{itemize}
    \setlength{\itemsep}{0.5pt}
    \item[-] $R' = R[ r: \{ (a,V') \} ], R(r).N)]$ and $T' = T$. Here $\forall r' \neq r, V'[r'] = \mathit{max} \{ V_1(r) \vert (\_,V_1) \in R(r) \}$, and $V'[r] = \mathit{max} \{ V_1(r) \vert (\_,V_1) \in R(r) \} + 1$.

    \item[-] Let $h' = h \otimes i$, where $i$ is the identifier of the newly-generated $\mathit{inc}$ action.

    \item[-] Let $\mathit{lin}' = \mathit{lin} \cdot (\mathit{inc},i,\mathit{vis}^{-1}(i))$.

    \item[-] Let $\mathit{del}' = \mathit{del}$ and $\mathit{map}' = \mathit{map}$.
    \end{itemize}

    It is easy to see that $\mathit{lin}'$ is a linearization of $h'$. It is obvious that all other properties hold, except for $C_2$ for replica $r$. Therefore, let us prove that $R'(r) = \mathit{apply}(\mathit{lin}',\mathit{vd}(h',\mathit{del}',r))$.

    It is easy to see that $\mathit{vd}(h',\mathit{del}',r) = h'.\mathit{vis}^{-1}(i)$. And then, we need to prove that $(a,V') = \mathit{apply}(\mathit{lin}',h'.\mathit{vis}^{-1}(i))$.

    Recall that $R(r) = \mathit{apply}(\mathit{lin},\mathit{vd}(h,\mathit{del},r))$, from Lemma \ref{lemma:a transitive-closed set is a union of visibility of several sets}, we know that there exists set $O$, such that $\mathit{vd}(h,\mathit{del},r) = \cup_{o \in O} \mathit{vis}^{-1}(o)$. We can prove that, for each $o = \mathit{write}(b)$, $\mathit{apply}(\mathit{lin},\mathit{vis}^{-1}(o)) = (b,V_b)$, where $\forall r' \neq r, V_b[r'] = \vert \{ o' \vert o' \in \mathit{vis}^{-1}(o), o'$ is of replica $r' \} \vert$, and $V_b[r] = \vert \{ o' \vert o' \in \mathit{vis}^{-1}(o), o'$ is of replica $r' \} \vert + 1$.

    It is not hard to prove that the order of merging virtual message is not important, and a virtual message can be applied multiple times. By Lemma \ref{lemma:the view of a replica of one operation is contained in another operaiton, or vice versa}, we can see that $\mathit{apply}(\mathit{lin},\mathit{vd}(h,\mathit{del},r))$ is obtained by merging $\{ o \in O \vert \mathit{apply}(\mathit{lin},\mathit{vis}^{-1}(o)) \}$. Therefore, we can see that $\mathit{apply}(\mathit{lin}',h'.\mathit{vis}^{-1}(i)) = \mathit{apply}(\mathit{lin}',\mathit{vd}(h',\mathit{del}',r))$ is obtained by merging $\{ o \in O \vert \mathit{apply}(\mathit{lin},\mathit{vis}^{-1}(o)) \} \cup \{ \mathit{ds}(i) \}$. By Lemma \ref{lemma:the view of a replica of one operation is contained in another operaiton, or vice versa}, it is not hard to see that $\mathit{apply}(\mathit{lin}',h'.\mathit{vis}^{-1}(i)) = \mathit{ds}(i)$.

    Then, we need to prove that $(a,V') = \mathit{ds}(i)$. This holds since $R(r) = \mathit{apply}(\mathit{lin},\mathit{vd}(h,\mathit{del},r))$ is obtained by merging $\{ o \in O \vert \mathit{apply}(\mathit{lin},\mathit{vis}^{-1}(o)) \}$, Lemma \ref{lemma:the view of a replica of one operation is contained in another operaiton, or vice versa}, and the value of $V'$.

\item[-] If $(R,T) {\xrightarrow{\mathit{do}(\mathit{read},S,r)}} (R',T')$: Then,

    \begin{itemize}
    \setlength{\itemsep}{0.5pt}
    \item[-] It is obvious that $R' = R$ and $T' = T$.

    \item[-] Let $h' = h \otimes i$, where $i$ is the identifier of the newly-generated $\mathit{read}$ action.

    \item[-] Let $\mathit{lin}' = \mathit{lin} \cdot (\mathit{read}() \Rightarrow S,i,\mathit{vd}(h,\mathit{del},r) )$.

    \item[-] Let $\mathit{del}' = \mathit{del}$ and $\mathit{map}' = \mathit{map}$.
    \end{itemize}

    It is easy to see that all other properties hold, except for $h'$ being distributed linearizable w.r.t $\mathit{spec}$ with $\mathit{lin}'$ the linearization. Let us prove that $h'$ is distributed linearizable w.r.t $\mathit{spec}$ and $\mathit{lin}'$ is a linearization. It is easy to see that only operation $i$ need to be checked.

    It is easy to see that $\mathit{vd}(h,\mathit{del},r) = h'.\mathit{vis}^{-1}(i)$. Similarly as the case of $(R,T) {\xrightarrow{\mathit{do}(\mathit{write},a,r)}} (R',T')$, we can prove that there exists a set $O$, such that $R(r) = \mathit{apply}(\mathit{lin},\mathit{vd}(h,\mathit{del},r))$ is obtained by merging $\{ o \in O \vert \mathit{apply}(\mathit{lin},\mathit{vis}^{-1}(o)) \}$.

    By the definition of merging, it is same to assume that $O = \mathit{max}_{\mathit{vis}} \mathit{vd}(h,\mathit{del},r)$. Assume that for each operation $o = \mathit{write}(a) \in O$, $\mathit{apply}(\mathit{lin},\mathit{vis}^{-1}(o))) = (a,V_o)$. Then it is not hard to see that $R(r) = \{ (a,V_o) \vert o = \mathit{write}(a) \in O \}$. Therefore, $S = \{ a \vert o = \mathit{write}(a) \in \mathit{vis}^{-1}(i), \forall o' = \mathit{write}(\_) \in \mathit{vis}^{-1}(i), (o,o') \notin \mathit{vis} \}$. According to sequential specification $\mathit{spec}$, $(\mathit{read} \Rightarrow S,i,\mathit{obj})$ of $\mathit{lin}'$ is ``correct''. Then, $h'$ is distributed linearizable w.r.t $\mathit{spec}$ and $\mathit{lin}'$ is a linearization.

\item[-] If $(R,T) {\xrightarrow{\mathit{send}(\mathit{mid},r)}} (R',T')$: Then,

    \begin{itemize}
    \setlength{\itemsep}{0.5pt}
    \item[-] It is obvious that $R' = R$. Let $T' = T \cup \{ (\mathit{mid},R(r),r) \}$.

    \item[-] Let $h' = h$.

    \item[-] Let $\mathit{lin}' = \mathit{lin}$.

    \item[-] Let $\mathit{del}' = \mathit{del}$.

    \item[-] Let $\mathit{map}' = \mathit{map} \cup \{ (\mathit{mid},\mathit{vd}(h,\mathit{del},r)) \}$.
    \end{itemize}

    It is easy to see that all other properties hold, except for checking $C_1$ for $\mathit{mid}$. This holds obviously since the message content of message $\mathit{mid}$ is $R(r)$, and we already know that $R(r) = \mathit{apply}(\mathit{lin},\mathit{vd}(h,\mathit{del},r)) = \mathit{apply}(\mathit{lin},\mathit{map}(\mathit{mid}))$.

\item[-] If $(R,T) {\xrightarrow{\mathit{receive}(\mathit{mid},r)}} (R',T')$: Then,

    \begin{itemize}
    \setlength{\itemsep}{0.5pt}
    \item[-] Let $R' = R[ r: \mathit{merge}(R(r),\mathit{msg})]$ where $(\mathit{mid},\mathit{msg},\_) \in T$. It is obvious that $T' = T$.

    \item[-] Let $h' = h$.

    \item[-] Let $\mathit{lin}' = \mathit{lin}$.

    \item[-] Let $\mathit{del}' = \mathit{del} \cup \{ (i,r) \vert i \in \mathit{map}(\mathit{mid}) \}$.

    \item[-] Let $\mathit{map}' = \mathit{map}$.
    \end{itemize}

    It is easy to see that all other properties hold, except for $C_2$ for replica $r$. Therefore, let us prove that $R'(r) = \mathit{apply}(\mathit{lin}',\mathit{vd}(h',\mathit{del}',r))$.

    We already know that $R'(r) = \mathit{merge}(R(r), \mathit{msg})$, $R(r) = \mathit{apply}(\mathit{lin},\mathit{vd}(h,\mathit{del},r))$ and $\mathit{msg} = \mathit{apply}(\mathit{lin},\mathit{map}(\mathit{mid}))$. It is easy to see that $\mathit{vd}(h',\mathit{del}',r) = \mathit{vd}(h,\mathit{del},r) \cup \mathit{map}(\mathit{mid})$. It is easy to prove that, applying messages in any order lead to the same consequence. Therefore, we have $\mathit{merge}(R(r), \mathit{msg}) = \mathit{apply}(\mathit{lin}',\mathit{vd}(h,\mathit{del},r) \cup \mathit{map}(\mathit{mid}))$. Then, we have $R'(r) = \mathit{apply}(\mathit{lin}',\mathit{vd}(h',\mathit{del}',r))$.
\end{itemize}

This completes the proof of this lemma. $\qed$
\end {proof}
}

\end{document}